\documentclass[10pt,a4paper]{article}
\usepackage[margin=1in]{geometry}
\usepackage[T1]{fontenc}
\usepackage{lmodern}
\usepackage{microtype}
\usepackage{amsthm,amsmath,amsfonts,amssymb}
\usepackage[authoryear]{natbib}
\usepackage[hidelinks]{hyperref}
\usepackage{graphicx}
\usepackage{mathtools}
\usepackage{mathrsfs}
\usepackage{enumitem}
\usepackage{tikz}
\usetikzlibrary{arrows.meta,positioning,fit,calc}

\numberwithin{equation}{section}
\mathtoolsset{showonlyrefs=true}
\DeclareMathOperator*{\argmax}{arg\,max}

\theoremstyle{plain}
\newtheorem{theorem}{Theorem}[section]
\newtheorem{corollary}[theorem]{Corollary}
\newtheorem{proposition}[theorem]{Proposition}
\newtheorem{lemma}[theorem]{Lemma}

\theoremstyle{definition}
\newtheorem{definition}{Definition}[section]
\newtheorem{assume}{Assumption}[section]

\newtheorem{example}{Example}[section]
\newtheorem{remark}{Remark}[section]

\newcommand{\USC}{\operatorname{USC}}
\newcommand{\LSC}{\operatorname{LSC}}
\DeclareMathOperator{\Glob}{Glob}
\DeclareMathOperator{\Loc}{Loc}

\tikzset{
  box/.style={draw,rounded corners=2pt,align=center,inner sep=6pt},
  arr/.style={-{Latex[length=2.2mm]},thick,shorten <=2pt,shorten >=2pt},
  lab/.style={fill=white,inner sep=1.2pt,outer sep=0pt,
              font=\scriptsize,align=center,text width=20mm}
}
\newcommand{\llbracket}{[\![}
\newcommand{\rrbracket}{]\!]}

\newcommand{\combinedtocfile}{maintoc}
\makeatletter
\let\combined@addtocontents\addtocontents
\renewcommand{\addtocontents}[2]{%
  \def\combined@extension{#1}%
  \def\combined@toc{toc}%
  \ifx\combined@extension\combined@toc
    \combined@addtocontents{\combinedtocfile}{#2}%
  \else
    \combined@addtocontents{#1}{#2}%
  \fi}
\renewcommand{\tableofcontents}{%
  \section*{\contentsname}%
  \@starttoc{\combinedtocfile}}
\AtBeginDocument{%
  \let\combined@maketitle\maketitle
  \let\combined@@maketitle\@maketitle
  \let\combined@title\title
  \let\combined@author\author
  \let\combined@date\date
  \let\combined@thanks\thanks
  \let\combined@and\and}
\newcommand{\startcombinedsupplement}{%
  \clearpage
  \renewcommand{\combinedtocfile}{supplementtoc}%
  \let\maketitle\combined@maketitle
  \let\@maketitle\combined@@maketitle
  \let\title\combined@title
  \let\author\combined@author
  \let\date\combined@date
  \let\thanks\combined@thanks
  \let\and\combined@and
  \setcounter{footnote}{0}}
\makeatother

\begin{document}

\title{Global Structure and Local Specifications in Sublinear Valuation}

\author{
Jongjin Park\thanks{\raggedright Department of Mathematical Sciences and Research Institute of Mathematics, Seoul National University. Email: \texttt{pjj4230@snu.ac.kr}.}
\and
David Criens\thanks{\raggedright Department of Mathematical Stochastics, University of Freiburg. Email: \texttt{david.criens@stochastik.uni-freiburg.de}.}
\and
Hyungbin Park\thanks{\raggedright Department of Mathematical Sciences and Research Institute of Mathematics, Seoul National University. Emails: \texttt{hyungbin@snu.ac.kr}, \texttt{hyungbin2015@gmail.com}.}
}
\date{}
\maketitle

\begin{abstract}
This work studies the relationships among sublinear valuation rules, uncertainty structures, and local specifications in a time-homogeneous Markovian framework with killing. These objects are linked, under finiteness and locality of the upper generator and a Lyapunov condition, by three maps: robust valuation, globalization, and localization. First, we show that the corresponding classes of uncertainty structures and sublinear valuation rules are order-isomorphic via the robust valuation map. Second, our analysis clarifies how local specifications constrain sublinear valuation and uncertainty structures, as well as what information localization and globalization preserve. In particular, the compositions of localization and globalization need not recover the original objects but yield canonical extremal elements. Third, the sublinear valuation generated by a local specification  is characterized by the greatest viscosity subsolution of the associated Hamilton--Jacobi--Bellman equation. Finally, each uncertainty structure with killing admits a unique representation by a family of pairs consisting of a cumulative discounting process and an underlying state law. These results provide a framework for order relations, probabilistic and PDE-based representations, and model recovery in sublinear valuation without requiring uniqueness of the underlying martingale problems or a viscosity comparison principle.
\end{abstract}

\noindent\textbf{2020 Mathematics Subject Classification.} Primary: 60J25, 60G44. Secondary: 47H20, 49L25, 91G80.\par\smallskip

\noindent\textbf{Keywords.} Dynamic sublinear valuation, uncertainty structure, nonlinear Markov semigroup, generalized martingale problem, viscosity solution, robust representation.\par\medskip

\setcounter{secnumdepth}{3}
\setcounter{tocdepth}{2}

\tableofcontents

\paragraph{Supplementary Material}
The Supplementary Material contains Appendices D--G. Every cross-reference
to a result in the Supplementary Material is identified explicitly.

\section{Introduction}\label{sec:intro}

\subsection{Motivation}

Dynamic sublinear valuation rules are fundamental objects in
economics and finance.
They arise in dual representations of superhedging prices in
robust finance (see \cite{neufeld_nutz_13}) and in the theory of dynamic
coherent risk measures (see \cite{Delbaen_Peng_Rosazza_10}).
These valuation rules are closely related to sublinear expectations
in applied probability and analysis.
Examples include value operators in stochastic optimal control
(see \cite{Fleming_Soner_06}), $G$-expectations in Peng's
stochastic calculus under uncertainty (see \cite{peng2019nonlinear}),
sublinear evaluations induced by backward stochastic differential
equations (see \cite{peng_10}), and sublinear Markov semigroups
(see \cite{nisio1976non}).

Sublinear valuation rules are closely related to uncertainty structures.
Under suitable regularity conditions, they admit representations
in terms of families of probability measures.
In a Markovian setting, such a representation takes the form
\[
\mathcal T_t f(x)
=
\sup_{\mathbb P\in\mathcal C_{0,x}}\mathbb E^{\mathbb P}[f(X_t)],
\]
where $X$ is the coordinate process and $f$ is an admissible
terminal payoff function.
Here, $\mathcal C_{s,x}$ denotes a set of probability laws
associated with initial time $s$ and initial state $x$.
We call the indexed family
$\mathcal C=\{ \mathcal C_{s,x} \}_{s,x}$ an
\emph{uncertainty structure}.
For representation results in various static and dynamic settings,
see \cite{bartl2020conditional, bartl_cheridito_kupper_19,
criens2025representation, follmer2011stochastic, peng2019nonlinear}.
A representing uncertainty structure need not be unique:
distinct families of probability measures may induce the same
valuation rule, see \cite{nutz2013constructing} for examples.

This representation suggests two complementary approaches.
One may specify an uncertainty structure and derive the associated
valuation rule, as in model-based formulations of stochastic control
(see \cite{Fleming_Soner_06}) and robust finance
(see \cite{bartl_kupper_neufeld_21,criens_niemann_MAFE, neufeld_nutz_17}).
Alternatively, one may begin with an axiomatically defined valuation
rule and seek a probabilistic representation, as in the study of
$G$-expectations (see \cite{denishu2011function}), sublinear Markov
semigroups (see \cite{criens2025stochastic}), and coherent risk measures
(see \cite{Delbaen_Peng_Rosazza_10}).

A third perspective is provided by local specifications when
the framework admits an infinitesimal description.
On the modeling side, admissible laws can be specified through
local characteristics or generator constraints.
On the valuation side, a suitable notion of generator captures
the infinitesimal behaviour of the valuation rule.
An established approach to related representation problems
considers sublinear Markov semigroups on $C_b(\mathbb R^d)$ and
uses comparison methods to derive a stochastic representation
of a given semigroup;  see
\cite{criensniemann2024markov, criens2025stochastic,
criens_niemann_JEEQ, kuhn2021infinitesimal, lions_nisio_82} for approaches using viscosity theory and \cite{blessing2025gamma, nisio_82} for abstract approaches based on nonlinear semigroup theory.
Such representation results typically rely on regularity
conditions on the local specification that ensure comparison,
and hence uniqueness, for the associated generator equation.

In more general settings, however, local specifications need
not determine a unique global evolution.
This phenomenon
is already familiar from the classical theory of stochastic
differential equations and martingale problems: prescribed
coefficients need not determine a unique law.
For strongly continuous linear semigroups, the generator
together with its domain determines the evolution, whereas
its restriction to a smooth test class need not do so.
Our aim is therefore to study the global representation problem
without assuming viscosity comparison or that local specifications
determine unique global evolutions. We distinguish
the representation of a valuation by a uncertainty
structure from its identification through local data.

In contrast to the $C_b$-based identification approach above,
we work with valuation semigroups that preserve upper
semicontinuity of bounded payoffs, but need not map continuous
payoffs to continuous value functions.
Our approach systematically exploits an asymmetry between upper
and lower semicontinuity that already appears in the
compactification approach to stochastic control
\cite[Section~5]{elkaroui1987compactification}: upper semicontinuity
of value functions is obtained by compactness arguments, whereas
lower semicontinuity is established under additional uniqueness
assumptions.

\subsection{Overview}

We develop our theory in a time-homogeneous Markovian framework
that allows for killing. Our valuation rules are sublinear
semigroups $\mathcal T=\{\mathcal T_t\}_{t\geq0}$ on the cone
$\USC_b(D)$ of bounded upper semicontinuous functions on a
state space $D\subset\mathbb R^d$, satisfying the regularity
axioms in Definition~\ref{def:SG}. 
The infinitesimal behaviour of a valuation rule \(\mathcal{T}\) is described by its \emph{upper generator} $\overline{\mathcal G}^{\mathcal T}$, introduced 
in Definition \ref{def:upper_generator}.
In our setting, 
finiteness and locality of this generator on
the bounded smooth test
class  yield a unique second-order jet representation:
\[
\overline{\mathcal G}^{\mathcal T} f(x)
=
G_{\mathcal T}
\bigl(x,f(x),\nabla f(x),\nabla^2f(x)\bigr)
\]
for all $f$ in this test class and $x\in D$.
The structural conditions \ref{item:G1}--\ref{item:G3} on
$G_{\mathcal T}$ follow from the valuation axioms
(Proposition~\ref{prop:local_upper_generator_representation}).

We first introduce three classes of objects associated with the
Lyapunov condition in Assumption~\ref{assume:lyapunov}.
We call functions satisfying \ref{item:G1}--\ref{item:G3}
\emph{local specifications} and denote by
$\mathfrak G_{\mathrm{Lyap}}$ the class of those satisfying
the Lyapunov condition.
For each $G\in\mathfrak G_{\mathrm{Lyap}}$, let $\mathcal P(G)$ be the class of laws solving the corresponding $G$-supermartingale problem, which encodes the local constraints
prescribed by $G$. 
We denote by $\mathfrak V_{\mathrm{loc,Lyap}}$ the class of
valuation rules whose upper generators satisfy the finiteness
and locality requirements and whose  local
specifications $G_{\mathcal T}$ belong to
$\mathfrak G_{\mathrm{Lyap}}$.

An uncertainty structure is an indexed family
$\mathcal C=\{ \mathcal C_{s,x} \}_{s\geq0,\;x\in D}$
whose fibers $\mathcal C_{s,x}$ consist of probability laws
of processes taking values in $D$ up to killing.
We work with stable uncertainty structures, whose conditioning
and concatenation properties are specified in
Definition~\ref{def:stable_virtual_uncertainty_structure}.
Let $\mathfrak S_{\mathrm{loc,Lyap}}$ denote the class of
fiberwise nonempty, time-homogeneous stable uncertainty
structures $\mathcal C$ such that
$\mathcal C\subseteq\mathcal P(G)$ for some
$G\in\mathfrak G_{\mathrm{Lyap}}$, where inclusion is understood
fiberwise.

Three maps connect these objects.
The \emph{localization map} assigns to a valuation rule the
local specification associated with its upper generator.
The \emph{globalization map} assigns to a local specification
$G$ the indexed family $\mathcal P(G)$.
Thus,
\[
\Loc(\mathcal T):=G_{\mathcal T},
\qquad
\Glob(G):=\mathcal P(G).
\]
The \emph{robust valuation map} assigns to an uncertainty
structure $\mathcal C$ the valuation rule defined by
\begin{equation}
	\label{eq:intro_robust_valuation_earlier}
	\bigl(\operatorname{Val}(\mathcal C)\bigr)_t f(x)
	:=
	\sup_{\mathbb P\in\mathcal C_{0,x}}
	\mathbb E^{\mathbb P}
	\bigl[f(X_t)\mathbb I_{\{t<\tau_{\mathrm{kill}}\}}\bigr],
\end{equation}
where $\tau_{\mathrm{kill}}$ is the killing time and
$\mathbb I_{\{t<\tau_{\mathrm{kill}}\}}$ is the survival indicator that sets 
the payoff to be zero after killing.
Within this framework, we study the correspondence between
valuation rules and their representing uncertainty structures,
as well as the order and extremal relations connecting these
global objects to local specifications.

Our first main result establishes that the robust valuation map
is an order isomorphism
\begin{equation}
	\label{eq:intro_valuation_structure_correspondence_earlier}
	\operatorname{Val}:
	\mathfrak S_{\mathrm{loc,Lyap}}
	\longrightarrow
	\mathfrak V_{\mathrm{loc,Lyap}},
	\qquad
	\operatorname{Val}^{-1}(\mathcal T)=\mathcal R^{\mathcal T}
\end{equation}
(Theorem~\ref{thm:local_lyapunov_valuation_structure_correspondence}).
Thus, although arbitrary representing families need not be unique,
a valuation determines its representing stable uncertainty structure
uniquely within $\mathfrak S_{\mathrm{loc,Lyap}}$.
This uniqueness concerns the passage from a valuation to its
uncertainty structure; it does not assert that a local
specification determines a unique valuation.
The recovered family $\mathcal R^{\mathcal T}$ is intrinsic to the
valuation: a law $\mathbb P$ with initial condition $(t,x)$ belongs to
$\mathcal R_{t,x}^{\mathcal T}$ precisely when every valuation-orbit process
\[
s\longmapsto
\bigl(\mathcal T_{R-s}g\bigr)(X_s)
\mathbb I_{\{s<\tau_\infty\}},
\]
is a $\mathbb P$-supermartingale on $[t,R]$, for every $R\ge t$ and
$g\in\USC_b(D)$, where $\tau_\infty$ denotes the lifetime.
The representation is attained:
for every initial state, horizon, and bounded upper semicontinuous
terminal payoff, 
the supremum is achieved by a law in the corresponding fiber
of $\mathcal R^{\mathcal T}$.

Our second main result describes how local specifications constrain global valuations and uncertainty structures, and what information localization and globalization preserve.
We introduce the localization and globalization maps
\[
    \Loc:\mathcal T\longmapsto G_{\mathcal T},
    \qquad
    \Glob:G\longmapsto\mathcal P(G),
\]
which respectively extract the local specification of a valuation and assign to each specification 
the family of all corresponding laws.
For $\mathcal T\in\mathfrak V_{\mathrm{loc,Lyap}}$ and $G\in\mathfrak G_{\mathrm{Lyap}}$, we establish
\begin{equation}
    \label{eq:intro_local_global_correspondence_earlier}
    G_{\mathcal T}\le G
    \quad\Longleftrightarrow\quad
    \mathcal T\le\mathcal V^G
    \quad\Longleftrightarrow\quad
    \mathcal R^{\mathcal T}\subseteq\mathcal P(G),
\end{equation}
where $\mathcal V^G:=\operatorname{Val}(\Glob(G))$ (Theorem~\ref{cor:local_global_structure_order}).
The inequalities are understood pointwise and the inclusion fiberwise.
Thus a local upper bound $G$ determines the greatest valuation
whose upper generating function is bounded by $G$, namely
$\mathcal V^G$, and the largest corresponding stable uncertainty
structure, namely $\mathcal P(G)$.
This characterization requires neither uniqueness of the global
evolution nor global realizability of $G$.
For each fixed $G$, the robust valuation map therefore restricts
to an order isomorphism between the stable substructures of
$\mathcal P(G)$ in $\mathfrak S_{\mathrm{loc,Lyap}}$
and the valuations in $\mathfrak V_{\mathrm{loc,Lyap}}$
whose local specifications are bounded above by $G$.
Figure~\ref{fig:intro_three_classes} summarizes these relations.

\begin{figure}[htbp]
\centering
\resizebox{0.98\linewidth}{!}{%
\begin{tikzpicture}[
  >=Stealth,
  map/.style={->,line width=0.65pt},
  object/.style={align=center,inner sep=6pt,text width=4.0cm},
  maplabel/.style={align=center,inner sep=1pt,font=\footnotesize}
]
\node[object] (V) at (0,0) {
  {\large $\mathfrak V_{\mathrm{loc,Lyap}}$}\\[5pt]
  {\small Dynamic sublinear}\\[-1pt]{\small valuations}
};
\node[object] (S) at (9.6,0) {
  {\large $\mathfrak S_{\mathrm{loc,Lyap}}$}\\[5pt]
  {\small Stable uncertainty}\\[-1pt]{\small structures}
};
\node[object] (G) at (4.8,-3.1) {
  {\large $\mathfrak G_{\mathrm{Lyap}}$}\\[5pt]
  {\small Local specifications}
};
\node[font=\small] at (4.8,1.12) {Order isomorphism};
\draw[map] ($(S.west)+(0,0.24)$) --
  node[above=3pt,font=\small] {$\operatorname{Val}$}
  ($(V.east)+(0,0.24)$);
\draw[map] ($(V.east)+(0,-0.24)$) --
  node[below=3pt,font=\small]
    {$\operatorname{Val}^{-1}:\ \mathcal T\mapsto\mathcal R^{\mathcal T}$}
  ($(S.west)+(0,-0.24)$);
\draw[map] (V.south) --
  node[maplabel,sloped,below=6pt,pos=0.49] {
    $\Loc$\\[1pt]$\mathcal T\mapsto G_{\mathcal T}$
  } (G.north west);
\draw[map] (G.north east) --
  node[maplabel,sloped,below=6pt,pos=0.51] {
    $\Glob$\\[1pt]$G\mapsto\mathcal P(G)$
  } (S.south);
\node[align=center] at (4.8,-4.35) {
  $\displaystyle
    G_{\mathcal T}\le G
    \quad\Longleftrightarrow\quad
    \mathcal T\le\mathcal V^G
    \quad\Longleftrightarrow\quad
    \mathcal R^{\mathcal T}\subseteq\mathcal P(G)$
};
\end{tikzpicture}%
}
\caption{Relations among valuations, stable uncertainty structures, and local specifications.
The top maps are inverse order isomorphisms.
Localization and globalization preserve order and satisfy the displayed equivalence, but their compositions need not recover the original valuation or specification.}
\label{fig:intro_three_classes}
\end{figure}

The compositions of these maps describe precisely what is retained
when passing from global to local information and back, or in the
opposite direction.
Starting from a valuation $\mathcal T$, localization retains its
local specification but may lose information about its global
evolution.
Applying globalization and then robust valuation yields
\[
    (\operatorname{Val}\circ\Glob\circ\Loc)(\mathcal T)
    =
    \max\bigl\{
        \mathcal S\in\mathfrak V_{\mathrm{loc,Lyap}}:
        \Loc(\mathcal S)=\Loc(\mathcal T)
    \bigr\},
\]
the greatest valuation with the same local specification
(Theorem~\ref{thm:maximal_valuation_same_specification}).
Thus this composition preserves the local specification,
while it may strictly enlarge the valuation and its recovered
uncertainty structure.

Starting from a specification $G$, the reverse composition produces
\[
    G^\downarrow
    :=(\Loc\circ\operatorname{Val}\circ\Glob)(G)
    =G_{\mathcal V^G}.
\]
We show that
\[
    G^\downarrow
    =
    \max\bigl\{
        H\in\Loc(\mathfrak V_{\mathrm{loc,Lyap}}):H\le G
    \bigr\}
    =\min\bigl\{
        H\in\mathfrak G_{\mathrm{Lyap}}:
        \mathcal P(H)=\mathcal P(G)
    \bigr\}
\]
(Theorem~\ref{thm:realizable_specification_reduction}).
Hence $G^\downarrow$ is both the greatest globally realizable
specification below $G$ and the least specification defining
the same global uncertainty structure.
Here, a local specification is called \emph{globally realizable} if it is the local specification of a valuation in $\mathfrak V_{\mathrm{loc,Lyap}}$.
In particular, this reverse composition preserves the entire family of corresponding laws:
\[
    \mathcal P(G^\downarrow)=\mathcal P(G).
\]

These two compositions need not be identities.
Distinct valuations may share the same local specification,
whereas distinct specifications may define the same global
uncertainty structure.
Examples~\ref{ex:singular_diffusion_multiple_branches}
and~\ref{ex:unrealizable_drift_spike} illustrate these two
phenomena, respectively.
The first phenomenon can occur even for a continuous
diffusion specification.

Our third main result identifies the evolution selected by globalization as the greatest viscosity subsolution of the associated Hamilton--Jacobi--Bellman equation. 
It provides a stochastic representation of this subsolution without requiring a comparison principle.
Fix $G\in\mathfrak G_{\mathrm{Lyap}}$ and $f\in\USC_b(D)$.
Let $\operatorname{Sub}_G^+(f)$ denote the class of bounded upper semicontinuous functions on $[0,\infty)\times D$ that are viscosity subsolutions of
\begin{equation}
	\label{eq:intro_pde_earlier}
	\partial_tu-G\bigl(x,u,\nabla u,\nabla^2u\bigr)=0
\end{equation}
on $(0,\infty)\times D$, with initial upper bound $u(0,\cdot \,)\le f$.
Then
\begin{equation}
	\label{eq:intro_perron_earlier}
	\max_{u\in\operatorname{Sub}_G^+(f)}u(t,x)
	=
	\sup_{\mathbb P\in\mathcal P_{0,x}(G)}
	\mathbb E^{\mathbb P}
	\bigl[f(X_t)\mathbb I_{\{t<\tau_{\mathrm{kill}}\}}\bigr]
	=
	\mathcal V_t^Gf(x)
\end{equation}
(Theorem~\ref{thm:upper_perron_feynman_kac}).
The valuation orbit $(t,x)\mapsto\mathcal V_t^Gf(x)$ itself belongs to $\operatorname{Sub}_G^+(f)$ and is its greatest element.
When $G$ is globally realizable, parabolic comparison makes $\mathcal V^G$ the unique valuation with exact generating function $G$.
More generally, comparison identifies any valuation dominated by $\mathcal V^G$ and satisfying lower infinitesimal consistency with $\mathcal V^G$ (Theorem~\ref{thm:canonical_feller_under_comparison}).

Our final main result establishes a unique decomposition of 
each uncertainty structure with killing 
into a family of pairs consisting of a cumulative discounting process and an underlying state law.
We construct the discounting-to-killing map and prove that it is injective.
For the uncertainty structures considered here, this map yields a
one-to-one correspondence with families of discounted models,
recovering both the cumulative discounting process and the underlying state law.
Consequently, the valuation $\mathcal T$ admits the equivalent
representation
\[
    \mathcal T_t f(x)
    =
    \max_{(A,\mathbb Q)\in\mathcal U^{\mathcal T}_x}
    \mathbb E^{\mathbb Q}
    \bigl[e^{-A_t}f(X_t)\mathbb I_{\{t<\tau_{\mathrm{exp}}\}}\bigr],
\]
where $\mathcal U^{\mathcal T}$ is the uniquely recovered family of discounting--state-law pairs and $\tau_{\mathrm{exp}}$ is the continuous explosion time of the underlying state process $X$.

The rest of this paper is structured as follows.
Section~\ref{sec:valuation_uncertainty} introduces the basic classes
of robust valuations, local specifications, and stable uncertainty
structures.
Section~\ref{sec:valuation_orbit_nonemptiness} proves the valuation--uncertainty
order isomorphism. Section~\ref{sec:local_global_correspondence} establishes
the exact order correspondence and the extremal characterizations of the
two compositions. Section~\ref{sec:exact_generation_comparison}
characterizes the canonical valuation through exact generation,
its upper Perron representation, and uniqueness under comparison.
Section~\ref{sec:auxiliary_characterizations} recovers discounted models.
Appendices~\ref{app:local_upper_generators} and~\ref{app:characteristic_recovery}
contain the proofs of localization and characteristic recovery, respectively, and
Appendix~\ref{sec:stochastic_realization_perron} proves stochastic realization.
The Supplementary Material collects the auxiliary results on
selection, weak convergence, martingale problems, and approximation
used in these proofs.

\section{Valuations, Local Specifications, and Stable Uncertainty Structures}
\label{sec:valuation_uncertainty}

In this section, we define valuations, local specifications, and stable uncertainty structures, together with the local Lyapunov subclasses related in the next section.
Let $D\subset\mathbb R^d$ be convex and open, with an exhaustion $D=\bigcup_nD_n$ by convex smooth subdomains satisfying $D_n\Subset D_{n+1}$.

We use $C$, $\USC$, and $\LSC$ for continuous, upper semicontinuous, and lower semicontinuous functions; a subscript $b$ means bounded.
The space $C_b^\infty(D)$ consists of smooth functions whose derivatives of every order, including order zero, are bounded.
We write $K\Subset O$ if $\overline K$ is compact in $O$, and $C^{1,2}$ tests are defined on a neighborhood of the contact point.
The notation $\mathcal P(E)$ denotes probability laws on a Polish space $E$, $\Rightarrow$ denotes weak convergence, and $\operatorname{Gr}(\Gamma)$ denotes the graph of a correspondence.
For locally bounded functions, $v^*$ and $v_*$ are the semicontinuous envelopes.
Our half-relaxed-limit convention is
\[
 (\limsup_n^*v_n)(x):=\limsup_{n\to\infty,\,y\to x}v_n(y),
 \qquad \liminf_{n,*}v_n:=-\limsup_n^*(-v_n),
\]
with the same convention for continuous parameters tending to zero.
For varying domains, the joint limit is restricted to points where the functions are defined.

Let $\mathbb S(d)$ and $\mathbb S^+(d)$ denote symmetric and positive semidefinite matrices, with $H_1\ge H_2$ meaning $H_1-H_2\in\mathbb S^+(d)$. 
Set
\[
 \mathfrak J:=\mathbb R\times\mathbb R^d\times\mathbb S(d),
 \qquad \|(z,p,H)\|^2:=|z|^2+|p|^2+\tfrac12\operatorname{tr}(H^2).
\]
For $f\in C^2(D)$, write $J_x^2f:=(f(x),\nabla f(x),\nabla^2f(x))$.

\subsection{Dynamic sublinear valuation rules}
\label{sec:dsvr}

We now formulate dynamic sublinear valuation rules axiomatically.
We consider contingent payoffs given by bounded upper semicontinuous functions on \(D\), and hence take \(\USC_b(D)\) as the contingent-payoff cone.
Here and below, sublinearity on \(\USC_b(D)\) means positive homogeneity and subadditivity on this cone.
 
\begin{definition}\label{def:SG}
A dynamic sublinear valuation rule on the contingent-payoff cone \(\USC_b(D)\) is a family of operators
\[
    \mathcal T
    =
    \{\mathcal T_{t,T}\}_{0\le t\le T<\infty},
    \qquad
    \mathcal T_{t,T}:\USC_b(D)\longrightarrow\USC_b(D),
\]
such that $\mathcal T_{t,t}=\operatorname{id}_{\USC_b(D)}$ for all $t\ge0$, and the following properties hold.
\begin{enumerate}[label=(V\arabic*), ref=(V\arabic*)]
    \item\label{V1}
    \(\mathcal T_{t,T}\) is monotone and sublinear on \(\USC_b(D)\) for all \(0\le t\le T<\infty\).

    \item\label{V2}
    \(\lVert\mathcal T_{t,T}f\rVert_\infty\le\lVert f\rVert_\infty\) for all \(0\le t\le T<\infty\) and \(f\in\USC_b(D)\).

    \item\label{V3}
    \(\mathcal T_{t,T}\) is continuous from above for all \(0\le t\le T<\infty\); that is, if \(f_n,f\in\USC_b(D)\) and \(f_n\searrow f\) pointwise on \(D\), then \(\mathcal T_{t,T}f_n\searrow\mathcal T_{t,T}f\) pointwise on \(D\).

    \item\label{V4}
    For every \(f\in\USC_b(D)\), \(\limsup_{n\to\infty}^{*}\mathcal T_{t_n,T_n}f\le\mathcal T_{t,T}f\) whenever \(0\le t_n\le T_n<\infty\) and \((t_n,T_n)\to(t,T)\).

    \item\label{V5}
    The time-consistency property holds, that is, $\mathcal{T}_{t,T}=\mathcal{T}_{t,s}\mathcal{T}_{s,T}$ for all $0\le t\le s\le T$.
\end{enumerate}
If $\mathcal{T}_{t,T}$ depends only on $T-t$, we say that a dynamic sublinear valuation rule $\{\mathcal{T}_{t,T}\}_{0\le t\le T<\infty}$ is time-homogeneous. 
In this case, we define 
$$\{\mathcal T_t\}_{t\ge0}:=\{\mathcal T_{0,t}\}_{t\ge0}.$$
We write $\mathfrak V$ for the class of time-homogeneous dynamic sublinear valuation rules on $\USC_b(D)$.
\end{definition}

The above definition collects the basic economic and analytic requirements of a dynamic sublinear valuation rule.
Condition~\ref{V1} encodes monotonicity and sublinearity, whose order and diversification interpretations are familiar from coherent risk measurement \citep[Section~2.3]{artzner1999coherent}.
These properties are compatible with no-arbitrage pricing frameworks \citep[Theorems~2.1--2.2]{burzoni2021viability},
but by themselves are not asserted here to constitute
a complete no-arbitrage condition.
Conditions~\ref{V2}, \ref{V3}, and \ref{V4} impose stability: \ref{V2} reflects the non-negativity of discounting, \ref{V3} ensures monotone order regularity with respect to contingent claims, and \ref{V4} provides temporal regularity.
Finally, condition~\ref{V5} imposes time consistency through the semigroup property.
It ensures that valuation from $t$ to $T$ is obtained recursively by valuing first from $s$ to $T$ and then over the remaining interval from $t$ to $s$.

\subsection{Finite local upper generators and local specifications}
\label{subsec:local_upper_generators}
\label{subsec:generating_function}

We introduce the upper generator of a dynamic sublinear valuation rule $\mathcal T$, which characterizes the rule's local behavior.
For $\mathcal T\in\mathfrak V$ and $f\in C_b(D)$, set
\[
    \Delta_h^{\mathcal T}f:=\frac{\mathcal T_hf-f}{h},\qquad h>0.
\]
\begin{definition}
\label{def:upper_generator}
The \emph{upper generator} of $\mathcal T$ is the extended-real operator
\begin{equation}
\label{eq:upper_generator_definition}
    \overline{\mathcal G}^{\mathcal T}f(x)
    :=\limsup_{h\searrow0}^{*}\Delta_h^{\mathcal T}f(x),
    \qquad f\in C_b(D),\quad x\in D.
\end{equation}
Hence, for each fixed test payoff $f\in C_b(D)$, $\overline{\mathcal G}^{\mathcal T}f$ is upper semicontinuous, with extended-real values allowed.
\end{definition}

We restrict our attention to valuation rules whose infinitesimal dynamics are local. 
Economically, the following assumption means that prices are driven by local market information: the instantaneous change at state $x$ depends only on nearby variations in fundamentals and payoffs. 
Thus, the generator $\overline{\mathcal G}$ is restricted to the continuous-path, diffusion-type regime and excludes genuinely nonlocal effects such as jumps, crashes, or discrete policy interventions. 
This is a limitation of the present analysis, not of the valuation-based framework. 
Treating nonlocal generators would require a corresponding inverse theory for jump-type dynamics and is left for future work.

\begin{assume}
\label{assume:local_upper_generator}
For every $f\in C_b^\infty(D)$ and $x\in D$,
$\overline{\mathcal G}^{\mathcal T}f(x)\in\mathbb R$.
Moreover, if $f,g\in C_b^\infty(D)$ coincide in a neighborhood of $x$, then
\[
    \overline{\mathcal G}^{\mathcal T}f(x)
    =\overline{\mathcal G}^{\mathcal T}g(x).
\]
\end{assume}

Under this assumption, the following proposition records the second-order jet dependence of upper generators and defines the generating function.
The proof is given in Appendix~\ref{app:local_upper_generators}.
The structural properties~\ref{item:G1}--\ref{item:G3} below will serve as the key conditions on $G$ throughout this paper.

\begin{proposition}
\label{prop:local_upper_generator_representation}
Let $\mathcal T\in\mathfrak V$ satisfy
Assumption~\ref{assume:local_upper_generator}.
Then there is a unique finite function
$G:D\times\mathfrak J\to\mathbb R$ such that
\begin{equation}
\label{eq:thm:representationgenerator_eq1}
    \overline{\mathcal G}^{\mathcal T}f(x)
    =G(x,J_x^2f),
    \qquad f\in C_b^\infty(D),\quad x\in D.
\end{equation}
Moreover, the function $G$ satisfies the following properties:
\begin{enumerate}[label=(G\arabic*), ref=(G\arabic*)]
\item
\label{item:G1}
The function \(G\) is jointly upper semicontinuous on \(D\times\mathfrak J\), and \(G(x,\cdot)\) is sublinear on \(\mathfrak J\) for every \(x\in D\).
Moreover, \(G\) is locally uniformly Lipschitz in the jet variable: for every compact set \(K\Subset D\), there exists a constant \(L_K<\infty\) such that
\begin{equation}
\label{eq:local_uniform_jet_lipschitz_G}
    \bigl|G(x,Z_1)-G(x,Z_2)\bigr|
    \le
    L_K\lVert Z_1-Z_2\rVert,
    \qquad
    x\in K,\quad Z_1,Z_2\in\mathfrak J.
\end{equation}

\item
\label{item:G2}
\(G(x,z,p,H_1)\geq G(x,z,p,H_2)\) whenever \(H_1\geq H_2\), for all
\(x\in D\), \(z\in\mathbb R\), \(p\in\mathbb R^d\), and
\(H_1,H_2\in\mathbb S(d)\).

\item
\label{item:G3}
\(G(x,z_1,p,H)\leq G(x,z_2,p,H)\) whenever \(z_1\geq z_2\), for all
\(x\in D\), \(z_1,z_2\in\mathbb R\), \(p\in\mathbb R^d\), and
\(H\in\mathbb S(d)\).
\end{enumerate}
The function $G$ is called the \emph{upper generating function} of $\mathcal T$ and denoted by $G_{\mathcal T}$.
\end{proposition}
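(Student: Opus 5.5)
The plan is to first record the basic algebraic properties of $\overline{\mathcal G}^{\mathcal T}$ on $C_b^\infty(D)$, then use locality to reduce to second-order jets, and finally read off \ref{item:G1}--\ref{item:G3} from the corresponding properties of $\mathcal T_h$.

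\textbf{Step 1: properties inherited from $\mathcal T_h$.} From \ref{V1}, $\Delta_h^{\mathcal T}$ is positively homogeneous and subadditive. The relaxed $\limsup^*$ preserves both properties: subadditivity holds because $\limsup^*(a_h+b_h)\le\limsup^* a_h+\limsup^* b_h$ whenever the right side is finite, and finiteness is given by Assumption~\ref{assume:local_upper_generator}. Hence $\overline{\mathcal G}^{\mathcal T}$ is sublinear on $C_b^\infty(D)$.

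Monotonicity in the following local maximum-principle form comes from \ref{V1}--\ref{V2}. If $f\le g$ near $x$ and $f(x)=g(x)$, then locality lets us replace $f$ by a function $\tilde f$ with $\tilde f\le g$ globally and $\tilde f=f$ near $x$. Then
\[
\Delta_h^{\mathcal T}\tilde f(y)\le\Delta_h^{\mathcal T}g(y)+\frac{g(y)-\tilde f(y)}{h},
\]
and since $y\to x$, the last term must be controlled. I would handle it by localizing with a bump and using that $g-\tilde f\ge0$ has a minimum $0$ at $x$. This is the delicate part, and I address it in Step 3.

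Constants need separate treatment. For a constant $c\ge0$, \ref{V2} gives $\mathcal T_h c\le c$, so $\overline{\mathcal G}^{\mathcal T}c\le0$.

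\textbf{Step 2: jet dependence.} Define $G(x,Z):=\overline{\mathcal G}^{\mathcal T}\phi_{x,Z}(x)$, where $\phi_{x,Z}$ is the quadratic polynomial with jet $Z$ at $x$, multiplied by a cutoff that equals $1$ near $x$ and is chosen so the product lies in $C_b^\infty(D)$. By locality this does not depend on the cutoff. Uniqueness is immediate once \eqref{eq:thm:representationgenerator_eq1} holds.

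To show that $\overline{\mathcal G}^{\mathcal T}f(x)$ depends only on $J^2_xf$, take $f,g$ with equal jets. For every $\varepsilon>0$,
\[
g\le f+\varepsilon|\cdot-x|^2\chi \quad\text{near }x,
\]
where $\chi$ is a cutoff. Using sublinearity and the local monotonicity gives
\[
\overline{\mathcal G}^{\mathcal T}g(x)\le\overline{\mathcal G}^{\mathcal T}f(x)+\varepsilon\,\overline{\mathcal G}^{\mathcal T}(|\cdot-x|^2\chi)(x).
\]
The last term is finite, so letting $\varepsilon\to0$ and then exchanging $f$ and $g$ yields equality.

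\textbf{Step 3: the main obstacle.} The hardest step is making the local monotonicity rigorous, and the Lipschitz and USC claims rely on the same estimate. The relaxed limit evaluates $\Delta_h$ at $y\neq x$, where $g(y)-f(y)>0$ is divided by $h$. My first attempt is to avoid pointwise comparison altogether and argue by sublinearity:
\[
\overline{\mathcal G}^{\mathcal T}g(x)\le\overline{\mathcal G}^{\mathcal T}f(x)+\overline{\mathcal G}^{\mathcal T}(g-f)(x).
\]
It then suffices to show that $\overline{\mathcal G}^{\mathcal T}k(x)\le0$ when $k\le0$ is smooth and $k(x)=0$. Here $\mathcal T_hk\le\mathcal T_h0=0$ by monotonicity, so
\[
\Delta_h k(y)\le\frac{-k(y)}{h}.
\]
This bound is not small for $y\ne x$. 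To repair it, I would use that $\overline{\mathcal G}^{\mathcal T}$ is USC and finite: take the family $k_\varepsilon$ and combine it with the quadratic bound $-k\le C|y-x|^2$. The plan is to show that the locally uniform finiteness of the generator (Step 4) forces a bound of the form $\limsup^*\le C'\varepsilon$.

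\textbf{Step 4: the structural properties.} The local uniform Lipschitz bound follows from sublinearity,
\[
|G(x,Z_1)-G(x,Z_2)|\le\max_\pm G(x,\pm(Z_1-Z_2)),
\]
together with a bound $\sup_{x\in K,\|Z\|\le1}G(x,Z)<\infty$. That bound holds because, for finitely many basis jets, a single test function realizes the jets on a neighborhood of $K$, and an upper semicontinuous finite function is bounded above on compacts.

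Joint upper semicontinuity comes from the USC of $\overline{\mathcal G}^{\mathcal T}f$ in $x$ combined with the Lipschitz continuity in $Z$.

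\ref{item:G2} follows from Step 3 applied to $k=\phi_{x,(0,0,H_2-H_1)}$, which is $\le0$ near $x$ when $H_1\ge H_2$.

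\ref{item:G3} follows from the constant bound: for $c\ge0$, $G(x,z+c,p,H)\le G(x,z,p,H)+G(x,c,0,0)\le G(x,z,p,H)$.
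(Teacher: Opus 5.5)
The overall architecture of your proposal matches the paper's: sublinearity, a local positive maximum principle, jet dependence via $\pm(f-g)-\varepsilon q$, then (G1)--(G3). Your reduction in Step~3 to the claim \emph{$k\in C_b^\infty(D)$, $k\le0$, $k(x)=0$ $\Rightarrow$ $\overline{\mathcal G}^{\mathcal T}k(x)\le0$} is also correct. But that claim is the crux of the proposition, since jet dependence and (G2) both rest on it, and Step~3 does not prove it.

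The repair you sketch cannot work. The bound $\Delta_h^{\mathcal T}k(y)\le -k(y)/h\le C|y-x|^2/h$ is unbounded along the joint limit in $\limsup^*$; take $|y-x|=h^{1/4}$. The reason is that at $y\neq x$ the function $k$ has a nonzero slope $\nabla k(y)=O(|y-x|)$, and knowing the size of $k(y)$ says nothing about how $\mathcal T_h$ responds to that slope. Invoking the locally uniform bound of Step~4 is also circular. That bound is stated for $G$, and $G$ represents $\overline{\mathcal G}^{\mathcal T}$ only once jet dependence is known.

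The missing idea is to move the maximum to the evaluation point $y$.
\begin{itemize}
\item After using locality to make $x$ a global maximum, put $F_\varepsilon:=k-\varepsilon Q$ with $Q(z)=|z-x|^2/(1+|z-x|^2)$. Then $\nabla F_\varepsilon(x)=0$, $\nabla^2F_\varepsilon\le-\varepsilon I_d$ on some $B_{2r}(x)$, and $F_\varepsilon\le-\delta$ off $B_r(x)$.
\item Let $\ell(z):=\chi(z)(z-x)$ for a cutoff $\chi\equiv1$ near $\overline{B_{2r}(x)}$. For all $y$ close to $x$, the tilted function $F_{\varepsilon,y}:=F_\varepsilon-\nabla F_\varepsilon(y)\cdot\ell$ attains its \emph{global} maximum at $y$ itself, with value $m_y\ge F_{\varepsilon,y}(x)=0$.
\item Monotonicity and (V2) give $\mathcal T_hF_{\varepsilon,y}(y)\le\mathcal T_h(m_y\mathbf 1)(y)\le m_y$, i.e.\ $\Delta_h^{\mathcal T}F_{\varepsilon,y}(y)\le0$ exactly at the moving point. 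The sign $m_y\ge0$ matters, because (V2) only yields $\mathcal T_hc\le c$ for constants $c\ge0$.
\item Sublinearity pays for the tilt: $\Delta_h^{\mathcal T}F_\varepsilon(y)\le\Delta_h^{\mathcal T}F_{\varepsilon,y}(y)+M\sum_i|\partial_iF_\varepsilon(y)|$. Here $M$ bounds $\Delta_h^{\mathcal T}(\pm\ell_i)(y)$ for small $h$ and $y$ near $x$. This bound comes directly from the finiteness of $\overline{\mathcal G}^{\mathcal T}(\pm\ell_i)(x)$ and the definition of $\limsup^*$ for these finitely many fixed functions, not from Step~4.
\item Since $\nabla F_\varepsilon(y)\to0$, this yields $\overline{\mathcal G}^{\mathcal T}F_\varepsilon(x)\le0$. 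Hence $\overline{\mathcal G}^{\mathcal T}k(x)\le\varepsilon\,\overline{\mathcal G}^{\mathcal T}Q(x)$, and letting $\varepsilon\to0$ finishes.
\end{itemize}

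A minor point in Step~4: no fixed function has a prescribed jet at every point of a neighborhood. What you need is the cut-off quadratic basis $\chi P_i$, whose jets $J_y^2(\chi P_i)$ span $\mathfrak J$ with coefficients bounded uniformly in $y\in K$.
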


We denote the class of local specifications by
\[
    \mathfrak G
    :=\{G:D\times\mathfrak J\to\mathbb R:
                  G\text{ satisfies }\ref{item:G1}\text{--}\ref{item:G3}\}
\]
and set
\[
    \mathfrak V_{\mathrm{loc}}
    :=\{\mathcal T\in\mathfrak V:
            \mathcal T\text{ satisfies
            Assumption~\ref{assume:local_upper_generator}}\}.
\]
The function $G_{\mathcal T}$ is therefore defined intrinsically for every $\mathcal T\in\mathfrak V_{\mathrm{loc}}$.

For any $G\in\mathfrak G$, finite-dimensional convex duality represents each sublinear map \(G(x,\cdot)\) through a compact convex correspondence of linear coefficient functionals.
The support correspondence of \(G\) is defined by
\begin{align}
\label{def:supportset}
\mathfrak A_G(x)
:=
\left\{
V\in\mathfrak J:
\ell_V(W)\le G(x,W)
\ \text{for every }W\in\mathfrak J
\right\},
\qquad x\in D.
\end{align}
For \(V=(c,b,a)\) and \(U=(z,p,H)\) in \(\mathfrak J\), define
\[
    \ell_V(U):=\frac12\operatorname{tr}(aH)+b\cdot p+cz,
\]
and denote the coefficient cone by
\[
    \mathfrak K
    :=(-\infty,0]\times\mathbb R^d\times\mathbb S^+(d).
\]
The equality in \eqref{eq:support_function_representation}
provides the Hamilton--Jacobi--Bellman representation of $G$.
This representation is related to the sublinear Courr{\`e}ge--von Waldenfels theorem, which expresses infinitesimal generators of sublinear Markov semigroups on $\mathbb R^d$ with sufficiently rich domains as suprema of L\'evy-type operators (\cite{kuhn2021infinitesimal}).

\begin{lemma}
\label{lem:support_correspondence_geometry}
Let \(G\in\mathfrak G\). Then the following statements hold.
\begin{enumerate}[label=(\roman*)]
\item For every \((x,U)\in D\times\mathfrak J\),
\begin{align}
\label{eq:support_function_representation}
    G(x,U)
    =
    \max_{V\in\mathfrak A_G(x)}\ell_V(U).
\end{align}

\item For every \(x\in D\), the set \(\mathfrak A_G(x)\) is a nonempty compact convex subset of \(\mathfrak K\).
Moreover, the correspondence \(\mathfrak A_G:D\rightrightarrows\mathfrak K\) is locally bounded and has a closed graph. 
In particular, \(\operatorname{Gr}(\mathfrak A_G)\) is Borel.
\end{enumerate}
\end{lemma}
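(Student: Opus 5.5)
Part (i) comes from finite-dimensional convex duality applied, for each fixed \(x\in D\), to the map \(G(x,\cdot)\) on the Euclidean space \(\mathfrak J\). The norm \(\|(z,p,H)\|\) is induced by an inner product, and \(\ell_V(U)\) is exactly the pairing between \(V=(c,b,a)\) and \(U\) in the dual coordinates. By \ref{item:G1}, \(G(x,\cdot)\) is sublinear and finite, so it is convex, positively homogeneous and continuous (indeed Lipschitz). Hahn--Banach, or equivalently the Fenchel--Moreau theorem for finite sublinear functions, gives that \(G(x,\cdot)\) is the support function of its subdifferential at \(0\). That subdifferential is precisely \(\mathfrak A_G(x)\). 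For each \(U\) the supremum is attained by any subgradient \(V\in\partial G(x,\cdot)(U)\), which is nonempty because of finiteness and continuity. Positive homogeneity then gives \(\ell_V(U)=G(x,U)\), and the subgradient inequality \(\ell_V(W)-\ell_V(U)\le G(x,W)-G(x,U)\), combined with homogeneity, shows \(V\in\mathfrak A_G(x)\). This proves \eqref{eq:support_function_representation} with a maximum.

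For (ii), nonemptiness was just shown. Convexity and closedness are immediate, since \(\mathfrak A_G(x)\) is an intersection of closed half-spaces. Boundedness comes from the Lipschitz bound \eqref{eq:local_uniform_jet_lipschitz_G}: taking \(W\) in direction \(V\) and using \(G(x,0)=0\) gives \(|V|_{*}\le L_K\) for \(x\in K\). This bound also yields local boundedness of the correspondence uniformly over compacts. To place \(\mathfrak A_G(x)\) in \(\mathfrak K\), I would test against specific jets. Taking \(W=(1,0,0)\) and using \ref{item:G3} gives \(G(x,1,0,0)\le G(x,0,0,0)=0\), hence \(c\le0\). Taking \(W=(0,0,-\xi\xi^\top)\) and using \ref{item:G2} gives \(G(x,0,0,-\xi\xi^\top)\le G(x,0,0,0)=0\), hence \(-\tfrac12\xi^\top a\xi\le0\), so \(a\in\mathbb S^+(d)\). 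There is no constraint on \(b\). One point needs a check: with \(\operatorname{tr}(H^2)/2\) in the norm, the pairing must be matched consistently to \(\tfrac12\operatorname{tr}(aH)\), but this only rescales the constants.

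The closed graph is the one step that genuinely uses the joint upper semicontinuity of \(G\); I expect the rest to be routine. Take \((x_n,V_n)\to(x,V)\) with \(x\in D\) and \(V_n\in\mathfrak A_G(x_n)\). For each fixed \(W\) we have \(\ell_{V_n}(W)\le G(x_n,W)\). Passing to the limit, \(\ell_V(W)=\lim\ell_{V_n}(W)\le\limsup G(x_n,W)\le G(x,W)\), by upper semicontinuity of \(G\) in \(x\) at fixed \(W\). Hence \(V\in\mathfrak A_G(x)\). Finally, a closed subset of the Polish space \(D\times\mathfrak J\) is Borel, so \(\operatorname{Gr}(\mathfrak A_G)\) is Borel.
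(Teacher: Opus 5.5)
Your proposal is correct and follows essentially the same route as the paper: support-function duality for (i), which you prove directly via subgradients where the paper cites Kusuoka; the test jets $(1,0,0)$ and $(0,0,-H)$ with (G3) and (G2) to land in $\mathfrak K$; the jet-Lipschitz bound for local boundedness; and joint upper semicontinuity for the closed graph. Your worry about the normalization is unnecessary, since $\ell_V(U)$ is exactly the inner product inducing $\|\cdot\|$, so $\|V\|=\sup_{\|U\|\le1}|\ell_V(U)|\le L_K$ holds with no rescaling.
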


\begin{proof}
Finite-dimensional support-function duality gives
\eqref{eq:support_function_representation} and nonempty compact convex
values; see \cite[Propositions~1 and~2]{kusuoka2017supermartingale}.
If $V=(c,b,a)\in\mathfrak A_G(x)$, then
$c=\ell_V(1,0,0)\le G(x,1,0,0)\le0$ by \ref{item:G3}.
For $H\in\mathbb S^+(d)$, condition~\ref{item:G2} gives
$-\tfrac12\operatorname{tr}(aH)=\ell_V(0,0,-H)\le G(x,0,0,-H)\le0$;
hence $a\in\mathbb S^+(d)$ and $V\in\mathfrak K$.
For a compact $K\Subset D$, the jet Lipschitz bound implies
\[
 \|V\|=\sup_{\|U\|\le1}|\ell_V(U)|\le L_K,
 \qquad x\in K,\quad V\in\mathfrak A_G(x),
\]
which proves local boundedness. If $(x_j,V_j)\to(x,V)$ with
$V_j\in\mathfrak A_G(x_j)$, joint upper semicontinuity yields, for every
$U\in\mathfrak J$,
\[
 \ell_V(U)=\lim_j\ell_{V_j}(U)
 \le\limsup_jG(x_j,U)\le G(x,U).
\]
Thus $V\in\mathfrak A_G(x)$, proving that the graph is closed and Borel.
\end{proof}

Now we impose the following additional Lyapunov-type condition on \(G\).
This condition controls live exits from large domains and yields tightness of the corresponding model classes.
Conditions of this type are standard in the martingale-problem literature, sufficiently broad for the economic applications, and typically straightforward to verify.

\begin{assume}
\label{assume:lyapunov}
There exist $\phi\in C^2(D)$ with $\phi\ge1$ and a constant $C_\phi\ge0$
such that
\[
    m_n^\phi:=\inf_{x\in D\setminus D_n}\phi(x)\longrightarrow\infty
\]
and
\[
    G\bigl(x,\phi(x),\nabla\phi(x),\nabla^2\phi(x)\bigr)
    \le C_\phi\phi(x),
    \qquad x\in D.
\]
\end{assume}

Throughout the paper, we restrict attention to local specifications satisfying Assumption~\ref{assume:lyapunov} and to valuations whose associated local specifications satisfy this assumption.
We emphasize that this is the only technical condition required for our main results below.
Accordingly, we define
\[
    \mathfrak G_{\mathrm{Lyap}}
    :=\{G\in\mathfrak G:
        G\text{ satisfies Assumption~\ref{assume:lyapunov}}\}
\]
and
\begin{equation}
\label{eq:local_lyapunov_valuation_class}
    \mathfrak V_{\mathrm{loc,Lyap}}
    :=\{\mathcal T\in\mathfrak V_{\mathrm{loc}}:
        G_{\mathcal T}\in\mathfrak G_{\mathrm{Lyap}}\}.
\end{equation}
The class $\mathfrak G_{\mathrm{Lyap}}$ is downward closed in $\mathfrak G$: if $G\in\mathfrak G_{\mathrm{Lyap}}$ and $H\in\mathfrak G$ satisfy $H\le G$, then $H$ satisfies the same Lyapunov bound with the same function $\phi$ and constant $C_\phi$.

\subsection{Canonical paths and stable uncertainty structures}
\label{subsec:canonical_path_spaces}

To introduce the class of uncertainty structures, we first construct the underlying canonical path space.
Let $\widehat D:=D\cup\{\triangle\}$ be the one-point compactification with a compatible metric $d_{\widehat D}$, and let $\Omega^{\mathrm{cad}}:=D([0,\infty),\widehat D)$ have the Skorokhod $J_1$ topology. 
For $\omega\in\Omega^{\mathrm{cad}}$, put $\tau_\infty(\omega):=\inf\{t\ge0:\omega(t)=\triangle\}$ and define
\[
 \widetilde\Omega:=\left\{\omega\in\Omega^{\mathrm{cad}}:
 \begin{array}{l}
   \omega\text{ is continuous on }[0,\tau_\infty),\\
   \omega(s)=\triangle\text{ for }s\ge\tau_\infty
 \end{array}\right\}.
\]
This is a standard Borel subspace.
Let $X_s(\omega):=\omega(s)$ denote the coordinate process, and let $\widetilde{\mathcal F}$ be the Borel $\sigma$-algebra of $\widetilde\Omega$.
Define the raw filtration $\widetilde{\mathbb F}:=(\widetilde{\mathcal F}_s)_{s\ge0}$ by $\widetilde{\mathcal F}_s:=\sigma(X_r:0\le r\le s)$.
A finite lifetime is a killing time when the path jumps to $\triangle$, and an explosion time when it reaches $\triangle$ continuously. 
Precisely,
\[
 \tau_{\mathrm{kill}}(\omega):=
 \begin{cases}\tau_\infty(\omega),&\omega\text{ has a jump to }\triangle,\\
 \infty,&\text{otherwise},\end{cases}
 \qquad
 \tau_{\mathrm{exp}}(\omega):=
 \begin{cases}\tau_\infty(\omega),&\omega\text{ is continuous},\\
 \infty,&\text{otherwise}.\end{cases}
\]
Thus $\tau_\infty=\tau_{\mathrm{kill}}\wedge\tau_{\mathrm{exp}}$.
Set
\[
 \mathfrak M:=\{\mathbb P\in\mathcal P(\Omega^{\mathrm{cad}}):
                       \mathbb P(\widetilde\Omega)=1\},
\]
with the relative weak topology. 
Denote the usual augmentation under $\mathbb P\in\mathfrak M$ by $\widetilde{\mathbb F}^{\,\mathbb P}$.
For $O\Subset D$ open, use
\[
 \rho_O^{\,t}:=\inf\{s\ge t:X_s\notin O\},
 \qquad \tau_n:=\rho_{D_n}^{\,0},
 \qquad [s]_{t,\rho}:=(s\wedge\rho)\vee t,
\]
with $\inf\varnothing:=\infty$.
For $x\in\widehat D$, let $\mathbf x$ be the constant path at $x$, and let $(\theta_t\omega)(s):=\omega((s-t)^+)$ be the delay map.
For a function $f$ on $D$, the product
$f(X_s)\mathbb I_{\{s<\tau_\infty\}}$ is understood to be zero
when $s\ge\tau_\infty$.

\paragraph{Stable uncertainty structures}
For a horizon \(T>0\), a family of possibly empty subsets
\(\mathcal P=\{\mathcal P_{t,x}\}_{(t,x)\in[0,T]\times\widehat D}\)
of \(\mathfrak M\) is an \emph{uncertainty structure} if
\(\mathcal P_{t,\triangle}=\{\delta_{\mathbf\triangle}\}\) and every
\(\mathbb P\in\mathcal P_{t,x}\), \(x\in D\), satisfies
\begin{equation}
\label{def:rdvus_fibers}
    \mathbb P(X_s=x\text{ for every }s\in[0,t])=1.
\end{equation}
It is time-homogeneous if
\(\mathcal P_{t,x}=\mathcal P_{0,x}\circ\theta_t^{-1}\).
It is called \emph{fiberwise nonempty} if
\(\mathcal P_{t,x}\neq\varnothing\) for every
\((t,x)\in[0,T]\times\widehat D\).
For a family indexed by all \((t,x)\in[0,\infty)\times\widehat D\), the same
term means that every fiber on the full time interval is nonempty.

We now impose a natural condition on uncertainty structures: closure under conditioning and concatenation.
This condition, called \emph{dynamic stability}, ensures that the associated upper expectations satisfy the dynamic programming principle, thereby linking uncertainty structures to the semigroup axiom~\ref{V5}.
To formulate this notion precisely, we first introduce a restart map that enforces the constant-past convention in \eqref{def:rdvus_fibers}.
For paths with $\omega(t)=\eta(t)$, define
$(\omega\otimes_t\eta)(s):=\omega(s)$ for $s\le t$ and $\eta(s)$
for $s>t$. Let
\[
 \mathsf R_{t,x}(\eta):=
 \begin{cases}\mathbf x\otimes_t\eta,&\eta(t)=x,\\
 \mathbf x,&\eta(t)\ne x,\end{cases}
 \qquad
 \mathsf R_{\tau,\omega}:=\mathsf R_{\tau(\omega),X_\tau(\omega)}.
\]
Using this restart map, we define two fundamental operations---restarted conditioning and restarted concatenation---and introduce the corresponding notion of stability for uncertainty structures.

\begin{definition}
\label{def:conditioning_concatenation_virtual_space}
Let $\mathbb P\in\mathfrak M$ and let $\tau$ be a finite
$\widetilde{\mathbb F}$-stopping time. If
$\overline{\mathbb P}^{\tau,\omega}$ is a Borel regular conditional
law given $\widetilde{\mathcal F}_\tau$, its \emph{restarted conditional
law} is
\begin{equation}
\label{eq:restarted_conditional_virtual_law}
 \mathbb P^{\tau,\omega}:=
 \overline{\mathbb P}^{\tau,\omega}\circ\mathsf R_{\tau,\omega}^{-1}.
\end{equation}
A $\widetilde{\mathcal F}_\tau$-measurable kernel
$\nu:\widetilde\Omega\to\mathfrak M$ is \emph{restart-compatible} if
$\nu(\omega)(X_s=X_\tau(\omega)\text{ for all }s\le\tau(\omega))=1$
for every $\omega$. Its \emph{restarted concatenation} with $\mathbb P$ is
\begin{equation}
\label{eq:concatenated_virtual_law}
 (\mathbb P\otimes_\tau\nu)(B)
 :=\iint\mathbb I_B(\omega\otimes_{\tau(\omega)}\eta)
                   \,\nu(\omega,d\eta)\,\mathbb P(d\omega),
 \qquad B\in\widetilde{\mathcal F}.
\end{equation}
\end{definition}

\begin{definition}
\label{def:stable_virtual_uncertainty_structure}
\label{def:regular_dynamic_virtual_uncertainty_structure}
Fix a horizon \(T>0\), and let
\(\mathcal P=\{\mathcal P_{t,x}\}_{(t,x)\in[0,T]\times\widehat D}\)
be an uncertainty structure on \([0,T]\).

\begin{enumerate}[label=(\roman*), ref=(\roman*)]
\item\label{def:rdvus_regularity}
It is \emph{topologically stable} if every fiber
\(\mathcal P_{t,x}\) is convex, the graph of
\((t,x)\mapsto\mathcal P_{t,x}\) is weakly closed on \([0,T]\times D\), and,
for every compact \(K\subset[0,T]\times D\),
\[
    \mathcal P_K
    :=
    \bigcup_{(t,x)\in K}\mathcal P_{t,x}
\]
is weakly compact.

\item\label{def:rdvus_conditioning}
It is \emph{stable under restarted conditioning} if, whenever
\(\mathbb P\in\mathcal P_{t,x}\) and \(s\in[t,T]\), one has
\(\mathbb P^{s,\widetilde\omega}\in
\mathcal P_{s,X_s(\widetilde\omega)}\) for
\(\mathbb P\)-almost every \(\widetilde\omega\).
It is \emph{strongly stable under restarted conditioning} if the same
property holds for every finite \(\widetilde{\mathbb F}\)-stopping time
\(\widetilde\tau\) satisfying \(t\le\widetilde\tau\le T\), with \(s\)
replaced by \(\widetilde\tau\).

\item\label{def:rdvus_pasting}
It is \emph{stable under restarted concatenation} if, whenever
\(\mathbb P\in\mathcal P_{t,x}\), \(s\in[t,T]\), and
\(\widetilde\nu:\widetilde\Omega\to\mathfrak M\) is an
\(\widetilde{\mathcal F}_s\)-measurable, \(s\)-restart-compatible kernel
such that
\(\widetilde\nu(\widetilde\omega)\in
\mathcal P_{s,X_s(\widetilde\omega)}\) for
\(\mathbb P\)-almost every \(\widetilde\omega\), one has
\(\mathbb P\otimes_s\widetilde\nu\in\mathcal P_{t,x}\).
It is \emph{strongly stable under restarted concatenation} if the same
property holds for every finite \(\widetilde{\mathbb F}\)-stopping time
\(\widetilde\tau\) satisfying \(t\le\widetilde\tau\le T\), with all
corresponding occurrences of \(s\) replaced by \(\widetilde\tau\).

\item\label{def:rdvus_stable} The family \(\mathcal P\) is called \emph{dynamically stable} if it is
stable under restarted conditioning and restarted concatenation, and
\emph{strongly dynamically stable} if it is strongly stable under both
operations. It is called \emph{stable} if it is both topologically and
dynamically stable, and \emph{strongly stable} if it is both topologically
and strongly dynamically stable.
\end{enumerate}
A family indexed by all \((t,x)\in[0,\infty)\times\widehat D\) is said to have any of the preceding stability properties if its restriction to every finite horizon
has that property.
\end{definition}

The two notions of stability play distinct roles.
Topological stability provides the closed-graph and compactness properties needed to pass to weak limits and apply measurable selection.
Dynamic stability ensures time consistency, as discussed above.

\paragraph{The \texorpdfstring{$G$}{G}-supermartingale problem}
The following $G$-supermartingale problem defines the uncertainty structure associated with a given local specification $G$ without specifying local characteristics of the coordinate process $X$.

\begin{definition}
\label{def:G_supermartingale_problem}
Let $G\in\mathfrak G$, and fix
\((t,\widetilde\omega)\in[0,\infty)\times\widetilde\Omega\).
We say that \(\mathbb P\in\mathfrak M\) solves the \emph{generalized \(G\)-supermartingale problem on \(\widetilde\Omega\) starting from \((t,\widetilde\omega)\)} if it satisfies the prescribed-history condition $\mathbb P(X_s=\widetilde\omega(s)\text{ for }0\le s\le t)=1$ and, for every \(f\in C_b^\infty(D)\) and \(n\ge1\), the process
\begin{equation}
\label{eq:defM_f,nonline}
\widetilde M_s^{f,n}
:=
f\bigl(X_{[s]_{t,\tau_n}}\bigr)
\mathbb I_{\{[s]_{t,\tau_n}<\tau_\infty\}}
-
\int_t^{[s]_{t,\tau_n}}
G\bigl(
    X_r,
    J_{X_r}^2f
\bigr)
\mathbb I_{\{r<\tau_\infty\}}\,dr,
\qquad s\ge t,
\end{equation}
is a \(\mathbb P\)-supermartingale with respect to \(\widetilde{\mathbb F}\).
We denote the corresponding solution set by $\mathcal P_{t,\widetilde\omega}(G)$.
For constant initial histories write $\mathcal P_{t,x}(G):=\mathcal P_{t,\mathbf x}(G)$ and denote the resulting family by $\mathcal P(G)$.
We abbreviate $\mathcal P_x(G):=\mathcal P_{0,x}(G)$.
\end{definition}

For $G\in\mathfrak G_{\mathrm{Lyap}}$, define
\begin{equation}
\label{eq:generator_compatible_structure_class}
\mathfrak S(G)
:=\left\{\mathcal C\subseteq\mathcal P(G):
\begin{array}{l}
\mathcal C\text{ is a fiberwise nonempty, time-homogeneous}\\
\text{stable uncertainty structure}
\end{array}\right\}.
\end{equation}
Here and throughout, inclusion between uncertainty structures is understood fiberwise: for two structures $\mathcal P$ and $\mathcal Q$ with the same index set $I\times\widehat D$, we write
\[
    \mathcal P\subseteq\mathcal Q
    \quad\Longleftrightarrow\quad
    \mathcal P_{t,x}\subseteq\mathcal Q_{t,x}
    \quad\text{for every }(t,x)\in I\times\widehat D,
\]
where $I=[0,T]$ or $I=[0,\infty)$, as appropriate.
Finally, the class of stable structures under a local Lyapunov constraint is denoted by
\begin{equation}
\label{eq:local_lyapunov_structure_class}
\mathfrak S_{\mathrm{loc,Lyap}}
:=\bigcup_{G\in\mathfrak G_{\mathrm{Lyap}}}\mathfrak S(G).
\end{equation}

The three classes $\mathfrak V_{\mathrm{loc,Lyap}},\mathfrak S_{\mathrm{loc,Lyap}},\mathfrak G_{\mathrm{Lyap}}$ are ordered by pointwise valuation inequalities,
fiberwise inclusion of structures, and pointwise inequalities of local
specifications, respectively.

\section{Valuation--Uncertainty Correspondence}
\label{sec:valuation_orbit_nonemptiness}
\label{sec:stable_recovery}
\label{sec:stable_representation}

For $\mathcal C\in\mathfrak S_{\mathrm{loc,Lyap}}$, define its robust valuation and the canonical valuation associated with $G$ by
\begin{equation}
\label{eq:robust_valuation_mapping_definition}
 (\operatorname{Val}(\mathcal C))_t f(x)
 :=\sup_{\mathbb P\in\mathcal C_{0,x}}
       \mathbb E^{\mathbb P}[f(X_t)\mathbb I_{\{t<\tau_\infty\}}],
 \qquad \mathcal V^G:=\operatorname{Val}(\mathcal P(G)).
\end{equation}
Write $V_f^G(t,x):=\mathcal V_t^Gf(x)$.
We say that $\mathcal C$ represents $\mathcal T$ when $\operatorname{Val}(\mathcal C)=\mathcal T$.

This section establishes that the robust valuation map is an order
isomorphism between $\mathfrak S_{\mathrm{loc,Lyap}}$ and
$\mathfrak V_{\mathrm{loc,Lyap}}$.
To state this result precisely, for each $G\in\mathfrak G_{\mathrm{Lyap}}$, define
\begin{equation}
\label{eq:globally_dominated_valuation_class}
    \mathfrak V_{\le}(G)
    :=\{\mathcal T\in\mathfrak V:\mathcal T\le\mathcal V^G\}.
\end{equation}
For $\mathcal T\in\mathfrak V$, introduce the valuation orbits
\begin{equation}
\label{eq:def_semigroup_orbit_process}
    Y_s^{\mathcal T;R,g}
    :=\mathcal T_{R-s}g(X_s)\mathbb I_{\{s<\tau_\infty\}},
    \qquad 0\le s\le R,
\end{equation}
and define the valuation-compatible family by
\begin{equation} \label{eq:def_valuation_compatible_family}
\begin{aligned}
\mathcal R_{t,x}^{\mathcal T}:=\{\mathbb P\in\mathfrak M:\; &\mathbb P(X_{\cdot\wedge t}=\mathbf x)=1,\\
&Y^{\mathcal T;R,g}\text{ is a }\mathbb P\text{-supermartingale on }[t,R] \text{ for all }R\ge t,\ g\in\USC_b(D)\}. 
\end{aligned} 
\end{equation}
We write $\mathcal R^{\mathcal T}:=\{\mathcal R_{t,x}^{\mathcal T}\}_{t,x}$ and $\mathcal R_x^{\mathcal T}:=\mathcal R_{0,x}^{\mathcal T}$, and denote the restriction of $\mathcal R^{\mathcal T}$ to initial times in $[0,T]$ by $\mathcal R_{[0,T]}^{\mathcal T}$.

\begin{theorem}
\label{thm:local_lyapunov_valuation_structure_correspondence}
The robust valuation map $\operatorname{Val}$ is an order isomorphism between $\mathfrak S_{\mathrm{loc,Lyap}}$ and $\mathfrak V_{\mathrm{loc,Lyap}}$, with inverse given by $\operatorname{Val}^{-1}(\mathcal T)=\mathcal R^{\mathcal T}$.
Moreover, for every $\mathcal T\in\mathfrak V_{\mathrm{loc,Lyap}}$, $0\le t\le T$, $x\in D$, and $f\in\USC_b(D)$,
\begin{equation}
\label{eq:stable_recovery_representation}
    \mathcal T_{T-t}f(x)
    =
    \max_{\mathbb P\in\mathcal R_{t,x}^{\mathcal T}}
    \mathbb E^{\mathbb P}[f(X_T)\mathbb I_{\{T<\tau_\infty\}}].
\end{equation}
Finally, for each $G\in\mathfrak G_{\mathrm{Lyap}}$,
$\operatorname{Val}$ restricts to an order isomorphism between
$\mathfrak S(G)$ and $\mathfrak V_{\le}(G)$.
\end{theorem}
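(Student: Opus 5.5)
The proof splits into four steps: (a) $\operatorname{Val}$ maps $\mathfrak S_{\mathrm{loc,Lyap}}$ into $\mathfrak V_{\mathrm{loc,Lyap}}$ and preserves order; (b) $\mathcal R^{\mathcal T}\in\mathfrak S_{\mathrm{loc,Lyap}}$ for $\mathcal T\in\mathfrak V_{\mathrm{loc,Lyap}}$, with the attained representation \eqref{eq:stable_recovery_representation}; (c) $\mathcal R^{\operatorname{Val}(\mathcal C)}=\mathcal C$ for every $\mathcal C\in\mathfrak S_{\mathrm{loc,Lyap}}$; (d) the restriction to $\mathfrak S(G)$ and $\mathfrak V_{\le}(G)$. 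Order preservation of $\operatorname{Val}$ is immediate from monotonicity of suprema. For the reverse direction, $\mathcal T\le\mathcal S$ should give $\mathcal R^{\mathcal T}\subseteq\mathcal R^{\mathcal S}$. I would not try to show this directly; instead I would prove it by identifying $\mathcal R^{\mathcal T}$ as the largest stable structure whose valuation is $\le\mathcal T$.

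\emph{Step (a).} Let $\mathcal C\in\mathfrak S(G)$ and set $\mathcal T:=\operatorname{Val}(\mathcal C)$. Axiom \ref{V1} and \ref{V2} are trivial. For \ref{V3}, \ref{V4} and preservation of $\USC_b$, I would use topological stability: weak compactness of $\mathcal C_K$ and the closed graph, combined with the Lyapunov tightness (live exits from $D_n$ are controlled by $\phi$ and $C_\phi$), make $\mathbb P\mapsto\mathbb E^{\mathbb P}[f(X_t)\mathbb I_{\{t<\tau_\infty\}}]$ upper semicontinuous. This needs care at $\triangle$, where $f$ extended by $0$ is not $\USC$ if $f<0$, so I would first shift by constants using the positive-homogeneity and killing conventions. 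A Dini/compactness argument then gives continuity from above, and maxima are attained. The semigroup property \ref{V5} is the usual dynamic programming principle: stability under restarted conditioning gives ``$\le$'', and restarted concatenation gives ``$\ge$'' after a measurable selection of $\varepsilon$-optimal (indeed optimal) kernels, which is available from the Borel graph. Time-homogeneity of $\mathcal C$ yields time-homogeneity of $\mathcal T$. For locality and finiteness of the generator, the $G$-supermartingale property gives $\mathcal T_hf(x)-f(x)\le\sup_{\mathbb P}\mathbb E\int_0^{h\wedge\tau_n}G(X_r,J^2f)dr+o(h)$, hence $\overline{\mathcal G}^{\mathcal T}f\le G(\cdot,J^2f)$. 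A lower bound and locality should follow from the $G$-supermartingale bound applied to $-f$ and exit-time estimates: a difference supported away from $x$ contributes only $o(h)$ because exit from a neighborhood in time $h$ has probability $o(h)$, uniformly, again by the Lyapunov and jet-Lipschitz bounds. Thus $G_{\mathcal T}\le G$, and $\mathfrak G_{\mathrm{Lyap}}$ is downward closed.

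\emph{Step (b).} This is the main obstacle. Given $\mathcal T\in\mathfrak V_{\mathrm{loc,Lyap}}$, I must construct, for each $(t,x)$, $T$ and $f$, a law $\mathbb P\in\mathcal R^{\mathcal T}_{t,x}$ attaining $\mathcal T_{T-t}f(x)$. The inequality $\le\mathcal T_{T-t}f(x)$ for all $\mathbb P\in\mathcal R^{\mathcal T}_{t,x}$ is just the supermartingale property between $t$ and $T$. For attainment, I would discretize time on dyadic grids. At each step, use $\mathcal T_{h}$ to build kernels in one step: by the local generator representation (Proposition~\ref{prop:local_upper_generator_representation} and Lemma~\ref{lem:support_correspondence_geometry}), pick a Markov control selecting the maximizer $V\in\mathfrak A_{G_{\mathcal T}}(x)$ against the value function's jet. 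More robustly, I would take approximate maximizers of the Euler-type scheme, pass to weak limits using Lyapunov tightness, and use \ref{V4} and the upper semicontinuity of orbits to show that the limiting law makes every orbit $Y^{\mathcal T;R,g}$ a supermartingale while keeping the value. Here upper semicontinuity of $\mathcal T_{R-s}g$ in $(s,x)$, from \ref{V4}, is exactly what survives weak limits; this is the asymmetry highlighted in the introduction. Nonemptiness of the fibers follows, and topological and dynamic stability of $\mathcal R^{\mathcal T}$ follow from the definition: supermartingale properties are closed under convex combinations, under weak limits via upper semicontinuity, and under conditioning and concatenation. Compactness again comes from Lyapunov tightness. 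Finally, $\mathcal R^{\mathcal T}\subseteq\mathcal P(G_{\mathcal T})$ follows by comparing $f(X)$ with orbits of $f$ over short horizons and using $\mathcal T_hf\le f+hG_{\mathcal T}(\cdot,J^2f)+o(h)$ locally uniformly.

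\emph{Steps (c) and (d).} For (c), the inclusion $\mathcal C\subseteq\mathcal R^{\operatorname{Val}(\mathcal C)}$ follows from dynamic programming: conditioning in $\mathcal C$ makes each orbit a supermartingale. For the reverse inclusion, given $\mathbb P\in\mathcal R^{\mathcal T}_{t,x}\setminus\mathcal C_{t,x}$, I would separate it from the convex, weakly compact set $\mathcal C_{t,x}$ using a functional built from finitely many time marginals. I would then reduce, via stability under concatenation, to one-period payoffs $g$, where the supermartingale property of $\mathbb P$ contradicts the separation. This is a second delicate point, and I would use the attained maxima of step (b) iteratively. With (c) established, order reflection follows: $\mathcal T\le\mathcal S$ implies that every orbit of $\mathcal S$ dominates the corresponding orbit of $\mathcal T$, and maximality gives $\mathcal R^{\mathcal T}\subseteq\mathcal R^{\mathcal S}$. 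For (d), $\mathcal T\le\mathcal V^G$ gives $\mathcal R^{\mathcal T}\subseteq\mathcal R^{\mathcal V^G}=\mathcal P(G)$. Conversely, $\mathcal C\subseteq\mathcal P(G)$ gives $\operatorname{Val}(\mathcal C)\le\mathcal V^G$.
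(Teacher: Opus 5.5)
The gap is in step (b), and that step is where surjectivity of $\operatorname{Val}$ onto $\mathfrak V_{\mathrm{loc,Lyap}}$ and the attained representation \eqref{eq:stable_recovery_representation} must come from. You build kernels from the local specification, through Markov controls selecting $V\in\mathfrak A_{G_{\mathcal T}}(x)$ or Euler-type schemes. Such kernels cannot reproduce $\mathcal T$, because $G_{\mathcal T}$ does not determine $\mathcal T$.

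Example~\ref{ex:singular_diffusion_multiple_branches} shows this concretely. For $\mathcal T=\mathcal T^0$ one has $G_{\mathcal T}=G_\alpha$, which is linear in the jet, so $\mathfrak A_{G_\alpha}(x)=\{(0,0,a_\alpha(x))\}$ and there is nothing to select. The Euler scheme started at $0$ never moves, since $\sigma_\alpha(0)=0$, and so it produces $\delta_{\mathbf 0}$. This law is not in $\mathcal R_0^{\mathcal T^0}$: for $g=-f_+$ and $R=t_0$, the orbit has $Y_0=-\mathcal T^0_{t_0}f_+(0)<0=Y_{t_0}$ under $\delta_{\mathbf 0}$.

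There are two further problems with step (b). First, $(s,y)\mapsto\mathcal T_{T-s}f(y)$ is only upper semicontinuous, so there is no jet to optimize against. Second, \ref{V4} and upper semicontinuity of the orbits only carry to weak limits those inequalities $\mathbb E[F\,g(X_s)\mathbb I_{\{s<\tau_\infty\}}]\le\mathbb E[F\,\mathcal T_{s-r}g(X_r)\mathbb I_{\{r<\tau_\infty\}}]$ that the approximating laws already satisfy. Laws generated from $G_{\mathcal T}$ do not satisfy them.

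What is missing is a way to produce path laws whose one-step behaviour is dominated by $\mathcal T_h$ itself, inside a tight and closed admissible class. The paper obtains this from the global domination $\mathcal T\le\mathcal V^{G_{\mathcal T}}$ (Proposition~\ref{cor:local_global_domination}, hence \eqref{eq:local_lyapunov_union_dominated_classes}), and that domination takes real work:
\begin{enumerate}
\item Lemma~\ref{prop:PDErepn} shows the orbit $v_f^{\mathcal T}$ is a bounded u.s.c.\ viscosity subsolution of $\partial_tu=G_{\mathcal T}(x,J_x^2u)$.
\item Theorem~\ref{prop:one_step_usc_subsolution_realization} and Corollary~\ref{cor:usc_terminal_realization} turn this subsolution into a law in $\mathcal P_x(G_{\mathcal T})$ that dominates $\mathcal T_tf(x)$.
\item Only then does the separation argument of Lemma~\ref{prop:one_step_polar_calibration_valuation} apply. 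It uses sublinearity of $\mathcal T_h$ together with $\mathcal T_h\le\mathcal V^G_h$ on the compact convex set $\mathcal P_x(G)$, and yields laws with $\mathbb E^{\mathbb P}[g(X_h)\mathbb I_{\{h<\tau_\infty\}}]\le\mathcal T_hg(x)$ for all $g$ and equality for the calibrating payoff.
\item Borel selection, concatenation along grids, and a weak limit checked on a countable test family then give the maximizer in $\mathcal R^{\mathcal T}_x$.
\end{enumerate}

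Two smaller points. In (a), upper semicontinuity of $\mathbb P\mapsto\mathbb E^{\mathbb P}[f(X_t)\mathbb I_{\{t<\tau_\infty\}}]$ comes from the absence of killing atoms at $t$ and of continuous explosion (Proposition~\ref{prop:recovery_aggregation_generator_compatible_characteristics}, Lemma~\ref{lem:admissible_nonexplosion}). A constant shift does not help, because $\mathbb P\mapsto\mathbb P(t<\tau_\infty)$ is at best lower semicontinuous. Your step (c), which separates by a cylinder functional and evaluates both sides by the backward valuation recursion, is a workable dual of the paper's marginal-lifting argument (Lemma~\ref{lem:stable_marginal_lifting}). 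It does require you to check that the intermediate recursion functions are upper semicontinuous, so that the conditional domination applies under $\mathbb P$.
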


The proof is completed at the end of this section, after the necessary properties have been established in the following subsections.

\subsection{Admissible laws}
\label{subsec:canonical_local_law_properties}
We first establish the local characteristics, stability, and nonemptiness of $\mathcal P(G)$.

\paragraph{Local characteristics}
The support correspondence gives a coefficient description of the laws
in $\mathcal P(G)$. We formulate this description directly on the
canonical path space.

\begin{definition}
\label{def:virtual_coefficient_fields}
A map
$\beta=(c,b,a):[0,\infty)\times\widetilde\Omega\to\mathfrak K$
is a \emph{coefficient field} if it is progressively measurable with
respect to the raw filtration $\widetilde{\mathbb F}$, satisfies
\[
 \beta(s,\widetilde\omega)=0
 \qquad\text{for }s\ge\tau_\infty(\widetilde\omega),
\]
and is locally bounded: for every $T>0$ and $n\ge1$,
\[
 \sup\bigl\{
 |c(s,\widetilde\omega)|+|b(s,\widetilde\omega)|+\|a(s,\widetilde\omega)\|:
 0\le s\le T,\ \widetilde\omega\in\widetilde\Omega,
 X_s(\widetilde\omega)\in\overline D_n
 \bigr\}<\infty.
\]
For such a field, set $k^\beta:=-c\ge0$ and
\[
 L^\beta(s,\widetilde\omega,U):=\ell_{\beta(s,\widetilde\omega)}(U),
 \qquad U\in\mathfrak J.
\]
\end{definition}
For a path in $\widetilde\Omega$, define its continuous coordinate by freezing its
last live state after killing:
\begin{equation}
\label{eq:continuous_coordinate_virtual_path}
 \overline X_s(\omega):=
 \begin{cases}
 X_s(\omega),&s<\tau_{\mathrm{kill}}(\omega),\\
 X_{\tau_{\mathrm{kill}}(\omega)-}(\omega),
     &s\ge\tau_{\mathrm{kill}}(\omega).
 \end{cases}
\end{equation}
When $\tau_{\mathrm{kill}}=\infty$, we set $\overline X=X$.
The process $\overline X$ is continuous with values in $\widehat D$ and
is $D$-valued before continuous explosion. Write
$N_s:=\mathbb I_{\{\tau_{\mathrm{kill}}\le s\}}$ for the killing
indicator.

\begin{definition}
\label{def:generalized_linear_martingale_problems}
Let $\beta$ be a coefficient field and fix $t\ge0$.
Let $\widetilde\omega\in\widetilde\Omega$. We say a probability measure
$\mathbb P\in\mathfrak M$ is a solution to  the
\emph{generalized $L^\beta$-martingale problem on $\widetilde\Omega$
starting from $(t,\widetilde\omega)$} if
\begin{align}\label{eq:initialconditionP}
\mathbb P\bigl(
X_s=\widetilde\omega(s)\ \text{for all }0\leq s\leq t
\bigr)=1
\end{align}
and, for every $f\in C_b^\infty(D)$ and $n\geq1$, the process
\begingroup
\mathtoolsset{showonlyrefs=false}
\begin{equation}
\label{eq:defM_f,nlinearmtg}
\widetilde M_s^{f,n,\beta}
:=
f\bigl(X_{[s]_{t,\tau_n}}\bigr)
\mathbb I_{\{[s]_{t,\tau_n}<\tau_\infty\}}
-
\int_t^{[s]_{t,\tau_n}}
L^\beta\bigl(r,X,J_{X_r}^2f
\bigr)
\mathbb I_{\{r<\tau_\infty\}}\,dr,
\quad s\geq t,
\end{equation}
\endgroup
is a $\mathbb P$-martingale with respect to $\widetilde{\mathbb F}$.
We denote the corresponding solution set by $\widetilde{\mathcal P}_{t,\widetilde\omega}(L^\beta)$.
\end{definition}
For either a state or a history $\xi$, we abbreviate
$\widetilde{\mathcal P}_\xi(L^\beta)
:=\widetilde{\mathcal P}_{0,\xi}(L^\beta)$ and use a state subscript for
the corresponding constant history.

\begin{definition}
\label{def:generator_compatible_coefficient_fields}
Let $\beta=(c,b,a)$ be a coefficient field and let $G\in\mathfrak G$.
The coefficient field $\beta$ is \emph{generator-compatible with $G$} if
\begin{equation}
\beta(t,\omega)\in \mathfrak A_G\bigl(X_t(\omega)\bigr)
\qquad
\text{for }t<\tau_\infty(\omega).
\end{equation}
We denote the class of all such coefficient fields by $\mathfrak B_{\mathrm{gc}}(G)$.
\end{definition}\noindent Equivalently, a coefficient field \(\beta\) is generator-compatible if and only if
\begin{align}\label{eq:coefficient_generator_compatibility}
    L^\beta(t,\omega,U)
    \le
    G(\omega(t),U)
      \text{ for all }(t,\omega,U)\text{ with }t<\tau_\infty(\omega).
\end{align}

The following proposition recovers the killing intensity, drift, and
covariance from each admissible law.
Its proof, including the recovery and aggregation argument, is provided
in Appendix~\ref{app:characteristic_recovery}.

\begin{proposition}
\label{prop:recovery_aggregation_generator_compatible_characteristics}
Let $G\in\mathfrak G$.
Then every $\mathbb P\in\mathcal P_{t,\widetilde\omega}(G)$ with
$t<\tau_\infty(\widetilde\omega)$ admits a coefficient field
$\beta^{\mathbb P}=(c^{\mathbb P},b^{\mathbb P},a^{\mathbb P})
\in\mathfrak B_{\mathrm{gc}}(G)$ such that
$\mathbb P\in\widetilde{\mathcal P}_{t,\widetilde\omega}
(L^{\beta^{\mathbb P}})$ and the following properties hold.
\begin{enumerate}[label=(\roman*), ref=(\roman*)]
\item\label{prop:recovery_killing_compensator}
The predictable compensator of
$N$ with respect to $\widetilde{\mathbb F}^{\,\mathbb P}$ is
\[
 A_s^{\mathbb P}
 :=\int_t^{s\vee t}k^{\beta^{\mathbb P}}(r,X)\mathbb I_{\{r<\tau_\infty\}}\,dr,
 \qquad s\ge0.
\]
Thus $N-A^{\mathbb P}$ is a $\mathbb P$-local martingale, and
$k^{\beta^{\mathbb P}}=-c^{\mathbb P}$ is the killing intensity before
the lifetime. In particular,
$\mathbb P(\tau_{\mathrm{kill}}=a)=0$ for every deterministic $a\ge0$.

\item\label{prop:recovery_continuous_characteristics}
For every $n$ such that $\widetilde\omega([0,t])\subset D_n$, there is a continuous
$\widetilde{\mathbb F}^{\,\mathbb P}$-local martingale
$M^{\mathbb P,n}$, null on $[0,t]$, such that, for $s\ge t$,
\[
\begin{aligned}
 \overline X_{s\wedge\tau_n}
 &=\widetilde\omega(t)
   +\int_t^{s\wedge\tau_n}b^{\mathbb P}(r,X)\,dr
   +M_s^{\mathbb P,n},\\
 \bigl\langle M^{\mathbb P,n}\bigr\rangle_s
 &=\int_t^{s\wedge\tau_n}a^{\mathbb P}(r,X)\,dr.
\end{aligned}
\]
Hence $b^{\mathbb P}$ and $a^{\mathbb P}$ are the local drift and
covariance densities of $\overline X$ after time $t$.
Together with \ref{prop:recovery_killing_compensator}, these identities
determine the recovered coefficients $dr\otimes d\mathbb P$-a.e. on
$\{(r,\widetilde\eta):t<r<\tau_\infty(\widetilde\eta)\}$.

\end{enumerate}
\end{proposition}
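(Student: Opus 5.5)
The plan is to first extract, for each fixed test function, semimartingale information from the supermartingale property of $\widetilde M^{f,n}$. Then I aggregate this into a single progressively measurable coefficient field with values in $\mathfrak A_G(X_r)$.

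\textbf{Step 1 (Doob--Meyer and linearization).} Fix $\mathbb P\in\mathcal P_{t,\widetilde\omega}(G)$ and $n$ with $\widetilde\omega([0,t])\subset D_n$. For $f\in C_b^\infty(D)$, the process $f(X_{[s]_{t,\tau_n}})\mathbb I_{\{\cdot<\tau_\infty\}}$ is the bounded supermartingale $\widetilde M^{f,n}$ plus a finite-variation absolutely continuous term. So it is a special semimartingale, with Doob--Meyer decomposition $M^{f}-B^{f}$ under $\widetilde{\mathbb F}^{\mathbb P}$. Apply the supermartingale property to $\lambda f$ for $\lambda\in\mathbb R$ (including $-f$). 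Sublinearity and the Lipschitz bound \ref{item:G1} on $\overline D_n$ then force the drift of $f(X)\mathbb I$ to satisfy
\[
-G(X_r,-J^2_{X_r}f)\,dr\le dB^{f}_r\le G(X_r,J^2_{X_r}f)\,dr .
\]
Hence the drift is absolutely continuous, with density $\gamma^f(r)$ bounded by $L_{\overline D_n}\|J^2f\|$.

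\textbf{Step 2 (identification of characteristics).} Use localized coordinate functions $f=x_i\chi$ and $f=x_ix_j\chi$, with $\chi\in C_b^\infty$ equal to $1$ on $\overline D_{n}$, together with $f=\chi$. These give, up to $\tau_n\wedge\tau_{\mathrm{kill}}$:
\begin{itemize}
\item the killing compensator density $k$ (from $f=\chi$, since $\chi(X)\mathbb I=1-N$ before exit);
\item the drift $b$ of $\overline X$;
\item via It\^o's formula applied to $x_ix_j$, the covariance density $a$ of $\overline X$.
\end{itemize}
Here I must handle killing carefully: $f(X_s)\mathbb I_{\{s<\tau_\infty\}}=f(\overline X_s)(1-N_s)$, and I expand by product rule. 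Since $\overline X$ is continuous, the jump of $N$ contributes only through $f(\overline X_{\tau_{\mathrm{kill}}-})$. This gives the compensator $A^{\mathbb P}$ and, since it is absolutely continuous, $\mathbb P(\tau_{\mathrm{kill}}=a)=0$. It\^o's formula then shows that, for general $f$, the drift density equals $\ell_{\beta}(J^2f)$ with $\beta=(-k,b,a)$. This yields the linear martingale problem.

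\textbf{Step 3 (compatibility).} From Step 1, $\ell_{\beta(r)}(J^2_{X_r}f)\le G(X_r,J^2_{X_r}f)$ holds $dr\otimes d\mathbb P$-a.e. for each $f$. I take a countable dense set of jets $U$ realized by functions $f_U$, using polynomials times cutoffs, locally in $x$. Continuity in $U$ of $\ell_\beta$, together with the Lipschitz bound, extends the inequality to all $U$ off a null set. Hence $\beta(r)\in\mathfrak A_G(X_r)$ a.e.

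\textbf{Step 4 (modification).} I redefine $\beta$ on the exceptional set, and after $\tau_\infty$, to be a Borel selector of $\mathfrak A_G(X_r)$ (or $0$). Such a selector exists by Lemma~\ref{lem:support_correspondence_geometry}(ii) and Kuratowski--Ryll-Nardzewski. I also replace the $\widetilde{\mathbb F}^{\mathbb P}$-progressive densities by raw-filtration progressive versions, for instance via $\limsup$ of difference quotients of the raw-measurable compensators. The result is a coefficient field: it is locally bounded because $\mathfrak A_G$ is locally bounded.

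The main obstacle is Step 2 with killing and explosion, namely matching compensators across $\tau_n$ and $\tau_\infty$ and ensuring that the densities are chosen raw-progressively and simultaneously for all $f$. I would handle this by defining $\beta$ via the countable family of coordinate-type test functions only, and then deducing the general martingale statement from It\^o's formula.
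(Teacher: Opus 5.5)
Your route matches the paper's proof. You obtain two-sided drift bounds from the supermartingale property of $\widetilde M^{\pm f,n}$, and identify $(k,b,a)$ from the tests $1$ (or $\chi$), $\chi x_i$, $\chi x_ix_j$ with the killing correction $f(\overline X_s)=f(X_s)\mathbb I_{\{s<\tau_\infty\}}+\int_0^s f(\overline X_r)\,dN_r$. You then use It\^o's formula for general $f$, a countable jet-dense test class for support membership, and a Borel-selector modification. Rational quadratic polynomials times cutoffs do give such a class, since the jet map at each fixed $x$ is a linear isomorphism.

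The real omission is the passage across localization levels. You flag it as the main obstacle, but restricting to coordinate-type tests does not resolve it. Everything you do is for one fixed $n$, so you get $\beta^n$ only on $\{t<r\le\tau_n\}$ and only the compensator of $N_{\cdot\wedge\tau_n}-N_t$. Membership in $\widetilde{\mathcal P}_{t,\widetilde\omega}(L^{\beta})$, however, needs a single field that works for every $n$. Part (i) also concerns $N$ itself up to $\tau_\infty$, where $k$ is only locally bounded. There $\int_t^{s}k\,\mathbb I_{\{r<\tau_\infty\}}\,dr$ could a priori diverge as the path approaches $\partial D$. Your sentence ``this gives the compensator $A^{\mathbb P}$'' skips this.

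The paper closes the gap in two steps.

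\emph{Pasting.} For $m\ge n$, uniqueness of the stopped killing compensator, of the canonical decomposition of $\overline X_{\cdot\wedge\tau_n}$, and of its predictable covariation gives $\beta^m=\beta^n$, $dr\otimes d\mathbb P$-a.e. on $\{t<r\le\tau_n\}$. Hence the $\beta^n$ can be pasted over the disjoint predictable sets $\{t<r\le\tau_n\}\setminus\{t<r\le\tau_{n-1}\}$. A raw predictable version indistinguishable from the result is then taken via Dellacherie--Meyer's lemma. Your difference-quotient device presupposes raw versions of the compensators, which is the same lemma.

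\emph{Limit $n\to\infty$.} The stopped compensators $A^{\mathbb P,n}$ satisfy $\mathbb E^{\mathbb P}[A^{\mathbb P,n}_s]=\mathbb E^{\mathbb P}[N_{s\wedge\tau_n}-N_t]\le1$ and increase to $A^{\mathbb P}$. Meanwhile $N_{\cdot\wedge\tau_n}\to N$ pathwise. Indeed, a killed path has compact pre-killing range in $D$, so $\tau_n=\tau_{\mathrm{kill}}$ eventually, and on continuous explosion both processes vanish. Monotone convergence gives $A^{\mathbb P}_s<\infty$ a.s. and $L^1$ convergence. The martingale identity for $N_{\cdot\wedge\tau_n}-A^{\mathbb P,n}$ therefore passes to $N-A^{\mathbb P}$. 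Continuity of $A^{\mathbb P}$, together with the alive prescribed history, then gives both the compensator statement and the absence of deterministic killing atoms. With these two additions your argument is complete.
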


The recovered coefficient field yields the following equivalence between the coefficient-free $G$-supermartingale problem and the generator-compatible linear martingale problems.

\begin{corollary}
\label{cor:recovery_martingale_problem_equivalence}
Let $G\in\mathfrak G$. Then, for every initial history $(t,\widetilde\omega)\in[0,\infty)\times\widetilde\Omega$,
\begin{equation}
\label{eq:virtual_model_classes_from_fields}
\mathcal P_{t,\widetilde\omega}(G)
=\bigcup_{\beta\in\mathfrak B_{\mathrm{gc}}(G)}
  \widetilde{\mathcal P}_{t,\widetilde\omega}(L^\beta).
\end{equation}
\end{corollary}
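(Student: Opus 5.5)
The identity \eqref{eq:virtual_model_classes_from_fields} splits into two inclusions, and I would prove them separately after first disposing of the degenerate histories with $t\ge\tau_\infty(\widetilde\omega)$. In that case the prescribed-history condition forces $X_s=\triangle$ for $s\ge\tau_\infty(\widetilde\omega)$ under any admissible law, and $X_s=\widetilde\omega(s)$ on $[0,t]$. For $s\ge t$ this gives $[s]_{t,\tau_n}\ge t\ge\tau_\infty$, so both processes \eqref{eq:defM_f,nonline} and \eqref{eq:defM_f,nlinearmtg} vanish identically for $s\ge t$. Hence both sides of \eqref{eq:virtual_model_classes_from_fields} consist of exactly the laws satisfying the prescribed-history condition. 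For the right-hand side to be nonempty we also need $\mathfrak B_{\mathrm{gc}}(G)\neq\varnothing$. A Borel selection $\alpha$ of $\mathfrak A_G$, which exists by Lemma~\ref{lem:support_correspondence_geometry}(ii), gives the field $\beta(s,\omega):=\alpha(X_s(\omega))\mathbb I_{\{s<\tau_\infty\}}$. It is progressively measurable, and it is locally bounded because $\mathfrak A_G$ is locally bounded on $\overline D_n$.

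For the inclusion ``$\subseteq$'' with $t<\tau_\infty(\widetilde\omega)$, there is nothing new to prove. Proposition~\ref{prop:recovery_aggregation_generator_compatible_characteristics} gives, for every $\mathbb P\in\mathcal P_{t,\widetilde\omega}(G)$, a field $\beta^{\mathbb P}\in\mathfrak B_{\mathrm{gc}}(G)$ with $\mathbb P\in\widetilde{\mathcal P}_{t,\widetilde\omega}(L^{\beta^{\mathbb P}})$.

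For the inclusion ``$\supseteq$'', fix $\beta\in\mathfrak B_{\mathrm{gc}}(G)$ and $\mathbb P\in\widetilde{\mathcal P}_{t,\widetilde\omega}(L^\beta)$. The two laws already share the prescribed-history condition. For $f\in C_b^\infty(D)$ and $n\ge1$, compare the two processes:
\[
\widetilde M_s^{f,n}
=\widetilde M_s^{f,n,\beta}
-\int_t^{[s]_{t,\tau_n}}\Bigl(G\bigl(X_r,J^2_{X_r}f\bigr)-L^\beta\bigl(r,X,J^2_{X_r}f\bigr)\Bigr)\mathbb I_{\{r<\tau_\infty\}}\,dr .
\]
By the compatibility inequality \eqref{eq:coefficient_generator_compatibility}, the integrand is nonnegative wherever $r<\tau_\infty$. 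The integral is therefore an adapted, nondecreasing, continuous process started at $0$ at time $t$. The process $\widetilde M^{f,n}$ is then a martingale minus an increasing process, hence a supermartingale, provided it is integrable.

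Integrability is the only point needing care. For $r<\tau_n$ the state $X_r$ lies in $\overline D_n$. There $G(\cdot,J^2f)$ is bounded above by joint upper semicontinuity on the compact set $\overline D_n\times\{\text{jets of }f\text{ on }\overline D_n\}$. It is bounded below because $G(x,J^2_xf)\ge L^\beta(r,\omega,J^2_xf)$, and $L^\beta$ is bounded there by the local boundedness of $\beta$ in Definition~\ref{def:virtual_coefficient_fields}. So on $[t,s]$ the integral is bounded by a constant times $(s-t)$, and integrability follows. The stopping by $\tau_n$ enters only through this use of $\overline D_n$. Finally, both processes are adapted to the same raw filtration $\widetilde{\mathbb F}$, so the supermartingale property holds with respect to $\widetilde{\mathbb F}$, as Definition~\ref{def:G_supermartingale_problem} requires.
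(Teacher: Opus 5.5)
Your proposal is correct and follows the paper's proof essentially verbatim. You get $\supseteq$ from the same identity with a nondecreasing compensating integral, and $\subseteq$ for live histories from Proposition~\ref{prop:recovery_aggregation_generator_compatible_characteristics}. For dead histories you note that both processes vanish on $[t,\infty)$ and use a generator-compatible field built from a Borel selector of $\mathfrak A_G$. Your extra integrability check is a harmless addition; the bound $|G(x,U)|\le L_K\|U\|$ from \ref{item:G1} gives it more directly than going through $L^\beta$.
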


\begin{proof}
Fix $(t,\widetilde\omega)$.  If
$\mathbb P\in\widetilde{\mathcal P}_{t,\widetilde\omega}(L^\beta)$
for $\beta\in\mathfrak B_{\mathrm{gc}}(G)$, then
\begin{equation}
\label{eq:linear_to_nonlinear_supermartingale_identity}
\widetilde M_s^{f,n}
=\widetilde M_s^{f,n,\beta}
-\int_t^{[s]_{t,\tau_n}}
 [G(X_r,J_{X_r}^2f)-L^\beta(r,X,J_{X_r}^2f)]\mathbb I_{\{r<\tau_\infty\}}\,dr.
\end{equation}
The integral is nondecreasing, so $\widetilde M^{f,n}$ is a $\mathbb P$-supermartingale.  
This proves the inclusion $\supseteq$ in \eqref{eq:virtual_model_classes_from_fields}.
Conversely, when $t<\tau_\infty(\widetilde\omega)$, Proposition~\ref{prop:recovery_aggregation_generator_compatible_characteristics} provides a field $\beta^{\mathbb P}$ such that $\mathbb P\in\widetilde{\mathcal P}_{t,\widetilde\omega}(L^{\beta^{\mathbb P}})$ for every $\mathbb P\in\mathcal P_{t,\widetilde\omega}(G)$.
When $t\ge\tau_\infty(\widetilde\omega)$, the prescribed history and
absorption give $\mathbb P=\delta_{\widetilde\omega}$.  Choose any
$\beta^0\in\mathfrak B_{\mathrm{gc}}(G)$, whose existence follows from
a Borel selector of $\mathfrak A_G$.  Each process
$\widetilde M^{f,n,\beta^0}$ in \eqref{eq:defM_f,nlinearmtg} is
identically zero on $[t,\infty)$, so
$\delta_{\widetilde\omega}\in\widetilde{\mathcal P}_{t,\widetilde\omega}(L^{\beta^0})$.
This proves the reverse inclusion.
\end{proof}

The following lemma shows that the Lyapunov condition excludes continuous explosion. 

\begin{lemma}
\label{lem:admissible_nonexplosion}
Let $G\in\mathfrak G_{\mathrm{Lyap}}$ and
$\mathbb P\in\mathcal P_{t,\widetilde\omega}(G)$ with
$t<\tau_\infty(\widetilde\omega)$. Then
$\mathbb P(\tau_{\mathrm{exp}}=\infty)=1$.
\end{lemma}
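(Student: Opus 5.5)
We plan to use the Lyapunov function $\phi$ from Assumption~\ref{assume:lyapunov} in the standard way: show that $e^{-C_\phi s}\phi(\overline X_{s})$, stopped at the exit times $\tau_n$, has expectation bounded by $\phi(\widetilde\omega(t))$. Then we derive $\mathbb P(\tau_n\le T,\ \tau_{\mathrm{exp}}<\infty\text{ reached continuously})\le e^{C_\phi T}\phi(\widetilde\omega(t))/m_n^\phi\to0$.

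\textbf{Step 1: linearize.} Corollary~\ref{cor:recovery_martingale_problem_equivalence} and Proposition~\ref{prop:recovery_aggregation_generator_compatible_characteristics} give a field $\beta^{\mathbb P}=(c,b,a)\in\mathfrak B_{\mathrm{gc}}(G)$ with $\mathbb P\in\widetilde{\mathcal P}_{t,\widetilde\omega}(L^{\beta^{\mathbb P}})$. For $s<\tau_\infty$, generator compatibility \eqref{eq:coefficient_generator_compatibility} yields
\[
\tfrac12\operatorname{tr}(a\nabla^2\phi)+b\cdot\nabla\phi+c\,\phi\le G(X_s,J^2_{X_s}\phi)\le C_\phi\phi(X_s).
\]

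\textbf{Step 2: Itô on the continuous coordinate.} We would rather not pass $\phi\notin C_b^\infty(D)$ through the martingale problem. Instead, fix $n$ with $\widetilde\omega([0,t])\subset D_n$ and apply Itô's formula to $\phi(\overline X_{s\wedge\tau_n})$ using the semimartingale decomposition in Proposition~\ref{prop:recovery_aggregation_generator_compatible_characteristics}\ref{prop:recovery_continuous_characteristics}. Here $\phi$ is $C^2$ on a neighbourhood of $\overline D_n$, the coefficients are bounded there, and $\overline X$ stays in $\overline D_n$ up to $\tau_n$. Since the killing compensator is $\int k\,dr$ with $k=-c\ge0$, we can either use $\overline X$ and simply drop the $c\phi$ term, or treat the killed process directly. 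The cleaner route is to ignore killing altogether: on $\{\tau_{\mathrm{kill}}\le s\}$ the path $\overline X$ is frozen, so $\overline X$ has drift $b\mathbb I_{\{r<\tau_{\mathrm{kill}}\}}$ and covariance $a\mathbb I_{\{r<\tau_{\mathrm{kill}}\}}$. The difficulty is that the bound in Step 1 includes $c\phi\le0$, so we only obtain $\tfrac12\operatorname{tr}(a\nabla^2\phi)+b\cdot\nabla\phi\le C_\phi\phi+k\phi$, and $k$ may be large.

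The natural fix is to work with the killed quantity $Z_s:=e^{-C_\phi(s-t)}\phi(X_{s\wedge\tau_n})\mathbb I_{\{s\wedge\tau_n<\tau_\infty\}}$. By Itô's formula for $\phi(\overline X)(1-N)$, using $dN$ with compensator $k\,dr$, its drift is $e^{-C_\phi r}\bigl(L^\beta\phi-C_\phi\phi\bigr)\le0$. So $Z$ is a nonnegative local supermartingale, hence a supermartingale, and $\mathbb E[Z_{T}]\le\phi(\widetilde\omega(t))$.

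\textbf{Step 3: conclude.} On $\{\tau_{\mathrm{exp}}\le T\}$ the path is live and continuous up to $\tau_{\mathrm{exp}}$, so $\tau_n<\tau_{\mathrm{exp}}\le T$ and $\tau_n<\tau_{\mathrm{kill}}=\infty$. Hence $Z_T=e^{-C_\phi(\tau_n-t)}\phi(X_{\tau_n})\ge e^{-C_\phi T}m_n^\phi$ there, using $X_{\tau_n}\in\partial D_n\subset D\setminus D_n$ (taking $D_n$ open). This gives $\mathbb P(\tau_{\mathrm{exp}}\le T)\le e^{C_\phi T}\phi(\widetilde\omega(t))/m_n^\phi\to0$, and letting $T\to\infty$ proves the claim.

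The main obstacle is Step 2: justifying Itô's formula for the product $\phi(\overline X)(1-N)$ with the jump term to $\triangle$. This requires that $N-A^{\mathbb P}$ is a martingale, by \ref{prop:recovery_killing_compensator}, and that the continuous local martingale $M^{\mathbb P,n}$ and $N$ interact only through $(1-N_{r-})\,d\phi(\overline X_r)-\phi(\overline X_{r})\,dN_r$, with no covariation because $M$ is continuous. Alternatively, one can approximate $\phi$ on $\overline D_{n}$ by some $f\in C_b^\infty(D)$ agreeing with $\phi$ near $\overline D_n$, and apply the $L^\beta$-martingale property of $\widetilde M^{f,n,\beta}$ directly together with an integration by parts against $e^{-C_\phi s}$. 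This avoids the decomposition and is what I would try first.
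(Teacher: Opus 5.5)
Your proposal is correct and is essentially the paper's argument: the paper linearizes via Proposition~\ref{prop:recovery_aggregation_generator_compatible_characteristics}, transfers the Lyapunov bound to $L^{\beta^{\mathbb P}}$ by generator compatibility, and then invokes the Lyapunov tail estimate in Proposition~E.4 of the Supplementary Material. The proof of that proposition is exactly your Steps~2--3, using the supermartingale $e^{-C_\phi([s]_{t,\tau_n}-t)}\phi(X_{[s]_{t,\tau_n}})\mathbb I_{\{[s]_{t,\tau_n}<\tau_\infty\}}$ and the bound on $\{\tau_{\mathrm{exp}}\le T\}$. One small correction to the alternative route you say you would try first: $\phi$ is only $C^2$, so you cannot ask for $f\in C_b^\infty(D)$ agreeing with $\phi$ near $\overline D_n$; instead take $\phi_j\in C_c^\infty(D)$ with $\phi_j\to\phi$ in $C^2(\overline D_n)$ and pass to the limit in the stopped martingales, as the paper does, whereas your It\^o route via the recovered decomposition of $\overline X$ and the killing compensator needs no such approximation.
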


\begin{proof}
Choose $\beta^{\mathbb P}$ as in
Proposition~\ref{prop:recovery_aggregation_generator_compatible_characteristics}.
Then since $\beta^{\mathbb P}\in\mathfrak B_{\mathrm{gc}}(G)$, we have
\begin{align}
    L^{\beta^{\mathbb P}}(r,X,J_{X_r}^2\phi)\mathbb I_{\{\tau_\infty>r\}}\le C_\phi\phi(X_r)\mathbb I_{\{\tau_\infty>r\}}.
\end{align}
The Lyapunov estimate in Proposition~
E.4 of the Supplementary Material therefore excludes finite continuous explosion.
\end{proof}

\paragraph{Time homogeneity and strong stability}

We now establish the time homogeneity and strong stability of $\mathcal P(G)$.

\begin{proposition}
\label{prop:propertiesP_t,x_revised}
For every $G\in\mathfrak G_{\mathrm{Lyap}}$, the family $\mathcal P(G)$ is a time-homogeneous strongly stable
uncertainty structure.
\end{proposition}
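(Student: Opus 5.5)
The proof checks four things in turn: the uncertainty-structure axioms with time homogeneity, strong stability under restarted conditioning, strong stability under restarted concatenation, and topological stability. The first three reduce to statements about the supermartingale processes $\widetilde M^{f,n}$ of Definition~\ref{def:G_supermartingale_problem}. Topological stability is where the Lyapunov condition enters.

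\emph{Axioms and time homogeneity.} The fiber condition \eqref{def:rdvus_fibers} is exactly the prescribed-history condition. The identity $\mathcal P_{t,\triangle}=\{\delta_{\mathbf\triangle}\}$ follows from absorption, as in the proof of Corollary~\ref{cor:recovery_martingale_problem_equivalence}. For time homogeneity, note that $\theta_t$ shifts the path by $t$, freezes it at $x$ on $[0,t]$, and maps $\tau_n$ and $\tau_\infty$ to their shifted versions. Hence for $\mathbb P\in\mathcal P_{0,x}(G)$ the process $\widetilde M^{f,n}$ of $\mathbb P\circ\theta_t^{-1}$ started at $t$ is the time shift of the process started at $0$, so $\mathcal P_{0,x}(G)\circ\theta_t^{-1}\subseteq\mathcal P_{t,x}(G)$. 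For the converse, apply the left inverse $\omega\mapsto\omega(\cdot+t)$ on the support of $\mathcal P_{t,x}(G)$; both supermartingale properties transfer because $G$ has no explicit time dependence. Convexity of each fiber is immediate, since the supermartingale inequalities are linear in $\mathbb P$.

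\emph{Strong dynamic stability.} For conditioning at a stopping time $t\le\widetilde\tau\le T$, I would follow the standard argument (cf.\ Stroock--Varadhan, El Karoui--Tan). For fixed $f$, $n$, rational times $s<u$ and a countable generating family of bounded $\widetilde{\mathcal F}_s$-measurable $h$, the supermartingale inequality $\mathbb E[(\widetilde M_u-\widetilde M_s)h]\le 0$ passes to $\overline{\mathbb P}^{\widetilde\tau,\omega}$ for $\mathbb P$-a.e.\ $\omega$ by optional sampling. Restarting through $\mathsf R_{\widetilde\tau,\omega}$ replaces the history by a constant one. This is consistent with $[s]_{t,\tau_n}$ because, from time $\widetilde\tau$ onward, $\tau_n$ for the restarted path coincides with $\tau_n$ for the original path whenever $X_{\widetilde\tau}\in D_n$; if $X_{\widetilde\tau}\notin D_n$, the restarted process is frozen and the inequality is trivial. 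A countable family of $f$ suffices because $C_b^\infty(D)$ is separable for local $C^2$ convergence, and \ref{item:G1} gives dominated convergence of the integrands on $D_n$. For concatenation, I would check the supermartingale inequality for $\mathbb P\otimes_{\widetilde\tau}\widetilde\nu$ by splitting each increment at $\widetilde\tau$. Before $\widetilde\tau$ it is inherited from $\mathbb P$. After $\widetilde\tau$ it holds conditionally under each $\widetilde\nu(\omega)$, whose process restarted at $\widetilde\tau$ with constant history agrees with the concatenated one. Measurability of the kernel then gives the tower property.

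\emph{Topological stability.} This is the main obstacle: tightness of $\mathcal P_K$ on $\Omega^{\mathrm{cad}}$ and closedness of the graph. For tightness I would use Corollary~\ref{cor:recovery_martingale_problem_equivalence} and Lemma~\ref{lem:support_correspondence_geometry}. The coefficients are bounded by $L_{\overline D_n}$ before $\tau_n$, which gives Aldous-type modulus bounds for $\overline X$ stopped at $\tau_n$. The Lyapunov estimate (Supplementary Proposition~E.4) yields $\mathbb P(\tau_n\le T,\ \tau_n<\tau_{\mathrm{kill}})\le e^{C_\phi T}\sup_K\phi/m_n^\phi$ uniformly over $\mathcal P_K$. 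The killing is a single jump to $\triangle$ with bounded intensity before $\tau_n$, so $J_1$-tightness follows.

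For graph closedness, take $(t_j,x_j,\mathbb P_j)\to(t,x,\mathbb P)$. I would pass to the limit in $\mathbb E^{\mathbb P_j}[(\widetilde M_u-\widetilde M_s)h]\le0$ with $h$ continuous. The functionals involved are discontinuous at paths where $\tau_n$ jumps or killing occurs at $s$ or $u$. I would handle this in three ways: replace $D_n$ by generic intermediate levels $D_n'$ whose exit times are $\mathbb P$-a.s.\ continuous; use $\mathbb P(\tau_{\mathrm{kill}}=s)=0$ from Proposition~\ref{prop:recovery_aggregation_generator_compatible_characteristics}; and use the upper semicontinuity of $G$ in \ref{item:G1}, which gives the correct-direction inequality $\limsup\int G\le\int G$ via Fatou under the uniform local bounds.

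Finally, $\mathbb P(\widetilde\Omega)=1$ is preserved in the limit, because limits of $J_1$-convergent paths that are continuous until a single jump to $\triangle$, with $\triangle$ absorbing, retain this form. Compactness of $\mathcal P_K$ then follows from tightness together with closedness.
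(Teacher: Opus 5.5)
Your architecture is the same as the paper's: convexity and time homogeneity, conditioning and pasting via localized optional sampling, tightness from the Lyapunov tail bound, and a closed graph by weak passage with generic levels and upper semicontinuity of $G$. Two steps, however, do not work as written.

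\emph{Killing under the weak limit.} In the closed-graph argument you take $\mathbb P(\tau_{\mathrm{kill}}=s)=0$ for the weak limit $\mathbb P$ from Proposition~\ref{prop:recovery_aggregation_generator_compatible_characteristics}. That proposition applies only to laws already known to belong to $\mathcal P_{t,x}(G)$, which is what you are proving, so this is circular. The property is also not automatic, since killing times with continuous laws under each $\mathbb P_j$ can concentrate at a fixed time in the limit. At intermediate times you could sidestep this by testing only at the co-countable set of non-atoms of $\tau_{\mathrm{kill}}$ under $\mathbb P$ and using right continuity. You must, however, also verify the prescribed history of the limit, which your proposal omits. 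For $t>0$ this means excluding a killing jump at the fixed time $t$, and no genericity argument avoids that.

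The missing ingredient is that the uniform local bound on the killing intensities $\kappa^j=k^{\beta^j}$ of the approximating laws survives the limit. The paper obtains this in Lemma~E.7 of the Supplementary Material by passing the compensators of the marked killing processes $\int_{(0,\cdot]}\chi(\overline X_r)\,dN_r$, $\chi\in C_c(D)$, to the limit. This yields absolutely continuous compensators, hence no killing at deterministic times and none at predictable exit times. It is exactly why the paper's proof records the three bulleted properties before invoking Proposition~E.10. For the initial time alone, a short-time killing bound uniform in $j$, combined with the portmanteau theorem on the open set of paths killed from a neighbourhood of $x$ near time $t$, would also suffice.

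Relatedly, your generic levels should satisfy $D_n\Subset D_n'\Subset D$. Then the $\tau_n$-localized process is recovered under the limit law by optional stopping at $\tau_n\le\rho_{D_n'}$. Approximating $D_n$ from inside, as the paper does, additionally requires ruling out killing at the predictable exit time from $D_n$. That again is not available for $\mathbb P$ from Proposition~\ref{prop:recovery_aggregation_generator_compatible_characteristics}.

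\emph{Localization in the conditioning step.} You claim that, after $\widetilde\tau$, the exit time $\tau_n$ of the restarted path coincides with $\tau_n$ of the original path whenever $X_{\widetilde\tau}\in D_n$. This is false. The original path may have left $D_n$ before $\widetilde\tau$ and returned, in which case the original $\widetilde M^{f,n}$ is frozen from $\tau_n<\widetilde\tau$ on and says nothing about the restarted $n$-localized increments.

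The repair is the paper's localization. On $\{\widetilde\tau<\tau_m\}\cap\{X_{\widetilde\tau}\in D_n\}$ with $m\ge n$, stop $\widetilde M^{f,m}$ at $\sigma_n:=\inf\{s\ge\widetilde\tau:X_s\notin D_n\}\le\tau_m$. Its increments after $\widetilde\tau$ are exactly the restarted $n$-localized ones. Then let $m\to\infty$ using $\{\widetilde\tau<\tau_m\}\uparrow\{\widetilde\tau<\tau_\infty\}$, and collect the countably many exceptional null sets over $m$, $n$, the test functions and the cylinder multipliers.
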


\begin{proof}
We first prove that $\mathcal P(G)$ is topologically stable.
Assume that $G\in\mathfrak G_{\mathrm{Lyap}}$, and fix $\phi$ and $C_\phi$ as in Assumption~\ref{assume:lyapunov}.
Each fiber $\mathcal P_{t,x}(G)$ is convex because the prescribed-history
condition and the inequalities
$\mathbb E^{\mathbb P}[F(\widetilde M_s^{f,n}-\widetilde M_r^{f,n})]\le0$
are linear in $\mathbb P$, for every $f\in C_b^\infty(D)$, $n\ge1$,
$t\le r\le s$, and bounded nonnegative
$\widetilde{\mathcal F}_r$-measurable $F$, with
$\widetilde M^{f,n}$ defined in \eqref{eq:defM_f,nonline}.
Now we prove that the correspondence $(t,x)\mapsto\mathcal P_{t,x}(G)$ has a closed graph.
Let
\[
    (t_j,x_j)
    \longrightarrow
    (t,x)
    \quad\text{in }[0,\infty)\times D,
    \qquad
    \mathbb P_j\Rightarrow\mathbb P,
    \qquad
    \mathbb P_j\in
    \mathcal P_{t_j,x_j}(G)
    \quad\text{for every }j\geq1.
\]
By Proposition~\ref{prop:recovery_aggregation_generator_compatible_characteristics},
for every \(j\ge1\), choose
\(\beta^j=(c^j,b^j,a^j)\in\mathfrak B_{\mathrm{gc}}(G)\) such that
\(\mathbb P_j\in
\widetilde{\mathcal P}_{t_j,x_j}(L^{\beta^j})\), and set
\[
    \kappa_r^j
    :=
    k^{\beta^j}(r,X)\mathbb I_{\{r>t_j\}},
    \qquad r\ge0.
\]
To prove that $\widetilde M^{f,n}$ is a $\mathbb P$-supermartingale
for every $f\in C_b^\infty(D)$ and $n\ge1$, fix such $f$ and $n$
and note the following.
\begin{itemize}
    \item
    Generator compatibility and Assumption~\ref{assume:lyapunov} give
    \[
        L^{\beta^j}\bigl(r,X,J_{X_r}^2\phi\bigr)
        \le G\bigl(X_r,J_{X_r}^2\phi\bigr)
        \le C_\phi\phi(X_r)
    \]
    before the lifetime $\tau_\infty$.

    \item
    By Proposition~\ref{prop:recovery_aggregation_generator_compatible_characteristics}%
    \ref{prop:recovery_killing_compensator}, the process
    \[
        N_s-\int_0^s\kappa_r^j\mathbb I_{\{r<\tau_\infty\}}\,dr,
        \qquad s\ge0,
    \]
    is a $\mathbb P_j$-local martingale.

    \item
    The local boundedness of $\mathfrak A_G$ and generator compatibility
    imply that, for every $H>0$ and compact $K\Subset D$,
    \[
        \sup_{j\ge1}
        \left\|
            \kappa^j\mathbb I_{\{\cdot<\tau_\infty\}}\mathbb I_{\{X\in K\}}
        \right\|_{L^\infty(
            [0,H]\times\widetilde\Omega,
            dr\otimes d\mathbb P_j
        )}
        <\infty.
    \]
\end{itemize}
Proposition~
E.10 in the Supplementary Material, applied with the preceding $\kappa^j$ and
\[
    q_j(s,x)=q(s,x):=G(x,J_x^2f),
    \qquad (s,x)\in[0,\infty)\times D,
\]
therefore gives both
$\mathbb P(X_s=x\text{ for all }0\le s\le t)=1$
and the $\mathbb P$-supermartingale property of $\widetilde M^{f,n}$.
Thus $\mathbb P\in\mathcal P_{t,x}(G)$, proving that the graph of
$(t,x)\mapsto\mathcal P_{t,x}(G)$ is closed on $[0,\infty)\times D$.

Now let $K\subset[0,\infty)\times D$ be compact, and write
\[
 \mathcal P_K(G):=\bigcup_{(t,x)\in K}\mathcal P_{t,x}(G).
\]
Given a sequence $(\mathbb P_j)_{j\ge1}$ in $\mathcal P_K(G)$, choose $(t_j,x_j)\in K$ with $\mathbb P_j\in\mathcal P_{t_j,x_j}(G)$.
Then the precompactness of initial condition $\{(t_j,x_j)\}_{j\ge1}$ and Lyapunov bound yield the tightness of $(\mathbb P_j)_{j\ge1}$ (see Proposition~
E.5 in the Supplementary Material). 
Prokhorov's theorem and the closed-graph property proved above yield a weakly convergent subsequence whose limit belongs to $\mathcal P_{t,x}(G)\subseteq\mathcal P_K(G)$.
Hence $\mathcal P_K(G)$ is weakly compact. 

Time homogeneity follows directly from the time independence of $G$.
For strong dynamic stability, we give the localization details in the
classical conditioning and pasting arguments; compare
\cite[Lemma~12.2.1 and Theorems~6.1.2--6.1.3 and~6.2.1]{stroock1997multidimensional}.
Let $\mathbb P\in\mathcal P_{t,x}(G)$ and let $\tau$ be a finite raw
stopping time with $t\le\tau\le T$. Set
\[
 \sigma_n:=\inf\{s\ge\tau:X_s\notin D_n\},
 \qquad B_m:=\{\tau<\tau_m\}.
\]
On $B_m\cap\{X_\tau\in D_n\}$, with $m\ge n$, one has
$\sigma_n\le\tau_m$.
Thus the increments after $\tau$ of $\widetilde M^{f,m}$ stopped at
$\sigma_n$ coincide with the $n$-localized increments after restarting.
Choose regular conditional laws given $\widetilde{\mathcal F}_\tau$
that preserve the stopped history outside one null set, as permitted by
Galmarino's test on the canonical path space.
Apply optional sampling at rational elapsed times $\tau+u$ and
$\tau+v$, with $0\le u\le v$, to these stopped increments and
condition on $\widetilde{\mathcal F}_\tau$.
Use $f\in\mathscr F_{\mathrm{sm}}$ from Proposition~
D.2 in the Supplementary Material and multipliers from a fixed countable
algebra of coordinate cylinders at rational elapsed times.
Together with the countable indices $m,n$, these tests give a single
exceptional null set. Since $B_m\uparrow\{\tau<\tau_\infty\}$,
the resulting inequalities hold under the restarted conditional law
whenever $X_\tau$ is alive.
If $X_\tau\notin D_n$, the corresponding restarted localized process
is constant; if $X_\tau=\triangle$, the restarted conditional law is
$\delta_{\mathbf\triangle}$.
The monotone class theorem, right continuity of the stopped processes,
and the local $C^2$ approximation in the same proposition extend these
inequalities to all bounded nonnegative multipliers, all times, and all
$f\in C_b^\infty(D)$.
This proves strong stability under restarted conditioning.

For restarted concatenation, let $\nu$ be a restart-compatible kernel
with $\nu(\omega)\in\mathcal P_{\tau(\omega),X_\tau(\omega)}(G)$
for $\mathbb P$-almost every $\omega$.
The concatenated law agrees with $\mathbb P$ up to $\tau$.
For each $n$, the original $n$-localized process is already constant
after $\tau$ on $\{\tau_n\le\tau\}$.
On $\{\tau<\tau_n\}$, its increments after $\tau$ coincide with the
$n$-localized increments under the continuation law $\nu(\omega)$.
Splitting each increment at $\tau$, optional sampling before $\tau$
and the supermartingale inequalities under $\nu(\omega)$ after $\tau$,
followed by the tower property, give the required inequalities under
$\mathbb P\otimes_\tau\nu$.
All these stopped processes are bounded on finite horizons, so the
conditional expectations and integrations are justified.
This proves strong stability under restarted concatenation.
\end{proof}

\paragraph{Fiberwise nonemptiness}

To establish the fiberwise nonemptiness of $\mathcal P(G)$, we use the viscosity solution notions associated with the nonlinear parabolic equation
\begin{equation}
\label{eq:mainHJB}
    \partial_t u(t,x)
    =
    G\bigl(x,J_x^2u(t,\cdot)\bigr),
    \qquad (t,x)\in(0,\infty)\times D,
\end{equation}
where $G\in\mathfrak G$.
For later use, we define subsolutions for the more general equation
\begin{equation}
\label{eq:usc_subsolution_with_source}
    \partial_tu-G\bigl(x,J_x^2u(t,\cdot)\bigr)-q(t,x)=0,
    \qquad (t,x)\in(0,\infty)\times D,
\end{equation}
where $q\in\USC_b([0,\infty)\times D)$.
Since $G$, $q$, and the valuation orbit may be discontinuous, we use
semicontinuous envelopes. Regarding $G$ as a function on
$D\times\mathfrak J$, we write $G^*$ and $G_*$ for its upper and lower
semicontinuous envelopes, respectively. In the present setting, $G^*=G$.

We write $C_b^\infty([0,\infty)\times D)$ for the space of bounded
functions that are smooth on $(0,\infty)\times D$ and whose partial
derivatives of all orders admit bounded continuous extensions to
$[0,\infty)\times D$.
Let $\mathcal O\subset(0,\infty)\times D$ be open.
A function $u\in\USC_b(\mathcal O)$ is a viscosity subsolution of
\eqref{eq:usc_subsolution_with_source} on $\mathcal O$ if, whenever
$(t_0,x_0)\in\mathcal O$ and
$\psi\in C_b^\infty([0,\infty)\times D)$ satisfy
$u(t_0,x_0)=\psi(t_0,x_0)$ and $u-\psi$ attains a global maximum at
$(t_0,x_0)$, one has
\begin{equation}
\label{eq:viscosity_subsolution_inequality}
    \partial_t\psi(t_0,x_0)
    -G\bigl(x_0,J_{x_0}^2\psi(t_0,\cdot)\bigr)
    -q(t_0,x_0)
    \le0.
\end{equation}
Taking $q=0$ gives the notion of a viscosity subsolution of
\eqref{eq:mainHJB}.

A function $u\in\LSC_b(\mathcal O)$ is a viscosity supersolution of
\eqref{eq:mainHJB} on $\mathcal O$ if, whenever
$(t_0,x_0)\in\mathcal O$ and
$\psi\in C_b^\infty([0,\infty)\times D)$ satisfy
$u(t_0,x_0)=\psi(t_0,x_0)$ and $u-\psi$ attains a global minimum at
$(t_0,x_0)$, one has
\begin{equation}
\label{eq:viscosity_supersolution_inequality}
    \partial_t\psi(t_0,x_0)
    -G_*\bigl(x_0,J_{x_0}^2\psi(t_0,\cdot)\bigr)
    \ge0.
\end{equation}
A bounded function $u:\mathcal O\to\mathbb R$ is a viscosity solution
of \eqref{eq:mainHJB} in the discontinuous convention if $u^*$ is a
viscosity subsolution and $u_*$ is a viscosity supersolution of that equation.
We refer to \cite{crandall1992user} for the general viscosity-solution
framework.

For bounded functions, these definitions are equivalent to the usual
definitions using local $C^{1,2}$ test functions.
Indeed, one may make a local contact strict, approximate the test function
smoothly together with its time derivative and spatial derivatives up to
order two, and globalize the resulting tests by a cutoff to a constant.
Passing to the limit at the resulting contact points uses the upper
semicontinuity of $G$ and $q$ for subsolutions and the lower
semicontinuity of $G_*$ for supersolutions.

We also fix the initial-trace conventions, which will be used independently
of the interior inequalities.  For a bounded initial datum
$f:D\to\mathbb R$, a bounded function
$u:(0,\infty)\times D\to\mathbb R$ satisfies the relaxed subsolution
initial condition associated with $f$ if
\begin{equation}
\label{eq:relaxed_subsolution_initial_condition}
    \limsup_{\substack{(s,y)\to(0,x)\\s>0}}u^*(s,y)
    \le f^*(x),\qquad x\in D,
\end{equation}
and the relaxed supersolution initial condition associated with $f$ if
\begin{equation}
\label{eq:relaxed_supersolution_initial_condition}
    \liminf_{\substack{(s,y)\to(0,x)\\s>0}}u_*(s,y)
    \ge f_*(x),\qquad x\in D.
\end{equation}
It has initial datum $f$ in the relaxed sense if it satisfies both
conditions.

For $f\in\USC_b(D)$, let
\begin{equation}
\label{eq:upper_perron_class}
\operatorname{Sub}_G^+(f)
:=
\left\{
u\in\USC_b([0,\infty)\times D):
\begin{aligned}
 &u\text{ is a viscosity subsolution}\\[-0.1em]
 &\text{of \eqref{eq:mainHJB} on }(0,\infty)\times D,\\[-0.1em]
 &u(0,x)\le f(x),\qquad x\in D
\end{aligned}
\right\}.
\end{equation}
Upper semicontinuity on the closed time--space domain ensures that the
boundary inequality in \eqref{eq:upper_perron_class} implies
\eqref{eq:relaxed_subsolution_initial_condition}.

The following stopped subsolution realization theorem yields the terminal realization and fiberwise nonemptiness results below.
Its proof is deferred to Appendix~\ref{sec:stochastic_realization_perron}.

\begin{theorem}
\label{prop:one_step_usc_subsolution_realization}
Let $G\in\mathfrak G_{\mathrm{Lyap}}$ and
$u,q\in\USC_b([0,\infty)\times D)$, where $u$ is a viscosity subsolution of
\eqref{eq:usc_subsolution_with_source} on $(0,\infty)\times D$.
Fix \(T>0\), \(0<h\le T\), a bounded open set \(O\Subset D\), and
\(x\in O\). Then there exists
\(\mathbb P^{T,h,O,x}\in\mathcal P_x(G)\) such that
\begin{equation}
\label{eq:one_step_usc_realization}
    u(T,x)
    \le
    \mathbb E^{\mathbb P^{T,h,O,x}}\!\Bigg[
        u\bigl(T-(h\wedge\rho_O^{\,0}),X_{h\wedge\rho_O^{\,0}}\bigr)
        \mathbb I_{\{h\wedge\rho_O^{\,0}<\tau_\infty\}}
        +\int_0^{h\wedge\rho_O^{\,0}}q(T-r,X_r)\mathbb I_{\{r<\tau_\infty\}}\,dr
    \Bigg].
\end{equation}
\end{theorem}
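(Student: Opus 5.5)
The approach is a stochastic Perron/verification argument run backward in time: build a law in $\mathcal P_x(G)$ along which $u$ evaluated at reversed time is, up to the source term, a submartingale until $h\wedge\rho_O^{\,0}$. Since $u$ is only upper semicontinuous and $G$ only upper semicontinuous, the controls cannot come from a classical feedback. We therefore approximate $u$ from above by regular subsolutions, build near-optimal controlled laws for the approximants, and pass to the limit using the compactness of $\mathcal P(G)$ from Proposition~\ref{prop:propertiesP_t,x_revised}. Throughout, set $v(r,y):=u(T-r,y)$ on $[0,h]\times\overline O$. The claim is then that $v(0,x)\le\mathbb E[v(\sigma,X_\sigma)\mathbb I_{\{\sigma<\tau_\infty\}}+\int_0^\sigma q(T-r,X_r)\mathbb I\,dr]$ with $\sigma:=h\wedge\rho_O^{\,0}$.

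\textbf{Step 1 (regularization).} Replace $u$ by sup-convolutions $u^\varepsilon$ in $(t,x)$ on a neighbourhood of $[T-h,T]\times\overline O$, which keeps us away from $t=0$ and inside a compact subset of $D$. Each $u^\varepsilon$ is semiconvex and Lipschitz, satisfies $u^\varepsilon\searrow u$, and is a viscosity subsolution of the equation with $G$ and $q$ replaced by slightly perturbed $G^\varepsilon,q^\varepsilon$ on a shrunken domain. This uses the local jet-Lipschitz bound in \ref{item:G1} and the upper semicontinuity of $G$ and $q$, with perturbation errors $\to0$ locally uniformly from above. By Alexandrov's theorem and Jensen's lemma, $u^\varepsilon$ is twice differentiable a.e. and satisfies $\partial_tu^\varepsilon\le G^\varepsilon(x,J^2_xu^\varepsilon)+q^\varepsilon$ pointwise a.e. A further mollification $u^{\varepsilon,\delta}$ produces a smooth function satisfying this inequality up to an error that vanishes as $\delta\to0$. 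This uses convexity of $G(x,\cdot)$ from sublinearity, which makes the mollified inequality hold with the averaged right-hand side. The upper semicontinuity in $x$ of the averaged right-hand side is controlled by local boundedness of $\mathfrak A_G$.

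\textbf{Step 2 (smooth case via measurable selection).} For a smooth $w=u^{\varepsilon,\delta}$ with $\partial_tw\le G(x,J^2_xw)+q+\eta$ on the relevant compact set, apply Lemma~\ref{lem:support_correspondence_geometry}. It gives a Borel selector $\beta(r,y)\in\mathfrak A_G(y)$ attaining the maximum $G(y,J^2_yw(T-r,\cdot))=\ell_{\beta}(J^2_yw)$. One obtains such a selector from a measurable-maximum theorem applied to the closed-graph compact-valued correspondence. The selector is then extended arbitrarily in $\mathfrak A_G$ after $\sigma$. We then need a law $\mathbb P\in\widetilde{\mathcal P}_x(L^\beta)$, and existence for a merely Borel coefficient is the delicate point. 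I would instead take a solution of the \emph{convexified} problem: use the closed graph of $\mathfrak A_G$ and Krylov-type existence. That is, mollify $\beta$ in space, solve the resulting martingale problem with killing, and pass to a weak limit in $\mathcal P_x(G)$ using closed graph and tightness from the Lyapunov condition. The limit still satisfies the needed inequality along $w$ because $w$ is fixed and smooth, and because $y\mapsto G(y,J_y^2w)$ is upper semicontinuous, so the weak limit preserves the ``$\ge$'' direction. Itô's formula for $w(T-\cdot,X)$ with killing, from the martingale problem and Proposition~\ref{prop:recovery_aggregation_generator_compatible_characteristics}, then gives \eqref{eq:one_step_usc_realization} for $w$ up to an error $\eta h$.

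\textbf{Step 3 (limit).} Take $\delta\to0$ and then $\varepsilon\to0$ along a sequence of laws $\mathbb P^{\varepsilon,\delta}\in\mathcal P_x(G)$. These are tight and have limits in $\mathcal P_x(G)$ by Proposition~\ref{prop:propertiesP_t,x_revised}. The left side converges, since $u^{\varepsilon,\delta}(T,x)\ge u(T,x)$. On the right side the functional is upper semicontinuous in the law: $u$ and $q$ are USC and bounded, and $u^{\varepsilon}\searrow u$ lets us use a diagonal ``$\limsup\le$'' argument. What remains is the discontinuity of $\omega\mapsto\sigma(\omega)$ and of the killing indicator. The killing indicator is handled by Proposition~\ref{prop:recovery_aggregation_generator_compatible_characteristics}\ref{prop:recovery_killing_compensator}, since killing does not occur at fixed times. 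For the exit time, I would use the standard trick of taking $O$-exits from slightly larger or smaller sets, or adding an arbitrarily small independent jitter, together with monotonicity in $O$.

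I expect this exit-time discontinuity to be the main obstacle. Upper semicontinuity of $\omega\mapsto u(T-\sigma,X_\sigma)$ can fail at paths that touch $\partial O$ tangentially. If the jitter approach breaks down, the fallback is to perform the whole construction on the path-stopped processes, using the closed-graph result of Proposition~E.10 of the Supplementary Material with the stopped $q$.
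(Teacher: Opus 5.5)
\textbf{The gap: Step~1 does not deliver the hypothesis of Step~2.} Mollifying the semiconvex sup-convolution does not yield a smooth $w$ with $\partial_tw\le G(x,J_x^2w)+q+\eta$ for a \emph{uniform} $\eta\to0$, and convexity works against you. Linearity of convolution gives only $\partial_tu^{\varepsilon,\delta}\le\rho_\delta*\bigl(G(\cdot,J^2u^\varepsilon)+q\bigr)$. To return to $G(x,J_x^2u^{\varepsilon,\delta})$ you would need $\rho_\delta*G(\cdot,J^2u^\varepsilon)\lesssim G(\cdot,\rho_\delta*J^2u^\varepsilon)$, which is the reverse of Jensen's inequality for the convex map $G(x,\cdot)$.

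Here is a concrete failure. Take $d=1$ and $G(x,z,p,H)=|p|$, which lies in $\mathfrak G_{\mathrm{Lyap}}$ (with $\phi=1+x^2$) and has $\mathfrak A_G(x)=\{(0,b,0):|b|\le1\}$. Take the bounded viscosity solution $u=\arctan(t+|x|)$; its convex kink at $x=0$ admits no test function from above. Sup-convolution over symmetric sets and symmetric mollification preserve evenness in $x$. Hence $\partial_xw(t,0)=0$ while $\partial_tw(t,0)\approx(1+t^2)^{-1}$, so the residual is of order one on a strip of width $O(\delta)$. Your argmax drift is $\operatorname{sgn}(y)$ off the origin, and its spatial mollification is smooth and vanishes at $0$. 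So the law you build from $x=0$ is the constant path, and so is every limit in Step~3. For this law the right-hand side of \eqref{eq:one_step_usc_realization} with $q=0$ equals $\arctan(T-h)<\arctan T=u(T,0)$.

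What Step~1 honestly provides (Lebesgue points, plus \ref{item:G2} to discard the positive singular part of the Hessian) is a residual $e\ge0$ that is small only in $L^{d+1}$. Turning this into a small \emph{expected} integrated residual along the realizing laws requires an occupation estimate. Such an estimate is false for the degenerate coefficients in $\mathfrak A_G$, because the process can sit exactly where $e$ is large.

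\textbf{The missing idea is the nondegeneracy device} of Proposition~\ref{prop:usc_smooth_elliptic_approximation} and Lemma~\ref{lem:usc_localized_krylov_bound}. One adds $\frac\lambda2\operatorname{tr}(H)$ to the (outer) generator, so every admissible covariance is at least $\lambda I_d$. This is paid for in the source by $\frac{d\lambda}{2\varepsilon}$, using $D_x^2u^\varepsilon\ge-\varepsilon^{-1}I_d$. With $\lambda=\varepsilon^2$ the extra source vanishes. The mollification parameter is chosen only after $\lambda$ and the coefficient bound are fixed, so that Krylov's bound $\mathbb E\int_0^{\sigma}e\,ds\le N(\lambda)\|e\|_{L^{d+1}}$ tends to zero. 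The realizing laws then lie in $\mathcal P_x$ of an enlarged specification, and only their weak limits lie in $\mathcal P_x(G)$.

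Once this is in place your selector route can work. Note, however, that what survives averaging the selector is the pointwise bound $\ell_\beta(J_y^2w)\ge\partial_tw-q-e$, via smoothness of $w$ and upper semicontinuity of $q$. Upper semicontinuity of $y\mapsto G(y,J_y^2w)$ plays no role there. The paper instead majorizes $G$ by jointly continuous $G_n\ge G$ with $\limsup_{n\to\infty}^{*}G_n\le G$ (Proposition~\ref{prop:continuous_outer_generator_approximation}) and builds continuous near-maximizing fields by a partition of unity.

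\textbf{The exit-time step also needs repair.} There is no monotonicity in $O$ to exploit. Since the approximating laws of Lemma~\ref{lem:usc_diagonal_stochastic_approximation} do not depend on $O$, the paper proceeds as follows. It fixes a weak limit $\mathbb P$ and chooses a $(\mathbb P,h)$-regular exhaustion $O_k\uparrow O$ from almost every level set of an exhaustion function (Proposition~\ref{lem:regular_inner_exhaustion}). It passes to the limit on each $O_k$ with stopping times $(h-t_n)\wedge\rho_{O_k}^{\,0}$, $t_n\searrow0$; this also covers $h=T$, where your sup-convolution cannot stay away from $t=0$. It then lets $k\to\infty$ by Fatou, using that the predictable live exit time almost surely avoids the totally inaccessible killing time.
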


\begin{remark}
Theorem~\ref{prop:one_step_usc_subsolution_realization} is the technical core of this paper.
It converts a viscosity subsolution inequality into a stopped expectation inequality under an admissible law, allowing upper semicontinuous evolution and degenerate diffusion coefficients, without assuming a comparison principle or uniqueness for the martingale problem.
This generality requires a careful approximation construction.
The proof regularizes the local specification and the subsolution, introduces nondegenerate approximations, and constructs laws satisfying the corresponding approximate realization inequalities.
It then establishes that a weak limit belongs to $\mathcal P_x(G)$ and satisfies the desired stopped expectation inequality.
Appendix~\ref{sec:stochastic_realization_perron} presents the main construction and limiting argument, while the supporting approximation results and estimates are developed in Appendix~F of the Supplementary Material.
\end{remark}

Taking $q\equiv0$ and exhausting the state space give the following terminal realization result.

\begin{corollary}
\label{cor:usc_terminal_realization}
Let $G\in\mathfrak G_{\mathrm{Lyap}}$, let $f\in\USC_b(D)$, and let
$u\in\operatorname{Sub}_G^+(f)$. For every $T>0$ and $x\in D$, there exists
$\mathbb P^{T,x}\in\mathcal P_x(G)$ such that
\begin{equation}
\label{eq:usc_terminal_realization_conclusion}
    u(T,x)\le
    \mathbb E^{\mathbb P^{T,x}}[f(X_T)\mathbb I_{\{T<\tau_\infty\}}].
\end{equation}
\end{corollary}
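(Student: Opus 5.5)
We would derive Corollary~\ref{cor:usc_terminal_realization} from Theorem~\ref{prop:one_step_usc_subsolution_realization} with $q\equiv0$ and $h=T$, applied on each domain of the exhaustion, and then pass to a weak limit. Fix $T>0$ and $x\in D$, and choose $n_0$ with $x\in D_{n_0}$. For each $n\ge n_0$, take $O=D_n$. Theorem~\ref{prop:one_step_usc_subsolution_realization} then gives $\mathbb P_n\in\mathcal P_x(G)$ with
\[
u(T,x)\le\mathbb E^{\mathbb P_n}\bigl[u\bigl(T-(T\wedge\tau_n),X_{T\wedge\tau_n}\bigr)\mathbb I_{\{T\wedge\tau_n<\tau_\infty\}}\bigr],
\]
since $\rho_{D_n}^{\,0}=\tau_n$. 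We split the expectation on the events $\{\tau_n>T\}$ and $\{\tau_n\le T\}$. On $\{\tau_n>T\}$ the integrand is $u(0,X_T)\mathbb I_{\{T<\tau_\infty\}}\le f(X_T)\mathbb I_{\{T<\tau_\infty\}}$, using $u(0,\cdot)\le f$. On $\{\tau_n\le T\}$ it is bounded by $\|u\|_\infty$. Hence
\[
u(T,x)\le\mathbb E^{\mathbb P_n}\bigl[f(X_T)\mathbb I_{\{T<\tau_\infty\}}\bigr]+\bigl(\|u\|_\infty+\|f\|_\infty\bigr)\,\mathbb P_n\bigl(\tau_n\le T,\ \tau_n<\tau_\infty\bigr),
\]
where the term $f(X_T)\mathbb I$ on $\{\tau_n\le T\}$ has been added and subtracted.

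The key estimate is that the live-exit probability $\mathbb P_n(\tau_n\le T<\tau_\infty\text{-type exits while alive})$ tends to zero uniformly over $\mathcal P_x(G)$. For this we use the Lyapunov function $\phi$. Take the field $\beta^{\mathbb P}$ from Proposition~\ref{prop:recovery_aggregation_generator_compatible_characteristics}. Generator compatibility gives $L^{\beta}(\cdot,J^2\phi)\le C_\phi\phi$, so $e^{-C_\phi s}\phi(X_{s\wedge\tau_n})\mathbb I_{\{s\wedge\tau_n<\tau_\infty\}}$ is a supermartingale. This is exactly the Lyapunov estimate quoted as Proposition~E.4 of the Supplementary Material, with killing only decreasing the indicator. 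It yields
\[
m_n^\phi\,\mathbb P\bigl(\tau_n\le T,\ \tau_n<\tau_\infty\bigr)\le e^{C_\phi T}\phi(x).
\]
Since $m_n^\phi\to\infty$, the error term vanishes. This step is where some care is needed: $\phi$ is not bounded, so the stopped test must be justified by localization and a cutoff $C_b^\infty$ modification of $\phi$ agreeing with it on $\overline D_n$.

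It remains to obtain a single law. The family $\{\mathbb P_n\}\subset\mathcal P_x(G)$ is weakly compact by the topological stability in Proposition~\ref{prop:propertiesP_t,x_revised} applied with $K=\{(0,x)\}$. We pass to a subsequence with $\mathbb P_n\Rightarrow\mathbb P^{T,x}\in\mathcal P_x(G)$. The remaining, and main, obstacle is the upper semicontinuity of $\mathbb P\mapsto\mathbb E^{\mathbb P}[f(X_T)\mathbb I_{\{T<\tau_\infty\}}]$ on $\mathcal P_x(G)$. The function $f$ extended by $0$ at $\triangle$ is not upper semicontinuous on $\widehat D$ when $f$ takes negative values, and $X_T$ is not a continuous functional in the $J_1$ topology at paths killed at time $T$.

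We would handle this as follows. Proposition~\ref{prop:recovery_aggregation_generator_compatible_characteristics}(i) gives $\mathbb P(\tau_{\mathrm{kill}}=T)=0$ for every admissible law, and Lemma~\ref{lem:admissible_nonexplosion} rules out continuous explosion. Hence $\omega\mapsto\omega(T)$ is $\mathbb P^{T,x}$-a.s. continuous. The tightness bound above shows that the mass of $\{X_T\in D\setminus D_n\}$ is uniformly small. Writing $f=(f+\|f\|_\infty)-\|f\|_\infty$, the first part extended by $0$ at $\triangle$ is upper semicontinuous on $\widehat D$ away from an asymptotically negligible neighbourhood of $\triangle$. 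The second part involves $\mathbb P(T<\tau_\infty)$, whose convergence follows from the a.s. continuity. A Portmanteau argument then gives $\limsup_n\mathbb E^{\mathbb P_n}[\cdots]\le\mathbb E^{\mathbb P^{T,x}}[\cdots]$, which proves \eqref{eq:usc_terminal_realization_conclusion}.

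If this limit step proves delicate, an alternative is to avoid limits altogether. We would apply Theorem~\ref{prop:one_step_usc_subsolution_realization} iteratively on $\{\tau_n\le T\}$ by restarting from $(T-\tau_n,X_{\tau_n})$, and concatenate the resulting laws using strong stability. The errors would then be absorbed by the same Lyapunov bound.
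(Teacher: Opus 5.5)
Your proposal is correct and follows the paper's route: apply Theorem~\ref{prop:one_step_usc_subsolution_realization} with $q=0$, $h=T$, $O=D_n$, bound the error by $(\|u\|_\infty+\|f\|_\infty)\,e^{C_\phi T}\phi(x)/m_n^\phi$ via the Lyapunov estimate of Proposition~E.4, and pass to a weak limit in the compact fiber $\mathcal P_x(G)$. The paper gets the final upper-semicontinuity step directly from Proposition~E.11 of the Supplementary Material, a pathwise Skorokhod argument that needs only $\mathbb P^{T,x}(\tau_{\mathrm{kill}}=T)=0$ and no explosion, supplied by Proposition~\ref{prop:recovery_aggregation_generator_compatible_characteristics}\ref{prop:recovery_killing_compensator} and Lemma~\ref{lem:admissible_nonexplosion}; your Portmanteau version instead also uses the uniform live-exit bound to control mass near $\triangle$, which works, so the concatenation fallback is unnecessary (note that the zero extension of $f$ fails to be upper semicontinuous at $\triangle$ because of positive, not negative, values of $f$ near $\triangle$, though your cutoff handles this either way).
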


\begin{proof}
Choose \(n_0\) with \(x\in D_{n_0}\).
For \(n\ge n_0\), apply
Theorem~\ref{prop:one_step_usc_subsolution_realization} with \(q=0\), \(h=T\),
and \(O=D_n\) to obtain \(\mathbb P_n\in\mathcal P_x(G)\).  With
\(C_{u,f}:=\|u\|_\infty+\|f\|_\infty\), the boundary domination
$u(0,\cdot)\le f$ gives
\begin{equation}
\label{eq:cor:usc_terminal_realization}
\begin{aligned}
    u(T,x)
    &\le
    \mathbb E^{\mathbb P_n}\!\left[
        u\bigl(T-T\wedge\tau_n,X_{T\wedge\tau_n}\bigr)
        \mathbb I_{\{T\wedge\tau_n<\tau_\infty\}}
    \right]\\
    &\le
    \mathbb E^{\mathbb P_n}[f(X_T)\mathbb I_{\{T<\tau_\infty\}}]
    +C_{u,f}\,
      \mathbb P_n(\tau_n\le T,\ \tau_n<\tau_\infty).
\end{aligned}
\end{equation}

By weak compactness of \(\mathcal P_x(G)\), after passing to a subsequence, we may assume that \(\mathbb P_n\Rightarrow\mathbb P^{T,x}\in\mathcal P_x(G)\).  
Moreover, Proposition~
E.4 in the Supplementary Material gives
\begin{align}\label{eq:cor:usc_termial_realization:terminalterm}
    \mathbb P_n(\tau_n\le T,\ \tau_n<\tau_\infty)
    \le
    \frac{e^{C_\phi T}\phi(x)}{m_n^\phi}
    \longrightarrow0.
\end{align}
Passing to the limit as $n\to\infty$ in \eqref{eq:cor:usc_terminal_realization} and using \eqref{eq:cor:usc_termial_realization:terminalterm} together with Proposition~
E.11 in the Supplementary Material, applied with $m=1$ and $r_{n,1}=r_1=T$, we obtain
\[
\begin{aligned}
    u(T,x)
    &\le
    \liminf_{n\to\infty}
        \mathbb E^{\mathbb P_n}[f(X_T)\mathbb I_{\{T<\tau_\infty\}}]
    \le
    \limsup_{n\to\infty}
        \mathbb E^{\mathbb P_n}[f(X_T)\mathbb I_{\{T<\tau_\infty\}}]
    \\
    &\le
    \mathbb E^{\mathbb P^{T,x}}[f(X_T)\mathbb I_{\{T<\tau_\infty\}}].
\end{aligned}
\]
Indeed, the applicability of Proposition~
E.11 in the Supplementary Material follows from Proposition~\ref{prop:recovery_aggregation_generator_compatible_characteristics}\ref{prop:recovery_killing_compensator} and Lemma~\ref{lem:admissible_nonexplosion}.
This completes the proof.
\end{proof}

\begin{corollary}
\label{cor:canonical_regular_dynamic_virtual_structure}
For $G\in\mathfrak G_{\mathrm{Lyap}}$, the family $\mathcal P(G)$ is
fiberwise nonempty.
\end{corollary}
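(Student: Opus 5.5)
We must show that $\mathcal P_{t,x}(G)\neq\varnothing$ for every $(t,x)\in[0,\infty)\times\widehat D$. The plan is to reduce to the case $t=0$ and $x\in D$, and then use Theorem~\ref{prop:one_step_usc_subsolution_realization} with a trivial subsolution.

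The cemetery fiber is immediate. For $x=\triangle$, the constant path $\mathbf\triangle$ satisfies the prescribed-history condition. Moreover, every process $\widetilde M^{f,n}$ in \eqref{eq:defM_f,nonline} vanishes identically under $\delta_{\mathbf\triangle}$, because $\tau_\infty=0$ and both terms carry the indicator $\mathbb I_{\{\cdot<\tau_\infty\}}$. Hence $\delta_{\mathbf\triangle}\in\mathcal P_{t,\triangle}(G)$. This is also the convention $\mathcal P_{t,\triangle}=\{\delta_{\mathbf\triangle}\}$ built into the notion of an uncertainty structure.

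For $x\in D$ and $t=0$, I would invoke Theorem~\ref{prop:one_step_usc_subsolution_realization} with $q\equiv0$ and $u\equiv0$. The function $u\equiv0$ is a viscosity subsolution of \eqref{eq:usc_subsolution_with_source}. To check this, let $\psi$ touch $0$ from above at $(t_0,x_0)$, so $\psi(t_0,x_0)=0$ and $0-\psi$ attains a global maximum there. Then $\partial_t\psi(t_0,x_0)=0$, $\nabla\psi(t_0,x_0)=0$, and $\nabla^2\psi(t_0,x_0)\ge0$. By \ref{item:G2} and positive homogeneity, $G(x_0,0,0,\nabla^2\psi)\ge G(x_0,0,0,0)=0$, so \eqref{eq:viscosity_subsolution_inequality} holds. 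If $t_0$ is an interior time, the vanishing of $\partial_t\psi$ is exact; in any case only $\partial_t\psi\le G(\cdot)$ is needed, and this holds.

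Fix any $T>0$, take $h=T$ and a bounded open $O\Subset D$ containing $x$. The theorem then produces a law $\mathbb P^{T,h,O,x}\in\mathcal P_x(G)$, so $\mathcal P_{0,x}(G)\neq\varnothing$. Alternatively, one can apply Corollary~\ref{cor:usc_terminal_realization} with $f=0$ and $u=0\in\operatorname{Sub}_G^+(0)$, which yields the same conclusion.

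For general $t>0$, I would use the time homogeneity from Proposition~\ref{prop:propertiesP_t,x_revised}, namely $\mathcal P_{t,x}(G)=\mathcal P_{0,x}(G)\circ\theta_t^{-1}$. If $\mathbb P\in\mathcal P_{0,x}(G)$, the delayed law $\mathbb P\circ\theta_t^{-1}$ lies in $\mathcal P_{t,x}(G)$. Directly: the delayed path equals $x$ on $[0,t]$, and the localized processes are time shifts of the original ones, so they remain supermartingales in the shifted filtration.

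The only genuine content is the existence step, and that is entirely carried by Theorem~\ref{prop:one_step_usc_subsolution_realization}. The only point to check carefully is the verification that $u\equiv0$ is an admissible subsolution.
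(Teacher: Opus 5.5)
Your proposal is correct and follows the paper's proof. The paper also observes that $0\in\operatorname{Sub}_G^+(0)$, applies Corollary~\ref{cor:usc_terminal_realization} (itself a consequence of Theorem~\ref{prop:one_step_usc_subsolution_realization}) to obtain a law in $\mathcal P_x(G)$ for each $x\in D$, and then uses the time homogeneity from Proposition~\ref{prop:propertiesP_t,x_revised} together with the cemetery convention. Your explicit check that $u\equiv0$ is a subsolution, using \ref{item:G2} together with $G(x,0)=0$, fills in a detail the paper states only in one line.
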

\begin{proof}
Since $G(x,0)=0$, the zero function lies in $\operatorname{Sub}_G^+(0)$.
Corollary~\ref{cor:usc_terminal_realization} gives a law in $\mathcal P_x(G)$ for every $x\in D$.
Time homogeneity from Proposition~\ref{prop:propertiesP_t,x_revised}
and the cemetery convention give nonemptiness for every initial parameter.
\end{proof}

\subsection{Robust valuation and stable recovery}
\label{subsec:abstract_stable_recovery}
This subsection establishes that the robust valuation map is a well-defined
order embedding from $\mathfrak S_{\mathrm{loc,Lyap}}$ into $\mathfrak V$.
We first verify the valuation axioms and then recover each stable structure
from its valuation, which implies that $\operatorname{Val}$ preserves and
reflects order.
Membership of its image in $\mathfrak V_{\mathrm{loc,Lyap}}$ is established
in Subsection~\ref{subsec:local_domination_bridge}.

\begin{proposition}
\label{prop:stable_virtual_structure_generates_valuation}
For $\mathcal C\in\mathfrak S_{\mathrm{loc,Lyap}}$,
$\operatorname{Val}(\mathcal C)$ is a time-homogeneous dynamic sublinear
valuation rule, and every bounded upper semicontinuous terminal payoff
attains its maximum.
\end{proposition}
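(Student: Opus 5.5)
We verify the axioms \ref{V1}--\ref{V5} for $\mathcal T:=\operatorname{Val}(\mathcal C)$, where $\mathcal C\in\mathfrak S(G)$ for some $G\in\mathfrak G_{\mathrm{Lyap}}$, and show attainment along the way. The algebraic axioms are immediate. Monotonicity, positive homogeneity and subadditivity in \ref{V1} hold because each $f\mapsto\mathbb E^{\mathbb P}[f(X_t)\mathbb I_{\{t<\tau_\infty\}}]$ is linear and monotone, and a supremum of such maps is monotone and sublinear. The bound \ref{V2} follows from $|f(X_t)\mathbb I_{\{t<\tau_\infty\}}|\le\|f\|_\infty$. Fiberwise nonemptiness ensures that the supremum is over a nonempty set.

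The analytic core is the following joint upper semicontinuity statement. Let $(s_j,t_j,x_j)\to(s,t,x)$ with $s_j\le t_j$, and let $\mathbb P_j\in\mathcal C_{s_j,x_j}$. Then along a subsequence $\mathbb P_j\Rightarrow\mathbb P\in\mathcal C_{s,x}$, and
\[
\limsup_j\mathbb E^{\mathbb P_j}[f(X_{t_j})\mathbb I_{\{t_j<\tau_\infty\}}]\le\mathbb E^{\mathbb P}[f(X_t)\mathbb I_{\{t<\tau_\infty\}}]
\]
for $f\in\USC_b(D)$. Compactness and membership of the limit come from topological stability, namely weak compactness of $\mathcal C_K$ and the closed graph. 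The inequality is proved exactly as in the final step of Corollary~\ref{cor:usc_terminal_realization}. Since $\mathcal C\subseteq\mathcal P(G)$, Proposition~\ref{prop:recovery_aggregation_generator_compatible_characteristics}\ref{prop:recovery_killing_compensator} shows that the limit law does not kill at deterministic times, and Lemma~\ref{lem:admissible_nonexplosion} excludes continuous explosion. This is what Proposition~E.11 of the Supplementary Material needs. The indicator $\mathbb I_{\{t<\tau_\infty\}}$ is not continuous, so the Supplementary lemma is essential here; I expect this step to be the main obstacle, and I would simply invoke E.11 with $m=1$ and $r_{n,1}=t_j$.

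Consequences of the core statement:
\begin{itemize}
\item \emph{Attainment.} Taking $(s_j,t_j,x_j)$ fixed and $\mathbb P_j$ a maximizing sequence shows that the supremum is a maximum.
\item \emph{Semicontinuity and \ref{V4}.} Taking maximizers $\mathbb P_j$ at varying $(t_j,x_j)$ shows that $\mathcal T_tf\in\USC_b(D)$ and gives \ref{V4}. Here time-homogeneity lets us shift $\mathcal C_{s,x}=\mathcal C_{0,x}\circ\theta_s^{-1}$ and work with the family indexed by $(s,x)$.
\item \emph{Continuity from above, \ref{V3}.} Let $f_n\searrow f$ and choose maximizers $\mathbb P_n$ for $f_n$ with $\mathbb P_n\Rightarrow\mathbb P$. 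For fixed $m$ and $n\ge m$, we have $\mathcal T_tf_n(x)\le\mathbb E^{\mathbb P_n}[f_m(X_t)\mathbb I_{\{t<\tau_\infty\}}]$. Taking $\limsup_n$ and using the core inequality gives $\lim_n\mathcal T_tf_n(x)\le\mathbb E^{\mathbb P}[f_m(X_t)\mathbb I_{\{t<\tau_\infty\}}]$. Letting $m\to\infty$ by monotone convergence gives $\le\mathcal T_tf(x)$. The reverse inequality is monotonicity.
\end{itemize}

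For \ref{V5}, time homogeneity reduces the claim to $\mathcal T_{t+s}f=\mathcal T_t\mathcal T_sf$; the general $\mathcal T_{t,T}$ is defined through shifted fibers.

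For ``$\le$'': fix $\mathbb P\in\mathcal C_{0,x}$ and condition on $\widetilde{\mathcal F}_t$. By stability under restarted conditioning, $\mathbb P^{t,\omega}\in\mathcal C_{t,X_t(\omega)}$ almost surely. Time homogeneity converts this fiber to $\mathcal C_{0,X_t}\circ\theta_t^{-1}$. On $\{t<\tau_\infty\}$ this gives
\[
\mathbb E^{\mathbb P}[f(X_{t+s})\mathbb I_{\{t+s<\tau_\infty\}}\mid\widetilde{\mathcal F}_t]\le\mathcal T_sf(X_t).
\]
On $\{t\ge\tau_\infty\}$ both sides vanish, using the cemetery fiber $\{\delta_{\mathbf\triangle}\}$.

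For ``$\ge$'': we need a measurable selection of maximizers. The map $(y,\mathbb Q)\mapsto\mathbb E^{\mathbb Q}[f(X_{t+s})\mathbb I_{\{t+s<\tau_\infty\}}]$ is Borel, and the graph $\{(y,\mathbb Q):\mathbb Q\in\mathcal C_{t,y}\}$ is closed with compact sections. A measurable maximum theorem for upper semicontinuous objectives on closed-graph compact-valued correspondences, from the selection results in the Supplementary Material, gives a Borel selector $y\mapsto\nu(y)\in\mathcal C_{t,y}$ attaining $\mathcal T_sf(y)$. The kernel $\omega\mapsto\nu(X_t(\omega))$ is $\widetilde{\mathcal F}_t$-measurable and restart-compatible. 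Stability under restarted concatenation gives $\mathbb P\otimes_t\nu\in\mathcal C_{0,x}$, and its value equals $\mathbb E^{\mathbb P}[\mathcal T_sf(X_t)\mathbb I_{\{t<\tau_\infty\}}]$. Taking the supremum over $\mathbb P$ yields $\mathcal T_{t+s}f\ge\mathcal T_t\mathcal T_sf$; this uses $\mathcal T_sf\in\USC_b$, established above.
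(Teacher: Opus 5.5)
Your proposal is correct and follows essentially the same route as the paper. Upper semicontinuity of the killed terminal payoff via Proposition~E.11 (applicable by Proposition~\ref{prop:recovery_aggregation_generator_compatible_characteristics}\ref{prop:recovery_killing_compensator} and Lemma~\ref{lem:admissible_nonexplosion}), combined with fiber compactness and the closed graph, yields attainment, $\mathcal T_tf\in\USC_b(D)$, \ref{V3} and \ref{V4}; \ref{V5} comes from restarted conditioning for one inequality and from pasting a Borel maximizing selector (Proposition~D.1, extended by $\delta_{\mathbf\triangle}$ at the cemetery) for the other. The only deviation is cosmetic: in \ref{V3} you compare against a fixed $f_m$ and then use monotone convergence, whereas the paper applies E.11 directly to the moving payoffs $f_j$ using $\limsup^*_jf_j=f$.
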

\begin{proof}
Choose $G\in\mathfrak G_{\mathrm{Lyap}}$ with $\mathcal C\subseteq\mathcal P(G)$. 
Since $\mathbb P(X_s=x \text{ for all } 0\le s\le t)=1$ for every $(t,x)\in[0,\infty)\times D$, we have $\operatorname{Val}(\mathcal C)_{t,t}=\operatorname{id}_{\USC_b(D)}$ for all $t\ge0$.
Moreover, the time homogeneity of $\mathcal C$ implies that
$\operatorname{Val}(\mathcal C)$ is time homogeneous:
\[
    \operatorname{Val}(\mathcal C)_{t,T}
    =\operatorname{Val}(\mathcal C)_{0,T-t}:=\operatorname{Val}(\mathcal C)_{T-t},
    \qquad 0\le t\le T.
\]

Next, Proposition~
E.11 in the Supplementary Material, applied with $m=1$, shows that the map
\[
    \mathcal C_x\ni\mathbb P
    \longmapsto\mathbb E^{\mathbb P}[f(X_t)\mathbb I_{\{t<\tau_\infty\}}]
\]
is upper semicontinuous for every fixed $t\ge0$, $x\in D$, and
$f\in\USC_b(D)$.
Compactness of each fiber therefore implies that the supremum in the
representation of $\operatorname{Val}(\mathcal C)_tf(x)$ is attained.

Moreover, $\operatorname{Val}(\mathcal C)_tf\in\USC_b(D)$.
Boundedness follows from
$|\operatorname{Val}(\mathcal C)_tf(x)|\le\|f\|_\infty$.
To prove upper semicontinuity, let $x_n\to x$ and choose maximizing laws
$\mathbb P_n\in\mathcal C_{x_n}$.
Compactness over compact sets of initial states and closedness of the
graph of $\mathcal C$ yield, along a subsequence realizing the upper
limit, a further subsequence converging weakly to some
$\mathbb P\in\mathcal C_x$.
Applying Proposition~
E.11 in the Supplementary Material gives
\[
    \limsup_{n\to\infty}\operatorname{Val}(\mathcal C)_tf(x_n)
    \le\mathbb E^{\mathbb P}[f(X_t)\mathbb I_{\{t<\tau_\infty\}}]
    \le\operatorname{Val}(\mathcal C)_tf(x).
\]

Properties \ref{V1}--\ref{V2} are immediate from the fiberwise nonemptiness of $\mathcal C$ and $0\le \mathbb I_{\{t<\tau_\infty\}}\le 1$.
To prove \ref{V3}, fix $t\ge0$ and $x\in D$, and let $f_j,f\in\USC_b(D)$ satisfy $f_j\searrow f$.
Then $\limsup_j^*f_j=f$.
For each $j$, choose a law $\mathbb P_j\in\mathcal C_x$ attaining the supremum in the representation of $(\operatorname{Val}(\mathcal C))_tf_j(x)$.
Weak compactness of $\mathcal C_x$ gives a subsequence converging weakly to some $\mathbb P\in\mathcal C_x$.
Passing to the limit along this subsequence and applying Proposition~
E.11 in the Supplementary Material yields
\[
    \lim_{j\to\infty}(\operatorname{Val}(\mathcal C))_tf_j(x)
    \le\mathbb E^{\mathbb P}[f(X_t)\mathbb I_{\{t<\tau_\infty\}}]
    \le(\operatorname{Val}(\mathcal C))_tf(x).
\]
Monotonicity gives the reverse inequality, proving \ref{V3}.

A similar argument proves \ref{V4}.
Let $(t_j,x_j)\to(t,x)$ and choose laws $\mathbb P_j\in\mathcal C_{x_j}$ attaining the suprema in the representations of $(\operatorname{Val}(\mathcal C))_{t_j}f(x_j)$.
Weak compactness of $(\bigcup_{j\ge1}\mathcal C_{x_j})\cup\mathcal C_x$ and closedness of the graph of $\mathcal C$, together with Proposition~
E.11 in the Supplementary Material, yield \ref{V4}.

To prove \ref{V5}, write $\mathcal T=\operatorname{Val}(\mathcal C)$.
Stability under restarted conditioning at time $s$ and time homogeneity of $\mathcal C$ give
\[
    \mathbb E^{\mathbb P}
    [f(X_{s+t})\mathbb I_{\{s+t<\tau_\infty\}}\mid\widetilde{\mathcal F}_s]
    \le\mathcal T_tf(X_s)\mathbb I_{\{s<\tau_\infty\}},
    \qquad \mathbb P\in\mathcal C_{0,x},
\]
and hence $\mathcal T_{s+t}f\le\mathcal T_s(\mathcal T_tf)$.
For the reverse inequality, observe that the maximizing correspondence
\[
    y\longmapsto
    \argmax_{\mathbb Q\in\mathcal C_{s,y}}
    \mathbb E^{\mathbb Q}[f(X_{s+t})\mathbb I_{\{s+t<\tau_\infty\}}]
\]
has nonempty compact values and a Borel graph, since the expectation
map and $\mathcal T_tf$ are Borel.
Proposition~
D.1 in the Supplementary Material supplies a Borel selector, which we extend to the cemetery state by $\delta_{\mathbf\triangle}$.
Pasting this selector at time $s$ onto any law in $\mathcal C_{0,x}$ and taking suprema yields the reverse inequality, proving \ref{V5}.
\end{proof}

The (strong) stability and fiberwise nonemptiness of $\mathcal P(G)$ were established in Proposition~\ref{prop:propertiesP_t,x_revised} and Corollary~\ref{cor:canonical_regular_dynamic_virtual_structure}, respectively.
Applying Proposition~\ref{prop:stable_virtual_structure_generates_valuation} to $\mathcal P(G)$ therefore yields the following corollary.

\begin{corollary}
\label{cor:canonical_robust_valuation}
For every $G\in\mathfrak G_{\mathrm{Lyap}}$, the family
$\mathcal V^G=\operatorname{Val}(\mathcal P(G))$ is a time-homogeneous
dynamic sublinear valuation rule.
\end{corollary}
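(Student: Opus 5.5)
This corollary follows directly from Proposition~\ref{prop:stable_virtual_structure_generates_valuation} once we verify that $\mathcal P(G)$ lies in $\mathfrak S_{\mathrm{loc,Lyap}}$. By \eqref{eq:local_lyapunov_structure_class} and \eqref{eq:generator_compatible_structure_class}, it suffices to show that $\mathcal P(G)\in\mathfrak S(G)$. That is, we need the following properties of $\mathcal P(G)$:
\begin{enumerate}[label=(\alph*)]
\item it is contained in $\mathcal P(G)$;
\item it is a time-homogeneous uncertainty structure;
\item it is stable;
\item it is fiberwise nonempty.
\end{enumerate}

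Inclusion (a) is trivial. For (b), I would first check the defining requirements of an uncertainty structure. The prescribed-history condition in Definition~\ref{def:G_supermartingale_problem} gives \eqref{def:rdvus_fibers} for $x\in D$. For the cemetery fiber $\mathcal P_{t,\triangle}(G)=\{\delta_{\mathbf\triangle}\}$, absorption at $\triangle$ forces every solution started at $\triangle$ to be $\delta_{\mathbf\triangle}$. Conversely, $\delta_{\mathbf\triangle}$ solves the problem because all the localized processes $\widetilde M^{f,n}$ vanish identically after the lifetime; this is the same observation used in the proof of Corollary~\ref{cor:recovery_martingale_problem_equivalence}.

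Time homogeneity and stability, (b) and (c), are given by Proposition~\ref{prop:propertiesP_t,x_revised}. That result actually provides strong stability, which implies stability: take deterministic stopping times $\widetilde\tau=s$. Fiberwise nonemptiness (d) is Corollary~\ref{cor:canonical_regular_dynamic_virtual_structure}.

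With these four properties, $\mathcal P(G)\in\mathfrak S(G)\subseteq\mathfrak S_{\mathrm{loc,Lyap}}$. Proposition~\ref{prop:stable_virtual_structure_generates_valuation} then shows that $\mathcal V^G=\operatorname{Val}(\mathcal P(G))$ is a time-homogeneous dynamic sublinear valuation rule, with attained suprema.

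I expect no real obstacle: all the substantive work, namely closedness and compactness via the Lyapunov bound and nonemptiness via the realization Theorem~\ref{prop:one_step_usc_subsolution_realization}, was done in earlier results. The only point needing care is the bookkeeping at the cemetery fiber, that is, confirming $\mathcal P_{t,\triangle}(G)=\{\delta_{\mathbf\triangle}\}$ as the definition of an uncertainty structure requires.
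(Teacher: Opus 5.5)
Your proposal is correct and is exactly the paper's argument: Proposition~\ref{prop:propertiesP_t,x_revised} and Corollary~\ref{cor:canonical_regular_dynamic_virtual_structure} place $\mathcal P(G)$ in $\mathfrak S(G)\subseteq\mathfrak S_{\mathrm{loc,Lyap}}$, and Proposition~\ref{prop:stable_virtual_structure_generates_valuation} then gives the conclusion. Your explicit check that $\mathcal P_{t,\triangle}(G)=\{\delta_{\mathbf\triangle}\}$ is correct bookkeeping that the paper leaves implicit.
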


We next recover each stable structure from its valuation.
First observe that, for $\mathcal T\in\mathfrak V$ and a law
$\mathbb P$ with the prescribed history at $(t,x)$,
$\mathbb P\in\mathcal R_{t,x}^{\mathcal T}$ if and only if
\begin{equation}
\label{eq:valuation_terminal_conditional_domination}
 \mathbb E^{\mathbb P}
 [g(X_r)\mathbb I_{\{r<\tau_\infty\}}\mid\widetilde{\mathcal F}_s]
 \le\mathcal T_{r-s}g(X_s)\mathbb I_{\{s<\tau_\infty\}},
 \qquad t\le s<r,\quad g\in\USC_b(D).
\end{equation}
Necessity follows from
$\mathbb E^{\mathbb P}[Y_r^{\mathcal T;r,g}\mid\widetilde{\mathcal F}_s]
\le Y_s^{\mathcal T;r,g}$.
Conversely, applying \eqref{eq:valuation_terminal_conditional_domination} to
$\mathcal T_{R-r}g\in\USC_b(D)$ and using the semigroup property gives
$\mathbb E^{\mathbb P}[Y_r^{\mathcal T;R,g}\mid\widetilde{\mathcal F}_s]
\le Y_s^{\mathcal T;R,g}$ for $t\le s<r\le R$.
In particular,
\[
 \mathcal T_1\le\mathcal T_2
 \quad\Longrightarrow\quad
 \mathcal R^{\mathcal T_1}\subseteq\mathcal R^{\mathcal T_2}.
\]

\begin{proposition}
\label{thm:abstract_stable_recovery}
For every $\mathcal C\in\mathfrak S_{\mathrm{loc,Lyap}}$,
\begin{equation}
\label{eq:abstract_stable_recovery}
 \mathcal C=\mathcal R^{\operatorname{Val}(\mathcal C)}.
\end{equation}
Consequently, for $\mathcal C_1,\mathcal C_2\in\mathfrak S_{\mathrm{loc,Lyap}}$,
\begin{equation}
\label{eq:valuation_order_embedding}
 \operatorname{Val}(\mathcal C_1)\le\operatorname{Val}(\mathcal C_2)
 \quad\Longleftrightarrow\quad\mathcal C_1\subseteq\mathcal C_2.
\end{equation}
In particular, $\operatorname{Val}$ is injective on
$\mathfrak S_{\mathrm{loc,Lyap}}$.
\end{proposition}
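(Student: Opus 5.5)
The inclusion $\mathcal C\subseteq\mathcal R^{\operatorname{Val}(\mathcal C)}$ is the easy direction, and I will prove it first. Write $\mathcal T:=\operatorname{Val}(\mathcal C)$, which is a valuation by Proposition~\ref{prop:stable_virtual_structure_generates_valuation}. Let $\mathbb P\in\mathcal C_{t,x}$ and fix $t\le s<r$ and $g\in\USC_b(D)$.

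Stability under restarted conditioning gives $\mathbb P^{s,\omega}\in\mathcal C_{s,X_s(\omega)}$ for $\mathbb P$-a.e.\ $\omega$. On $\{s<\tau_\infty\}$, time homogeneity then yields
\[
\mathbb E^{\mathbb P}[g(X_r)\mathbb I_{\{r<\tau_\infty\}}\mid\widetilde{\mathcal F}_s](\omega)
=\mathbb E^{\mathbb P^{s,\omega}}[g(X_r)\mathbb I_{\{r<\tau_\infty\}}]
\le\mathcal T_{r-s}g(X_s(\omega)).
\]
The first equality holds because the restart map only alters the path on $[0,s]$. On $\{s\ge\tau_\infty\}$ both sides vanish by absorption at $\triangle$. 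This is exactly \eqref{eq:valuation_terminal_conditional_domination}, so $\mathbb P\in\mathcal R^{\mathcal T}_{t,x}$. The cemetery fibers match trivially.

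The reverse inclusion $\mathcal R^{\mathcal T}\subseteq\mathcal C$ is the main obstacle. I would argue by contradiction with a separation argument. Suppose $\mathbb Q\in\mathcal R^{\mathcal T}_{t,x}\setminus\mathcal C_{t,x}$. By time homogeneity I reduce to $t=0$.

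The fiber $\mathcal C_x$ is convex and weakly compact, so Hahn--Banach on the space of finite measures, dualized by $C_b(\Omega^{\mathrm{cad}})$, gives a bounded continuous $F$ with $\mathbb E^{\mathbb Q}[F]>\sup_{\mathbb P\in\mathcal C_x}\mathbb E^{\mathbb P}[F]$. I then want to reduce $F$ to a cylinder functional
\[
F=\prod_{i=1}^m f_i(X_{r_i})\mathbb I_{\{r_i<\tau_\infty\}}.
\]
To do this, note that $\mathcal C_x\cup\{\mathbb Q\}$ lies in a tight set. Tightness holds because $\mathbb Q\in\mathcal R^{\mathcal T}\subseteq\mathcal R^{\mathcal V^G}$, using $\mathcal T\le\mathcal V^G$ from $\mathcal C\subseteq\mathcal P(G)$. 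I then need that $\mathcal R^{\mathcal V^G}_x$ is contained in a tight family, e.g.\ by testing the supermartingale property with Lyapunov-type payoffs. Once $\mathbb Q$ is placed in a tight set where the finite-dimensional cylinder functionals (with the survival convention) separate points, I choose $m$, times $r_1<\dots<r_m$ (avoiding the null set of deterministic killing times, Proposition~\ref{prop:recovery_aggregation_generator_compatible_characteristics}\ref{prop:recovery_killing_compensator}), and functions $f_i\in C_b(D)$ with $f_i\ge0$ such that the strict inequality persists.

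The separating inequality then contradicts the following claim, proved by backward induction on $m$:
\[
\mathbb E^{\mathbb Q}\Bigl[\prod_{i=1}^m f_i(X_{r_i})\mathbb I_{\{r_i<\tau_\infty\}}\Bigr]\le\sup_{\mathbb P\in\mathcal C_x}\mathbb E^{\mathbb P}\Bigl[\prod_{i=1}^m f_i(X_{r_i})\mathbb I_{\{r_i<\tau_\infty\}}\Bigr]
\]
for all $f_i\in\USC_b(D)$ with $f_i\ge0$. Define $g_m:=f_m$ and $g_{i}:=f_{i}\,\mathcal T_{r_{i+1}-r_{i}}g_{i+1}$. These stay in $\USC_b(D)$ and are nonnegative. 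Iterating \eqref{eq:valuation_terminal_conditional_domination} under $\mathbb Q$ bounds the left side by $\mathcal T_{r_1}g_1(x)$.

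For the right side, I realize $\mathcal T_{r_1}g_1(x)$ under some law in $\mathcal C_x$. I paste argmax selectors (Proposition~D.1 of the Supplementary Material) at $r_1,\dots,r_{m-1}$ using restarted concatenation, as in the proof of \ref{V5} in Proposition~\ref{prop:stable_virtual_structure_generates_valuation}. The pasted law lies in $\mathcal C_x$ and attains $\mathcal T_{r_1}g_1(x)$, which yields the claim. Nonnegativity is needed here so that the multiplicative factors $f_i$ commute with the monotone sublinear $\mathcal T$; if the separating $F$ is not nonnegative I would add a constant, which is harmless since all laws are probabilities.

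Finally, \eqref{eq:valuation_order_embedding} follows. If $\mathcal C_1\subseteq\mathcal C_2$, then $\operatorname{Val}(\mathcal C_1)\le\operatorname{Val}(\mathcal C_2)$ trivially. Conversely, monotonicity of $\mathcal T\mapsto\mathcal R^{\mathcal T}$ (noted just before the proposition) and \eqref{eq:abstract_stable_recovery} give $\mathcal C_1=\mathcal R^{\operatorname{Val}(\mathcal C_1)}\subseteq\mathcal R^{\operatorname{Val}(\mathcal C_2)}=\mathcal C_2$. Injectivity is then immediate.
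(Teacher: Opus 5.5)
Your proof of $\mathcal C\subseteq\mathcal R^{\operatorname{Val}(\mathcal C)}$, and your derivation of the order equivalence and of injectivity, agree with the paper. The reverse inclusion breaks where you replace the Hahn--Banach separator $F$ by a \emph{single} product $\prod_i f_i(X_{r_i})\mathbb I_{\{r_i<\tau_\infty\}}$ with $f_i\ge0$ for which the strict inequality persists.

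Any density argument can at best replace $F$ by a general continuous cylinder payoff $\psi(X_{r_1},\dots,X_{r_m})$. That is a linear combination $\sum_k c_k\Phi_k$ of products, with coefficients of both signs. Since $\Phi\mapsto\sup_{\mathbb P\in\mathcal C_x}\mathbb E^{\mathbb P}[\Phi]$ is only sublinear, a strict inequality for the combination gives none for any single $\Phi_k$. Adding a constant does not turn the combination into a nonnegative product either.

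This is a genuine obstruction, not a technicality. Your inequalities for nonnegative products are strictly weaker than membership, even for rectangular families. Take two periods, states $\{0,1\}$, a free first step, and second-step sets $\{q\delta_a+(1-q)\delta_{1-a}:q\ge\tfrac12\}$ from state $a$. The set of two-period laws is then $K=\{\mu:\mu_{01}\le\mu_{00},\ \mu_{10}\le\mu_{11}\}$, where $\mu_{ab}:=\mu(\{(a,b)\})$. Yet $\mu=\tfrac12(\delta_{01}+\delta_{10})\notin K$ satisfies
\[
 \mu(f_1\otimes f_2)\le\tfrac12\max\{f_1(0),f_1(1)\}\bigl(f_2(0)+f_2(1)\bigr)\le\max_{\nu\in K}\nu(f_1\otimes f_2)
\]
for all $f_1,f_2\ge0$. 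The last bound comes from the points $\tfrac12(\delta_{a0}+\delta_{a1})\in K$. So no nonnegative product separator need exist, and the contradiction does not close.

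Two smaller points. First, tightness of $\mathcal C_x\cup\{\mathbb Q\}$ is automatic, being a compact set plus one law. Second, you may neither route tightness through $\mathcal R^{\mathcal V^G}$ nor apply Proposition~\ref{prop:recovery_aggregation_generator_compatible_characteristics} to $\mathbb Q$. In the paper, $\mathcal R^{\mathcal V^G}=\mathcal P(G)$ and $\mathcal R^{\mathcal T}\subseteq\mathcal P(G)$ are derived from the very proposition you are proving. A single law $\mathbb Q$ has at most countably many killing-time atoms anyway.

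The missing idea is to reduce multi-time constraints to one-time constraints by conditioning and concatenation. The paper uses \eqref{eq:completed_conditional_domination} to show that the conditional law of $X_{r_{i+1}}$ given $(X_{r_0},\dots,X_{r_i})$ lies in the one-step marginal set $K^{\mathcal C}_{r_i,r_{i+1}}(X_{r_i})$. That set is characterized by single-time tests (Lemma~\ref{lem:stable_marginal_lifting}), so convex separation is only ever applied to one-time marginals, where your product issue cannot arise. The paper then lifts these conditional laws with the Borel map $L_{r_i,r_{i+1}}$ and concatenates them to obtain a law in $\mathcal C_{t,x}$ with the same grid marginals as $\mathbb Q$. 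It concludes by the finite-intersection property over rational grids.

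Your separation route can also be repaired, but only with general payoffs:
\begin{itemize}
\item Separate the grid marginal of $\mathbb Q$ from the compact convex set of grid marginals of $\mathcal C_x$ by some $\psi\in C(\widehat D^{m+1})$.
\item Define $\psi_m:=\psi$ and $\psi_i(y_0,\dots,y_i):=\widehat{\mathcal T}_{r_{i+1}-r_i}[\psi_{i+1}(y_0,\dots,y_i,\cdot)](y_i)$.
\item Bound $\mathbb E^{\mathbb Q}[\psi]\le\psi_0(x)$ by applying \eqref{eq:completed_conditional_domination} to a countable sup-norm dense family of payoffs, using that $\widehat{\mathcal T}$ is $1$-Lipschitz in the sup norm.
\item Attain $\psi_0(x)$ in $\mathcal C_x$ by pasting history-dependent Borel argmax selectors.
\item Finish with the same compactness argument.
\end{itemize}
This costs measurability and semicontinuity bookkeeping for the $\psi_i$ that the paper's stepwise lifting avoids.
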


Before proving Proposition~\ref{thm:abstract_stable_recovery}, we establish the following lemma, which characterizes admissible one-time marginals through valuation inequalities and provides a measurable lifting of these marginals to path laws.
For $\mathcal C\in\mathfrak S_{\mathrm{loc,Lyap}}$, set $\mathcal T:=\operatorname{Val}(\mathcal C)$.
It is useful to retain the cemetery mass explicitly.
For $h\in C(\widehat D)$ and $q\ge0$, define
\begin{equation}
\label{eq:cemetery_completed_valuation}
\widehat{\mathcal T}_q h(y)
:=
\begin{cases}
h(\triangle)+\mathcal T_q\bigl(h|_D-h(\triangle)\bigr)(y),&y\in D,\\
h(\triangle),&y=\triangle.
\end{cases}
\end{equation}
Since $h(X_r)=h(\triangle)+\bigl(h(X_r)-h(\triangle)\bigr)\mathbb I_{\{r<\tau_\infty\}}$, time homogeneity gives
\begin{equation}
\label{eq:completed_valuation_support}
\widehat{\mathcal T}_{r-s}h(y)
=
\sup_{\mathbb P\in\mathcal C_{s,y}}
\mathbb E^{\mathbb P}[h(X_r)],
\qquad 0\le s<r.
\end{equation}
Clearly, $\widehat{\mathcal T}_q h$ is Borel on $\widehat D$.

\begin{lemma}
\label{lem:stable_marginal_lifting}
Let $\mathcal C\in\mathfrak S_{\mathrm{loc,Lyap}}$ and $0\le s<r$.  
For $y\in\widehat D$, set
\[
K_{s,r}^{\mathcal C}(y)
:=\bigl\{\mathbb P\circ X_r^{-1}:\mathbb P\in\mathcal C_{s,y}\bigr\}
\subseteq\mathcal P(\widehat D).
\]
Then $K_{s,r}^{\mathcal C}(y)$ is nonempty, compact and convex, and
\begin{equation}
\label{eq:stable_marginal_separation}
\mu\in K_{s,r}^{\mathcal C}(y)
\quad\Longleftrightarrow\quad
\int_{\widehat D}h\,d\mu
\le\widehat{\mathcal T}_{r-s}h(y)
\quad\text{for every }h\in C(\widehat D).
\end{equation}
The graph of $K_{s,r}^{\mathcal C}$ is Borel, and there exists a Borel map
\[
L_{s,r}:\operatorname{Gr}(K_{s,r}^{\mathcal C})\longrightarrow\mathfrak M
\]
satisfying $L_{s,r}(y,\mu)\in\mathcal C_{s,y}$ and $L_{s,r}(y,\mu)\circ X_r^{-1}=\mu$.
Moreover, for every $(t,x)$ and every finite collection $t\le r_1<\cdots<r_m$, the map
\begin{equation}
\label{eq:stable_finite_marginal_continuity}
\mathbb P\longmapsto
\mathbb P\circ(X_{r_1},\ldots,X_{r_m})^{-1}
\end{equation}
is weakly continuous on $\mathcal C_{t,x}$.
\end{lemma}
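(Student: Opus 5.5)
We split the proof into four steps: the properties of $K_{s,r}^{\mathcal C}(y)$, the separation characterization, the Borel graph and lifting, and the finite-marginal continuity.

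\emph{Step 1: structure of $K_{s,r}^{\mathcal C}(y)$.} Nonemptiness follows from fiberwise nonemptiness of $\mathcal C$. For $y=\triangle$ the fiber is $\{\delta_{\mathbf\triangle}\}$, so $K_{s,r}^{\mathcal C}(\triangle)=\{\delta_\triangle\}$ and everything is trivial; assume $y\in D$. Convexity is inherited from convexity of the fiber $\mathcal C_{s,y}$, since $\mathbb P\mapsto\mathbb P\circ X_r^{-1}$ is affine. For compactness we use that $\mathcal C_{s,y}$ is weakly compact. The map $\mathbb P\mapsto\mathbb P\circ X_r^{-1}$ is not continuous on all of $\mathfrak M$, because $X_r$ is not $J_1$-continuous at paths jumping at $r$. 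On $\mathcal C\subseteq\mathcal P(G)$, however, paths are continuous except for the killing jump. Proposition~\ref{prop:recovery_aggregation_generator_compatible_characteristics}\ref{prop:recovery_killing_compensator} gives $\mathbb P(\tau_{\mathrm{kill}}=r)=0$, so $X_r$ is $\mathbb P$-a.s.\ continuous. The continuous mapping theorem then yields continuity of the marginal map on $\mathcal C_{s,y}$. This is essentially the last claim with $m=1$; the general $m$ is identical, since $\{\tau_{\mathrm{kill}}\in\{r_1,\dots,r_m\}\}$ is null. I will invoke Proposition~E.11 of the Supplementary Material, whose hypotheses are verified exactly as in Corollary~\ref{cor:usc_terminal_realization}. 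Hence $K_{s,r}^{\mathcal C}(y)$ is a continuous image of a compact set and is compact, and the finite-marginal continuity \eqref{eq:stable_finite_marginal_continuity} is proved at the same time.

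\emph{Step 2: the separation characterization \eqref{eq:stable_marginal_separation}.} The direction ``$\Rightarrow$'' is \eqref{eq:completed_valuation_support}. For ``$\Leftarrow$'', suppose $\mu\notin K:=K_{s,r}^{\mathcal C}(y)$. Since $K$ is compact and convex in $\mathcal P(\widehat D)\subset C(\widehat D)^*$ with the weak* topology, and $\widehat D$ is compact, Hahn--Banach separation supplies $h\in C(\widehat D)$ with
\[
\int h\,d\mu>\sup_{\nu\in K}\int h\,d\nu=\widehat{\mathcal T}_{r-s}h(y),
\]
where the equality is again \eqref{eq:completed_valuation_support}. This contradicts the assumed inequality.

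\emph{Step 3: Borel graph and measurable lifting.} Fix a countable dense set $\{h_k\}\subset C(\widehat D)$. By Step 2,
\[
\operatorname{Gr}(K_{s,r}^{\mathcal C})=\bigcap_k\Bigl\{(y,\mu):\textstyle\int h_k\,d\mu\le\widehat{\mathcal T}_{r-s}h_k(y)\Bigr\}.
\]
Density is enough because both sides are continuous in $h$ under the sup norm, using (V2). Each set in the intersection is Borel, since $\widehat{\mathcal T}_{r-s}h_k$ is Borel (indeed u.s.c.\ on $D$), so the graph is Borel. For the lifting, consider
\[
\Gamma(y,\mu):=\{\mathbb P\in\mathcal C_{s,y}:\mathbb P\circ X_r^{-1}=\mu\}
\]
on $\operatorname{Gr}(K_{s,r}^{\mathcal C})$. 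It has nonempty values by definition and compact values by Step 1. Its graph is
\[
\{(y,\mu,\mathbb P):\mathbb P\in\mathcal C_{s,y},\ \mathbb P\circ X_r^{-1}=\mu\}.
\]
This set is Borel: the graph of $y\mapsto\mathcal C_{s,y}$ is closed on $D$ by topological stability (and the case $y=\triangle$ is a singleton), and the marginal map is Borel on $\mathfrak M$. Proposition~D.1 of the Supplementary Material, the Borel selection result for compact-valued correspondences with Borel graph used in Proposition~\ref{prop:stable_virtual_structure_generates_valuation}, then gives the Borel selector $L_{s,r}$.

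\emph{Main obstacle.} The delicate point is Step 1: the marginal map is not continuous on $\mathfrak M$ because of the killing jump. We handle this by the a.s.\ continuity argument via the absence of atoms of $\tau_{\mathrm{kill}}$, relying on Proposition~E.11. A secondary check is that the Borel-graph hypothesis of Proposition~D.1 holds uniformly, including the cemetery point.
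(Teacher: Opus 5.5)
Your proof is correct. Apart from one step it follows the paper's route:
\begin{itemize}
\item compactness of $K_{s,r}^{\mathcal C}(y)$ from continuity of the marginal map on the compact fiber;
\item the characterization \eqref{eq:stable_marginal_separation} via Hahn--Banach separation and \eqref{eq:completed_valuation_support};
\item the Borel graph as a countable intersection over a dense family of tests;
\item the lifting via Proposition~D.1 applied to $\Gamma(y,\mu)=\{\mathbb P\in\mathcal C_{s,y}:\mathbb P\circ X_r^{-1}=\mu\}$.
\end{itemize}

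The difference is in how you obtain finite-marginal continuity. The paper telescopes $\psi(X_{r_1},\ldots,X_{r_m})$, for $\psi\in C(\widehat D^m)$, into $\psi(\triangle,\ldots,\triangle)$ plus killed cylinder payoffs $F_j(X_{r_1},\ldots,X_{r_j})\mathbb I_{\{r_j<\tau_\infty\}}$. It then applies Proposition~E.11 to $\pm F_j$, which needs both the absence of deterministic killing atoms and nonexplosion (Lemma~\ref{lem:admissible_nonexplosion}).

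Your continuous-mapping argument is more direct. On $\widetilde\Omega$, the map $\omega\mapsto(X_{r_1},\ldots,X_{r_m})(\omega)$ is $J_1$-continuous at every path with $\tau_{\mathrm{kill}}\notin\{r_1,\ldots,r_m\}$, since a continuous explosion is a continuous arrival at $\triangle$. So you only need Proposition~\ref{prop:recovery_aggregation_generator_compatible_characteristics}\ref{prop:recovery_killing_compensator} for the limit law, which lies in $\mathcal C_{t,x}\subseteq\mathcal P_{t,x}(G)$ by closedness of the compact fiber. This handles observations after killing automatically.

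By contrast, your remark that you ``will invoke Proposition~E.11'' would not by itself cover $m>1$ with observations after killing. That proposition only treats payoffs defined on $D^m$ and multiplied by a single survival indicator, so on that route you would need the paper's telescoping decomposition. The continuous mapping theorem makes this unnecessary, so rely on it.

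There is one small point in the selection step. $\mathfrak M$ is only a Borel subset of $\mathcal P(\Omega^{\mathrm{cad}})$ and need not be Polish. As the paper does, regard $\Gamma$ as valued in the Polish space $\mathcal P(\Omega^{\mathrm{cad}})$ before applying Proposition~D.1; its graph remains Borel there because $\mathfrak M$ is Borel.
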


\begin{proof}
We first verify the last assertion, including observations after killing.
The case $x=\triangle$ is immediate.
Suppose $x\in D$, and let $\mathbb P_n\Rightarrow\mathbb P$ in $\mathcal C_{t,x}$.
Choose $G\in\mathfrak G_{\mathrm{Lyap}}$ with $\mathcal C\subseteq\mathcal P(G)$.
Fix $\psi\in C(\widehat D^m)$.
It suffices to show that
\begin{equation}
\label{eq:stable_finite_marginal_continuity_target}
    \lim_{n\to\infty}
    \mathbb E^{\mathbb P_n}[\psi(X_{r_1},\ldots,X_{r_m})]
    =
    \mathbb E^{\mathbb P}[\psi(X_{r_1},\ldots,X_{r_m})].
\end{equation}
For $1\le j\le m$, define $F_j:D^j\to\mathbb R$ by
\[
F_j(y_1,\ldots,y_j)
:=\psi(y_1,\ldots,y_j,\triangle,\ldots,\triangle)
-\psi(y_1,\ldots,y_{j-1},\triangle,\ldots,\triangle)
\]
for $(y_1,\ldots,y_j)\in D^j$.
Each $F_j$ is bounded and continuous.
Moreover, absorption at $\triangle$ yields the pathwise identity
\begin{equation}
\label{eq:full_cylinder_killed_decomposition}
\psi(X_{r_1},\ldots,X_{r_m})
=
\psi(\triangle,\ldots,\triangle)
+\sum_{j=1}^m F_j(X_{r_1},\ldots,X_{r_j})\mathbb I_{\{r_j<\tau_\infty\}}.
\end{equation}
Each summand is understood to be zero when $r_j\ge\tau_\infty$.
Applying Proposition~
E.11 in the Supplementary Material to $F_j$ and $-F_j$ shows that the expectation of each summand converges.
The applicability of this proposition follows from Proposition~\ref{prop:recovery_aggregation_generator_compatible_characteristics}\ref{prop:recovery_killing_compensator} and Lemma~\ref{lem:admissible_nonexplosion}, which ensure that these laws satisfy the required lifetime conditions.
Summing the resulting limits proves \eqref{eq:stable_finite_marginal_continuity_target}.
In particular, for every $0\le s<r$, the one-time marginal map $\mathbb P\mapsto\mathbb P\circ X_r^{-1}$ is continuous on each compact fiber $\mathcal C_{s,y}$.
Its image $K_{s,r}^{\mathcal C}(y)$ is therefore compact and is convex because the marginal map is affine.
Its nonemptiness follows directly from the fiberwise nonemptiness of $\mathcal C$.

Next, we prove \eqref{eq:stable_marginal_separation}.
By \eqref{eq:completed_valuation_support},
\begin{align}
\widehat{\mathcal T}_{r-s}h(y)
=
\sup_{\mathbb P\in\mathcal C_{s,y}}
\mathbb E^{\mathbb P}[h(X_r)]
=
\sup_{\mu\in K_{s,r}^{\mathcal C}(y)}
\int_{\widehat D}h\,d\mu,
\qquad 0\le s<r.
\end{align}
The equivalence in \eqref{eq:stable_marginal_separation} now follows
from the separation theorem for compact convex sets of probability measures.

To prove that the graph of $K_{s,r}^{\mathcal C}$ is Borel, fix a countable uniformly dense subset $\{h_j:j\ge1\}$ of $C(\widehat D)$.
Then
\begin{equation}
    \operatorname{Gr}(K_{s,r}^{\mathcal C})
    =
    \bigcap_{j\ge1}
    \left\{
    (y,\mu):
    \int_{\widehat D}h_j\,d\mu
    \le\widehat{\mathcal T}_{r-s}h_j(y)
    \right\}.
\end{equation}
Thus $\operatorname{Gr}(K_{s,r}^{\mathcal C})$ is Borel, being a countable intersection of Borel sets.

Finally, define a correspondence $\Gamma_{s,r}$ on $\operatorname{Gr}(K_{s,r}^{\mathcal C})$ by
\[
\Gamma_{s,r}(y,\mu)
:=\bigl\{\mathbb P\in\mathcal C_{s,y}:
                \mathbb P\circ X_r^{-1}=\mu\bigr\}.
\]
Its values are nonempty and compact.
Its graph is Borel because the graph of
$y\mapsto\mathcal C_{s,y}$ is Borel by topological stability and the cemetery convention, and the pushforward map $\mathbb P\mapsto\mathbb P\circ X_r^{-1}$ is Borel.
We regard its values as subsets of the Polish space $\mathcal P(\Omega^{\mathrm{cad}})$.
Since $\widetilde\Omega$ is Borel, $\mathfrak M$ is a Borel subset of this space.
Proposition~
D.1 in the Supplementary Material therefore supplies the asserted Borel selector $L_{s,r}$.
\end{proof}

\begin{proof}[Proof of Proposition~\ref{thm:abstract_stable_recovery}]
Fix $\mathcal C\in\mathfrak S_{\mathrm{loc,Lyap}}$ and write
$\mathcal T:=\operatorname{Val}(\mathcal C)$.
Stability under restarted conditioning and time homogeneity imply
\eqref{eq:valuation_terminal_conditional_domination} for every
$\mathbb P\in\mathcal C_{t,x}$, and hence
$\mathcal C\subseteq\mathcal R^{\mathcal T}$.

Conversely, fix $(t,x)$ and $\mathbb P\in\mathcal R_{t,x}^{\mathcal T}$.
The cemetery case is immediate, so suppose $x\in D$.
By \eqref{eq:valuation_terminal_conditional_domination},
\begin{equation}
\label{eq:abstract_one_step_recovery}
\mathbb E^{\mathbb P}[g(X_r)\mathbb I_{\{r<\tau_\infty\}}\mid\widetilde{\mathcal F}_s]
\le\mathcal T_{r-s}g(X_s)\mathbb I_{\{s<\tau_\infty\}}
\end{equation}
for every $t\le s<r$ and $g\in C_b(D)$.
For $h\in C(\widehat D)$, apply \eqref{eq:abstract_one_step_recovery} to $g=h|_D-h(\triangle)$ and add $h(\triangle)$.
This gives
\begin{equation}
\label{eq:completed_conditional_domination}
\mathbb E^{\mathbb P}
\bigl[h(X_r)\mid\widetilde{\mathcal F}_s\bigr]
\le\widehat{\mathcal T}_{r-s}h(X_s), \qquad t\le s<r.
\end{equation}

Fix a finite grid $\pi:=\{t=r_0<r_1<\cdots<r_m\}$.
We claim that
\begin{align}\label{eq:thm:abstract_stable_recovery_main_claim}
F_\pi:=\bigl\{\mathbb P'\in\mathcal C_{t,x}:
\mathbb P'\circ(X_r)_{r\in\pi}^{-1}
=\mathbb P\circ(X_r)_{r\in\pi}^{-1}\bigr\}\neq\varnothing.
\end{align}
Let $H_i:=(X_{r_0},\ldots,X_{r_i})$ for $0\le i\le m$.
For each $0\le i<m$, let $\mu_i(z,\cdot)$ be a Borel regular conditional distribution of $X_{r_{i+1}}$ given $H_i=z$ under $\mathbb P$.
For every $h\in C(\widehat D)$, conditioning \eqref{eq:completed_conditional_domination} further on $H_i$ gives a $\mathbb P$-null set $N_{i,h}\subset\widetilde\Omega$ such that
\begin{align}\label{eq:integral_bounded_each_h}
    \int_{\widehat D}h\,d\mu_i(H_i)
    \le \widehat{\mathcal T}_{r_{i+1}-r_i}h(X_{r_i})
    \qquad \text{outside }N_{i,h}.
\end{align}
Choose a countable uniformly dense subset $\{h_j:j\ge1\}$ of $C(\widehat D)$ and set $N_i:=\bigcup_{j\ge1}N_{i,h_j}$.
Then $N_i$ is a $\mathbb P$-null set.
By uniform approximation, \eqref{eq:integral_bounded_each_h} extends to
\begin{align}\label{eq:integral_bounded_all_h}
    \int_{\widehat D}h\,d\mu_i(H_i)
    \le \widehat{\mathcal T}_{r_{i+1}-r_i}h(X_{r_i})
    \qquad \text{for all }h\in C(\widehat D),
    \quad \text{outside }N_i.
\end{align}
The characterization \eqref{eq:stable_marginal_separation} in Lemma~\ref{lem:stable_marginal_lifting} therefore yields
\[
    \mu_i(H_i)\in
    K_{r_i,r_{i+1}}^{\mathcal C}(X_{r_i}),
    \qquad \mathbb P\text{-almost surely}.
\]
Define
\[
    B_i
    :=
    \bigl\{z=(z_0,\ldots,z_i)\in\widehat D^{i+1}:
    \mu_i(z)\in K_{r_i,r_{i+1}}^{\mathcal C}(z_i)\bigr\}.
\]
Since $\mu_i$ is a Borel kernel, the map $z\mapsto(z_i,\mu_i(z))\in\widehat D\times\mathcal P(\widehat D)$ is Borel.
By Lemma~\ref{lem:stable_marginal_lifting}, the graph of $K_{r_i,r_{i+1}}^{\mathcal C}$ is Borel, so its inverse image under this map, namely $B_i$, is Borel.
The preceding inclusion gives $\mathbb P(H_i\in B_i)=1$.

Choose a Borel selector
$y\mapsto\eta_i(y)\in\mathcal C_{r_i,y}$, whose existence follows
from Proposition~
D.1 in the Supplementary Material, and define
\[
    \kappa_i(z)
    :=
    \begin{cases}
    L_{r_i,r_{i+1}}(z_i,\mu_i(z)),&z\in B_i,\\
    \eta_i(z_i),&z\notin B_i.
    \end{cases}
\]
Then $\kappa_i$ is Borel and
$\kappa_i(z)\in\mathcal C_{r_i,z_i}$ for every $z$.
Moreover,
\begin{align}\label{eq:kappa_i_mu_i}
    \kappa_i(z)\circ X_{r_{i+1}}^{-1}
    =\mu_i(z),
    \qquad z\in B_i.
\end{align}

Take any $\mathbb P_0\in\mathcal C_{t,x}$ and define recursively
\begin{align}\label{eq:concatenation_P_i_nu_i}
    \mathbb P_{i+1}
    :=\mathbb P_i\otimes_{r_i}\nu_i,
    \qquad
    \nu_i(\omega):=\kappa_i(H_i(\omega)),
    \qquad 0\le i<m.
\end{align}
Each $\nu_i$ is $\widetilde{\mathcal F}_{r_i}$-measurable and $r_i$-restart-compatible, with $\nu_i(\omega)\in\mathcal C_{r_i,X_{r_i}(\omega)}$.
Stability under concatenation therefore gives $\mathbb P_i\in\mathcal C_{t,x}$ for every $0\le i\le m$.
We prove inductively that
\begin{align}
    \mathbb P_i\circ H_i^{-1}
    =\mathbb P\circ H_i^{-1},
    \qquad 0\le i\le m.
\end{align}
The case $i=0$ follows from the prescribed initial history.
Suppose $\mathbb P_i\circ H_i^{-1}=\mathbb P\circ H_i^{-1}$ holds for some $i<m$.
Then $\mathbb P_i(H_i\in B_i)=\mathbb P(H_i\in B_i)=1$.
On this set, the conditional distribution of $X_{r_{i+1}}$
under the concatenated law is $\mu_i(H_i)$.
Since concatenation preserves the observations up to time $r_i$, \eqref{eq:kappa_i_mu_i} and \eqref{eq:concatenation_P_i_nu_i} yield
\[
    \mathbb P_{i+1}\circ H_{i+1}^{-1}
    =\mathbb P\circ H_{i+1}^{-1}.
\]
This completes the induction.
In particular, $\mathbb P_m\in F_\pi$, proving \eqref{eq:thm:abstract_stable_recovery_main_claim}.

Now let $\pi$ range over finite grids containing $t$ and otherwise
consisting of rational times greater than $t$.
By Lemma~\ref{lem:stable_marginal_lifting}, each $F_\pi$ is closed
in the compact set $\mathcal C_{t,x}$.
These sets have the finite-intersection property, since
\[
    F_{\pi_1}\cap\cdots\cap F_{\pi_k}
    \supseteq F_{\pi_1\cup\cdots\cup\pi_k}\ne\varnothing.
\]
Compactness therefore gives $\mathbb P_*\in\bigcap_\pi F_\pi$.
The laws $\mathbb P_*$ and $\mathbb P$ have the same finite-dimensional distributions at all rational times after $t$ and the same constant history up to $t$.
Right-continuity of the coordinate paths implies that these observations generate the canonical Borel $\sigma$-field.
Hence $\mathbb P_*=\mathbb P$, so $\mathbb P\in\mathcal C_{t,x}$.
This proves \eqref{eq:abstract_stable_recovery}.

If $\operatorname{Val}(\mathcal C_1)\le\operatorname{Val}(\mathcal C_2)$,
the monotonicity of $\mathcal R^{\mathcal T}$ established above gives
\[
 \mathcal C_1
 =\mathcal R^{\operatorname{Val}(\mathcal C_1)}
 \subseteq\mathcal R^{\operatorname{Val}(\mathcal C_2)}
 =\mathcal C_2.
\]
The reverse implication follows by taking suprema.
Thus \eqref{eq:valuation_order_embedding} holds, and injectivity follows.
\end{proof}

Applying the recovery identity to $\mathcal P(G)$ yields the following inclusion for every valuation dominated by $\mathcal V^G$.

\begin{corollary}
\label{prop:valuation_compatible_generator_substructure}
Let $G\in\mathfrak G_{\mathrm{Lyap}}$ and $\mathcal T\in\mathfrak V_{\le}(G)$.
Then $\mathcal R^{\mathcal T}\subseteq\mathcal P(G)$.
\end{corollary}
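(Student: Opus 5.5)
The plan is to combine the monotonicity of $\mathcal T\mapsto\mathcal R^{\mathcal T}$, established just before Proposition~\ref{thm:abstract_stable_recovery}, with the recovery identity \eqref{eq:abstract_stable_recovery} applied to $\mathcal C=\mathcal P(G)$.

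First we check that $\mathcal P(G)$ lies in $\mathfrak S_{\mathrm{loc,Lyap}}$, so that Proposition~\ref{thm:abstract_stable_recovery} applies to it.
\begin{itemize}
\item Proposition~\ref{prop:propertiesP_t,x_revised} shows that $\mathcal P(G)$ is a time-homogeneous, strongly stable (in particular stable) uncertainty structure.
\item Corollary~\ref{cor:canonical_regular_dynamic_virtual_structure} gives fiberwise nonemptiness.
\item Trivially $\mathcal P(G)\subseteq\mathcal P(G)$ with $G\in\mathfrak G_{\mathrm{Lyap}}$.
\end{itemize}
Hence $\mathcal P(G)\in\mathfrak S(G)\subseteq\mathfrak S_{\mathrm{loc,Lyap}}$. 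The recovery identity then gives $\mathcal P(G)=\mathcal R^{\operatorname{Val}(\mathcal P(G))}=\mathcal R^{\mathcal V^G}$.

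Next, let $\mathcal T\in\mathfrak V_{\le}(G)$, so that $\mathcal T\le\mathcal V^G$ pointwise on $\USC_b(D)$. The observation preceding Proposition~\ref{thm:abstract_stable_recovery} says that $\mathcal T_1\le\mathcal T_2$ implies $\mathcal R^{\mathcal T_1}\subseteq\mathcal R^{\mathcal T_2}$. We verify that this requires only the semigroup property of each valuation:
\begin{itemize}
\item Membership $\mathbb P\in\mathcal R^{\mathcal T}_{t,x}$ is equivalent to the one-step conditional domination \eqref{eq:valuation_terminal_conditional_domination}, and that equivalence uses only \ref{V5} for $\mathcal T$.
\item The right-hand side of \eqref{eq:valuation_terminal_conditional_domination} increases when $\mathcal T$ is replaced by a larger valuation.
\end{itemize}
Both $\mathcal T$ and $\mathcal V^G$ are in $\mathfrak V$, the latter by Corollary~\ref{cor:canonical_robust_valuation}. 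Therefore $\mathcal R^{\mathcal T}\subseteq\mathcal R^{\mathcal V^G}=\mathcal P(G)$ fiberwise, which is the claim.

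The only point needing care is the cemetery fibers and the prescribed-history condition. Both families impose the same constant-history requirement up to time $t$. At $x=\triangle$, a law in $\mathcal R^{\mathcal T}_{t,\triangle}$ is absorbed, and it coincides with $\delta_{\mathbf\triangle}$ exactly as in the recovery proof. We expect no real obstacle: the substance is already contained in Proposition~\ref{thm:abstract_stable_recovery}.
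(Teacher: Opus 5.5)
Your proof is correct and is essentially the paper's argument: the paper obtains the corollary by applying the recovery identity \eqref{eq:abstract_stable_recovery} to $\mathcal C=\mathcal P(G)$, which gives $\mathcal R^{\mathcal V^G}=\mathcal P(G)$. It then combines this with the monotonicity $\mathcal T\le\mathcal V^G\Rightarrow\mathcal R^{\mathcal T}\subseteq\mathcal R^{\mathcal V^G}$ recorded just before Proposition~\ref{thm:abstract_stable_recovery}, exactly as you do, including your check that $\mathcal P(G)\in\mathfrak S_{\mathrm{loc,Lyap}}$.
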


\subsection{Local and global domination}
\label{subsec:local_domination_bridge}

The main result of this subsection is Proposition~\ref{cor:local_global_domination}, which identifies local upper-generator bounds with global valuation domination.
We first establish four auxiliary lemmas.

\begin{lemma}
\label{prop:ambient_valuation_generator_bounds}
For $G\in\mathfrak G_{\mathrm{Lyap}}$ and $f\in C_b^\infty(D)$, $\overline{\mathcal G}^{\mathcal V^G}f\le G(\cdot,J^2f)$.
\end{lemma}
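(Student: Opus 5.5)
We must show that $\limsup^*_{h\searrow0}(\mathcal V^G_hf-f)/h \le G(x,J_x^2f)$ at every $x\in D$, for $f\in C_b^\infty(D)$. The plan is to use the supermartingale property of $\widetilde M^{f,n}$ under any $\mathbb P\in\mathcal P_y(G)$, take expectations at a small time, and control the error terms uniformly for $y$ near $x$. This uniformity is needed because the upper generator is a relaxed limit in both $h$ and $y$.

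Fix $x\in D$ and $\varepsilon>0$. By joint upper semicontinuity of $G$ in \ref{item:G1} and continuity of $y\mapsto J_y^2f$, the function $y\mapsto G(y,J_y^2f)$ is upper semicontinuous. Choose $n$ and $\delta>0$ such that $\overline B_{2\delta}(x)\subset D_n$ and $G(z,J^2_zf)\le G(x,J_x^2f)+\varepsilon$ for $z\in B_{2\delta}(x)$. Let $y\in B_\delta(x)$, let $\mathbb P\in\mathcal P_y(G)$, and set $\sigma:=\rho^{\,0}_{B_{2\delta}(x)}\wedge h$. Since $\sigma\le \tau_n\wedge h$ is a bounded stopping time, optional sampling applied to $\widetilde M^{f,n}$ yields
\[
\mathbb E^{\mathbb P}\bigl[f(X_\sigma)\mathbb I_{\{\sigma<\tau_\infty\}}\bigr]-f(y)\le \mathbb E^{\mathbb P}\Bigl[\int_0^\sigma G(X_r,J^2_{X_r}f)\mathbb I_{\{r<\tau_\infty\}}dr\Bigr].
\]
The right side is at most $(G(x,J_x^2f)+\varepsilon)\,\mathbb E^{\mathbb P}[\sigma\wedge\tau_\infty]$ when this constant is nonnegative. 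In general we write it as $(G(x,J^2_xf)+\varepsilon)h$ plus an error bounded by $C\,\mathbb E^{\mathbb P}[h-\sigma\wedge\tau_\infty]$, where $C$ bounds $|G(\cdot,J^2f)|$ on $\overline B_{2\delta}(x)$. Local boundedness follows from the Lipschitz bound in \ref{item:G1} together with $G(z,0)=0$. Next we compare $\mathbb E^{\mathbb P}[f(X_h)\mathbb I_{\{h<\tau_\infty\}}]$ with $\mathbb E^{\mathbb P}[f(X_\sigma)\mathbb I_{\{\sigma<\tau_\infty\}}]$. These agree off the event $\{\rho^{\,0}_{B_{2\delta}(x)}< h\wedge\tau_{\mathrm{kill}}\}$, up to killing after $\sigma$, which cannot occur since $\sigma=h$ on that part. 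Both corrections are therefore bounded by $2\|f\|_\infty\,\mathbb P(\text{exit of }B_{2\delta}(x)\text{ before }h)$ plus $C\,h\,\mathbb P(\tau_{\mathrm{kill}}<h)$.

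The main obstacle is the uniform estimate $\sup_{y\in B_\delta(x)}\sup_{\mathbb P\in\mathcal P_y(G)}\mathbb P(\rho^{\,0}_{B_{2\delta}(x)}\le h\wedge\tau_{\mathrm{kill}})=o(h)$, together with $\mathbb P(\tau_{\mathrm{kill}}\le h)=O(h)$. The killing bound follows from Proposition~\ref{prop:recovery_aggregation_generator_compatible_characteristics}\ref{prop:recovery_killing_compensator}: the intensity $k^{\beta^{\mathbb P}}$ is bounded by $L_{\overline B_{2\delta}(x)}$ before exit, since $\mathfrak A_G$ is locally bounded by Lemma~\ref{lem:support_correspondence_geometry}. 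For the exit bound, we use the recovered semimartingale decomposition in Proposition~\ref{prop:recovery_aggregation_generator_compatible_characteristics}\ref{prop:recovery_continuous_characteristics}. Before exit the drift $b^{\mathbb P}$ and covariance $a^{\mathbb P}$ are bounded by a constant $\Lambda$ depending only on the ball. For $h$ small, the drift moves the state at most $\Lambda h<\delta/2$, so exit requires $\sup_{s\le\sigma}|M_s|\ge\delta/2$. The BDG inequality with fourth moments then gives probability at most $C_4\Lambda^2h^2/\delta^4=o(h)$, uniformly in $y$ and $\mathbb P$. Alternatively, and avoiding the recovery machinery, one can apply the $G$-supermartingale inequality to a smooth bump $\varphi\in C_b^\infty(D)$ with $\varphi(y)=0$ near $x$ and $\varphi\ge1$ off $B_{2\delta}(x)$, iterated in the fourth-power form $|z-x|^4$ cut off. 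This gives the same uniform $O(h^2)$ bound directly from the sublinearity and local Lipschitz estimates on $G$.

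Finally, $\mathcal V^G_hf(y)=\sup_{\mathbb P\in\mathcal P_y(G)}\mathbb E^{\mathbb P}[f(X_h)\mathbb I_{\{h<\tau_\infty\}}]$. Combining the above, uniformly for $y\in B_\delta(x)$,
\[
\mathcal V^G_hf(y)-f(y)\le h\bigl(G(x,J_x^2f)+\varepsilon\bigr)+o(h),
\]
where the $o(h)$ collects $C\,h\cdot O(h)$ and the exit terms. Dividing by $h$, taking $\limsup$ as $h\searrow0$ and $y\to x$, and letting $\varepsilon\to0$ proves $\overline{\mathcal G}^{\mathcal V^G}f(x)\le G(x,J_x^2f)$.
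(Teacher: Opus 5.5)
Your proof is correct and follows the same route as the paper. You apply optional sampling to the $D_n$-stopped test supermartingale $\widetilde M^{f,n}$ at $h\wedge\rho$. A uniform $o(h)$ live-exit bound replaces the stopped terminal payoff by $f(X_h)\mathbb I_{\{h<\tau_\infty\}}$. A localized $O(h)$ killing bound together with upper semicontinuity of $G(\cdot,J^2f)$ controls the integral term.

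The differences are minor. You let the exit ball shrink with $\varepsilon$, so the oscillation estimate the paper draws from Proposition~E.3(i) is not needed. You bound over all of $\mathcal P_y(G)$ rather than using maximizing laws. You obtain the exit bound from a fourth-moment BDG estimate of order $O(h^2)$, instead of the Bernstein exponential bound behind Proposition~E.3(ii).
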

\begin{proof}
Fix $f\in C_b^\infty(D)$, $h_j\searrow0$, and $x_j\to x$.
Set $q_f:=G(\cdot,J^2f)$.
Choose $B_{2r}(x)\Subset D$ and set $\rho=\rho_{B_{2r}(x)}^{\,0}$.
For all sufficiently large $j$, choose $\mathbb P_j\in\mathcal P_{x_j}(G)$ attaining $\mathcal V_{h_j}^Gf(x_j)$. 
Choose $n$ such that $\overline{B_{2r}(x)}\subset D_n$.
For all sufficiently large $j$, optional sampling of the bounded
$D_n$-stopped test supermartingale at $h_j\wedge\rho$ gives
\begin{equation}
\label{eq:canonical_generator_local_supermartingale_bound}
 \mathbb E^{\mathbb P_j}[f(X_{h_j\wedge\rho})\mathbb I_{\{h_j\wedge\rho<\tau_\infty\}}]-f(x_j)
 \le\mathbb E^{\mathbb P_j}
       \left[\int_0^{h_j\wedge\rho}q_f(X_s)\mathbb I_{\{s<\tau_\infty\}}\,ds\right].
\end{equation}
The live-exit estimate in Proposition~
E.3\textup{(ii)} of the
Supplementary Material implies
\begin{equation}
\label{eq:canonical_generator_stopping_error}
 \mathcal V_{h_j}^Gf(x_j)
 -\mathbb E^{\mathbb P_j}[f(X_{h_j\wedge\rho})\mathbb I_{\{h_j\wedge\rho<\tau_\infty\}}]
 =o(h_j).
\end{equation}
For $\varepsilon>0$, choose $r_\varepsilon\in(0,r]$ so that
$q_f(y)\le q_f(x)+\varepsilon$ on $B_{r_\varepsilon}(x)$,
keeping $\rho$ fixed.
Local boundedness of $q_f$ and the killing and oscillation estimates in
Proposition~
E.3\textup{(i)} of the Supplementary Material yield
\[
 \frac1{h_j}\mathbb E^{\mathbb P_j}
       \left[\int_0^{h_j\wedge\rho}q_f(X_s)\mathbb I_{\{s<\tau_\infty\}}\,ds\right]
 \le q_f(x)+\varepsilon+o(1).
\]
Letting $\varepsilon\searrow0$ proves $\overline{\mathcal G}^{\mathcal V^G}f\le q_f$.
\end{proof}

\begin{lemma}
\label{prop:upper_domination_jet_structure}
Let $G\in\mathfrak G$, and let
$\mathcal T\in\mathfrak V$ satisfy
$\overline{\mathcal G}^{\mathcal T}f(x)\le G(x,J_x^2f)$ for every
$f\in C_b^\infty(D)$ and $x\in D$.
Then $\mathcal T$ satisfies Assumption~\ref{assume:local_upper_generator}.
Its upper generating function satisfies
$G_{\mathcal T}\in\mathfrak G$ and $G_{\mathcal T}\le G$.
Moreover, the jet-Lipschitz bound
\eqref{eq:local_uniform_jet_lipschitz_G} holds with the same local
constants $L_K$ as in \eqref{eq:local_uniform_jet_lipschitz_G} for $G$.

\end{lemma}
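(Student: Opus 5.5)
The plan is to obtain the finiteness and locality in Assumption~\ref{assume:local_upper_generator} directly from sublinearity of $\mathcal T_h$ and the upper bound by $G$. Then I invoke Proposition~\ref{prop:local_upper_generator_representation} to get $G_{\mathcal T}\in\mathfrak G$. Finally I read off $G_{\mathcal T}\le G$ and the Lipschitz constants from sublinearity.

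\emph{Step 1 (sublinearity of the upper generator).} Positive homogeneity in \ref{V1} gives $\mathcal T_h0=0$, since $\mathcal T_h 0=\mathcal T_h(2\cdot0)=2\mathcal T_h0$. Subadditivity gives $\Delta_h^{\mathcal T}(f+g)\le\Delta_h^{\mathcal T}f+\Delta_h^{\mathcal T}g$ and $\Delta_h^{\mathcal T}(\lambda f)=\lambda\Delta_h^{\mathcal T}f$ for $\lambda\ge0$. Because the relaxed upper limit $\limsup^*$ is subadditive and positively homogeneous whenever no $\infty-\infty$ occurs, $\overline{\mathcal G}^{\mathcal T}$ is subadditive and positively homogeneous on $C_b^\infty(D)$. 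Here the hypothesis guarantees $\overline{\mathcal G}^{\mathcal T}f<\infty$ everywhere, which rules out $\infty-\infty$.

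\emph{Step 2 (finiteness).} The upper bound $\overline{\mathcal G}^{\mathcal T}f(x)\le G(x,J_x^2f)<\infty$ is given. For the lower bound, I use $0=\mathcal T_h0\le\mathcal T_hf+\mathcal T_h(-f)$, which yields $\Delta_h^{\mathcal T}f(x)\ge-\Delta_h^{\mathcal T}(-f)(x)$. Restricting the relaxed limit to $y=x$ then gives
\[
\overline{\mathcal G}^{\mathcal T}f(x)\ge\limsup_{h\searrow0}\Delta_h^{\mathcal T}f(x)\ge-\limsup_{h\searrow0}\Delta_h^{\mathcal T}(-f)(x)\ge-\overline{\mathcal G}^{\mathcal T}(-f)(x)\ge-G(x,-J_x^2f)>-\infty .
\]

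\emph{Step 3 (locality).} Let $f,g\in C_b^\infty(D)$ coincide near $x$. Then $J_x^2(f-g)=0$, and $G(x,0)=0$ by positive homogeneity of $G(x,\cdot)$. Steps 1--2 give
\[
\overline{\mathcal G}^{\mathcal T}f(x)\le\overline{\mathcal G}^{\mathcal T}g(x)+\overline{\mathcal G}^{\mathcal T}(f-g)(x)\le\overline{\mathcal G}^{\mathcal T}g(x)+G(x,0)=\overline{\mathcal G}^{\mathcal T}g(x),
\]
and by symmetry equality holds. Hence $\mathcal T$ satisfies Assumption~\ref{assume:local_upper_generator}, and Proposition~\ref{prop:local_upper_generator_representation} yields $G_{\mathcal T}\in\mathfrak G$.

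\emph{Step 4 (comparison and Lipschitz constants).} Every jet $Z\in\mathfrak J$ at $x$ is realized as $J_x^2f$ by some $f\in C_b^\infty(D)$: take the quadratic polynomial with jet $Z$ at $x$, multiplied by a smooth cutoff equal to one near $x$. Therefore $G_{\mathcal T}(x,Z)=\overline{\mathcal G}^{\mathcal T}f(x)\le G(x,Z)$. For $x\in K$, sublinearity of $G_{\mathcal T}(x,\cdot)$ and this comparison give
\[
G_{\mathcal T}(x,Z_1)-G_{\mathcal T}(x,Z_2)\le G_{\mathcal T}(x,Z_1-Z_2)\le G(x,Z_1-Z_2)=G(x,Z_1-Z_2)-G(x,0)\le L_K\|Z_1-Z_2\|.
\]
Exchanging $Z_1$ and $Z_2$ gives \eqref{eq:local_uniform_jet_lipschitz_G} with the same constants $L_K$.

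The only delicate point is the lower bound in Step 2. It requires the inequality $\mathcal T_h f+\mathcal T_h(-f)\ge0$ together with the fact that the relaxed upper limit dominates the limit taken along $y=x$. Everything else is formal.
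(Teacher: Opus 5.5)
Your proof is correct and follows essentially the same route as the paper: finiteness from the two-sided bound obtained via $\Delta_h^{\mathcal T}f+\Delta_h^{\mathcal T}(-f)\ge0$, locality from $G(x,0)=0$ together with subadditivity, then Proposition~\ref{prop:local_upper_generator_representation}, realization of each jet by a smooth function, and the sublinearity estimate that transfers the constants $L_K$. The differences are cosmetic. The paper bounds the lower half-relaxed limit directly and states full second-order jet dependence, whereas you restrict to $y=x$; your Step~3 argument in fact only uses $J_x^2(f-g)=0$, so it also gives jet dependence.
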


\begin{proof}
Applying the assumed upper bound to both $f$ and $-f$, and using
$\Delta_h^{\mathcal T}f+\Delta_h^{\mathcal T}(-f)\ge0$, yields
\[
    -G(x,-J_x^2f)
    \le\liminf_{h\searrow0,*}\Delta_h^{\mathcal T}f(x)
    \le \overline{\mathcal G}^{\mathcal T}f(x)\le G(x,J_x^2f).
\]
Thus $\overline{\mathcal G}^{\mathcal T}$ is finite and sublinear on the smooth core $C_b^\infty(D)$.
If $J_x^2r=0$, the upper bound gives $\overline{\mathcal G}^{\mathcal T}(\pm r)(x)\le G(x,0)=0$.
Their sum is nonnegative, so both values are zero.
Sublinearity now implies $\overline{\mathcal G}^{\mathcal T}f(x)=\overline{\mathcal G}^{\mathcal T}g(x)$ whenever $J_x^2f=J_x^2g$; in particular, $\overline{\mathcal G}^{\mathcal T}$ is local.
Proposition~\ref{prop:local_upper_generator_representation} supplies $G_{\mathcal T}\in\mathfrak G$.
Realizing each jet by a smooth function gives $G_{\mathcal T}\le G$.
Finally, for $x\in K$,
\[
    G_{\mathcal T}(x,U)-G_{\mathcal T}(x,V)
    \le G_{\mathcal T}(x,U-V)
    \le G(x,U-V)\le L_K\|U-V\|.
\]
Interchanging $U,V$ proves the stated uniform estimate.
\end{proof}

The preceding two lemmas yield the following corollary.

\begin{corollary}\label{cor:V_G_is_local_Lyapunov}
If $G\in\mathfrak G_{\mathrm{Lyap}}$, then $\mathcal V^G\in\mathfrak V_{\mathrm{loc,Lyap}}$ and $G_{\mathcal V^G}\in\mathfrak G_{\mathrm{Lyap}}$.
\end{corollary}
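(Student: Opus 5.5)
The corollary follows by chaining the two preceding lemmas. Fix $G\in\mathfrak G_{\mathrm{Lyap}}$. By Corollary~\ref{cor:canonical_robust_valuation}, $\mathcal V^G$ is a time-homogeneous dynamic sublinear valuation rule, so $\mathcal V^G\in\mathfrak V$. Lemma~\ref{prop:ambient_valuation_generator_bounds} gives the pointwise bound
\[
\overline{\mathcal G}^{\mathcal V^G}f(x)\le G(x,J_x^2f),\qquad f\in C_b^\infty(D),\ x\in D.
\]
This is exactly the hypothesis of Lemma~\ref{prop:upper_domination_jet_structure}, with $\mathcal T=\mathcal V^G$; recall that $\mathfrak G_{\mathrm{Lyap}}\subseteq\mathfrak G$.

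Applying Lemma~\ref{prop:upper_domination_jet_structure} yields three facts. First, $\mathcal V^G$ satisfies Assumption~\ref{assume:local_upper_generator}, that is, finiteness and locality of the upper generator on $C_b^\infty(D)$, so $\mathcal V^G\in\mathfrak V_{\mathrm{loc}}$. Second, its upper generating function satisfies $G_{\mathcal V^G}\in\mathfrak G$. Third, $G_{\mathcal V^G}\le G$ pointwise on $D\times\mathfrak J$.

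It remains to check the Lyapunov condition for $G_{\mathcal V^G}$. This uses the downward closedness of $\mathfrak G_{\mathrm{Lyap}}$ in $\mathfrak G$, noted after \eqref{eq:local_lyapunov_valuation_class}. With the same $\phi$ and $C_\phi$ as for $G$, one has
\[
G_{\mathcal V^G}\bigl(x,\phi(x),\nabla\phi(x),\nabla^2\phi(x)\bigr)\le G\bigl(x,\phi(x),\nabla\phi(x),\nabla^2\phi(x)\bigr)\le C_\phi\phi(x),
\]
while $m_n^\phi\to\infty$ does not depend on the generating function. Hence $G_{\mathcal V^G}\in\mathfrak G_{\mathrm{Lyap}}$, and by definition \eqref{eq:local_lyapunov_valuation_class} this gives $\mathcal V^G\in\mathfrak V_{\mathrm{loc,Lyap}}$.

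None of these steps is a real obstacle, since all the analytic work sits in the two lemmas and in Corollary~\ref{cor:canonical_robust_valuation}. The only point to check is that $\phi\in C^2(D)$ need not lie in $C_b^\infty(D)$. This is harmless: the Lyapunov inequality is stated as a jet inequality for the function $G_{\mathcal V^G}$ on $D\times\mathfrak J$, and the bound $G_{\mathcal V^G}\le G$ holds for all jets, not only for jets of test functions.
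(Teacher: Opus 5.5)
Your proposal is correct and follows the paper's own proof. The paper likewise chains Lemma~\ref{prop:ambient_valuation_generator_bounds} and Lemma~\ref{prop:upper_domination_jet_structure} to get $\mathcal V^G\in\mathfrak V_{\mathrm{loc}}$, $G_{\mathcal V^G}\in\mathfrak G$ and $G_{\mathcal V^G}\le G$, and then transfers the Lyapunov bound from $G$ to $G_{\mathcal V^G}$. Your extra remarks, that $\mathcal V^G\in\mathfrak V$ comes from Corollary~\ref{cor:canonical_robust_valuation} and that $\phi\in C^2(D)$ causes no trouble because $G_{\mathcal V^G}\le G$ holds for all jets, are accurate details the paper leaves implicit.
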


\begin{proof}
Lemmas~\ref{prop:ambient_valuation_generator_bounds} and~\ref{prop:upper_domination_jet_structure} imply that $\mathcal V^G\in\mathfrak V_{\mathrm{loc}}$, $G_{\mathcal V^G}\in\mathfrak G$, and $G_{\mathcal V^G}\le G$.
The last inequality ensures that $G_{\mathcal V^G}$ satisfies the same Lyapunov bound as $G$, proving the claim.
\end{proof}

For $\mathcal T\in\mathfrak V$ and $f\in\USC_b(D)$, define the valuation orbit by
\begin{equation}
\label{eqn:PDE}
    v_f^{\mathcal T}(t,x):=\mathcal T_tf(x),
    \qquad (t,x)\in[0,\infty)\times D.
\end{equation}
The following lemma shows that the regularity and relaxed initial conditions of this orbit follow directly from the valuation axioms.

\begin{lemma}
\label{lemma:valuation_initial_continuity}
Let $\mathcal T\in\mathfrak V$.
For every $f\in\USC_b(D)$, $v_f^{\mathcal T}\in\USC_b([0,\infty)\times D)$ and $v_f^{\mathcal T}$ has initial datum $f$ in the relaxed sense.  
Moreover, for every $r\in C_b(D)$,
\begin{equation}
\label{eq:continuous_payoff_zero_time_convergence}
    \mathcal T_hr\longrightarrow r
    \quad\text{locally uniformly as }h\searrow0.
\end{equation}
\end{lemma}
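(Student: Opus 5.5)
The argument uses only the axioms \ref{V1}--\ref{V5} of Definition~\ref{def:SG}; no generator or Lyapunov structure is needed. Fix $f\in\USC_b(D)$. Boundedness of $v_f^{\mathcal T}$ is immediate from \ref{V2}. For upper semicontinuity at $(t,x)$, take $(t_n,x_n)\to(t,x)$. Time homogeneity lets us write $\mathcal T_{t_n}=\mathcal T_{0,t_n}$, so \ref{V4} with $(0,t_n)\to(0,t)$ gives
\[
\limsup_n v_f^{\mathcal T}(t_n,x_n)\le\Bigl(\limsup_n^*\mathcal T_{0,t_n}f\Bigr)(x)\le\mathcal T_tf(x).
\]
This is exactly joint upper semicontinuity on $[0,\infty)\times D$. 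Taking $t=0$ and using $\mathcal T_0=\operatorname{id}$ yields $\limsup_{(s,y)\to(0,x)}v_f^{\mathcal T}(s,y)\le f(x)=f^*(x)$. Since $v_f^{\mathcal T}=(v_f^{\mathcal T})^*$, this is the relaxed subsolution initial condition \eqref{eq:relaxed_subsolution_initial_condition}.

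For the relaxed supersolution initial condition \eqref{eq:relaxed_supersolution_initial_condition} we need $\liminf_{(s,y)\to(0,x),\,s>0}(v_f^{\mathcal T})_*(s,y)\ge f_*(x)$. We reduce this to continuous payoffs. Because $f_*$ is lower semicontinuous and bounded, it is the increasing limit of bounded Lipschitz functions $r_k\le f_*\le f$. Monotonicity \ref{V1} gives $\mathcal T_sf\ge\mathcal T_sr_k$. Hence it suffices to prove \eqref{eq:continuous_payoff_zero_time_convergence} for each $r\in C_b(D)$: local uniform convergence gives $\liminf_{(s,y)\to(0,x)}\mathcal T_sr_k(y)=r_k(x)$, and then we let $k\to\infty$. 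The lower envelope causes no trouble, because $\mathcal T_sr_k$ is continuous in $y$ only up to semicontinuity, but local uniform convergence controls $\inf$ over neighbourhoods directly.

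The main obstacle is \eqref{eq:continuous_payoff_zero_time_convergence}. The upper half, $\limsup_{h\searrow0,\,y\to x}\mathcal T_hr(y)\le r(x)$, is \ref{V4} at $t=0$, and a standard compactness argument upgrades it to local uniformity: if it failed, there would be $y_n$ in a compact set $K$ and $h_n\to0$ with $\mathcal T_{h_n}r(y_n)\ge r(y_n)+\varepsilon$. Passing to $y_n\to y$ and using continuity of $r$ contradicts \ref{V4}.

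For the lower half we use sublinearity: $0=\mathcal T_h0\le\mathcal T_hr+\mathcal T_h(-r)$, so $\mathcal T_hr\ge-\mathcal T_h(-r)$. Note $\mathcal T_h0=0$ follows from positive homogeneity. Since $-r\in C_b(D)\subset\USC_b(D)$, the upper half applied to $-r$ gives $\mathcal T_h(-r)\le -r+\varepsilon$ on $K$ for small $h$, hence $\mathcal T_hr\ge r-\varepsilon$ there. Combining both halves yields \eqref{eq:continuous_payoff_zero_time_convergence}.

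The only delicate point is checking that \ref{V4} is applied with the half-relaxed limit jointly in $(h,y)$, as in the convention fixed in Section~\ref{sec:valuation_uncertainty}. Axiom \ref{V3} is not needed.
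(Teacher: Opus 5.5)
Your proposal is correct and follows essentially the same route as the paper. Boundedness and joint upper semicontinuity come from \ref{V2} and \ref{V4}. Local uniform convergence comes from the sandwich $r\le-\limsup_{h\searrow0}^{*}\mathcal T_h(-r)\le\liminf_{h\searrow0,*}\mathcal T_hr\le\limsup_{h\searrow0}^{*}\mathcal T_hr\le r$, obtained from sublinearity and \ref{V4} at time zero; your compactness argument simply spells out why equal half-relaxed limits give local uniform convergence. The relaxed lower initial condition follows by monotone comparison with continuous minorants of $f$. The only cosmetic difference is in that last step: the paper takes arbitrary $g\in C_b(D)$ with $g\le f$ and uses the lower semicontinuous bound $-v_{-g}^{\mathcal T}\le(v_f^{\mathcal T})_*$, whereas you take an increasing Lipschitz sequence $r_k\uparrow f_*$ and control the lower envelope directly through local uniform convergence.
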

\begin{proof}
Conditions~\ref{V2} and~\ref{V4} imply boundedness and upper semicontinuity of $v_f^{\mathcal T}$.
For $r\in C_b(D)$, sublinearity and \ref{V4} at zero give
\[
 r\le-\limsup_{h\searrow0}^{*}\mathcal T_h(-r)
 \le\liminf_{h\searrow0,*}\mathcal T_hr
 \le\limsup_{h\searrow0}^{*}\mathcal T_hr\le r.
\]
Equality of the two half-relaxed limits with the continuous function $r$
implies \eqref{eq:continuous_payoff_zero_time_convergence}.
The relaxed upper initial condition follows from upper semicontinuity
and $v_f^{\mathcal T}(0,\cdot)=f$.

It remains to prove \eqref{eq:relaxed_supersolution_initial_condition}.
For any $g\in C_b(D)$ with $g\le f$, \ref{V1} give
$-v_{-g}^{\mathcal T}\le v_g^{\mathcal T}\le v_f^{\mathcal T}$.
Since $-v_{-g}^{\mathcal T}$ is lower semicontinuous, it is also bounded above by $(v_f^{\mathcal T})_*$.
Taking the lower limit at $(0,x)$ and using
\eqref{eq:continuous_payoff_zero_time_convergence} gives a lower bound
of $g(x)$.  The supremum of $g(x)$ over all bounded continuous minorants
$g$ of $f$ equals $f_*(x)$, proving
\eqref{eq:relaxed_supersolution_initial_condition}.
\end{proof}

The final lemma shows that the valuation orbit is a viscosity subsolution of the nonlinear parabolic equation \eqref{eq:mainHJB} associated with a local specification $G$ that bounds the valuation's upper generating function from above.

\begin{lemma}
\label{prop:PDErepn}
Let $G\in\mathfrak G$, and let
$\mathcal T\in\mathfrak V$ satisfy
\begin{equation}
\label{eq:upper_generator_domination}
    \overline{\mathcal G}^{\mathcal T}\varphi(x)
    \le G(x,J_x^2\varphi),
    \qquad \varphi\in C_b^\infty(D),\quad x\in D.
\end{equation}
For every $f\in\USC_b(D)$, the orbit $v_f^{\mathcal T}$ is a viscosity subsolution of \eqref{eq:mainHJB} and has initial datum $f$ in the relaxed sense.
In particular, $v_f\in\operatorname{Sub}_G^+(f)$.
\end{lemma}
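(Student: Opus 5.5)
Regularity and the initial condition are already settled: by Lemma~\ref{lemma:valuation_initial_continuity}, $v_f:=v_f^{\mathcal T}\in\USC_b([0,\infty)\times D)$ has initial datum $f$ in the relaxed sense, and $v_f(0,\cdot)=f$. So membership in $\operatorname{Sub}_G^+(f)$ reduces to the interior subsolution inequality. I would prove it by the standard semigroup argument: combine the semigroup property \ref{V5} with monotonicity and sublinearity \ref{V1}, and then use the upper generator bound \eqref{eq:upper_generator_domination} along a suitably chosen sequence.

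\emph{Setting up the test.} Fix $(t_0,x_0)$ with $t_0>0$ and $\psi\in C_b^\infty([0,\infty)\times D)$ such that $v_f-\psi$ has a global maximum $0$ at $(t_0,x_0)$. By the remark after the definitions, I may make the contact strict by adding $|t-t_0|^2+\chi(x)$, with $\chi$ smooth, bounded and vanishing to second order at $x_0$. This changes neither $\partial_t\psi(t_0,x_0)$ nor the jet $J^2_{x_0}\psi(t_0,\cdot)$.

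\emph{The key estimate.} For small $h>0$, set $\varphi_h:=\psi(t_0-h,\cdot)\in C_b^\infty(D)$. Since $v_f\le\psi$ everywhere, \ref{V5} and monotonicity give
\[
\psi(t_0,x)-\bigl(\psi(t_0,x)-v_f(t_0,x)\bigr)=v_f(t_0,x)=\mathcal T_h v_f(t_0-h,\cdot)(x)\le\mathcal T_h\varphi_h(x).
\]
At $x=x_0$ this reads $\psi(t_0,x_0)\le\mathcal T_h\varphi_h(x_0)$. I would then replace $\varphi_h$ by the fixed function $\varphi:=\psi(t_0,\cdot)$. Write $\varphi_h=\varphi-h\,\partial_t\psi(t_0,\cdot)+h^2R_h$ with $R_h$ uniformly bounded. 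Subadditivity and \ref{V2} then give
\[
\mathcal T_h\varphi_h\le\mathcal T_h\bigl(\varphi-h\,\partial_t\psi(t_0,\cdot)\bigr)+Ch^2.
\]
Set $\varphi^{(h)}:=\varphi-h\,\partial_t\psi(t_0,\cdot)$. By subadditivity again,
\[
\mathcal T_h\varphi^{(h)}\le\mathcal T_h\varphi+h\,\mathcal T_h\bigl(-\partial_t\psi(t_0,\cdot)\bigr).
\]
By \eqref{eq:continuous_payoff_zero_time_convergence}, $\mathcal T_h(-\partial_t\psi(t_0,\cdot))(x_0)\to-\partial_t\psi(t_0,x_0)$. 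Dividing by $h$ yields
\[
\partial_t\psi(t_0,x_0)\le\liminf_{h\searrow0}\Delta_h^{\mathcal T}\varphi(x_0)+o(1)\le\overline{\mathcal G}^{\mathcal T}\varphi(x_0)\le G\bigl(x_0,J^2_{x_0}\psi(t_0,\cdot)\bigr).
\]
This is exactly \eqref{eq:viscosity_subsolution_inequality} with $q=0$.

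\emph{The main obstacle.} The delicate step is the replacement of the $h$-dependent function $\varphi_h$ by $\varphi$ plus a linear correction. Subadditivity must be used in the right direction, and the remainders must be uniform. I would check that the difference quotient bound holds at the fixed point $x_0$; this needs only the upper limit, so the upper generator (a limsup, even relaxed) dominates. Note that no strictness or spatial perturbation is needed, because the inequality is used only at $x_0$. Boundedness of the derivatives of $\psi$ gives the $O(h^2)$ Taylor remainder uniformly in $x$, which is what makes the \ref{V2} estimate valid. The result is $v_f\in\operatorname{Sub}_G^+(f)$.
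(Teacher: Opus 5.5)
Your proof is correct and is essentially the paper's argument. The paper likewise sets $\varphi=\psi(t_0,\cdot)$, $q_n=\psi(t_0-h_n,\cdot)$ and $r=-\partial_t\psi(t_0,\cdot)$. It bounds $\mathcal T_{h_n}q_n(x_0)$ from above using \ref{V1}--\ref{V2} with an $o(h_n)$ Taylor remainder and the zero-time convergence \eqref{eq:continuous_payoff_zero_time_convergence}, and from below by $\psi(t_0,x_0)$ using the contact and \ref{V5}. (Your optional strict-contact perturbation $|t-t_0|^2$ is unbounded and so not an admissible $C_b^\infty$ test, but as you note it is never used and can simply be dropped.)
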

\begin{proof}
Let $\psi\in C_b^\infty([0,\infty)\times D)$ touch $v_f$ globally from above
at $(t_0,x_0)$, where $t_0>0$.
Choose $h_n\searrow0$ with $h_n<t_0$, and set
\[
    \varphi=\psi(t_0,\cdot),\qquad
    q_n=\psi(t_0-h_n,\cdot),\qquad
    r=-\partial_t\psi(t_0,\cdot).
\]
Then $\varphi,r,q_n\in C_b(D)$, and the bounded derivatives of $\psi$
imply
\begin{equation}
\label{eq:moving_payoff_first_order_expansion}
    e_n:=q_n-\varphi-h_nr,
    \qquad \|e_n\|_\infty=o(h_n).
\end{equation}
For any $x\in D$, properties~\ref{V1} and~\ref{V2} give
\begin{equation}
\label{eq:moving_payoff_relaxed_generator_sandwich}
\begin{aligned}
    \Delta_{h_n}^{\mathcal T}q_n(x)
    &\le \Delta_{h_n}^{\mathcal T}\varphi(x)
       +\mathcal T_{h_n}r(x)-r(x)
       +2\|e_n\|_\infty/h_n,\\
    \Delta_{h_n}^{\mathcal T}q_n(x)
    &\ge \Delta_{h_n}^{\mathcal T}\varphi(x)
       -\mathcal T_{h_n}(-r)(x)-r(x)
       -2\|e_n\|_\infty/h_n.
\end{aligned}
\end{equation}
By Lemma~\ref{lemma:valuation_initial_continuity} and the upper bound in \eqref{eq:moving_payoff_relaxed_generator_sandwich}, applied with $x=x_0$, and \eqref{eq:upper_generator_domination}, we obtain
\begin{align}\label{eq:PDErepn_keyineq_1}
    \limsup_{n\to\infty}
        \Delta_{h_n}^{\mathcal T}q_n(x_0)
    \le G(x_0,J_{x_0}^2\varphi).
\end{align}
At the contact point $(t_0,x_0)$, we have
$\psi(t_0,x_0)=v_f^{\mathcal T}(t_0,x_0)$ and, for all sufficiently large $n$,
\[
    q_n=\psi(t_0-h_n,\cdot)
    \ge v_f^{\mathcal T}(t_0-h_n,\cdot)
    =\mathcal T_{t_0-h_n}f.
\]
Monotonicity and \ref{V5} therefore give
\[
    \mathcal T_{h_n}q_n(x_0)
    \ge
    \mathcal T_{h_n}\bigl(\mathcal T_{t_0-h_n}f\bigr)(x_0)
    =\mathcal T_{t_0}f(x_0)
    =\psi(t_0,x_0).
\]
Subtracting $q_n(x_0)=\psi(t_0-h_n,x_0)$ and dividing by $h_n$ yields
\begin{align}\label{eq:PDErepn_keyineq_2}
    \Delta_{h_n}^{\mathcal T}q_n(x_0)
    \ge
    \frac{\psi(t_0,x_0)-\psi(t_0-h_n,x_0)}{h_n}
    \longrightarrow \partial_t\psi(t_0,x_0).
\end{align}
Combining \eqref{eq:PDErepn_keyineq_1} and \eqref{eq:PDErepn_keyineq_2} yields \eqref{eq:viscosity_subsolution_inequality} with $q=0$.
The regularity and initial conditions follow from Lemma~\ref{lemma:valuation_initial_continuity}.
\end{proof}

The following proposition establishes the equivalence between the global domination and local domination under the Lyapunov condition.

\begin{proposition}
\label{cor:local_global_domination}
Let $G\in\mathfrak G_{\mathrm{Lyap}}$.  For every
$\mathcal T\in\mathfrak V$,
\begin{equation}
\label{eq:local_global_valuation_domination}
\begin{aligned}
    \mathcal T\in\mathfrak V_{\le}(G)
    &\quad\Longleftrightarrow\quad
    \overline{\mathcal G}^{\mathcal T}\varphi(x)
      \le G(x,J_x^2\varphi)
      \quad\text{for all }\varphi\in C_b^\infty(D),\ x\in D
    \\
    &\quad\Longleftrightarrow\quad
    \mathcal T\in\mathfrak V_{\mathrm{loc}}
    \ \text{and}\ G_{\mathcal T}\le G.
\end{aligned}
\end{equation}
\end{proposition}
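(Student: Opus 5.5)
We prove the cycle of implications: first $\mathcal T\in\mathfrak V_{\le}(G)$ gives the generator bound, then the generator bound gives membership in $\mathfrak V_{\mathrm{loc}}$ with $G_{\mathcal T}\le G$, and finally that condition gives $\mathcal T\in\mathfrak V_{\le}(G)$.

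\emph{First implication.} Suppose $\mathcal T\le\mathcal V^G$. Then for $\varphi\in C_b^\infty(D)$ and $h>0$ we have $\Delta_h^{\mathcal T}\varphi\le\Delta_h^{\mathcal V^G}\varphi$ pointwise, since both quotients subtract the same $\varphi$. Taking the upper half-relaxed limit gives $\overline{\mathcal G}^{\mathcal T}\varphi\le\overline{\mathcal G}^{\mathcal V^G}\varphi$. Lemma~\ref{prop:ambient_valuation_generator_bounds} then bounds the right-hand side by $G(\cdot,J^2\varphi)$.

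\emph{Second implication.} This is exactly Lemma~\ref{prop:upper_domination_jet_structure}. It yields Assumption~\ref{assume:local_upper_generator} together with $G_{\mathcal T}\le G$.

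\emph{Third implication.} This is the main step. Assume $\mathcal T\in\mathfrak V_{\mathrm{loc}}$ and $G_{\mathcal T}\le G$. By \eqref{eq:thm:representationgenerator_eq1}, for every $\varphi\in C_b^\infty(D)$ we get $\overline{\mathcal G}^{\mathcal T}\varphi(x)=G_{\mathcal T}(x,J_x^2\varphi)\le G(x,J_x^2\varphi)$. So the hypothesis \eqref{eq:upper_generator_domination} of Lemma~\ref{prop:PDErepn} holds.

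Fix $f\in\USC_b(D)$. Lemma~\ref{prop:PDErepn} gives $v_f^{\mathcal T}\in\operatorname{Sub}_G^+(f)$. Indeed, $v_f^{\mathcal T}$ is bounded and upper semicontinuous on $[0,\infty)\times D$ by Lemma~\ref{lemma:valuation_initial_continuity}, and $v_f^{\mathcal T}(0,\cdot)=f$.

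Corollary~\ref{cor:usc_terminal_realization} applies because $G\in\mathfrak G_{\mathrm{Lyap}}$. For every $T>0$ and $x\in D$ it produces $\mathbb P^{T,x}\in\mathcal P_x(G)$ with
\[
\mathcal T_Tf(x)=v_f^{\mathcal T}(T,x)\le\mathbb E^{\mathbb P^{T,x}}[f(X_T)\mathbb I_{\{T<\tau_\infty\}}]\le\mathcal V_T^Gf(x).
\]
The case $T=0$ is trivial, since both sides equal $f(x)$. Hence $\mathcal T\le\mathcal V^G$, that is, $\mathcal T\in\mathfrak V_{\le}(G)$.

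\emph{Where the difficulty lies.} The substantive difficulty is concentrated in the stochastic realization result, Theorem~\ref{prop:one_step_usc_subsolution_realization} via Corollary~\ref{cor:usc_terminal_realization}, which we take as given. The only point to check here is that the subsolution notion delivered by Lemma~\ref{prop:PDErepn} matches the one required by the corollary. Both use global smooth test functions in $C_b^\infty([0,\infty)\times D)$ touching from above at points with $t_0>0$, and both impose the closed-domain initial bound $u(0,\cdot)\le f$. So they coincide and the argument closes.

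Note that the second implication needs no Lyapunov condition on $G_{\mathcal T}$ itself. The Lyapunov condition enters only through $G$, in Lemma~\ref{prop:ambient_valuation_generator_bounds} and in the realization corollary.
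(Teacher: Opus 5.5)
Your proposal is correct and follows the paper's proof essentially step for step. The forward direction uses monotonicity of the upper half-relaxed limit together with Lemma~\ref{prop:ambient_valuation_generator_bounds}; the equivalence with $G_{\mathcal T}\le G$ uses Lemma~\ref{prop:upper_domination_jet_structure} and the jet representation; and global domination $\mathcal T\le\mathcal V^G$ comes from Lemma~\ref{prop:PDErepn} combined with Corollary~\ref{cor:usc_terminal_realization}. Arranging these as one cycle instead of two separate equivalences is only a cosmetic reorganization.
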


\begin{proof}
If $\mathcal T\in\mathfrak V_{\le}(G)$, by Lemma~\ref{prop:ambient_valuation_generator_bounds}, we have
\[
    \overline{\mathcal G}^{\mathcal T}\varphi
    \le\overline{\mathcal G}^{\mathcal V^G}\varphi
    \le G(\cdot,J^2\varphi),\qquad \varphi\in C_b^\infty(D).
\]
Conversely, suppose the upper-generator inequality holds.  For
$f\in\USC_b(D)$, Lemma~\ref{prop:PDErepn} shows that
$v_f^{\mathcal T}(s,y):=\mathcal T_s f(y)$ is a bounded upper semicontinuous viscosity subsolution with $v_f^{\mathcal T}(0,\cdot)=f$.
For $t>0$ and $x\in D$, Corollary~\ref{cor:usc_terminal_realization} therefore provides $\mathbb P\in\mathcal P_x(G)$ such that
\[
    \mathcal T_t f(x)
    \le\mathbb E^{\mathbb P}[f(X_t)\mathbb I_{\{t<\tau_\infty\}}]
    \le\mathcal V_t^Gf(x).
\]
When $t=0$, it is immediate to see $\mathcal T_0f=\mathcal V_0^Gf=f$, proving first equivalence.
Finally, Lemma~\ref{prop:upper_domination_jet_structure} shows that the upper-generator bound implies $\mathcal T\in\mathfrak V_{\mathrm{loc}}$ and $G_{\mathcal T}\le G$; the converse follows from the jet representation of the upper generator.
\end{proof}

In particular,
\begin{equation}
\label{eq:local_lyapunov_union_dominated_classes}
 \mathfrak V_{\mathrm{loc,Lyap}}
 =\bigcup_{G\in\mathfrak G_{\mathrm{Lyap}}}\mathfrak V_{\le}(G).
\end{equation}
Indeed, $\mathcal T\in\mathfrak V_{\le}(G)$ implies
$G_{\mathcal T}\le G$, so $G_{\mathcal T}$ satisfies the same Lyapunov
bound as $G$. Conversely, for
$\mathcal T\in\mathfrak V_{\mathrm{loc,Lyap}}$, apply
Proposition~\ref{cor:local_global_domination} with $G=G_{\mathcal T}$.

\subsection{Representing uncertainty structure}
\label{subsec:recovered_fiber_properties_proofs}

Throughout this subsection, fix $G\in\mathfrak G_{\mathrm{Lyap}}$ and $\mathcal T\in\mathfrak V_{\le}(G)$.
We first show that the recovered family $\mathcal R^{\mathcal T}$ is time-homogeneous, provides an attained representation of $\mathcal T$, and is stable.
We conclude by completing the proof of Theorem~\ref{thm:local_lyapunov_valuation_structure_correspondence}.
We first establish time homogeneity.

\begin{proposition}
\label{prop:recovered_time_homogeneity}
Let $\mathcal T\in\mathfrak V$.
For every $g\in\USC_b(D)$ and $R\ge t\ge0$, the valuation orbits satisfy
\begin{equation}
\label{eq:valuation_orbit_delay_identity}
 Y_{t+s}^{\mathcal T;R,g}\circ\theta_t
 =
 Y_s^{\mathcal T;R-t,g},
 \qquad 0\le s\le R-t.
\end{equation}
Consequently, $\mathcal R^{\mathcal T}$ is a time-homogeneous uncertainty structure.
\end{proposition}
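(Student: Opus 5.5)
The plan has two parts. First we verify the delay identity \eqref{eq:valuation_orbit_delay_identity} pathwise. Then we transfer the supermartingale property through the bijection $\mathbb P\mapsto\mathbb P\circ\theta_t^{-1}$ between laws with constant history up to $0$ and laws with constant history up to $t$.

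\emph{The identity.} For $\omega\in\widetilde\Omega$ and $0\le s\le R-t$, we have $X_{t+s}(\theta_t\omega)=\omega(s)=X_s(\omega)$. We also have $\tau_\infty(\theta_t\omega)=\tau_\infty(\omega)+t$ whenever $\omega(0)\ne\triangle$; the case $\omega(0)=\triangle$ is trivial, since both sides vanish. Hence $\mathbb I_{\{t+s<\tau_\infty\}}\circ\theta_t=\mathbb I_{\{s<\tau_\infty\}}$. Moreover, $R-(t+s)=(R-t)-s$, so
\[
\mathcal T_{R-t-s}g(X_{t+s}\circ\theta_t)\,\mathbb I_{\{t+s<\tau_\infty\}}\circ\theta_t
=\mathcal T_{(R-t)-s}g(X_s)\,\mathbb I_{\{s<\tau_\infty\}},
\]
which is the claim. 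The only care needed is with the convention that the product vanishes after $\tau_\infty$, and with $\theta_t$ mapping $\widetilde\Omega$ into itself (a delayed path is still continuous before its lifetime and absorbed afterwards).

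\emph{Time homogeneity.} We must show $\mathcal R^{\mathcal T}_{t,x}=\mathcal R^{\mathcal T}_{0,x}\circ\theta_t^{-1}$. The cemetery fiber is handled by the definition, or directly. For $x\in D$, we set up a correspondence between laws. If $\mathbb P\in\mathcal R_{0,x}^{\mathcal T}$, then $\mathbb Q:=\mathbb P\circ\theta_t^{-1}$ satisfies $X_s=x$ on $[0,t]$ $\mathbb Q$-a.s., because $\omega(0)=x$ $\mathbb P$-a.s. Conversely, every $\mathbb Q$ with constant history $x$ up to $t$ equals $\mathbb P\circ\theta_t^{-1}$ with $\mathbb P:=\mathbb Q\circ\sigma_t^{-1}$, where $(\sigma_t\omega)(s):=\omega(s+t)$ is the left shift. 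Indeed, $\theta_t\circ\sigma_t=\mathrm{id}$ on the set of paths constant on $[0,t]$, and $\sigma_t$ is Borel, being continuous off a null issue, or simply measurable via coordinates.

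Next we transfer the filtration. For $s\ge0$, $\theta_t^{-1}(\widetilde{\mathcal F}_{t+s})=\widetilde{\mathcal F}_s$ up to the constant initial segment: $X_r\circ\theta_t=X_{(r-t)^+}$, and on the $\mathbb Q$-full set of paths constant on $[0,t]$, $\widetilde{\mathcal F}_{t+s}$ is generated by $X_{t+u}$, $u\le s$. Consequently, for bounded $\widetilde{\mathcal F}_s$-measurable $F\ge0$ and $s\le r\le R-t$,
\[
\mathbb E^{\mathbb Q}\bigl[(F\circ\sigma_t)\,(Y^{\mathcal T;R,g}_{t+r}-Y^{\mathcal T;R,g}_{t+s})\bigr]
=\mathbb E^{\mathbb P}\bigl[F\,(Y^{\mathcal T;R-t,g}_r-Y^{\mathcal T;R-t,g}_s)\bigr],
\]
using the identity above. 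Every bounded nonnegative $\widetilde{\mathcal F}_{t+s}$-measurable multiplier agrees $\mathbb Q$-a.s. with one of the form $F\circ\sigma_t$.

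\emph{Conclusion.} As $R$ ranges over $[t,\infty)$, the horizon $R-t$ ranges over $[0,\infty)$. Therefore the supermartingale property of all orbits under $\mathbb Q$ on $[t,R]$ is equivalent to that under $\mathbb P$ on $[0,R-t]$, which gives the fiber identity. The uncertainty-structure requirements (constant past, cemetery fiber $\{\delta_{\mathbf\triangle}\}$) hold by definition. The main, though mild, obstacle is the measurability bookkeeping: the shift $\sigma_t$ is not defined as a map back into paths with the exact convention, and the filtrations agree only $\mathbb Q$-a.s. We handle this by restricting to the Borel set of paths constant on $[0,t]$, on which $\theta_t$ is a Borel isomorphism.
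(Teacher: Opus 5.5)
Your proposal is correct and follows essentially the same route as the paper. You check the delay identity pathwise, then transfer the supermartingale inequalities between $\mathcal R_{0,x}^{\mathcal T}$ and $\mathcal R_{t,x}^{\mathcal T}$ via $\theta_t$ and the left shift (the paper's $\eta_t$), using $\sigma_t\circ\theta_t=\mathrm{id}$ and the correspondence of multipliers. The only cosmetic differences are that you package the two inclusions as one bijection, where the paper proves them separately, and that $\theta_t^{-1}(\widetilde{\mathcal F}_{t+s})=\widetilde{\mathcal F}_s$ actually holds exactly, not merely up to the constant initial segment.
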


\begin{proof}
Fix $t\ge0$ and $x\in D$.
For $s\ge0$, the definition of $\theta_t$ gives $X_{t+s}\circ\theta_t=X_s$ and $\mathbb I_{\{t+s<\tau_\infty\}}\circ\theta_t
 =\mathbb I_{\{s<\tau_\infty\}}$.
Consequently, for $R\ge t$ and $0\le s\le R-t$,
\[
Y_{t+s}^{\mathcal T;R,g}\circ\theta_t
=
\mathcal T_{R-t-s}g(X_s)
\mathbb I_{\{s<\tau_\infty\}}
=Y_s^{\mathcal T;R-t,g},
\]
which proves \eqref{eq:valuation_orbit_delay_identity}.

We now show that $\mathcal R^{\mathcal T}$ is time-homogeneous, that is,
\[
    \mathcal R_{t,x}^{\mathcal T}
    =
    \mathcal R_x^{\mathcal T}\circ\theta_t^{-1},
    \qquad
    \mathcal R_x^{\mathcal T}:=\mathcal R_{0,x}^{\mathcal T}.
\]
Fix $t\ge0$.
The cemetery case follows from
$\mathcal R_{t,\triangle}^{\mathcal T}
=\mathcal R_{\triangle}^{\mathcal T}
=\{\delta_{\mathbf\triangle}\}$,
since $\theta_t$ fixes the constant cemetery path.

Let $x\in D$.
We first prove the inclusion $\mathcal R_x^{\mathcal T}\circ\theta_t^{-1}\subseteq\mathcal R_{t,x}^{\mathcal T}$.
Let $\mathbb P\in\mathcal R_x^{\mathcal T}$ and set $\mathbb P':=\mathbb P\circ\theta_t^{-1}$.
By construction, $\mathbb P'$ has the prescribed history at $(t,x)$:
\[
    \mathbb P'(X_s=x\text{ for all }s\in[0,t])=1.
\]
Moreover, $\theta_t^{-1}(\widetilde{\mathcal F}_{t+r})=\widetilde{\mathcal F}_r$ for every $r\ge0$.
Thus, for every $R\ge t$, $g\in\USC_b(D)$, $0\le r\le s\le R-t$, and bounded nonnegative $\widetilde{\mathcal F}_{t+r}$-measurable $F$, identity~\eqref{eq:valuation_orbit_delay_identity} gives
\[
\mathbb E^{\mathbb P'}\!\left[
F\bigl(Y_{t+s}^{\mathcal T;R,g}
       -Y_{t+r}^{\mathcal T;R,g}\bigr)\right]=
\mathbb E^{\mathbb P}\!\left[
(F\circ\theta_t)
\bigl(Y_s^{\mathcal T;R-t,g}
       -Y_r^{\mathcal T;R-t,g}\bigr)\right]
\le0.
\]
The inequality follows from $\mathbb P\in\mathcal R_x^{\mathcal T}$, since $F\circ\theta_t$ is $\widetilde{\mathcal F}_r$-measurable.
Hence $\mathbb P'\in\mathcal R_{t,x}^{\mathcal T}$, proving $\mathcal R_x^{\mathcal T}\circ\theta_t^{-1}\subseteq\mathcal R_{t,x}^{\mathcal T}$.

To prove the reverse inclusion $\mathcal R_{t,x}^{\mathcal T}\subseteq\mathcal R_x^{\mathcal T}\circ\theta_t^{-1}$, let $\mathbb P\in\mathcal R_{t,x}^{\mathcal T}$.
Define the forward shift by $(\eta_t\omega)(q):=\omega(t+q)$ and set $\mathbb P':=\mathbb P\circ\eta_t^{-1}$.
Since $\mathbb P$ has the prescribed history at $(t,x)$, we have $\mathbb P'(X_0=x)=1$ and $\mathbb P'\circ\theta_t^{-1}=\mathbb P$.
It therefore remains to show that $\mathbb P'\in\mathcal R_x^{\mathcal T}$.

Fix $R\ge0$, $g\in\USC_b(D)$, $0\le r\le s\le R$, and a bounded nonnegative $\widetilde{\mathcal F}_r$-measurable $H$.
Since $\eta_t^{-1}(\widetilde{\mathcal F}_r)\subseteq\widetilde{\mathcal F}_{t+r}$, the function $H\circ\eta_t$ is $\widetilde{\mathcal F}_{t+r}$-measurable.
Using $\mathbb P=\mathbb P'\circ\theta_t^{-1}$, $\eta_t\circ\theta_t=\mathrm{id}$, and \eqref{eq:valuation_orbit_delay_identity}, we obtain
\[
\mathbb E^{\mathbb P'}\!\left[
H\bigl(Y_s^{\mathcal T;R,g}
       -Y_r^{\mathcal T;R,g}\bigr)\right]=
\mathbb E^{\mathbb P}\!\left[
(H\circ\eta_t)
\bigl(Y_{t+s}^{\mathcal T;t+R,g}
       -Y_{t+r}^{\mathcal T;t+R,g}\bigr)\right]
\le0.
\]
Thus $\mathbb P'\in\mathcal R_x^{\mathcal T}$, and consequently $\mathbb P=\mathbb P'\circ\theta_t^{-1}\in\mathcal R_x^{\mathcal T}\circ\theta_t^{-1}$.
This proves the reverse inclusion and completes the proof of time homogeneity.
\end{proof}

To show that $\mathcal R^{\mathcal T}$ provides an attained representation of $\mathcal T$, we first establish two preparatory lemmas.
For $h>0$, set
\begin{equation}
\label{eq:def_one_step_valuation_dominated_fiber}
 \mathcal D_h^{\mathcal T}(x):=
 \{\mathbb P\in\mathcal P_x(G):
   \mathbb E^{\mathbb P}[g(X_h)\mathbb I_{\{h<\tau_\infty\}}]\le\mathcal T_hg(x)
   \text{ for all }g\in\USC_b(D)\},
\end{equation}
and, for $\varphi\in\USC_b(D)$, set
\begin{equation}
\label{eq:def_one_step_valuation_calibration_fiber}
 \mathcal C_h^{\mathcal T}(x;\varphi):=
 \{\mathbb P\in\mathcal D_h^{\mathcal T}(x):
   \mathbb E^{\mathbb P}[\varphi(X_h)\mathbb I_{\{h<\tau_\infty\}}]=\mathcal T_h\varphi(x)\}.
\end{equation}

\begin{lemma}
\label{prop:one_step_polar_calibration_valuation}
For $h>0$ and $\varphi\in\USC_b(D)$,
$\mathcal C_h^{\mathcal T}(x;\varphi)$ is nonempty and compact for every
$x\in D$. Its graph is Borel, and it admits a Borel selector.
\end{lemma}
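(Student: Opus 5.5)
Throughout we are in the standing setting of this subsection: $G\in\mathfrak G_{\mathrm{Lyap}}$ and $\mathcal T\in\mathfrak V_{\le}(G)$. The proof splits into three parts: compactness, nonemptiness, and measurability.

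\emph{Compactness.} By Proposition~\ref{prop:propertiesP_t,x_revised}, the fiber $\mathcal P_x(G)$ is weakly compact. Proposition~E.11 of the Supplementary Material applies by Proposition~\ref{prop:recovery_aggregation_generator_compatible_characteristics}\ref{prop:recovery_killing_compensator} and Lemma~\ref{lem:admissible_nonexplosion}. It shows that, for each fixed $g\in\USC_b(D)$, the map
\[
\mathbb P\longmapsto\mathbb E^{\mathbb P}[g(X_h)\mathbb I_{\{h<\tau_\infty\}}]
\]
is upper semicontinuous on $\mathcal P_x(G)$. Hence each constraint set $\{\mathbb E^{\mathbb P}[g(X_h)\mathbb I_{\{h<\tau_\infty\}}]\le\mathcal T_hg(x)\}$ is closed, and so $\mathcal D_h^{\mathcal T}(x)$ is compact. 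For the calibration set, the inequality ``$\ge\mathcal T_h\varphi(x)$'' is a superlevel set of the same upper semicontinuous map, so it is also closed. Intersecting it with $\mathcal D_h^{\mathcal T}(x)$ shows that $\mathcal C_h^{\mathcal T}(x;\varphi)$ is compact.

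\emph{Nonemptiness.} This is the main step, and I would prove it by a minimax argument. Let $K:=\{\mathbb P\circ X_h^{-1}:\mathbb P\in\mathcal P_x(G)\}\subseteq\mathcal P(\widehat D)$. By the marginal continuity argument of Lemma~\ref{lem:stable_marginal_lifting}, applied to $\mathcal C=\mathcal P(G)$, the set $K$ is compact and convex. Completing with the cemetery as in \eqref{eq:cemetery_completed_valuation}, the functional
\[
p(h'):=h'(\triangle)+\mathcal T_h\bigl(h'|_D-h'(\triangle)\bigr)(x)
\]
is sublinear on $C(\widehat D)$. Since $\mathcal T\le\mathcal V^G$, we have $p\le\sup_{\mu\in K}\int h'\,d\mu$.

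We want $\mu\in K$ with $\int h'\,d\mu\le p(h')$ for all continuous $h'$ and $\int\varphi\,d\mu=\mathcal T_h\varphi(x)$. I would proceed as follows.
\begin{enumerate}[label=(\alph*)]
\item Take a Hahn--Banach linear functional $\ell\le p$ on $C(\widehat D)$ that attains $p$ along a continuous approximating sequence $\varphi_k\searrow\varphi$ (by \ref{V3}). The bound $\ell\le p$ and monotonicity force $\ell$ to be a positive measure of mass $1$.
\item Show $\ell\in K$. If not, strict separation of the compact convex set $K$ from $\ell$ gives some $h'$ with $\int h'\,d\ell>\sup_K\int h'\,d\mu\ge p(h')$, contradicting $\ell\le p$.
\item Extend the inequality from continuous functions to $g\in\USC_b(D)$ by monotone approximation and \ref{V3}.
\end{enumerate}
The calibration equality is obtained by compactness: take a limit of the measures $\mu_k$ calibrated to $\varphi_k$, and pass to the limit using upper semicontinuity together with $\mathcal T_h\varphi_k\searrow\mathcal T_h\varphi$. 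Finally, we lift $\mu$ to a law in $\mathcal P_x(G)$ via $\mathbb P\mapsto\mathbb P\circ X_h^{-1}$. I expect the delicate point to be coordinating the calibration with the full family of domination constraints, and I would handle it by taking a decreasing-sequence limit of the $\mu_k$ inside the compact set $K$.

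\emph{Borel graph and selector.} By the separation characterization, domination of the marginal need only be tested on a countable dense family $\{h_j\}\subset C(\widehat D)$; monotone approximation then recovers all $g\in\USC_b(D)$. The map $x\mapsto\widehat{\mathcal T}_hh_j(x)$ is Borel, being upper semicontinuous, and the map $\mathbb P\mapsto\mathbb E^{\mathbb P}[\cdot]$ is Borel. The calibration condition is Borel because $\mathcal T_h\varphi$ is upper semicontinuous and the expectation map is Borel. Intersecting these with the Borel graph of $x\mapsto\mathcal P_x(G)$, which is closed by Proposition~\ref{prop:propertiesP_t,x_revised}, gives a Borel graph with nonempty compact values. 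Proposition~D.1 of the Supplementary Material then yields a Borel selector.
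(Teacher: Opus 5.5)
Two steps fail as written.

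\emph{Compactness.} You conclude that $\{\mathbb P:\mathbb E^{\mathbb P}[g(X_h)\mathbb I_{\{h<\tau_\infty\}}]\le\mathcal T_hg(x)\}$ is closed because $\Lambda_g:\mathbb P\mapsto\mathbb E^{\mathbb P}[g(X_h)\mathbb I_{\{h<\tau_\infty\}}]$ is upper semicontinuous. Upper semicontinuity only makes \emph{super}level sets $\{\Lambda_g\ge c\}$ closed; a sublevel set $\{\Lambda_g\le c\}$ need not be. For example, take $g$ the indicator of a point and laws whose time-$h$ marginals approach that point. So your argument works for the calibration constraint but not for $\mathcal D_h^{\mathcal T}(x)$.

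The missing step, which the paper supplies, is a reduction to continuous tests. For $g\in C_b(D)$, Proposition~E.11 applied to $\pm g$ makes $\Lambda_g$ continuous. Choosing $g_n\in C_b(D)$ with $g_n\searrow g$, condition \ref{V3} and bounded convergence then give $\mathcal D_h^{\mathcal T}(x)=\{\mathbb P\in\mathcal P_x(G):\Lambda_g(\mathbb P)\le\mathcal T_hg(x)\text{ for all }g\in C_b(D)\}$, which is an intersection of closed sets. You use this approximation later in your proposal, but not where it is actually needed.

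\emph{Nonemptiness.} Hahn--Banach on $C(\widehat D)$ only lets you calibrate at some $\varphi_k\in C(\widehat D)$, and there $p(\varphi_k)=c_k+\mathcal T_h(\varphi_k|_D-c_k)(x)$ with $c_k:=\varphi_k(\triangle)$. Because of killing, $\mathcal T_h$ does not commute with adding constants. So $\int\varphi_k\,d\ell_k=p(\varphi_k)$ is equivalent to $\int_D(\varphi_k-c_k)\,d\ell_k=\mathcal T_h(\varphi_k-c_k)(x)$: you have calibrated $\varphi_k-c_k$, not $\varphi_k$.

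Approximants $\varphi_k\in C(\widehat D)$ with $\varphi_k\searrow\varphi$ can have $c_k=0$ only if $\limsup_{y\to\triangle}\varphi(y)\le0$, which fails in general. For $\varphi\equiv1$ the target is $\mathbb P(h<\tau_\infty)=\mathcal T_h1(x)$. But $p(1)=1=\int1\,d\mu$ for every probability $\mu$ on $\widehat D$, so the calibration imposes nothing. Hence your limit measure need not satisfy $\int_D\varphi\,d\mu=\mathcal T_h\varphi(x)$.

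This can be repaired. One way is to run Hahn--Banach on $C_b(D)$ with $p(g)=\mathcal T_hg(x)$, and separate from the set of killed marginals $\mathbb P(X_h\in\cdot,\ h<\tau_\infty)$. That set is compact and convex for $\sigma(C_b(D)^*,C_b(D))$ by Proposition~E.11.

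The paper avoids these issues altogether:
\begin{itemize}
\item For finitely many $g_1,\dots,g_m\in C_b(D)$, it strictly separates, in $\mathbb R^{m+1}$, the compact convex set built from $(\Lambda_\varphi,\Lambda_{g_1},\dots,\Lambda_{g_m})$ on $\mathcal P_x(G)$ from $[\Pi(\varphi),\infty)\times\prod_i(-\infty,\Pi(g_i)]$.
\item It derives a contradiction from sublinearity of $\Pi=\mathcal T_h(\cdot)(x)$ together with $\mathcal T\le\mathcal V^G$, applied to $\psi=a\varphi-\sum_i\lambda_ig_i\in\USC_b(D)$.
\item It concludes by the finite-intersection property in the compact set $\mathcal P_x(G)$.
\end{itemize}
Because it works with killed payoffs on path laws, this treats the USC function $\varphi$ directly, with no cemetery completion, no approximation of $\varphi$, and no Riesz representation.

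Your measurability argument is fine once compactness and nonemptiness are in place.
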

\begin{proof}
Fix $h>0$ and $x\in D$, and write
\[
    \Pi(f):=\mathcal T_hf(x),
    \qquad
    \Lambda_f(\mathbb P)
    :=\mathbb E^{\mathbb P}
      [f(X_h)\mathbb I_{\{h<\tau_\infty\}}],
    \qquad f\in\USC_b(D).
\]
By Proposition~
E.11 in the Supplementary Material, applied with $m=1$,
$\Lambda_f$ is weakly upper semicontinuous on $\mathcal P_x(G)$
for every $f\in\USC_b(D)$.
For $f\in C_b(D)$, applying the same result to $f$ and $-f$
shows that $\Lambda_f$ is weakly continuous.

We first prove that
\begin{equation}\label{eq:D_h^T_continuoustest}
    \mathcal D_h^{\mathcal T}(x)
    =
    \left\{
    \mathbb P\in\mathcal P_x(G):
    \Lambda_g(\mathbb P)\le\Pi(g)
    \text{ for every }g\in C_b(D)
    \right\}.
\end{equation}
Denote the right-hand side by $\mathcal D$.
The inclusion $\mathcal D_h^{\mathcal T}(x)\subseteq\mathcal D$ is immediate.
For the reverse inclusion, let $\mathbb P\in\mathcal D$ and fix $f\in\USC_b(D)$.
Choose uniformly bounded functions $f_n\in C_b(D)$ decreasing to $f$.
Since $\Lambda_{f_n}(\mathbb P)\le\Pi(f_n)$ for every $n\ge1$, bounded convergence and \ref{V3} give
\[
    \Lambda_f(\mathbb P)
    =\lim_{n\to\infty}\Lambda_{f_n}(\mathbb P)
    \le\lim_{n\to\infty}\Pi(f_n)
    =\Pi(f).
\]
Thus $\mathbb P\in\mathcal D_h^{\mathcal T}(x)$, proving \eqref{eq:D_h^T_continuoustest}.
Since $\Lambda_g$ is weakly continuous for every $g\in C_b(D)$, this characterization shows that $\mathcal D_h^{\mathcal T}(x)$ is a closed subset of the compact set $\mathcal P_x(G)$, and hence is compact.

Fix $\varphi\in\USC_b(D)$.
For $m\ge1$ and $g_1,\ldots,g_m\in C_b(D)$, define
\[
    \widehat{\mathcal K}(\varphi;g_1,\ldots,g_m)
    :=
    \left\{
    \mathbb P\in\mathcal P_x(G):
    \begin{aligned}
        &\Lambda_\varphi(\mathbb P)\ge\Pi(\varphi),\\
        &\Lambda_{g_i}(\mathbb P)\le\Pi(g_i),
          \quad 1\le i\le m
    \end{aligned}
    \right\}.
\]
These sets are compact by the semicontinuity properties of $\Lambda_\cdot$ established above.
We claim that each is nonempty.
Suppose, for a contradiction, that $\widehat{\mathcal K}(\varphi;g_1,\ldots,g_m)=\varnothing$.
Set $M:=\|\varphi\|_\infty$ and consider
\[
\begin{aligned}
    A
    &:=
    \left\{
    \bigl(u,\Lambda_{g_1}(\mathbb P),\ldots,
             \Lambda_{g_m}(\mathbb P)\bigr):
    \mathbb P\in\mathcal P_x(G),\
    -M\le u\le\Lambda_\varphi(\mathbb P)
    \right\},\\
    B
    &:=
    [\Pi(\varphi),\infty)
    \times\prod_{i=1}^m(-\infty,\Pi(g_i)].
\end{aligned}
\]
The set $A$ is compact and convex.
Indeed, the set of pairs $(\mathbb P,u)$ satisfying
$\mathbb P\in\mathcal P_x(G)$ and
$-M\le u\le\Lambda_\varphi(\mathbb P)$ is compact and convex,
and its defining map into $A$ is continuous and affine.
The set $B$ is closed and convex, and our supposition gives
$A\cap B=\varnothing$.

By the strict separation, there exist $a\ge0$, $\lambda_1,\ldots,\lambda_m\ge0$, and $\varepsilon>0$ such that
\begin{align}\label{eq:separating_ineq}
    a\Lambda_\varphi(\mathbb P)
    -\sum_{i=1}^m\lambda_i\Lambda_{g_i}(\mathbb P)
    \le
    a\Pi(\varphi)-\sum_{i=1}^m\lambda_i\Pi(g_i)-\varepsilon
\end{align}
for every $\mathbb P\in\mathcal P_x(G)$.
Since $a\ge0$, the function $\psi:=a\varphi-\sum_{i=1}^m\lambda_i g_i$ belongs to $\USC_b(D)$.
By sublinearity of $\Pi$, we have
\[
    a\Pi(\varphi)
    =\Pi(a\varphi)
    \le\Pi(\psi)+\sum_{i=1}^m\lambda_i\Pi(g_i).
\]
Taking the supremum over $\mathbb P$ in the inequality \eqref{eq:separating_ineq} and using $\mathcal T\le\mathcal V^G$, we obtain
\[
    \Pi(\psi)
    \le\mathcal V_h^G\psi(x)
    \le
    a\Pi(\varphi)-\sum_{i=1}^m\lambda_i\Pi(g_i)-\varepsilon
    \le\Pi(\psi)-\varepsilon,
\]
a contradiction. 
Hence each $\widehat{\mathcal K}(\varphi;g_1,\ldots,g_m)$ is nonempty and compact.

The compact sets $\widehat{\mathcal K}(\varphi;g)$, $g\in C_b(D)$, therefore have the finite-intersection property.
Their intersection is nonempty and compact.
By \eqref{eq:D_h^T_continuoustest}, this intersection equals
\[
    \mathcal D_h^{\mathcal T}(x)
    \cap
    \{\mathbb P:\Lambda_\varphi(\mathbb P)\ge\Pi(\varphi)\}
    =
    \mathcal C_h^{\mathcal T}(x;\varphi).
\]
This proves nonemptiness and compactness of $\mathcal C_h^{\mathcal T}(x;\varphi)$.

We next show that $x\mapsto\mathcal D_h^{\mathcal T}(x)$ has a closed graph.
Let $x_n\to x$, $\mathbb P_n\Rightarrow\mathbb P$, and $\mathbb P_n\in\mathcal D_h^{\mathcal T}(x_n)$.
The closed-graph property of $x\mapsto\mathcal P_x(G)$ gives $\mathbb P\in\mathcal P_x(G)$.
For every $g\in C_b(D)$, Proposition~
E.11 in the Supplementary Material, applied to $g$ and $-g$, together with upper semicontinuity of $\mathcal T_hg$, gives
\[
    \Lambda_g(\mathbb P)
    =\lim_{n\to\infty}\Lambda_g(\mathbb P_n)
    \le\limsup_{n\to\infty}\mathcal T_hg(x_n)
    \le\mathcal T_hg(x).
\]
By \eqref{eq:D_h^T_continuoustest}, we therefore have $\mathbb P\in\mathcal D_h^{\mathcal T}(x)$.
Thus $x\mapsto\mathcal D_h^{\mathcal T}(x)$ has a closed graph.

Finally, the maps $\mathbb P\mapsto\Lambda_\varphi(\mathbb P)$ and $x\mapsto\mathcal T_h\varphi(x)$ are Borel.
Since $x\mapsto\mathcal D_h^{\mathcal T}(x)$ has a closed graph, the set
\[
    \left\{
    (x,\mathbb P):
    \mathbb P\in\mathcal D_h^{\mathcal T}(x),\
    \Lambda_\varphi(\mathbb P)=\mathcal T_h\varphi(x)
    \right\}
\]
is Borel. 
This is precisely the graph of $x\mapsto\mathcal C_h^{\mathcal T}(x;\varphi)$.
Viewing its nonempty compact values as subsets of the Polish space $\mathcal P(\Omega^{\mathrm{cad}})$, Proposition~
D.1 in the Supplementary Material provides a Borel selector.
\end{proof}

\begin{lemma}
\label{lem:continuous_testing_time_shifts}
Let $\mathcal T\in\mathfrak V$, $t\ge0$, and $x\in D$.
For a law $\mathbb P$ with the prescribed history at $(t,x)$,
membership in $\mathcal R_{t,x}^{\mathcal T}$ is equivalent to
\begin{equation}
\label{eq:primitive_valuation_compatibility}
 \mathbb E^{\mathbb P}[FY_s^{\mathcal T;R,g}]
 \le
 \mathbb E^{\mathbb P}[FY_r^{\mathcal T;R,g}]
\end{equation}
for every $R\ge t$, $g\in C_b(D)$, $t\le r\le s\le R$,
and bounded nonnegative $\widetilde{\mathcal F}_r$-measurable $F$.
\end{lemma}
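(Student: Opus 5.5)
The forward implication is immediate. If $\mathbb P\in\mathcal R_{t,x}^{\mathcal T}$, then $Y^{\mathcal T;R,g}$ is a $\mathbb P$-supermartingale on $[t,R]$ for every $g\in\USC_b(D)$, and in particular for every $g\in C_b(D)$. Multiplying the supermartingale inequality by a bounded nonnegative $\widetilde{\mathcal F}_r$-measurable $F$ and taking expectations gives \eqref{eq:primitive_valuation_compatibility}.

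For the converse, I plan to pass from continuous to upper semicontinuous terminal payoffs by monotone approximation, in the same way as in the proof of \eqref{eq:D_h^T_continuoustest}. Fix $R\ge t$, $g\in\USC_b(D)$, $t\le r\le s\le R$, and $F$ as in the statement. Choose uniformly bounded $g_n\in C_b(D)$ with $g_n\searrow g$ pointwise, for instance via inf-convolution (Lipschitz approximations from above), together with a truncation at $\pm\|g\|_\infty$. For each fixed $q\ge0$, axiom \ref{V3} gives $\mathcal T_qg_n\searrow\mathcal T_qg$ pointwise on $D$. By \ref{V2}, the functions $\mathcal T_qg_n$ are bounded uniformly in $n$ by $\sup_n\|g_n\|_\infty$. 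Taking $q=R-s$ and $q=R-r$, we get $Y_s^{\mathcal T;R,g_n}\to Y_s^{\mathcal T;R,g}$ and $Y_r^{\mathcal T;R,g_n}\to Y_r^{\mathcal T;R,g}$ pointwise on $\widetilde\Omega$ (both sides vanish after the lifetime). The convergence is dominated by a constant multiple of $|F|$.

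The hypothesis gives $\mathbb E^{\mathbb P}[FY_s^{\mathcal T;R,g_n}]\le\mathbb E^{\mathbb P}[FY_r^{\mathcal T;R,g_n}]$ for every $n$. Letting $n\to\infty$ with dominated convergence yields the same inequality for $g$. Since $Y^{\mathcal T;R,g}$ is bounded and adapted (each $\mathcal T_qg$ is Borel as an upper semicontinuous function, and $X_s$ and $\{s<\tau_\infty\}$ are $\widetilde{\mathcal F}_s$-measurable), the inequality for all such $F$ is exactly the $\mathbb P$-supermartingale property on $[t,R]$ with respect to $\widetilde{\mathbb F}$. Together with the prescribed history, this shows $\mathbb P\in\mathcal R_{t,x}^{\mathcal T}$.

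The only step needing care is the existence of uniformly bounded continuous $g_n\searrow g$ for $g\in\USC_b(D)$, which is standard since $D$ is metric. A further point is that \ref{V3} must be applied at each fixed horizon $q$, not uniformly in $q$; pointwise convergence at the two times $r$ and $s$ suffices for the argument.
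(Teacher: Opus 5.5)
Your proposal is correct and follows the paper's proof essentially verbatim: necessity comes from the supermartingale property. For sufficiency you approximate $g\in\USC_b(D)$ by uniformly bounded $g_n\in C_b(D)$ decreasing to $g$, apply \ref{V3} at each fixed horizon $q$, pass to the limit by bounded convergence, and then read off the supermartingale property from adaptedness and boundedness. One minor naming point: the Lipschitz approximation from above of an upper semicontinuous function is the sup-convolution $\sup_y\{g(y)-n|x-y|\}$ rather than an inf-convolution, and it is already bounded by $\|g\|_\infty$, so no truncation is needed.
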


\begin{proof}
Necessity of \eqref{eq:primitive_valuation_compatibility} follows because
each $Y^{\mathcal T;R,g}$ is a $\mathbb P$-supermartingale when
$\mathbb P\in\mathcal R_{t,x}^{\mathcal T}$.
Conversely, suppose that \eqref{eq:primitive_valuation_compatibility} holds for continuous test functions.
For $g\in\USC_b(D)$, choose uniformly bounded functions $g_j\in C_b(D)$ decreasing to $g$.
By \ref{V3}, we have $\mathcal T_qg_j\searrow\mathcal T_qg$ for every $q\ge0$.
Bounded convergence therefore extends \eqref{eq:primitive_valuation_compatibility} to every $g\in\USC_b(D)$.
Since $Y^{\mathcal T;R,g}$ is adapted and bounded by $\|g\|_\infty$,
\eqref{eq:primitive_valuation_compatibility} makes it a
$\mathbb P$-supermartingale on $[t,R]$ for every $R\ge t$ and
$g\in\USC_b(D)$, proving $\mathbb P\in\mathcal R_{t,x}^{\mathcal T}$.
\end{proof}

\begin{proposition}
\label{prop:recovered_calibrated_nonemptiness}
For every $0\le t\le T$, $x\in D$, and $f\in\USC_b(D)$, 
\[
\mathcal T_{T-t}f(x)=
\max_{\mathbb P\in\mathcal R^{\mathcal T}_{t,x}}
\mathbb E^{\mathbb P}[f(X_T)\mathbb I_{\{T<\tau_\infty\}}].
\]
In particular, $\mathcal R^{\mathcal T}$ is fiberwise nonempty.
\end{proposition}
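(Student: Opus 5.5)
By time homogeneity (Proposition~\ref{prop:recovered_time_homogeneity}) it suffices to treat $t=0$ and horizon $T>0$; the case $T=t$ is trivial. The inequality ``$\ge$'' is immediate from the definition of $\mathcal R^{\mathcal T}_x$: taking $R=T$, $g=f$, the supermartingale property of $Y^{\mathcal T;T,f}$ between $0$ and $T$ gives $\mathbb E^{\mathbb P}[f(X_T)\mathbb I_{\{T<\tau_\infty\}}]\le \mathcal T_Tf(x)$. The work is therefore to construct, for given $x$, $T$, $f$, a law $\mathbb P^*\in\mathcal R^{\mathcal T}_x$ attaining equality. This will be done in two stages: first a dyadic construction that is valuation-compatible only at grid times, then a weak limit.

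\textbf{Dyadic construction.} Fix $N$ and $h=T2^{-N}$, with grid $r_k=kh$. Define $\varphi_k:=\mathcal T_{T-r_k}f\in\USC_b(D)$ for $r_k\le T$, and $\varphi_k:=0$ beyond $T$ (or use an arbitrary bounded payoff for later steps). Lemma~\ref{prop:one_step_polar_calibration_valuation} gives Borel selectors $y\mapsto \mathbb Q_k(y)\in\mathcal C_h^{\mathcal T}(y;\varphi_{k+1})\subseteq\mathcal P_y(G)$, with the cemetery sent to $\delta_{\mathbf\triangle}$. After time-shifting via $\theta_{r_k}$, these become restart-compatible kernels. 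Using strong stability of $\mathcal P(G)$ (Proposition~\ref{prop:propertiesP_t,x_revised}), concatenate successively at $r_1,r_2,\dots$ to obtain $\mathbb P^N\in\mathcal P_x(G)$. Under $\mathbb P^N$, the conditional law of $X_{r_{k+1}}$ given $\widetilde{\mathcal F}_{r_k}$ is dominated by $\mathcal T_h$, since it lies in $\mathcal D_h^{\mathcal T}$. Iterating with \ref{V5} yields \eqref{eq:valuation_terminal_conditional_domination} at all grid pairs $r_k<r_l$. Moreover, the calibration property chains together by \ref{V5}, giving $\mathbb E^{\mathbb P^N}[f(X_T)\mathbb I_{\{T<\tau_\infty\}}]=\mathcal T_Tf(x)$.

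\textbf{Passage to the limit.} By compactness of $\mathcal P_x(G)$, extract $\mathbb P^N\Rightarrow\mathbb P^*\in\mathcal P_x(G)$. Attainment passes to the limit through the upper semicontinuity of $\Lambda_f$ (Proposition~E.11), which gives ``$\ge$''; the reverse inequality will follow once $\mathbb P^*\in\mathcal R^{\mathcal T}_x$ is established. For membership, use Lemma~\ref{lem:continuous_testing_time_shifts}. It suffices to check \eqref{eq:primitive_valuation_compatibility} for $g\in C_b(D)$, dyadic $r\le s\le R$, and $F$ a bounded nonnegative continuous cylinder functional at dyadic times $\le r$. Extension to general times and to general $F$ is then by right-continuity together with monotone class arguments, using the upper semicontinuity in time from \ref{V4} and $Y$ being bounded. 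For such fixed dyadic data, the inequality holds for each $\mathbb P^N$ with $N$ large, because grid-wise domination gives it with $\mathbb E[F\,g(X_R)\cdots]$ replaced by the conditional bound.

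\textbf{Main obstacle.} The hard part is passing $\mathbb E^{\mathbb P^N}[F\,Y_s]\le \mathbb E^{\mathbb P^N}[F\,Y_r]$ to the limit. Here $Y_r=\mathcal T_{R-r}g(X_r)\mathbb I$ is only upper semicontinuous in $X_r$, which gives the wrong direction on the right-hand side. The plan is to avoid needing lower semicontinuity. Rewrite the target as $\mathbb E[F\,g(X_R)\mathbb I_{\{R<\tau_\infty\}}]\le \mathbb E[F\,\mathcal T_{R-r}g(X_r)\mathbb I]$, which suffices by the equivalence \eqref{eq:valuation_terminal_conditional_domination} and the tower property applied to the terminal payoff. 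Then replace $\mathcal T_{R-r}g$ by a decreasing sequence of continuous majorants $w_j\searrow \mathcal T_{R-r}g$. The prelimit inequality holds with $w_j$, both sides converge by the finite-dimensional continuity of Proposition~E.11 (as in Lemma~\ref{lem:stable_marginal_lifting}, using $F$ and $-F$), and finally let $j\to\infty$ by bounded convergence under $\mathbb P^*$. Care is needed to ensure the killing-time indicators do not charge the fixed dyadic times; Proposition~\ref{prop:recovery_aggregation_generator_compatible_characteristics}\ref{prop:recovery_killing_compensator} guarantees this.
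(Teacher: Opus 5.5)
Your proposal is correct, and its core is the paper's construction. You use calibrated one-step selectors from Lemma~\ref{prop:one_step_polar_calibration_valuation}, time-shifted and concatenated inside $\mathcal P(G)$. The calibration for $f$ chains exactly through \ref{V5}, every test payoff is dominated grid-wise, and you take a weak limit in the compact fiber $\mathcal P_x(G)$.

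Where you differ is in certifying that the limit lies in $\mathcal R_x^{\mathcal T}$. The paper takes a countable determining family $\{(g_i,r_i,s_i)\}$ from Proposition~E.12(iii), which comes from a Lindel\"of argument and has arbitrary test times. It puts $T$ and the first $m$ test times into the $m$-th grid and concludes with the closedness statement of Proposition~E.12(i). You instead use nested $T$-dyadic grids, so each dyadic test time is eventually a grid time. You then reprove that closedness by hand for $g\in C_b(D)$ and continuous cylinder multipliers, and reach arbitrary times by right-continuity of paths together with \ref{V4}. Your route makes the determining family explicit and avoids the Lindel\"of step, at the price of an extra density argument. The paper's route simply reuses Proposition~E.12, which it needs anyway for conditioning stability in Proposition~\ref{prop:valuation_relation_substructures}.

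Two corrections to your write-up.

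First, you have the obstacle on the wrong side. On the right-hand side, upper semicontinuity of $\mathcal T_{R-r}g$ is the favourable direction: Proposition~E.11 directly gives $\limsup_N\mathbb E^{\mathbb P^N}[F\,\mathcal T_{R-r}g(X_r)\mathbb I_{\{r<\tau_\infty\}}]\le\mathbb E^{\mathbb P^*}[F\,\mathcal T_{R-r}g(X_r)\mathbb I_{\{r<\tau_\infty\}}]$. Your continuous majorants $w_j$ are valid but unnecessary. The real obstruction in the form $\mathbb E[FY_s]\le\mathbb E[FY_r]$ sits on the left, where $Y_s$ with $s<R$ would need lower semicontinuity. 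Your switch to the terminal form with continuous $g$ is exactly what removes it. For the same reason, the extension to non-dyadic times must be done in the terminal form, not in the $Y$-form of Lemma~\ref{lem:continuous_testing_time_shifts}, where the left side would fail again. Concretely, approximate $(r,R)$ from the right by dyadic pairs. The left side then converges by right-continuity and continuity of $g$, while \ref{V4} controls the $\limsup$ of the right side.

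Second, concatenate only finitely many times, for example on grids $\{kT2^{-N}:0\le k\le N2^N\}$ whose horizons tend to infinity, as the paper does with its finite grids $\pi_m$. Concatenating at all of $r_1,r_2,\dots$ would need an additional Ionescu--Tulcea extension even to define $\mathbb P^N$.
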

\begin{proof}
By time homogeneity of $\mathcal T$ and $\mathcal R^{\mathcal T}$,
it suffices to consider $t=0$.
Fix $x\in D$ and $f\in\USC_b(D)$.
We first treat the case $T>0$.
By the definition of $\mathcal R^{\mathcal T}$,
\begin{align}\label{eq:sup_le_T_T}
    \sup_{\mathbb P\in\mathcal R_x^{\mathcal T}}
    \mathbb E^{\mathbb P}
    [f(X_T)\mathbb I_{\{T<\tau_\infty\}}]
    \le\mathcal T_Tf(x).
\end{align}
It therefore remains to construct a law $\mathbb P\in\mathcal R_x^{\mathcal T}$ such that
\begin{align}\label{eq:attaining_maximum_R_T}
    \mathbb E^{\mathbb P}
    [f(X_T)\mathbb I_{\{T<\tau_\infty\}}]
    =\mathcal T_Tf(x).
\end{align}

By Proposition~
E.12\textup{(iii)} in the Supplementary Material and Lemma~\ref{lem:continuous_testing_time_shifts}, we can choose a countable family
\[
    \mathscr I_0=\{(g_i,r_i,s_i):i\ge1\},
    \qquad
    g_i\in C_b(D),\quad 0\le r_i\le s_i<\infty,
\]
such that, for every $\mathbb P\in\mathcal P_x(G)$, membership in $\mathcal R_x^{\mathcal T}$ is equivalent to
\begin{equation}\label{eq:recovered_countable_tests}
    \mathbb E^{\mathbb P}
    [F g_i(X_{s_i})\mathbb I_{\{s_i<\tau_\infty\}}]
    \le
    \mathbb E^{\mathbb P}
    [F\mathcal T_{s_i-r_i}g_i(X_{r_i})
       \mathbb I_{\{r_i<\tau_\infty\}}],
\end{equation}
for every $i\ge1$ and every bounded nonnegative $\widetilde{\mathcal F}_{r_i}$-measurable random variable $F$.
For each $m\ge1$, choose a finite grid $\pi_m=\{0=u_0<\cdots<u_N\}$ containing $T$ and all times $r_i,s_i$ with $1\le i\le m$, where $N=N(m)$.
For $0\le k<N$, set
\[
    h_k:=u_{k+1}-u_k,
    \qquad
    \varphi_k:=
    \begin{cases}
        \mathcal T_{T-u_{k+1}}f,&u_{k+1}\le T,\\
        0,&u_{k+1}>T.
    \end{cases}
\]
By Lemma~\ref{prop:one_step_polar_calibration_valuation}, there exist Borel selectors
\begin{align}\label{eq:kappa_in_C}
    \kappa_{k,m}(y)
    \in\mathcal C_{h_k}^{\mathcal T}(y;\varphi_k),
    \qquad y\in D.
\end{align}
Extend each selector to $\widehat D$ by setting $\kappa_{k,m}(\triangle):=\delta_{\mathbf\triangle}$.
For $1\le k<N$, define the $\widetilde{\mathcal F}_{u_k}$-measurable kernel
\[
    \nu_{k,m}(\omega)
    :=
    \kappa_{k,m}(X_{u_k}(\omega))\circ\theta_{u_k}^{-1}.
\]
Time homogeneity of $\mathcal P(G)$ ensures that
$\nu_{k,m}(\omega)\in\mathcal P_{u_k,X_{u_k}(\omega)}(G)$,
and these kernels are restart-compatible.
Starting with $\kappa_{0,m}(x)$, successively concatenate
$\nu_{k,m}$ at $u_k$ for $k=1,\ldots,N-1$,
and denote the resulting law by $\mathbb P_m$.
Then, by Proposition~\ref{prop:propertiesP_t,x_revised}, $\mathbb P_m\in\mathcal P_x(G)$.

We now verify the properties of $\mathbb P_m$.
For every $0\le k<N$ and $\psi\in\USC_b(D)$, by \eqref{eq:kappa_in_C}, we have
\begin{equation}\label{eq:grid_one_step_domination}
    \mathbb E^{\mathbb P_m}\!\left[
        \psi(X_{u_{k+1}})
        \mathbb I_{\{u_{k+1}<\tau_\infty\}}
        \,\middle|\,\widetilde{\mathcal F}_{u_k}
    \right]\le
    \mathcal T_{h_k}\psi(X_{u_k})
    \mathbb I_{\{u_k<\tau_\infty\}},
\end{equation}
with equality when $\psi=\varphi_k$.
Inequality~\eqref{eq:grid_one_step_domination}, including equality for
$\psi=\varphi_k$, remains valid under the final law $\mathbb P_m$
because subsequent concatenations preserve the law of the path
up to $u_{k+1}$.

Fix $i\le m$.
Whenever $u_{k+1}\le s_i$, apply
\eqref{eq:grid_one_step_domination} with
$\psi=\mathcal T_{s_i-u_{k+1}}g_i$.
By \ref{V5},
\[
    \mathbb E^{\mathbb P_m}\!\left[
        Y_{u_{k+1}}^{\mathcal T;s_i,g_i}
        \,\middle|\,\widetilde{\mathcal F}_{u_k}
    \right]
    \le Y_{u_k}^{\mathcal T;s_i,g_i}.
\]
Since $r_i,s_i\in\pi_m$, iteration between these two grid points gives
\[
    \mathbb E^{\mathbb P_m}\!\left[
        g_i(X_{s_i})\mathbb I_{\{s_i<\tau_\infty\}}
        \,\middle|\,\widetilde{\mathcal F}_{r_i}
    \right]
    \le
    \mathcal T_{s_i-r_i}g_i(X_{r_i})
    \mathbb I_{\{r_i<\tau_\infty\}}.
\]
Multiplying by any bounded nonnegative $\widetilde{\mathcal F}_{r_i}$-measurable $F$ and taking expectations proves \eqref{eq:recovered_countable_tests} with $\mathbb P=\mathbb P_m$ for every $i\le m$.

Likewise, whenever $u_{k+1}\le T$, the equality in \eqref{eq:grid_one_step_domination} for $\psi=\varphi_k$ and \ref{V5} give
\[
    \mathbb E^{\mathbb P_m}\!\left[
        Y_{u_{k+1}}^{\mathcal T;T,f}
        \,\middle|\,\widetilde{\mathcal F}_{u_k}
    \right]
    =Y_{u_k}^{\mathcal T;T,f}.
\]
Since $T\in\pi_m$, iteration from $0$ to $T$ yields
\begin{equation}\label{eq:grid_target_terminal_calibration}
    \mathbb E^{\mathbb P_m}
    [f(X_T)\mathbb I_{\{T<\tau_\infty\}}]
    =\mathcal T_Tf(x).
\end{equation}

By compactness of $\mathcal P_x(G)$, there exists a subsequence
$\mathbb P_{m_j}\Rightarrow\mathbb P\in\mathcal P_x(G)$.
For each fixed $i$, inequality~\eqref{eq:recovered_countable_tests}
holds under $\mathbb P_{m_j}$ for all sufficiently large $j$,
since $m_j\ge i$.
Proposition~
E.12\textup{(i)} in the Supplementary Material shows that the set
of laws satisfying \eqref{eq:recovered_countable_tests} for every bounded
nonnegative $\widetilde{\mathcal F}_{r_i}$-measurable $F$
is weakly closed in $\mathcal P_x(G)$.
Hence $\mathbb P$ satisfies
\eqref{eq:recovered_countable_tests} for every $i\ge1$.
The defining property of $\mathscr I_0$ therefore gives
$\mathbb P\in\mathcal R_x^{\mathcal T}$.

Finally, Proposition~
E.11 in the Supplementary Material, applied with $m=1$,
together with \eqref{eq:sup_le_T_T} and \eqref{eq:grid_target_terminal_calibration}, gives
\[
    \mathcal T_Tf(x)
    =\lim_{j\to\infty}
      \mathbb E^{\mathbb P_{m_j}}
      [f(X_T)\mathbb I_{\{T<\tau_\infty\}}]
    \le
      \mathbb E^{\mathbb P}
      [f(X_T)\mathbb I_{\{T<\tau_\infty\}}]
    \le\mathcal T_Tf(x).
\]
Thus $\mathbb P$ satisfies \eqref{eq:attaining_maximum_R_T}, proving the maximum representation for $t=0$ and $T>0$.

The construction also proves that $\mathcal R_x^{\mathcal T}$
is nonempty.
For $T=0$, the representation follows from the prescribed initial
state and $\mathcal T_0f=f$.
Time homogeneity gives the assertion for every $0\le t\le T$.
Together with the cemetery convention, this also proves that
$\mathcal R^{\mathcal T}$ is fiberwise nonempty.
\end{proof}

It remains to establish the stability of the recovered family $\mathcal R^{\mathcal T}$.

\begin{proposition}
\label{prop:valuation_relation_substructures}
The recovered family $\mathcal R^{\mathcal T}$ is stable.
In particular, $\mathcal R^{\mathcal T}\in\mathfrak S(G)$.
\end{proposition}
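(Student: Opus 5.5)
The claim splits into topological stability (convexity, closed graph, compactness over compact index sets) and dynamic stability (restarted conditioning and concatenation). Once these are in place, the membership $\mathcal R^{\mathcal T}\in\mathfrak S(G)$ follows by combining them with facts already established:
\begin{itemize}
\item time homogeneity, from Proposition~\ref{prop:recovered_time_homogeneity};
\item fiberwise nonemptiness, from Proposition~\ref{prop:recovered_calibrated_nonemptiness};
\item the inclusion $\mathcal R^{\mathcal T}\subseteq\mathcal P(G)$, from Corollary~\ref{prop:valuation_compatible_generator_substructure}.
\end{itemize}
Throughout, I would use Lemma~\ref{lem:continuous_testing_time_shifts} to work only with continuous $g\in C_b(D)$. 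I would also use the countable test family $\mathscr I_0$ supplied by Proposition~E.12(iii) of the Supplementary Material. By time homogeneity, this family characterizes membership in $\mathcal R_{t,x}^{\mathcal T}$ among laws in $\mathcal P_{t,x}(G)$.

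\emph{Topological stability.} Convexity of each fiber holds because the prescribed-history condition and each inequality \eqref{eq:primitive_valuation_compatibility} are affine in $\mathbb P$.

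For the closed graph, let $(t_j,x_j)\to(t,x)$ in $[0,T]\times D$ and $\mathbb P_j\in\mathcal R^{\mathcal T}_{t_j,x_j}$ with $\mathbb P_j\Rightarrow\mathbb P$. Since $\mathcal R^{\mathcal T}\subseteq\mathcal P(G)$, the closed graph of $\mathcal P(G)$ from Proposition~\ref{prop:propertiesP_t,x_revised} gives $\mathbb P\in\mathcal P_{t,x}(G)$. To avoid the moving initial time, I would pass to the forward shifts $\mathbb P_j\circ\eta_{t_j}^{-1}\in\mathcal R^{\mathcal T}_{x_j}$, which converge weakly to $\mathbb P\circ\eta_t^{-1}$ by continuity of the shift in $(t,\omega)$ on paths constant before the shift time. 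It then suffices to show that $\mathcal R^{\mathcal T}_{0,\cdot}$ has a closed graph. For this I would pass the countable tests \eqref{eq:recovered_countable_tests} to the limit using the weak-closedness statement E.12(i). The only new point is that $x_j$ varies; this is handled because the tests do not depend on the initial point once $\mathbb P_j\in\mathcal P(G)$.

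Compactness of $\mathcal R^{\mathcal T}_K$ then follows. It is a closed subset, by the closed-graph property, of the compact set $\mathcal P_K(G)$.

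\emph{Restarted conditioning.} Take $\mathbb P\in\mathcal R^{\mathcal T}_{t,x}$ and $s\in[t,T]$. Fix one of the countably many tests $(g_i,r_i,s_i)$ of $\mathscr I_0$, shifted to start at $s$, together with a multiplier $F$ drawn from a fixed countable algebra of cylinder functions at rational times.

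The supermartingale property of $Y^{\mathcal T;R,g_i}$ under $\mathbb P$ on $[s,R]$, combined with the tower property over $\widetilde{\mathcal F}_s$, shows that the inequality holds under $\overline{\mathbb P}^{s,\omega}$ outside a null set depending on $(i,F)$. The restart map $\mathsf R_{s,\omega}$ only alters the path before $s$. It therefore does not affect increments after $s$, and it enforces the prescribed history. Taking the countable union of these null sets, and then applying the monotone class theorem in $F$, gives $\mathbb P^{s,\omega}\in\mathcal R^{\mathcal T}_{s,X_s(\omega)}$ for $\mathbb P$-a.e.\ $\omega$. On the cemetery event the conditional law is $\delta_{\mathbf\triangle}$, which lies in the cemetery fiber.

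\emph{Restarted concatenation.} Let $\nu$ be a kernel with $\nu(\omega)\in\mathcal R^{\mathcal T}_{s,X_s(\omega)}$ a.s. Under $\mathbb P\otimes_s\nu$, I would split each increment $Y_{s'}-Y_{r'}$ according to the position of $s$:
\begin{itemize}
\item if $r'\le s'\le s$, the law coincides with $\mathbb P$ up to time $s$, so the supermartingale inequality is inherited directly;
\item if $s\le r'\le s'$, conditioning on $\widetilde{\mathcal F}_s$ reduces the inequality to the supermartingale property under $\nu(\omega)$;
\item if $r'<s<s'$, I would insert $Y_s$ and combine the two previous cases by the tower property.
\end{itemize}
Boundedness of $Y$ justifies every step.

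\emph{Main obstacle.} I expect the hardest step to be the conditioning step. One must produce a single null set outside of which all the required supermartingale inequalities hold, even though they range over uncountably many $g\in\USC_b(D)$, times, and multipliers. I would first try to reduce to the countable characterization from E.12(iii) combined with the \ref{V3} approximation in Lemma~\ref{lem:continuous_testing_time_shifts}. The restarted conditional law does lie in $\mathcal P(G)$, because $\mathcal P(G)$ is stable, so that characterization applies.
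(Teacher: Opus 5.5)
Your proposal is correct and follows the paper's proof: restarted conditioning via the countable characterization of Proposition~E.12(iii), applied to restarted conditional laws (which lie in $\mathcal P(G)$ by its stability); restarted concatenation via the same three-case tower argument; and topological stability via a closed graph inside the compact sets $\mathcal P_K(G)$. The differences are cosmetic---you reduce the closed graph to initial time zero by forward shifts and then use E.12(i), where the paper cites E.12(ii) directly, and in the conditioning step you should state that the multiplier is restarted along with the path (the paper's $F^{[a]}(\omega)=F(\mathsf R_{a,X_a(\omega)}(\omega))$), which your argument covers once the inequality under $\overline{\mathbb P}^{s,\omega}$ has been extended by monotone class to all $\widetilde{\mathcal F}_r$-measurable multipliers before composing with $\mathsf R_{s,\omega}$.
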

\begin{proof}
Corollary~\ref{prop:valuation_compatible_generator_substructure} gives
$\mathcal R^{\mathcal T}\subseteq\mathcal P(G)$.
The assertions for the constant cemetery law are immediate,
so we consider initial states in $D$.
We first prove stability under restarted conditioning.
Fix $\mathbb P\in\mathcal R_{t,x}^{\mathcal T}$ and a deterministic $a\ge t$.
By Proposition~\ref{prop:propertiesP_t,x_revised},
\[
    \mathbb P^{a,\omega}
    \in\mathcal P_{a,X_a(\omega)}(G)
    \qquad\text{for }\mathbb P\text{-almost every }\omega.
\]
By Proposition~
E.12\textup{(iii)} and Proposition~
D.3 in the Supplementary Material, together with Lemma~\ref{lem:continuous_testing_time_shifts}, we can choose a countable family
\[
    \mathscr I_a=\{(g_i,r_i,s_i):i\ge1\},
    \qquad
    g_i\in C_b(D),\quad a\le r_i\le s_i<\infty,
\]
and, for each $i$, a countable class $\mathscr H_{r_i,0}$ of bounded nonnegative $\widetilde{\mathcal F}_{r_i}$-measurable cylinder functions, such that the following characterization holds.
For every $y\in D$ and $\mathbb P'\in\mathcal P_{a,y}(G)$, membership in $\mathcal R_{a,y}^{\mathcal T}$ is equivalent to
\begin{equation}\label{eq:conditioning_selected_tests}
    \mathbb E^{\mathbb P'}
    [F g_i(X_{s_i})\mathbb I_{\{s_i<\tau_\infty\}}]
    \le
    \mathbb E^{\mathbb P'}
    [F\mathcal T_{s_i-r_i}g_i(X_{r_i})
       \mathbb I_{\{r_i<\tau_\infty\}}],
\end{equation}
for every $i\ge1$ and $F\in\mathscr H_{r_i,0}$.

Fix $i\ge1$ and $F\in\mathscr H_{r_i,0}$, and define $F^{[a]}(\omega):=F\bigl(\mathsf R_{a,X_a(\omega)}(\omega)\bigr)$.
Since $a\le r_i$, this function is bounded, nonnegative, and $\widetilde{\mathcal F}_{r_i}$-measurable.
Restarting at $a$ leaves the valuation orbits unchanged at times
$r_i$ and $s_i$.
Thus, for every $B\in\widetilde{\mathcal F}_a$, we have
\[
    \mathbb E^{\mathbb P}\!\left[
        \mathbb I_B
        \mathbb E^{\mathbb P^{a,\omega}}\!\left[
            F\bigl(
                Y_{s_i}^{\mathcal T;s_i,g_i}
                -Y_{r_i}^{\mathcal T;s_i,g_i}
            \bigr)
        \right]
    \right]
    =
    \mathbb E^{\mathbb P}\!\left[
        \mathbb I_B F^{[a]}
        \bigl(
            Y_{s_i}^{\mathcal T;s_i,g_i}
            -Y_{r_i}^{\mathcal T;s_i,g_i}
        \bigr)
    \right]
    \le0.
\]
The last inequality follows from $\mathbb P\in\mathcal R_{t,x}^{\mathcal T}$, since $\mathbb I_BF^{[a]}$ is $\widetilde{\mathcal F}_{r_i}$-measurable.
As this holds for every $B\in\widetilde{\mathcal F}_a$, we obtain
\[
    \mathbb E^{\mathbb P^{a,\omega}}\!\left[
        F\bigl(
            Y_{s_i}^{\mathcal T;s_i,g_i}
            -Y_{r_i}^{\mathcal T;s_i,g_i}
        \bigr)
    \right]
    \le0
\]
for $\mathbb P$-almost every $\omega$.
There are only countably many pairs $(i,F)$ with $F\in\mathscr H_{r_i,0}$.
We may therefore choose a single $\mathbb P$-null set outside which $\mathbb P^{a,\omega}\in\mathcal P_{a,X_a(\omega)}(G)$ and \eqref{eq:conditioning_selected_tests} holds under $\mathbb P^{a,\omega}$ for every $i\ge1$ and $F\in\mathscr H_{r_i,0}$.
By the preceding characterization, it follows that
\[
    \mathbb P^{a,\omega}
    \in\mathcal R_{a,X_a(\omega)}^{\mathcal T}
    \qquad\text{whenever }X_a(\omega)\in D.
\]
When $X_a(\omega)=\triangle$, the restarted law is $\delta_{\mathbf\triangle}$, so the same conclusion holds.
This proves stability under restarted conditioning.

We next prove stability under restarted concatenation.
Let $\nu$ be a $\widetilde{\mathcal F}_a$-measurable, restart-compatible kernel such that $\nu(\omega)\in\mathcal R_{a,X_a(\omega)}^{\mathcal T}$ for $\mathbb P$-almost every $\omega$, and set
$\mathbb P':=\mathbb P\otimes_a\nu$.
By Proposition~\ref{prop:propertiesP_t,x_revised},
$\mathbb P'\in\mathcal P_{t,x}(G)$.
Fix $R\ge t$ and $g\in\USC_b(D)$, and write $Y:=Y^{\mathcal T;R,g}$.
For $t\le r\le s\le R$ and bounded nonnegative
$\widetilde{\mathcal F}_r$-measurable $F$, we verify
$\mathbb E^{\mathbb P'}[F(Y_s-Y_r)]\le0$.
When $s\le a$, this follows from the same inequality under $\mathbb P$,
since the two laws agree on $\widetilde{\mathcal F}_a$.
When $a\le r$, set $F^\omega(\eta):=F(\omega\otimes_a\eta)$ and
integrate $\mathbb E^{\nu(\omega)}[F^\omega(Y_s-Y_r)]\le0$
using \eqref{eq:concatenated_virtual_law}.
When $r<a<s$, the tower property gives
$\mathbb E^{\mathbb P'}[F(Y_s-Y_a)]\le0$ and
$\mathbb E^{\mathbb P'}[F(Y_a-Y_r)]\le0$; adding them proves the claim.
Thus $Y^{\mathcal T;R,g}$ is a $\mathbb P'$-supermartingale on $[t,R]$, proving $\mathbb P'\in\mathcal R_{t,x}^{\mathcal T}$.

Finally, we prove topological stability.
By Corollary~\ref{prop:valuation_compatible_generator_substructure},
$\mathcal R_{t,x}^{\mathcal T}
=\mathcal R_{t,x}^{\mathcal T}\cap\mathcal P_{t,x}(G)$
for every $(t,x)$.
Proposition~
E.12\textup{(ii)} in the Supplementary Material therefore shows that
$(t,x)\mapsto\mathcal R_{t,x}^{\mathcal T}$ has a closed graph
on $[0,T_0]\times D$ for every $T_0>0$.
Let $K\subset[0,T_0]\times D$ be compact.
By Proposition~\ref{prop:propertiesP_t,x_revised}, the set
\[
    \mathcal P_K(G)
    :=\bigcup_{(t,x)\in K}\mathcal P_{t,x}(G)
\]
is compact.
The set $\left\{(t,x,\mathbb P):(t,x)\in K,\ \mathbb P\in\mathcal R_{t,x}^{\mathcal T}\right\}$ is closed in the compact product $K\times\mathcal P_K(G)$
and hence is compact.
Its projection onto the law coordinate is
\[
    \bigcup_{(t,x)\in K}\mathcal R_{t,x}^{\mathcal T},
\]
which is therefore compact.
Each fiber is convex because the prescribed-history condition
and the inequalities \eqref{eq:primitive_valuation_compatibility} are preserved by
convex combinations of laws.
This proves topological stability and hence stability of
$\mathcal R^{\mathcal T}$.

Combining this with time homogeneity from
Proposition~\ref{prop:recovered_time_homogeneity},
fiberwise nonemptiness from
Proposition~\ref{prop:recovered_calibrated_nonemptiness},
and the inclusion $\mathcal R^{\mathcal T}\subseteq\mathcal P(G)$
gives $\mathcal R^{\mathcal T}\in\mathfrak S(G)$.
\end{proof}

Finally, we prove Theorem~\ref{thm:local_lyapunov_valuation_structure_correspondence}.

\begin{proof}[Proof of Theorem~\ref{thm:local_lyapunov_valuation_structure_correspondence}]
Fix $G\in\mathfrak G_{\mathrm{Lyap}}$.
Proposition~\ref{prop:stable_virtual_structure_generates_valuation}
and inclusion in $\mathcal P(G)$ give
$\operatorname{Val}(\mathfrak S(G))\subseteq\mathfrak V_{\le}(G)$.
Conversely, for $\mathcal T\in\mathfrak V_{\le}(G)$,
Propositions~\ref{prop:recovered_time_homogeneity},
\ref{prop:recovered_calibrated_nonemptiness}, and~\ref{prop:valuation_relation_substructures}, together with
$\mathbb E^{\mathbb P}[Y_T^{\mathcal T;T,f}]
\le\mathbb E^{\mathbb P}[Y_t^{\mathcal T;T,f}]
=\mathcal T_{T-t}f(x)$ for
$\mathbb P\in\mathcal R_{t,x}^{\mathcal T}$, give
$\mathcal R^{\mathcal T}\in\mathfrak S(G)$ and
\eqref{eq:stable_recovery_representation}.
Thus $\operatorname{Val}$ maps $\mathfrak S(G)$ onto
$\mathfrak V_{\le}(G)$, and
Proposition~\ref{thm:abstract_stable_recovery} supplies the order
isomorphism and inverse formula. Taking unions over $G$ and using
\eqref{eq:local_lyapunov_structure_class} and
\eqref{eq:local_lyapunov_union_dominated_classes} proves the global
correspondence, with order reflection again supplied by
Proposition~\ref{thm:abstract_stable_recovery}.
\end{proof}

\section{Local Specifications and Global Valuations}
\label{sec:local_global_correspondence}
\label{sec:local_recovery}

Section~\ref{sec:valuation_orbit_nonemptiness} identifies a valuation with its stable uncertainty structure.
We now ask how much of this global information is retained by the local specification.

\subsection{Local constraints and global domination}
\label{subsec:local_constraints_global_domination}

We first formulate the passage between local specifications and global
objects. Proposition~\ref{prop:local_upper_generator_representation}
defines the \emph{localization map}
\begin{equation}
\label{eq:localization_mapping}
    \Loc:\mathfrak V_{\mathrm{loc}}\longrightarrow\mathfrak G,
    \qquad \Loc(\mathcal T):=G_{\mathcal T}.
\end{equation}
We call $G\in\mathfrak G$ \emph{globally realizable} if
$G\in\Loc(\mathfrak V_{\mathrm{loc}})$.
Conversely, Proposition~\ref{prop:propertiesP_t,x_revised}
and Corollary~\ref{cor:canonical_regular_dynamic_virtual_structure}
ensure that $\mathcal P(G)\in\mathfrak S(G)$ for every
$G\in\mathfrak G_{\mathrm{Lyap}}$.
This defines the \emph{globalization map}
\begin{equation}
\label{eq:globalization_mapping_definition}
    \Glob:\mathfrak G_{\mathrm{Lyap}}
       \longrightarrow\mathfrak S_{\mathrm{loc,Lyap}},
    \qquad \Glob(G):=\mathcal P(G),
\end{equation}
with associated robust valuation
$\mathcal V^G=\operatorname{Val}(\Glob(G))$.

Both maps preserve order on their domains. For $\Loc$, this follows
from monotonicity of the upper half-relaxed limit and the realization of
every jet by a bounded smooth function. If $H\le G$, the process in
\eqref{eq:defM_f,nonline} is obtained from its counterpart for $H$ by
subtracting the nondecreasing integral
$\int_t^{[s]_{t,\tau_n}}(G-H)(X_r,J_{X_r}^2f)
\mathbb I_{\{r<\tau_\infty\}}\,dr$.
It is therefore a $\mathbb P$-supermartingale for every
$\mathbb P\in\mathcal P_{t,x}(H)$, giving
$\mathcal P(H)\subseteq\mathcal P(G)$; hence $\Glob$ also preserves order.
Moreover, any $H\in\mathfrak G$ below
$G\in\mathfrak G_{\mathrm{Lyap}}$ satisfies the same Lyapunov bound.
The following result identifies exactly how a local upper bound
constrains a valuation and its recovered uncertainty structure.

\begin{theorem}
\label{cor:local_global_structure_order}
For $\mathcal T\in\mathfrak V_{\mathrm{loc,Lyap}}$ and
$G\in\mathfrak G_{\mathrm{Lyap}}$,
\begin{equation}
\label{eq:local_global_recovered_structure_order}
    \Loc(\mathcal T)\le G
    \quad\Longleftrightarrow\quad
       \mathcal T\le\mathcal V^G
    \quad\Longleftrightarrow\quad
       \mathcal R^{\mathcal T}\subseteq\mathcal P(G).
\end{equation}
Equivalently, for $\mathcal C\in\mathfrak S_{\mathrm{loc,Lyap}}$,
\begin{equation}
\label{eq:local_global_structure_order}
    \Loc(\operatorname{Val}(\mathcal C))\le G
    \quad\Longleftrightarrow\quad
    \mathcal C\subseteq\Glob(G).
\end{equation}
\end{theorem}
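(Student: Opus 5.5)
The plan is to assemble the theorem from Proposition~\ref{cor:local_global_domination}, Corollary~\ref{prop:valuation_compatible_generator_substructure}, and the order isomorphism of Theorem~\ref{thm:local_lyapunov_valuation_structure_correspondence}. No new analysis is needed.

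Fix $\mathcal T\in\mathfrak V_{\mathrm{loc,Lyap}}$ and $G\in\mathfrak G_{\mathrm{Lyap}}$. For the first equivalence $\Loc(\mathcal T)\le G\iff\mathcal T\le\mathcal V^G$, apply Proposition~\ref{cor:local_global_domination}. Its third condition, ``$\mathcal T\in\mathfrak V_{\mathrm{loc}}$ and $G_{\mathcal T}\le G$'', reduces to $G_{\mathcal T}\le G$, because $\mathcal T\in\mathfrak V_{\mathrm{loc}}$ holds by hypothesis. Its first condition is exactly $\mathcal T\in\mathfrak V_{\le}(G)$, that is, $\mathcal T\le\mathcal V^G$.

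For the second equivalence, the forward direction $\mathcal T\le\mathcal V^G\Rightarrow\mathcal R^{\mathcal T}\subseteq\mathcal P(G)$ is Corollary~\ref{prop:valuation_compatible_generator_substructure}. For the converse, assume $\mathcal R^{\mathcal T}\subseteq\mathcal P(G)$. By Proposition~\ref{prop:recovered_calibrated_nonemptiness}, applied with the ambient specification $G_{\mathcal T}$ (legitimate since $\mathcal T\in\mathfrak V_{\le}(G_{\mathcal T})$ by \eqref{eq:local_lyapunov_union_dominated_classes}), we have for every $t,x,f$
\[
\mathcal T_tf(x)=\max_{\mathbb P\in\mathcal R^{\mathcal T}_x}\mathbb E^{\mathbb P}[f(X_t)\mathbb I_{\{t<\tau_\infty\}}]
\le\sup_{\mathbb P\in\mathcal P_x(G)}\mathbb E^{\mathbb P}[f(X_t)\mathbb I_{\{t<\tau_\infty\}}]=\mathcal V^G_tf(x).
\]
Alternatively, since $\mathcal R^{\mathcal T}$ and $\mathcal P(G)$ both lie in $\mathfrak S_{\mathrm{loc,Lyap}}$, the order embedding \eqref{eq:valuation_order_embedding} together with $\operatorname{Val}(\mathcal R^{\mathcal T})=\mathcal T$ gives the same conclusion.

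The reformulation \eqref{eq:local_global_structure_order} then follows by substitution. For $\mathcal C\in\mathfrak S_{\mathrm{loc,Lyap}}$, set $\mathcal T:=\operatorname{Val}(\mathcal C)$. Then $\mathcal T\in\mathfrak V_{\mathrm{loc,Lyap}}$ by Theorem~\ref{thm:local_lyapunov_valuation_structure_correspondence}, and $\mathcal R^{\mathcal T}=\mathcal C$ by Proposition~\ref{thm:abstract_stable_recovery}. The outer equivalence in \eqref{eq:local_global_recovered_structure_order} therefore becomes $\Loc(\operatorname{Val}(\mathcal C))\le G\iff\mathcal C\subseteq\mathcal P(G)=\Glob(G)$. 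Conversely, every $\mathcal T$ in the class arises as $\operatorname{Val}(\mathcal R^{\mathcal T})$, so the two formulations are genuinely equivalent.

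The only step needing care is applying earlier results whose standing hypothesis was ``fix $G$, $\mathcal T\in\mathfrak V_{\le}(G)$'' to a possibly different specification. This is handled by using $G_{\mathcal T}$ as the ambient specification. The real analytic content, namely Corollary~\ref{cor:usc_terminal_realization} behind Proposition~\ref{cor:local_global_domination}, is already available.
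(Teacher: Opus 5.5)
Your proposal is correct and follows the paper's proof. The first equivalence is Proposition~\ref{cor:local_global_domination}, and the second comes from the order isomorphism of Theorem~\ref{thm:local_lyapunov_valuation_structure_correspondence} together with $\operatorname{Val}(\mathcal R^{\mathcal T})=\mathcal T$; your use of Corollary~\ref{prop:valuation_compatible_generator_substructure} and of the attained representation with ambient specification $G_{\mathcal T}$ simply unpacks that step. The reformulation then follows, as in the paper, by substituting $\mathcal T=\operatorname{Val}(\mathcal C)$ and using the inverse formula \eqref{eq:abstract_stable_recovery}.
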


\begin{proof}
The first equivalence in
\eqref{eq:local_global_recovered_structure_order} is
Proposition~\ref{cor:local_global_domination}.
The second follows from the order isomorphism in
Theorem~\ref{thm:local_lyapunov_valuation_structure_correspondence}, since
$\operatorname{Val}(\mathcal R^{\mathcal T})=\mathcal T$.
Applying this equivalence to $\mathcal T=\operatorname{Val}(\mathcal C)$
and using the inverse formula \eqref{eq:abstract_stable_recovery}
gives \eqref{eq:local_global_structure_order}.
\end{proof}

Thus $G$ determines the greatest valuation whose upper generating
function is bounded by $G$, namely $\mathcal V^G$, and the full family
$\mathcal P(G)$ containing every corresponding recovered structure.
This conclusion requires neither uniqueness of the global evolution
nor global realizability of $G$. The next two subsections examine what
happens when we compose localization and globalization in either order.

\subsection{Global information lost under localization}
\label{subsec:global_information_localization}

Start with a valuation $\mathcal T$ and retain only its local specification $G_{\mathcal T}$. 
Globalization then includes all laws compatible with this specification, and robust valuation gives $\mathcal V^{G_{\mathcal T}}$. 
The next result shows that this operation preserves the upper generating function while producing the greatest valuation with that function.
It also describes the corresponding enlargement of the recovered uncertainty structure.

\begin{theorem}
\label{thm:maximal_valuation_same_specification}
For every $\mathcal T\in\mathfrak V_{\mathrm{loc,Lyap}}$,
\begin{equation}
\label{eq:greatest_valuation_same_upper_generator}
    \mathcal V^{G_{\mathcal T}}
    =\max\{\mathcal S\in\mathfrak V_{\mathrm{loc}}:
                    G_{\mathcal S}=G_{\mathcal T}\}.
\end{equation}
Its recovered uncertainty structure is
$\mathcal R^{\mathcal V^{G_{\mathcal T}}}=\mathcal P(G_{\mathcal T})$,
and
\begin{equation}
\label{eq:smallest_full_local_family}
    \mathcal P(G_{\mathcal T})
    =\min\{\mathcal P(H):H\in\mathfrak G_{\mathrm{Lyap}},
              \ \mathcal R^{\mathcal T}\subseteq\mathcal P(H)\},
\end{equation}
where the minimum is taken in fiberwise inclusion.
\end{theorem}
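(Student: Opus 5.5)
The plan is to derive all three assertions from Theorem~\ref{cor:local_global_structure_order}, Corollary~\ref{cor:V_G_is_local_Lyapunov}, and the recovery identity of Proposition~\ref{thm:abstract_stable_recovery}. Fix $\mathcal T\in\mathfrak V_{\mathrm{loc,Lyap}}$ and write $G:=G_{\mathcal T}\in\mathfrak G_{\mathrm{Lyap}}$.

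\emph{Step 1: $G_{\mathcal V^G}=G$.} Corollary~\ref{cor:V_G_is_local_Lyapunov} gives $\mathcal V^G\in\mathfrak V_{\mathrm{loc,Lyap}}$ and $G_{\mathcal V^G}\le G$. For the reverse inequality, apply Theorem~\ref{cor:local_global_structure_order} to $\mathcal T$ with the bound $G$. Since $\Loc(\mathcal T)=G\le G$ trivially, we get $\mathcal T\le\mathcal V^G$. Monotonicity of $\Loc$, noted before Theorem~\ref{cor:local_global_structure_order}, then yields $G=G_{\mathcal T}\le G_{\mathcal V^G}$. Hence $\mathcal V^G$ belongs to the set $\{\mathcal S\in\mathfrak V_{\mathrm{loc}}:G_{\mathcal S}=G\}$.

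\emph{Step 2: maximality.} Let $\mathcal S\in\mathfrak V_{\mathrm{loc}}$ satisfy $G_{\mathcal S}=G$. Then $G_{\mathcal S}\in\mathfrak G_{\mathrm{Lyap}}$, so $\mathcal S\in\mathfrak V_{\mathrm{loc,Lyap}}$. Proposition~\ref{cor:local_global_domination} with bound $G$ gives $\mathcal S\le\mathcal V^G$. This proves \eqref{eq:greatest_valuation_same_upper_generator}.

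\emph{Step 3: the recovered structure.} By Proposition~\ref{prop:propertiesP_t,x_revised} and Corollary~\ref{cor:canonical_regular_dynamic_virtual_structure}, $\mathcal P(G)\in\mathfrak S(G)\subseteq\mathfrak S_{\mathrm{loc,Lyap}}$. Proposition~\ref{thm:abstract_stable_recovery} then gives $\mathcal R^{\mathcal V^G}=\mathcal R^{\operatorname{Val}(\mathcal P(G))}=\mathcal P(G)$.

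\emph{Step 4: the minimality \eqref{eq:smallest_full_local_family}.} First, $\mathcal P(G)$ lies in the family. Taking $H=G$, Theorem~\ref{cor:local_global_structure_order} with $\Loc(\mathcal T)\le G$ gives $\mathcal R^{\mathcal T}\subseteq\mathcal P(G)$. Next, let $H\in\mathfrak G_{\mathrm{Lyap}}$ with $\mathcal R^{\mathcal T}\subseteq\mathcal P(H)$. The same theorem gives $G=\Loc(\mathcal T)\le H$, and order preservation of $\Glob$ gives $\mathcal P(G)\subseteq\mathcal P(H)$. So $\mathcal P(G)$ is the least element and the minimum is attained.

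None of these steps is a genuine obstacle; the only delicate point is in Step 2. The maximum is taken over $\mathfrak V_{\mathrm{loc}}$, not $\mathfrak V_{\mathrm{loc,Lyap}}$, so one must observe that $G_{\mathcal S}=G$ already forces the Lyapunov property before the domination result can be applied. Proposition~\ref{cor:local_global_domination} needs only $\mathcal S\in\mathfrak V$ anyway, so this is harmless.
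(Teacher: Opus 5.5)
Your proposal is correct and follows the paper's proof essentially step for step. The paper argues the same way: $\mathcal T\le\mathcal V^G$, monotonicity of $\Loc$ and $G_{\mathcal V^G}\le G$ give equality of the generating functions; the local--global order correspondence (including the observation that $G_{\mathcal S}=G$ forces $\mathcal S\in\mathfrak V_{\mathrm{loc,Lyap}}$) gives maximality; the inverse formula gives $\mathcal R^{\mathcal V^G}=\mathcal P(G)$; and the same equivalence together with monotonicity of $\Glob$ gives minimality. Your explicit appeal to Corollary~\ref{cor:V_G_is_local_Lyapunov}, which places $\mathcal V^G$ in the domain of $\Loc$, only makes explicit what the paper leaves implicit.
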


\begin{proof}
Write $G=G_{\mathcal T}$. The order correspondence gives
$\mathcal T\le\mathcal V^G$. Since $\Loc$ preserves order and
Lemma~\ref{prop:ambient_valuation_generator_bounds} gives
$G_{\mathcal V^G}\le G$, we obtain
\[
    G=\Loc(\mathcal T)
    \le\Loc(\mathcal V^G)\le G.
\]
Consequently, $\mathcal V^G$ has the same upper generating function as $\mathcal T$ and belongs to the set in \eqref{eq:greatest_valuation_same_upper_generator}.
Every $\mathcal S$ in that set has local specification $G$ and hence
belongs to $\mathfrak V_{\mathrm{loc,Lyap}}$.
By Theorem~\ref{cor:local_global_structure_order}, we have $\mathcal S\le\mathcal V^G$, proving \eqref{eq:greatest_valuation_same_upper_generator}.

The identity $\mathcal R^{\mathcal V^G}=\mathcal P(G)$ follows from the inverse formula in Theorem~\ref{thm:local_lyapunov_valuation_structure_correspondence}.
Moreover, \eqref{eq:local_global_recovered_structure_order} shows that $\mathcal P(G)$ belongs to the set in \eqref{eq:smallest_full_local_family}.
If $\mathcal R^{\mathcal T}\subseteq\mathcal P(H)$, the same equivalence gives $G\le H$, and hence $\mathcal P(G)\subseteq\mathcal P(H)$ by monotonicity of $\Glob$.
\end{proof}

The valuation $\mathcal T$ is therefore recovered from its local specification precisely when it is the greatest valuation with that specification.
In general, localization can lose information about which compatible laws belong to $\mathcal R^{\mathcal T}$.
The following example shows that this loss can occur even for a continuous diffusion specification.

\begin{example}[Distinct valuations with the same local specification]
\label{ex:singular_diffusion_multiple_branches}
\label{ex:strict_branch_canonical_separation}
Let $D=\mathbb R$, $\alpha\in(0,\frac12)$, and
\[
 \sigma_\alpha(x)=\frac{|x|^\alpha}{1+|x|^\alpha},\qquad
 a_\alpha=\sigma_\alpha^2,\qquad
 G_\alpha(x,z,p,H)=\tfrac12a_\alpha(x)H.
\]
This continuous specification belongs to $\mathfrak G_{\mathrm{Lyap}}$:
for $\phi(x)=1+x^2$, $G_\alpha(x,J_x^2\phi)=a_\alpha(x)\le\phi(x)$.
The equation
\begin{equation}
\label{eq:singular_diffusion_example_sde}
 dX_t=\sigma_\alpha(X_t)\,dW_t
\end{equation}
admits an absorbing Feller family $\mathbb Q_x^{\mathrm{abs}}$ and a
regular zero-sojourn Feller family $\mathbb Q_x^0$; see
\citet[Section~3, pp.~327--330]{girsanov1962nonuniqueness}.
Their transition expectations $\mathcal T_t^\bullet f(x)
=\mathbb E^{\mathbb Q_x^\bullet}[f(X_t)]$ define valuations on
$\USC_b(\mathbb R)$: joint weak continuity gives \ref{V4}, and the
other axioms follow from linearity, monotone convergence, and the Markov
property. It\^o's formula and the uniform short-time oscillation estimate
in Proposition~
E.3\textup{(i)} of the Supplementary Material give, for $f\in C_b^\infty(\mathbb R)$,
\[
 \Delta_h^{\mathcal T^\bullet}f(x)
 =\frac1h\mathbb E^{\mathbb Q_x^\bullet}
       \int_0^h\tfrac12a_\alpha(X_s)f''(X_s)\,ds
 \longrightarrow\tfrac12a_\alpha(x)f''(x)
\]
locally uniformly. Both valuations therefore have upper generating
function $G_\alpha$.

They are distinct: the absorbing law stays at zero, whereas the
zero-sojourn property
\cite[Definition~5.1 and Theorems~5.4--5.5]{engelbert1985solutions}
and Fubini's theorem give a $t_0>0$ with
$\mathbb Q_0^0(X_{t_0}\ne0)=1$. For $f_+(x)=x^2/(1+x^2)$,
\[
 0=\mathcal T_{t_0}^{\mathrm{abs}}f_+(0)
   <\mathcal T_{t_0}^0f_+(0)
   \le\mathcal V_{t_0}^{G_\alpha}f_+(0).
\]
Thus $(\operatorname{Val}\circ\Glob\circ\Loc)
(\mathcal T^{\mathrm{abs}})\ne\mathcal T^{\mathrm{abs}}$, although
both branches localize to the same $G_\alpha$.
\end{example}

\subsection{The effective local specification}
\label{subsec:effective_local_specification}

We now start with a specification $G\in\mathfrak G_{\mathrm{Lyap}}$.
Its full family $\mathcal P(G)$ generates the valuation $\mathcal V^G$,
whose local specification is
\[
    G^\downarrow:=G_{\mathcal V^G}
       =(\Loc\circ\operatorname{Val}\circ\Glob)(G).
\]
The upper generator bound gives $G^\downarrow\le G$.
The next result shows that this reduction preserves all admissible laws
and extracts the least specification defining them.

\begin{theorem}
\label{thm:realizable_specification_reduction}
For every $G\in\mathfrak G_{\mathrm{Lyap}}$, the function $G^\downarrow$ belongs to $\mathfrak G_{\mathrm{Lyap}}$ and satisfies
\begin{equation}
\label{eq:greatest_realizable_minorant}
    G^\downarrow
    =\max\{H\in\Loc(\mathfrak V_{\mathrm{loc}}):H\le G\}.
\end{equation}
Moreover,
\begin{equation}
\label{eq:ambient_equals_own_recovered_structure}
    \mathcal R^{\mathcal V^G}
       =\mathcal P(G)=\mathcal P(G^\downarrow),
\end{equation}
and
\begin{equation}
\label{eq:least_law_equivalent_specification}
    G^\downarrow
    =\min\{H\in\mathfrak G_{\mathrm{Lyap}}:
                    \mathcal P(H)=\mathcal P(G)\}.
\end{equation}
In particular, $G^\downarrow=G$ if and only if $G$ is globally realizable.
\end{theorem}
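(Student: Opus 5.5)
The proof rests on the order equivalences already established. Corollary~\ref{cor:V_G_is_local_Lyapunov} gives $\mathcal V^G\in\mathfrak V_{\mathrm{loc,Lyap}}$ and $G^\downarrow\in\mathfrak G_{\mathrm{Lyap}}$ with $G^\downarrow\le G$. In particular $G^\downarrow=\Loc(\mathcal V^G)\in\Loc(\mathfrak V_{\mathrm{loc}})$ is a realizable minorant of $G$.

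\emph{Maximality in \eqref{eq:greatest_realizable_minorant}.} Let $H=G_{\mathcal S}$ for some $\mathcal S\in\mathfrak V_{\mathrm{loc}}$ with $H\le G$. Since $\mathfrak G_{\mathrm{Lyap}}$ is downward closed, $H\in\mathfrak G_{\mathrm{Lyap}}$, and hence $\mathcal S\in\mathfrak V_{\mathrm{loc,Lyap}}$. Theorem~\ref{cor:local_global_structure_order} then gives $\mathcal S\le\mathcal V^G$, and monotonicity of $\Loc$ yields $H=\Loc(\mathcal S)\le\Loc(\mathcal V^G)=G^\downarrow$. This proves that $G^\downarrow$ is the maximum.

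\emph{The identities \eqref{eq:ambient_equals_own_recovered_structure}.} The inverse formula in Theorem~\ref{thm:local_lyapunov_valuation_structure_correspondence} applied to $\mathcal C=\mathcal P(G)\in\mathfrak S(G)$ gives $\mathcal R^{\mathcal V^G}=\mathcal P(G)$. For the second equality, apply Theorem~\ref{cor:local_global_structure_order} with $\mathcal T=\mathcal V^G$ and the specification $G^\downarrow\in\mathfrak G_{\mathrm{Lyap}}$. Since $\Loc(\mathcal V^G)=G^\downarrow\le G^\downarrow$ trivially, the theorem yields $\mathcal P(G)=\mathcal R^{\mathcal V^G}\subseteq\mathcal P(G^\downarrow)$. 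The reverse inclusion $\mathcal P(G^\downarrow)\subseteq\mathcal P(G)$ follows from $G^\downarrow\le G$ and monotonicity of $\Glob$.

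\emph{Minimality in \eqref{eq:least_law_equivalent_specification}.} The previous step shows that $G^\downarrow$ belongs to the set. Take any $H\in\mathfrak G_{\mathrm{Lyap}}$ with $\mathcal P(H)=\mathcal P(G)$. Then $\mathcal R^{\mathcal V^G}=\mathcal P(G)\subseteq\mathcal P(H)$, and Theorem~\ref{cor:local_global_structure_order} with $\mathcal T=\mathcal V^G$ converts this inclusion into $G^\downarrow=\Loc(\mathcal V^G)\le H$.

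\emph{The final claim.} If $G^\downarrow=G$, then $G=\Loc(\mathcal V^G)$ is globally realizable. Conversely, if $G=G_{\mathcal S}$ is realizable, then $G$ itself lies in the set in \eqref{eq:greatest_realizable_minorant}, so $G\le G^\downarrow\le G$.

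I expect no real obstacle here. The only point needing care is that Theorem~\ref{cor:local_global_structure_order} may be invoked with the auxiliary specifications $G^\downarrow$ and $H$. This is legitimate because both lie in $\mathfrak G_{\mathrm{Lyap}}$ and $\mathcal V^G\in\mathfrak V_{\mathrm{loc,Lyap}}$ by Corollary~\ref{cor:V_G_is_local_Lyapunov}, and it must be checked at each application.
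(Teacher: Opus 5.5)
Your proof is correct and follows the paper's argument essentially step by step. Membership of $G^\downarrow$ comes from Corollary~\ref{cor:V_G_is_local_Lyapunov} (the paper cites the two lemmas behind it), maximality from Theorem~\ref{cor:local_global_structure_order} together with monotonicity of $\Loc$, and $\mathcal P(G)=\mathcal P(G^\downarrow)$ from the inverse formula plus the order equivalence with the bound $G^\downarrow$.

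The only difference is cosmetic and lies in the minimality step. The paper notes that $\mathcal P(H)=\mathcal P(G)$ forces $\mathcal V^H=\mathcal V^G$, and then uses $G_{\mathcal V^H}\le H$ from Lemma~\ref{prop:ambient_valuation_generator_bounds}. You route the same conclusion through the order equivalence instead; both are immediate.
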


\begin{proof}
By Lemmas~\ref{prop:ambient_valuation_generator_bounds} and~\ref{prop:upper_domination_jet_structure}, $\mathcal V^G$ belongs to $\mathfrak V_{\mathrm{loc}}$ and satisfies $G_{\mathcal V^G}\le G$ .
Thus $G^\downarrow$ satisfies the same Lyapunov bound as $G$ and
belongs to the set in \eqref{eq:greatest_realizable_minorant}.
For any $H$ in that set, choose $\mathcal T\in\mathfrak V_{\mathrm{loc}}$ with $\Loc(\mathcal T)=H$.
Since $H\le G$, by Theorem~\ref{cor:local_global_structure_order}, we have $\mathcal T\le\mathcal V^G$.
Applying the order-preserving map $\Loc$ yields $H\le G^\downarrow$, proving \eqref{eq:greatest_realizable_minorant}.

The first equality in \eqref{eq:ambient_equals_own_recovered_structure} follows from the inverse formula in Theorem~\ref{thm:local_lyapunov_valuation_structure_correspondence}, applied to $\mathcal C=\mathcal P(G)$.
Since $G^\downarrow\le G$, monotonicity gives $\mathcal P(G^\downarrow)\subseteq\mathcal P(G)$.
For the reverse inclusion, apply \eqref{eq:local_global_structure_order} with $\mathcal C=\mathcal P(G)$ and local bound $G^\downarrow$, observing that $\Loc(\operatorname{Val}(\mathcal P(G)))=G^\downarrow$.

It follows that $G^\downarrow$ belongs to the set in \eqref{eq:least_law_equivalent_specification}. 
If $\mathcal P(H)=\mathcal P(G)$, then $\mathcal V^H=\mathcal V^G$ and $G^\downarrow=G_{\mathcal V^H}\le H$, proving the minimality assertion.
Finally, \eqref{eq:greatest_realizable_minorant} gives $G^\downarrow=G$ exactly when $G$ is globally realizable.
\end{proof}

The two extremal descriptions concern different classes:
$G^\downarrow$ is the greatest globally realizable specification below
$G$, and the least specification defining the same admissible laws as
$G$. Any difference between $G$ and $G^\downarrow$ therefore leaves
the law family unchanged. The following example exhibits such a
difference through an additional drift allowance at a single state.

\begin{example}[An unrealizable local specification]
\label{ex:unrealizable_drift_spike}
On $D=\mathbb R$, let
\[
 G(x,z,p,H)=p+\mathbb I_{\{0\}}(x)p^+.
\]
Then $G\in\mathfrak G_{\mathrm{Lyap}}$, with
$\phi(x)=1+x^2$ and $G(x,J_x^2\phi)=2x\le\phi(x)$.
Characteristic recovery
(Proposition~\ref{prop:recovery_aggregation_generator_compatible_characteristics})
shows that every compatible law has zero killing and covariance, and
paths $X_t=x+\int_0^t b_s\,ds$, where $b_s=1$ off zero and
$b_s\in[1,2]$ at zero. Such a path is strictly increasing, so it spends
zero Lebesgue time at zero and $X_t=x+t$. Conversely, this path is
compatible. Thus
\[
 \mathcal P_x(G)=\{\delta_{(x+t)_{t\ge0}}\},\qquad
 \mathcal V_t^Gf(x)=f(x+t),\qquad G^\downarrow(x,z,p,H)=p.
\]
The last function is strictly smaller than $G$ at $x=0$, $p>0$.
By Theorem~\ref{thm:realizable_specification_reduction}, $G$ is not
the upper generating function of any valuation.
\end{example}

\section{Characterization and Uniqueness of the Canonical Valuation}
\label{sec:exact_generation_comparison}

Section~\ref{sec:local_global_correspondence} identifies
$G^\downarrow$ as the upper generating function of the canonical
valuation $\mathcal V^G$. We first strengthen this identification:
$G^\downarrow$ also describes its lower half-relaxed infinitesimal
behavior and is therefore an exact generating function of
$\mathcal V^G$. Exact generation by the original specification $G$
then amounts to the equality $G^\downarrow=G$.
We next characterize the valuation orbits as upper Perron envelopes
and show that comparison makes a valuation with a prescribed exact
generating function unique whenever it exists.

\subsection{Exact generation of the canonical valuation}
\label{subsec:exact_generation}
\label{subsec:ambient_lower_consistency}

The upper generating function specifies the upper half-relaxed limit
of the difference quotients. We now require the lower half-relaxed
limit to agree with the lower semicontinuous envelope of the same
specification.

\begin{definition}
\label{def:relaxed_infinitesimal_generator}
A local specification $G$ is an \emph{exact generating function} of
$\mathcal T\in\mathfrak V$ if, for every $f\in C_b^\infty(D)$,
\begin{equation}
\label{eq:relaxed_generator_consistency}
 \limsup_{h\searrow0}^{*}\Delta_h^{\mathcal T}f=G(\cdot,J^2f),
 \qquad
 \liminf_{h\searrow0,*}\Delta_h^{\mathcal T}f
       =\bigl(G(\cdot,J^2f)\bigr)_*.
\end{equation}
\end{definition}

For continuous $G(\cdot,J^2f)$ these identities give locally uniform convergence.
The following theorem shows that the full-family construction always has this stronger consistency with its effective specification $G^\downarrow$. 
It also identifies the lower limit directly in terms of the original specification $G$.

\begin{theorem}
\label{thm:canonical_exact_reduced_generator}
Let $G\in\mathfrak G_{\mathrm{Lyap}}$.
Then $G^\downarrow$ is an exact generating function of $\mathcal V^G$.
Moreover,
\begin{align}\label{eq:thm:canonical_exact_reduced_generator_1}
    G_*=G^\downarrow_*, \qquad (G_*)^*\le G^\downarrow\le G.
\end{align}
\end{theorem}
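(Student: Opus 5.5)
The upper half of \eqref{eq:relaxed_generator_consistency} holds by definition, because $G^\downarrow=G_{\mathcal V^G}$ and Proposition~\ref{prop:local_upper_generator_representation} give $\limsup^*_{h\searrow0}\Delta_h^{\mathcal V^G}f=G^\downarrow(\cdot,J^2f)$. The inequality $G^\downarrow\le G$ is Lemma~\ref{prop:ambient_valuation_generator_bounds}. What remains is the lower relaxed limit. I would show
\[
 \liminf_{h\searrow0,*}\Delta_h^{\mathcal V^G}f\;\ge\;\bigl(G(\cdot,J^2f)\bigr)_*,
 \qquad f\in C_b^\infty(D).
\]
This bound yields everything else. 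Since $G^\downarrow\le G$, the lower limit also lies below $(G^\downarrow(\cdot,J^2f))_*\le (G(\cdot,J^2f))_*$, because $\liminf_*\le\limsup^*$ and the latter is an upper semicontinuous function dominated by $G^\downarrow(\cdot,J^2f)$. Hence all three quantities coincide. Realizing an arbitrary jet near a point by one smooth $f$, and using the jet-Lipschitz bound in \ref{item:G1} to pass envelopes through nearby jets, gives $G_*=G^\downarrow_*$ on $D\times\mathfrak J$. Taking upper envelopes of $G_*=G^\downarrow_*\le G^\downarrow$ and using that $G^\downarrow$ is upper semicontinuous gives $(G_*)^*\le G^\downarrow$.

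For the lower bound I would argue with laws, in the spirit of Lemma~\ref{prop:ambient_valuation_generator_bounds}. Fix $x_j\to x$, $h_j\searrow0$ and $\varepsilon>0$, and set $\gamma:=(G(\cdot,J^2f))_*(x)-\varepsilon$. By lower semicontinuity of the envelope, $G(y,J_y^2f)>\gamma$ for $y\in B_r(x)$. At each $y\in B_r(x)$ the set $\{V\in\mathfrak A_G(y):\ell_V(J_y^2f)\ge\gamma\}$ is nonempty by Lemma~\ref{lem:support_correspondence_geometry}. I take a Borel selector $V(y)=(c,b,a)(y)$ of this set on $B_r(x)$ and of $\mathfrak A_G$ elsewhere, using the closed graph and a measurable-selection result from Supplementary Proposition~D.1.

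The main obstacle is the next step: I need a law in $\mathcal P_{x_j}(G)$ whose characteristics are approximately this selector. A merely Borel field need not admit a solution of its martingale problem, and it is not Markov-continuous. My first attempt would avoid solving for the field itself. I would freeze the coefficient at the starting point, putting $\beta\equiv V(x_j)$ while $X$ stays in $B_r(x)$ and $\beta\in\mathfrak A_G(X_s)$ afterwards. The catch is that $V(x_j)\in\mathfrak A_G(x_j)$ does not imply $V(x_j)\in\mathfrak A_G(X_s)$ for $X_s\ne x_j$, so generator compatibility fails.

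The fix I would try is to invoke realization instead. The function $u(s,y):=f(y)+s\gamma$ is a smooth strict subsolution of $\partial_tu-G(y,J^2u)\le -\varepsilon$ on $(0,\infty)\times B_r(x)$, because $G(y,J_y^2f)>\gamma$ there. I would apply Theorem~\ref{prop:one_step_usc_subsolution_realization} with $q\equiv\gamma$ for the localized subsolution $f$. Its proof mechanism, namely nondegenerate approximations with near-maximizing coefficients and a weak limit, should produce $\mathbb P_j\in\mathcal P_{x_j}(G)$ with
\[
\mathbb E^{\mathbb P_j}\bigl[f(X_{h_j\wedge\rho})\mathbb I_{\{h_j\wedge\rho<\tau_\infty\}}\bigr]
\ge f(x_j)+\gamma\,\mathbb E^{\mathbb P_j}[h_j\wedge\rho]-o(h_j).
\]
The live-exit and killing estimates of Supplementary Propositions~E.3(i)--(ii) then replace $h_j\wedge\rho$ by $h_j$ at cost $o(h_j)$. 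This gives $\mathcal V^G_{h_j}f(x_j)\ge f(x_j)+\gamma h_j-o(h_j)$, and letting $\varepsilon\searrow0$ finishes the proof.

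If the realization theorem cannot be applied in this direction, since it only gives an upper bound on $u(T,x)$ by an expectation, the fallback is to rerun its approximation argument with $\mathfrak A_G$ replaced by the sub-correspondence $\{V:\ell_V(J^2f)\ge\gamma\}$ on $B_r(x)$. I expect this adaptation to be the hardest part of the proof.
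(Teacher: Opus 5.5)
Your reduction of the theorem to the single bound $\liminf_{h\searrow0,*}\Delta_h^{\mathcal V^G}f\ge(G(\cdot,J^2f))_*$ and the envelope bookkeeping built on it match the paper. Your eventual tool for that bound is also essentially the paper's route: Theorem~\ref{prop:one_step_usc_subsolution_realization} applied to the time-independent function $f$ with a source minorizing $G(\cdot,J^2f)$, followed by the short-time estimates of Supplementary Proposition~E.3. But the key step is not established, because you check the hypotheses of the realization theorem only locally. That theorem needs $u,q\in\USC_b([0,\infty)\times D)$ with $u$ a viscosity subsolution of \eqref{eq:usc_subsolution_with_source} on all of $(0,\infty)\times D$. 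With a constant source, $f$ is a subsolution only on $(0,\infty)\times B_r(x)$. Off the ball, $G(y,J_y^2f)$ may drop below $\gamma$. Worse, for general $f\in C_b^\infty(D)$ the function $q_f:=G(\cdot,J^2f)$ need not even be bounded below on $D$, since the constants $L_K$ in \ref{item:G1} may grow without bound as $K$ exhausts $D$. So no bounded source makes $f$ itself a global subsolution.

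The missing idea is the paper's globalization. First replace $f$ by some $\widetilde f\in C_c^\infty(D)$ equal to $f$ near $\overline{B_r(x)}$. By the live-exit estimate in Supplementary Proposition~E.3(ii), $\mathcal V_h^Gf-\mathcal V_h^G\widetilde f=o(h)$ uniformly near $x$, and $q_f=q_{\widetilde f}$ near $x$. Hence neither side of the target inequality changes at $x$, while $q_{\widetilde f}$ is bounded on $D$. Then use a bounded source whose negative is a global lower semicontinuous minorant of $q_{\widetilde f}$. The paper takes $-r_m$ with the inf-convolutions $r_m(y)=\inf_z\{q_{\widetilde f}(z)+m|y-z|\}\uparrow(q_{\widetilde f})_*$. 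In your constant version, $q:=-\gamma\mathbb I_{B_r(x)}+C\mathbb I_{D\setminus B_r(x)}$ with $C\ge\max\{\sup_D|q_{\widetilde f}|,|\gamma|\}$ also works. It is upper semicontinuous and bounded, and it makes $\widetilde f$ a global subsolution. Stopping at a domain $O\subseteq B_r(x)$, it even spares you the spatial-oscillation estimate that the paper needs for $r_m$.

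Two further corrections. In the paper's convention $\partial_tu-G-q\le0$, the time-independent function is a subsolution with source $-\gamma$ on the ball, not $+\gamma$. Using $\rho\le\tau_\infty$, the realization inequality then reads $f(y)\le\mathbb E^{\mathbb P}[f(X_{h\wedge\rho})\mathbb I_{\{h\wedge\rho<\tau_\infty\}}]-\gamma\,\mathbb E^{\mathbb P}[h\wedge\rho]$. This is exactly your displayed bound, without any $o(h_j)$ loss. So your worry that the theorem ``only gives an upper bound on $u(T,x)$'' is misplaced: that upper bound is precisely the direction you need. The fallback of rerunning the approximation argument with a sub-correspondence of $\mathfrak A_G$ is therefore unnecessary. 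Likewise, drop the auxiliary $u(s,y)=f(y)+s\gamma$. Its zeroth-order jet moves with $s$, so by \ref{item:G3} it need not be a subsolution on $(0,\infty)\times B_r(x)$, and it plays no role once the source term is used.
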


\begin{proof}
Fix $f\in C_b^\infty(D)$ and write $q_f:=G(\cdot,J^2f)$ and $q_f^\downarrow:=G^\downarrow(\cdot,J^2f)$.
Set
\[
    L_f:=\liminf_{h\searrow0,*}\Delta_h^{\mathcal V^G}f,
    \qquad
    U_f:=\overline{\mathcal G}^{\mathcal V^G}f=q_f^\downarrow.
\]
By Lemma~\ref{prop:ambient_valuation_generator_bounds}, we have $L_f\le U_f\le q_f$.
The main step is to prove the reverse lower bound $L_f\ge(q_f)_*$.
Fix $x\in D$, choose $B_{2r}(x)\Subset D$, and set $\rho:=\rho_{B_{2r}(x)}^{\,0}$.
We first reduce to compactly supported test functions.
Choose $\widetilde f\in C_c^\infty(D)$ that agrees with $f$ on a neighborhood of $\overline{B_{2r}(x)}$.
The killed terminal payoffs of $f$ and $\widetilde f$ can differ
only if the path exits the ball while alive before time $h$.
Consequently,
\[
 |\mathcal V_h^Gf(y)-\mathcal V_h^G\widetilde f(y)|\le
 \|f-\widetilde f\|_\infty
 \sup_{\mathbb P\in\mathcal P_y(G)}
 \mathbb P(\rho\le h,\ \rho<\tau_{\mathrm{kill}})
 =o(h),
\]
uniformly for $y\in B_{r/2}(x)$, by Proposition~
E.3\textup{(ii)} of the Supplementary Material.
Since $f=\widetilde f$ near $x$, this implies
$L_f(x)=L_{\widetilde f}(x)$.
Moreover, $q_f=q_{\widetilde f}$ near $x$, so their lower
semicontinuous envelopes agree at $x$.
It therefore suffices to prove the lower bound for
$f\in C_c^\infty(D)$.

For such $f$, the function $q_f$ is bounded on $D$.
Define
\[
    r_m(y):=\inf_{z\in D}\{q_f(z)+m|y-z|\},
    \qquad m\ge1.
\]
Each $r_m$ is bounded and Lipschitz continuous, and
$r_m\uparrow(q_f)_*$ with $r_m\le q_f$.
The time-independent function $u(s,z)=f(z)$ satisfies
\[
    \partial_su-G(z,J_z^2u)+r_m(z)
    =-q_f(z)+r_m(z)\le0.
\]
Thus $u$ is a viscosity subsolution of
\eqref{eq:usc_subsolution_with_source} with source $-r_m$.
For every $h>0$ and $y\in B_{r/2}(x)$,
Theorem~\ref{prop:one_step_usc_subsolution_realization},
applied with $T=h$, provides
$\mathbb P_{h,y,m}\in\mathcal P_y(G)$ such that
\begin{equation}
\label{eq:ambient_source_minorant_realization}
 \mathbb E^{\mathbb P_{h,y,m}}
 \!\left[
 f(X_{h\wedge\rho})
 \mathbb I_{\{h\wedge\rho<\tau_\infty\}}
 \right]-f(y)\ge
 \mathbb E^{\mathbb P_{h,y,m}}
 \!\left[
 \int_0^{h\wedge\rho}
 r_m(X_s)\mathbb I_{\{s<\tau_\infty\}}\,ds
 \right].
\end{equation}

For fixed $m$, the short-time estimates in Proposition~
E.3 of the Supplementary Material give
\begin{align*}
 \mathbb E^{\mathbb P}
 \!\left[
 \int_0^{h\wedge\rho}
 r_m(X_s)\mathbb I_{\{s<\tau_\infty\}}\,ds
 \right]
 &=h\,r_m(y)+o(h),\\
 \mathbb E^{\mathbb P}
 \!\left[
 f(X_{h\wedge\rho})
 \mathbb I_{\{h\wedge\rho<\tau_\infty\}}
 \right]
 &=
 \mathbb E^{\mathbb P}
 \!\left[f(X_h)\mathbb I_{\{h<\tau_\infty\}}\right]
 +o(h),
\end{align*}
uniformly for $y\in B_{r/2}(x)$ and $\mathbb P\in\mathcal P_y(G)$.
For the first identity, the Lipschitz continuity of $r_m$ and the local
drift and martingale estimates bound the integrated spatial error by
$O_m(h^{3/2}+h^2)$; killing and live exit contribute
$O(h^2)+h\,o(h)$ to the loss of integration time.
For the second identity, the terminal values can differ only after a live
exit from $B_r(x)$, whose probability is $o(h)$ uniformly over these
initial states and laws.
Applying these estimates to
\eqref{eq:ambient_source_minorant_realization} and using
$\mathbb P_{h,y,m}\in\mathcal P_y(G)$ yields
\[
    \mathcal V_h^Gf(y)-f(y)
    \ge h\,r_m(y)+o(h),
\]
uniformly for $y\in B_{r/2}(x)$.
Dividing by $h$ and taking the lower half-relaxed limit, we obtain $L_f(x)\ge r_m(x)$.
Letting $m\to\infty$ proves $L_f(x)\ge(q_f)_*(x)$.
The localization argument above extends this inequality to every $f\in C_b^\infty(D)$.

We have therefore established $(q_f)_*\le L_f\le U_f\le q_f$.
Since $L_f$ is lower semicontinuous and lies below $q_f$, it also
satisfies $L_f\le(q_f)_*$, and hence $L_f=(q_f)_*$.
Moreover, $(q_f)_*\le q_f^\downarrow\le q_f$ gives $(q_f)_*\le(q_f^\downarrow)_*\le(q_f)_*$.
Thus $L_f=(q_f^\downarrow)_*=(q_f)_*$.
Together with $U_f=q_f^\downarrow$, this proves that $G^\downarrow$ is an exact generating function of $\mathcal V^G$.

Finally, fix $(x,U)\in D\times\mathfrak J$ and choose $f\in C_c^\infty(D)$ with $J_x^2f=U$.
Local uniform Lipschitz continuity of $G$ and $G^\downarrow$ in the jet variable, together with continuity of $y\mapsto J_y^2f$,
gives
\[
    G_*(x,U)
    =(q_f)_*(x)
    =(q_f^\downarrow)_*(x)
    =(G^\downarrow)_*(x,U).
\]
Thus $G_*=(G^\downarrow)_*$.
Since $G^\downarrow$ is upper semicontinuous and $G^\downarrow\le G$, taking upper semicontinuous envelopes yields $(G_*)^*=((G^\downarrow)_*)^*\le G^\downarrow\le G$, which proves \eqref{eq:thm:canonical_exact_reduced_generator_1}.
\end{proof}

Thus the upper generating function recovered in
Section~\ref{sec:local_global_correspondence} is an exact generating
function of the canonical valuation. The identity in
\eqref{eq:thm:canonical_exact_reduced_generator_1} also shows that replacing $G$
by $G^\downarrow$ preserves the lower infinitesimal specification
along smooth tests. We can now ask when this exact generating function
coincides with the prescribed $G$.

\begin{corollary}
\label{cor:exact_generation}
Let $G\in\mathfrak G_{\mathrm{Lyap}}$. The following are equivalent:
\begin{enumerate}[label=(\roman*),ref=(\roman*)]
\item\label{item:exact_realizability_upper}
$G$ is globally realizable, or equivalently, $G^\downarrow=G$.
\item\label{item:exact_realizability_ambient}
The canonical valuation $\mathcal V^G$ has exact generating function $G$.
\item\label{item:exact_realizability_some}
There exists $\mathcal T\in\mathfrak V$ having exact generating function $G$.
\end{enumerate}
These conditions hold if the joint envelopes on $D\times\mathfrak J$
satisfy
\begin{equation}
\label{eq:normal_upper_regularity}
    (G_*)^*=G.
\end{equation}
In particular, they hold whenever $G$ is jointly continuous.
\end{corollary}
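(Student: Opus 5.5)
The plan is to prove the cycle (i)$\Rightarrow$(ii)$\Rightarrow$(iii)$\Rightarrow$(i) and then derive the sufficient condition from Theorem~\ref{thm:canonical_exact_reduced_generator}. The equivalence inside (i), namely that global realizability is the same as $G^\downarrow=G$, is the final assertion of Theorem~\ref{thm:realizable_specification_reduction}. That theorem is stated for realizability within $\mathfrak V_{\mathrm{loc}}$, and any valuation realizing $G$ lies in $\mathfrak V_{\mathrm{loc,Lyap}}$ because $G\in\mathfrak G_{\mathrm{Lyap}}$.

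For (i)$\Rightarrow$(ii), I use that Theorem~\ref{thm:canonical_exact_reduced_generator} shows $G^\downarrow$ is an exact generating function of $\mathcal V^G$. Substituting $G^\downarrow=G$ into \eqref{eq:relaxed_generator_consistency} gives the upper identity with $G$. For the lower identity, I note that $(G^\downarrow(\cdot,J^2f))_*=(G(\cdot,J^2f))_*$ holds trivially under the equality. The implication (ii)$\Rightarrow$(iii) is immediate: take $\mathcal T=\mathcal V^G$, which lies in $\mathfrak V$ by Corollary~\ref{cor:canonical_robust_valuation}.

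For (iii)$\Rightarrow$(i), suppose $\mathcal T\in\mathfrak V$ has exact generating function $G$. The upper identity in \eqref{eq:relaxed_generator_consistency} says $\overline{\mathcal G}^{\mathcal T}f=G(\cdot,J^2f)$ for every $f\in C_b^\infty(D)$. This value is finite and depends only on the jet, so Assumption~\ref{assume:local_upper_generator} holds and $\mathcal T\in\mathfrak V_{\mathrm{loc}}$. By the uniqueness in Proposition~\ref{prop:local_upper_generator_representation}, realizing each jet by a smooth function gives $G_{\mathcal T}=G$. Hence $G\in\Loc(\mathfrak V_{\mathrm{loc}})$, which is global realizability.

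For the sufficient condition, Theorem~\ref{thm:canonical_exact_reduced_generator} provides the sandwich $(G_*)^*\le G^\downarrow\le G$. Under \eqref{eq:normal_upper_regularity} the two ends coincide, so $G^\downarrow=G$ and (i) holds. If $G$ is jointly continuous, then $G_*=G$ and $G^*=G$, so $(G_*)^*=G$.

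I expect no serious obstacle, since everything reduces to earlier results. The only point needing care is (iii)$\Rightarrow$(i): one has to check that the exactness assumption alone already yields Assumption~\ref{assume:local_upper_generator}. It does, because the definition pins down $\overline{\mathcal G}^{\mathcal T}$ as a finite jet function without any a priori locality hypothesis on $\mathcal T$.
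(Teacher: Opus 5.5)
Your proposal is correct and follows the paper's proof: the cycle (i)$\Rightarrow$(ii)$\Rightarrow$(iii)$\Rightarrow$(i) is argued exactly as there, using the exactness of $G^\downarrow$ for $\mathcal V^G$ and the fact that an exact generating function is in particular the upper generating function. The one difference is the sufficient condition: you read off $G^\downarrow=G$ directly from the joint sandwich $(G_*)^*\le G^\downarrow\le G$, while the paper restricts to $q_f=G(\cdot,J^2f)$ and uses the jet-Lipschitz bound to identify the envelopes along smooth tests; your shortcut is valid and slightly cleaner.
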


\begin{proof}
If condition~\ref{item:exact_realizability_upper} holds, then
$G^\downarrow=G$ by
Theorem~\ref{thm:realizable_specification_reduction}.
Theorem~\ref{thm:canonical_exact_reduced_generator} therefore gives
condition~\ref{item:exact_realizability_ambient}.
The implication
\ref{item:exact_realizability_ambient}$\Rightarrow$\ref{item:exact_realizability_some}
is immediate. Conversely, a valuation with exact generating function
$G$ has upper generating function $G$, so
condition~\ref{item:exact_realizability_some} implies global realizability.

Suppose now that \eqref{eq:normal_upper_regularity} holds.
For $f\in C_b^\infty(D)$, write $q_f=G(\cdot,J^2f)$.
Local uniform Lipschitz continuity in the jet variable and continuity of
$y\mapsto J_y^2f$ give
\[
    (q_f)_*(x)=G_*(x,J_x^2f),\qquad
    ((q_f)_*)^*(x)=(G_*)^*(x,J_x^2f)=q_f(x).
\]
Indeed, on a compact neighborhood the difference between allowing a jet
to approach $J_x^2f$ and using the moving jet $J_y^2f$ is bounded by
$L_K\|U-J_y^2f\|$ and tends to zero. The same estimate applies to $G_*$.
The bounds in \eqref{eq:thm:canonical_exact_reduced_generator_1} therefore give $q_f^\downarrow=q_f$ for every smooth test $f$, and both identities
in \eqref{eq:relaxed_generator_consistency} follow.
\end{proof}

The canonical valuation thus always has an exact generating function,
and the corollary determines when it is the original specification.
We now turn from this infinitesimal description to the analytic
characterization of its full evolution.

\subsection{Upper Perron characterization}
\label{subsec:valuation_supersolutions}
\label{sec:ambient_perron_comparison}
\label{subsec:maximal_perron_proof}
\label{subsec:upper_perron_feynman_kac}

The preceding subsection identifies the infinitesimal behavior of
$\mathcal V^G$. Its global analytic characterization is a maximality
property: each orbit $V_f^G$ dominates every bounded upper
semicontinuous viscosity subsolution with initial upper bound $f$.
This holds for every $G\in\mathfrak G_{\mathrm{Lyap}}$, including
those for which $G^\downarrow\ne G$.

\begin{theorem}
\label{thm:upper_perron_feynman_kac}
Let $G\in\mathfrak G_{\mathrm{Lyap}}$. 
For every $f\in\USC_b(D)$ and $(t,x)\in[0,\infty)\times D$,
\begin{equation}
\label{eq:upper_perron_feynman_kac}
    V_f^G(t,x)
    =\max_{\mathbb P\in\mathcal P_x(G)}
         \mathbb E^{\mathbb P}[f(X_t)\mathbb I_{\{t<\tau_\infty\}}]
    =\max_{u\in\operatorname{Sub}_G^+(f)}u(t,x).
\end{equation}
In particular, $V_f^G$ is the greatest bounded upper semicontinuous
viscosity subsolution of \eqref{eq:mainHJB} subject to the initial upper
bound $u(0,\cdot)\le f$.  It is a bounded viscosity solution in the
discontinuous convention and has initial datum $f$ in the relaxed sense.
\end{theorem}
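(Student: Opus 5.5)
The first equality in \eqref{eq:upper_perron_feynman_kac} is the definition of $\mathcal V^G=\operatorname{Val}(\mathcal P(G))$. Attainment of the supremum follows from Proposition~\ref{prop:stable_virtual_structure_generates_valuation} applied to $\mathcal C=\mathcal P(G)$. This proposition applies because of Proposition~\ref{prop:propertiesP_t,x_revised} and Corollary~\ref{cor:canonical_regular_dynamic_virtual_structure}. The substance is therefore the second equality, and I would prove it through two inequalities.

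For ``$\le$'' I would show that $V_f^G\in\operatorname{Sub}_G^+(f)$. Lemma~\ref{prop:ambient_valuation_generator_bounds} gives the upper-generator bound $\overline{\mathcal G}^{\mathcal V^G}\varphi\le G(\cdot,J^2\varphi)$ for every $\varphi\in C_b^\infty(D)$. Corollary~\ref{cor:canonical_robust_valuation} shows that $\mathcal V^G\in\mathfrak V$. Lemma~\ref{prop:PDErepn} then applies directly with $\mathcal T=\mathcal V^G$. It yields that $V_f^G$ is a bounded upper semicontinuous viscosity subsolution of \eqref{eq:mainHJB}, has initial datum $f$ in the relaxed sense, and satisfies $V_f^G(0,\cdot)=f$.

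For ``$\ge$'' I would fix $u\in\operatorname{Sub}_G^+(f)$ and $(t,x)$ with $t>0$. Corollary~\ref{cor:usc_terminal_realization} supplies $\mathbb P^{t,x}\in\mathcal P_x(G)$ with $u(t,x)\le\mathbb E^{\mathbb P^{t,x}}[f(X_t)\mathbb I_{\{t<\tau_\infty\}}]\le V_f^G(t,x)$. At $t=0$ the inequality is just $u(0,x)\le f(x)=V_f^G(0,x)$. Together with the first inequality, this shows that the maximum over $\operatorname{Sub}_G^+(f)$ exists and is attained by $V_f^G$, so $V_f^G$ is the greatest element.

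It remains to prove the discontinuous-viscosity-solution claim. Since $V_f^G$ is upper semicontinuous, $(V_f^G)^*=V_f^G$ is a subsolution. The main obstacle is the supersolution property of $(V_f^G)_*$ with $G_*$, and I would argue it as follows. Let $\psi$ touch $(V_f^G)_*$ from below at $(t_0,x_0)$ with $t_0>0$. Choose points $(t_n,x_n)\to(t_0,x_0)$ with $V_f^G(t_n,x_n)\to(V_f^G)_*(t_0,x_0)$. Since $(V_f^G)_*\ge\psi$ everywhere, \ref{V5} and monotonicity give, for small $h$,
\[
V_f^G(t_n,x_n)=\mathcal V^G_h\bigl(V_f^G(t_n-h,\cdot)\bigr)(x_n)\ge\mathcal V^G_h\psi(t_n-h,\cdot)(x_n).
\]
Lower semicontinuity of $(V_f^G)_*$ is not needed here, because the comparison uses only $(V_f^G)_*\ge\psi$. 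I would then expand $\psi(t_n-h,\cdot)=\psi(t_0,\cdot)-h\,\partial_t\psi(t_0,\cdot)+o(h)$ plus errors controlled by $|t_n-t_0|$, in the same way as the sandwich \eqref{eq:moving_payoff_relaxed_generator_sandwich} in the proof of Lemma~\ref{prop:PDErepn}. Next I would use the lower bound $\liminf_{h\searrow0,*}\Delta_h^{\mathcal V^G}\varphi\ge G_*(\cdot,J^2\varphi)$ from Theorem~\ref{thm:canonical_exact_reduced_generator}, with $\varphi=\psi(t_0,\cdot)$. Since that liminf is joint in $(h,y)$, it handles the moving points $x_n$ once $h=h_n\searrow0$ is chosen with $|t_n-t_0|=o(h_n)$. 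Perturbing $\psi$ to make the minimum strict and shifting by a small constant then yields $\partial_t\psi(t_0,x_0)-G_*(x_0,J^2_{x_0}\psi(t_0,\cdot))\ge0$. The delicate bookkeeping is in coupling $h_n$ with $(t_n,x_n)$ and in the $o(h)$ error when $\psi$ is evaluated at time $t_n-h$ rather than $t_0-h$.

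Finally, the relaxed initial datum was already obtained from Lemma~\ref{lemma:valuation_initial_continuity}.
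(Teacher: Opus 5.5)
Your proposal is correct and follows the paper's proof. Attainment comes from Proposition~\ref{prop:stable_virtual_structure_generates_valuation}, membership $V_f^G\in\operatorname{Sub}_G^+(f)$ from Lemmas~\ref{prop:ambient_valuation_generator_bounds} and~\ref{prop:PDErepn}, maximality from Corollary~\ref{cor:usc_terminal_realization}, and the supersolution property from the lower consistency in Theorem~\ref{thm:canonical_exact_reduced_generator}. Your inline supersolution argument is exactly Lemma~\ref{prop:valuation_supersolution_under_lower_consistency}, which the paper cites instead; there no strict-minimum perturbation or constant shift is needed, since one simply chooses $h_n$ with $|t_n-t_0|+\delta_n=o(h_n)$, where $\delta_n:=V_f^G(t_n,x_n)-\psi(t_n,x_n)\to0$.
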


We prove the theorem after recording the supersolution criterion.
Theorem~\ref{thm:canonical_exact_reduced_generator} gives the required
lower consistency with the original specification $G$, even when
$G^\downarrow\ne G$.
We formulate the criterion for arbitrary valuations, since it will
also be used in the uniqueness argument.

\begin{lemma}
\label{prop:valuation_supersolution_under_lower_consistency}
Let $G\in\mathfrak G$, and let
$\mathcal T\in\mathfrak V$ satisfy
\begin{equation}
\label{eq:lower_generator_consistency}
    \liminf_{h\searrow0,*}\Delta_h^{\mathcal T}\varphi
    \ge\bigl(G(\cdot,J^2\varphi)\bigr)_*,
    \qquad \varphi\in C_b^\infty(D).
\end{equation}
For $f\in\USC_b(D)$, write $v_f(t,x):=\mathcal T_t f(x)$.
Then $(v_f)_*$ is a viscosity supersolution
of \eqref{eq:mainHJB} and satisfies the relaxed supersolution initial
condition associated with $f$.
If \eqref{eq:upper_generator_domination} also holds, then $v_f$ is a
viscosity solution in the discontinuous convention with initial datum
$f$ in the relaxed sense.
\end{lemma}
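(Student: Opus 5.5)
The plan mirrors the proof of Lemma~\ref{prop:PDErepn}, with the inequalities reversed and the upper half-relaxed limit replaced by the lower one.

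\emph{Setup.} Write $w:=(v_f)_*$. It is bounded and lower semicontinuous on $(0,\infty)\times D$ by \ref{V2}. Let $\psi\in C_b^\infty([0,\infty)\times D)$ touch $w$ from below at $(t_0,x_0)$, with $t_0>0$ and $w-\psi$ attaining a global minimum $0$ there. Since we work with the envelope rather than $v_f$ itself, we cannot test at $(t_0,x_0)$ directly. First make the minimum strict by replacing $\psi$ with $\psi-|t-t_0|^4-|x-x_0|^4$, suitably cut off to keep it in $C_b^\infty$; this changes no derivatives at the contact point. Then choose $(t_n,x_n)\to(t_0,x_0)$ with $v_f(t_n,x_n)\to w(t_0,x_0)$ and set $\delta_n:=v_f(t_n,x_n)-\psi(t_n,x_n)\to0$. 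For each $n$ pick $h_n\searrow0$ with $h_n<t_n$ and $|\delta_n|=o(h_n)$; this is possible by passing to a subsequence of $(t_n,x_n)$ for a given sequence $h_n$. Since $v_f\ge w\ge\psi$ globally, we have
\[
 v_f(t_n-h_n,\cdot)\ge\psi(t_n-h_n,\cdot)=:q_n .
\]

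\emph{Key inequality.} Monotonicity and \ref{V5} give
\[
 \mathcal T_{h_n}q_n(x_n)\le\mathcal T_{t_n}f(x_n)=\psi(t_n,x_n)+\delta_n .
\]
Subtracting $q_n(x_n)=\psi(t_n-h_n,x_n)$ and dividing by $h_n$ yields
\[
 \Delta_{h_n}^{\mathcal T}q_n(x_n)\le\frac{\psi(t_n,x_n)-\psi(t_n-h_n,x_n)}{h_n}+o(1)\longrightarrow\partial_t\psi(t_0,x_0).
\]

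\emph{Replacing the moving payoff.} Set $\varphi:=\psi(t_0,\cdot)$ and $r:=-\partial_t\psi(t_0,\cdot)$. The payoffs $q_n$ now sit at times $t_n-h_n$ with $t_n\to t_0$, so the expansion \eqref{eq:moving_payoff_first_order_expansion} must be redone. Write $q_n=\varphi+(t_n-h_n-t_0)(-r)+e_n$, where $\|e_n\|_\infty=O\bigl((t_n-t_0)^2+h_n^2\bigr)$. The $O((t_n-t_0)^2)$ part of this error is not $o(h_n)$ in general, which is a problem. To avoid it, choose $h_n$ so that it also dominates $|t_n-t_0|^2$, for instance $h_n\ge|t_n-t_0|$ and $h_n\to0$. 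The constant-in-$x$ shift $c_n(x)=(t_n-t_0)\,\partial_t\psi(t_0,x)$ is not constant in $x$, so it is handled by sublinearity:
\[
 \Delta_h^{\mathcal T}(\varphi+g)\ge\Delta_h^{\mathcal T}\varphi-\frac{\mathcal T_h(-g)+g}{h}.
\]
Take $g=c_n+h_nr+e_n$. By \eqref{eq:continuous_payoff_zero_time_convergence}, $\mathcal T_h(-\partial_t\psi(t_0,\cdot))$ tends to its argument locally uniformly. Using $h_n\ge|t_n-t_0|$, the contribution of $c_n$ divided by $h_n$ vanishes. This is the main obstacle: it needs the rate $\|\mathcal T_hk-k\|=o(1)$ multiplied by $|t_n-t_0|/h_n\le1$, and that is exactly what local uniform convergence provides. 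The result is
\[
 \liminf_n\Delta_{h_n}^{\mathcal T}q_n(x_n)\ge\liminf_n\Delta_{h_n}^{\mathcal T}\varphi(x_n)-r(x_0)+\partial_t\psi(t_0,x_0)\cdot 0,
\]
more precisely the lower sandwich in \eqref{eq:moving_payoff_relaxed_generator_sandwich} evaluated at $x_n$. By \eqref{eq:lower_generator_consistency}, $\liminf_n\Delta_{h_n}^{\mathcal T}\varphi(x_n)\ge(G(\cdot,J^2\varphi))_*(x_0)$. Local uniform jet-Lipschitz continuity \ref{item:G1} identifies this with $G_*(x_0,J_{x_0}^2\varphi)$.

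\emph{Conclusion.} The $r$-terms cancel, because $-\mathcal T_{h}(-r)(x_n)-r(x_n)\to-2r(x_0)+\ldots$; I will keep track so that the net contribution is zero, exactly as in Lemma~\ref{prop:PDErepn}. Combining with the key inequality gives $\partial_t\psi(t_0,x_0)\ge G_*(x_0,J^2_{x_0}\psi(t_0,\cdot))$, which is the supersolution inequality \eqref{eq:viscosity_supersolution_inequality}. The relaxed supersolution initial condition is Lemma~\ref{lemma:valuation_initial_continuity}. If \eqref{eq:upper_generator_domination} holds as well, Lemma~\ref{prop:PDErepn} shows that $v_f=v_f^*$ is a subsolution with the relaxed subsolution initial condition, so $v_f$ is a discontinuous viscosity solution with initial datum $f$ in the relaxed sense.
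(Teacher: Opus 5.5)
Your proposal is essentially the paper's proof. You test at points $(t_n,x_n)$ realizing the envelope. You bound $\Delta_{h_n}^{\mathcal T}q_n(x_n)$ from above using $q_n=\psi(t_n-h_n,\cdot)\le v_f(t_n-h_n,\cdot)$, monotonicity and \ref{V5}. You bound it from below using sublinearity, \eqref{eq:continuous_payoff_zero_time_convergence} and \eqref{eq:lower_generator_consistency}. The strictification of the minimum is never used and can be dropped.

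The one difference is the choice of $h_n$. The paper imposes $|t_n-t_0|+\delta_n=o(h_n)$. The term $(t_n-t_0)\partial_t\psi(t_0,\cdot)$ is then absorbed into $\|q_n-\varphi-h_nr\|_\infty=o(h_n)$, and the lower sandwich in \eqref{eq:moving_payoff_relaxed_generator_sandwich} applies verbatim. Your weaker condition $|t_n-t_0|\le h_n$ also works. Write $q_n=\varphi+\lambda_nr+e_n$ with $\lambda_n:=h_n-(t_n-t_0)\in[0,2h_n]$ and $\|e_n\|_\infty=O(h_n^2)$. Then
\[
    \Delta_{h_n}^{\mathcal T}q_n(x_n)
    \ge\Delta_{h_n}^{\mathcal T}\varphi(x_n)
    -\frac{\lambda_n}{h_n}\bigl(\mathcal T_{h_n}(-r)+r\bigr)(x_n)
    -\frac{2\|e_n\|_\infty}{h_n},
\]
and both correction terms tend to zero.

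Your bookkeeping of the $r$-terms, however, is wrong as written. By local uniform convergence, $-\mathcal T_{h_n}(-r)(x_n)-r(x_n)\to r(x_0)-r(x_0)=0$, not $-2r(x_0)$. There is also no residual $-r(x_0)$ in the lower bound, and with such a residual the supersolution inequality would not follow. Once corrected, you get
\[
\partial_t\psi(t_0,x_0)\ge\liminf_n\Delta_{h_n}^{\mathcal T}q_n(x_n)\ge(G(\cdot,J^2\varphi))_*(x_0)\ge G_*(x_0,J_{x_0}^2\varphi).
\]
Your treatment of the initial condition and of the solution statement matches the paper.
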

\begin{proof}
Let $\psi\in C_b^\infty([0,\infty)\times D)$ touch $(v_f)_*$ globally
from below at $(t_0,x_0)$, with $t_0>0$.
Choose $(t_n,x_n)\to(t_0,x_0)$ such that
$v_f(t_n,x_n)\to(v_f)_*(t_0,x_0)$, and set
\[
    \delta_n:=v_f(t_n,x_n)-\psi(t_n,x_n)\ge0.
\]
Choose $h_n\searrow0$ with $h_n<t_n$ and
$|t_n-t_0|+\delta_n=o(h_n)$, and put
\[
    \varphi=\psi(t_0,\cdot),\qquad
    q_n=\psi(t_n-h_n,\cdot),\qquad
    r=-\partial_t\psi(t_0,\cdot).
\]
The bounded derivatives of $\psi$ give
$\|q_n-\varphi-h_nr\|_\infty=o(h_n)$.
Applying the lower bound in
\eqref{eq:moving_payoff_relaxed_generator_sandwich}, together with
Lemma~\ref{lemma:valuation_initial_continuity} and
\eqref{eq:lower_generator_consistency}, gives
\[
    \liminf_{n\to\infty}\Delta_{h_n}^{\mathcal T}q_n(x_n)
    \ge\bigl(G(\cdot,J^2\varphi)\bigr)_*(x_0)
    \ge G_*(x_0,J_{x_0}^2\varphi).
\]
At the lower contact,
$q_n\le v_f(t_n-h_n,\cdot)$.
Monotonicity and the semigroup property consequently imply
\[
    \Delta_{h_n}^{\mathcal T}q_n(x_n)
    \le\frac{\delta_n}{h_n}
      +\frac{\psi(t_n,x_n)-\psi(t_n-h_n,x_n)}{h_n}.
\]
Taking the lower limit proves
\eqref{eq:viscosity_supersolution_inequality}.
The initial condition follows from
Lemma~\ref{lemma:valuation_initial_continuity}.
When the upper-generator bound also holds,
Lemma~\ref{prop:PDErepn} supplies the subsolution property and
therefore the asserted viscosity-solution statement.
\end{proof}

\begin{proof}[Proof of Theorem~\ref{thm:upper_perron_feynman_kac}]
Terminal attainment follows from
Proposition~\ref{prop:stable_virtual_structure_generates_valuation}.
Lemmas~\ref{prop:ambient_valuation_generator_bounds}
and~\ref{prop:PDErepn} show that
$V_f^G\in\operatorname{Sub}_G^+(f)$.
For any $u\in\operatorname{Sub}_G^+(f)$ and $t>0$,
Corollary~\ref{cor:usc_terminal_realization} provides a law
$\mathbb P\in\mathcal P_x(G)$ such that
\[
    u(t,x)
    \le\mathbb E^{\mathbb P}
       [f(X_t)\mathbb I_{\{t<\tau_\infty\}}]
    \le V_f^G(t,x).
\]
At time zero the same inequality follows from $u(0,\cdot)\le f$.
This proves \eqref{eq:upper_perron_feynman_kac}, with the subsolution
maximum attained by $V_f^G$ itself.

Finally, Theorem~\ref{thm:canonical_exact_reduced_generator} gives
\eqref{eq:lower_generator_consistency} for $\mathcal V^G$.
Lemma~\ref{prop:valuation_supersolution_under_lower_consistency}
therefore supplies the supersolution property and the relaxed
supersolution initial condition. Together with the subsolution
properties above, this proves the viscosity-solution assertion.
\end{proof}

This characterizes the evolution selected by the full admissible-law
family through its maximality among subsolutions, without requiring
comparison. Exact generation alone need not identify a unique
valuation, as Example~\ref{ex:singular_diffusion_multiple_branches}
shows even for a continuous specification.
We next show how comparison removes this ambiguity.

\subsection{Uniqueness under comparison}
\label{subsec:canonical_collapse_comparison}

We now ask when exact generation determines the canonical valuation
uniquely. A valuation with exact generating function $G$ is dominated
by $\mathcal V^G$ and satisfies the lower-consistency condition
\eqref{eq:lower_generator_consistency}.
The latter makes its lower semicontinuous orbit envelope a
supersolution, allowing comparison to give the reverse inequality.
In fact, this argument applies to every lower-consistent valuation
in $\mathfrak V_{\le}(G)$.

\begin{assume}[Parabolic comparison]
\label{assume:paraboliccomparison}
For every $T>0$, let $u\in\USC_b([0,T]\times D)$ and
$v\in\LSC_b([0,T]\times D)$ be a viscosity subsolution and a viscosity
supersolution, respectively, of \eqref{eq:mainHJB} on $(0,T)\times D$.
If
\[
 \limsup_{\substack{(s,y)\to(0,x)\\s>0}}u(s,y)
 \le\liminf_{\substack{(s,y)\to(0,x)\\s>0}}v(s,y),\qquad x\in D,
\]
then $u\le v$ on $(0,T)\times D$.
\end{assume}

The open terminal endpoint allows comparison to be applied at any
fixed positive time by choosing a larger horizon.

\begin{theorem}
\label{thm:canonical_feller_under_comparison}
Let $G\in\mathfrak G_{\mathrm{Lyap}}$ satisfy
Assumption~\ref{assume:paraboliccomparison}.
If $G$ is globally realizable, then $\mathcal V^G$ is the unique
valuation having exact generating function $G$.
More generally, every $\mathcal T\in\mathfrak V_{\le}(G)$ satisfying
\eqref{eq:lower_generator_consistency} obeys
\begin{equation}
\label{eq:canonical_family_collapse_under_comparison}
    \mathcal T=\mathcal V^G,
    \qquad \mathcal R^{\mathcal T}=\mathcal P(G).
\end{equation}
Thus $\mathcal V^G$ is the unique valuation in this lower-consistent
subclass. For each $f\in C_b(D)$, $V_f^G$ is the unique bounded
viscosity solution with the relaxed initial trace $f$.
\end{theorem}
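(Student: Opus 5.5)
The plan is to prove the general statement \eqref{eq:canonical_family_collapse_under_comparison} first, and then read off uniqueness and the solution characterization. Fix $\mathcal T\in\mathfrak V_{\le}(G)$ satisfying \eqref{eq:lower_generator_consistency}, and fix $f\in\USC_b(D)$. Domination $\mathcal T\le\mathcal V^G$ already gives $v_f^{\mathcal T}\le V_f^G$, so only the reverse inequality needs work. Proposition~\ref{cor:local_global_domination} shows that $\mathcal T$ satisfies the upper-generator bound \eqref{eq:upper_generator_domination}. Lemma~\ref{prop:valuation_supersolution_under_lower_consistency} then shows that $(v_f^{\mathcal T})_*$ is a viscosity supersolution of \eqref{eq:mainHJB} satisfying the relaxed supersolution initial condition $\liminf (v_f^{\mathcal T})_*\ge f_*$. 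On the other side, $u:=V_f^G\in\USC_b$ is a subsolution (Theorem~\ref{thm:upper_perron_feynman_kac}) with relaxed subsolution initial bound $\limsup u\le f^*=f$.

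For continuous $f$ the initial traces match, since $f_*=f=f^*$. Assumption~\ref{assume:paraboliccomparison}, applied on $[0,T]$ with $T$ larger than any fixed $t$, gives $V_f^G\le(v_f^{\mathcal T})_*\le v_f^{\mathcal T}$ on $(0,T)\times D$. Hence $\mathcal T_tf=\mathcal V^G_tf$ for all $f\in C_b(D)$ and $t>0$, and $t=0$ is trivial.

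To extend to $f\in\USC_b(D)$, take $f_n\in C_b(D)$ uniformly bounded with $f_n\searrow f$. Axiom \ref{V3} holds for both $\mathcal T$ and $\mathcal V^G$, the latter by Corollary~\ref{cor:canonical_robust_valuation}. Passing to the limit therefore gives $\mathcal T_tf=\mathcal V^G_tf$. I expect this semicontinuous-data step to be the one place needing care, because comparison cannot be applied directly when $f_*<f$; the monotone approximation is exactly what avoids it. Once $\mathcal T=\mathcal V^G$, the inverse formula in Theorem~\ref{thm:local_lyapunov_valuation_structure_correspondence} together with Theorem~\ref{thm:realizable_specification_reduction} gives $\mathcal R^{\mathcal T}=\mathcal R^{\mathcal V^G}=\mathcal P(G)$.

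For the first assertion, suppose $G$ is globally realizable. By Corollary~\ref{cor:exact_generation}, $\mathcal V^G$ has exact generating function $G$. Now let $\mathcal T\in\mathfrak V$ be any valuation with exact generating function $G$. Its upper half-relaxed limit equals $G(\cdot,J^2\varphi)$, so Proposition~\ref{cor:local_global_domination} gives $\mathcal T\in\mathfrak V_{\le}(G)$. Its lower limit equals $(G(\cdot,J^2\varphi))_*$, which is \eqref{eq:lower_generator_consistency}. The general statement then yields $\mathcal T=\mathcal V^G$.

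For the final claim, fix $f\in C_b(D)$. Theorem~\ref{thm:upper_perron_feynman_kac} shows that $V_f^G$ is a bounded discontinuous viscosity solution with relaxed initial datum $f$. Let $w$ be another such solution. Since $f_*=f^*=f$, comparison applied to the pair $(w^*,(V_f^G)_*)$ and to the pair $((V_f^G)^*,w_*)$ gives
\[
w^*\le (V_f^G)_*\le V_f^G\le (V_f^G)^*\le w_*\le w\le w^*
\]
on $(0,T)\times D$, so $w=V_f^G$ for $t>0$. This argument uses only the initial traces, so the relaxed convention is consistent with Assumption~\ref{assume:paraboliccomparison}. Uniqueness is therefore asserted on $(0,\infty)\times D$, with the value at $t=0$ fixed by the trace.
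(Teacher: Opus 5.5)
Your proof is correct and follows the paper's argument essentially step for step. You apply comparison to continuous data using the supersolution lemma and the Perron subsolution $V_f^G$, extend to $\USC_b(D)$ by \ref{V3} for both valuations, obtain the structure identity from the inverse formula, and prove PDE uniqueness by comparing envelopes.

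The one point you leave implicit is that $\mathcal V^G$ itself belongs to the lower-consistent subclass, which the phrase ``the unique valuation in this subclass'' requires. The paper records this via Theorem~\ref{thm:canonical_exact_reduced_generator}, which gives $\liminf_{h\searrow0,*}\Delta_h^{\mathcal V^G}\varphi=(G(\cdot,J^2\varphi))_*$.
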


\begin{proof}
We first prove the more general assertion.
Let $\mathcal T\in\mathfrak V_{\le}(G)$ satisfy
\eqref{eq:lower_generator_consistency}.
Fix $f\in C_b(D)$ and put $u_f=\mathcal T_\cdot f$.
By Lemma~\ref{prop:valuation_supersolution_under_lower_consistency},
$(u_f)_*$ is a supersolution with initial trace $f$, whereas
$V_f^G$ is a subsolution with that trace by
Theorem~\ref{thm:upper_perron_feynman_kac}.
Comparison and the domination $\mathcal T\le\mathcal V^G$ give
\[
    V_f^G\le(u_f)_*\le u_f\le V_f^G
\]
on every positive finite time interval.
Hence the valuations agree on $C_b(D)$.
For $f\in\USC_b(D)$, take bounded continuous $f_n\searrow f$ and use
\ref{V3} for both valuations to obtain equality on $\USC_b(D)$.
Theorem~\ref{thm:canonical_exact_reduced_generator} shows that
$\mathcal V^G$ itself satisfies the lower condition.
The structure identity follows from the inverse formula in
Theorem~\ref{thm:local_lyapunov_valuation_structure_correspondence}.

If $G$ is globally realizable,
Corollary~\ref{cor:exact_generation} shows that $\mathcal V^G$
has exact generating function $G$.
Any other valuation with exact generating function $G$
satisfies \eqref{eq:lower_generator_consistency} and belongs to
$\mathfrak V_{\le}(G)$ by
Proposition~\ref{cor:local_global_domination}.
The general assertion therefore proves uniqueness.

Finally, comparison between the upper and lower envelopes of bounded
viscosity solutions with the same continuous initial datum proves the
PDE uniqueness assertion; it also forces these envelopes to coincide.
\end{proof}

Under comparison, global realizability thus determines whether a
valuation with exact generating function $G$ exists. Whenever it does,
that valuation and its recovered stable uncertainty structure are
$\mathcal V^G$ and $\mathcal P(G)$, respectively.

\section{Recovery of Discounted Models}
\label{sec:auxiliary_characterizations}
The correspondence of Section~\ref{sec:valuation_orbit_nonemptiness}
recovers uncertainty structures from valuations. We now recover the
corresponding discounting--state-law pairs through the
discounting--killing correspondence.

\subsection{Continuous paths and coefficient fields}
\label{subsec:continuous_coefficient_fields}
We now transfer the coefficient fields of
Definition~\ref{def:virtual_coefficient_fields} to continuous paths.
This identifies the state dynamics and discounting rates needed for the
pair representation.

\paragraph{Continuous paths and de-killing}
Let $\Omega^{\mathrm{cont}}:=C([0,\infty),\widehat D)$ have the topology
of locally uniform convergence, and let
\[
 \widehat\Omega:=\{\omega\in\Omega^{\mathrm{cont}}:
        \omega(s)=\triangle\text{ for all }s\ge\tau_{\mathrm{exp}}(\omega)\},
 \qquad \tau_{\mathrm{exp}}:=\inf\{s\ge0:X_s=\triangle\}.
\]
We give $\widehat\Omega$ the subspace topology, write
$\widehat{\mathcal F}:=\mathcal B(\widehat\Omega)$, and use its
coordinate process $X$ and raw filtration
$\widehat{\mathbb F}=(\sigma(X_r:r\le s))_{s\ge0}$.
We identify $\widehat\Omega$ with its natural subset of
$\widetilde\Omega$.
The exit times $\tau_n:=\inf\{s\ge0:X_s\notin D_n\}$ increase to
$\tau_{\mathrm{exp}}$ on $\widehat\Omega$.
The delay map $\theta_t$ acts on this space by the same formula as on
$\widetilde\Omega$.
The continuous coordinate in
\eqref{eq:continuous_coordinate_virtual_path} defines the de-killing map
\begin{equation}
\label{eq:defdekilledmap}
 \mathfrak r(\omega)(s):=\overline X_s(\omega),
 \qquad \omega\in\widetilde\Omega,\quad s\ge0.
\end{equation}
Then $\mathfrak r(\omega)\in\widehat\Omega$. We also write
$N_s^t:=N_s-N_t$.
For $O\Subset D$ open, define
$\widehat\tau_O^{\,t}:=\inf\{s\ge t:\overline X_s\notin O\}$ and
$\widehat\tau_n:=\widehat\tau_{D_n}^{\,0}$; thus
$\rho_O^{\,t}=\widehat\tau_O^{\,t}\wedge\tau_{\mathrm{kill}}$
on paths alive at time $t$.

\paragraph{Identification of coefficient fields}
A map
$\widehat\beta=(\widehat c,\widehat b,\widehat a):
[0,\infty)\times\widehat\Omega\to\mathfrak K$
is called a \emph{continuous-path coefficient field} if it is
$\widehat{\mathbb F}$-progressively measurable, satisfies
\[
\widehat\beta(s,\omega)=0
\qquad\text{for }s\ge\tau_{\mathrm{exp}}(\omega),
\]
and is locally bounded: for every $T>0$ and $n\ge1$,
\[
\sup\Bigl\{
    |\widehat c(s,\omega)|+|\widehat b(s,\omega)|
      +\|\widehat a(s,\omega)\|:
    0\le s\le T,\quad \omega\in\widehat\Omega,\quad
    X_s(\omega)\in\overline D_n
\Bigr\}<\infty.
\]
Its lift to $\widetilde\Omega$ is
\begin{equation}
\label{def:dekilling_coefficient}
\widetilde\beta(s,\widetilde\omega)
:=
\begin{cases}
\widehat\beta\bigl(s,\mathfrak r(\widetilde\omega)\bigr),
&s<\tau_\infty(\widetilde\omega),\\
0,&s\ge\tau_\infty(\widetilde\omega).
\end{cases}
\end{equation}

\begin{lemma}
\label{lem:coefficient_field_identification}
Restriction to $[0,\infty)\times\widehat\Omega$ and the lift
\eqref{def:dekilling_coefficient} are inverse bijections between the
coefficient fields of Definition~\ref{def:virtual_coefficient_fields}
and continuous-path coefficient fields.
Moreover, the corresponding fields $\beta$ and $\widehat\beta$ satisfy
\[
\beta\in\mathfrak B_{\mathrm{gc}}(G)
\quad\Longleftrightarrow\quad
\widehat\beta(s,\omega)\in\mathfrak A_G(X_s(\omega))
\quad\text{for all }s<\tau_{\mathrm{exp}}(\omega),\quad
\omega\in\widehat\Omega.
\]
\end{lemma}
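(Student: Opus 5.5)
The plan is to verify the two compositions directly and then check that each preserves the defining properties. For a coefficient field $\beta$ on $\widetilde\Omega$, write $\widehat\beta:=\beta|_{[0,\infty)\times\widehat\Omega}$. Conversely, for a continuous-path field $\widehat\beta$, write $\widetilde\beta$ for its lift \eqref{def:dekilling_coefficient}.

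\emph{Well-definedness of restriction.} The raw filtration of $\widehat\Omega$ is the trace of $\widetilde{\mathbb F}$ on $\widehat\Omega$, and $\widehat\Omega\subset\widetilde\Omega$ is Borel. Hence progressive measurability passes to the restriction. On $\widehat\Omega$ the lifetime is $\tau_\infty=\tau_{\mathrm{exp}}$, so the vanishing condition and local boundedness transfer directly.

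\emph{Well-definedness of the lift.} This is the step requiring care. I first check that $\mathfrak r:\widetilde\Omega\to\widehat\Omega$ is adapted, in the sense that $\mathfrak r(\omega)|_{[0,s]}$ depends only on $\omega|_{[0,s]}$. Indeed, for $r\le s$, the value $\overline X_r$ is determined by $X_r$ when $r<\tau_{\mathrm{kill}}$, and otherwise by the left limit $X_{\tau_{\mathrm{kill}}-}$. Both the event $\{\tau_{\mathrm{kill}}\le r\}$ and this left limit are $\widetilde{\mathcal F}_r$-measurable, since $\tau_{\mathrm{kill}}$ is the time of a jump to $\triangle$. Consequently $(s,\omega)\mapsto(s,\mathfrak r(\omega))$ is measurable from the progressive $\sigma$-field to the progressive $\sigma$-field. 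Composed with $\widehat\beta$, it gives a progressive map. Multiplying by $\mathbb I_{\{s<\tau_\infty\}}$, an indicator of a progressive set, preserves progressiveness and enforces vanishing after $\tau_\infty$.

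For local boundedness, note that for $s<\tau_\infty(\omega)$ we have $\mathfrak r(\omega)(s)=X_s(\omega)$, because $s<\tau_{\mathrm{kill}}$. The bound on $\{X_s\in\overline D_n\}$ therefore transfers from $\widehat\beta$.

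\emph{Inverse identities.}
\begin{itemize}
\item On $\widehat\Omega$ we have $\mathfrak r=\mathrm{id}$ and $\tau_\infty=\tau_{\mathrm{exp}}$, so the restriction of $\widetilde\beta$ equals $\widehat\beta$, using that $\widehat\beta=0$ after $\tau_{\mathrm{exp}}$.
\item For the other composition I must show $\beta(s,\omega)=\beta(s,\mathfrak r(\omega))$ whenever $s<\tau_\infty(\omega)$. Here $\omega$ and $\mathfrak r(\omega)$ coincide on $[0,s]$, since killing has not yet occurred. Progressive measurability with respect to the raw canonical filtration gives, via Galmarino's test, that $\beta(s,\cdot)$ depends only on the path stopped at $s$. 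Hence the values agree.
\item For $s\ge\tau_\infty(\omega)$, both $\beta$ and the lift vanish.
\end{itemize}
I expect this Galmarino-type step to be the main technical point. It needs $\beta(s,\cdot)$ to be $\widetilde{\mathcal F}_s$-measurable on a canonical space of paths with values in $\widehat D$, where $\omega(\cdot\wedge s)\in\widetilde\Omega$ holds for any $\omega\in\widetilde\Omega$. That membership holds because stopping preserves continuity before the lifetime and absorption at $\triangle$.

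\emph{Generator compatibility.} If $\beta\in\mathfrak B_{\mathrm{gc}}(G)$, restricting to $\widehat\Omega$ gives $\widehat\beta(s,\omega)\in\mathfrak A_G(X_s(\omega))$ for $s<\tau_{\mathrm{exp}}$. Conversely, suppose the continuous-path condition holds, and take $\omega\in\widetilde\Omega$ with $s<\tau_\infty(\omega)$. Then $\beta(s,\omega)=\widehat\beta(s,\mathfrak r(\omega))$. Moreover $s<\tau_{\mathrm{exp}}(\mathfrak r(\omega))$, because $\mathfrak r(\omega)$ is frozen in $D$ after killing and equals $\omega$ before $\tau_\infty$. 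Since $X_s(\mathfrak r(\omega))=X_s(\omega)$, membership in $\mathfrak A_G(X_s(\omega))$ follows.
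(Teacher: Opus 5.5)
Your proposal is correct and follows the paper's proof essentially step by step. Restriction preserves raw progressive measurability. Adaptedness of $\mathfrak r$, namely $\mathfrak r^{-1}(\widehat{\mathcal F}_s)\subseteq\widetilde{\mathcal F}_s$, makes the lift progressive. The Galmarino-type fact that $\beta(s,\cdot)$ sees only the path on $[0,s]$ gives $\beta(s,\omega)=\beta(s,\mathfrak r(\omega))$ before the lifetime. The identity $\mathfrak r=\mathrm{id}$ on $\widehat\Omega$ handles the other composition, and $X_s\circ\mathfrak r=X_s$ before the lifetime gives the support-membership equivalence.

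One minor remark: the direction of Galmarino's test you use (measurability implies dependence only on $\omega|_{[0,s]}$) already follows from the Doob--Dynkin factorization through $(X_r)_{r\le s}$. The stability of $\widetilde\Omega$ under stopping that you flag is therefore not needed at that step.
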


\begin{proof}
Restriction preserves progressive measurability with respect to the
raw canonical filtrations.
Also $\mathfrak r^{-1}(\widehat{\mathcal F}_s)
\subseteq\widetilde{\mathcal F}_s$ for each $s\ge0$, so
\eqref{def:dekilling_coefficient} is progressively measurable.
The local bounds and the vanishing conditions are preserved in both
directions.

Let $\beta$ be a coefficient field on $\widetilde\Omega$.
For fixed $s$, raw $\widetilde{\mathcal F}_s$-measurability implies
that $\beta(s,\cdot)$ takes the same value on paths agreeing on
$[0,s]$.
If $s<\tau_\infty(\widetilde\omega)$, then
$\widetilde\omega$ and $\mathfrak r(\widetilde\omega)$ agree on this
interval. Hence
\[
\beta(s,\widetilde\omega)
=\beta\bigl(s,\mathfrak r(\widetilde\omega)\bigr).
\]
After the lifetime, both the original field and its restricted-and-lifted
version vanish.
Conversely, $\mathfrak r(\omega)=\omega$ on $\widehat\Omega$, where
$\tau_\infty=\tau_{\mathrm{exp}}$, so restricting a lifted
continuous-path field recovers that field.
The support-membership equivalence follows from the same identities and
$X_s(\mathfrak r(\widetilde\omega))=X_s(\widetilde\omega)$ before the
lifetime.
\end{proof}

Henceforth, we identify corresponding coefficient fields and write
$\beta=(c,b,a)$ on either path space, suppressing hats and tildes.
We use $\mathfrak B_{\mathrm{gc}}(G)$ for the identified class on
either space; an evaluation on $\widehat\Omega$ means restriction, and
an evaluation on $\widetilde\Omega$ means the lift
\eqref{def:dekilling_coefficient}.
In particular, the linear problem on $\widetilde\Omega$ in
Definition~\ref{def:generalized_linear_martingale_problems} retains
exactly the same meaning.

\paragraph{Discounting and the continuous linear problem}
For a coefficient field $\beta=(c,b,a)$, retain $k^\beta=-c$ and write
$\gamma^\beta:=(b,a)$.
In addition to $L^\beta(s,\omega,U)=\ell_{\beta(s,\omega)}(U)$, define
\[
L^{\gamma^\beta}(s,\omega,p,H)
:=\frac12\operatorname{tr}\!\bigl(a(s,\omega)H\bigr)
  +b(s,\omega)\cdot p.
\]
Given $t\ge0$ and a nonnegative
$\widehat{\mathbb F}$-progressively measurable process $k$ on
$\widehat\Omega$, define the cumulative discounting process starting
at $t$ by
\begin{equation}
\label{eq:def_Akt}
A_s^{k,t}(\omega)
:=\int_t^{s\vee t}k(r,\omega)\,dr,
\qquad s\ge0,
\end{equation}
and write $A^k:=A^{k,0}$.
For coefficients associated with a field $\beta$, the identification
above applies also to $\gamma^\beta$ and $k^\beta$.
Accordingly, $L^\beta$, $L^{\gamma^\beta}$, and $A^{k^\beta,t}$ are
used on both path spaces, with the time integral defining the latter
taken on the relevant space.

For any family $\mathcal K_{t,\xi}$ indexed by an initial time and
either a state or a path, write $\mathcal K_\xi:=\mathcal K_{0,\xi}$.
If the second index is a path, also write
$\mathcal K_{t,x}:=\mathcal K_{t,\mathbf x}$ whenever the ambient
path space is clear.

\begin{definition}
\label{def:continuous_linear_martingale_problem}
Let $\beta=(c,b,a)$ be a coefficient field, fix $t\ge0$, and let
$\omega\in\widehat\Omega$.
A probability measure $\mathbb Q$ on
$(\widehat\Omega,\widehat{\mathcal F})$ solves the
\emph{generalized $L^{\gamma^\beta}$-martingale problem on
$\widehat\Omega$ starting from $(t,\omega)$} if
\begin{equation}
\label{eq:initialconditionQ}
\mathbb Q\bigl(X_s=\omega(s)\text{ for all }0\le s\le t\bigr)=1
\end{equation}
and, for every $f\in C_b^\infty(D)$ and $n\ge1$, the process
\begin{equation}
\label{eq:hat_linear_gmp}
\begin{aligned}
\widehat M_s^{f,n,\gamma^\beta}
:={}&
f\bigl(X_{[s]_{t,\tau_n}}\bigr)
\mathbb I_{\{\tau_{\mathrm{exp}}>[s]_{t,\tau_n}\}}\\
&-
\int_t^{[s]_{t,\tau_n}}
L^{\gamma^\beta}\bigl(
    r,X,\nabla f(X_r),\nabla^2f(X_r)
\bigr)
\mathbb I_{\{\tau_{\mathrm{exp}}>r\}}\,dr,
\qquad s\ge t,
\end{aligned}
\end{equation}
is a $\mathbb Q$-martingale with respect to $\widehat{\mathbb F}$.
We denote the corresponding solution set by
$\widehat{\mathcal P}_{t,\omega}(L^{\gamma^\beta})$.
\end{definition}
\subsection{Discounted pairs and their recovery}
\label{subsec:discounting_killing_correspondence}
\label{subsec:virtualization_inverse_killing}
\label{sec:pair_virtual_equivalence}
\label{subsec:admissible_virtual_structures}
The space $[0,\infty]$ carries its one-point compactification topology.

\begin{definition}
\label{def:pair_space}
Let \(\mathfrak U^\circ\) consist of all pairs \((A,\mathbb Q)\) such that \(\mathbb Q\) is a probability measure on \(\widehat\Omega\) satisfying
\begin{equation}
\label{eq:definingconditionU_0-1}
    \mathbb Q(\tau_{\mathrm{exp}}>0)=1,
\end{equation}
and \(A=(A_t)_{t\ge0}\) is a \([0,\infty]\)-valued $\widehat{\mathbb F}$-progressively measurable process on \(\widehat\Omega\) with \(A_0=0\), having continuous and nondecreasing paths \(\mathbb Q\)-almost surely, and satisfying
\begin{equation}
\label{eq:definingconditionU_0-2}
A_t<\infty
\textnormal{ for every }t\in[0,\tau_{\mathrm{exp}}),
\qquad
A_{\tau_{\mathrm{exp}}}=\infty
\textnormal{ on }
\{0<\tau_{\mathrm{exp}}<\infty\},
\quad
\mathbb Q\text{-almost surely}.
\end{equation}
Two pairs \((A,\mathbb Q)\) and \((A',\mathbb Q')\) in
\(\mathfrak U^\circ\) are identified if \(\mathbb Q=\mathbb Q'\) and
\(A,A'\) are indistinguishable under \(\mathbb Q\).
Let $\mathsf u_\triangle:=(0,\delta_{\mathbf\triangle})$ denote the cemetery pair, and define
\[
    \mathfrak U
    :=
    \mathfrak U^\circ\cup\{\mathsf u_\triangle\}.
\]
We refer to \(\mathfrak U\) as the pair space.
\end{definition}

For \((A,\mathbb Q)\in\mathfrak U\), work on
\(\bar\Omega=\widehat\Omega\times(0,\infty)\) under
\(\bar{\mathbb Q}=\mathbb Q\otimes\nu_{\mathrm{Exp}}\), where
\(\nu_{\mathrm{Exp}}\) is the \(\operatorname{Exp}(1)\) law.
Choose a Borel set $\Omega_A\subseteq\widehat\Omega$ of full
$\mathbb Q$-measure on which $A$ has continuous nondecreasing paths
and satisfies \eqref{eq:definingconditionU_0-2}.
The following formulas are used for $\omega\in\Omega_A$;
on its complement, set $K_A(\omega,z):=\mathbf\triangle$.
Put
\[
\kappa_A(\omega,z):=\inf\{t\ge0:A_t(\omega)\ge z\},
\qquad
\vartheta_A(\omega,z)
:=
\begin{cases}
\kappa_A(\omega,z),&\kappa_A(\omega,z)<\tau_{\mathrm{exp}}(\omega),\\
\infty,&\kappa_A(\omega,z)\ge\tau_{\mathrm{exp}}(\omega),
\end{cases}
\]
and define \(K_A:\bar\Omega\to\widetilde\Omega\) by
\[
K_A(\omega,z)(t)
:=
\begin{cases}
\omega(t),&t<\vartheta_A(\omega,z),\\
\triangle,&t\ge\vartheta_A(\omega,z).
\end{cases}
\]
On $\Omega_A$, continuity and monotonicity give
\[
 \kappa_A(\omega,z)
 =\inf\{q\in\mathbb Q_+:A_q(\omega)\ge z\}.
\]
Thus $\kappa_A$ is Borel measurable there, and joint measurability
of coordinate evaluation shows that $K_A$ is Borel measurable.
Its pushforward law is independent of the choice of $\Omega_A$ and
of the representative of $A$ modulo $\mathbb Q$-indistinguishability.
The \emph{discounting-to-killing map} is
\begin{equation}\label{def:Phi}
    \Phi(A,\mathbb Q):=\bar{\mathbb Q}\circ K_A^{-1}.
\end{equation}
This map turns cumulative discounting into the hazard of a terminal jump.

A pair family $\mathcal U=\{\mathcal U_{t,x}\}$ has initial state $(t,x)$ if
all its members satisfy
\begin{equation}\label{item:initialcondition}
\mathbb Q(A_s=0,\ X_s=x\text{ for all }s\le t)=1.
\end{equation}
It is time-homogeneous if
$\mathcal U_{t,x}=\{\mathsf u\circ\theta_t^{-1}:\mathsf u\in\mathcal U_{0,x}\}$,
where
\[
(A,\mathbb Q)\circ\theta_t^{-1}
 :=(A^{(t)},\mathbb Q\circ\theta_t^{-1}),\qquad
A_s^{(t)}(\omega)
 :=\begin{cases}
 A_{(s-t)\vee0}(\theta_t^{-1}\omega),&\omega\in\theta_t(\widehat\Omega),\\
 0,&\omega\notin\theta_t(\widehat\Omega).
 \end{cases}
\]
This operation preserves $\mathfrak U$ and fixes the cemetery pair.

\begin{lemma}
\label{thm:Phi_properties}
Let $\mathsf u=(A,\mathbb Q)\in\mathfrak U$ and set $\mathbb P:=\Phi(\mathsf u)$. 
Then:
\begin{enumerate}[label=(\roman*), ref=(\roman*)]
    \item\label{thm:Phi_properties_1}
    Let \(\sigma\) be a bounded
    \((\widehat{\mathcal F}_s)_{s\ge0}\)-stopping time and define $\widetilde\sigma:=(\sigma\circ\mathfrak r)\wedge\tau_{\mathrm{kill}}$.
    If \(Y\) is bounded and
    \(\widetilde{\mathcal F}_{\widetilde\sigma}\)-measurable, then
    \begin{equation}\label{eq:Phi_stopping_time_identity}
        \mathbb E^{\mathbb P}
        \left[Y\,\mathbb I_{\{\widetilde\sigma<\tau_\infty\}}\right]
        =
        \mathbb E^{\mathbb Q}
        \left[
            e^{-A_\sigma}\widehat Y\,
            \mathbb I_{\{\sigma<\tau_{\mathrm{exp}}\}}
        \right],
    \end{equation}
    where \(\widehat Y\) is the restriction of \(Y\) to \(\widehat\Omega\).

    \item\label{thm:Phi_properties_2}
    For every \(t\ge0\), 
    \begin{equation}
    \label{eq:Phi_preserves_delays}
        \Phi\bigl(\mathsf u\circ\theta_t^{-1}\bigr)
        =
        \Phi(\mathsf u)\circ\theta_t^{-1}.
    \end{equation}

    \item\label{thm:Phi_properties_3}
    The map \(\Phi\) is injective.
\end{enumerate}
\end{lemma}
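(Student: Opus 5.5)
The plan is to prove the three parts in order, since (i) is the workhorse for (ii) and (iii). The cemetery pair is trivial throughout: $\Phi(\mathsf u_\triangle)=\delta_{\mathbf\triangle}$, and both sides of \eqref{eq:Phi_stopping_time_identity} vanish. So fix $(A,\mathbb Q)\in\mathfrak U^\circ$ and work on $\Omega_A\times(0,\infty)$.

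For (i), I start from the structure of $K_A$. Since $A$ is continuous and nondecreasing, $\kappa_A(\omega,z)>s$ iff $A_s(\omega)<z$. Also, $\mathfrak r(K_A(\omega,z))$ coincides with $\omega$ on $[0,\vartheta_A)$, and $\tau_{\mathrm{kill}}(K_A(\omega,z))=\vartheta_A(\omega,z)$ on $\{\vartheta_A<\tau_{\mathrm{exp}}(\omega)\}$ (a jump to $\triangle$ from a live state). By Galmarino's test, $\sigma\circ\mathfrak r$ evaluated at $K_A(\omega,z)$ equals $\sigma(\omega)$ whenever $\sigma(\omega)<\vartheta_A(\omega,z)$. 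Hence the event $\{\widetilde\sigma<\tau_\infty\}$ pulled back by $K_A$ is $\{\sigma(\omega)<\tau_{\mathrm{exp}}(\omega),\ A_{\sigma(\omega)}(\omega)<z\}$. On this event $K_A(\omega,z)$ and $\omega$ agree on $[0,\sigma(\omega)]$. Because $Y$ is $\widetilde{\mathcal F}_{\widetilde\sigma}$-measurable, Galmarino again gives $Y(K_A(\omega,z))=\widehat Y(\omega)$ there. Integrating out $z\sim\operatorname{Exp}(1)$ by Fubini then produces $e^{-A_\sigma}\widehat Y\mathbb I_{\{\sigma<\tau_{\mathrm{exp}}\}}$. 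The delicate points are the boundary case $z=A_\sigma$ (a null set for the exponential law), measurability of the composite objects, and the path-agreement claims for stopped $\sigma$-fields on the two canonical spaces. I expect this Galmarino bookkeeping to be the main obstacle, and I would handle it by reducing to $Y$ of cylinder form $F(X_{r_1\wedge\widetilde\sigma},\dots)$ and applying a monotone class argument.

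For (ii), I compare pathwise: $K_{A^{(t)}}(\theta_t\omega,z)=\theta_t(K_A(\omega,z))$ for $\omega\in\Omega_A$. This holds because $A^{(t)}_s(\theta_t\omega)=A_{(s-t)^+}(\omega)$, which gives $\kappa_{A^{(t)}}(\theta_t\omega,z)=\kappa_A(\omega,z)+t$ (using $z>0$ and $A_0=0$) and $\tau_{\mathrm{exp}}(\theta_t\omega)=\tau_{\mathrm{exp}}(\omega)+t$. Pushing forward $\bar{\mathbb Q}$ (after checking $\Omega_{A^{(t)}}$ may be taken as $\theta_t(\Omega_A)$) gives \eqref{eq:Phi_preserves_delays}.

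For (iii), suppose $\Phi(A,\mathbb Q)=\Phi(A',\mathbb Q')=\mathbb P$. First I recover $\mathbb Q$. Applying (i) with $\sigma=s$ and $Y=F(X_{r_1},\dots,X_{r_m})$, $r_i\le s$, gives $\mathbb E^{\mathbb P}[Y\mathbb I_{\{s<\tau_\infty\}}]=\mathbb E^{\mathbb Q}[e^{-A_s}\widehat Y\mathbb I_{\{s<\tau_{\mathrm{exp}}\}}]$. This determines the finite measure $e^{-A_s}\mathbb I_{\{s<\tau_{\mathrm{exp}}\}}\,d\mathbb Q$ on $\widehat{\mathcal F}_s$, and similarly for $\mathbb Q'$. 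An alternative route is to note that $\mathbb Q$ is the law of $\mathfrak r$ under $\mathbb P$. Indeed, $\mathfrak r\circ K_A(\omega,z)=\omega$ $\bar{\mathbb Q}$-a.s.: killing freezes at the last live state and continuous explosion is preserved, with $A_{\tau_{\mathrm{exp}}}=\infty$ ensuring $\vartheta_A<\tau_{\mathrm{exp}}$ or $\vartheta_A=\infty$ appropriately. So $\mathbb Q=\mathbb P\circ\mathfrak r^{-1}=\mathbb Q'$, which is the cleaner route I would take. Then the identity yields $e^{-A_s}=e^{-A'_s}$ $\mathbb Q$-a.s. on $\{s<\tau_{\mathrm{exp}}\}$ for each rational $s$, as the Radon--Nikodym densities on $\widehat{\mathcal F}_s$ coincide and both are $\widehat{\mathcal F}_s$-measurable. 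Continuity, together with \eqref{eq:definingconditionU_0-2} at and after $\tau_{\mathrm{exp}}$, upgrades this to indistinguishability. This gives $(A,\mathbb Q)=(A',\mathbb Q')$ in $\mathfrak U$.
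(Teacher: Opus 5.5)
Your arguments for (i) and (ii) coincide with the paper's. For (i) you pull $\{\widetilde\sigma<\tau_\infty\}$ back under $K_A$ to $\{\sigma<\tau_{\mathrm{exp}},\ A_\sigma<z\}$, apply Galmarino's test to $\sigma\circ\mathfrak r$ and to $Y$, and use Fubini in $z$. For (ii) you use the pathwise identity $K_{A^{(t)}}(\theta_t\omega,z)=\theta_t(K_A(\omega,z))$.

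Part (iii), however, has a genuine gap. Your preferred route $\mathbb Q=\mathbb P\circ\mathfrak r^{-1}$ rests on the identity $\mathfrak r\circ K_A(\omega,z)=\omega$, and that identity is false. The map $\mathfrak r$ only freezes a path at its pre-killing state. So on $\{\vartheta_A(\omega,z)<\infty\}$ one has $\mathfrak r(K_A(\omega,z))=\omega(\cdot\wedge\vartheta_A(\omega,z))$. The portion of $\omega$ after the killing time, which $K_A$ overwrites by $\triangle$, is lost. For instance, take $D=\mathbb R$, $\mathbb Q$ the Wiener measure started at $0$, and $A_t=t$. Then $\mathbb P\circ\mathfrak r^{-1}$ is the law of Brownian motion stopped at an independent $\operatorname{Exp}(1)$ time, whose variance at time $t$ is $1-e^{-t}\neq t$.

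Your other route does not close the gap either. Part (i) identifies only the discounted measures $e^{-A_s}\mathbb I_{\{s<\tau_{\mathrm{exp}}\}}\,d\mathbb Q$ and $e^{-A'_s}\mathbb I_{\{s<\tau_{\mathrm{exp}}\}}\,d\mathbb Q'$ on $\widehat{\mathcal F}_s$. In these measures clock and state law are entangled, and the discount cannot be divided out before the clock is known. Your final density comparison is correct, but it presupposes $\mathbb Q=\mathbb Q'$, which is exactly what is missing.

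The missing idea is to recover the clock intrinsically from $\mathbb P$ first. The paper shows that $A^{\mathrm{lift}}_t(\widetilde\omega):=A_{t\wedge\tau_{\mathrm{kill}}(\widetilde\omega)}(\mathfrak r(\widetilde\omega))$ is the $\mathbb P$-predictable compensator of $N$, via a direct Fubini computation in the exponential mark $z$ on the product space. Uniqueness of predictable compensators then gives $A^{\mathrm{lift}}=(A')^{\mathrm{lift}}$ up to $\mathbb P$-indistinguishability.

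Next, apply (i) with $\sigma=t$ to $Y_m=(e^{A^{\mathrm{lift}}_t}\wedge m)\mathbb I_{\{\mathfrak r\in B\}}$, $B\in\widehat{\mathcal F}_t$, for both pairs, and let $m\to\infty$. This yields $\mathbb Q(B\cap\{\tau_{\mathrm{exp}}>t\})=\mathbb Q'(B\cap\{\tau_{\mathrm{exp}}>t\})$. Because paths are absorbed at $\triangle$, these pre-explosion probabilities determine the full law through a $\pi$--$\lambda$ argument on cylinders, so $\mathbb Q=\mathbb Q'$. Only at this point does your density argument give indistinguishability of $A$ and $A'$.
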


\begin{proof}
The proofs of parts~\ref{thm:Phi_properties_1} and~\ref{thm:Phi_properties_3} are deferred to Sections~
G.1 and~
G.3 of the Supplementary Material, respectively.
For part~\ref{thm:Phi_properties_2}, the definitions on the auxiliary product
space give, for $\mathbb Q\otimes\nu_{\mathrm{Exp}}$-almost every
$(\omega,z)$,
\[
    K_{A^{(t)}}(\theta_t\omega,z)
    =
    \theta_t\bigl(K_A(\omega,z)\bigr).
\]
Taking the pushforward of \(\mathbb Q\otimes\nu_{\mathrm{Exp}}\) proves
\eqref{eq:Phi_preserves_delays}.  
\end{proof}

\begin{definition}
\label{def:DUSassociatedG}
Let $G\in\mathfrak G$.
For \((t,x)\in[0,\infty)\times\widehat D\), define
\begin{equation}
\label{eq:pair_realization_of_coefficient_class}
\mathcal U_{t,x}(G)
:=
\begin{cases}
\Bigl\{
(A^{k^\beta,t},\mathbb Q)\in\mathfrak U^\circ:
\beta\in\mathfrak B_{\mathrm{gc}}(G),\
\mathbb Q\in\widehat{\mathcal P}_{t,x}(L^{\gamma^\beta})
\Bigr\},
&
x\in D,
\\[0.8em]
\{\mathsf u_\triangle\},
&
x=\triangle.
\end{cases}
\end{equation}
The family
\[
    \mathcal U(G)
    :=\{\mathcal U_{t,x}(G)\}_{(t,x)\in[0,\infty)\times\widehat D}
\]
is called the \emph{family of generator-compatible discounted models}.
\end{definition}

\begin{lemma}
\label{thm:gmp_virtualization_correspondence}
Let \(\beta=(c,b,a)\) be a coefficient field. Suppose that there exist
\(\phi\in C^2(D)\) and \(C_\phi\geq0\) such that \(\phi\geq1\),
\begin{equation}
\label{eq:coefficient_lyapunov_1}
m_n^\phi
:=
\inf_{y\in D\setminus D_n}\phi(y)
\longrightarrow\infty,
\end{equation}
and
\begin{equation}
\label{eq:coefficient_lyapunov_2}
L^\beta\bigl(s,\omega,J_{X_s}^2\phi\bigr)
\leq
C_\phi\phi(X_s)
\qquad
\text{for all }(s,\omega)\text{ such that }
s<\tau_{\mathrm{exp}}(\omega).
\end{equation}
Then, for every \((t,\omega)\in[0,\infty)\times\widehat\Omega\) such that \(t<\tau_{\mathrm{exp}}(\omega)\), the following assertions hold:
\begin{enumerate}[label=(\roman*), ref=(\roman*)]
\item\label{thm:gmp_virtualization_correspondence_forward}
For every \(\mathbb Q\in\widehat{\mathcal P}_{t,\omega}(L^{\gamma^\beta})\), the pair \(\bigl(A^{k^\beta,t},\mathbb Q\bigr)\) belongs to \(\mathfrak U^\circ\), and its image under \(\Phi\) belongs to \(\widetilde{\mathcal P}_{t,\omega}(L^\beta)\).

\item\label{thm:gmp_virtualization_correspondence_inverse}
For every
\(\mathbb P\in\widetilde{\mathcal P}_{t,\omega}(L^\beta)\),
there exists a unique
\(\mathbb Q\in\widehat{\mathcal P}_{t,\omega}(L^{\gamma^\beta})\)
such that
\(\Phi\bigl(A^{k^\beta,t},\mathbb Q\bigr)=\mathbb P\).
\end{enumerate}
Consequently, the map
\[
\Phi_{\beta,t,\omega}:
\widehat{\mathcal P}_{t,\omega}(L^{\gamma^\beta})
\longrightarrow
\widetilde{\mathcal P}_{t,\omega}(L^\beta),
\qquad
\mathbb Q
\longmapsto
\Phi\bigl(A^{k^\beta,t},\mathbb Q\bigr),
\]
is a bijection.
\end{lemma}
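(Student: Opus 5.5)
Part~\ref{thm:gmp_virtualization_correspondence_forward} will be proved by a direct computation with the stopping-time identity of Lemma~\ref{thm:Phi_properties}\ref{thm:Phi_properties_1}. Fix $\mathbb Q\in\widehat{\mathcal P}_{t,\omega}(L^{\gamma^\beta})$ and set $A:=A^{k^\beta,t}$.

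First we check that $(A,\mathbb Q)\in\mathfrak U^\circ$. Since $k^\beta\ge0$ is locally bounded, $A$ is continuous, nondecreasing and finite before $\tau_{\mathrm{exp}}$. The condition $A_{\tau_{\mathrm{exp}}}=\infty$ on $\{\tau_{\mathrm{exp}}<\infty\}$ holds vacuously $\mathbb Q$-a.s.: the hypotheses \eqref{eq:coefficient_lyapunov_1}--\eqref{eq:coefficient_lyapunov_2} make $e^{-C_\phi s}\phi(X_{s\wedge\tau_n})$ a $\mathbb Q$-supermartingale up to a discounting factor. Here we use $k^\beta\ge0$ and drop the $c\phi$ term, since $L^{\gamma^\beta}\phi\le L^\beta\phi+k^\beta\phi\le C_\phi\phi+k^\beta\phi$. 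To avoid that extra term, we instead apply the Lyapunov estimate of Proposition~E.4 in the Supplementary Material directly to $\mathbb P=\Phi(A,\mathbb Q)$ after the martingale property is established, and then transfer back. The simpler route is to note that the linear problem for $\gamma^\beta$ with Lyapunov function $\phi$ yields $\mathbb Q(\tau_n\le T)\to0$ via optional sampling of $\phi(X_{\cdot\wedge\tau_n})e^{-\int(C_\phi+k^\beta)}$. We will take whichever works and conclude $\mathbb Q(\tau_{\mathrm{exp}}=\infty)=1$.

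Next we verify the martingale property of $\widetilde M^{f,n,\beta}$ under $\mathbb P$. For $s\le u$ and bounded $\widetilde{\mathcal F}_s$-measurable $Y$, the goal is $\mathbb E^{\mathbb P}[Y(\widetilde M_u-\widetilde M_s)]=0$. Write $f(X_{[u]})\mathbb I_{\{[u]<\tau_\infty\}}$ and the integral up to $[u]_{t,\tau_n}$, and use \eqref{eq:Phi_stopping_time_identity} with $\sigma=[u]_{t,\tau_n}$. The integrand $\int \ldots\mathbb I_{\{r<\tau_\infty\}}dr$ is handled via Fubini, by applying the identity at each fixed $r$. This transforms the claim into the $\mathbb Q$-martingale property of
\[
e^{-A_{[s]}}f(X_{[s]})-\int_t^{[s]}e^{-A_r}\bigl(L^{\gamma^\beta}f-k^\beta f\bigr)(r,X)\,dr,
\]
which follows from It\^o's product rule applied to the $\mathbb Q$-semimartingale $f(X_{\cdot\wedge\tau_n})$ and the finite-variation process $e^{-A}$. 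The conditioning on $Y$ is dealt with by the $\widetilde{\mathcal F}_{\widetilde\sigma}$ measurability clause of \eqref{eq:Phi_stopping_time_identity}: on $\{s<\tau_{\mathrm{kill}}\}$, $Y$ coincides with a function of the de-killed path up to $s$. The prescribed-history condition transfers because $A=0$ on $[0,t]$, so no killing occurs before $t$.

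For part~\ref{thm:gmp_virtualization_correspondence_inverse}, uniqueness is immediate from injectivity of $\Phi$ in Lemma~\ref{thm:Phi_properties}\ref{thm:Phi_properties_3}, since $A$ is fixed. For existence, define $\mathbb Q:=\mathbb P\circ\mathfrak r^{-1}$, the law of the de-killed path. We need $\mathbb Q\in\widehat{\mathcal P}_{t,\omega}(L^{\gamma^\beta})$ and $\Phi(A,\mathbb Q)=\mathbb P$.

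By Proposition~\ref{prop:recovery_aggregation_generator_compatible_characteristics}\ref{prop:recovery_continuous_characteristics}, the recovered drift and covariance of $\overline X$ equal $b,a$ $dr\otimes d\mathbb P$-a.e. This is because $\mathbb P$ solves the linear problem with exactly this $\beta$, and the recovered coefficients are determined a.e. Hence $f(\overline X_{\cdot\wedge\tau_n})-\int L^{\gamma^\beta}f\,dr$ is a $\mathbb P$-martingale. Since $\beta$ is a functional of $\mathfrak r(\cdot)$ by Lemma~\ref{lem:coefficient_field_identification}, this martingale is adapted to the filtration generated by $\overline X$, and pushes forward to a $\mathbb Q$-martingale.

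Part~\ref{prop:recovery_killing_compensator} identifies the compensator of $N$ as $A\circ\mathfrak r$. The main obstacle is to upgrade this to the statement that, conditionally on the whole de-killed path, $\tau_{\mathrm{kill}}$ has hazard $dA$, i.e. that killing is ``Cox'' (doubly stochastic) with respect to $\mathfrak r$. This does not follow from the compensator alone, since a compensator with respect to $\widetilde{\mathbb F}$ does not automatically give conditional independence from the future of $\overline X$.

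Our first approach to this step is to show that $\mathbb P$ and $\mathbb P':=\Phi(A,\mathbb Q)$, which lies in $\widetilde{\mathcal P}_{t,\omega}(L^\beta)$ by part~\ref{thm:gmp_virtualization_correspondence_forward}, share the same joint finite-dimensional laws of $(\overline X,N)$. We would do this by induction over a time grid, using that both solve the same linear problem with the path-dependent $\beta$. Concretely, we compute $\mathbb E[\prod_i g_i(\overline X_{s_i})\mathbb I_{\{s_m<\tau_{\mathrm{kill}}\}}]$ via the multiplicative functional $e^{-A}$, exploiting that on $\{s<\tau_{\mathrm{kill}}\}$ the killed and de-killed paths coincide. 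A possible alternative is a direct conditional-law computation on the product space $\bar\Omega$.
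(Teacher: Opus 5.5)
Your part~(ii) has a genuine gap. Part~(i) also rests on a false intermediate claim, though that one is easily repaired.

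\textbf{Part (i).} You cannot conclude $\mathbb Q(\tau_{\mathrm{exp}}=\infty)=1$. The Lyapunov hypothesis is imposed on $L^\beta$, whose zeroth-order term $-k^\beta\phi$ is absent from $L^{\gamma^\beta}$, so the de-killed law may explode. Take $D=\mathbb R$, $b(x)=1+x^2$, $a=0$, $k^\beta(x)=2|x|$ and $\phi(x)=1+x^2$. Then $\ell_\beta(J_x^2\phi)=2(1+x^2)(x-|x|)\le0$, yet $\mathbb Q$ is the Dirac mass at the path $\tan(s-t+\arctan x)$, which explodes in finite time.

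Your discounted supermartingale $e^{-C_\phi([s]_{t,\tau_n}-t)-A^{k^\beta,t}_{[s]_{t,\tau_n}}}\phi(X_{[s]_{t,\tau_n}})\mathbb I_{\{[s]_{t,\tau_n}<\tau_{\mathrm{exp}}\}}$ only yields
\[
\mathbb Q\bigl(\tau_n\le T,\ A^{k^\beta,t}_{\tau_n}\le R\bigr)\le \frac{e^{C_\phi(T-t)+R}\phi(x)}{m_n^\phi}.
\]
Letting $n\to\infty$ gives $A^{k^\beta,t}_{\tau_{\mathrm{exp}}}=\infty$ on $\{\tau_{\mathrm{exp}}<\infty\}$. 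That is exactly the clock condition in $\mathfrak U^\circ$, and it is what the paper proves. With this correction, your verification of the killed martingale problem matches the paper's argument: the product rule for $e^{-A}f(X)$, then Lemma~\ref{thm:Phi_properties}\ref{thm:Phi_properties_1} together with Tonelli.

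\textbf{Part (ii): the candidate law is wrong.} Your choice $\mathbb Q:=\mathbb P\circ\mathfrak r^{-1}$ does not work. The map $\mathfrak r$ freezes the path at $X_{\tau_{\mathrm{kill}}-}$, so under this law the path is constant after the killing time. For instance, with $b\equiv1$, $a=0$, $k^\beta\equiv1$, the only solution of the $L^{\gamma^\beta}$-problem is the deterministic line $x+(s-t)$. By contrast, $\mathbb P\circ\mathfrak r^{-1}$ stops that line at an exponential time.

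Your push-forward step also breaks. The localization $\tau_n$ on $\widetilde\Omega$ equals $(\tau_n\circ\mathfrak r)\wedge\tau_{\mathrm{kill}}$, which is not a functional of $\mathfrak r$. The repair you sketch is to match the laws of $(\overline X,N)$ under $\mathbb P$ and $\Phi(A,\mathbb Q)$ ``because both solve the same linear problem''. That would require uniqueness for the linear martingale problem with coefficient field $\beta$, which is neither assumed nor true in general (cf.\ Example~\ref{ex:singular_diffusion_multiple_branches}).

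\textbf{The missing idea: inverse killing by a change of measure.}
\begin{enumerate}
\item Since $N-A^{k^\beta,t}$ is the compensated killing martingale, integration by parts makes $\mathscr D_s:=e^{A^{k^\beta,t}_s}\mathbb I_{\{s<\tau_\infty\}}$, stopped at $\rho_n=\tau_n\wedge n$, a bounded $\mathbb P$-martingale.
\item Hence $d\mathbb Q^n:=\mathscr D_{\rho_n}\,d\mathbb P$ on $\widetilde{\mathcal F}_{\rho_n}$ defines consistent laws. They charge no killing before $\rho_n$, and by Bayes' formula they solve the stopped $L^{\gamma^\beta}$-problem.
\item These laws must be glued into a single law on $\widehat\Omega$. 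The paper does this with Tulcea's theorem on a pre-path space, plus an excursion estimate showing that finite-lifetime paths converge to $\triangle$.
\item The density then directly gives $\mathbb E^{\mathbb Q}[e^{-A_s}Y\mathbb I_{\{\tau_{\mathrm{exp}}>s\}}]=\mathbb E^{\mathbb P}[(Y\circ\mathfrak r)\mathbb I_{\{\tau_\infty>s\}}]$.
\item Combined with Lemma~\ref{thm:Phi_properties}\ref{thm:Phi_properties_1}, this shows $\Phi(A,\mathbb Q)$ and $\mathbb P$ agree on every set $B\cap\{\tau_\infty>s\}$ with $B\in\widetilde{\mathcal F}_s$. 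That determines the law, and no well-posedness is needed.
\end{enumerate}
Your uniqueness argument via injectivity of $\Phi$ is correct.
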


\begin{proof}
For the forward implication, the Lyapunov estimate gives the clock
condition \eqref{eq:definingconditionU_0-2}, and integration by parts for
$e^{-A^{k^\beta,t}}f(X)$ shows that each
$\widetilde M^{f,n,\beta}$ in \eqref{eq:defM_f,nlinearmtg} is a
$\Phi(A^{k^\beta,t},\mathbb Q)$-martingale for every
$f\in C_b^\infty(D)$ and $n\ge1$.
For the reverse implication, localized inverse-killing densities
$e^{A^{k^\beta,t}}\mathbb I_{\{\cdot<\tau_\infty\}}$ define consistent stopped laws, which extend to a
continuous-path law. Injectivity in Lemma~\ref{thm:Phi_properties} gives
uniqueness. The complete argument is in Appendix~
G.5 of the Supplementary Material.
\end{proof}

For $\mathcal T\in\mathfrak V_{\le}(G)$, define the recovered pair family by
\[
\mathcal U_{t,x}^{\mathcal T}
 :=\Phi^{-1}(\mathcal R_{t,x}^{\mathcal T}).
\]
The following theorem identifies the ambient and recovered families and
transfers the valuation formula to discounted expectations.

\begin{theorem}
\label{thm:discounted_pair_recovery}
\label{cor:pair_virtual_identification}
Let $G\in\mathfrak G_{\mathrm{Lyap}}$ and
$\mathcal T\in\mathfrak V_{\le}(G)$. Then, fiberwise,
\begin{equation}
\label{eq:pair_virtual_identification_generator_compatible}
\Phi(\mathcal U(G))=\mathcal P(G),\qquad
\Phi(\mathcal U^{\mathcal T})=\mathcal R^{\mathcal T}.
\end{equation}
Both pair families are nonempty and time-homogeneous. Every law in either
uncertainty structure has a unique inverse pair under $\Phi$. For $0\le t\le T$, $x\in D$,
and $f\in\USC_b(D)$,
\begin{equation}
\label{eq:pair_DUS_valuation}
\mathcal T_{T-t}f(x)
 =\max_{(A,\mathbb Q)\in\mathcal U_{t,x}^{\mathcal T}}
    \mathbb E^{\mathbb Q}
    [e^{-A_T}f(X_T)\mathbb I_{\{T<\tau_{\mathrm{exp}}\}}].
\end{equation}
The same formula holds with $\mathcal V^G$ and $\mathcal U(G)$.
\end{theorem}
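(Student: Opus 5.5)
The plan is to reduce everything to the bijection of Lemma~\ref{thm:gmp_virtualization_correspondence}, fibre by fibre, and then transfer the valuation formula through the stopping-time identity of Lemma~\ref{thm:Phi_properties}\ref{thm:Phi_properties_1}.

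\emph{Identity for the ambient family.} Fix $x\in D$ and $t\ge0$. The cemetery fibre is trivial, since $\Phi(\mathsf u_\triangle)=\delta_{\mathbf\triangle}$.

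For the inclusion $\Phi(\mathcal U_{t,x}(G))\subseteq\mathcal P_{t,x}(G)$, take $\beta\in\mathfrak B_{\mathrm{gc}}(G)$. Generator compatibility and Assumption~\ref{assume:lyapunov} give $L^\beta(s,\omega,J^2_{X_s}\phi)\le G(X_s,J^2_{X_s}\phi)\le C_\phi\phi(X_s)$ before $\tau_{\mathrm{exp}}$, so the Lyapunov hypothesis of Lemma~\ref{thm:gmp_virtualization_correspondence} holds. Part~\ref{thm:gmp_virtualization_correspondence_forward} of that lemma then shows that $\Phi(A^{k^\beta,t},\mathbb Q)\in\widetilde{\mathcal P}_{t,x}(L^\beta)$. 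By Corollary~\ref{cor:recovery_martingale_problem_equivalence}, this set is contained in $\mathcal P_{t,x}(G)$.

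For the reverse inclusion, take $\mathbb P\in\mathcal P_{t,x}(G)$. Corollary~\ref{cor:recovery_martingale_problem_equivalence} (via Proposition~\ref{prop:recovery_aggregation_generator_compatible_characteristics}) supplies $\beta\in\mathfrak B_{\mathrm{gc}}(G)$ with $\mathbb P\in\widetilde{\mathcal P}_{t,x}(L^\beta)$. Part~\ref{thm:gmp_virtualization_correspondence_inverse} of Lemma~\ref{thm:gmp_virtualization_correspondence} then produces $\mathbb Q$ with $\Phi(A^{k^\beta,t},\mathbb Q)=\mathbb P$. By Lemma~\ref{thm:Phi_properties}\ref{thm:Phi_properties_3}, $\Phi$ is injective on all of $\mathfrak U$. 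Hence the inverse pair is unique, even though several fields $\beta$ may represent $\mathbb P$: different $\beta$ must produce the same pair up to indistinguishability.

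\emph{Identity for the recovered family.} Since $\mathcal R^{\mathcal T}\subseteq\mathcal P(G)$ by Corollary~\ref{prop:valuation_compatible_generator_substructure}, every law in $\mathcal R^{\mathcal T}_{t,x}$ lies in the image of $\Phi$. The definition $\mathcal U^{\mathcal T}_{t,x}=\Phi^{-1}(\mathcal R^{\mathcal T}_{t,x})$ therefore gives $\Phi(\mathcal U^{\mathcal T})=\mathcal R^{\mathcal T}$ immediately, again with uniqueness of the inverse pair from injectivity.

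\emph{Nonemptiness and time homogeneity.} Nonemptiness transfers from Corollary~\ref{cor:canonical_regular_dynamic_virtual_structure} and Proposition~\ref{prop:recovered_calibrated_nonemptiness}. Time homogeneity transfers from Proposition~\ref{prop:propertiesP_t,x_revised} and Proposition~\ref{prop:recovered_time_homogeneity}, using Lemma~\ref{thm:Phi_properties}\ref{thm:Phi_properties_2} together with injectivity.

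Concretely, let $\mathsf u\in\mathcal U_{0,x}^{\mathcal T}$. Then $\Phi(\mathsf u\circ\theta_t^{-1})=\Phi(\mathsf u)\circ\theta_t^{-1}\in\mathcal R^{\mathcal T}_{t,x}$, so $\mathsf u\circ\theta_t^{-1}\in\mathcal U^{\mathcal T}_{t,x}$. Conversely, given $\mathsf v\in\mathcal U^{\mathcal T}_{t,x}$, write $\Phi(\mathsf v)=\mathbb P'\circ\theta_t^{-1}$ with $\mathbb P'\in\mathcal R^{\mathcal T}_x$. Take $\mathsf u=\Phi^{-1}(\mathbb P')$; injectivity then gives $\mathsf v=\mathsf u\circ\theta_t^{-1}$. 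The same argument applies to $\mathcal U(G)$. The main work here is checking that $\Phi^{-1}(\mathbb P')$ exists, which holds because $\mathbb P'\in\mathcal P_x(G)$.

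\emph{Valuation formula.} Apply \eqref{eq:Phi_stopping_time_identity} with the deterministic time $\sigma=T$ and $Y=f(X_T)$. On $\{T<\tau_\infty\}$ we have $\widetilde\sigma=T$, and $\{\widetilde\sigma<\tau_\infty\}=\{T<\tau_\infty\}$, because killing before $T$ forces $\widetilde\sigma=\tau_{\mathrm{kill}}=\tau_\infty$. This gives
\[
\mathbb E^{\Phi(A,\mathbb Q)}[f(X_T)\mathbb I_{\{T<\tau_\infty\}}]=\mathbb E^{\mathbb Q}[e^{-A_T}f(X_T)\mathbb I_{\{T<\tau_{\mathrm{exp}}\}}].
\]
The measurability of $f(X_T)$ with respect to $\widetilde{\mathcal F}_{\widetilde\sigma}$ needs a brief check on this event, or alternatively one can use $Y=f(X_{\widetilde\sigma})$, which agrees with $f(X_T)$ on $\{T<\tau_\infty\}$. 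Combining this identity with \eqref{eq:stable_recovery_representation} and the bijection $\Phi:\mathcal U^{\mathcal T}_{t,x}\to\mathcal R^{\mathcal T}_{t,x}$ yields \eqref{eq:pair_DUS_valuation}, with the maximum attained. The case of $\mathcal V^G$ follows by taking $\mathcal T=\mathcal V^G$, since $\mathcal R^{\mathcal V^G}=\mathcal P(G)$ by Theorem~\ref{thm:realizable_specification_reduction}. The main obstacle is the reverse inclusion in the ambient identity, which carries the entire inverse-killing construction; here it is supplied by Lemma~\ref{thm:gmp_virtualization_correspondence}, so only the Lyapunov verification and the injectivity bookkeeping remain.
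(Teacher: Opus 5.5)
Your proposal is correct and follows essentially the same route as the paper: Lemma~\ref{thm:gmp_virtualization_correspondence} applied fieldwise after the Lyapunov check, the union from Corollary~\ref{cor:recovery_martingale_problem_equivalence}, injectivity of $\Phi$ for uniqueness and for transferring nonemptiness and time homogeneity, and the deterministic-time case of \eqref{eq:Phi_stopping_time_identity} for the valuation formula. The paper also checks explicitly that prescribed histories transfer to the recovered pairs; in your argument this follows implicitly from the inclusion $\mathcal U^{\mathcal T}_{t,x}\subseteq\mathcal U_{t,x}(G)$, which injectivity gives.
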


\begin{proof}
If $\beta\in\mathfrak B_{\mathrm{gc}}(G)$, the Lyapunov function for $G$
satisfies
\[
L^\beta(s,\omega,J_{X_s}^2\phi)
\le G(X_s,J_{X_s}^2\phi)\le C_\phi\phi(X_s)
\qquad(s<\tau_{\mathrm{exp}}).
\]
Lemma~\ref{thm:gmp_virtualization_correspondence} therefore identifies the
continuous and killed linear problems for each $\beta$. Taking their union
and using Corollary~\ref{cor:recovery_martingale_problem_equivalence} proves
$\Phi(\mathcal U(G))=\mathcal P(G)$. The cemetery fiber is immediate.

Since $\mathcal R^{\mathcal T}\subseteq\mathcal P(G)$, the injectivity of
$\Phi$ gives its unique inverse pair family. Nonemptiness and time
homogeneity of the uncertainty structures transfer through $\Phi$ by
Lemma~\ref{thm:Phi_properties}. Prescribed histories transfer as well:
$1=\mathbb P(t<\tau_\infty)=\mathbb E^{\mathbb Q}
[e^{-A_t}\mathbb I_{\{t<\tau_{\mathrm{exp}}\}}]$
forces $A_t=0$ and survival through $t$, $\mathbb Q$-almost surely;
the same identity with bounded history tests then forces the state
history to be constant. Finally, the deterministic-time
case of \eqref{eq:Phi_stopping_time_identity} turns the attained
representation in terms of laws on $\widetilde\Omega$ into
\eqref{eq:pair_DUS_valuation} and its full-family
counterpart.
\end{proof}

\section{Conclusion}
\label{sec:conclusion}

We studied the relationships among dynamic sublinear valuation rules, stable uncertainty structures, and local specifications.
Under finiteness and locality, the structural conditions on local specifications follow from the valuation axioms.
With an additional Lyapunov condition, we established a valuation--structure order isomorphism and an exact correspondence between local upper-generator bounds, valuation domination, and inclusion of the recovered structures.
We characterized the extremal meaning of the compositions of localization, globalization, and robust valuation.
We also characterized exact generation, identified the full-family valuation as an upper Perron evolution, and obtained uniqueness under lower consistency and comparison.
Finally, the discounting--killing correspondence recovered the associated discounted models and their attained valuation representations.

\appendix

\section{Local upper generators}
\label{app:local_upper_generators}
\begin{lemma}
\label{lem:local_second_order_jet_determination}
Under the hypotheses of Proposition~\ref{prop:local_upper_generator_representation},
$\overline{\mathcal G}^{\mathcal T}$ is sublinear on $C_b^\infty(D)$ and
satisfies the local positive maximum principle: if $f$ has a local maximum
at $x$ and $f(x)\ge0$, then
$\overline{\mathcal G}^{\mathcal T}f(x)\le0$.
Moreover,
\begin{align}\label{eq:jet_dependency_overlineA}
    J_x^2f=J_x^2g
    \quad\Longrightarrow\quad
    \overline{\mathcal G}^{\mathcal T}f(x)
    =\overline{\mathcal G}^{\mathcal T}g(x),
    \qquad f,g\in C_b^\infty(D).
\end{align}
\end{lemma}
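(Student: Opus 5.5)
The plan is to derive all three assertions from the valuation axioms \ref{V1}--\ref{V2} and Assumption~\ref{assume:local_upper_generator}, in the order: sublinearity, then the local positive maximum principle, then jet dependence.

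\emph{Sublinearity.} For $f,g\in C_b^\infty(D)$, subadditivity of $\mathcal T_h$ gives $\Delta_h^{\mathcal T}(f+g)\le\Delta_h^{\mathcal T}f+\Delta_h^{\mathcal T}g$ pointwise. The upper half-relaxed limit is subadditive whenever the summands are not $+\infty$, and Assumption~\ref{assume:local_upper_generator} guarantees they are finite. For $\lambda>0$, positive homogeneity of $\mathcal T_h$ gives $\Delta_h^{\mathcal T}(\lambda f)=\lambda\Delta_h^{\mathcal T}f$, so $\overline{\mathcal G}^{\mathcal T}(\lambda f)=\lambda\overline{\mathcal G}^{\mathcal T}f$. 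The case $\lambda=0$ follows from $\mathcal T_h0=0$, which is a consequence of \ref{V2}.

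\emph{Local positive maximum principle.} Let $f$ have a local maximum at $x$ with $c:=f(x)\ge0$. By locality I may replace $f$ by $g:=c+\chi(f-c)$, where $\chi\in C_c^\infty$ is a cutoff with $\chi\equiv1$ near $x$ and support inside the neighbourhood where $f\le c$. Then $g\in C_b^\infty(D)$, $g\le c$ on all of $D$, and $g=f$ near $x$. Put $\psi:=c-g\ge0$, so that $\psi(x)=0$ and $g=c+(-\psi)$. By the sublinearity already proved,
\[
\overline{\mathcal G}^{\mathcal T}g(x)\le\overline{\mathcal G}^{\mathcal T}c(x)+\overline{\mathcal G}^{\mathcal T}(-\psi)(x).
\]
For the constant term, \ref{V2} gives $\mathcal T_hc\le c$, so $\overline{\mathcal G}^{\mathcal T}c\le0$. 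For the second term, monotonicity gives $\mathcal T_h(-\psi)\le\mathcal T_h0=0$, hence $\Delta_h^{\mathcal T}(-\psi)(y)\le\psi(y)/h$.

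This last bound is the main obstacle: the half-relaxed limit lets $y\to x$ and $h\searrow0$ independently, so $\psi(y)/h$ is not obviously controlled. What I would try first is to compare with the flattened functions $\psi_\delta:=\psi\,(1-\chi_\delta)$, where $\chi_\delta\equiv1$ on $B_\delta(x)$, and use $-\psi\le-\psi_\delta$. Alongside this I would use the finiteness and positive homogeneity of $\overline{\mathcal G}^{\mathcal T}(-\lambda\psi)(x)$ in $\lambda$, together with the lower bound $\overline{\mathcal G}^{\mathcal T}(-\psi)\ge-\overline{\mathcal G}^{\mathcal T}\psi$. The aim is to show that the positive part of $\overline{\mathcal G}^{\mathcal T}(-\psi)(x)$ would grow superlinearly under rescaling, contradicting finiteness. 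If this fails, a fallback is to evaluate the limit only along $y=x$ and invoke upper semicontinuity of $\overline{\mathcal G}^{\mathcal T}$. I am least certain of this step.

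\emph{Jet dependence.} Suppose $J_x^2f=J_x^2g$, and set $r:=f-g$. For $\varepsilon>0$ the function $r-\varepsilon|\cdot-x|^2$, suitably cut off and modified away from $x$ via locality, has a local maximum $0$ at $x$. The maximum principle therefore gives $\overline{\mathcal G}^{\mathcal T}(r-\varepsilon\phi_x)(x)\le0$, where $\phi_x$ is a bounded smooth function equal to $|\cdot-x|^2$ near $x$. By sublinearity, $\overline{\mathcal G}^{\mathcal T}r(x)\le\varepsilon\,\overline{\mathcal G}^{\mathcal T}\phi_x(x)$, and $\overline{\mathcal G}^{\mathcal T}\phi_x(x)$ is finite by Assumption~\ref{assume:local_upper_generator}; letting $\varepsilon\searrow0$ yields $\overline{\mathcal G}^{\mathcal T}r(x)\le0$, and the same argument gives $\overline{\mathcal G}^{\mathcal T}(-r)(x)\le0$. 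Since $\Delta_h^{\mathcal T}r+\Delta_h^{\mathcal T}(-r)\ge0$ by sublinearity and $\mathcal T_h0=0$, both values are zero. Finally,
\[
\overline{\mathcal G}^{\mathcal T}f\le\overline{\mathcal G}^{\mathcal T}g+\overline{\mathcal G}^{\mathcal T}r,
\]
and the symmetric inequality with $f$ and $g$ exchanged gives \eqref{eq:jet_dependency_overlineA}.
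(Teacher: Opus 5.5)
Your sublinearity argument and your derivation of the jet dependence from the maximum principle are correct and match the paper. The gap is the local positive maximum principle, which is exactly the step you flag, and none of your suggested remedies closes it. Your reduction only moves the problem to showing $\overline{\mathcal G}^{\mathcal T}(-\psi)(x)\le0$, which is the same principle with maximum value $0$. The only bound you have there, $\Delta_h^{\mathcal T}(-\psi)(y)\le\psi(y)/h$, is useless, because the joint upper limit allows $y\to x$ with $h\ll\psi(y)$.

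The three remedies fail for the following reasons.
\begin{itemize}
\item \emph{Flattening is circular.} Since $\chi_\delta\equiv1$ near $x$, the remainder $-\psi\chi_\delta$ in $-\psi=-\psi_\delta-\psi\chi_\delta$ coincides with $-\psi$ near $x$. Locality therefore gives $\overline{\mathcal G}^{\mathcal T}(-\psi\chi_\delta)(x)=\overline{\mathcal G}^{\mathcal T}(-\psi)(x)$. Using $-\psi\le-\psi_\delta$ through monotonicity instead just reproduces the term $\psi(y)\chi_\delta(y)/h$.
\item \emph{Rescaling cannot help.} Positive homogeneity makes $\lambda\mapsto\overline{\mathcal G}^{\mathcal T}(-\lambda\psi)(x)$ exactly linear, so no superlinear growth can appear. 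Also, $-\overline{\mathcal G}^{\mathcal T}\psi(x)$ is only a lower bound.
\item \emph{The fallback points the wrong way.} The joint upper limit dominates the diagonal one, so $\limsup_{h\searrow0}\Delta_h^{\mathcal T}(-\psi)(x)\le0$ says nothing about it. Upper semicontinuity gives $\overline{\mathcal G}^{\mathcal T}(-\psi)(x)\ge\limsup_{y\to x}\overline{\mathcal G}^{\mathcal T}(-\psi)(y)$, which is again a lower bound.
\end{itemize}

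The missing idea is to make each nearby point $y$ a global maximum point of a small perturbation, so that monotonicity and \ref{V2} can be applied at $y$ itself. Take your $g$, with global maximum $c=g(x)\ge0$. Set $F_\varepsilon:=g-\varepsilon Q$ with $Q(z)=|z-x|^2/(1+|z-x|^2)$. Then $\nabla F_\varepsilon(x)=0$, $\nabla^2F_\varepsilon\le-\varepsilon I_d$ on some ball $B_{2r}(x)$, and $F_\varepsilon\le c-\delta$ off $B_r(x)$ for some $\delta>0$.

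Let $\ell$ be a bounded cut-off of $z-x$ that equals $z-x$ on $B_{2r}(x)$, and define the tilted function $F_{\varepsilon,y}:=F_\varepsilon-\nabla F_\varepsilon(y)\cdot\ell$. It satisfies $\nabla F_{\varepsilon,y}(y)=0$, is strictly concave on $B_{2r}(x)$, and lies below $c$ off $B_r(x)$ once $|\nabla F_\varepsilon(y)|$ is small. Hence it attains its global maximum $m_y\ge c\ge0$ at $y$, so $\Delta_h^{\mathcal T}F_{\varepsilon,y}(y)\le0$ for every $h>0$.

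Finiteness of $\overline{\mathcal G}^{\mathcal T}(\pm\ell_i)(x)$ bounds $\Delta_h^{\mathcal T}(\pm\ell_i)(y)$ by some constant $M$ for small $h$ and $y$ near $x$. Sublinearity then gives $\Delta_h^{\mathcal T}F_\varepsilon(y)\le M\sum_i|\partial_iF_\varepsilon(y)|$, which tends to $0$ as $y\to x$. Thus $\overline{\mathcal G}^{\mathcal T}F_\varepsilon(x)\le0$, and $\overline{\mathcal G}^{\mathcal T}g(x)\le\overline{\mathcal G}^{\mathcal T}F_\varepsilon(x)+\varepsilon\,\overline{\mathcal G}^{\mathcal T}Q(x)\le\varepsilon\,\overline{\mathcal G}^{\mathcal T}Q(x)\to0$.

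The $-\varepsilon Q$ term is essential: without strict concavity, a linear tilt need not move the maximum to $y$. Once this step is supplied, your jet-dependence argument goes through as written.
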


\begin{proof}
Taking upper half-relaxed limits in the sublinearity inequalities for $\mathcal T_h$ proves sublinearity of $\overline{\mathcal G}^{\mathcal T}$; its values are finite by assumption.
Also $\overline{\mathcal G}^{\mathcal T}0=0$, and $\overline{\mathcal G}^{\mathcal T}f$ is upper semicontinuous for each fixed $f\in C_b^\infty(D)$.

Next we prove the local positive maximum principle.
Suppose that $f\in C_b^\infty(D)$ attains a nonnegative global
maximum $m=f(x)$ at $x$. Set
\[
    Q(z):=\frac{|z-x|^2}{1+|z-x|^2},
    \qquad F_\varepsilon:=f-\varepsilon Q.
\]
Fix $\varepsilon>0$. Since $\nabla f(x)=0$ and
$\nabla^2f(x)\le0$, we have $\nabla F_\varepsilon(x)=0$ and $\nabla^2F_\varepsilon(x)\le-2\varepsilon I_d$.
By continuity, we may choose $r>0$ such that $\overline{B_{2r}(x)}\subset D$, and $\nabla^2F_\varepsilon(z)\le-\varepsilon I_d$ for all $z\in B_{2r}(x)$.
Moreover, since $f\le m$ on $D$,
\[
    F_\varepsilon(z)\le m-\delta
    \qquad\text{for }z\in D\setminus B_r(x),
    \qquad
    \delta:=\frac{\varepsilon r^2}{1+r^2}>0.
\]
Choose $\chi\in C_c^\infty(D)$ equal to one on a neighborhood of $\overline{B_{2r}(x)}$, and define
\[
    \ell_i(z):=\chi(z)(z_i-x_i),
    \qquad
    \ell:=(\ell_1,\ldots,\ell_d),
    \qquad
    L:=\sup_{z\in D}|\ell(z)|<\infty.
\]
For $y$ near $x$, define $F_{\varepsilon,y}(z):=F_\varepsilon(z)-\nabla F_\varepsilon(y)\cdot\ell(z)$.
We claim that $F_{\varepsilon,y}$ attains its global maximum at $y$, with value $F_{\varepsilon,y}(y)\ge m\ge0$.
Since $\nabla F_\varepsilon(y)\to\nabla F_\varepsilon(x)=0$ as $y\to x$, choose $\eta\in(0,r)$ such that
$|\nabla F_\varepsilon(y)|\le\delta/(2L)$ for every $y\in B_\eta(x)$.
For such $y$, the function $F_{\varepsilon,y}$ satisfies
\[
    \nabla F_{\varepsilon,y}(y)=0,
    \qquad
    \nabla^2F_{\varepsilon,y}(z)\le-\varepsilon I_d
    \quad\text{on }B_{2r}(x).
\]
Consequently, for $z\in B_{2r}(x)$,
\[
    F_{\varepsilon,y}(z)
    \le F_{\varepsilon,y}(y)
       -\frac{\varepsilon}{2}|z-y|^2.
\]
In particular, taking $z=x$, we have $F_{\varepsilon,y}(y)\ge F_{\varepsilon,y}(x)=m$.
On the other hand, for $z\in D\setminus B_r(x)$,
\[
    F_{\varepsilon,y}(z)
    \le m-\delta+|\nabla F_\varepsilon(y)|L
    \le m-\frac{\delta}{2}.
\]
These inequalities show that $F_{\varepsilon,y}$ attains its
global maximum at $y$, with value
$m_y:=F_{\varepsilon,y}(y)\ge m\ge0$.

By monotonicity and \ref{V2}, applied to the constant payoff $m_y$,
\[
    \mathcal T_hF_{\varepsilon,y}(y)
    \le \mathcal T_h(m_y\mathbf 1)(y)
    \le m_y,
    \qquad h>0.
\]
Thus $\Delta_h^{\mathcal T}F_{\varepsilon,y}(y)\le0$.
Since $\overline{\mathcal G}^{\mathcal T}(\pm\ell_i)(x)$ is finite
for every $i$, the definition of the joint upper limit yields
constants $M\ge0$, $h_0>0$, and $\rho\in(0,\eta)$ such that
\[
    \Delta_h^{\mathcal T}(\pm\ell_i)(y)\le M
    \qquad
    \text{for }0<h<h_0,\quad y\in B_\rho(x),\quad 1\le i\le d.
\]
Writing $a_i^\pm(y):=\max\{\pm \partial_iF_\varepsilon(y),0\}$, sublinearity gives
\[
    \Delta_h^{\mathcal T}F_\varepsilon(y)
    \le
    \Delta_h^{\mathcal T}F_{\varepsilon,y}(y)
    +\sum_{i=1}^d
      \left[
        a_i^+(y)\Delta_h^{\mathcal T}\ell_i(y)
        +a_i^-(y)\Delta_h^{\mathcal T}(-\ell_i)(y)
      \right]
    \le M\sum_{i=1}^d|\partial_iF_\varepsilon(y)|.
\]
Taking the joint upper limit as $h\searrow0$ and $y\to x$,
and using $\nabla F_\varepsilon(y)\to0$, proves
$\overline{\mathcal G}^{\mathcal T}F_\varepsilon(x)\le0$.
Finally, sublinearity of the upper generator yields
\[
    \overline{\mathcal G}^{\mathcal T}f(x)
    \le
    \overline{\mathcal G}^{\mathcal T}F_\varepsilon(x)
    +\varepsilon\overline{\mathcal G}^{\mathcal T}Q(x)
    \le
    \varepsilon\overline{\mathcal G}^{\mathcal T}Q(x).
\]
Since $\overline{\mathcal G}^{\mathcal T}Q(x)$ is finite, letting
$\varepsilon\searrow0$ gives
$\overline{\mathcal G}^{\mathcal T}f(x)\le0$.
If $f$ has only a local maximum at $x$, choose $\overline f\in C_b^\infty(D)$ equal to $f$ near $x$ and attaining its global maximum $f(x)$ at $x$. Locality then gives $\overline{\mathcal G}^{\mathcal T}f(x)=\overline{\mathcal G}^{\mathcal T}\overline f(x)\le0$.
This proves the local positive maximum principle.

Now suppose $J_x^2f=J_x^2g$, and put $r=f-g$.
Choose $q\in C_b^\infty(D)$ equal to $|y-x|^2$ near $x$.
Taylor's formula gives $r(y)=o(|y-x|^2)$, so both
$r-\varepsilon q$ and $-r-\varepsilon q$ have a local maximum equal to
zero at $x$. The local maximum principle and sublinearity yield
\[
    \overline{\mathcal G}^{\mathcal T}(\pm r)(x)
    \le
    \overline{\mathcal G}^{\mathcal T}(\pm r-\varepsilon q)(x)
    +\varepsilon\overline{\mathcal G}^{\mathcal T}q(x)\le
    \varepsilon\overline{\mathcal G}^{\mathcal T}q(x).
\]
Letting $\varepsilon\searrow0$ and using $\overline{\mathcal G}^{\mathcal T}r(x)+\overline{\mathcal G}^{\mathcal T}(-r)(x)\ge0$ shows that both values vanish.
Sublinearity applied to $f=g+r$ and $g=f-r$ now proves \eqref{eq:jet_dependency_overlineA}.
\end{proof}

\begin{proof}[Proof of Proposition~\ref{prop:local_upper_generator_representation}]
Fix $x\in D$ and choose $\chi_x\in C_c^\infty(D)$ equal to one on a neighborhood of $x$. 
For $Z=(z,p,H)\in\mathfrak J$, define
\[
    f_{x,Z}(y)
    :=\chi_x(y)\left(
        z+p\cdot(y-x)
        +\frac12(y-x)^\top H(y-x)
    \right),
    \qquad y\in D.
\]
Then $f_{x,Z}\in C_c^\infty(D)$ and $J_x^2f_{x,Z}=Z$.
Set $G(x,Z):=\overline{\mathcal G}^{\mathcal T}f_{x,Z}(x)$.
Finiteness of the upper generator on $C_b^\infty(D)$ implies
that $G$ is finite.
Moreover, $G$ satisfies \eqref{eq:thm:representationgenerator_eq1} by Lemma~\ref{lem:local_second_order_jet_determination}.
Uniqueness of $G$ is also immediate from \eqref{eq:thm:representationgenerator_eq1}.

Next, we verify \ref{item:G1}.
For each fixed $x\in D$, the map $Z\mapsto f_{x,Z}$ is linear, so sublinearity of $\overline{\mathcal G}^{\mathcal T}$ implies sublinearity of $G(x,\cdot)$.
We first establish the locally uniform jet-Lipschitz bound \eqref{eq:local_uniform_jet_lipschitz_G}.
Fix a compact set $K\Subset D$, let $N:=\dim\mathfrak J$, and choose $\chi\in C_c^\infty(D)$ equal to one on a neighborhood of $K$.
For a basis $P_1,\ldots,P_N$ of the polynomials of degree at most two, set $\psi_i:=\chi P_i$.
For every $y\in K$, the jet map on this polynomial space is an isomorphism, so $\{J_y^2\psi_i\}_{i=1}^N$ is a basis of $\mathfrak J$.
Since these basis vectors depend continuously on $y$, there exist unique functions $a_i:K\times\mathfrak J\to\mathbb R$ which are jointly continuous and linear in the second variable, with
\[
    Z=\sum_{i=1}^N a_i(y,Z)J_y^2\psi_i,
    \qquad y\in K,\quad Z\in\mathfrak J.
\]
By jointly continuity of the functions $a_i$ and their linearity in the second variable, there exists a constant $C_K>0$ such that
\[
    \sum_{i=1}^N|a_i(y,Z)|\le C_K\|Z\|,
    \qquad y\in K,\quad Z\in\mathfrak J.
\]
Set
\[
    M_K:=
    \max_{1\le i\le N}\sup_{y\in K}
    \max\left\{
        \overline{\mathcal G}^{\mathcal T}\psi_i(y),
        \overline{\mathcal G}^{\mathcal T}(-\psi_i)(y),
        0
    \right\}.
\]
Each $\overline{\mathcal G}^{\mathcal T}(\pm\psi_i)$ is a finite upper
semicontinuous function, so $M_K<\infty$. Sublinearity gives
\begin{equation}
\label{eq:GleqLKZ}
\begin{aligned}
    G(y,Z)
    &\le
    \sum_{i=1}^N\left[
        a_i(y,Z)^+\overline{\mathcal G}^{\mathcal T}\psi_i(y)
        +a_i(y,Z)^-\overline{\mathcal G}^{\mathcal T}(-\psi_i)(y)
    \right]\le M_KC_K\|Z\|.
\end{aligned}
\end{equation}
Applying the same estimate to $-Z$ and using
$0\le G(y,Z)+G(y,-Z)$ yields $|G(y,Z)|\le M_KC_K\|Z\|$.
Finally,
\[
    G(y,Z_1)-G(y,Z_2)
    \le G(y,Z_1-Z_2)
    \le M_KC_K\|Z_1-Z_2\|.
\]
Interchanging $Z_1$ and $Z_2$ proves \eqref{eq:local_uniform_jet_lipschitz_G} with $L_K:=M_KC_K$.

We now prove joint upper semicontinuity of $G$.
Let $(x_n,Z_n)\to(x,Z)$ in $D\times\mathfrak J$.
Choose a compact neighborhood $K\Subset D$ of $x$ containing $x_n$ for all sufficiently large $n$, and fix $f\in C_c^\infty(D)$ with $J_x^2f=Z$.
By \eqref{eq:thm:representationgenerator_eq1} and \eqref{eq:local_uniform_jet_lipschitz_G},
\[
    G(x_n,Z_n)
    \le G(x_n,J_{x_n}^2f)
       +L_K\|Z_n-J_{x_n}^2f\|
    =
    \overline{\mathcal G}^{\mathcal T}f(x_n)
       +L_K\|Z_n-J_{x_n}^2f\|.
\]
Since $Z_n\to Z=J_x^2f$ and $J_{x_n}^2f\to J_x^2f$, the local uniform jet-Lipschitz bound and the upper semicontinuity of $\overline{\mathcal G}^{\mathcal T}f$ yield
\[
    \limsup_{n\to\infty}G(x_n,Z_n)
    \le \limsup_{n\to\infty}
        \overline{\mathcal G}^{\mathcal T}f(x_n)
    \le \overline{\mathcal G}^{\mathcal T}f(x)
    =G(x,Z).
\]
This proves joint upper semicontinuity and completes the
verification of \ref{item:G1}.

If $H_1\ge H_2$, realize $(0,0,H_2-H_1)$ by a smooth function having a local maximum equal to zero at $x$.
The local maximum principle gives $G(x,0,0,H_2-H_1)\le0$, and hence
\[
    G(x,z,p,H_2)
    \le G(x,z,p,H_1)+G(x,0,0,H_2-H_1)
    \le G(x,z,p,H_1).
\]
This proves \ref{item:G2}.
For $z_1\ge z_2$, the nonnegative constant test gives $G(x,z_1-z_2,0,0)\le0$. Sublinearity then yields
\[
    G(x,z_1,p,H)
    \le G(x,z_2,p,H)+G(x,z_1-z_2,0,0)
    \le G(x,z_2,p,H),
\]
proving \ref{item:G3}.
\end{proof}

\section{Recovery and aggregation of local characteristics}
\label{app:characteristic_recovery}

\begin{proof}[Proof of Proposition~\ref{prop:recovery_aggregation_generator_compatible_characteristics}]
All martingale and supermartingale properties below extend to the
usual augmentation $\widetilde{\mathbb F}^{\,\mathbb P}$, since the
processes are c\`adl\`ag. We take canonical decompositions, compensators,
and predictable densities with respect to this filtration.
Choose the least \(n_0\) such that
\(\widetilde\omega([0,t])\subset D_{n_0}\), and fix \(n\ge n_0\).
For $s\ge0$, write
\[
 N_s^{n,t}:=N_{[s]_{t,\tau_n}}-N_t.
\]
Since the prescribed history is alive at time $t$, we have $N_t=0$
and $\tau_n>t$, $\mathbb P$-a.s.
Let \(\chi_n\in C_c^\infty(D)\) equal one on a neighborhood of
\(\overline D_n\), and put
\[
 q_i^n(x):=\chi_n(x)x_i,
 \qquad
 q_{ij}^n(x):=\chi_n(x)x_ix_j,
 \qquad 1\le i,j\le d.
\]

\smallskip
\noindent\emph{Step 1: finite-test drift recovery.}
Let
\[
    \mathscr H_n:=\{1\}\cup\{q_i^n:1\le i\le d\}
       \cup\{q_{ij}^n:1\le i\le j\le d\}.
\]
For \(h\in\mathscr H_n\) and $s\ge t$, define $Y_s^{h,n}:=\mathbb I_{\{s\wedge\tau_n<\tau_\infty\}}h\bigl(\overline X_{s\wedge\tau_n}\bigr)$.
By the definition of \(\widetilde M^{h,n}\), the process \(Y^{h,n}-\widetilde M^{h,n}\) has predictable finite variation on every finite horizon. 
Moreover, since \(\mathbb P\in\mathcal P_{t,\widetilde\omega}(G)\), both \(\widetilde M^{h,n}\) and \(\widetilde M^{-h,n}\) are \(\mathbb P\)-supermartingales.
It follows that \(Y^{h,n}\) is a special semimartingale. 
Write its canonical decomposition as $Y^{h,n}=Y_t^{h,n}+M+V$.
The canonical finite-variation parts of supermartingales \(\widetilde M^{h,n}\) and \(\widetilde M^{-h,n}\) are, respectively,
\[
    V-
    \int_t^{\cdot\wedge\tau_n}
    G\bigl(\overline X_r,J^2_{\overline X_r}h\bigr)\mathbb I_{\{r<\tau_\infty\}}\,dr, \qquad -V-
    \int_t^{\cdot\wedge\tau_n}
    G\bigl(\overline X_r,-J^2_{\overline X_r}h\bigr)\mathbb I_{\{r<\tau_\infty\}}\,dr.
\]
Since both processes must be nonincreasing, we obtain, in the sense of signed random measures on \(t<r\le\tau_n\),
\[
    -G\bigl(\overline X_r,-J^2_{\overline X_r}h\bigr)\mathbb I_{\{r<\tau_\infty\}}\,dr
    \le dV_r
    \le
    G\bigl(\overline X_r,J^2_{\overline X_r}h\bigr)\mathbb I_{\{r<\tau_\infty\}}\,dr.
\]
In particular, $V$ has no atom at $\tau_n$.
Since $V$ is stopped at $\tau_n$, it is \(\mathbb P\)-a.s. absolutely
continuous with respect to Lebesgue measure and therefore admits a
predictable density \(\alpha^{h,n}\).
In particular,
\begin{equation}
\label{eq:second_stage_finite_test_decomposition}
\left(
Y_s^{h,n}-Y_t^{h,n}
-
\int_t^{s\wedge\tau_n}\alpha_r^{h,n}\,dr
\right)_{s\ge t}
\end{equation}
is a \(\mathbb P\)-local martingale.
Moreover,
\begin{align}\label{eq:alpha_bounded}
-G\bigl(\overline X_r,-J^2_{\overline X_r}h\bigr)\mathbb I_{\{r<\tau_\infty\}}
\le
\alpha_r^{h,n}
\le
G\bigl(\overline X_r,J^2_{\overline X_r}h\bigr)\mathbb I_{\{r<\tau_\infty\}}
\end{align}
for \(dr\otimes d\mathbb P\)-a.e. \((r,\widetilde\eta)\) on \(\{(r,\widetilde\eta):t<r<\tau_n(\widetilde\eta)\}\).
Choose bounded predictable versions of these finitely many densities,
equal to zero outside $\{t<r\le\tau_n\}$, and use the same version for
$q_{ij}^n=q_{ji}^n$.
Write
\[
\alpha^n:=\alpha^{1,n},\qquad
\alpha^{i,n}:=\alpha^{q_i^n,n},\qquad
\alpha^{ij,n}:=\alpha^{q_{ij}^n,n},
\qquad 1\le i,j\le d.
\]
On the predictable set \(\{t<r\le\tau_n\}\), define
\begin{align*}
    &c_r^n
    :=\alpha_r^n, \qquad 
    b_r^{n,i}
    :=\alpha_r^{i,n}-c_r^n\overline X_r^i,\\
    &a_r^{n,ij}
    :=
    \alpha_r^{ij,n}
    -b_r^{n,i}\overline X_r^j
    -b_r^{n,j}\overline X_r^i
    -c_r^n\overline X_r^i\overline X_r^j,
    \qquad 1\le i,j\le d,
\end{align*}
and extend \(\beta^n:=(c^n,b^n,a^n)\) by zero outside this set.
Then \(\beta^n\) is bounded and predictable.

Since \(\mathbb I_{\{s\wedge\tau_n<\tau_\infty\}}=1-N_s^{n,t}\), applying \eqref{eq:second_stage_finite_test_decomposition} with \(h=1\) shows that
\begin{equation}
\label{eq:K_decomposition}
    K_s^n
    :=
    N_s^{n,t}
    -
    \int_t^{s\wedge\tau_n}(-c_r^n)\,dr
\end{equation}
is a local martingale. 
Since \(N^{n,t}\) is bounded and increasing, uniqueness of the canonical decomposition implies that \(\int_t^{\cdot\wedge\tau_n}(-c_r^n)\,dr\) is its predictable compensator. 
Thus $-c^n$ is the density of the stopped killing compensator; the
minus sign reflects the decrease of the survival indicator at killing.
Consequently, \(c_r^n\le0\) for \(dr\otimes d\mathbb P\)-a.e. \((r,\widetilde\eta)\) on \(\{(r,\widetilde\eta):t<r<\tau_n(\widetilde\eta)\}\).
Moreover, since \(\chi_n\equiv1\) on a neighborhood of
\(\overline D_n\), we have, for every \(s\ge t\),
\begin{equation}
\label{eq:overlineX_decomposition}
\begin{aligned}
    \overline X_{s\wedge\tau_n}^{\,i}
    &=
    Y_s^{q_i^n,n}
    +
    \int_t^{s\wedge\tau_n}\overline X_r^i\,dN_r^{n,t},\\
    \overline X_{s\wedge\tau_n}^{\,i}
    \overline X_{s\wedge\tau_n}^{\,j}
    &=
    Y_s^{q_{ij}^n,n}
    +
    \int_t^{s\wedge\tau_n}
        \overline X_r^i\overline X_r^j\,dN_r^{n,t}.
\end{aligned}
\end{equation}
The added integrals restore the values removed from the killed
coordinates at the killing jump. Their compensator contributions
therefore remove the killing terms $c^n\overline X^i$ and
$c^n\overline X^i\overline X^j$ from the recovered coordinate drifts.
Combining the first identity in \eqref{eq:overlineX_decomposition} with
\eqref{eq:second_stage_finite_test_decomposition} for \(h=q_i^n\) and
with \eqref{eq:K_decomposition}, componentwise, yields an
\(\mathbb R^d\)-valued continuous local martingale \(M^n\), null on
\([0,t]\), such that
\begin{equation}
\label{eq:second_stage_recovered_characteristics}
    \overline X_{s\wedge\tau_n}
    =
    \widetilde\omega(t)
    +
    \int_t^{s\wedge\tau_n}b_r^n\,dr
    +
    M_{s\wedge\tau_n}^n.
\end{equation}
Applying the product formula to
\eqref{eq:second_stage_recovered_characteristics} and comparing its
predictable finite-variation part with that of the second identity in
\eqref{eq:overlineX_decomposition}, we obtain
\[
    \bigl\langle M^{n,i},M^{n,j}\bigr\rangle_{s\wedge\tau_n}
    =
    \int_t^{s\wedge\tau_n}a_r^{n,ij}\,dr,
    \qquad 1\le i,j\le d.
\]
Thus $b^n$ is the drift density of the stopped continuous process
$\overline X$, and the matrix-valued measure \(a_r^n\,dr\) is the
predictable quadratic-covariation measure of its local martingale part.
Consequently, \(a_r^n\in\mathbb S^+(d)\) for \(dr\otimes d\mathbb P\)-a.e. \((r,\widetilde\eta)\) on \(\{(r,\widetilde\eta):t<r<\tau_n(\widetilde\eta)\}\). 
Moreover, the stopped local martingale \(M^n_{\cdot\wedge\tau_n}\) is square-integrable since $a^n$ is bounded, and therefore a true martingale, on every finite horizon.

\smallskip
\noindent\emph{Step 2: smooth tests and support membership.}
For arbitrary \(f\in C_b^\infty(D)\), It\^o's formula applied to
\eqref{eq:second_stage_recovered_characteristics}, followed by integration
by parts with \(\mathbb I_{\{\cdot\wedge\tau_n<\tau_\infty\}}\), shows that
\begin{equation}
\label{eq:second_stage_linear_test_martingale}
\left(
\mathbb I_{\{s\wedge\tau_n<\tau_\infty\}}f(\overline X_{s\wedge\tau_n})
-\mathbb I_{\{t<\tau_\infty\}} f(\widetilde\omega(t))
-
\int_t^{s\wedge\tau_n}
\ell_{\beta_r^n}\bigl(J^2_{\overline X_r}f\bigr)\mathbb I_{\{r<\tau_\infty\}}\,dr
\right)_{s\ge t}
\end{equation}
is a martingale. The same process with
\(\ell_{\beta_r^n}\) replaced by \(G(\overline X_r,\cdot)\) is a
supermartingale because
\(\mathbb P\in\mathcal P_{t,\widetilde\omega}(G)\). Their difference,
\[
    \left(
    \int_t^{s\wedge\tau_n}
    \left[
      \ell_{\beta_r^n}\bigl(J^2_{\overline X_r}f\bigr)
      -G\bigl(\overline X_r,J^2_{\overline X_r}f\bigr)
    \right]\mathbb I_{\{r<\tau_\infty\}}\,dr
    \right)_{s\ge t},
\]
is therefore a predictable finite-variation supermartingale and hence is
nonincreasing. Since \(\mathbb I_{\{r<\tau_\infty\}}=1\) on \(\{t<r<\tau_n\}\), it follows that
\begin{align}
\label{eq:second_stage_smooth_test_domination}
    \ell_{\beta_r^n}\bigl(J^2_{\overline X_r}f\bigr)
    \le
    G\bigl(\overline X_r,J^2_{\overline X_r}f\bigr).
\end{align}
for \(dr\otimes d\mathbb P\)-a.e. \((r,\widetilde\eta)\) on \(\{(r,\widetilde\eta):t<r<\tau_n(\widetilde\eta)\}\). 

By Proposition~
D.2 in the Supplementary Material, there exists a countable determining class \(\mathscr F_{\mathrm{sm}}\subset C_b^\infty(D)\) with the following pointwise jet-density property: for every \((x,U)\in D\times\mathfrak J\), there exists a sequence \((g_k)\subset\mathscr F_{\mathrm{sm}}\) such that \(J_x^2g_k\to U\).
Applying \eqref{eq:second_stage_smooth_test_domination} to each \(g\in\mathscr F_{\mathrm{sm}}\) and using the countability of \(\mathscr F_{\mathrm{sm}}\), we obtain a single \(dr\otimes d\mathbb P\)-null set \(\mathfrak E_n\) such that
\begin{align}\label{eq:determiningclass_ell_le_G}
    \ell_{\beta_r^n(\widetilde\eta)}
    \bigl(J^2_{\overline X_r(\widetilde\eta)}g\bigr)
    \le
    G\bigl(
        \overline X_r(\widetilde\eta),
        J^2_{\overline X_r(\widetilde\eta)}g
    \bigr)
    \qquad
    \text{for every }g\in\mathscr F_{\mathrm{sm}}
\end{align}
whenever \((r,\widetilde\eta)\notin\mathfrak E_n\) and
\(t<r<\tau_n(\widetilde\eta)\).
Fix any such \((r,\widetilde\eta)\), and set
\[
    (x,V)
    :=
    \bigl(\overline X_r(\widetilde\eta),
           \beta_r^n(\widetilde\eta)\bigr).
\]
Given \(U\in\mathfrak J\), choose
\((g_k)\subset\mathscr F_{\mathrm{sm}}\) such that \(J_x^2g_k\to U\).
Then, by \eqref{eq:determiningclass_ell_le_G} and the continuity of \(\ell_V\) and \(G(x,\cdot)\), we have
\[
    \ell_V(U)
    =\lim_{k\to\infty}\ell_V(J_x^2g_k)
    \le
    \lim_{k\to\infty}G(x,J_x^2g_k)
    =G(x,U)
    \qquad\text{for every }U\in\mathfrak J.
\]
Hence, by \eqref{def:supportset}, \(V\in\mathfrak A_G(x)\). Since
\((r,\widetilde\eta)\) was arbitrary outside \(\mathfrak E_n\), it follows
that
\begin{equation}
\label{eq:second_stage_local_support_membership}
    \beta_r^n\in\mathfrak A_G(\overline X_r)
    \quad
    dr\otimes d\mathbb P\text{-a.e. $(r,\widetilde\eta)$ on }\{(r,\widetilde\eta):t<r<\tau_n(\widetilde\eta)\}.
\end{equation}

\smallskip
\noindent\emph{Step 3: consistency and aggregation.}
If \(m\ge n\ge n_0\), uniqueness of the compensator of \(N^{n,t}\), the canonical decomposition of \(\overline X_{\cdot\wedge\tau_n}\), and its predictable quadratic covariation imply
\begin{align}
\label{eq:beta_consistency}
    \beta^m=\beta^n
    \quad
    dr\otimes d\mathbb P\text{-a.e. on }
    \{(r,\widetilde\eta):t<r<\tau_n(\widetilde\eta)\}.
\end{align}
For \(n\ge n_0\), set $E_n:=\bigl\{(r,\widetilde\eta):t<r\le\tau_n(\widetilde\eta)\bigr\}$ and $E_{n_0-1}:=\varnothing$.
Each \(E_n\) is \(\widetilde{\mathbb F}^{\,\mathbb P}\)-predictable.
Since the graph of \(\tau_n\) is \(dr\otimes d\mathbb P\)-null,
\eqref{eq:beta_consistency} is unchanged if
\(\{t<r<\tau_n\}\) is replaced by \(E_n\).
Define on \(\widetilde\Omega\)
\[
    \overline\beta^{\mathbb P}
    :=
    \sum_{n=n_0}^{\infty}
    \beta^n\mathbb I_{E_n\setminus E_{n-1}}.
\]
The sets in this sum are disjoint, so
\(\overline\beta^{\mathbb P}\) is a well-defined \(\widetilde{\mathbb F}^{\,\mathbb P}\)-predictable process.
Moreover, by \eqref{eq:beta_consistency}, $\overline\beta^{\mathbb P}=\beta^n$, $dr\otimes d\mathbb P$-a.e. on $E_n$ for every $n\ge n_0$.
Applying \cite[Lemma~7, p.~399]{dellacherie1982probabilities}, there exists a raw \(\widetilde{\mathbb F}\)-predictable process \(\breve\beta^{\mathbb P}\) that is \(\mathbb P\)-indistinguishable from \(\overline\beta^{\mathbb P}\).
Consequently,
\begin{align}
\label{eq:raw_virtual_beta_aggregation}
    \breve\beta^{\mathbb P}
    =
    \beta^n
    \quad
    dr\otimes d\mathbb P\text{-a.e. on }E_n
\end{align}
for every \(n\ge n_0\).

By \eqref{eq:raw_virtual_beta_aggregation} and
\eqref{eq:second_stage_local_support_membership},
\begin{equation}
\label{eq:aggregated_local_support_membership}
 \breve\beta_r^{\mathbb P}(\widetilde\eta)
 \in\mathfrak A_G(X_r(\widetilde\eta)),
 \qquad dr\otimes d\mathbb P\text{-a.e. on }
 \{(r,\widetilde\eta):t<r<\tau_\infty(\widetilde\eta)\}.
\end{equation}
Choose a Borel selector $v_0:D\to\mathfrak K$ with
$v_0(x)\in\mathfrak A_G(x)$ for every $x\in D$, and define directly on
$\widetilde\Omega$
\[
 \beta_r^{\mathbb P}(\widetilde\eta):=
 \begin{cases}
 \breve\beta_r^{\mathbb P}(\widetilde\eta),
 &r<\tau_\infty(\widetilde\eta),
   \breve\beta_r^{\mathbb P}(\widetilde\eta)
       \in\mathfrak A_G(X_r(\widetilde\eta)),\\[1mm]
 v_0(X_r(\widetilde\eta)),
 &r<\tau_\infty(\widetilde\eta)\text{ otherwise},\\[1mm]
 0,&r\ge\tau_\infty(\widetilde\eta).
 \end{cases}
\]
The Borel graph and local boundedness of $\mathfrak A_G$ imply that
$\beta^{\mathbb P}$ is raw progressively measurable and locally bounded.
Thus $\beta^{\mathbb P}\in\mathfrak B_{\mathrm{gc}}(G)$, and
\eqref{eq:raw_virtual_beta_aggregation} gives
\begin{equation}
\label{eq:aggregate_relation_beta}
 \beta^{\mathbb P}=\beta^n,
 \qquad dr\otimes d\mathbb P\text{-a.e. on }
 \{(r,\widetilde\eta):t<r<\tau_n(\widetilde\eta)\}.
\end{equation}
Consequently, \eqref{eq:second_stage_linear_test_martingale} implies that
$\widetilde M^{f,n,\beta^{\mathbb P}}$ is a $\mathbb P$-martingale
for every $n\ge n_0$.
For $n<n_0$, the prescribed history gives $\tau_n\le t$, so
$[s]_{t,\tau_n}=t$ and the same process is constant on $[t,\infty)$.
Hence
$\mathbb P\in\widetilde{\mathcal P}_{t,\widetilde\omega}
(L^{\beta^{\mathbb P}})$, establishing the asserted existence of a
generator-compatible coefficient field.

The same aggregation identity transfers the recovered characteristics
to $\beta^{\mathbb P}$. Indeed, for $n\ge n_0$, setting
$M_s^{\mathbb P,n}:=M_{s\wedge\tau_n}^n$ in
\eqref{eq:second_stage_recovered_characteristics} gives the decomposition
and quadratic covariation in
part~\ref{prop:recovery_continuous_characteristics}, with $b^n$ and $a^n$
replaced by $b^{\mathbb P}$ and $a^{\mathbb P}$.
Uniqueness of the stopped compensators and continuous semimartingale
characteristics determines these densities only
$dr\otimes d\mathbb P$-a.e. on $\{t<r<\tau_\infty\}$.
The choice of a generator-compatible field away from this set is not
asserted to be unique.

\smallskip
\noindent\emph{Step 4: the global killing compensator.}
By \eqref{eq:aggregate_relation_beta}, the compensator in
\eqref{eq:K_decomposition}, extended by zero on $[0,t]$, is
\[
 A_s^{\mathbb P,n}
 :=\int_t^{[s]_{t,\tau_n}}
       k^{\beta^{\mathbb P}}(r,X)\mathbb I_{\{r<\tau_\infty\}}\,dr,
 \qquad s\ge0.
\]
For each $n$, local boundedness makes
$N^{n,t}-A^{\mathbb P,n}$ a true martingale on every finite horizon.
In particular,
\[
 \mathbb E^{\mathbb P}[A_s^{\mathbb P,n}]
 =\mathbb E^{\mathbb P}[N_s^{n,t}]\le1.
\]
As $n\to\infty$, we have $\tau_n\uparrow\tau_\infty$ and
\[
 A_s^{\mathbb P,n}\uparrow
 A_s^{\mathbb P}
 :=\int_t^{s\vee t}k^{\beta^{\mathbb P}}(r,X)\mathbb I_{\{r<\tau_\infty\}}\,dr.
\]
Moreover, $N_s^{n,t}\to N_s$ pathwise: every finite killing jump is
captured once $D_n$ contains the path up to its pre-killing limit,
whereas on paths with continuous explosion both indicators vanish.
The expectation bound and monotone convergence give
$A_s^{\mathbb P}<\infty$ a.s. and convergence in $L^1(\mathbb P)$;
the bounded indicators also converge in $L^1(\mathbb P)$.
For $0\le r\le s$ and bounded $\widetilde{\mathcal F}_r$-measurable $F$,
pass to the limit in
$\mathbb E^{\mathbb P}[F(N_s^{n,t}-A_s^{\mathbb P,n})]
=\mathbb E^{\mathbb P}[F(N_r^{n,t}-A_r^{\mathbb P,n})]$.
This proves that $N-A^{\mathbb P}$ is a $\mathbb P$-martingale on
every finite horizon.
The process $A^{\mathbb P}$ is continuous, adapted, and increasing,
so it is the predictable compensator of $N$.
Consequently, for every deterministic $a>0$,
\[
 \mathbb P(\tau_{\mathrm{kill}}=a)
 =\mathbb E^{\mathbb P}[\Delta N_a]
 =\mathbb E^{\mathbb P}[\Delta A_a^{\mathbb P}]=0.
\]
There is also no atom at $a=0$ because the prescribed history is alive
at time $t\ge0$. This proves
part~\ref{prop:recovery_killing_compensator} without a Lyapunov assumption
and completes the proof.
\end{proof}

\section{Proof of the Subsolution Realization Theorem}
\label{sec:stochastic_realization_perron}
\label{sec:usc_realization_proofs}
\label{subsec:usc_stochastic_realization}
This appendix proves Theorem~\ref{prop:one_step_usc_subsolution_realization}.
We first prepare an outer approximation of the local specification and
a diagonal stochastic approximation of the subsolution. These yield the
stopped realization inequality with a source term.

\subsection{Outer approximation of the local specification}
We first show that every jointly continuous generating function admits
a nested family of outer approximations.
In the rest of the subsection, we fix the Lyapunov function $\phi:D\to[1,\infty)$ and denote
\[
\kappa_\phi(x):=\frac{\phi(x)}{1+\|J_x^2\phi\|}, \qquad x\in D.
\]
Let \(d_{\mathfrak J}\) denote the distance on \(\mathfrak J\) induced by \(\|\cdot\|\), and let \(d_{H,\mathfrak J}\) denote the corresponding Hausdorff distance between nonempty compact subsets of \(\mathfrak J\).
For a jointly continuous \(G\in\mathfrak G\) and \(\varepsilon>0\), define
\begin{equation}
\label{eq:def_outer_support_regularization}
    \mathfrak A_G^\varepsilon(x)
    :=
    \Bigl\{
        V\in\mathfrak K:
        d_{\mathfrak J}\bigl(V,\mathfrak A_G(x)\bigr)
        \le\varepsilon\kappa_\phi(x)
    \Bigr\}, \qquad
    G^\varepsilon(x,U)
    :=
    \max_{V\in\mathfrak A_G^\varepsilon(x)}\ell_V(U),
\end{equation}

\begin{lemma}
\label{lem:monotone_outer_generator_regularization}
Let \(G\in\mathfrak G_{\mathrm{Lyap}}\) be jointly continuous.
Then the following statements hold.
\begin{enumerate}[label=(\roman*), ref=(\roman*)]
\item\label{lem:outer_generator_properties}
Each \(G^\varepsilon\) belongs to \(\mathfrak G\) and is jointly continuous.
Moreover,
\begin{equation}
\label{eq:outer_generator_monotonicity}
    \mathfrak A_G(x)
    \subseteq\mathfrak A_G^{\varepsilon'}(x)
    \subseteq\mathfrak A_G^\varepsilon(x),
    \qquad
    G(x,U)\le G^{\varepsilon'}(x,U)\le G^\varepsilon(x,U)
\end{equation}
whenever \(0<\varepsilon'\le\varepsilon\).

\item\label{lem:outer_generator_error}
For every \((x,U)\in D\times\mathfrak J\),
\begin{equation}
\label{eq:outer_generator_error_bound}
    0\le G^\varepsilon(x,U)-G(x,U)
    \le\varepsilon\kappa_\phi(x)\|U\|.
\end{equation}
In particular, \(G^\varepsilon\searrow G\) locally uniformly on
\(D\times\{U\in\mathfrak J:\|U\|\le1\}\) as \(\varepsilon\searrow0\), and
\begin{equation}
\label{eq:outer_generator_lyapunov_bound}
    G^\varepsilon\bigl(x,J_x^2\phi\bigr)
    \le(C_\phi+\varepsilon)\phi(x).
\end{equation}
\end{enumerate}
\end{lemma}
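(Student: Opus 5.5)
The plan is to work directly with the support-function representation in Lemma~\ref{lem:support_correspondence_geometry}, since $G^\varepsilon$ is by definition the support function of the enlarged set $\mathfrak A_G^\varepsilon(x)$.

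\textbf{Structure of the enlarged set.} First I would show that $\mathfrak A_G^\varepsilon(x)$ is a nonempty compact convex subset of $\mathfrak K$. Nonemptiness follows from $\mathfrak A_G(x)\subseteq\mathfrak A_G^\varepsilon(x)$. Convexity holds because $V\mapsto d_{\mathfrak J}(V,\mathfrak A_G(x))$ is convex, the set $\mathfrak A_G(x)$ being convex, and because $\mathfrak K$ is a convex cone. Boundedness comes from the local boundedness of $\mathfrak A_G$. The nesting in \eqref{eq:outer_generator_monotonicity} is immediate from the definition, and the inequalities for $G^{\varepsilon'}$ and $G^\varepsilon$ follow by taking maxima of $\ell_V(U)$ over the nested sets together with \eqref{eq:support_function_representation}.

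\textbf{Error bound.} For \eqref{eq:outer_generator_error_bound}, take $V\in\mathfrak A_G^\varepsilon(x)$ and choose $V_0\in\mathfrak A_G(x)$ with $\|V-V_0\|\le\varepsilon\kappa_\phi(x)$. Then
\[
\ell_V(U)\le\ell_{V_0}(U)+\|V-V_0\|\,\|U\|\le G(x,U)+\varepsilon\kappa_\phi(x)\|U\|.
\]
This uses $|\ell_W(U)|\le\|W\|\,\|U\|$ for the pairing, and this Cauchy--Schwarz inequality holds for the chosen norm: the weights $\tfrac12\operatorname{tr}(H^2)$ and $\tfrac12\operatorname{tr}(aH)$ match, since $\tfrac12\operatorname{tr}(aH)\le(\tfrac12\operatorname{tr}a^2)^{1/2}(\tfrac12\operatorname{tr}H^2)^{1/2}$. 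Local uniform convergence then follows because $\kappa_\phi\le\phi$ is locally bounded. For the Lyapunov bound, apply the error bound with $U=J_x^2\phi$ and use $\kappa_\phi(x)\|J_x^2\phi\|\le\phi(x)$, which gives $G^\varepsilon(x,J_x^2\phi)\le C_\phi\phi(x)+\varepsilon\phi(x)$.

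\textbf{Membership of $G^\varepsilon$ in $\mathfrak G$.} Sublinearity holds because $G^\varepsilon$ is a support function. Property \ref{item:G2} holds because every $a\in\mathbb S^+(d)$ gives $\operatorname{tr}(a(H_1-H_2))\ge0$, and property \ref{item:G3} holds because every $c\le0$; both follow since $\mathfrak A_G^\varepsilon(x)\subseteq\mathfrak K$. The jet-Lipschitz bound holds with constant $L_K+\varepsilon\sup_K\kappa_\phi$, because the elements of $\mathfrak A_G^\varepsilon(x)$ have norm at most that constant.

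\textbf{Joint continuity (main obstacle).} The remaining property, joint continuity of $G^\varepsilon$, I expect to be the main obstacle. I would show that $x\mapsto\mathfrak A_G^\varepsilon(x)$ is Hausdorff-continuous; joint continuity then follows from
\[
|G^\varepsilon(x,U)-G^\varepsilon(y,W)|\le d_{H,\mathfrak J}\bigl(\mathfrak A_G^\varepsilon(x),\mathfrak A_G^\varepsilon(y)\bigr)\|U\|+C\|U-W\|.
\]
The argument has three steps:
\begin{enumerate}[label=(\alph*)]
\item Joint continuity of $G$ gives Hausdorff continuity of $x\mapsto\mathfrak A_G(x)$. The Hausdorff distance between compact convex sets equals the sup-norm distance of their support functions on the unit sphere, and $G(\cdot,U)$ is uniformly continuous on compacts times the sphere.
\item The radius $\varepsilon\kappa_\phi(x)$ is continuous in $x$, and the enlargement by a closed ball is Hausdorff-continuous in both the set and the radius.
\item Intersecting with the fixed cone $\mathfrak K$ could break continuity. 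Continuity survives because $\mathfrak A_G(x)\subseteq\mathfrak K$ and the sets are convex. For a point $V$ of the enlarged set at $x$, let $P$ be the nearest point of $\mathfrak A_G(x)$ and $P'$ a point of $\mathfrak A_G(y)$ close to $P$. A convex combination $P'+\lambda(V-P)$ then stays in $\mathfrak K$, since $\mathfrak K$ is convex and both $V$ and $P'$ lie in it, and for suitable $\lambda$ close to $1$ it lies within $\varepsilon\kappa_\phi(y)$ of $\mathfrak A_G(y)$. This gives the required two-sided Hausdorff estimate. Strict positivity of $\kappa_\phi$ is needed here to control $\lambda$.
\end{enumerate}
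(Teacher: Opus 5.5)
Your route is the paper's. You get Hausdorff continuity of $\mathfrak A_G$ from joint continuity of $G$ via the support-function formula, and a Hausdorff estimate for the enlarged sets then gives joint continuity of $G^\varepsilon$. The error bound comes from the nearest-point argument, and the Lyapunov bound from $\kappa_\phi(x)\|J_x^2\phi\|\le\phi(x)$. The other ingredients (convexity and compactness of $\mathfrak A_G^\varepsilon(x)$, the Cauchy--Schwarz check for $\ell_V$, (G2)--(G3) from $\mathfrak A_G^\varepsilon(x)\subseteq\mathfrak K$) are fine.

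The one step that fails as written is (c). The point $P'+\lambda(V-P)$ is not a convex combination of $V$ and $P'$; that would be $P'+\lambda(V-P')$. It can leave $\mathfrak K$ in exactly the regime you need. For example, take $\mathfrak A_G(x)=\{(-1,0,0)\}$, $\mathfrak A_G(y)=\{(-1+\eta,0,0)\}$ with $\eta>0$, and $V=0$. Then $P=(-1,0,0)$, $P'=(-1+\eta,0,0)$, and $P'+\lambda(V-P)$ has $z$-coefficient $-1+\eta+\lambda>0$ whenever $\lambda>1-\eta$.

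The repair is to use a genuine convex combination. Put $r:=\varepsilon\kappa_\phi(x)$, $r':=\varepsilon\kappa_\phi(y)$ and $h:=d_{H,\mathfrak J}(\mathfrak A_G(x),\mathfrak A_G(y))$. By compactness, pick $P'\in\mathfrak A_G(y)$ with $\|V-P'\|\le r+h$. Set
\[
V':=\lambda V+(1-\lambda)P', \qquad \lambda:=\min\{1,r'/(r+h)\}.
\]
Then $V'\in\mathfrak K$ by convexity. Moreover $d_{\mathfrak J}(V',\mathfrak A_G(y))\le\lambda\|V-P'\|\le r'$, so $V'\in\mathfrak A_G^\varepsilon(y)$. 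Finally $\|V-V'\|\le(1-\lambda)(r+h)\le h+|r-r'|$. By symmetry,
\[
d_{H,\mathfrak J}\bigl(\mathfrak A_G^\varepsilon(x),\mathfrak A_G^\varepsilon(y)\bigr)\le d_{H,\mathfrak J}\bigl(\mathfrak A_G(x),\mathfrak A_G(y)\bigr)+\varepsilon|\kappa_\phi(x)-\kappa_\phi(y)|.
\]
This is exactly the estimate the paper asserts without addressing the intersection with $\mathfrak K$. Your worry about the cone is therefore justified, and the corrected step supplies the missing justification. An equivalent fix is to project the translate $P'+(V-P)$ onto $\mathfrak K$ using the nonexpansive metric projection, which fixes $V$ and $P'$, and then shrink toward $P'$. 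In either version no $\lambda$-close-to-$1$ limiting argument is needed, since the bound is quantitative.
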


\begin{proof}
Joint continuity of \(G\) implies that, whenever \(x_j\to x\),
\[
    \sup_{\|U\|\le1}
    \bigl|G(x_j,U)-G(x,U)\bigr|
    \longrightarrow0.
\]
Since \(G(x,\cdot)\) is the support function of \(\mathfrak A_G(x)\),
the support-function formula for the Hausdorff distance
\cite[Theorem~1.8.11]{schneider2014convex} shows that
\(x\mapsto\mathfrak A_G(x)\) is continuous with respect to
\(d_{H,\mathfrak J}\).
Moreover, for every \(x,y\in D\),
\begin{align}
    d_{H,\mathfrak J}\bigl(
        \mathfrak A_G^\varepsilon(x),
        \mathfrak A_G^\varepsilon(y)
    \bigr)
    \le
    d_{H,\mathfrak J}\bigl(
        \mathfrak A_G(x),
        \mathfrak A_G(y)
    \bigr)
    +
    \varepsilon
    \bigl|\kappa_\phi(x)-\kappa_\phi(y)\bigr|.
\end{align}
Consequently, the Hausdorff continuity of \(\mathfrak A_G\), together
with the continuity of \(\kappa_\phi\), implies that
\(\mathfrak A_G^\varepsilon\) is Hausdorff continuous.
Hence \(G^\varepsilon\) is jointly continuous, and
\(G^\varepsilon(x,\cdot)\) is sublinear with support correspondence
\(\mathfrak A_G^\varepsilon\).
Since this support is contained in \(\mathfrak K\),
Conditions~\ref{item:G2} and \ref{item:G3} follow from the same
support-function argument as in
Lemma~\ref{lem:support_correspondence_geometry}.
This proves part~\ref{lem:outer_generator_properties}.

For \(V\in\mathfrak A_G^\varepsilon(x)\), choose
\(W\in\mathfrak A_G(x)\) such that
\(\|V-W\|\le\varepsilon\kappa_\phi(x)\). Then
\[
    \ell_V(U)
    \le
    \ell_W(U)+\varepsilon\kappa_\phi(x)\|U\|
    \le
    G(x,U)+\varepsilon\kappa_\phi(x)\|U\|.
\]
Taking the maximum over \(V\) and using
\(\mathfrak A_G(x)\subseteq\mathfrak A_G^\varepsilon(x)\) proves
\eqref{eq:outer_generator_error_bound}.
Since
\(\kappa_\phi(x)\|J_x^2\phi\|\le\phi(x)\),
\eqref{eq:outer_generator_lyapunov_bound} follows from
Assumption~\ref{assume:lyapunov}.
This completes the proof.
\end{proof}

\subsection{Diagonal stochastic approximation}
The following lemma, which is the main step in the proof of Theorem~\ref{prop:one_step_usc_subsolution_realization}, simultaneously approximates the subsolution \(u\) and the source term \(q\) by smooth functions with vanishing integrated residual and continuous source terms, respectively, while preserving the required realization properties and yielding a realization inequality.

For a locally bounded Borel map $\zeta=(c,b,a):[0,\infty)\times D\to\mathfrak K$, we denote by \(\beta^\zeta\) the induced Markovian coefficient field on \(\widehat\Omega\), defined by
\[
    \beta^\zeta(t,\omega)
    :=
    \begin{cases}
        \zeta\bigl(t,X_t(\omega)\bigr),
        & t<\tau_{\mathrm{exp}}(\omega),\\
        0,
        & t\ge\tau_{\mathrm{exp}}(\omega).
    \end{cases}
\]
Under the standing extension convention from \(\widehat\Omega\) to \(\widetilde\Omega\), and after setting \(\zeta(t,\triangle):=0\), this can equivalently be written as $\beta^\zeta_t=\zeta(t,X_t)\mathbb I_{\{t<\tau_\infty\}}$.

\begin{lemma}
\label{lem:usc_diagonal_stochastic_approximation}
Let \(G\in\mathfrak G_{\mathrm{Lyap}}\).
Let \(T>0\), and let \(u,q\in\USC_b([0,T+1]\times D)\).
Suppose that \(u\) is a viscosity subsolution of
\eqref{eq:usc_subsolution_with_source} on \((0,T+1)\times D\).
Fix \(m\ge1\) and \(x\in D_m\).
Then there exist
\[
    t_n\searrow0,
    \qquad
    v_n\in C^\infty([t_n,T]\times\overline D_{m+1}),
    \qquad
    q_n\in
    C([t_n,T]\times\overline D_{m+1}),
\]
and laws \(\mathbb P_n\in\mathfrak M\) with the following properties.

\begin{enumerate}[label=(\roman*), ref=(\roman*)]
\item\label{lem:usc_diagonal_compactness}
The sequence \((\mathbb P_n)_{n\ge1}\) is relatively weakly compact,
and every weak cluster point belongs to
\(\mathcal P_x(G)\).

\item\label{lem:usc_diagonal_one_step_inequality}
For every open set \(O\) satisfying \(x\in O\Subset D_m\), define $\sigma_n^{O}:=(T-t_n)\wedge\rho_{O}^{\,0}$.
Then
\begin{equation}
\label{eq:usc_diagonal_one_step_inequality}
\begin{aligned}
    v_n(T,x)
    &\le
    \mathbb E^{\mathbb P_n}\!\Bigg[
        v_n\bigl(T-\sigma_n^{O},X_{\sigma_n^{O}}\bigr)
        \mathbb I_{\{\sigma_n^{O}<\tau_\infty\}}\\
    &\hspace{3em}+\int_0^{\sigma_n^{O}}
            q_n(T-s,X_s)\mathbb I_{\{s<\tau_\infty\}}\,ds
    \Bigg]
    +\frac{1+T}{n}.
\end{aligned}
\end{equation}

\item\label{lem:usc_diagonal_value_convergence}
One has
\begin{equation}
\label{eq:usc_diagonal_prescribed_point_convergence}
    v_n(T,x)\longrightarrow u(T,x),
    \qquad
    \sup_{n\ge1}
    \|v_n\|_{L^\infty([t_n,T]\times D_{m+1})}
    \le\|u\|_\infty,
\end{equation}
and
\begin{equation}
\label{eq:usc_diagonal_value_upper_limit}
    \limsup_{n\to\infty}^{*}v_n
    \le u
    \qquad\text{on }[0,T]\times D_m.
\end{equation}

\item\label{lem:usc_diagonal_source_convergence}
The functions \(q_n\) satisfy
\begin{equation}
\label{eq:usc_diagonal_source_bound}
    \sup_{n\ge1}
    \|q_n\|_{L^\infty(
        [t_n,T]\times\overline D_m)}
    <\infty,
\end{equation}
and
\begin{equation}
\label{eq:usc_diagonal_source_upper_limit}
    \limsup_{n\to\infty}^{*}q_n
    \le q
    \qquad\text{on }[0,T]\times D_m.
\end{equation}
\end{enumerate}
\end{lemma}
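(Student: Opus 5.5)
The plan is to build everything through a diagonal sequence indexed by three small parameters. These are a continuous outer approximation of $G$, a space--time regularization of $u$ and $q$, and a nondegenerate perturbation that makes the linear martingale problems solvable. Laws are produced by Markovian selections that are exact maximizers at the regularized jet.

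\emph{Step 1 (continuous upper approximation of $G$).} I first produce jointly continuous $G_k\in\mathfrak G$ with $G_k\searrow G$ pointwise. The natural candidate is the support correspondence
\[
\mathfrak A_{G_k}(x):=\overline{\operatorname{co}}\bigcup_{|y-x|\le 1/k}\mathfrak A_G(y),
\]
smoothed in $x$ by a Hausdorff-continuous enlargement of the kind used in Lemma~\ref{lem:monotone_outer_generator_regularization}. Closedness of $\operatorname{Gr}(\mathfrak A_G)$ and local boundedness (Lemma~\ref{lem:support_correspondence_geometry}) give $\limsup_k^* G_k\le G$, by joint upper semicontinuity. To keep a Lyapunov bound uniform in $k$ (up to $C_\phi+1$), I would cap the enlargement by $\kappa_\phi$ as in \eqref{eq:def_outer_support_regularization}. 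I would then pass to $G_k^{\varepsilon}$, which is still continuous and above $G$. Since $G\le G_k^\varepsilon$, the function $u$ is automatically a viscosity subsolution of \eqref{eq:usc_subsolution_with_source} with $G$ replaced by $G_k^\varepsilon$, so all further regularization can be done for a continuous convex Hamiltonian. Finally I add $\varepsilon I_d$ to the diffusion part on $\overline D_{m+1}$ to obtain uniform ellipticity there.

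\emph{Step 2 (regularizing $u$ and $q$).} Take the space--time sup-convolution $u^\delta$ of $u$ on $[t_n,T]\times\overline D_{m+1}$, where $t_n$ is small and chosen after $\delta$. Standard arguments, using continuity of $G_k^\varepsilon$, make $u^\delta$ a semiconvex Lipschitz subsolution with source $q^\delta$, the sup-convolution of $q$, up to an error vanishing with $\delta$. Next mollify: $v:=u^\delta*\eta_\gamma$. Because $G_k^\varepsilon(x,\cdot)$ is convex (sublinear) and continuous in $x$, Jensen applied to the a.e.\ (Alexandrov) second-order inequality gives
\[
\partial_t v-G_k^\varepsilon(x,J^2v)\le q^\delta*\eta_\gamma+e_\gamma,
\]
with $e_\gamma\to0$ uniformly on the compact set. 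I set $q_n:=q^\delta*\eta_\gamma+e_\gamma$ plus the ellipticity correction $\varepsilon\|\nabla^2 v\|$, with $\varepsilon$ chosen last so that this correction is at most $1/n$. This gives continuous $q_n$ with the bound \eqref{eq:usc_diagonal_source_bound} and $\limsup^*_n q_n\le q$, since sup-convolutions of usc functions decrease to them. The properties $v_n\le\sup u^\delta\le\|u\|_\infty$, $\limsup^*v_n\le u$ and $v_n(T,x)\to u(T,x)$ follow from the properties of the sup-convolution at the point $(T,x)$.

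\emph{Step 3 (laws).} Using a Borel selector (Proposition~D.1), I pick a Markov coefficient $\zeta_n(s,y)\in\mathfrak A_{G_n}(y)$ that attains $\ell_{\zeta_n}(J_y^2v_n(T-s,\cdot))=G_n(y,J^2_yv_n)$ for $y\in\overline D_{m+1}$, and take any compatible bounded selection elsewhere. Here $G_n$ is the continuous elliptic specification from Step 1. Krylov's existence theorem for bounded measurable nondegenerate coefficients, combined with the de-killing correspondence of Lemma~\ref{thm:gmp_virtualization_correspondence}, yields $\mathbb P_n\in\widetilde{\mathcal P}_x(L^{\beta^{\zeta_n}})$. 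Applying It\^o's formula to $v_n(T-s,X_s)\mathbb I_{\{s<\tau_\infty\}}$ up to $\sigma_n^O$, and using the differential inequality from Step 2, gives \eqref{eq:usc_diagonal_one_step_inequality}. The residual is controlled by $(1+T)/n$ once it is made uniformly smaller than $1/n$ in sup norm.

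\emph{Step 4 (compactness and identification of limits).} The uniform Lyapunov bound $G_n(x,J^2\phi)\le(C_\phi+1)\phi$ gives tightness (Proposition~E.5). For a cluster point, I apply Proposition~E.10 with specifications $q_j=G_n(\cdot,J^2f)$ satisfying $\limsup^*_n G_n(\cdot,J^2f)\le G(\cdot,J^2f)$, together with the locally uniform killing bounds. This shows that every cluster point lies in $\mathcal P_x(G)$.

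I expect the main obstacle to be Step 1 combined with Step 2: producing continuous outer approximations that decrease to a merely upper semicontinuous $G$, preserve \ref{item:G1}--\ref{item:G3} and a uniform Lyapunov bound, and still have limits identified correctly by Proposition~E.10. The order of the diagonal choices ($k$, then $\delta,\gamma$, then $\varepsilon$, then $t_n$) must make every error at most $1/n$ while keeping $v_n(T,x)\to u(T,x)$.
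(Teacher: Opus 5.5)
Your architecture is the paper's: a continuous outer approximation of $G$, space--time sup-convolution, mollification, an ellipticity floor, selected coefficient fields, It\^o's formula, tightness from a common Lyapunov bound, and identification of cluster points through Proposition~E.10. However, Step~2 contains a genuine error, and Step~3 rests on it.

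\emph{The Jensen step runs the wrong way.} Almost everywhere $\partial_tu^\delta\le G(x,J_x^2u^\delta)+q^\delta$, so averaging bounds $\partial_tv$ by $\bigl(G(\cdot,J^2u^\delta)\bigr)*\eta_\gamma+q^\delta*\eta_\gamma$. To conclude you would need $\bigl(G(\cdot,J^2u^\delta)\bigr)*\eta_\gamma\le G(x,J_x^2v)+o(1)$. Convexity of $G(x,\cdot)$ gives the opposite inequality, $G\bigl(x,(J^2u^\delta)*\eta_\gamma\bigr)\le\bigl(G(x,J^2u^\delta)\bigr)*\eta_\gamma$. The positive singular part of $D_x^2u^\delta$ helps through (G2), but the averaging of $\partial_tu^\delta$ and $\nabla u^\delta$ is uncompensated.

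For a concrete counterexample, take $d=1$ and $G(x,z,p,H)=|p|$, which lies in $\mathfrak G_{\mathrm{Lyap}}$ with $\phi(x)=1+x^2$. Take the bounded subsolution $\psi(t+|x|)$, the maximum of two classical subsolutions, with $\psi$ smooth, bounded, nondecreasing and equal to the identity near the relevant values. Near $x=0$ its sup-convolution is $t+|x|+\delta$. Every spatially symmetric mollification then gives $\partial_tv-|\partial_xv|=1$ at $x=0$, for all small $\gamma$. In your ordering, $\varepsilon$ is chosen after $\gamma$ with $\varepsilon\|\nabla^2v\|_\infty\le1/n$, so the enlargement and ellipticity terms change this by $O(1/n)$ only.

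So in general the positive residual tends to zero only pointwise a.e.\ and in $L^p$, not uniformly. Absorbing it into $q_n$ destroys $\limsup_{n\to\infty}^{*}q_n\le q$. The error $(1+T)/n$ in the one-step inequality therefore does not follow from a sup-norm bound.

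\emph{What is missing is an occupation-time bound for the laws.} The paper keeps $e_n\ge0$ small only in $L^{d+1}$ and uses Krylov's estimate (Lemma~F.2): $\mathbb E^{\mathbb P_n}\bigl[\int_0^{(T-t_n)\wedge\tau_m}e_n(T-s,X_s)\mathbb I_{\{s<\tau_\infty\}}\,ds\bigr]\le N\|e_n\|_{L^{d+1}}$. This needs a uniform ellipticity floor $\lambda$ and coefficient bounds, and $N$ degenerates as $\lambda\to0$. That forces the reverse of your ordering. The elliptic operator, including $\lambda$ and hence $N$, must be fixed first, and the mollification scale is chosen last so that $N\|e^+\|_{L^{d+1}}$ is small. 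Then $\lambda\|\nabla^2v\|_\infty$ is no longer small, so your ellipticity correction fails. The paper instead uses only the one-sided bound $D_x^2u^\delta\ge-\delta^{-1}I_d$, which survives mollification, and shifts the source by the constant $d\lambda/(2\delta)$ with $\lambda=\delta^2$.

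Three further repairs are needed.
\begin{itemize}
\item Weak existence is not guaranteed for a Borel exact-maximizer field that is nondegenerate only on $\overline D_{m+1}$ and arbitrary elsewhere. Krylov's measurable-coefficient theorem needs nondegeneracy everywhere, and degenerate discontinuous drifts may admit no solution. The paper uses continuous near-maximizing fields built from a partition of unity and a Michael selection away from $\overline D_m$, with existence from Lemma~F.6.
\item The sup-convolution is not a subsolution for $G$ up to a uniformly small error, because $J^2u^\delta$ is unbounded above. The paper makes it an exact subsolution of $\max_{|y-x|\le r_\delta}G(y,\cdot)$ with a correspondingly enlarged source.
\item In Step~1, a uniform hull radius $1/k$ can break a common Lyapunov bound on unbounded $D$; the $\kappa_\phi$-cap controls only the jet enlargement. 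This is why Proposition~F.1 uses locally adapted neighbourhoods.
\end{itemize}
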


\begin{proof}
Set $E_T:=[0,T+1]\times D$ and $d\bigl((s,y),(t,z)\bigr):=1\wedge\bigl(|s-t|+|y-z|\bigr)$.
We first construct a sequence \(\bigl(q^{(n)},G_n\bigr)_{n\ge1}\) of continuous outer approximations to \((q,G)\).
For each \(n\ge1\), define
\[
    q^{(n)}(\xi)
    :=
    \sup_{\eta\in E_T}
    \bigl\{q(\eta)-n d(\xi,\eta)\bigr\},
    \qquad \xi\in E_T.
\]
Then \(q^{(n)}\in C_b(E_T)\), \(q^{(n)}\searrow q\) pointwise on $E_T$, and $\sup_{n\ge1}\|q^{(n)}\|_\infty\le\|q\|_\infty$.
On the other hand, since \(G\in\mathfrak G\), it admits a sequence \((G_n)_{n\ge1}\) of continuous outer approximations to \(G\); see Proposition~
F.1 in the Supplementary Material.
Each \(G_n\) belongs to \(\mathfrak G\) and is jointly continuous. 
Moreover,
\begin{align}\label{eq:G_n_lyapunov_bound}
    G_n\ge G,
    \qquad
    G_n\bigl(x,J_x^2\phi\bigr)
    \le (C_\phi+n^{-1})\phi(x),
    \qquad x\in D,
\end{align}
and
\begin{equation}
\label{eq:G_n_upperconsistency_G}
    \limsup_{n\to\infty}^{*}G_n
    \le G
    \qquad\text{on }D\times\mathfrak J.
\end{equation}
Since $q^{(n)}\ge q$ and $G_n\ge G$, the function $u$ is a viscosity subsolution of
\begin{align}\label{eq:viscositysubsolution_hjb_widehatG}
    \partial_tu
    -G_n\bigl(x,J_x^2u(t,\cdot)\bigr)
    -q^{(n)}(t,x)
    =0
\end{align}
on \((0,T+1)\times D\).

Fix \(n\ge1\).  For \(\xi\in E_T\), set
\[
    u^{\langle n\rangle}(\xi)
    :=
    \sup\bigl\{
        u(\eta):
        d(\xi,\eta)\le n^{-1}
    \bigr\}.
\]
Apply Propositions~
F.3 and~
F.5 in the Supplementary Material with data \((G_n,q^{(n)},u)\) and
Lyapunov pair \((\phi,C_\phi+n^{-1})\).  Then there exist $\lambda_{n,j}>0$, $t_{n,j},\varepsilon_{n,j}\searrow0$, together with
\[
\begin{aligned}
    &v_{n,j}
    \in
    C^\infty\bigl(
        [t_{n,j},T]\times\overline D_{m+1}
    \bigr), \qquad q_{n,j},e_{n,j}
    \in
    C\bigl(
        [t_{n,j},T]\times\overline D_{m+1}
    \bigr),
    \qquad e_{n,j}\ge0,\\
    &\zeta_{n,j}
    \in
    C\bigl(
        [0,\infty)\times D;\mathfrak K
    \bigr),
    \qquad \mathbb P_{n,j}
    \in
    \widetilde{\mathcal P}_{x}
        \bigl(L^{\beta^{\zeta_{n,j}}}\bigr),
\end{aligned}
\]
such that the following properties hold:
\begin{enumerate}[
    label=\textup{(\roman*)},
    ref=\textup{(\roman*)},
    leftmargin=*
]

\item \emph{Approximation of \(u\).}
The smooth functions \(v_{n,j}\) converge to \(u\) in the
upper-relaxed sense:
\begin{equation}
\label{eq:fixed_n_value_upper_relaxed_convergence}
    \limsup_{j\to\infty}^{*}v_{n,j}
    \le u
    \qquad\text{on }[0,T]\times D_{m+1}.
\end{equation}
Moreover, $v_{n,j}(t,y)\to u(t,y)$ for every $(t,y)\in(0,T]\times D_{m+1}$, and
\begin{equation}
\label{eq:fixed_n_value_uniform_bound}
    \sup_{j\ge1}
    \left\|
        v_{n,j}
    \right\|_{L^\infty(
        [t_{n,j},T]\times\overline D_{m+1}
    )}
    \le
    \|u\|_\infty.
\end{equation}
By \eqref{eq:fixed_n_value_upper_relaxed_convergence} and the
compactness of \([0,T]\times\overline D_m\), there exists
\(j_n\ge1\) such that, for every \(j\ge j_n\),
\begin{equation}
\label{eq:fixed_n_value_outer_bound}
    v_{n,j}(t,y)
    \le
    u^{\langle n\rangle}(t,y)+n^{-1},
    \qquad
    (t,y)\in
    [t_{n,j},T]\times\overline D_m.
\end{equation}

\item \emph{Approximation of the source term.}
The continuous functions \(q_{n,j}\) satisfy
\begin{equation}
\label{eq:fixed_n_source_convergence}
    \sup_{(t,y)\in
        [t_{n,j},T]\times\overline D_m}
    \bigl|
        q_{n,j}(t,y)-q^{(n)}(t,y)
    \bigr|
    \longrightarrow0.
\end{equation}

\item \emph{Compatibility and bounds of the coefficient fields.}
For every \((s,y)\in[0,\infty)\times D\),
\begin{equation}
\label{eq:fixed_n_outer_support_membership}
    \zeta_{n,j}(s,y)
    \in
    \mathfrak A_{G_n}^{\varepsilon_{n,j}}(y).
\end{equation}
Consequently, $\mathbb P_{n,j}\in\widetilde{\mathcal P}_{x}\bigl(L^{\beta^{\zeta_{n,j}}}\bigr)\subseteq\mathcal P_{x}\bigl((G_n)^{\varepsilon_{n,j}}\bigr)$.
Writing $\zeta_{n,j}=(c_{n,j},b_{n,j},a_{n,j})$, the covariance coordinates satisfy
\begin{equation}
\label{eq:fixed_n_covariance_floor}
    a_{n,j}(s,y)
    \ge
    \lambda_{n,j}I_d,
    \qquad
    (s,y)\in[0,T]\times\overline D_m.
\end{equation}
By \eqref{eq:fixed_n_outer_support_membership},
\(\varepsilon_{n,j}\le\varepsilon_{n,1}\), and
Lemma~\ref{lem:monotone_outer_generator_regularization}, these fields
are spatially locally bounded uniformly in \(j\) and time, and satisfy
\[
    \ell_{\zeta_{n,j}(s,y)}(J_y^2\phi)
    \le (C_\phi+n^{-1}+\varepsilon_{n,j})\phi(y).
\]

\item \emph{Linear residual inequality.}
For every $(s,y)\in[0,T-t_{n,j}]\times\overline D_m$, one has
\begin{align}
\label{eq:fixed_n_linear_residual_inequality}
    \partial_t v_{n,j}(T-s,y)
    -
    \ell_{\zeta_{n,j}(s,y)}
        \bigl(
            J_y^2v_{n,j}(T-s,\cdot)
        \bigr)
    -
    q_{n,j}(T-s,y)
    \le
    e_{n,j}(T-s,y)+j^{-1}.
\end{align}

\item \emph{Vanishing of the integrated residual.}
Let \(N_{n,j}\) denote the constant \(N_j\) in
Proposition~
F.3(ii), applied with index \(m+1\) and data \((G_n,q^{(n)},u)\).
The coefficient bounds in Proposition~
F.5(iii) allow Lemma~
F.2 of the Supplementary Material to be applied to
\(g(s,y)=e_{n,j}(T-s,y)\mathbb I_{[0,T-t_{n,j}]}(s)\), extended by zero.
Consequently,
\begin{equation}
\label{eq:fixed_n_integrated_residual_vanishing}
    \mathbb E^{\mathbb P_{n,j}}\!\left[
        \int_0^{(T-t_{n,j})\wedge\tau_m}
            e_{n,j}(T-s,X_s)\mathbb I_{\{s<\tau_\infty\}}\,ds
    \right]
    \le N_{n,j}
    \|e_{n,j}\|_{L^{d+1}([t_{n,j},T]\times D_{m+1})}
    \longrightarrow0
\end{equation}
as $j\to\infty$.
\end{enumerate}

We may therefore choose \(j(n)\ge n\) recursively sufficiently large
such that, upon setting
\[
\begin{aligned}
    t_n&:=t_{n,j(n)},&
    \varepsilon_n&:=\varepsilon_{n,j(n)},&
    \lambda_n&:=\lambda_{n,j(n)},\\
    v_n&:=v_{n,j(n)},&
    q_n&:=q_{n,j(n)},&
    e_n&:=e_{n,j(n)},\\
    \zeta_n&:=\zeta_{n,j(n)},&
    \mathbb P_n&:=\mathbb P_{n,j(n)},
\end{aligned}
\]
writing $\zeta_n=(c_n,b_n,a_n)$, the following properties hold:
\begin{gather}
    t_1\le\min\{1,T/2\},
    \qquad
    t_n
    \le
    \min\left\{
        n^{-1},
        \tfrac12 t_{n-1}
    \right\},
    \quad n\ge2,
    \qquad
    \varepsilon_n\le n^{-1},
    \label{eq:usc_diagonal_parameter_choice}\\
    \begin{aligned}
    &\zeta_n(s,y)
    \in
    \mathfrak A_{G_n}^{\varepsilon_n}(y),
    \qquad
    (s,y)\in[0,\infty)\times D,\\
    &\mathbb P_n
    \in
    \widetilde{\mathcal P}_{x}
        \bigl(L^{\beta^{\zeta_n}}\bigr)
    \subseteq
    \mathcal P_x\bigl((G_n)^{\varepsilon_n}\bigr),
    \end{aligned}
    \label{eq:usc_diagonal_lyapunov_choice}\\
    \bigl|v_n(T,x)-u(T,x)\bigr|
    \le n^{-1},
    \qquad
    \sup_{(t,y)\in[t_n,T]\times\overline D_m}
        \bigl|q_n(t,y)-q^{(n)}(t,y)\bigr|
    \le n^{-1},
    \label{eq:usc_diagonal_data_choice}\\
    \begin{aligned}
        &\partial_t v_n(T-s,y)
        -
        \ell_{\zeta_n(s,y)}
            \bigl(
                J_y^2v_n(T-s,\cdot)
            \bigr)
        -
        q_n(T-s,y)                                      \\
        &\hspace{12em}
        \le
        e_n(T-s,y)+n^{-1},
        \qquad
        (s,y)\in[0,T-t_n]\times\overline D_m,
    \end{aligned}
    \label{eq:usc_diagonal_linear_residual_choice}\\
    \mathbb E^{\mathbb P_n}\!\left[
        \int_0^{(T-t_n)\wedge\tau_m}
            e_n(T-s,X_s)\mathbb I_{\{s<\tau_\infty\}}\,ds
    \right]
    \le n^{-1},
    \label{eq:usc_diagonal_residual_choice}\\
    \|v_n\|_{L^\infty([t_n,T]\times D_{m+1})}
    \le\|u\|_\infty, \qquad
    v_n(t,y)
    \le
    u^{\langle n\rangle}(t,y)+n^{-1},
    \qquad
    (t,y)\in
    [t_n,T]\times\overline D_m.
    \label{eq:usc_diagonal_value_choice}
\end{gather}
In particular, \(t_n\searrow0\).

We next prove part~\ref{lem:usc_diagonal_compactness}.
For each \(n\ge1\), set $H_n:=(G_n)^{\varepsilon_n}$ and $\mathfrak B_n(y):=\mathfrak A_{G_n}^{\varepsilon_n}(y)$.
Then by \eqref{eq:usc_diagonal_lyapunov_choice}, $\mathbb P_n\in\mathcal P_x(H_n)$.
Moreover, by Lemma~\ref{lem:monotone_outer_generator_regularization}, \(H_n\) is a jointly continuous generating function with compact convex
support correspondence \(\mathfrak B_n\), and
\begin{equation}
\label{eq:H_n_G_n_error}
    0
    \le
    H_n(y,U)-G_n(y,U)
    \le
    \varepsilon_n\kappa_\phi(y)\|U\|,
    \qquad
    (y,U)\in D\times\mathfrak J.
\end{equation}

The locally uniform support bound in
Proposition~
F.1(ii) of the Supplementary Material, together with
Lemma~\ref{lem:monotone_outer_generator_regularization} and
\(\varepsilon_n\le1\), implies that
\((\mathfrak B_n)_{n\ge1}\) is locally uniformly bounded.
Since
\(\zeta_n(s,y)\in\mathfrak B_n(y)\), the coefficient fields
\((\beta^{\zeta_n})_{n\ge1}\) are therefore uniformly bounded before $\tau_k$ for any $k\ge1$. Furthermore, by \eqref{eq:G_n_lyapunov_bound}, \eqref{eq:usc_diagonal_lyapunov_choice}, \eqref{eq:H_n_G_n_error}, and
\(\varepsilon_n\le n^{-1}\), one has, on
\(\{s<\tau_\infty\}\),
\[
\begin{aligned}
    L^{\beta^{\zeta_n}}
        \bigl(s,\omega,J_{X_s}^2\phi\bigr)
    &=
    \ell_{\zeta_n(s,X_s)}
        \bigl(J_{X_s}^2\phi\bigr)                                     \le
    H_n\bigl(X_s,J_{X_s}^2\phi\bigr)                                   \\
    &\le
    \bigl(C_\phi+n^{-1}+\varepsilon_n\bigr)\phi(X_s)
    \le
    (C_\phi+2)\phi(X_s).
\end{aligned}
\]
Thus by Proposition~
E.5 in the Supplementary Material, \((\mathbb P_n)_{n\ge1}\) is tight.

Let \(\mathbb P\) be a weak cluster point of
\((\mathbb P_n)_{n\ge1}\). Passing to a subsequence, we may assume that
\(\mathbb P_n\Rightarrow\mathbb P\). We show that
\(\mathbb P\in\mathcal P_x(G)\). By
Definition~\ref{def:G_supermartingale_problem}, it suffices to verify
that \(\mathbb P\) solves the generalized \(G\)-supermartingale problem
starting from \((0,x)\).

For every \(n\), \(\mathbb P_n\in\mathcal P_x(H_n)\), and hence \(\mathbb P_n(X_0=x)=1\). 
Since the time-zero evaluation map is continuous on the canonical path space, weak convergence gives \(\mathbb P(X_0=x)=1\).
On the other hand, Lemma~\ref{lem:monotone_outer_generator_regularization}, \eqref{eq:H_n_G_n_error}, and \eqref{eq:G_n_upperconsistency_G} imply
\begin{equation}
\label{eq:H_n_upperconsistency_G}
    \limsup_{n\to\infty}^{*}H_n
    \le G
    \qquad\text{on }D\times\mathfrak J.
\end{equation}
Indeed, if \((y_n,U_n)\to(y,U)\), then
\[
    \limsup_{n\to\infty}H_n(y_n,U_n)
    \le
    \limsup_{n\to\infty}
    \Bigl[
        G_n(y_n,U_n)
        +
        \varepsilon_n\kappa_\phi(y_n)\|U_n\|
    \Bigr]\le G(y,U),
\]
because the second term converges to zero.
Fix \(R>0\), \(\ell\ge1\), and \(f\in C_b^\infty(D)\), and define
\[
    q_n^f(s,y)
    :=H_n\bigl(y,J_y^2f\bigr),
    \qquad
    q^f(s,y)
    :=G\bigl(y,J_y^2f\bigr),
    \qquad
    (s,y)\in[0,R]\times D.
\]
Since \(y\mapsto J_y^2f\) is continuous,
\eqref{eq:H_n_upperconsistency_G} yields
\begin{equation}
\label{eq:H_n_test_source_upperconsistency_G}
    \limsup_{n\to\infty}^{*}q_n^f
    \le q^f
    \qquad\text{on }[0,R]\times D.
\end{equation}
The locally uniform boundedness of \((\mathfrak B_n)_{n\ge1}\) also gives
\[
    \sup_{n\ge1}
    \|q_n^f\|_{L^\infty([0,R]\times\overline D_\ell)}
    <\infty.
\]
Set \(\kappa_r^n:=k^{\beta^{\zeta_n}}(r,X)\).
The local-characteristic identification in the proof of
Proposition~\ref{prop:recovery_aggregation_generator_compatible_characteristics}
applies to the fields $\beta^{\zeta_n}$. Thus the processes
\[
    N_s-\int_0^s\kappa_r^n\mathbb I_{\{r<\tau_\infty\}}\,dr,\qquad s\ge0,
\]
are \(\mathbb P_n\)-local martingales. The local coefficient bounds
above also show that the densities \(\kappa^n\mathbb I_{\{\cdot<\tau_\infty\}}\) are uniformly
essentially bounded on finite time intervals while
\(\overline X\) remains in a fixed compact subset of \(D\).
Since \(\mathbb P_n\in\mathcal P_x(H_n)\), the process
\[
    s\longmapsto
    f\bigl(X_{[s]_{0,\tau_\ell}}\bigr)
    \mathbb I_{\{[s]_{0,\tau_\ell}<\tau_\infty\}}
    -
    \int_0^{[s]_{0,\tau_\ell}}
        H_n\bigl(X_r,J_{X_r}^2f\bigr)\mathbb I_{\{r<\tau_\infty\}}\,dr,
    \qquad s\ge0,
\]
is a \(\mathbb P_n\)-supermartingale.
Hence, by \eqref{eq:H_n_test_source_upperconsistency_G} and the
moving-source assertion of Proposition~
E.10 in the Supplementary Material,
applied with the preceding \(\kappa^n\),
$\widetilde M^{f,\ell}$ is a \(\mathbb P\)-supermartingale on
\([0,R]\).
Since \(R>0\), \(\ell\ge1\), and \(f\in C_b^\infty(D)\) were arbitrary, \(\mathbb P\) solves the generalized \(G\)-supermartingale problem.
Therefore, Definition~\ref{def:G_supermartingale_problem} gives
$\mathbb P\in\mathcal P_x(G)$.
This proves part~\ref{lem:usc_diagonal_compactness}.

Fix now an open set \(O\) with \(x\in O\Subset D_m\).
Then \(\sigma_n^O\le(T-t_n)\wedge\tau_m\).
Using a smooth bounded extension of \((s,y)\mapsto v_n(T-s,y)\) from a neighborhood of \([0,T-t_n]\times\overline D_m\), by Proposition~
E.2, we have
\begin{align*}
    v_n(T,x)
    =
    \mathbb E^{\mathbb P_n}\!\Bigg[
        &v_n\bigl(T-\sigma_n^O,X_{\sigma_n^O}\bigr)
        \mathbb I_{\{\sigma_n^O<\tau_\infty\}}\\
        &+\int_0^{\sigma_n^O}
        \Bigl(
            \partial_tv_n(T-s,X_s)
            -\ell_{\zeta_n(s,X_s)}
                \bigl(J_{X_s}^2v_n(T-s,\cdot)\bigr)
        \Bigr)\mathbb I_{\{s<\tau_\infty\}}\,ds
    \Bigg].
\end{align*}
Combining the preceding identity with \eqref{eq:usc_diagonal_linear_residual_choice} and \eqref{eq:usc_diagonal_residual_choice} yields part~\ref{lem:usc_diagonal_one_step_inequality}.

The first assertion of part~\ref{lem:usc_diagonal_value_convergence} follows from \eqref{eq:usc_diagonal_data_choice} and \eqref{eq:usc_diagonal_value_choice}.
Let
\[
    (s_n,y_n)
    \in[t_n,T]\times\overline D_m,
    \qquad
    (s_n,y_n)\to(t,y).
\]
Choose \(\xi_n\in E_T\) such that
\[
    d\bigl(\xi_n,(s_n,y_n)\bigr)\le n^{-1},
    \qquad
    u^{\langle n\rangle}(s_n,y_n)
    \le u(\xi_n)+n^{-1}.
\]
Then \(\xi_n\to(t,y)\), and
\eqref{eq:usc_diagonal_value_choice} gives
\[
    v_n(s_n,y_n)
    \le u(\xi_n)+2n^{-1}.
\]
Upper semicontinuity of \(u\) proves
\eqref{eq:usc_diagonal_value_upper_limit}.

Finally, we verify part~ \ref{lem:usc_diagonal_source_convergence}.
By \eqref{eq:usc_diagonal_data_choice}, we have
\[
    \sup_{n\ge1}
    \|q_n\|_{L^\infty(
        [t_n,T]\times\overline D_m)}
    \le\|q\|_\infty+1,
\]
which proves \eqref{eq:usc_diagonal_source_bound}.
Since \(q^{(n)}\searrow q\) and each \(q^{(n)}\) is continuous,
the monotone upper-half-relaxed-limit argument used to verify \ref{V3}
in the proof of
Proposition~\ref{prop:stable_virtual_structure_generates_valuation}
gives \(\limsup_{n\to\infty}^{*}q^{(n)}\le q\) on \(E_T\).
Together with the uniform source error in
\eqref{eq:usc_diagonal_data_choice}, this proves
\eqref{eq:usc_diagonal_source_upper_limit}.
This proves part~\ref{lem:usc_diagonal_source_convergence} and completes the
proof.
\end{proof}

\subsection{Stopped realization}

Now we prove Theorem~\ref{prop:one_step_usc_subsolution_realization}.
For \(h>0\), an open set \(O\Subset D\), \(a\in[0,h)\), and
\(\widetilde\omega\in\widetilde\Omega\), set $\sigma_{h,O}^{a}(\widetilde\omega):=(h-a)\wedge\rho_O^{\,0}(\widetilde\omega)$.
Define the stopped-endpoint map $\mathsf{End}_{h,O}:\widetilde\Omega\times[0,h)\to[0,h]\times\widehat D\times\{0,1\}$ by
\[
\begin{aligned}
    \mathsf{End}_{h,O}(\widetilde\omega,a)
    :=
    \Bigl(
        \sigma_{h,O}^{a}(\widetilde\omega),
        X_{\sigma_{h,O}^{a}(\widetilde\omega)}(\widetilde\omega),
        \mathbb I_{\{\sigma_{h,O}^{a}(\widetilde\omega)<\tau_\infty(\widetilde\omega)\}}
    \Bigr).
\end{aligned}
\]
Given \(\mathbb P\in\mathfrak M\) and \(h>0\), a bounded open set
\(O\Subset D\) is called \emph{\((\mathbb P,h)\)-regular} if
\(\mathsf{End}_{h,O}\) is continuous at \((\widetilde\omega,0)\)
for \(\mathbb P\)-almost every \(\widetilde\omega\), where
\(\widetilde\Omega\times[0,h)\) carries the product of the \(J_1\)
and Euclidean topologies. An increasing sequence of open sets
\((O_k)_{k\ge1}\) is a \emph{\((\mathbb P,h)\)-regular exhaustion}
of \(O\Subset D\) if
\[
    O_k\Subset O_{k+1}\Subset O,
    \qquad \bigcup_{k\ge1}O_k=O,
\]
and every \(O_k\) is \((\mathbb P,h)\)-regular.

\begin{proof}[Proof of Theorem~\ref{prop:one_step_usc_subsolution_realization}]
Replacing \(u(s,y)\) and \(q(s,y)\) by
\[
    \overline u(s,y):=u(T-h+s,y),
    \qquad
    \overline q(s,y):=q(T-h+s,y),
    \qquad 0\le s\le h+1,
\]
we may assume that \(T=h\). Choose \(m\ge1\) such that \(\overline O\subset D_m\), and apply
Lemma~\ref{lem:usc_diagonal_stochastic_approximation} with $T=h$.
After passing to a subsequence, we may assume that $\mathbb P_n\Rightarrow\mathbb P\in\mathcal P_x(G)$.
By Proposition~\ref{prop:recovery_aggregation_generator_compatible_characteristics}%
\ref{prop:recovery_killing_compensator},
\(\mathbb P(\tau_{\mathrm{kill}}=h)=0\). 
By Proposition~
E.8 in the
Supplementary Material, there exists a \((\mathbb P,h)\)-regular exhaustion
\[
    x\in O_1,
    \qquad
    O_k\Subset O_{k+1}\Subset O,
    \qquad
    O_k\uparrow O.
\]
For \(k,n\ge1\), set $\sigma_n^k:=(h-t_n)\wedge\rho_{O_k}^{\,0}$ and $\sigma^k:=h\wedge\rho_{O_k}^{\,0}$.
By
Lemma~\ref{lem:usc_diagonal_stochastic_approximation}%
\ref{lem:usc_diagonal_one_step_inequality},
\begin{equation}
\label{eq:usc_diagonal_approximate_realization}
\begin{aligned}
    v_n(h,x)
    &\le
    \mathbb E^{\mathbb P_n}\!\Bigg[
        v_n\bigl(h-\sigma_n^k,X_{\sigma_n^k}\bigr)
        \mathbb I_{\{\sigma_n^k<\tau_\infty\}}\\
    &\hspace{3em}+\int_0^{\sigma_n^k}
            q_n(h-s,X_s)\mathbb I_{\{s<\tau_\infty\}}\,ds
    \Bigg]
    +\frac{h+1}{n}.
\end{aligned}
\end{equation}
For each fixed \(k\), Proposition~
E.9 in the Supplementary Material, together with
\eqref{eq:usc_diagonal_prescribed_point_convergence},
\eqref{eq:usc_diagonal_value_upper_limit}, and
\eqref{eq:usc_diagonal_source_upper_limit}, allow us to let
\(n\to\infty\) in \eqref{eq:usc_diagonal_approximate_realization}. We obtain
\begin{align}
\label{eq:usc_diagonal_regular_domain_limit}
    u(h,x)
    \le
    \mathbb E^{\mathbb P}\!\Bigg[
        u\bigl(h-\sigma^k,X_{\sigma^k}\bigr)\mathbb I_{\{\sigma^k<\tau_\infty\}}+\int_0^{\sigma^k}q(h-s,X_s)\mathbb I_{\{s<\tau_\infty\}}\,ds
    \Bigg].
\end{align}

By Proposition~
E.8 in the Supplementary Material,
\[
    \rho_{O_k}^{\,0}\uparrow\rho_O^{\,0},
    \qquad \mathbb P\text{-almost surely}.
\]
The live exit time \(\widehat\tau_O^{\,0}\) is predictable, whereas
\(\tau_{\mathrm{kill}}\) is totally inaccessible under \(\mathbb P\).
Consequently,
\[
    \mathbb P\bigl(
        \widehat\tau_O^{\,0}
        =\tau_{\mathrm{kill}}<\infty
    \bigr)=0,
\]
and hence
\[
    \bigl(\sigma^k,X_{\sigma^k},\mathbb I_{\{\sigma^k<\tau_\infty\}}\bigr)
    \longrightarrow
    \bigl(
        h\wedge\rho_O^{\,0},
        X_{h\wedge\rho_O^{\,0}},
        \mathbb I_{\{h\wedge\rho_O^{\,0}<\tau_\infty\}}
    \bigr)
\]
\(\mathbb P\)-almost surely. Since \(u\) is bounded and upper
semicontinuous, Fatou lemma handles the terminal term in
\eqref{eq:usc_diagonal_regular_domain_limit}. Since \(q\) is bounded and
\(\sigma^k\uparrow h\wedge\rho_O^{\,0}\), bounded convergence handles the
source term. Letting \(k\to\infty\) gives
\[
\begin{aligned}
    u(h,x)
    \le
    \mathbb E^{\mathbb P}\!\Bigg[
        &u\bigl(
            h-(h\wedge\rho_O^{\,0}),
            X_{h\wedge\rho_O^{\,0}}
        \bigr)
        \mathbb I_{\{h\wedge\rho_O^{\,0}<\tau_\infty\}}
        +\int_0^{h\wedge\rho_O^{\,0}}
            q(h-s,X_s)\mathbb I_{\{s<\tau_\infty\}}\,ds
    \Bigg].
\end{aligned}
\]
Undoing the time shift and taking
\(\mathbb P^{T,h,O,x}:=\mathbb P\) proves
\eqref{eq:one_step_usc_realization}.
\end{proof}

\noeqref{eq:coefficient_lyapunov_1,eq:coefficient_lyapunov_2}

\noeqref{eq:Phi_stopping_time_identity,eq:initialconditionP,eq:coefficient_lyapunov_1,eq:coefficient_lyapunov_2}

\section*{Declaration of AI assistance}
The authors used ChatGPT (OpenAI) to improve the language, readability, and formatting of the manuscript.
The authors take full responsibility for the content of the paper, including its mathematical statements, proofs, and references.

\section*{Funding}
Hyungbin Park was supported by National Research Foundation of Korea grants funded by the Ministry of Science and ICT (2021R1C1C1011675, 2022R1A5A6000840, RS-2026-25488333).
Financial support from the Institute for Research in Finance and Economics of Seoul National University is gratefully acknowledged. David Criens gratefully acknowledges support from the Freiburg Center for Data Analysis, Modeling and AI.

\section*{Supplementary Material}
\noindent\textbf{Supplement to ``Global Structure and Local Specifications in Sublinear Valuation''}\par\smallskip
The supplement contains Appendices D-G: measurable selection and countable determining classes; probabilistic estimates, tightness, and weak stability; approximation and weak existence for one-step realization; and discounted-pair recovery and inverse killing.


\startcombinedsupplement

\title{Supplement to ``Global Structure and Local Specifications in Sublinear Valuation''}

\author{
Jongjin Park\thanks{\raggedright Research Institute of Mathematics and Department of Mathematical Sciences, Seoul National University. Email: \texttt{pjj4230@snu.ac.kr}.}
\and
David Criens\thanks{\raggedright Department of Mathematical Stochastics, University of Freiburg. Email: \texttt{david.criens@stochastik.uni-freiburg.de}.}
\and
Hyungbin Park\thanks{\raggedright Research Institute of Mathematics and Department of Mathematical Sciences, Seoul National University. Emails: \texttt{hyungbin@snu.ac.kr}, \texttt{hyungbin2015@gmail.com}.}
}
\date{}
\maketitle

\begin{abstract}
This supplement collects the auxiliary results used in the main article.
Appendix D provides measurable-selection and countable-determination tools.
Appendix E establishes local characteristics, probabilistic estimates,
tightness, and weak stability on the canonical path space.
Appendix F develops the approximation and weak-existence results used for
one-step realization. Appendix G proves the discounted-pair recovery and
inverse-killing results. The results are arranged so that their proofs use
only earlier auxiliary results and the indicated results of the main article.
\end{abstract}

\noindent\textbf{2020 Mathematics Subject Classification.} Primary: 60J25, 60G44. Secondary: 47H20, 49L25, 91G80.\par\smallskip

\noindent\textbf{Keywords.} Dynamic sublinear valuation, uncertainty structure, killing, weak convergence, compactness.\par\medskip

\setcounter{secnumdepth}{3}
\setcounter{tocdepth}{2}

\tableofcontents

$ $

This supplement contains Appendices D--G, continuing the numbering after Appendices A--C of the main article.
References to Appendices A--C of the main article are identified explicitly;
unqualified appendix references refer to this supplement.
Unless redefined, the notation
$D,D_m,\widehat\Omega,\widetilde\Omega,\mathfrak M,G$ and $\mathfrak A_G$
is inherited from the main text.
Here $G$ satisfies \textup{(G1)}--\textup{(G3)} and is not assumed to be
an exactly generated coefficient function.  The half-relaxed-limit
conventions are given in Section~
2 and the support correspondence in Subsection~
2.2 of the main text.
The hypotheses needed for each auxiliary result are stated explicitly.

\clearpage
\appendix
\setcounter{section}{3}

\section{Measurable Selection and Countable Determining Classes}
\label{sec:mixedtopology}

We collect the measurable-selection and countable-determination results used
throughout the main article. The first result is the standard selection
theorem for compact-valued Borel correspondences.

\begin{proposition}
\label{lem:measurable_model_selectors}
Let \(E\) be a standard Borel space, let \(Y\) be Polish, and let \(\Gamma:E\rightrightarrows Y\) have nonempty compact values and Borel graph.
Then \(\Gamma\) admits a Borel selector.
\end{proposition}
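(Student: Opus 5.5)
This is the classical measurable-selection theorem for compact-valued correspondences with Borel graph. I would deduce it from the Arsenin--Kunugui uniformization theorem rather than prove it from scratch.

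\emph{Reduction to uniformization.} Since $\Gamma$ has a Borel graph, $\operatorname{Gr}(\Gamma)$ is a Borel subset of the standard Borel space $E\times Y$. Each section $\operatorname{Gr}(\Gamma)_e=\Gamma(e)$ is compact and, in particular, $\sigma$-compact. The Arsenin--Kunugui theorem (Kechris, \emph{Classical Descriptive Set Theory}, Theorem~35.46) applies to Borel sets with $\sigma$-compact sections. It gives two things: the projection $\pi_E(\operatorname{Gr}(\Gamma))$ is Borel, and $\operatorname{Gr}(\Gamma)$ admits a Borel uniformization. Because the values are nonempty, the projection is all of $E$, so the uniformizing function is a Borel map $s:E\to Y$ with $s(e)\in\Gamma(e)$ for every $e$.

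\emph{The one point needing care.} The theorem is stated for Polish spaces. Here $E$ is only standard Borel, so I would fix a Polish topology on $E$ generating its Borel structure. This changes neither the Borel graph hypothesis nor the Borel measurability of the resulting selector. The compactness of the sections is used only for the $\sigma$-compactness hypothesis, and $Y$ is already Polish.

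\emph{Alternative route.} If one wants to avoid Arsenin--Kunugui, the following argument also works. Compact-valued correspondences with Borel graph are weakly measurable: $\{e:\Gamma(e)\cap F\neq\varnothing\}$ is Borel for closed $F$, and this uses compactness via Novikov's theorem on projections of Borel sets with compact sections. The Kuratowski--Ryll-Nardzewski theorem then yields the selector. In either route the main step is the Borel measurability of projections, which is exactly where compactness of the values enters. Everything else is formal.
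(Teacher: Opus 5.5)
Your proposal is correct and is the same argument as the paper's, whose proof simply cites the Arsenin--Kunugui uniformization theorem (Kechris, Theorem~18.18) and applies it, as you do, to the Borel graph with compact, hence $K_\sigma$, sections. Fixing a Polish topology on $E$ is harmless but unnecessary, since Kechris already states the theorem for a standard Borel base space.
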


\begin{proof}
This is the Arsenin--Kunugui uniformization theorem \cite[Theorem~18.18]{supp:kechris1995classical}.
\end{proof}

The next two results provide countable smooth tests and cylinder multipliers
for verifying that \(\widetilde M^{f,n,\beta}\) is a martingale and that
\(\widetilde M^{f,n}\) and \(Y^{\mathcal T;R,g}\) are supermartingales
under the law in question.

\begin{proposition}
\label{lem:countable_smooth_determining_class}
There exists a countable class
\(\mathscr F_{\mathrm{sm}}\subset C_c^\infty(D)\) such that, for every
\(n\ge1\) and \(f\in C_b^\infty(D)\), some
\((f_j)_{j\ge1}\subset\mathscr F_{\mathrm{sm}}\) satisfies
\[
    (f_j,\nabla f_j,\nabla^2f_j)
    \longrightarrow
    (f,\nabla f,\nabla^2f)
    \quad\text{uniformly on }\overline D_n.
\]
In particular,
\(\{J_x^2f:f\in\mathscr F_{\mathrm{sm}}\}\) is dense in
\(\mathfrak J\) for every \(x\in D\).
\end{proposition}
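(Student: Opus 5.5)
The plan is to build $\mathscr F_{\mathrm{sm}}$ from a countable dense subset of $C_c^\infty(D)$ in a $C^2$-type metric, and then reduce arbitrary $f\in C_b^\infty(D)$ to compactly supported ones by a cutoff.

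First, for each $n$ fix a cutoff $\chi_n\in C_c^\infty(D)$ with $\chi_n\equiv1$ on a neighborhood of $\overline D_{n}$; this is possible since $\overline D_n\subset D_{n+1}\Subset D$. For $f\in C_b^\infty(D)$, the function $\chi_nf$ lies in $C_c^\infty(D)$ and has the same $2$-jet as $f$ on $\overline D_n$. It therefore suffices, for each $n$, to approximate the elements of $C_c^\infty(D_{n+2})$ (or any fixed compact support set) in the $C^2(\overline D_n)$ norm by members of a countable family.

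Second, for each $k$ consider $C_c^\infty(D_k)$ with the norm $\|g\|_{C^2}:=\sup_{D}(|g|+|\nabla g|+|\nabla^2g|)$. This is a subspace of the separable Banach space $C^2_0(\overline{D_k})$ (functions of class $C^2$ on the compact set $\overline{D_k}$, which embeds isometrically into a finite product of copies of $C(\overline{D_k})$). Subspaces of separable metric spaces are separable, so there is a countable set $\mathscr F_k\subset C_c^\infty(D_k)$ that is $C^2$-dense in $C_c^\infty(D_k)$. Concretely, one could take $\chi_k$ times polynomials with rational coefficients, using Stone--Weierstrass-type simultaneous approximation of derivatives (for instance, mollification followed by Weierstrass approximation of a $C^2$ function on a box together with its derivatives). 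The abstract separability argument avoids this computation, however. Set $\mathscr F_{\mathrm{sm}}:=\bigcup_k\mathscr F_k$. Then for $f\in C_b^\infty(D)$ and $n$, choose $f_j\in\mathscr F_{n+2}$ with $\|f_j-\chi_nf\|_{C^2}\to0$, which gives uniform convergence of $(f_j,\nabla f_j,\nabla^2f_j)$ to $(f,\nabla f,\nabla^2f)$ on $\overline D_n$.

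Third, for the density claim, given $x\in D$ and $U=(z,p,H)\in\mathfrak J$, take $f=f_{x,U}$ as in the proof of Proposition~\ref{prop:local_upper_generator_representation}, namely a cutoff times the quadratic polynomial with jet $U$ at $x$. Then $f\in C_c^\infty(D)$ and $J_x^2f=U$. Choose $n$ with $x\in D_n$; the approximating sequence then satisfies $J_x^2f_j\to U$.

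The only delicate point is ensuring separability in a norm that controls derivatives up to order two. I would handle this by the isometric embedding $g\mapsto(\partial^\alpha g)_{|\alpha|\le2}$ into $C(\overline{D_k})^{N}$, which is separable, rather than by constructing explicit rational polynomial approximants.
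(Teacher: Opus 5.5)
Your proposal is correct and follows essentially the paper's argument. The paper also uses a cutoff equal to one near $\overline D_n$ to reduce to compactly supported functions, gets separability of $C_c^\infty$ in a $C^2$-type norm from an isometric embedding into a finite product of separable spaces of continuous functions, and obtains jet density by realizing each jet with a smooth function. The only cosmetic difference is the choice of norm: the paper puts $\|h\|_\infty+\|\nabla h\|_{L^\infty(\overline D_n)}+\|\nabla^2h\|_{L^\infty(\overline D_n)}$ on all of $C_c^\infty(D)$ and embeds into $C_c(D)$ and copies of $C(\overline D_n)$, whereas your full $C^2$ norm on $C_c^\infty(D_k)$ needs only separability of $C(\overline{D_k})$; just take $\operatorname{supp}\chi_n\subset D_{n+1}$ so that $\chi_nf\in C_c^\infty(D_{n+2})$.
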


\begin{proof}
For each \(n\ge1\), define
\[
    \|h\|_{*,n}
    :=
    \|h\|_\infty
    +
    \|\nabla h\|_{L^\infty(\overline D_n)}
    +
    \|\nabla^2h\|_{L^\infty(\overline D_n)},
    \qquad h\in C_c^\infty(D).
\]
The space \(C_c^\infty(D)\), endowed with \(\|\cdot\|_{*,n}\), is
separable.  Indeed, the map
\[
    h
    \longmapsto
    \bigl(
        h,\nabla h|_{\overline D_n},
        \nabla^2h|_{\overline D_n}
    \bigr)
\]
embeds it isometrically into a finite product of the separable spaces
\(C_c(D)\) and \(C(\overline D_n)\).  Hence, for every \(n\ge1\), we
may choose a countable family
\(\mathscr A_n\subset C_c^\infty(D)\) that is dense with respect to
\(\|\cdot\|_{*,n}\).  Set
\[
    \mathscr F_{\mathrm{sm}}
    :=
    \bigcup_{n\ge1}\mathscr A_n.
\]
Then \(\mathscr F_{\mathrm{sm}}\) is countable.

Fix \(n\ge1\) and \(f\in C_b^\infty(D)\).  Choose
\(\chi\in C_c^\infty(D)\) such that \(\chi=1\) on a neighborhood of
\(\overline D_n\).  Since \(\chi f\in C_c^\infty(D)\), there exists
\((f_j)_{j\ge1}\subset\mathscr A_n\) such that
\(\|f_j-\chi f\|_{*,n}\to0\).  Because \(\chi f=f\) on a
neighborhood of \(\overline D_n\), this proves the asserted local
\(C^2\)-approximation.

Finally, for every \(x\in D\) and \(U\in\mathfrak J\), there exists
\(f\in C_b^\infty(D)\) such that \(J_x^2f=U\).  Applying the preceding
approximation on a localization domain containing \(x\) proves the
pointwise jet-density assertion.
\end{proof}

\begin{proposition}
\label{lem:countable_determining_cylinder_classes}
For every \(r\ge0\), there exists a countable class
\(\mathscr H_{r,0}\) of bounded nonnegative
\(\widetilde{\mathcal F}_r\)-measurable cylinder functions with the
following property.  Let \(\mathbb P\) be a probability measure on
\(\widetilde\Omega\), and let \(U,V\in L^1(\mathbb P)\).  If
\begin{align}
\label{eq:countable_indicator_determining_supermtgineq}
    \mathbb E^{\mathbb P}[F U]
    \le
    \mathbb E^{\mathbb P}[F V]
    \qquad
    \text{for every }F\in\mathscr H_{r,0},
\end{align}
then \eqref{eq:countable_indicator_determining_supermtgineq} holds for every
bounded nonnegative \(\widetilde{\mathcal F}_r\)-measurable random variable \(F\).
\end{proposition}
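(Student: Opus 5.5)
The approach is to build $\mathscr H_{r,0}$ from a countable $\pi$-system generating $\widetilde{\mathcal F}_r$, and then pass from indicators of that $\pi$-system to all bounded nonnegative $\widetilde{\mathcal F}_r$-measurable $F$ by a monotone class argument.

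\emph{Step 1 (generating class).} Since $\widetilde{\mathcal F}_r=\sigma(X_s:s\le r)$ and paths are right-continuous on $[0,r)$, the $\sigma$-algebra $\widetilde{\mathcal F}_r$ is generated by $X_r$ together with $X_q$ for rational $q\in[0,r)$. Fix a countable base $\{B_k\}$ of open sets in $\widehat D$ that is closed under finite intersections. I then take as the $\pi$-system $\mathcal A_r$ the collection of finite intersections $\bigcap_{i}\{X_{q_i}\in B_{k_i}\}$ with $q_i\in(\mathbb Q\cap[0,r))\cup\{r\}$. This collection is countable, closed under intersection, and generates $\widetilde{\mathcal F}_r$.

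An indicator $\mathbb I_A$ of such a set is not continuous, but continuity is not required. The statement asks only for bounded nonnegative measurable cylinder functions, so I set $\mathscr H_{r,0}:=\{\mathbb I_A:A\in\mathcal A_r\}\cup\{1\}$.

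\emph{Step 2 (extension to all multipliers).} Fix $\mathbb P$ and $U,V\in L^1(\mathbb P)$, and put $W:=V-U$. Consider the signed measure $\mu(A):=\mathbb E^{\mathbb P}[\mathbb I_A W]$ on $\widetilde{\mathcal F}_r$; it is a finite signed measure because $W\in L^1$. The hypothesis says $\mu\ge0$ on $\mathcal A_r$.

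The difficulty is that nonnegativity on a $\pi$-system does not by itself propagate through Dynkin's theorem, since complements break it. This is the main obstacle, and I would circumvent it as follows. Let $\mathcal R$ be the ring generated by $\mathcal A_r$. Every element of $\mathcal R$ is a finite disjoint union of sets of the form $A_0\setminus(A_1\cup\dots\cup A_j)$. By inclusion–exclusion, $\mu$ of such a set is an alternating sum of $\mu$-values on intersections, and that sum need not be positive.

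The fix I would try first is to enlarge $\mathscr H_{r,0}$ to indicators of all sets in the ring (or algebra) generated by $\mathcal A_r$. These are still countably many, still cylinder functions, and still bounded nonnegative $\widetilde{\mathcal F}_r$-measurable. With this choice the hypothesis gives $\mu\ge0$ on an algebra $\mathcal R'$ generating $\widetilde{\mathcal F}_r$.

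Now write $\mu=\mu^+-\mu^-$. For any $B\in\widetilde{\mathcal F}_r$, approximation of $B$ in $|\mu|$-measure by sets of $\mathcal R'$ (the standard approximation lemma for finite measures) gives $\mu(B)=\lim\mu(A_n)\ge0$ with $A_n\in\mathcal R'$. Hence $\mu\ge0$ on all of $\widetilde{\mathcal F}_r$, so that $\mathbb E[\mathbb I_BU]\le\mathbb E[\mathbb I_BV]$ for every $B\in\widetilde{\mathcal F}_r$.

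\emph{Step 3 (general $F$).} For a bounded nonnegative $\widetilde{\mathcal F}_r$-measurable $F$, approximate $F$ from below by simple functions $F_n\uparrow F$ with nonnegative coefficients. Linearity gives the inequality for each $F_n$, and dominated convergence, with dominating function $\|F\|_\infty(|U|+|V|)$, passes to the limit. This completes the proof.
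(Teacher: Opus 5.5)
Your proof is correct and takes essentially the same route as the paper. Both arguments use a countable algebra of cylinder sets observed at rational times in $[0,r)$ together with $r$, obtain generation of $\widetilde{\mathcal F}_r$ from right-continuity, extend from the algebra to $\widetilde{\mathcal F}_r$, and finish with simple-function approximation. The one difference is in the extension step: you invoke the approximation lemma for the finite measure $|\mu|$, whereas the paper observes that $\{A\in\widetilde{\mathcal F}_r:\mathbb E^{\mathbb P}[\mathbb I_A(U-V)]\le 0\}$ is a monotone class containing the algebra, which is the same argument in different packaging.
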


\begin{proof}
Since \(\widehat D\) is compact and metrizable, there exists a countable
algebra \(\mathscr B_0\subset\mathcal B(\widehat D)\) such that
\(\sigma(\mathscr B_0)=\mathcal B(\widehat D)\).  For \(r\ge0\), set
\(\mathbb T_{r,0}:=(\mathbb Q_+\cap[0,r))\cup\{r\}\), and let
\(\mathscr A_{r,0}\) be the algebra of subsets of
\(\widetilde\Omega\) generated by
\[
    \Bigl\{
        \{X_q\in B\}:
        q\in\mathbb T_{r,0},\ 
        B\in\mathscr B_0
    \Bigr\}.
\]
Define
\(\mathscr H_{r,0}:=\{\mathbb I_A:A\in\mathscr A_{r,0}\}\).  This
class is countable and consists of bounded nonnegative
\(\widetilde{\mathcal F}_r\)-measurable cylinder functions.

We first record that
\begin{equation}
\label{eq:countable_cylinder_generation}
    \sigma(\mathscr H_{r,0})=\widetilde{\mathcal F}_r.
\end{equation}
Indeed, the inclusion from left to right is immediate.  Conversely, each
\(X_q\), \(q\in\mathbb T_{r,0}\), is
\(\sigma(\mathscr A_{r,0})\)-measurable.  If \(0\le u<r\), choose
\((q_n)_{n\ge1}\subset\mathbb Q_+\cap(u,r)\) with \(q_n\downarrow u\).
Right continuity gives \(X_{q_n}\to X_u\), so \(X_u\) is also
\(\sigma(\mathscr A_{r,0})\)-measurable.  Since
\(r\in\mathbb T_{r,0}\), this proves
\eqref{eq:countable_cylinder_generation}.

Put \(Z:=U-V\) and define
\[
    \mathscr D
    :=
    \left\{
        A\in\widetilde{\mathcal F}_r:
        \mathbb E^{\mathbb P}[\mathbb I_A Z]\le0
    \right\}.
\]
Since \(Z\in L^1(\mathbb P)\), the finite signed measure
\(A\mapsto\mathbb E^{\mathbb P}[\mathbb I_A Z]\) is continuous along
increasing and decreasing sequences.  Thus \(\mathscr D\) is a monotone
class.  By
\eqref{eq:countable_indicator_determining_supermtgineq},
\(\mathscr A_{r,0}\subseteq\mathscr D\), and the monotone class theorem,
together with \eqref{eq:countable_cylinder_generation}, gives
\(\widetilde{\mathcal F}_r\subseteq\mathscr D\).  Approximation by
bounded nonnegative simple functions proves
\eqref{eq:countable_indicator_determining_supermtgineq} for all bounded
nonnegative \(\widetilde{\mathcal F}_r\)-measurable \(F\).
\end{proof}

\section{Probabilistic Estimates, Tightness, and Weak Stability}
\label{app:virtual_path_weak_convergence}

We first identify the local characteristics of linear martingale-problem solutions and derive short-time and Lyapunov estimates.
These estimates provide tightness. 
We then show that stopped compensated processes remain (super)martingales
under weak convergence and pass killed-payoff expectation bounds to weak
limits, followed by closedness and countable determination of the inequalities
\eqref{eq:onestep-supermtgrelation_Ciota} within the admissible family.

\subsection{Local Characteristics and Space-Time Tests}\label{subsec:gmp_basic_properties}

\begin{lemma}
\label{prop:localized_characteristics_linear_gmp}
Fix a coefficient field \(\beta=(c,b,a)\), set \(k:=k^\beta=-c\), and let
\(\mathbb P\in\widetilde{\mathcal P}_{t,\widetilde\omega}(L^\beta)\), where
\(t<\tau_\infty(\widetilde\omega)\). Let \(O\Subset D\) be open. Then, with
respect to \(\widetilde{\mathbb F}^{\,\mathbb P}\), for every \(s\ge t\),
\begin{align}
\label{eq:lem:localized_characteristics_linear_gmp_1}
\overline X_{s\wedge\rho_O^{\,t}}
&=
\widetilde\omega(t)
+
\int_t^{s\wedge\rho_O^{\,t}}b_r\,dr
+
M_{s\wedge\rho_O^{\,t}},
\end{align}
where \(M\) is a continuous local martingale with \(M_t=0\) and
\begin{align}
\label{eq:lem:localized_characteristics_linear_gmp_2}
\langle M^i,M^j\rangle_{s\wedge\rho_O^{\,t}}
&=
\int_t^{s\wedge\rho_O^{\,t}}a_r^{ij}\,dr,
\qquad 1\le i,j\le d.
\end{align}
Moreover,
\begin{equation}
\label{eq:localized_killing_martingale_proof}
\mathbb I_{\{t<\tau_{\mathrm{kill}}\le s\wedge\widehat\tau_O^{\,t}\}}
-
\int_t^{s\wedge\widehat\tau_O^{\,t}}k_r\mathbb I_{\{r<\tau_\infty\}}\,dr,
\qquad s\ge t,
\end{equation}
is a local martingale.
\end{lemma}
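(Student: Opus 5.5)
This lemma is a localized, linear version of Steps~1 and~4 in the proof of Proposition~\ref{prop:recovery_aggregation_generator_compatible_characteristics}, and I would follow that template with exact martingale identities in place of supermartingale sandwiches. Choose $n$ with $\overline O\subset D_n$ and $\widetilde\omega([0,t])\subset D_n$, and a cutoff $\chi\in C_c^\infty(D)$ equal to one near $\overline D_n$. Apply the martingale property of $\widetilde M^{h,n,\beta}$ from Definition~\ref{def:generalized_linear_martingale_problems} to the finite test family $h\in\{\chi,\chi x_i,\chi x_ix_j\}$. Replacing $\chi$ by $1$ is harmless, since $1\in C_b^\infty(D)$. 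Then stop these processes at $\rho_O^{\,t}$, using optional sampling with bounded integrands on finite horizons; here $\rho_O^{\,t}\le\tau_n$. This yields explicit local martingales
\[
\mathbb I_{\{s\wedge\rho<\tau_\infty\}}h(X_{s\wedge\rho})-h(\widetilde\omega(t))-\int_t^{s\wedge\rho}\ell_{\beta_r}(J^2_{X_r}h)\mathbb I_{\{r<\tau_\infty\}}\,dr ,
\]
with $\rho=\rho_O^{\,t}$ and $\ell_{\beta_r}$ evaluated on the appropriate jets. For $h=1$ the integrand is $c_r=-k_r$.

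\textbf{Killing compensator.} With $h=1$, the survival indicator $\mathbb I_{\{s\wedge\rho<\tau_\infty\}}$ equals $1-\mathbb I_{\{t<\tau_{\mathrm{kill}}\le s\wedge\rho\}}$. This holds because continuous explosion cannot occur before leaving $O\Subset D$, and the prescribed history is alive at $t$. Hence
\[
\mathbb I_{\{t<\tau_{\mathrm{kill}}\le s\wedge\rho\}}-\int_t^{s\wedge\rho}k_r\mathbb I_{\{r<\tau_\infty\}}\,dr
\]
is a martingale. To pass to \eqref{eq:localized_killing_martingale_proof}, compare $\rho_O^{\,t}$ with $\widehat\tau_O^{\,t}$. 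Recall that $\rho_O^{\,t}=\widehat\tau_O^{\,t}\wedge\tau_{\mathrm{kill}}$ on paths alive at $t$. On $\{\tau_{\mathrm{kill}}\le\widehat\tau_O^{\,t}\}$ both the indicators and the integrals (the latter thanks to $\mathbb I_{\{r<\tau_\infty\}}$) agree whether one stops at $\rho$ or at $\widehat\tau_O^{\,t}$. On the complement, $\rho=\widehat\tau_O^{\,t}$. So the two processes coincide pathwise. The local boundedness of $k$ while $\overline X\in\overline O$ makes this a true martingale, in particular a local martingale.

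\textbf{Drift and covariance.} As in \eqref{eq:overlineX_decomposition}, write
\[
\overline X^i_{s\wedge\rho}=\mathbb I_{\{s\wedge\rho<\tau_\infty\}}X^i_{s\wedge\rho}+\int_t^{s\wedge\rho}\overline X^i_r\,dN_r ,
\]
and similarly for products. For the coordinate test, the integrand is $b^i_r+c_rX^i_r$. Compensating $dN$ by $k_r\,dr$ adds $k_r\overline X^i_r\,dr$, which cancels the killing term $c_rX^i_r$. This gives the semimartingale decomposition \eqref{eq:lem:localized_characteristics_linear_gmp_1} with drift $b$. The remaining martingale part $M$ is continuous: $\overline X$ is continuous, and the drift is absolutely continuous. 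The same computation with $x_ix_j$, combined with It\^o's product formula applied to \eqref{eq:lem:localized_characteristics_linear_gmp_1}, identifies the finite-variation part. Matching the predictable finite-variation parts gives \eqref{eq:lem:localized_characteristics_linear_gmp_2}, by uniqueness of the canonical decomposition of special semimartingales.

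\textbf{Main obstacle.} I expect the main difficulty to be the careful bookkeeping around the killing time. Two points need attention. First, after the jump to $\triangle$, the continuous coordinate $\overline X$ is frozen, so the killed-coordinate identities must correctly restore the jump contribution. Second, the stopping times $\rho_O^{\,t}$ and $\widehat\tau_O^{\,t}$ must be reconciled so that no atom of the compensator appears at exit. The latter is addressed by the pathwise case split above.
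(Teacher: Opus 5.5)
Your proof is correct. For the killing compensator it follows the paper; for the drift and covariance it takes a different route.

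For the killing part, the paper stops $\widetilde M^{1,n,\beta}$ directly at $\widehat\tau_O^{\,t}$ and uses $\mathbb I_{\{s<\tau_\infty\}}=1-N_s^t$ before $\widehat\tau_O^{\,t}$. You stop at $\rho_O^{\,t}$ and then reconcile the two stopping times pathwise. These amount to the same argument.

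For the drift and covariance, the paper tests with $\chi f$ for arbitrary $f\in C_c^\infty(\mathbb R^d)$. It uses \eqref{eq:localized_killing_martingale_proof} to compensate the single killing jump. This shows that $f(\overline X_{s\wedge\rho_O^{\,t}})-f(\widetilde\omega(t))-\int_t^{s\wedge\rho_O^{\,t}}\bigl[b_r\cdot\nabla f(\overline X_r)+\tfrac12\operatorname{tr}(a_r\nabla^2f(\overline X_r))\bigr]dr$ is a local martingale for every such $f$. The paper then invokes the martingale-problem characterization of semimartingale characteristics (Jacod--Shiryaev, Theorem~II.2.42) to read off the triplet $(\int b_r\,dr,\int a_r\,dr,0)$.

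You instead use only the finite family $\{1,\chi x_i,\chi x_ix_j\}$. You restore the killing jump through $\int\overline X_r\,dN_r$ and its compensator, and identify $\langle M^i,M^j\rangle$ by matching with It\^o's product formula. This is the mechanism of Step~1 in the proof of Proposition~\ref{prop:recovery_aggregation_generator_compatible_characteristics}, with exact martingales replacing the supermartingale sandwich.

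What each route buys:
\begin{itemize}
\item Yours is self-contained. It needs only finitely many tests and uniqueness of the decomposition of continuous semimartingales.
\item The paper's is shorter given the citation. It also isolates a reusable step: from a martingale problem for all smooth tests to characteristics. That step is what gets ``repeated'' in Step~3 of the proof of Lemma~\ref{lem:gmp_pinsky_inverse_extension}.
\end{itemize}
Your finite-test argument could be repeated there equally well.
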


\begin{proof}
The assertion is trivial if \(\widetilde\omega(t)\notin O\).
Otherwise, choose \(n\) such that
\(\widetilde\omega([0,t])\cup\overline O\subset D_n\).
Before \(\widehat\tau_O^{\,t}\), continuous explosion is impossible, and
hence \(\mathbb I_{\{s<\tau_\infty\}}=1-N_s^t\) for \(t\le s<\widehat\tau_O^{\,t}\). Since
\(\mathbb P\in\widetilde{\mathcal P}_{t,\widetilde\omega}(L^\beta)\), the
process \(\widetilde M^{1,n,\beta}\) in
equation~
\textup{(3.4)} in the main text, stopped at \(\widehat\tau_O^{\,t}\), is a
\(\mathbb P\)-martingale. Using \(c=-k\) and this identity gives
\eqref{eq:localized_killing_martingale_proof}.

Now let \(f\in C_c^\infty(\mathbb R^d)\), and choose a cutoff
\(\chi\in C_c^\infty(D)\) that is equal to one on a neighborhood of
\(\overline O\). The process \(\widetilde M^{\chi f,n,\beta}\), stopped at
\(\widehat\tau_O^{\,t}\), is again a local martingale. Since
\(\chi\equiv1\) on a neighborhood of \(\overline O\), using
\eqref{eq:localized_killing_martingale_proof} to compensate its single
killing jump cancels the zeroth-order terms and shows that
\begin{equation}
\label{eq:localized_recovered_martingale_problem}
 f(\overline X_{s\wedge\rho_O^{\,t}})-f(\widetilde\omega(t))
 -\int_t^{s\wedge\rho_O^{\,t}}
 \left[
 b_r\cdot\nabla f(\overline X_r)
 +\frac12\operatorname{tr}
 \bigl(a_r\nabla^2f(\overline X_r)\bigr)
 \right]dr
\end{equation}
is a local martingale. The standard martingale-problem characterization
of semimartingale characteristics
\cite[Theorem~II.2.42]{supp:jacod2013limit}, applied to
\eqref{eq:localized_recovered_martingale_problem}, yields the continuous
characteristics
\[
\left(\int_0^\cdot b_r\,dr,\int_0^\cdot a_r\,dr,0\right),
\]
which is equivalent to
\eqref{eq:lem:localized_characteristics_linear_gmp_1}--
\eqref{eq:lem:localized_characteristics_linear_gmp_2}.
\end{proof}

\begin{proposition}
\label{prop:space_time_linear_martingale_extension}
Fix \((t,\widetilde\omega)\in[0,\infty)\times\widetilde\Omega\),
\(u\in C_b^\infty([0,\infty)\times D)\), and \(n\ge1\).
If \(\mathbb P\in
\widetilde{\mathcal P}_{t,\widetilde\omega}(L^\beta)\), then
\begin{equation}
\label{eq:defM_u_n_linear}
\begin{aligned}
\widetilde M_s^{u,n,\beta}
:={}&
 u([s]_{t,\tau_n},X_{[s]_{t,\tau_n}})
 \mathbb I_{\{\tau_\infty>[s]_{t,\tau_n}\}}\\
&-
\int_t^{[s]_{t,\tau_n}}
\left[
\partial_r u(r,X_r)
+L^\beta(r,X,J^2_{X_r}u(r,\cdot))
\right]
\mathbb I_{\{\tau_\infty>r\}}\,dr.
\end{aligned}
\end{equation}
is a \(\mathbb P\)-martingale.
\end{proposition}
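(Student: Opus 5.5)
The plan is to reduce the space–time statement to the localized characteristics of Lemma~\ref{prop:localized_characteristics_linear_gmp} and then apply It\^o's formula on the live part. The only genuinely nonlinear feature is the killing jump. First dispose of trivial cases. If $t\ge\tau_\infty(\widetilde\omega)$, or if $\widetilde\omega([0,t])\not\subset D_n$ (so $\tau_n\le t$), then $[s]_{t,\tau_n}=t$ for all $s\ge t$, and the process is constant. So assume $t<\tau_\infty(\widetilde\omega)$ and $\widetilde\omega(t)\in D_n$, and work with $O:=D_n$, for which $\rho_O^{\,t}=\tau_n$ on paths alive at $t$. Note $\tau_n=\widehat\tau_{D_n}^{\,t}\wedge\tau_{\mathrm{kill}}$.

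\textbf{Step 1: the continuous part.} By Lemma~\ref{prop:localized_characteristics_linear_gmp}, the process $\overline X_{\cdot\wedge\rho}$ with $\rho:=\widehat\tau_{D_n}^{\,t}$ (applied with $O$ slightly larger than $D_n$ if needed) is a continuous semimartingale with drift $b$ and covariance $a$. Apply It\^o's formula to $u(s,\overline X_s)$ on $[t,s\wedge\rho]$. Since $\overline X$ stays in $\overline D_n$, only the values of $u$ and its derivatives on a compact set enter. This gives
\[
u(s\wedge\rho,\overline X_{s\wedge\rho})-u(t,\widetilde\omega(t))-\int_t^{s\wedge\rho}\Bigl[\partial_ru+b_r\cdot\nabla u+\tfrac12\operatorname{tr}(a_r\nabla^2u)\Bigr](r,\overline X_r)\,dr=\int_t^{s\wedge\rho}\nabla u(r,\overline X_r)\,dM_r .
\]

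\textbf{Step 2: the killing jump.} Multiply by the survival indicator $S_s:=1-N^t_{s\wedge\rho}$ and use integration by parts. Since $\overline X$ and $u(\cdot,\overline X)$ are continuous and $S$ is of finite variation, no bracket term arises. We have $d(S\,u(\cdot,\overline X))=S_{r-}\,du(r,\overline X_r)-u(r,\overline X_r)\,dN_r$. Compensate $dN$ by $k_r\mathbb I_{\{r<\tau_\infty\}}dr$ using the local martingale in \eqref{eq:localized_killing_martingale_proof}. The compensator contributes $-\int k_r u\,\mathbb I_{\{r<\tau_\infty\}}dr=\int c_r u\,dr$. This supplies exactly the zeroth-order term of $L^\beta$. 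On $\{r<\tau_\infty\}$ we have $X_r=\overline X_r$ and $S_{r-}=\mathbb I_{\{r<\tau_\infty\}}$ for a.e.\ $r$. Stopping at $\tau_{\mathrm{kill}}$ as well as $\rho$ then identifies the left side with $\widetilde M^{u,n,\beta}$, because $[s]_{t,\tau_n}=t\vee(s\wedge\rho\wedge\tau_{\mathrm{kill}})$ and the indicator $\mathbb I_{\{\tau_\infty>[s]_{t,\tau_n}\}}$ equals $S_s$. It follows that $\widetilde M^{u,n,\beta}$ is a local martingale.

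\textbf{Step 3: true martingale.} On $[t,t+H]$ the process is bounded. The first term is bounded by $\|u\|_\infty$. The integrand is bounded because $u\in C_b^\infty$ and $\beta$ is locally bounded while $X_r\in\overline D_n$ before $\tau_n$ (Definition~\ref{def:virtual_coefficient_fields}). A bounded local martingale on each finite horizon is a martingale. Finally pass from $\widetilde{\mathbb F}^{\,\mathbb P}$ to the raw filtration $\widetilde{\mathbb F}$ by the tower property, since the process is $\widetilde{\mathbb F}$-adapted.

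\textbf{Main obstacle.} The delicate point is Step 2: justifying that the compensated killing martingale interacts correctly with the predictable integrand $u(r,\overline X_r)$, and checking the equality of the two stopping conventions. A caveat is that Lemma~\ref{prop:localized_characteristics_linear_gmp} is stated for $O\Subset D$ with exits measured by $\widehat\tau_O^{\,t}$, and this must be reconciled with $\tau_n$. Here I would argue that $\widehat\tau_{D_n}^{\,t}\wedge\tau_{\mathrm{kill}}=\tau_n$ after time $t$ on live paths, and use $\overline D_n\subset D_{n+1}\Subset D$ so that the lemma applies with $O=D_n$. An alternative route is to approximate $u$ by space–time functions of product form $\sum_i\alpha_i(r)f_i(x)$ and apply the defining martingale property of $\widetilde M^{f,n,\beta}$ together with the time-dependent integration by parts.
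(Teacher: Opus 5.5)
Your proposal is correct and is essentially the paper's argument. After discarding the trivial cases ($t\ge\tau_\infty(\widetilde\omega)$ or $\tau_n\le t$), the paper just invokes Lemma~\ref{prop:localized_characteristics_linear_gmp} with It\^o's formula for $u([s]_{t,\tau_n},X_{[s]_{t,\tau_n}})\mathbb I_{\{\tau_\infty>[s]_{t,\tau_n}\}}$, and your Steps~1--3 (It\^o on the de-killed coordinate, integration by parts against the survival indicator with the killing compensator producing the $c\,u$ term, boundedness on finite horizons) spell out what that one line leaves implicit. One small fix: the non-trivial case should be stated as $\widetilde\omega([0,t])\subset D_n$, not merely $\widetilde\omega(t)\in D_n$, since that is what gives $\rho_{D_n}^{\,t}=\tau_n$.
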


\begin{proof}
It remains only to consider the case in which both
\(t<\tau_\infty(\widetilde\omega)\) and
\(\tau_n(\widetilde\omega)>t\) hold, since the assertion is immediate
whenever either condition fails.  In this case, the assertion follows by Lemma~\ref{prop:localized_characteristics_linear_gmp} with It\^o's formula applied to the process $u([s]_{t,\tau_n},X_{[s]_{t,\tau_n}})\mathbb I_{\{\tau_\infty>[s]_{t,\tau_n}\}}$.
\end{proof}

\subsection{Short-Time Estimates, Lyapunov Bounds, and Tightness}\label{subsec:small_time_estimates}

\label{subsec:nonexplosive_criterion}

\begin{proposition}
\label{prop:short_time_control_bounded_coefficients}
\begin{enumerate}[label=(\roman*), ref=(\roman*)]
\item\label{prop:short_time_control_deterministic}
Let \(\beta\) be a coefficient field and let
\[
    \mathbb P\in
    \widetilde{\mathcal P}_{t,\widetilde\omega}(L^\beta),
    \qquad
    t<\tau_\infty(\widetilde\omega).
\]
Let \(O\Subset D\) contain \(x:=\widetilde\omega(t)\).  Fix
\(T>t\) and suppose that, for some \(K<\infty\),
\begin{equation}
\label{eq:short_time_local_coefficient_bound}
    |c_s|+|b_s|+\|a_s\|\le K
\end{equation}
for \(ds\otimes d\mathbb P\)-a.e. \((s,\eta)\) satisfying
\(t<s\le T\) and \(s<\rho_O^{\,t}(\eta)\).  
Then there exists a constant $c_d>0$, depending only on $d$, such that for every $0<h\le T-t$ and $\varepsilon>0$,
\begin{align}
\mathbb P\bigl(t<\tau_{\mathrm{kill}}
\le(t+h)\wedge\widehat\tau_O^{\,t}\bigr)
&\le Kh,
\label{eq:short_time_killing_bound}
\\
\mathbb P\left(
\sup_{t\le s\le t+h}
|\overline X_{s\wedge\rho_O^{\,t}}-x|>\varepsilon
\right)
&\le
\frac{c_d(K^2h^2+Kh)}{\varepsilon^2}.
\label{eq:short_time_oscillation_bound}
\end{align}

\item\label{prop:short_time_control_exponential_exit}
Let \(G\) satisfy \textup{(G1)}--\textup{(G3)} and let
\(\overline{B_r(x_0)}\subset D\).  Then, as \(h\downarrow0\),
\[
\sup_{y\in B_{r/2}(x_0)}
\sup_{\mathbb P\in\mathcal P_y(G)}
\mathbb P\bigl(
\rho_{B_r(x_0)}^{\,0}\le h,
\ \rho_{B_r(x_0)}^{\,0}<\tau_{\mathrm{kill}}
\bigr)=o(h).
\]
\end{enumerate}
\end{proposition}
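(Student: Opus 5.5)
For part (i) I will work from Lemma~\ref{prop:localized_characteristics_linear_gmp}, stopping everything at $\rho_O^{\,t}$.

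\textbf{Killing bound.} The stopped process \eqref{eq:localized_killing_martingale_proof}, evaluated at $(t+h)\wedge\widehat\tau_O^{\,t}$ and localized, gives
\[
\mathbb P\bigl(t<\tau_{\mathrm{kill}}\le(t+h)\wedge\widehat\tau_O^{\,t}\bigr)
=\mathbb E^{\mathbb P}\Bigl[\int_t^{(t+h)\wedge\widehat\tau_O^{\,t}}k_r\mathbb I_{\{r<\tau_\infty\}}\,dr\Bigr].
\]
The localization is passed to the limit by monotone convergence, since the indicator is bounded and the integrand is nonnegative. On the integration range $r<\rho_O^{\,t}$ we have $k_r=|c_r|\le K$, so the right side is at most $Kh$, which is \eqref{eq:short_time_killing_bound}.

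\textbf{Oscillation bound.} Decompose $\overline X_{s\wedge\rho_O^{\,t}}-x=B_s+M_{s\wedge\rho_O^{\,t}}$, where $B_s=\int_t^{s\wedge\rho_O^{\,t}}b_r\,dr$. Pathwise, $\sup_{s\le t+h}|B_s|\le Kh$. For the martingale part, the stopped $M$ has quadratic variation bounded by $\int\operatorname{tr}a_r\,dr\le c_dKh$. It is therefore a square-integrable martingale, and Doob's $L^2$ inequality gives $\mathbb E\sup|M|^2\le 4c_dKh$. Then $\sup|\overline X-x|\le Kh+\sup|M|$, so $\sup|\overline X-x|^2\le 2K^2h^2+2\sup|M|^2$, and Chebyshev's inequality yields \eqref{eq:short_time_oscillation_bound} after enlarging $c_d$. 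The only care needed is that the a.e. bound is stated on $\{s<\rho_O^{\,t}\}$, which is exactly the range of both integrals.

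\textbf{Exit bound (part (ii)).} This is the main obstacle, since $o(h)$ is needed, not $O(h)$. By Corollary~\ref{cor:recovery_martingale_problem_equivalence}, each $\mathbb P\in\mathcal P_y(G)$ solves a linear problem with a generator-compatible $\beta$. By Lemma~\ref{lem:support_correspondence_geometry}(ii), $\mathfrak A_G$ is bounded by some $K$ on $\overline{B_r(x_0)}$, uniformly in $\mathbb P$ and $y$. On $\{\rho\le h,\ \rho<\tau_{\mathrm{kill}}\}$ with $y\in B_{r/2}(x_0)$, the continuous coordinate travels at least $r/2$ before killing. Since Chebyshev with second moments only gives $O(h)$, I will use fourth moments instead. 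Burkholder--Davis--Gundy gives $\mathbb E\sup|M|^4\le C\,\mathbb E\langle M\rangle^2\le CK^2h^2$. For $h$ small enough that $Kh<r/4$, the drift is absorbed, and Markov's inequality on $\sup|M|^4$ with threshold $(r/4)^4$ gives a bound $C'K^2h^2/r^4=o(h)$, uniform in $y$ and $\mathbb P$. This is a uniform bound, not a compactness argument, which is exactly what the $\sup$ over laws in the statement requires.
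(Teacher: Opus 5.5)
Your proof is correct.

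\textbf{Part (i).} This follows the paper's argument. The paper proves the same two bounds conditionally on $\widetilde{\mathcal F}^{\,\mathbb P}_\sigma$, for stopping times $t\le\sigma\le\theta\le(\sigma+\delta)\wedge T$. It uses optional sampling for the killing martingale and conditional BDG for $M$, and then specializes to $\sigma=t$, $\theta=t+h$. You work at deterministic times from the start, which is all the statement requires.

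\textbf{Part (ii).} Here the route differs in one step. The reduction is the same as the paper's. You bound $\mathfrak A_G$ uniformly on $\overline{B_r(x_0)}$ and absorb the drift once $Kh<r/4$. A live exit before time $h$ then forces $\sup_{s\le h}|M_{s\wedge\rho}|\ge r/4$.

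\begin{itemize}
\item The paper then applies a Bernstein-type maximal inequality for continuous martingales with bounded quadratic-variation density. This gives the exponential bound $C_d\exp(-r^2/(32dM_r^2h))$.
\item You instead use the fourth-moment BDG inequality and Markov's inequality. This gives $O(h^2)$, uniformly in $y$ and $\mathbb P$.
\end{itemize}

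Your version is more elementary and is enough, since only $o(h)$ is claimed. The paper's exponential decay is stronger than this statement needs. Its conditional form is reused later, in the inverse-killing extension, to bound short excursions by some $\alpha<1$. Your polynomial bound, applied via conditional BDG at a stopping time, would also serve there, because only $\alpha<1$ for one fixed small $h$ is required.
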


\begin{proof}
We first establish the conditional estimates used to prove
part~\ref{prop:short_time_control_deterministic}.  Under its assumptions,
let \(\sigma\) and \(\theta\) be bounded
\(\widetilde{\mathbb F}^{\,\mathbb P}\)-stopping times such that
\[
    t\le\sigma\le\theta\le T,
    \qquad
    \theta\le(\sigma+\delta)\wedge T
\]
for some \(\delta\ge0\).  Set
\[
    N_s^O
    :=\mathbb I_{\{t<\tau_{\mathrm{kill}}
                       \le s\wedge\widehat\tau_O^{\,t}\}},
    \qquad
    I_s^O
    :=\int_t^{s\wedge\widehat\tau_O^{\,t}}
       (-c_r)\mathbb I_{\{r<\tau_\infty\}}\,dr.
\]
By Lemma~\ref{prop:localized_characteristics_linear_gmp},
\(N^O-I^O\), stopped at \(T\), is a true martingale.  Optional sampling,
\(-c\le K\), and \(\theta-\sigma\le\delta\) give
\begin{equation}
\label{eq:conditional_short_time_killing_bound}
\mathbb P\left(
   \sigma<\tau_{\mathrm{kill}}
   \le\theta\wedge\widehat\tau_O^{\,t}
   \,\middle|\,
   \widetilde{\mathcal F}^{\,\mathbb P}_\sigma
\right)
\le K\delta.
\end{equation}
The same proposition yields, before \(\rho_O^{\,t}\),
\[
    \overline X_{s\wedge\rho_O^{\,t}}
    =x+\int_t^{s\wedge\rho_O^{\,t}}b_r\,dr
      +M_{s\wedge\rho_O^{\,t}},
    \qquad
    d\langle M^i,M^j\rangle_r
    =a_r^{ij}\mathbb I_{\{r<\rho_O^{\,t}\}}\,dr.
\]
After stopping at \(T\), the martingale increments are square integrable.
The drift increment is bounded by \(K\delta\), while conditional BDG gives
\[
\mathbb E^{\mathbb P}\!\left[
 \sup_{\sigma\le r\le\theta}
 |M_{r\wedge\rho_O^{\,t}}-M_{\sigma\wedge\rho_O^{\,t}}|^2
 \,\middle|\,\widetilde{\mathcal F}^{\,\mathbb P}_\sigma
\right]
\le c_dK\delta.
\]
Increasing \(c_d\) if necessary, we obtain
\begin{equation}
\label{eq:conditional_short_time_oscillation_bound}
\mathbb E^{\mathbb P}\!\left[
   \sup_{\sigma\le r\le\theta}
   \left|
      \overline X_{r\wedge\rho_O^{\,t}}
      -\overline X_{\sigma\wedge\rho_O^{\,t}}
   \right|^2
   \,\middle|\,
   \widetilde{\mathcal F}^{\,\mathbb P}_\sigma
\right]
\le c_d(K^2\delta^2+K\delta).
\end{equation}
Taking \(\sigma=t\), \(\theta=t+h\), and applying Markov's inequality
proves \eqref{eq:short_time_killing_bound}--
\eqref{eq:short_time_oscillation_bound}.

For part~\ref{prop:short_time_control_exponential_exit},
local boundedness of \(\mathfrak A_G\) and compactness of
\(\overline{B_r(x_0)}\) give constants \(K_r,M_r>0\), depending only on
\(x_0\) and \(r\), such that
\begin{align}\label{eq:b,a_union_bound}
    |b|\le K_r,
    \qquad
    a\preceq M_r^2I
\end{align}
before exit from \(B_r(x_0)\), uniformly over generator-compatible laws.
For
\(y\in B_{r/2}(x_0)\), put \(\rho:=\rho_{B_r(x_0)}^{\,0}\).  The
localized characteristic decomposition gives
\[
    \overline X_{s\wedge\rho}-y
    =\int_0^{s\wedge\rho}b_\xi\,d\xi+M_{s\wedge\rho},
    \qquad
    d\langle M^i\rangle_\xi
    \le M_r^2\mathbb I_{\{\xi<\rho\}}\,d\xi.
\]
If \(h<r/(4K_r)\), live exit before \(h\) forces
\(\sup_{0\le s\le h}|M_{s\wedge\rho}|\ge r/4\).
By the coordinatewise
Bernstein maximal inequality
\cite[Eq.~(1.5)]{supp:dzhaparidze2001bernstein} and \eqref{eq:b,a_union_bound},
\begin{equation}
\label{eq:prop:small_time_exit_1}
\mathbb P\left(
\sup_{0\le s\le h}|M_{s\wedge\rho}|\ge a
\right)
\le
C_d\exp\!\left(-\frac{a^2}{2dM_r^2h}\right),
\qquad a>0.
\end{equation}
Taking \(a=r/4\) gives, for \(0<h<r/(4K_r)\),
\begin{equation}
\label{eq:small_time_exit_bound}
\sup_{y\in B_{r/2}(x_0)}
\sup_{\mathbb P\in\mathcal P_y(G)}
\mathbb P\bigl(
\rho_{B_r(x_0)}^{\,0}\le h,
\ \rho_{B_r(x_0)}^{\,0}<\tau_{\mathrm{kill}}
\bigr)
\le
C_d\exp\!\left(-\frac{r^2}{32dM_r^2h}\right).
\end{equation}
The assertion follows from \(h^{-1}e^{-c/h}\to0\).
\end{proof}

\begin{proposition}
\label{prop:lyap_tail_bounds}
Let \(\beta=(c,b,a)\) be a coefficient field. Suppose that a function \(\phi\in C^2(D)\), \(\phi\ge1\), and a constant
\(C_\phi\ge0\) satisfy
\[
    m_n^\phi:=\inf_{y\in D\setminus D_n}\phi(y)
    \longrightarrow\infty,
    \qquad
    L^\beta\bigl(s,\eta,J^2_{\eta(s)}\phi\bigr)
    \le C_\phi\phi(\eta(s))
    \quad\text{for }s<\tau_\infty(\eta).
\]
Fix \((t,\widetilde\omega)\in[0,\infty)\times\widetilde\Omega\)
with \(t<\tau_\infty(\widetilde\omega)\), set
\(x:=\widetilde\omega(t)\), and choose \(n_0>t\) such that
\(\widetilde\omega([0,t])\subset D_{n_0}\).
Then, for every
\(\mathbb P\in\widetilde{\mathcal P}_{t,\widetilde\omega}(L^\beta)\),
\(T\ge t\), and \(n\ge n_0\),
\begin{equation}
\label{eq:survival_lower_bound}
    \mathbb P\bigl(\tau_n\le T,\ \tau_n<\tau_{\mathrm{kill}}\bigr)
    \le
    \frac{e^{C_\phi(T-t)}\phi(x)}{m_n^\phi}.
\end{equation}
Consequently, \(\mathbb P(\tau_{\mathrm{exp}}=\infty)=1\) and
\(\tau_\infty=\tau_{\mathrm{kill}}\), \(\mathbb P\)-almost surely.
\end{proposition}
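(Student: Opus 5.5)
The plan is to prove \eqref{eq:survival_lower_bound} by optional sampling applied to the discounted Lyapunov process $e^{-C_\phi(s-t)}\phi(X_s)\mathbb I_{\{s<\tau_\infty\}}$, stopped at the exit time $\tau_n$. Since $\phi$ is only in $C^2(D)$ and unbounded, it cannot be used directly as a test function. Fix $n\ge n_0$ and choose $\phi_n\in C_b^\infty(D)$ with $\phi_n=\phi$ on a neighborhood of $\overline D_n$; mollify first if needed, since $\phi\in C^2$ suffices via the local $C^2$ approximation of Proposition~\ref{lem:countable_smooth_determining_class}. Because $\mathbb P\in\widetilde{\mathcal P}_{t,\widetilde\omega}(L^\beta)$, Proposition~\ref{prop:space_time_linear_martingale_extension} applied to $u(s,y):=e^{-C_\phi(s-t)}\phi_n(y)$ at localization index $n$ shows that
\[
e^{-C_\phi([s]_{t,\tau_n}-t)}\phi_n(X_{[s]_{t,\tau_n}})\mathbb I_{\{[s]_{t,\tau_n}<\tau_\infty\}}
-\int_t^{[s]_{t,\tau_n}}e^{-C_\phi(r-t)}\bigl[L^\beta(r,X,J^2_{X_r}\phi_n)-C_\phi\phi_n(X_r)\bigr]\mathbb I_{\{r<\tau_\infty\}}\,dr
\]
is a $\mathbb P$-martingale.

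On $\{t\le r<\tau_n\}$ the path lies in $D_n$, so $J^2\phi_n=J^2\phi$ there. The hypothesis $L^\beta(\cdot,J^2\phi)\le C_\phi\phi$ then makes the integrand nonpositive, and the stopped discounted process is a nonnegative supermartingale. Taking expectations at $s=T$ gives
\[
e^{-C_\phi(T-t)}\,\mathbb E^{\mathbb P}\bigl[\phi_n(X_{T\wedge\tau_n})\mathbb I_{\{T\wedge\tau_n<\tau_\infty\}}\bigr]\le\phi(x).
\]

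The delicate point is the boundary value at $\tau_n$. On $\{\tau_n\le T,\ \tau_n<\tau_{\mathrm{kill}}\}$ the path is alive at $\tau_n$; it is either continuous there or explodes continuously, in which case $\tau_n<\tau_\infty$ still holds, because $\tau_n<\tau_{\mathrm{exp}}$ whenever $D_n\Subset D$. Hence $X_{\tau_n}\in\partial D_n\subset D\setminus D_n$, where $\phi_n=\phi$ by continuity of the agreement on a neighborhood of $\overline D_n$. This gives $\phi_n(X_{\tau_n})\ge m_n^\phi$. All remaining contributions are nonnegative since $\phi_n$ can be taken $\ge1$ (truncate $\phi$ from above only far outside $\overline D_n$). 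Consequently
\[
m_n^\phi\,\mathbb P(\tau_n\le T,\ \tau_n<\tau_{\mathrm{kill}})\le e^{C_\phi(T-t)}\phi(x),
\]
which is \eqref{eq:survival_lower_bound}. I expect this identification of the boundary value to be the main obstacle. It needs the path to be alive at $\tau_n$, i.e.\ no killing jump at exactly $\tau_n$. That case is excluded by the strict inequality $\tau_n<\tau_{\mathrm{kill}}$ together with right continuity at $\tau_n$, which must be checked carefully on $\widetilde\Omega$.

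For the consequence, note that on $\{\tau_{\mathrm{exp}}\le T\}$ one has $\tau_n<\tau_{\mathrm{exp}}=\tau_\infty\le\tau_{\mathrm{kill}}$ for every $n$. So $\{\tau_{\mathrm{exp}}\le T\}\subseteq\{\tau_n\le T,\ \tau_n<\tau_{\mathrm{kill}}\}$, and letting $n\to\infty$ with $m_n^\phi\to\infty$ gives $\mathbb P(\tau_{\mathrm{exp}}\le T)=0$. Taking $T\uparrow\infty$ through the integers yields $\tau_{\mathrm{exp}}=\infty$ a.s. Finally, $\tau_\infty=\tau_{\mathrm{kill}}\wedge\tau_{\mathrm{exp}}$ implies $\tau_\infty=\tau_{\mathrm{kill}}$ almost surely.
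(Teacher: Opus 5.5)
Your proposal is correct and follows essentially the same route as the paper: the discounted Lyapunov process $e^{-C_\phi([s]_{t,\tau_n}-t)}\phi(X_{[s]_{t,\tau_n}})\mathbb I_{\{[s]_{t,\tau_n}<\tau_\infty\}}$ is shown to be a nonnegative supermartingale starting at $\phi(x)$. On a live exit before $T$ it is at least $e^{-C_\phi(T-t)}m_n^\phi$, and the consequence follows by letting $n\to\infty$. The one step to tighten is that a $C_b^\infty$ function cannot agree exactly with a merely $C^2$ function $\phi$ near $\overline D_n$. You therefore need either the paper's step, taking $\phi_j\to\phi$ in $C^2(\overline D_n)$ and passing to the limit in the stopped martingales using local boundedness of $\beta$ on finite horizons, or It\^o's formula for $C^2$ functions via the localized characteristics of Lemma~\ref{prop:localized_characteristics_linear_gmp}.
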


\begin{proof}
Fix \(n\ge n_0\). Choose \(\phi_j\in C_c^\infty(D)\) such that
\(\phi_j\to\phi\) in \(C^2(\overline D_n)\). Each stopped process
\(\widetilde M^{\phi_j,n,\beta}\) is a \(\mathbb P\)-martingale with
respect to \(\widetilde{\mathbb F}\). The \(C^2\)-convergence and local
boundedness of the coefficients permit passage to the limit on each finite
time interval, so \(\widetilde M^{\phi,n,\beta}\), defined by the same
formula with \(\phi\) in place of \(\phi_j\), is also a
\(\mathbb P\)-martingale.

The Lyapunov inequality and integration by parts show that the
nonnegative process
\[
    \Xi_s^n
    :=e^{-C_\phi([s]_{t,\tau_n}-t)}
      \phi(X_{[s]_{t,\tau_n}})
      \mathbb I_{\{[s]_{t,\tau_n}<\tau_\infty\}},
    \qquad s\ge t,
\]
is a \(\mathbb P\)-supermartingale with respect to
\(\widetilde{\mathbb F}\), with \(\Xi_t^n=\phi(x)\).
On \(\{\tau_n\le T,\ \tau_n<\tau_{\mathrm{kill}}\}\), the exit
occurs while the path is alive, \(X_{\tau_n}\in D\setminus D_n\),
and
\[
    \Xi_T^n\ge e^{-C_\phi(T-t)}m_n^\phi.
\]
Taking expectations proves \eqref{eq:survival_lower_bound}.

On \(\{\tau_{\mathrm{exp}}\le T\}\), the path exits every sufficiently
large \(D_n\) before time \(T\) and before killing. Hence
\eqref{eq:survival_lower_bound} and \(m_n^\phi\to\infty\) imply
\(\mathbb P(\tau_{\mathrm{exp}}\le T)=0\).
Letting \(T\to\infty\) excludes finite continuous explosion. Finally,
\(\tau_\infty=\tau_{\mathrm{kill}}\wedge\tau_{\mathrm{exp}}\) gives
the asserted equality of the lifetime and the killing time.
\end{proof}

The preceding exit estimate and local characteristic bounds yield the following tightness criterion.

\begin{proposition}
\label{prop:compactness_criterion}
Let
\[
    \mathbb P_j\in
    \widetilde{\mathcal P}_{t_j,x_j}(L^{\beta_j}),
    \qquad (t_j,x_j)\in K,\quad j\ge1,
\]
where \(K\subset[0,\infty)\times D\) is compact.
Suppose that, for every \(n\ge1\) and \(T<\infty\), there is
\(K_{n,T}<\infty\) such that
\[
    \|\beta_j(s,\omega)\|\le K_{n,T}
    \quad\text{on }\{s\le T,\ s<\tau_n(\omega)\},
    \qquad j\ge1.
\]
Suppose also that a function \(\phi\in C^2(D)\), \(\phi\ge1\), and \(C\ge0\) satisfy
\[
    m_n^\phi:=\inf_{y\in D\setminus D_n}\phi(y)
    \longrightarrow\infty,
    \qquad
    L^{\beta_j}\bigl(s,\omega,J^2_{\omega(s)}\phi\bigr)
    \le C\phi(\omega(s))
    \quad\text{on }\{s<\tau_\infty(\omega)\},
    \quad j\ge1.
\]
Then \((\mathbb P_j)_{j\ge1}\) is tight on \(\widetilde\Omega\)
in the \(J_1\) topology.
\end{proposition}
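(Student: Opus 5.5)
The plan is to check Aldous' tightness criterion for the de-killed continuous coordinate $\overline X$ and, separately, for the killing indicator $N$. Tightness of $(\mathbb P_j)$ in $J_1$ on $\widetilde\Omega$ then follows because a path in $\widetilde\Omega$ is recovered continuously, outside a null set, from the pair $(\overline X,\tau_{\mathrm{kill}})$.

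First, the Lyapunov estimate confines the paths to compacts. The Lyapunov inequality holds uniformly in $j$ with the same $\phi$ and $C$, and the initial conditions lie in the compact set $K$. Proposition~\ref{prop:lyap_tail_bounds}, applied with constant $C$, then gives
\[
\sup_j\mathbb P_j(\tau_n\le T,\ \tau_n<\tau_{\mathrm{kill}})\le e^{CT}\sup_{(s,y)\in K}\phi(y)/m_n^\phi\to0 .
\]
Continuous explosion is therefore excluded uniformly in $j$, and with high probability $\overline X$ stays in $\overline D_n$ on $[0,T]$, at least up to killing. On this event the coefficients are bounded by $K_{n,T}$, uniformly in $j$. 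Lemma~\ref{prop:localized_characteristics_linear_gmp} applied with $O=D_n$ then gives the uniform conditional increment bounds \eqref{eq:conditional_short_time_killing_bound}--\eqref{eq:conditional_short_time_oscillation_bound} from the proof of Proposition~\ref{prop:short_time_control_bounded_coefficients}, for arbitrary stopping times $\sigma\le\theta\le\sigma+\delta$, with constants depending only on $n$ and $T$.

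These bounds yield the two Aldous conditions for $\overline X$ (stopped at $\tau_n$). The oscillation bound gives $\mathbb E[|\overline X_{\theta}-\overline X_\sigma|^2\wedge1]\le c(\delta+\delta^2)$ uniformly in $j$. Compact containment comes from the Lyapunov bound. Since $\overline X$ is continuous, this in fact gives $C$-tightness of the laws of $\overline X$ in $C([0,\infty),\widehat D)$.

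For the killing time, \eqref{eq:conditional_short_time_killing_bound} gives $\mathbb P_j(\sigma<\tau_{\mathrm{kill}}\le\sigma+\delta)\le K_{n,T}\delta$ up to the small exit probability. The laws of $N$ are therefore tight in $D([0,\infty),\{0,1\})$, and the joint laws of $(\overline X,N)$ are tight. Passing to a weakly convergent subsequence by Prokhorov, the limit process has $N$ with a single jump and no fixed-time atoms, because the bound is uniform.

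The remaining step is to map back to $\widetilde\Omega$. The map $(\overline\omega,\tau)\mapsto\overline\omega\mathbb I_{[0,\tau)}+\triangle\mathbb I_{[\tau,\infty)}$ is $J_1$-continuous at pairs where $\overline\omega$ is continuous and $\tau$ is finite, or where $\tau=\infty$. The continuous mapping theorem and the characterization of tightness through sequential compactness then give tightness of $(\mathbb P_j)$.

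The main obstacle is making the uniform Aldous estimates rigorous despite the localization. Stopping times must be cut at $\tau_n$, and the bound must hold for the killed path rather than for $\overline X$. The tool is the compact-containment estimate above, and the mapping step requires care at the killing jump, where $J_1$ allows time-shifts of the single jump. The fallback is the $J_1$ criterion via the modulus $w'$ directly: with only one jump, $w'_T(\omega,\delta)$ is controlled by the continuous modulus of $\overline X$ together with the event that killing occurs within $\delta$ of $0$. The latter has probability at most $K\delta$.
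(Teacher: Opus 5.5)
Your argument is sound but takes a different route from the paper's. The paper also works with the pair $(\overline X_{\cdot\wedge\tau_n},N_{\cdot\wedge\tau_n})$. However, it does not verify Aldous' condition by hand. Instead, it reads off the differential characteristics $\bigl((b_j,k_j),\operatorname{diag}(a_j,0),k_j\delta_e\bigr)$ from Lemma~\ref{prop:localized_characteristics_linear_gmp}, checks Condition~(B) of Liu--Neufeld uniformly in $j$, and invokes their Corollary~3.22 for joint $J_1$-tightness of the stopped pair. It then pushes forward under the map on $D([0,\infty),\overline D_n\times\{0,1\})$ induced by $(y,0)\mapsto y$, $(y,1)\mapsto\triangle$. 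This map is continuous everywhere, since it comes from a uniformly continuous map on a compact space. Only after that does the paper remove the stopping, via the Lyapunov tail bound and a standard localization argument.

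You instead unpack that black box. You reuse the conditional bounds \eqref{eq:conditional_short_time_killing_bound}--\eqref{eq:conditional_short_time_oscillation_bound} already proved in Proposition~\ref{prop:short_time_control_bounded_coefficients}. You treat the continuous and jump components separately; product tightness is enough here, because $\overline X$ is continuous and so product convergence upgrades to joint $J_1$ convergence. You absorb the localization into the Aldous estimates through the uniform live-exit bound, and recombine at the end. This is more self-contained and gives slightly more: $C$-tightness of $\overline X$, and killing times uniformly bounded away from $0$. The cost is that you must check continuity of your recombination map at the relevant points rather than getting it for free. Note that continuity also needs $\tau>0$, which the $J_1$-compactness of the $N$-laws supplies.

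One point to state carefully: tightness of $\overline X$ in $C([0,\infty),\widehat D)$ alone would not suffice. The set $\widehat\Omega$ is not closed there, since paths that approach $\triangle$ and return have limits that touch $\triangle$ and come back, so images could leave $\widetilde\Omega$. What the Lyapunov containment actually gives is tightness in $C([0,\infty),D)$, with compact sets of paths staying in $\overline D_{n_T}$ on each $[0,T]$. It is this that keeps the images inside compact subsets of $\widetilde\Omega$. The paper gets the same effect automatically, because the stopped pair lives in $\overline D_n\times\{0,1\}$.
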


\begin{proof}
Compactness of \(K\) implies that the constant initial histories form a
precompact family and that
\[
    M_K:=\sup\{\phi(x):(t,x)\in K\}<\infty.
\]
Fix a finite horizon \(T\), and choose \(n_0\) large enough that all
initial states lie in \(D_{n_0}\) and \(n_0>\sup_jt_j\).

For \(n\ge n_0\), consider the \(\mathbb R^{d+1}\)-valued process
\[
    Y_s^n
    :=
    \bigl(\overline X_{s\wedge\tau_n},
          N_{s\wedge\tau_n}\bigr),
    \qquad 0\le s\le T.
\]
Write \(\beta_j=(c_j,b_j,a_j)\), \(k_j:=-c_j\), and \(e:=(0,\ldots,0,1)\in\mathbb R^{d+1}\).
With a continuous truncation function equal to the identity on a neighborhood of the closed unit ball,
Lemma~\ref{prop:localized_characteristics_linear_gmp} gives the differential characteristics
\[
    \mathbb I_{\{t_j<s<\tau_n\}}
    \left(
        \begin{pmatrix}b_j\\k_j\end{pmatrix},
        \begin{pmatrix}a_j&0\\0&0\end{pmatrix},
        k_j\delta_e
    \right).
\]
Thus killing contributes a jump of size \(e\), whose intensity
is bounded by the local coefficient bound. In particular,
\[
    \int_{\mathbb R^{d+1}}
        (|z|^2\wedge|z|)\,k_j\delta_e(dz)=k_j,
\]
so Condition~(B) of \cite{supp:liu2019compactness} holds uniformly
in \(j\).
Since \(Y_0^n=(x_j,0)\), compactness of the initial states and
\cite[Corollary~3.22]{supp:liu2019compactness}
give \(J_1\)-tightness of the joint laws.

The map sending \((y,0)\) to \(y\) and \((y,1)\) to
\(\triangle\) is continuous on
\(\overline D_n\times\{0,1\}\), and therefore induces a
continuous map on the corresponding \(J_1\) path spaces.
Its application to \(Y^n\) recovers \(X_{\cdot\wedge\tau_n}\).
Moreover, continuity of the first coordinate and monotonicity
of the binary second coordinate are preserved under
\(J_1\) convergence.
Hence the laws of $X_{\cdot\wedge\tau_n}$ under $\mathbb P_j$ are tight, and every weak limit is concentrated on paths that are continuous while alive and absorbed after killing.

For \(t_j\le T\), Proposition~\ref{prop:lyap_tail_bounds} gives
\[
    \mathbb P_j(\tau_n\le T,\ \tau_n<\tau_{\mathrm{kill}})
    \le \frac{e^{CT}M_K}{m_n^\phi}.
\]
For \(t_j>T\), the left-hand side is zero because the path is constant
on \([0,T]\).  Since \(m_n^\phi\to\infty\), the probability of a live exit tends to zero uniformly in \(j\).
The standard localization argument therefore removes the stopping; see
the proof of
\cite[Theorem~2.3, Sections~3.2.1 and~3.2.4]{supp:criens2020existence}.
The same vanishing exit probability preserves the required path
properties in every cluster law.  Diagonalizing over integer horizons
proves tightness on \(\widetilde\Omega\).
\end{proof}

\subsection{Weak Passage of Martingale and Payoff Relations}\label{subsec:weak_stability_compensated_relations}

\begin{lemma}
\label{lem:skorokhod_weak_passage}
Let \(E\) be a Polish space, and let \(\mu_j,\mu\in\mathcal P(E)\) satisfy \(\mu_j\Rightarrow\mu\). 
Let \(C_j,C\in\mathcal B(E)\) satisfy \(\mu_j(C_j)=\mu(C)=1\) for all \(j\ge1\), and let \(Z_j,Z:E\to\mathbb R\) be Borel functions satisfying
\begin{align}
\label{eq:uniformbounded_ZjZ}
    \sup_{j\ge1}\|Z_j\|_\infty
    \vee
    \|Z\|_\infty
    <\infty.
\end{align}
Then the following statements hold:

\begin{enumerate}[label=(\roman*), ref=(\roman*)]
\item\label{lem:skorokhod_weak_passage_upper}
Suppose that, with the upper half-relaxed limit taken in \(E\) for the
restrictions to the varying domains \(C_j\),
\[
    \limsup_{j\to\infty}^{*}\bigl(Z_j|_{C_j}\bigr)
    \le Z
    \qquad\text{on }C.
\]
Then
\[
    \limsup_{j\to\infty}
    \int_E Z_j\,d\mu_j
    \le
    \int_E Z\,d\mu.
\]

\item\label{lem:skorokhod_weak_passage_exact}
In particular, suppose that
\[
    \limsup_{j\to\infty}^{*}\bigl(Z_j|_{C_j}\bigr)
    \le Z,
    \qquad
    \liminf_{j\to\infty,*}\bigl(Z_j|_{C_j}\bigr)
    \ge Z
    \qquad\text{on }C.
\]
Then
\[
    \int_E Z_j\,d\mu_j
    \longrightarrow
    \int_E Z\,d\mu.
\]
\end{enumerate}
\end{lemma}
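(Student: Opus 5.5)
The plan is to reduce everything to almost sure convergence via the Skorokhod representation theorem and then apply Fatou's lemma. Since $E$ is Polish and $\mu_j\Rightarrow\mu$, there is a probability space $(\Omega',\mathcal F',\mathbb P')$ carrying $E$-valued random variables $\xi_j,\xi$ with laws $\mu_j,\mu$ and $\xi_j\to\xi$ pointwise on $\Omega'$. Then $\int_E Z_j\,d\mu_j=\mathbb E'[Z_j(\xi_j)]$ and $\int_E Z\,d\mu=\mathbb E'[Z(\xi)]$. Because $\mu_j(C_j)=\mu(C)=1$ and there are only countably many indices, we can discard a $\mathbb P'$-null set and assume $\xi_j(\omega')\in C_j$ for every $j$ and $\xi(\omega')\in C$ for every remaining $\omega'$.

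For part~\ref{lem:skorokhod_weak_passage_upper}, fix such an $\omega'$ and put $y_j:=\xi_j(\omega')\in C_j$, $y:=\xi(\omega')\in C$. Since $y_j\to y$ and each $y_j$ lies in the domain of $Z_j|_{C_j}$, the definition of the half-relaxed upper limit on varying domains gives
\[
 \limsup_{j\to\infty}Z_j(y_j)\le\bigl(\limsup_{j\to\infty}^{*}(Z_j|_{C_j})\bigr)(y)\le Z(y).
\]
So $\limsup_j Z_j(\xi_j)\le Z(\xi)$ holds $\mathbb P'$-almost surely. The uniform bound \eqref{eq:uniformbounded_ZjZ} supplies an integrable constant majorant, so the reverse Fatou lemma yields
\[
\limsup_j\mathbb E'[Z_j(\xi_j)]\le\mathbb E'[\limsup_jZ_j(\xi_j)]\le\mathbb E'[Z(\xi)],
\]
which is the claim.

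Part~\ref{lem:skorokhod_weak_passage_exact} follows from part (i) applied to $Z_j$ and to $-Z_j$. The lower relaxed condition is $\limsup^*_j(-Z_j|_{C_j})=-\liminf_{j,*}(Z_j|_{C_j})\le -Z$, so the limsup of the integrals is at most $\int_E Z\,d\mu$ and the liminf is at least $\int_E Z\,d\mu$.

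The only delicate point is the bookkeeping on varying domains. One must check that the almost sure sequence $y_j$ stays in $C_j$, which is the role of the countable null-set removal, and that the relaxed limit is evaluated at a point $y\in C$ where the hypothesis is assumed. No measurability issue arises beyond the Borel measurability of $Z_j$ and $Z$, since it is only needed for the expectations to be defined.
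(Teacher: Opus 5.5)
Your proposal is correct and matches the paper's proof essentially step for step. Both use the Skorokhod representation and discard a single null set so that the representing variables lie in $C_j$ and $C$. Both then get the pathwise bound $\limsup_j Z_j(\xi_j)\le Z(\xi)$ from the relaxed upper limit, apply the reverse Fatou inequality using the uniform bound, and obtain part (ii) by applying part (i) to $(-Z_j,-Z)$.
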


\begin{proof}
By Skorokhod's representation theorem
\cite[Theorem~6.7]{supp:billingsley2013convergence}, realize the laws
\(\mu_j,\mu\) by random variables \(Y_j,Y\) on a common probability
space such that \(Y_j\to Y\) almost surely.  Outside a single null
set, we also have \(Y_j\in C_j\) for every \(j\) and \(Y\in C\).
The hypothesis in part~\ref{lem:skorokhod_weak_passage_upper} therefore
gives \(\limsup_j Z_j(Y_j)\le Z(Y)\) almost surely.
The uniform bound and the Fatou inequality yield
\[
    \limsup_{j\to\infty}\int_E Z_j\,d\mu_j
    \le \mathbb E\!\left[\limsup_{j\to\infty}Z_j(Y_j)\right]
    \le \mathbb E[Z(Y)]
    =\int_E Z\,d\mu.
\]
Applying this result to both \((Z_j,Z)\) and \((-Z_j,-Z)\)
proves part~\ref{lem:skorokhod_weak_passage_exact}.
\end{proof}

The compactly supported mark in the next lemma separates genuine killing
inside the state space from continuous explosion at the cemetery boundary.
This localization is essential because the unmarked killing indicator is not
continuous along sequences whose terminal jumps escape spatially and converge
to continuous explosion.

\begin{lemma}
\label{lem:weak_stability_locally_bounded_killing_compensators}
Let \(\mathbb P_j\Rightarrow\mathbb P\) weakly in \(\mathfrak M\), and
assume
\[
    \mathbb P_j(N_0=0)=\mathbb P(N_0=0)=1,
    \qquad j\ge1.
\]
For every \(j\ge1\), let \(\kappa^j\) be a nonnegative
\(\widetilde{\mathbb F}\)-progressively measurable process such that
\begin{equation}
\label{eq:weak_killing_compensator_sequence}
    N_s-
    \int_0^s\kappa_r^j\mathbb I_{\{r<\tau_\infty\}}\,dr,
    \qquad s\ge0,
\end{equation}
is a \(\mathbb P_j\)-local martingale with respect to
\(\widetilde{\mathbb F}^{\,\mathbb P_j}\).  Suppose moreover that, for every
\(H>0\) and compact set \(K\Subset D\),
\begin{equation}
\label{eq:weak_killing_compensator_local_bound}
    C_{H,K}
    :=
    \sup_{j\ge1}
    \left\|
        \kappa^j\mathbb I_{\{\cdot<\tau_\infty\}}
        \mathbb I_{\{\overline X\in K\}}
    \right\|_{L^\infty(
        [0,H]\times\widetilde\Omega,
        dr\otimes d\mathbb P_j
    )}
    <\infty.
\end{equation}
Then
\begin{equation}
\label{eq:weak_limit_fixed_killing_avoidance}
    \mathbb P(\tau_{\mathrm{kill}}=h)=0,
    \qquad h>0,
\end{equation}
and, for every \(a\ge0\) and open set \(O\Subset D\),
\begin{equation}
\label{eq:weak_limit_exit_killing_avoidance}
    \mathbb P\bigl(
        \tau_{\mathrm{kill}}=\widehat\tau_O^{\,a}<\infty,
        \ \overline X_a\in O
    \bigr)=0.
\end{equation}
\end{lemma}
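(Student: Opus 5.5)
The plan is to transport the killing mark to the limit through a compactly supported marked functional of the killing jump, and to show that the limiting jump measure has a compensator that is absolutely continuous in time while the path stays in a compact set. Both \eqref{eq:weak_limit_fixed_killing_avoidance} and \eqref{eq:weak_limit_exit_killing_avoidance} then follow, because a continuous compensator cannot charge a fixed time, nor a predictable time such as a live exit time.

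Fix a compact $K\Subset D$, a cutoff $\chi\in C_c(D)$ with $0\le\chi\le1$ supported in $K$, and define the marked killing process
\[
N^\chi_s:=\chi(\overline X_{\tau_{\mathrm{kill}}})\,\mathbb I_{\{\tau_{\mathrm{kill}}\le s\}},
\]
where $\overline X_{\tau_{\mathrm{kill}}}=X_{\tau_{\mathrm{kill}}-}$. Under $\mathbb P_j$, the compensator of $N^\chi$ is $\int_0^s\chi(\overline X_r)\kappa^j_r\mathbb I_{\{r<\tau_\infty\}}\,dr$. This is obtained by integrating the predictable process $\chi(\overline X_{r-})$ against \eqref{eq:weak_killing_compensator_sequence}. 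By \eqref{eq:weak_killing_compensator_local_bound} this density is bounded by $C_{H,K}$ on $[0,H]$. Hence, for $0\le u<v\le H$ and $F\ge0$ bounded and $\widetilde{\mathcal F}_u$-measurable,
\begin{equation*}
\mathbb E^{\mathbb P_j}\bigl[F(N^\chi_v-N^\chi_u)\bigr]\le C_{H,K}(v-u)\,\mathbb E^{\mathbb P_j}[F].
\end{equation*}
The key step is to pass this to $\mathbb P$. On $\widetilde\Omega$ with the $J_1$ topology, the functional $\omega\mapsto N^\chi_v(\omega)$ is continuous at every $\omega$ whose killing time differs from $v$. The cutoff is what makes this work. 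A terminal jump from a live point $y\in K$ to $\triangle$ converges in $J_1$ only to jumps with nearby pre-jump location. Sequences whose jump location escapes to $\triangle$ (continuous explosion in the limit) have $\chi=0$ eventually.

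For the passage itself, I would first smooth the indicator $\mathbb I_{\{\tau_{\mathrm{kill}}\le v\}}$ in time, replacing $v$ by averages over $[v,v+\delta]$; the bound above is stable under such averaging. I would also restrict $F$ to continuous cylinder functions. Lemma~\ref{lem:skorokhod_weak_passage} then transfers the inequality to $\mathbb P$, up to an error that vanishes as $\delta\searrow0$. By monotone class (Proposition~\ref{lem:countable_determining_cylinder_classes}), $N^\chi$ under $\mathbb P$ has compensator dominated by $C_{H,K}\,dr$.

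Assertion \eqref{eq:weak_limit_fixed_killing_avoidance} follows by taking $u=h-\varepsilon$, $v=h+\varepsilon$ and $F=1$, which gives $\mathbb P(\tau_{\mathrm{kill}}=h,\ \overline X_{h-}\in\{\chi=1\})\le 2C_{H,K}\varepsilon$. Letting $\varepsilon\to0$ and exhausting $D$ by compacts $K$ gives the claim, since a kill at $h$ happens from some live point of $D$.

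For \eqref{eq:weak_limit_exit_killing_avoidance}, choose $K\supset\overline O$ with $\chi=1$ on $\overline O$. On $\{\overline X_a\in O\}$ the time $\widehat\tau^{\,a}_O$ is a predictable time for $\overline X$: the path is continuous, so the exit is announced by the exits from shrinking neighbourhoods of the complement. Since $N^\chi$ has a continuous compensator, $N^\chi$ is quasi-left-continuous, so $\mathbb E[\Delta N^\chi_{\widehat\tau}]=0$. At the exit point $\overline X\in\partial O$, where $\chi=1$, this forces $\mathbb P(\tau_{\mathrm{kill}}=\widehat\tau^{\,a}_O<\infty,\ \overline X_a\in O)=0$.

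The main obstacle I expect is the $J_1$-continuity step for $N^\chi_v$ in the limit. The killing time of the limit path may coincide with the jump time at which the cutoff cannot separate continuous explosion. I would handle this by the time-averaging trick together with the cutoff, and check it carefully at points where the limit jump lies exactly at $v$.
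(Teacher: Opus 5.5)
Your proposal is correct and follows essentially the same route as the paper: a compactly supported mark $\chi$, the $J_1$-continuity of $N^\chi$ away from continuous explosion, transfer of the locally bounded compensator density to the limit, and the fact that a continuous compensator charges neither fixed times nor predictable times such as $\widehat\tau_O^{\,a}$. The differences are only in the limit step. You pass just the one-sided bound $\mathbb E[F(N^\chi_v-N^\chi_u)]\le C_{H,K}(v-u)\,\mathbb E[F]$, so that $N^\chi_t-C_{H,K}t$ is a $\mathbb P$-supermartingale, and you handle possible fixed-time atoms by time averaging; the paper instead identifies the limiting compensator through weak limits of the measures $\Lambda_j$, a predictable projection, and a countable set of continuity times. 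Note also that your fixed-time claim needs only $F=1$. Proving it first therefore makes every continuous cylinder multiplier $\mathbb P$-a.s.\ continuous for the subsequent monotone-class step.
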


\begin{proof}
For \(\chi\in C_c(D;[0,1])\), extend \(\chi\) continuously to
\(\widehat D\) by \(\chi(\triangle)=0\), and define
\begin{equation}
\label{eq:marked_killing_process}
    N_s^\chi
    :=
    \int_{(0,s]}\chi(\overline X_r)\,dN_r
    =
    \chi(\overline X_{\tau_{\mathrm{kill}}})
    \mathbb I_{\{\tau_{\mathrm{kill}}\le s\}},
    \qquad s\ge0,
\end{equation}
where the last expression is zero on
\(\{\tau_{\mathrm{kill}}=\infty\}\).  We show that \(N^\chi\) has under
\(\mathbb P\) a predictable compensator of the form
\[
    A_s^\chi=\int_0^s\alpha_r^\chi\,dr,
    \qquad s\ge0,
\]
where \(\alpha^\chi\) is nonnegative and locally bounded.  In particular,
for every \(\widetilde{\mathbb F}^{\,\mathbb P}\)-predictable stopping time
\(\sigma\),
\begin{equation}
\label{eq:predictable_marked_killing_avoidance}
    \mathbb P\bigl(
        0<\sigma=\tau_{\mathrm{kill}}<\infty,
        \ \chi(\overline X_\sigma)>0
    \bigr)=0.
\end{equation}

We show that $\Psi_\chi:\widetilde\Omega\to D([0,\infty),[0,1])$, defined by $\Psi_\chi(\widetilde\eta):=N^\chi(\widetilde\eta)$, is continuous at every $\widetilde\eta$ with $\tau_{\mathrm{kill}}(\widetilde\eta)>0$.
Suppose that $\widetilde\eta_j\to\widetilde\eta$ in $J_1$, and fix a continuity time $H>0$ of $\widetilde\eta$.
By the definition of $J_1$ convergence, there are increasing homeomorphisms $\lambda_j:[0,H]\to[0,H]$ such that $\lambda_j\to\mathrm{id}$ and $\widetilde\eta_j\circ\lambda_j\to\widetilde\eta$ uniformly, the latter in the metric of $\widehat D$.
Consequently, $h_j:=\chi\circ\widetilde\eta_j\circ\lambda_j$ converges uniformly to $h:=\chi\circ\widetilde\eta$.
Each of these functions is continuous except possibly for one downward jump at killing.
The size of this decrease equals the corresponding increase of $N^\chi$.

Set $\varepsilon_j:=\|h_j-h\|_\infty$.
Taking left limits gives $|\Delta h_j(r)-\Delta h(r)|\le2\varepsilon_j$ for every $r\in(0,H]$, where $\Delta h(r):=h(r)-h(r-)$.
If $h$ is continuous, the jump size of
$N^\chi(\widetilde\eta_j)\circ\lambda_j$ is therefore at most
$2\varepsilon_j$, while $N^\chi(\widetilde\eta)$ is zero
on $[0,H]$.
If $h$ has a nonzero jump at $\zeta\in(0,H)$, the same
inequality implies that $h_j$ also jumps at $\zeta$
for all sufficiently large $j$, and the jump sizes converge.
In either case,
$N^\chi(\widetilde\eta_j)\circ\lambda_j
\to N^\chi(\widetilde\eta)$ uniformly on $[0,H]$.
Since such continuity times $H$ can be chosen arbitrarily
large, this proves the claimed $J_1$ continuity.

Since \(\chi(\overline X)\) is bounded and predictable, stochastic
integration in \eqref{eq:weak_killing_compensator_sequence} shows that
\begin{equation}
\label{eq:marked_killing_compensator_under_Pj}
    N_s^\chi-
    \int_0^s
        \chi(\overline X_r)\kappa_r^j\mathbb I_{\{r<\tau_\infty\}}\,dr,
    \qquad s\ge0,
\end{equation}
is a \(\mathbb P_j\)-local martingale.  On each finite horizon it is a
true martingale, by
\eqref{eq:weak_killing_compensator_local_bound} and the boundedness of
\(N^\chi\).
Choose \(H>0\) which is a \(\mathbb P\)-almost-sure continuity time of
both \(X\) and \(N^\chi\).  
Put \(K:=\operatorname{supp}\chi\), and define finite measures on \([0,H]\times\Omega^{\mathrm{cad}}\) by
\begin{equation}
\label{eq:marked_compensator_measures}
    \Lambda_j(B)
    :=
    \mathbb E^{\mathbb P_j}\!\left[
        \int_0^H
            \mathbb I_B(r,X)
            \chi(\overline X_r)\kappa_r^j\mathbb I_{\{r<\tau_\infty\}}\,dr
    \right].
\end{equation}
Then
\begin{equation}
\label{eq:marked_compensator_measure_domination}
    \Lambda_j
    \le
    C_{H,K}\,(dr\otimes\mathbb P_j).
\end{equation}
The weak convergence of \(\mathbb P_j\) makes
\((\Lambda_j)_{j\ge1}\) tight.  Passing to a subsequence, we may assume
that \(\Lambda_j\Rightarrow\Lambda\).  Since
\(dr\otimes\mathbb P_j\Rightarrow dr\otimes\mathbb P\),
\eqref{eq:marked_compensator_measure_domination} gives
\[
    \Lambda\le C_{H,K}\,(dr\otimes\mathbb P).
\]
Hence there is a Borel function
\(\alpha^{\chi,H}:[0,H]\times\Omega^{\mathrm{cad}}
\to[0,C_{H,K}]\) such that
\begin{equation}
\label{eq:marked_compensator_limit_density}
    \Lambda(dr,d\widetilde\eta)
    =
    \alpha_r^{\chi,H}(\widetilde\eta)
    \,dr\,\mathbb P(d\widetilde\eta).
\end{equation}

Choose a countable dense set \(\mathbb T_H\subset[0,H]\), containing \(0\) and \(H\), every element of which is a \(\mathbb P\)-almost-sure continuity time of both \(X\) and \(N^\chi\). 
Let \(0\le s_1<s_2\le H\) belong to \(\mathbb T_H\), and let \(F\) be a bounded continuous cylinder function with observation times in \(\mathbb T_H\cap[0,s_1]\). 
From
\eqref{eq:marked_killing_compensator_under_Pj},
\begin{equation}
\label{eq:marked_compensator_martingale_identity_j}
    \mathbb E^{\mathbb P_j}\!\left[
        F\bigl(N_{s_2}^\chi-N_{s_1}^\chi\bigr)
    \right]
    =
    \int_{(s_1,s_2]\times\Omega^{\mathrm{cad}}}
        F(\widetilde\eta)\,
        \Lambda_j(dr,d\widetilde\eta).
\end{equation}
The pathwise continuity of \(\Psi_\chi\) and the choice of
\(\mathbb T_H\) pass the left-hand side to the limit.  On the right-hand
side, the function
\[
    (r,\widetilde\eta)
    \longmapsto
    \mathbb I_{(s_1,s_2]}(r)F(\widetilde\eta)
\]
is \(\Lambda\)-almost surely continuous: the time boundaries are
\(\Lambda\)-null by \eqref{eq:marked_compensator_limit_density}, and the
cylinder discontinuity set is \(\mathbb P\)-null.  Hence
\eqref{eq:marked_compensator_martingale_identity_j} and
\eqref{eq:marked_compensator_limit_density} yield
\begin{equation}
\label{eq:marked_compensator_martingale_identity_limit}
    \mathbb E^{\mathbb P}\!\left[
        F\bigl(N_{s_2}^\chi-N_{s_1}^\chi\bigr)
    \right]
    =
    \mathbb E^{\mathbb P}\!\left[
        F\int_{s_1}^{s_2}\alpha_r^{\chi,H}\,dr
    \right].
\end{equation}

Replace \(\alpha^{\chi,H}\) by its predictable projection under the
\(\mathbb P\)-usual augmentation.  This preserves
\eqref{eq:marked_compensator_martingale_identity_limit} and the bounds
\(0\le\alpha^{\chi,H}\le C_{H,K}\).  A monotone-class argument, followed
by right-continuous extension from \(\mathbb T_H\), now shows that
\[
    N_s^\chi-
    \int_0^s\alpha_r^{\chi,H}\,dr,
    \qquad 0\le s\le H,
\]
is a \(\mathbb P\)-martingale.  Taking an increasing sequence of such
continuity horizons \(H\uparrow\infty\), uniqueness of predictable
compensators makes these constructions consistent.  They therefore define
a globally predictable density \(\alpha^\chi\), which is bounded on every
compact time interval.
The compensator \(A^\chi\) is continuous.  
Hence $\mathbb E^{\mathbb P}\!\left[\Delta N_\sigma^\chi\mathbb I_{\{0<\sigma<\infty\}}\right]=0$ for every predictable stopping time \(\sigma\). 
Since \(N^\chi\) is nondecreasing, this proves \eqref{eq:predictable_marked_killing_avoidance}.

Choose \(\chi_m\in C_c(D;[0,1])\) equal to one on \(\overline D_m\).  On \(\{\tau_{\mathrm{kill}}<\infty\}\), the
pre-killing state belongs to \(D\), and hence
\(\chi_m(\overline X_{\tau_{\mathrm{kill}}})=1\) for some \(m\).
Applying \eqref{eq:predictable_marked_killing_avoidance} to the
deterministic stopping time \(h>0\), and then taking the countable union over
\(m\), proves \eqref{eq:weak_limit_fixed_killing_avoidance}.

Finally, set \(\sigma:=\widehat\tau_O^{\,a}\).  The continuity of
\(\overline X\) makes \(\sigma\) predictable, by the standard announcing
sequence obtained from the distance of \(\overline X\) to \(D\setminus O\).
On
\(\{\overline X_a\in O,\ \sigma<\infty\}\), one has
\(\overline X_\sigma\in\partial O\).  Choose
\(\chi\in C_c(D;[0,1])\) equal to one on \(\overline O\), and apply
\eqref{eq:predictable_marked_killing_avoidance}.  This proves
\eqref{eq:weak_limit_exit_killing_avoidance} and completes the proof.
\end{proof}

We recall the definition of \((\mathbb P,h)\)-regularity from the main text.
For \(\mathbb P\in\mathfrak M\), \(h>0\), an open set \(O\Subset D\), and \(a\in[0,h)\), set \(\sigma_{h,O}^{a}:=(h-a)\wedge\rho_O^{\,0}\).
The set \(O\) is called \emph{\((\mathbb P,h)\)-regular} if the map \(\mathsf{End}_{h,O}:\widetilde\Omega\times[0,h)\to[0,h]\times\widehat D\times\{0,1\}\), defined by
\[
    \mathsf{End}_{h,O}(\widetilde\omega,a)
    :=\Bigl(
        \sigma_{h,O}^{a}(\widetilde\omega),
        X_{\sigma_{h,O}^{a}(\widetilde\omega)}(\widetilde\omega),
        \mathbb I_{\{\sigma_{h,O}^{a}(\widetilde\omega)
                     <\tau_\infty(\widetilde\omega)\}}
    \Bigr),
\]
is continuous at \((\widetilde\omega,0)\) for
\(\mathbb P\)-almost every \(\widetilde\omega\), where the domain carries
the product of the relative \(J_1\) and Euclidean topologies.

\begin{proposition}
\label{lem:regular_inner_exhaustion}
Let \(\mathbb P\in\mathfrak M\) satisfy
\(\mathbb P(X_0\in D)=1\), let
\(\mathscr H\subset(0,\infty)\) be finite, and assume that
\(\mathbb P(\tau_{\mathrm{kill}}=h)=0\) for every
\(h\in\mathscr H\).
If \(K\Subset O\Subset D\), with \(O\) open, then there exists an increasing sequence of open sets \((O_k)_{k\ge1}\) such that \(K\Subset O_1\) and \((O_k)_{k\ge1}\) is a \((\mathbb P,h)\)-regular exhaustion of \(O\) for every \(h\in\mathscr H\).
Moreover,
\begin{equation}
\label{eq:regular_inner_exit_convergence}
    \rho_{O_k}^{\,0}\uparrow\rho_O^{\,0},
    \qquad \mathbb P\text{-almost surely}.
\end{equation}
\end{proposition}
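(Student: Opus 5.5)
Construct the exhaustion from a one-parameter family of sublevel sets of a smooth function and discard the countably many bad levels. Fix $\psi\in C^\infty(D)$ with $\psi>0$ on $O$, $\psi=0$ on $D\setminus O$, and $\psi\ge c_0>0$ on a neighborhood of $\overline K$; for instance, a smoothed distance to $\partial O$. For $\lambda\in(0,c_0)$ set $U_\lambda:=\{\psi>\lambda\}$. Then $K\Subset U_\lambda\Subset O$, the sets decrease in $\lambda$, and $\bigcup_\lambda U_\lambda=O$. We will choose $\lambda_k\downarrow0$ from a set of full Lebesgue measure and put $O_k:=U_{\lambda_k}$.

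\textbf{Regularity at a fixed level.} Fix $\widetilde\omega$ and $h\in\mathscr H$. For $\lambda\in(0,c_0)$, define the live exit time $e_\lambda:=\inf\{s\ge0:\overline X_s\notin U_\lambda\}$ and the running maximum $M_\lambda(s):=\max_{r\le s}\,(\lambda-\psi(\overline X_r))$. The map $\mathsf{End}_{h,U_\lambda}$ is continuous at $(\widetilde\omega,0)$ whenever three conditions hold.
\begin{enumerate}[label=(\alph*)]
\item The path crosses the level strictly: for every $\delta>0$ there is a time in $[e_\lambda,e_\lambda+\delta)$ at which $\psi(\overline X)<\lambda$, or else $e_\lambda\ge h$.
\item $e_\lambda\ne h$.
\item $\tau_{\mathrm{kill}}\notin\{h,e_\lambda\}$ when these times are finite.
\end{enumerate}
With these in hand, continuity follows from $J_1$ convergence $\widetilde\omega_j\to\widetilde\omega$. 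Because the paths are continuous before killing, the time changes $\lambda_j$ converge uniformly near the relevant times. Condition (a) forces $e_\lambda(\widetilde\omega_j)\to e_\lambda(\widetilde\omega)$, using the standard argument for strict exit from a level set. Conditions (b) and (c) make the stopped time $(h-a)\wedge\rho$, the stopped position and the survival indicator converge as $a\to0$. Here killing is the only jump, so the survival indicator is continuous provided the stopping time is not a killing time.

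\textbf{Most levels are good.} We verify that for $\mathbb P$-a.e.\ $\widetilde\omega$ the set of bad $\lambda$ is Lebesgue null, and then apply Fubini.
\begin{itemize}
\item For (a): failure means the path touches the level $\lambda$ without immediately going below it. The map $\lambda\mapsto e_\lambda$ is monotone, and failure of (a) occurs exactly at its jump points (the standard fact for first passage of a continuous real function, here $s\mapsto\psi(\overline X_s)$). These are at most countably many $\lambda$.
\item For (b): $e_\lambda=h$ happens for at most one $\lambda$, unless $\psi(\overline X_h)$ equals a level already touched. By monotonicity this again occurs only at countably many $\lambda$.
\item For (c): the condition $\tau_{\mathrm{kill}}=h$ is $\mathbb P$-null by hypothesis. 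The condition $\tau_{\mathrm{kill}}=e_\lambda<\infty$ holds for at most one $\lambda$ per path, namely $\lambda=\psi(\overline X_{\tau_{\mathrm{kill}}-})$, unless the path reaches that level at the same time, which again involves a single value.
\end{itemize}
Hence $\int\mathbb P(\text{$\lambda$ bad for some }h\in\mathscr H)\,d\lambda=0$ by Fubini; the joint measurability of the bad set in $(\lambda,\widetilde\omega)$ needs a short check. Therefore almost every $\lambda$ is $(\mathbb P,h)$-good for all $h$ in the finite set $\mathscr H$ simultaneously. Choose good levels $\lambda_k\downarrow0$, strictly decreasing so that $O_k\Subset O_{k+1}$.

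\textbf{Exit-time convergence.} The sequence $\rho_{O_k}^{\,0}$ increases and is bounded by $\rho_O^{\,0}$. Let $\rho^*$ be its limit and suppose $\rho^*<\rho_O^{\,0}$. Since the path is killed at $\tau_{\mathrm{kill}}$, the case $\rho_{O_k}^{\,0}=\tau_{\mathrm{kill}}$ eventually forces $\rho^*=\tau_{\mathrm{kill}}\ge\rho_O^{\,0}$, a contradiction. Otherwise we have live exits with $\psi(\overline X_{\rho_{O_k}})\le\lambda_k\to0$, and continuity gives $\psi(\overline X_{\rho^*})=0$. The live exit from $O$ then occurs by time $\rho^*$, again a contradiction. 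This gives \eqref{eq:regular_inner_exit_convergence}.

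\textbf{Expected main obstacle.} The hardest part will be the careful $J_1$-continuity argument combining the exit time with the killing jump, together with the measurability needed to apply Fubini.
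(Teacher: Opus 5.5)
Your overall route is the paper's. It takes sublevel sets of a smooth function (the paper uses $O_r=\{y\in O:\psi(y)<r\}$ for a smooth exhaustion function $\psi$ of $O$ and lets $r\uparrow\infty$). It gets a Lebesgue-null set of bad levels from countability of discontinuities of monotone maps together with Fubini. It then picks a monotone sequence of good levels.

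The bookkeeping differs slightly. The paper keeps levels at which the full exit time $\widetilde\omega\mapsto\rho^{\,0}_{O_r}(\widetilde\omega)$, killing included, is $\mathbb P$-a.s.\ $J_1$-continuous. It obtains this from the sandwich $\rho^{\,0}_{O_{r-\varepsilon}}(\widetilde\omega)\le\liminf_j\rho^{\,0}_{O_r}(\widetilde\omega_j)\le\limsup_j\rho^{\,0}_{O_r}(\widetilde\omega_j)\le\rho^{\,0}_{O_{r+\varepsilon}}(\widetilde\omega)$ at continuity levels of $r\mapsto\rho^{\,0}_{O_r}(\widetilde\omega)$. It then discards levels charging $\{\tau_{\mathrm{kill}}<\infty,\ X_{\tau_{\mathrm{kill}}-}\in\partial O_r\}$ or $\{\rho^{\,0}_{O_r}=h\}$. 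Your strict-crossing condition (a) for $\psi(\overline X)$ and your exclusions $e_\lambda\ne h$, $\tau_{\mathrm{kill}}\ne h$ play these roles. Your condition $\tau_{\mathrm{kill}}\ne e_\lambda$ replaces the paper's $X_{\tau_{\mathrm{kill}}-}\notin\partial O_r$. Your per-path count of bad levels, the Fubini step and your proof of $\rho^{\,0}_{O_k}\uparrow\rho^{\,0}_O$ are fine.

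The gap is in the continuity verification for paths killed inside $U_\lambda$ before time $h$. You say the survival indicator is continuous ``provided the stopping time is not a killing time''. But conditions (a)--(c) do not exclude the event $\{\tau_{\mathrm{kill}}<e_\lambda\wedge h\}$, and they should not, since it typically has positive probability. On that event $\sigma=\rho^{\,0}_{U_\lambda}=e_\lambda\wedge\tau_{\mathrm{kill}}=\tau_{\mathrm{kill}}$ is a killing time, $X_\sigma=\triangle$, and the third coordinate of $\mathsf{End}_{h,U_\lambda}$ is $0$. Relatedly, convergence of $e_\lambda(\widetilde\omega_j)$ alone does not give convergence of $\rho^{\,0}_{U_\lambda}(\widetilde\omega_j)=e_\lambda(\widetilde\omega_j)\wedge\tau_{\mathrm{kill}}(\widetilde\omega_j)$, because killing times are not $J_1$-continuous in general.

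The missing argument, which is the case the paper works out in detail, runs as follows.
\begin{itemize}
\item Put $\zeta:=\tau_{\mathrm{kill}}(\widetilde\omega)<e_\lambda\wedge h$. Then $C:=\widetilde\omega([0,\zeta))\cup\{\widetilde\omega(\zeta-)\}$ is a compact subset of the open set $U_\lambda$.
\item Since $\widetilde\omega$ is continuous at $h$, choose time changes $\mu_j$ on $[0,h]$ with $\mu_j\to\mathrm{id}$ and $\widetilde\omega_j\circ\mu_j\to\widetilde\omega$ uniformly in $d_{\widehat D}$.
\item For large $j$, $\widetilde\omega_j$ stays in $U_\lambda$ on $[0,\mu_j(\zeta))$. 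Moreover $\widetilde\omega_j(\mu_j(\zeta)-)\to\widetilde\omega(\zeta-)\in U_\lambda$ and $\widetilde\omega_j(\mu_j(\zeta))\to\triangle$.
\item These limits differ, so $\widetilde\omega_j$ jumps at $\mu_j(\zeta)$. In $\widetilde\Omega$ the only discontinuity is the killing jump, so $\tau_{\mathrm{kill}}(\widetilde\omega_j)=\mu_j(\zeta)=\rho^{\,0}_{U_\lambda}(\widetilde\omega_j)<h-a_j$ eventually.
\item Hence $\sigma_j\to\sigma$, $X_{\sigma_j}(\widetilde\omega_j)=\triangle$, and the indicator is eventually $0$.
\end{itemize}
Add this case to the live-exit case, where uniform convergence on $[0,e_\lambda+\eta]$ with $\eta<\tau_{\mathrm{kill}}-e_\lambda$ rules out premature killing of the approximants, and to the case $\rho^{\,0}_{U_\lambda}>h$. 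With these three cases the regularity verification goes through under exactly your conditions (a)--(c); no further levels need to be discarded.
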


\begin{proof}
By \cite[Proposition~2.28]{supp:lee2013introduction}, we can choose a smooth exhaustion function \(\psi:O\to[0,\infty)\) satisfying
\[
    O_r:=\{y\in O:\psi(y)<r\}\Subset O,
    \qquad r>0.
\]
We first select levels at which exit times are continuous.
For each path \(\widetilde\omega\), the map
\(r\mapsto\rho_{O_r}^{\,0}(\widetilde\omega)\) is nondecreasing
and has at most countably many discontinuities, with values in
\([0,\infty]\).  If \(\widetilde\omega_j\to\widetilde\omega\)
in \(J_1\), the compact inclusions between the sublevel sets and
the time-change characterization of \(J_1\) convergence give, for
\(0<\varepsilon<r\),
\[
    \rho_{O_{r-\varepsilon}}^{\,0}(\widetilde\omega)
    \le\liminf_{j\to\infty}\rho_{O_r}^{\,0}(\widetilde\omega_j)
    \le\limsup_{j\to\infty}\rho_{O_r}^{\,0}(\widetilde\omega_j)
    \le\rho_{O_{r+\varepsilon}}^{\,0}(\widetilde\omega).
\]
Indeed, before exit from \(O_{r-\varepsilon}\), the time-changed
approximating paths remain in \(O_r\); at a finite exit from
\(O_{r+\varepsilon}\), they lie outside \(O_r\).
Letting \(\varepsilon\searrow0\) proves continuity of the exit-time
map at every continuity level of the preceding nondecreasing map.
Fubini's theorem therefore gives a full-Lebesgue-measure set of levels
at which this continuity holds \(\mathbb P\)-almost surely.

We further discard the levels for which either
\begin{align}\label{eq:boundarymeasurenotzero}
    \mathbb P(X_0\in\partial O_r)>0
    \quad\text{or}\quad
    \mathbb P\bigl(\tau_{\mathrm{kill}}<\infty,
        X_{\tau_{\mathrm{kill}}-}\in\partial O_r\bigr)>0.
\end{align}
Each boundary condition singles out at most one level for each path,
so Fubini's theorem again removes only a Lebesgue-null set.
For every \(h\in\mathscr H\), we also discard levels with
\(\mathbb P(\rho_{O_r}^{\,0}=h)>0\).
Since \(X\) is \(\mathbb P\)-almost surely continuous at \(h\),
the equality \(\rho_{O_r}^{\,0}=h>0\) can hold at at most one
level for almost every path.  Thus the remaining set of levels still
has full Lebesgue measure.

Let \(\mathscr L\) be the set of levels \(r>0\) such that \(\rho_{O_r}^{\,0}:\widetilde\Omega\to[0,\infty]\) is \(\mathbb P\)-almost surely continuous in the relative \(J_1\) topology, both probabilities in \eqref{eq:boundarymeasurenotzero} are zero, and \(\mathbb P(\rho_{O_r}^{\,0}=h')=0\) for every \(h'\in\mathscr H\).
The preceding arguments show that \(\mathscr L\) has full Lebesgue measure.
We prove that \(O_r\) is \((\mathbb P,h)\)-regular for every  \(r\in\mathscr L\) and \(h\in\mathscr H\).

Fix \(r\in\mathscr L\) and \(h\in\mathscr H\).
By the definition of \(\mathscr L\) and \(\mathbb P(X_0\in D)=1\), there exists a set \(\widetilde\Omega_{h,r}\) of \(\mathbb P\)-measure one such that, for every \(\widetilde\omega\in\widetilde\Omega_{h,r}\), the map \(\rho_{O_r}^{\,0}\) is continuous at \(\widetilde\omega\),
\(X_0(\widetilde\omega)\in D\setminus\partial O_r\),
\(\rho_{O_r}^{\,0}(\widetilde\omega)\ne h\), and
\(X_{\tau_{\mathrm{kill}}-}(\widetilde\omega)\notin\partial O_r\)
whenever \(\tau_{\mathrm{kill}}(\widetilde\omega)<\infty\).
Fix \(\widetilde\omega\in\widetilde\Omega_{h,r}\), and let
\(\widetilde\omega_j\to\widetilde\omega\) in \(J_1\) and
\(a_j\to0\) in \([0,h)\).
Set
\[
    \rho_j:=\rho_{O_r}^{\,0}(\widetilde\omega_j),
    \qquad\rho:=\rho_{O_r}^{\,0}(\widetilde\omega),\qquad
    \sigma_j:=(h-a_j)\wedge\rho_j,
    \qquad\sigma:=h\wedge\rho.
\]
Continuity of \(\rho_{O_r}^{\,0}\) at \(\widetilde\omega\)
gives \(\rho_j\to\rho\), and hence \(\sigma_j\to\sigma\).

Suppose first that \(X_\sigma(\widetilde\omega)\in D\).
Then \(\widetilde\omega\) is continuous at \(\sigma\), with
right continuity understood when \(\sigma=0\).
Thus \(J_1\) convergence and \(\sigma_j\to\sigma\) give
\(X_{\sigma_j}(\widetilde\omega_j)
\to X_\sigma(\widetilde\omega)\).
Since \(D\) is open in \(\widehat D\), we have
\(X_{\sigma_j}(\widetilde\omega_j)\in D\) for all sufficiently
large \(j\). Consequently,
\(\mathbb I_{\{\sigma_j<\tau_\infty(\widetilde\omega_j)\}}
=\mathbb I_{\{\sigma<\tau_\infty(\widetilde\omega)\}}=1\)
for all sufficiently large \(j\).

Suppose now that \(X_\sigma(\widetilde\omega)=\triangle\).
Since \(X_0(\widetilde\omega)\in D\) and \(\rho\ne h\), we have
\(0<\sigma=\rho=\tau_\infty(\widetilde\omega)<h\).
Moreover, \(\widetilde\omega([0,\rho))\subset O_r\Subset D\),
so \(\widetilde\omega\) cannot reach \(\triangle\) continuously
at \(\rho\). Hence
\(\zeta:=\tau_{\mathrm{kill}}(\widetilde\omega)=\rho\).
The choice of \(\widetilde\Omega_{h,r}\) gives
\(\widetilde\omega(\zeta-)
\in\overline O_r\setminus\partial O_r=O_r\).
Therefore,
\(C:=\widetilde\omega([0,\zeta))
\cup\{\widetilde\omega(\zeta-)\}\)
is a compact subset of \(O_r\).
Since \(\widetilde\omega\) is continuous at \(h>\zeta\),
there are increasing homeomorphisms
\(\lambda_j:[0,h]\to[0,h]\) such that
\(\lambda_j\to\mathrm{id}\) uniformly and
\(\widetilde\omega_j\circ\lambda_j\to\widetilde\omega\)
uniformly in the metric of \(\widehat D\).
Because \(d_{\widehat D}(C,\widehat D\setminus O_r)>0\),
we have \(\widetilde\omega_j(\lambda_j(u))\in O_r\) for every
\(0\le u<\zeta\), for all sufficiently large \(j\).
Uniform convergence, also after taking left limits, gives
\[
    \widetilde\omega_j(\lambda_j(\zeta)-)
    \longrightarrow\widetilde\omega(\zeta-)\in O_r,
    \qquad
    \widetilde\omega_j(\lambda_j(\zeta))
    \longrightarrow\triangle.
\]
The two limits are distinct, so \(\widetilde\omega_j\) has a
jump at \(\lambda_j(\zeta)\) for all sufficiently large \(j\).
By the definition of \(\widetilde\Omega\), this jump occurs
at \(\tau_{\mathrm{kill}}(\widetilde\omega_j)\).
Together with
\(\widetilde\omega_j([0,\lambda_j(\zeta)))\subset O_r\),
this yields
\(\rho_j=\tau_{\mathrm{kill}}(\widetilde\omega_j)
=\lambda_j(\zeta)<h-a_j\)
for all sufficiently large \(j\).
Thus \(\sigma_j=\rho_j\),
\(X_{\sigma_j}(\widetilde\omega_j)=\triangle\), and
\(\mathbb I_{\{\sigma_j<\tau_\infty(\widetilde\omega_j)\}}=0\)
for all sufficiently large \(j\).

In both cases, all three coordinates converge, so
\(\mathsf{End}_{h,O_r}(\widetilde\omega_j,a_j)
\to\mathsf{End}_{h,O_r}(\widetilde\omega,0)\).
Hence \(\mathsf{End}_{h,O_r}\) is continuous at every point
of \(\widetilde\Omega_{h,r}\times\{0\}\), relative to
\(\widetilde\Omega\times[0,h)\).
Since \(\mathbb P(\widetilde\Omega_{h,r})=1\),
\(O_r\) is \((\mathbb P,h)\)-regular.

Since \(\mathscr L\) has full Lebesgue measure, we can choose a sequence \((r_k)_{k\ge1}\subset\mathscr L\) satisfying
\[
    \sup_{y\in K}\psi(y)<r_1<r_2<\cdots\uparrow\infty.
\]
Setting \(O_k:=O_{r_k}\), we obtain a \((\mathbb P,h)\)-regular exhaustion of \(O\) for every \(h\in\mathscr H\), with \(K\Subset O_1\).
Finally, \(\rho_{O_k}^{\,0}\le\rho_O^{\,0}\).
If \(s<\rho_O^{\,0}\), the path range up to time \(s\) is a compact subset of \(O\), and hence is contained in \(O_k\) for all sufficiently large \(k\).
Therefore \(\rho_{O_k}^{\,0}>s\) eventually, which proves \eqref{eq:regular_inner_exit_convergence}.
\end{proof}

\begin{proposition}
\label{lem:regular_stopped_endpoint_passage}
Fix \(h>0\), an open set \(O\Subset D\), and \(x\in O\).  Let
\(\mathbb P_j\Rightarrow\mathbb P\) in \(\mathfrak M\), with
\(\mathbb P_j(X_0=x)=1\) for every \(j\).
Assume that \(\mathbb P(\tau_{\mathrm{kill}}=h)=0\), and let
\(a_j\in[0,h)\) decrease to zero.
Suppose that \(v_j,r_j\in C([a_j,h]\times\overline O)\) are
uniformly bounded and, for some
\(w,r\in\USC_b([0,h]\times O)\), satisfy
\begin{equation}
\label{eq:regular_stopped_value_source_upper_limits}
    \limsup_{j\to\infty}^{*}v_j\le w,
    \qquad
    \limsup_{j\to\infty}^{*}r_j\le r
    \qquad\text{on }[0,h]\times O.
\end{equation}
Continuity of evaluation at time zero gives
\(\mathbb P(X_0=x)=1\).  Thus
Proposition~\ref{lem:regular_inner_exhaustion}, applied with
\(K=\{x\}\) and \(\mathscr H=\{h\}\), supplies a
\((\mathbb P,h)\)-regular exhaustion \((O_k)_{k\ge1}\) of
\(O\) with \(x\in O_1\).  Fix such an exhaustion and set
\[
    \sigma_j^k:=(h-a_j)\wedge\rho_{O_k}^{\,0},
    \qquad
    \sigma^k:=h\wedge\rho_{O_k}^{\,0}.
\]
Then, for every \(k\ge1\),
\begin{equation}
\label{eq:regular_stopped_endpoint_passage}
\begin{aligned}
    \limsup_{j\to\infty}
    \mathbb E^{\mathbb P_j}\!\Bigg[
        &v_j(h-\sigma_j^k,X_{\sigma_j^k})
            \mathbb I_{\{\sigma_j^k<\tau_\infty\}}
        +\int_0^{\sigma_j^k}
            r_j(h-s,X_s)\mathbb I_{\{s<\tau_\infty\}}\,ds
    \Bigg]\\
    \le
    \mathbb E^{\mathbb P}\!\Bigg[
        &w(h-\sigma^k,X_{\sigma^k})
            \mathbb I_{\{\sigma^k<\tau_\infty\}}
        +\int_0^{\sigma^k}
            r(h-s,X_s)\mathbb I_{\{s<\tau_\infty\}}\,ds
    \Bigg].
\end{aligned}
\end{equation}
\end{proposition}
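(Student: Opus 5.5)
The plan is to write each expectation as the integral of a bounded functional on $\widetilde\Omega$. We then apply Lemma~\ref{lem:skorokhod_weak_passage}\ref{lem:skorokhod_weak_passage_upper} separately to the terminal term and to the source term, and add the two resulting inequalities. The limit superior of a sum is bounded by the sum of the limits superior, so this suffices.

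Fix $k$. For the terminal term, set
\[
Z_j(\widetilde\omega):=v_j\bigl(h-\sigma_j^k,X_{\sigma_j^k}\bigr)\mathbb I_{\{\sigma_j^k<\tau_\infty\}},\qquad
Z(\widetilde\omega):=w\bigl(h-\sigma^k,X_{\sigma^k}\bigr)\mathbb I_{\{\sigma^k<\tau_\infty\}}.
\]
Both are uniformly bounded. On the live event the argument $(h-\sigma_j^k,X_{\sigma_j^k})$ lies in $[a_j,h]\times\overline{O_k}\subset[a_j,h]\times\overline O$, since $\sigma_j^k\le h-a_j$ and the path is continuous while alive, so $X_{\sigma_j^k}\in\overline{O_k}$. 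Hence $Z_j$ is well defined on the full-measure set $C_j:=\{X_0=x\}\cap\widetilde\Omega$.

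Let $C$ be the full-$\mathbb P$-measure set of paths at which $\mathsf{End}_{h,O_k}$ is continuous at $(\widetilde\omega,0)$. Take $\widetilde\omega_j\to\widetilde\omega\in C$ with $\widetilde\omega_j\in C_j$. Regularity gives convergence of the triple $(\sigma_j^k,X_{\sigma_j^k},\mathbb I)$. Observe that $\sigma_j^k(\widetilde\omega_j)$ is exactly the first coordinate of $\mathsf{End}_{h,O_k}(\widetilde\omega_j,a_j)$, because $\sigma^{a_j}_{h,O_k}=(h-a_j)\wedge\rho^0_{O_k}$. We then split into two cases. If the limiting indicator is $0$, the indicators vanish eventually, so $\limsup Z_j=0=Z$. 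If it is $1$, then $X_{\sigma^k}\in\overline{O_k}\subset O$ and $(h-\sigma_j^k,X_{\sigma_j^k})\to(h-\sigma^k,X_{\sigma^k})$. The half-relaxed bound~\eqref{eq:regular_stopped_value_source_upper_limits}, valid on $[0,h]\times O$, then yields $\limsup Z_j(\widetilde\omega_j)\le w(h-\sigma^k,X_{\sigma^k})=Z(\widetilde\omega)$. This verifies the hypothesis of Lemma~\ref{lem:skorokhod_weak_passage}\ref{lem:skorokhod_weak_passage_upper}.

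For the source term, set
\[
Z_j':=\int_0^{\sigma_j^k}r_j(h-s,X_s)\mathbb I_{\{s<\tau_\infty\}}\,ds,
\]
and $Z'$ analogously. Take $\widetilde\omega_j\to\widetilde\omega$ in $J_1$ with $\widetilde\omega\in C$, and additionally with $\widetilde\omega$ continuous at $h$ (a $\mathbb P$-a.s.\ property, since $\mathbb P(\tau_{\mathrm{kill}}=h)=0$ and the path is continuous while alive). Then $X_s(\widetilde\omega_j)\to X_s(\widetilde\omega)$ for Lebesgue-a.e.\ $s$, namely at all continuity points of $\widetilde\omega$. We also have $\sigma_j^k\to\sigma^k$ and $\tau_\infty(\widetilde\omega_j)\to\tau_\infty(\widetilde\omega)$ whenever the killing time is below $h$.

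The main obstacle is the indicator $\mathbb I_{\{s<\tau_\infty\}}$ near the killing time. Write the integrand as $\mathbb I_{[0,\sigma_j^k)}(s)\,r_j(h-s,X_s)\mathbb I_{\{s<\tau_\infty\}}$. We treat it in three regimes. First, for $s<\sigma^k\wedge\tau_\infty$, the point $X_s(\widetilde\omega)$ lies in $O_k$ and $X_s(\widetilde\omega_j)\to X_s(\widetilde\omega)$, so the half-relaxed limit of $r_j$ gives $\limsup\le r(h-s,X_s)$. Second, for $s>\sigma^k\wedge\tau_\infty$, the integrand eventually vanishes, using convergence of the killing jump time as in the proof of Proposition~\ref{lem:regular_inner_exhaustion}. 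Third, the single boundary time $s=\sigma^k\wedge\tau_\infty$ is Lebesgue-null.

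The uniform bound on $r_j$ then lets reverse Fatou in $ds$ give $\limsup_j Z_j'(\widetilde\omega_j)\le Z'(\widetilde\omega)$, which is again the hypothesis of Lemma~\ref{lem:skorokhod_weak_passage}\ref{lem:skorokhod_weak_passage_upper}. Combining the two applications of that lemma proves~\eqref{eq:regular_stopped_endpoint_passage}.
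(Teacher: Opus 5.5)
Your proof is correct and follows essentially the paper's route. You use a Skorokhod representation (packaged through Lemma~\ref{lem:skorokhod_weak_passage}\ref{lem:skorokhod_weak_passage_upper}, applied separately to the terminal and source terms), $(\mathbb P,h)$-regularity of $O_k$ for convergence of the stopped endpoint triple, and reverse Fatou first in $ds$ and then in expectation. One simplification: since $\sigma^k\le\rho_{O_k}^{\,0}\le\tau_\infty$ always, the regime $s>\sigma^k$ is handled by $\sigma_j^k\to\sigma^k$ alone, so you need neither convergence of killing times nor continuity of the limit path at $h$.
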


\begin{proof}
Fix \(k\).  Use a Skorokhod representation to realize
\(\mathbb P_j,\mathbb P\) by paths
\(\widetilde\omega_j\to\widetilde\omega\) almost surely in the
\(J_1\) topology, retaining the canonical notation for their coordinates.
The joint continuity in the definition of \((\mathbb P,h)\)-regularity,
applied at \((\widetilde\omega,0)\) along
\((\widetilde\omega_j,a_j)\), gives, almost surely,
\[
    \bigl(\sigma_j^k,X_{\sigma_j^k},
        \mathbb I_{\{\sigma_j^k<\tau_\infty\}}\bigr)
    \longrightarrow
    \bigl(\sigma^k,X_{\sigma^k},
        \mathbb I_{\{\sigma^k<\tau_\infty\}}\bigr).
\]
Moreover, \(\mathbb I_{\{s<\sigma_j^k\}}\to\mathbb I_{\{s<\sigma^k\}}\) for Lebesgue-almost every \(s\in[0,h]\), and \(X_s(\widetilde\omega_j)\to X_s(\widetilde\omega)\in O_k\) whenever \(s<\sigma^k\).
Combining these facts with the preceding convergence and \eqref{eq:regular_stopped_value_source_upper_limits}, uniform boundedness allows us to apply Fatou lemma, first to the time integrals, with the integrands extended by zero outside \([0,\sigma_j^k]\), and then to expectations.
This yields \eqref{eq:regular_stopped_endpoint_passage}.
\end{proof}

For \(a\ge0\), a stopping time \(\rho\), and Borel functions
\(v,q:[0,\infty)\times D\to\mathbb R\) for which the following expression is
well defined, set
\begin{equation}
\label{eq:abstract_compensated_process_notation}
\mathcal Z_s^{a,\rho}[v,q]
:=
 v\bigl([s]_{a,\rho},X_{[s]_{a,\rho}}\bigr)
 \mathbb I_{\{[s]_{a,\rho}<\tau_\infty\}}
 -
 \int_a^{[s]_{a,\rho}}
 q(r,X_r)\mathbb I_{\{r<\tau_\infty\}}\,dr,
 \qquad s\ge0.
\end{equation}

\begin{proposition}
\label{prop:weak_passage_martingale_relations}
Let $(t_j,x_j)\to(t,x)$ in $[0,\infty)\times D$, and let
$\mathbb P_j\Rightarrow\mathbb P$ in $\mathfrak M$, with
\[
    \mathbb P_j(X_s=x_j\text{ for every }0\le s\le t_j)=1.
\]
For each $j$, suppose that there exists a nonnegative
$\widetilde{\mathbb F}$-progressively measurable process $\kappa^j$
such that
\begin{equation}
\label{eq:weak_passage_killing_compensator}
    N_s-\int_0^s
    \kappa_r^j\mathbb I_{\{r<\tau_\infty\}}\,dr,
    \qquad s\ge0,
\end{equation}
is a $\mathbb P_j$-local martingale.
Assume also that, for every $H>0$ and compact $K\Subset D$,
\begin{equation}
\label{eq:weak_passage_killing_density_bound}
    \sup_{j\ge1}
    \left\|
        \kappa^j
        \mathbb I_{\{\cdot<\tau_\infty\}}
        \mathbb I_{\{\overline X\in K\}}
    \right\|_{L^\infty(
        [0,H]\times\widetilde\Omega,\,
        dr\otimes d\mathbb P_j
    )}
    <\infty.
\end{equation}
Then the following assertions hold.
\begin{enumerate}[label=(\roman*), ref=(\roman*)]
\item
\label{item:weak_passage_initial_history}
We have
\begin{equation}
\label{eq:weak_passage_limit_initial_history}
    \mathbb P(X_s=x\text{ for every }0\le s\le t)=1.
\end{equation}

\item
\label{item:weak_passage_supermartingale}
Fix $T>t$, $n\ge1$, and $f\in C_b^\infty(D)$, and set $K_n:=[0,T]\times\overline D_n$.
Let $q_j:[0,T]\times D\to\mathbb R$ be Borel functions and
$q\in\USC([0,T]\times D)$ satisfying
\begin{align}
    &\sup_{j\ge1}\sup_{K_n}|q_j|+\sup_{K_n}|q|
    <\infty,
    \label{eq:weak_stability_moving_source_bound}
    \\
    &\limsup_{j\to\infty}^{*}q_j
    \le q
    \qquad\text{on }K_n,
    \label{eq:weak_stability_moving_source_upper_limit}
\end{align}
where the half-relaxed limit is taken relative to $K_n$.
In the notation of
\eqref{eq:abstract_compensated_process_notation},
regard $f$ as independent of time.
If $\mathcal Z^{t_j,\tau_n}[f,q_j]$ is a
$\mathbb P_j$-supermartingale on $[0,T]$ for all sufficiently
large $j$, then $\mathcal Z^{t,\tau_n}[f,q]$ is a
$\mathbb P$-supermartingale on $[0,T]$.

\item
\label{item:weak_passage_martingale}
In the setting of \ref{item:weak_passage_supermartingale},
assume additionally that $q\in C([0,T]\times D)$ and
\begin{equation}
\label{eq:weak_stability_moving_source_lower_limit}
    \liminf_{j\to\infty,*}q_j
    \ge q
    \qquad\text{on }K_n,
\end{equation}
with the half-relaxed limit again taken relative to $K_n$.
If $\mathcal Z^{t_j,\tau_n}[f,q_j]$ is a
$\mathbb P_j$-martingale on $[0,T]$ for all sufficiently
large $j$, then $\mathcal Z^{t,\tau_n}[f,q]$ is a
$\mathbb P$-martingale on $[0,T]$.
\end{enumerate}
\end{proposition}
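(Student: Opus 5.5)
The plan is to prove the three assertions in order, all by passage to the limit along a Skorokhod representation. Lemma~\ref{lem:weak_stability_locally_bounded_killing_compensators} is the source of killing-time regularity under the limit law $\mathbb P$; its hypotheses are exactly \eqref{eq:weak_passage_killing_compensator} and \eqref{eq:weak_passage_killing_density_bound}.

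\emph{Part (i).} Fix a time $s<t$ with $\mathbb P(\Delta X_s\neq0)=0$. Then $t_j>s$ for all large $j$, so $X_s=x_j$ holds $\mathbb P_j$-a.s. The evaluation $X_s$ is $\mathbb P$-a.s. continuous in $J_1$, and $x_j\to x$, so $\mathbb P(X_s=x)=1$. Such times are dense in $[0,t)$, so right continuity gives $X=x$ on $[0,t)$, $\mathbb P$-a.s., and hence $X_{t-}=x$. Paths in $\widetilde\Omega$ can jump only by killing. For $t>0$, \eqref{eq:weak_limit_fixed_killing_avoidance} gives $\mathbb P(\tau_{\mathrm{kill}}=t)=0$, so $X_t=X_{t-}=x$. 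The case $t=0$ is continuity of evaluation at time zero.

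\emph{Part (ii).} The difficulty is that $\tau_n$ is not a continuity point of the stopped-endpoint map, so I will not pass to the limit at $\tau_n$ directly. Instead, fix $r\le s$ in $[t,T]$ that are $\mathbb P$-a.s. continuity times, and set $\mathscr H=\{r,s\}$. Apply Proposition~\ref{lem:regular_inner_exhaustion} to $\mathbb P$ with $O=D_n$ (legitimate after shifting time by $t$, using part (i)) to obtain a regular exhaustion $O_k\uparrow D_n$. Since $O_k\subset D_n$, optional stopping shows that $\mathcal Z^{t_j,\rho^{t_j}_{O_k}}[f,q_j]$ is again a $\mathbb P_j$-supermartingale. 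Hence $\mathbb E^{\mathbb P_j}[F(\mathcal Z_s-\mathcal Z_r)]\le0$ for every bounded continuous nonnegative cylinder $F$ with observation times at continuity times $\le r$.

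I then pass to the limit $j\to\infty$ for fixed $k$. The $f$-terms converge, by the regularity of $O_k$ (continuity of $\mathsf{End}$, with the small offset $t_j-t$ playing the role of $a_j$) and by continuity of $f$. The source term enters with a minus sign. Using \eqref{eq:weak_stability_moving_source_upper_limit}, the uniform bound \eqref{eq:weak_stability_moving_source_bound}, and the Fatou argument of Lemma~\ref{lem:skorokhod_weak_passage}\ref{lem:skorokhod_weak_passage_upper}, the term $-\int q_j$ has $\liminf$ at least $-\int q$. This inequality points in the right direction, so the limit inequality $\mathbb E^{\mathbb P}[F(\mathcal Z_s-\mathcal Z_r)]\le0$ holds for $\mathcal Z=\mathcal Z^{t,\rho^t_{O_k}}[f,q]$.

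Next let $k\to\infty$. Proposition~\ref{lem:regular_inner_exhaustion} gives $\rho_{O_k}\uparrow\tau_n$. By \eqref{eq:weak_limit_exit_killing_avoidance}, killing does not occur exactly at the predictable live exit from $D_n$, so the endpoints and survival indicators converge $\mathbb P$-a.s. Bounded convergence then yields the inequality at $\tau_n$. Finally, a monotone class argument (Proposition~\ref{lem:countable_determining_cylinder_classes}) extends the inequality to all bounded nonnegative $\widetilde{\mathcal F}_r$-measurable $F$, and right continuity extends it from the dense set of continuity times to all $r\le s$.

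\emph{Part (iii).} I will apply (ii) to the pair $(f,q_j)$ and to the pair $(-f,-q_j)$. For the second pair, \eqref{eq:weak_stability_moving_source_lower_limit} gives $\limsup^*(-q_j)\le-q$, and $-q$ is upper semicontinuous because $q$ is continuous. Both $\mathcal Z^{t,\tau_n}[f,q]$ and $-\mathcal Z^{t,\tau_n}[f,q]$ are then $\mathbb P$-supermartingales, so $\mathcal Z^{t,\tau_n}[f,q]$ is a $\mathbb P$-martingale.

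\emph{Main obstacle.} The step I expect to be hardest is the joint limit in $j$ and $k$ in part (ii), specifically the convergence of the survival indicator at the stopped time. It rests on the two killing-avoidance facts from Lemma~\ref{lem:weak_stability_locally_bounded_killing_compensators}, and on checking that the offset $t_j\neq t$ is absorbed by the continuity of $\mathsf{End}$ in its second argument.
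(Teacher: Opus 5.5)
Your proposal follows the paper's proof essentially step for step: killing avoidance from Lemma~\ref{lem:weak_stability_locally_bounded_killing_compensators}, a regular inner exhaustion of $D_n$ from Proposition~\ref{lem:regular_inner_exhaustion}, and Skorokhod--Fatou passage in $j$ with the source term entering with the favourable sign; then exit--killing avoidance for $k\to\infty$, a monotone-class extension, and the triple $(-f,-q_j,-q)$ for (iii). The one point to tidy is the time shift, which is unnecessary and, as you state it, slightly wrong: take $K=\{x\}$ (the case $x\notin D_n$ is trivial), and the constant initial histories give $\rho^{t_j}_{O_k}=\rho^{0}_{O_k}$ under $\mathbb P_j$ for large $j$ and $\rho^{t}_{O_k}=\rho^{0}_{O_k}$ under $\mathbb P$, so regularity is only needed at $a=0$; your offset $a_j=t_j-t$, by contrast, would be negative whenever $t_j<t$, and so would lie outside the domain $[0,h)$ of $\mathsf{End}_{h,O}$.
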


\begin{proof}
We first prove \ref{item:weak_passage_initial_history}.
Since \(\omega\mapsto X_0(\omega)\) is continuous in the \(J_1\) topology, weak convergence gives \(\mathbb P(X_0=x)=1\).
In particular, $\mathbb P_j(N_0=0)=\mathbb P(N_0=0)=1$.
Lemma~\ref{lem:weak_stability_locally_bounded_killing_compensators} therefore gives
\begin{equation}
\label{eq:weak_passage_limit_killing_avoidance}
    \mathbb P(\tau_{\mathrm{kill}}=h)=0,
    \qquad h>0,
\end{equation}
and, for every open $O\Subset D$,
\begin{equation}
\label{eq:weak_passage_limit_exit_killing_avoidance}
    \mathbb P\bigl(
        \tau_{\mathrm{kill}}=\widehat\tau_O^{\,0}<\infty,
        \ \overline X_0\in O
    \bigr)=0.
\end{equation}
On a Skorokhod representation of \(\mathbb P_j\Rightarrow\mathbb P\), let \(\widetilde\omega_j\to\widetilde\omega\) almost surely in \(J_1\), with \(X_u(\widetilde\omega_j)=x_j\) for every \(0\le u\le t_j\).
For almost every realization, choose \(H>t\) at which \(\widetilde\omega\) is continuous.
There exist increasing homeomorphisms
\(\lambda_j:[0,H]\to[0,H]\) such that
\[
    \sup_{0\le u\le H}|\lambda_j(u)-u|\longrightarrow0,
    \qquad
    \sup_{0\le u\le H}
    d_{\widehat D}\!\left(
        X_{\lambda_j(u)}(\widetilde\omega_j),
        X_u(\widetilde\omega)
    \right)\longrightarrow0.
\]
For every rational \(s<t\), we have \(\lambda_j(s)<t_j\) for all sufficiently large \(j\), since \(\lambda_j(s)\to s<t=\lim_j t_j\).
Consequently, \(X_s(\widetilde\omega)=\lim_{j\to\infty}X_{\lambda_j(s)}(\widetilde\omega_j)=\lim_{j\to\infty}x_j=x\).
Right continuity of \(X\) then yields \(X_s=x\) for every \(s\in[0,t)\).
When $t>0$, its left limit at $t$ is $x\in D$; the only possible
discontinuity at $t$ is consequently a killing jump, which is excluded
by \eqref{eq:weak_passage_limit_killing_avoidance}.
Together with $X_0=x$, this proves
\ref{item:weak_passage_initial_history}.

We next prove \ref{item:weak_passage_supermartingale}.
If $x\notin D_n$, then $\tau_n=0$ under $\mathbb P$, and
$\mathcal Z^{t,\tau_n}[f,q]$ is constant.
Suppose henceforth that $x\in D_n$.
Fix $t<s_1<s_2\le T$ and a bounded nonnegative continuous cylinder
function $F$ whose observation times are at most $s_1$.
All its observation times are $\mathbb P$-almost-sure continuity
times of $X$, by \eqref{eq:weak_passage_limit_killing_avoidance}.
We have \(\mathbb P(X_0=x)=1\), with \(x\in D\), and \eqref{eq:weak_passage_limit_killing_avoidance} gives \(\mathbb P(\tau_{\mathrm{kill}}=s_1)=0\) and \(\mathbb P(\tau_{\mathrm{kill}}=s_2)=0\).
Thus Proposition~\ref{lem:regular_inner_exhaustion} applies with \(K=\{x\}\), \(O=D_n\), and \(\mathscr H=\{s_1,s_2\}\).
It yields open sets $x\in O_k\Subset D_n$, increasing to $D_n$,
regular for both horizons. Set
\[
    \rho_k:=\rho_{O_k}^{\,0},
    \qquad
    a_i^k:=s_i\wedge\rho_k,\quad i=1,2.
\]
For fixed $k$ and all sufficiently large $j$, one has
$x_j\in O_k$ and $t_j<s_1$.
The constant initial histories give
$\rho_k=\rho_{O_k}^{\,t_j}>t_j$, $\mathbb P_j$-almost surely,
and $\rho_k\le\tau_n$.
Stopping the $\mathbb P_j$-supermartingale
$\mathcal Z^{t_j,\tau_n}[f,q_j]$ at $\rho_k$ therefore gives
\begin{equation}
\label{eq:weak_stability_regular_increment}
    \mathbb E^{\mathbb P_j}[\Xi_j^k]\le0,
\end{equation}
where
\[
    \Xi_j^k:=F\bigg(
        f(X_{a_2^k})\mathbb I_{\{a_2^k<\tau_\infty\}}
        -f(X_{a_1^k})\mathbb I_{\{a_1^k<\tau_\infty\}}
        -\int_{a_1^k}^{a_2^k}
            q_j(r,X_r)\mathbb I_{\{r<\tau_\infty\}}\,dr
    \bigg).
\]
Let $\Xi^k$ denote the same expression with $q$ in place of $q_j$.
We claim that
\begin{equation}
\label{eq:weak_stability_regular_increment_limit}
    \mathbb E^{\mathbb P}[\Xi^k]
    \le\liminf_{j\to\infty}\mathbb E^{\mathbb P_j}[\Xi_j^k]
    \le0.
\end{equation}
Choose a Borel set \(C\subset\widetilde\Omega\) with \(\mathbb P(C)=1\) such that, for every \(\widetilde\omega\in C\), \(F\) is continuous at \(\widetilde\omega\) and \(\mathsf{End}_{s_i,O_k}\) is continuous at \((\widetilde\omega,0)\) for \(i=1,2\).
Such a set exists by \eqref{eq:weak_passage_limit_killing_avoidance} and the \((\mathbb P,s_i)\)-regularity of \(O_k\).
Let \(\widetilde\omega_j\in\widetilde\Omega\) and \(\widetilde\omega\in C\) satisfy
\(\widetilde\omega_j\to\widetilde\omega\) in \(J_1\).
Write \(a_{j,i}:=a_i^k(\widetilde\omega_j)\) and
\(a_i:=a_i^k(\widetilde\omega)\), \(i=1,2\).
The definition of \((\mathbb P,s_i)\)-regularity and
continuity of \(f\) give $a_{j,i}\to a_i$ and
\[
    f(X_{a_{j,i}}(\widetilde\omega_j))
    \mathbb I_{\{a_{j,i}<\tau_\infty(\widetilde\omega_j)\}}
    \longrightarrow
    f(X_{a_i}(\widetilde\omega))
    \mathbb I_{\{a_i<\tau_\infty(\widetilde\omega)\}},
    \qquad i=1,2.
\]
We also have \(F(\widetilde\omega_j)\to F(\widetilde\omega)\).

For \(a_1<r<a_2\), we have
\(r<\rho_k(\widetilde\omega)\), so
\(X_r(\widetilde\omega)\in O_k\) and
\(\widetilde\omega\) is continuous at \(r\).
Consequently,
\(X_r(\widetilde\omega_j)\to X_r(\widetilde\omega)\),
and \(a_{j,1}<r<a_{j,2}\) for all sufficiently large \(j\).
In particular, \(X_r(\widetilde\omega_j)\in O_k\) for such \(j\),
and \eqref{eq:weak_stability_moving_source_upper_limit} yields
\(\limsup_{j\to\infty}q_j(r,X_r(\widetilde\omega_j))
\le q(r,X_r(\widetilde\omega))\).
For \(r\notin[a_1,a_2]\),
\(\mathbb I_{\{a_{j,1}<r<a_{j,2}\}}=0\) eventually.
Thus, extending the integrands by zero outside their integration
intervals, \eqref{eq:weak_stability_moving_source_bound} and Fatou lemma give
\[
    \limsup_{j\to\infty}
    \int_{a_{j,1}}^{a_{j,2}}
        q_j(r,X_r(\widetilde\omega_j))
        \mathbb I_{\{r<\tau_\infty(\widetilde\omega_j)\}}\,dr\le
    \int_{a_1}^{a_2}
        q(r,X_r(\widetilde\omega))
        \mathbb I_{\{r<\tau_\infty(\widetilde\omega)\}}\,dr.
\]
Since \(F(\widetilde\omega_j)\to F(\widetilde\omega)\ge0\),
the preceding limits imply
\(\Xi^k(\widetilde\omega)
\le\liminf_{j\to\infty}\Xi_j^k(\widetilde\omega_j)\).
The same argument applies to every subsequence.
Extend \(\Xi_j^k\) and \(\Xi^k\) by zero outside
\(\widetilde\Omega\).
By \eqref{eq:weak_stability_moving_source_bound} and boundedness
of \(F\) and \(f\), these functions are uniformly bounded.
Lemma~\ref{lem:skorokhod_weak_passage}, applied on
\(\Omega^{\mathrm{cad}}\) with \(C_j=\widetilde\Omega\),
\(Z_j=-\Xi_j^k\), and \(Z=-\Xi^k\), therefore gives
\(\mathbb E^{\mathbb P}[\Xi^k]
\le\liminf_{j\to\infty}\mathbb E^{\mathbb P_j}[\Xi_j^k]\).
Together with \eqref{eq:weak_stability_regular_increment},
this proves \eqref{eq:weak_stability_regular_increment_limit}.

We next pass to the limit as \(k\to\infty\) in \eqref{eq:weak_stability_regular_increment_limit} to obtain
\begin{align}\label{eq:assertedineqaaaa}
    \mathbb E^{\mathbb P}\!\left[
        F\left(
            \mathcal Z_{s_2}^{t,\tau_n}[f,q]
            -\mathcal Z_{s_1}^{t,\tau_n}[f,q]
        \right)
    \right]\le0.
\end{align}
By Proposition~\ref{lem:regular_inner_exhaustion},
\(\rho_k\uparrow\tau_n\), and hence
\(a_i^k\uparrow s_i\wedge\tau_n\), \(\mathbb P\)-almost surely,
for \(i=1,2\).
On \(\{s_i\wedge\tau_n<\tau_\infty\}\), continuity of \(X\)
gives \(X_{a_i^k}\to X_{s_i\wedge\tau_n}\), with
\(a_i^k<\tau_\infty\) for every \(k\).
On \(\{s_i\wedge\tau_n=\tau_\infty\}\), since
\(D_n\Subset D\), we have
\(\tau_n=\tau_{\mathrm{kill}}=\tau_\infty\le s_i\).
Moreover, \eqref{eq:weak_passage_limit_exit_killing_avoidance}
implies
\(\tau_{\mathrm{kill}}<\widehat\tau_{D_n}^{\,0}\)
almost surely on this event.
Thus the compact set
\(\{\overline X_u:0\le u\le\tau_{\mathrm{kill}}\}\subset D_n\)
is contained in \(O_k\) for all sufficiently large \(k\).
For such \(k\), \(\rho_k=\tau_{\mathrm{kill}}\) and
\(a_i^k=\tau_\infty\).
Consequently,
\[
    f(X_{a_i^k})\mathbb I_{\{a_i^k<\tau_\infty\}}
    \longrightarrow
    f(X_{s_i\wedge\tau_n})
        \mathbb I_{\{s_i\wedge\tau_n<\tau_\infty\}},
    \qquad i=1,2,
    \quad \mathbb P\text{-almost surely}.
\]
Together with \(a_i^k\to s_i\wedge\tau_n\),
the bound \eqref{eq:weak_stability_moving_source_bound}
and boundedness of \(f\) and \(F\) allow bounded convergence
in the time integrals and then in expectations.
Passing to the limit in
\(\mathbb E^{\mathbb P}[\Xi^k]\le0\) therefore proves \eqref{eq:assertedineqaaaa}.
A monotone-class argument extends
$\mathbb E^{\mathbb P}[F(\mathcal Z_{s_2}^{t,\tau_n}[f,q]
-\mathcal Z_{s_1}^{t,\tau_n}[f,q])]\le0$ to every bounded nonnegative
$\widetilde{\mathcal F}_{s_1}$-measurable $F$.
Since $\mathcal Z^{t,\tau_n}[f,q]$ is bounded and right-continuous on
$[0,T]$, this bound also holds with $s_1=t$ by decreasing approximation;
before $t$ the process is constant.
This proves \ref{item:weak_passage_supermartingale}.

Finally, we prove \ref{item:weak_passage_martingale}.
Under the additional assumptions,
\eqref{eq:weak_stability_moving_source_lower_limit} gives
\[
    \limsup_{j\to\infty}^{*}(-q_j)\le-q
    \qquad\text{on }K_n.
\]
Applying \ref{item:weak_passage_supermartingale} to
$(-f,-q_j,-q)$ shows that
$-\mathcal Z^{t,\tau_n}[f,q]$ is also a $\mathbb P$-supermartingale.
Hence $\mathcal Z^{t,\tau_n}[f,q]$ is a $\mathbb P$-martingale, proving
\ref{item:weak_passage_martingale}.
\end{proof}

\begin{proposition}
\label{lem:killed_cylinder_payoff_upper_stability}
Fix \(m\ge1\). Let \(T_j,T\ge0\) satisfy \(T_j\to T\), and let
\(\mathbb P_j\Rightarrow\mathbb P\) in \(\mathfrak M\).
Assume that
\[
    \mathbb P(\tau_{\mathrm{exp}}=\infty)=1,
    \qquad
    \mathbb P(\tau_{\mathrm{kill}}=T)=0.
\]
For each \(1\le\ell\le m\), let
\[
    0\le r_{j,\ell}\le T_j,
    \qquad r_{j,\ell}\longrightarrow r_\ell\le T.
\]
Suppose that \(f_j,f\in\USC_b(D^m)\) satisfy
\begin{equation}
\label{eq:killed_cylinder_payoff_upper_limits}
    \sup_{j\ge1}\|f_j\|_\infty<\infty,
    \qquad
    \limsup_{j\to\infty}^{*}f_j\le f
    \quad\text{on }D^m.
\end{equation}
Then
\[
    \limsup_{j\to\infty}
    \mathbb E^{\mathbb P_j}\!\left[
        f_j(X_{r_{j,1}},\ldots,X_{r_{j,m}})
        \mathbb I_{\{T_j<\tau_\infty\}}
    \right]
    \le
    \mathbb E^{\mathbb P}\!\left[
        f(X_{r_1},\ldots,X_{r_m})
        \mathbb I_{\{T<\tau_\infty\}}
    \right],
\]
where the killed cylinder payoffs are understood to be zero on
\(\{\tau_\infty\le T_j\}\) and \(\{\tau_\infty\le T\}\),
respectively.
\end{proposition}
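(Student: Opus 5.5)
The plan is to reduce to Lemma~\ref{lem:skorokhod_weak_passage}\ref{lem:skorokhod_weak_passage_upper} on $E=\Omega^{\mathrm{cad}}$, with $C_j=\widetilde\Omega$ and $C$ a full-measure subset of $\widetilde\Omega$ on which a pathwise upper-semicontinuity statement holds. First extend the data to the cemetery. Put
\[
Z_j(\omega):=f_j(X_{r_{j,1}},\ldots,X_{r_{j,m}})\mathbb I_{\{T_j<\tau_\infty\}},
\qquad
Z(\omega):=f(X_{r_1},\ldots,X_{r_m})\mathbb I_{\{T<\tau_\infty\}}.
\]
On $\{T_j<\tau_\infty\}$ every $r_{j,\ell}\le T_j$ lies before the lifetime, so all arguments lie in $D$ and the expression is well defined. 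These functions are Borel and uniformly bounded by \eqref{eq:killed_cylinder_payoff_upper_limits}. It then suffices to show $\limsup_j Z_j(\omega_j)\le Z(\omega)$ whenever $\omega_j\to\omega$ in $J_1$ with $\omega_j\in\widetilde\Omega$ and $\omega\in C$.

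Take $C$ to be the set of $\omega\in\widetilde\Omega$ with $\tau_{\mathrm{exp}}(\omega)=\infty$ and $\tau_{\mathrm{kill}}(\omega)\ne T$. By hypothesis $\mathbb P(C)=1$. Fix $\omega\in C$ and consider two cases.

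\emph{Case $T<\tau_\infty(\omega)$.} Here $\omega$ is continuous with values in $D$ on $[0,\tau_\infty(\omega))$, an interval containing $[0,T]$ and slightly beyond. Pick a continuity horizon $H\in(T,\tau_\infty(\omega))$. $J_1$ convergence gives time changes $\lambda_j\to\mathrm{id}$ with $\omega_j\circ\lambda_j\to\omega$ uniformly on $[0,H]$. Since $\omega$ is continuous there, this upgrades to local uniform convergence $\omega_j\to\omega$ on $[0,H']$ for any $H'<H$. Hence $X_{r_{j,\ell}}(\omega_j)\to X_{r_\ell}(\omega)\in D$. Because $\omega([0,H'])$ is a compact subset of $D$, the paths $\omega_j$ stay in $D$ on $[0,H']$ for large $j$, so $T_j<\tau_\infty(\omega_j)$ eventually. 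Then $\limsup_j Z_j(\omega_j)\le f(X_{r_1},\ldots,X_{r_m})(\omega)=Z(\omega)$ by the relaxed upper limit in \eqref{eq:killed_cylinder_payoff_upper_limits}.

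\emph{Case $\tau_\infty(\omega)\le T$.} Since $\tau_{\mathrm{exp}}=\infty$ and $\tau_{\mathrm{kill}}\ne T$, we have $\zeta:=\tau_{\mathrm{kill}}(\omega)<T$, with a jump to $\triangle$ at $\zeta$. Here $Z(\omega)=0$, and we must show $\mathbb I_{\{T_j<\tau_\infty(\omega_j)\}}\to0$. Pick $H\in(\zeta,T)$, a continuity time of $\omega$ (any such point works since $\omega\equiv\triangle$ after $\zeta$). Then $\omega_j(\lambda_j(H))\to\triangle$, so $\omega_j$ is within any neighbourhood of $\triangle$ at times near $H<T_j$.

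The main obstacle is this case: being close to $\triangle$ is not the same as being dead, because a path in $\widetilde\Omega$ may sit far out in $D$ without having exploded. I would follow the argument of Proposition~\ref{lem:regular_inner_exhaustion}. The left limit $\omega(\zeta-)\in D$ and the value $\omega(\zeta)=\triangle$ are distinct, and uniform convergence after the time change, including left limits, forces $\omega_j$ to have a genuine jump at $\lambda_j(\zeta)$ for large $j$. In $\widetilde\Omega$ the only jump is the killing jump, so $\tau_\infty(\omega_j)\le\lambda_j(\zeta)<T_j$ eventually. Thus $Z_j(\omega_j)=0$ eventually, which gives the pathwise inequality in this case too.

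With the pathwise inequality established on $C$, Lemma~\ref{lem:skorokhod_weak_passage}\ref{lem:skorokhod_weak_passage_upper} yields the assertion.
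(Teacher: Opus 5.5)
Your proposal is correct and is essentially the paper's proof. Both reduce to Lemma~\ref{lem:skorokhod_weak_passage}\ref{lem:skorokhod_weak_passage_upper} with the full-measure set $\{\tau_{\mathrm{exp}}=\infty,\ \tau_{\mathrm{kill}}\ne T\}$, and both split into the cases $T<\tau_\infty(\omega)$ and $\tau_\infty(\omega)<T$. The only difference is cosmetic, in the killed case: the paper argues by contradiction, using that continuous paths form a $J_1$-closed set on $[0,u]$ for some $u\in(\tau_\infty(\omega),T)$, whereas you detect the approximating killing jump directly through convergence of left limits after the time change, as in Proposition~\ref{lem:regular_inner_exhaustion}; either argument works.
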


\begin{proof}
We verify the pathwise upper-limit condition in
Lemma~\ref{lem:skorokhod_weak_passage}.
Let \(\widetilde\omega_j\to\widetilde\omega\) in the \(J_1\)
topology, with all paths in \(\widetilde\Omega\), and suppose that
\begin{align}\label{eq:excludingcondition}
    \tau_{\mathrm{exp}}(\widetilde\omega)=\infty,
    \qquad
    \tau_{\mathrm{kill}}(\widetilde\omega)\ne T.
\end{align}
If \(T<\tau_\infty(\widetilde\omega)\), the limiting path is
continuous and alive on \([0,T]\).  Evaluation at the convergent
observation times therefore gives
\[
    X_{r_{j,\ell}}(\widetilde\omega_j)
    \longrightarrow X_{r_\ell}(\widetilde\omega),
    \qquad
    X_{T_j}(\widetilde\omega_j)
    \longrightarrow X_T(\widetilde\omega)\in D.
\]
Since \(D\) is open in \(\widehat D\), we have \(T_j<\tau_\infty(\widetilde\omega_j)\) for all sufficiently large \(j\). 
Hence \eqref{eq:killed_cylinder_payoff_upper_limits} and the preceding coordinate convergence yield
\begin{align}\label{eq:limsupinequalityaaa}
\begin{split}
    &\limsup_{j\to\infty}
        f_j\bigl(
            X_{r_{j,1}}(\widetilde\omega_j),\ldots,
            X_{r_{j,m}}(\widetilde\omega_j)
        \bigr)
        \mathbb I_{\{T_j<\tau_\infty(\widetilde\omega_j)\}}
    \\
    &\qquad\le
    f\bigl(
        X_{r_1}(\widetilde\omega),\ldots,
        X_{r_m}(\widetilde\omega)
    \bigr)
    \mathbb I_{\{T<\tau_\infty(\widetilde\omega)\}}.
\end{split}
\end{align}

If \(\tau_\infty(\widetilde\omega)<T\), choose
\(u\in(\tau_\infty(\widetilde\omega),T)\).
The approximating paths must be dead at \(T_j\) eventually.
Indeed, otherwise a subsequence would satisfy
\(u<T_j<\tau_\infty(\widetilde\omega_j)\), so its restrictions to
\([0,u]\) would be continuous.  Restriction at \(u\) preserves
\(J_1\) convergence because the limiting path is continuous there.
Since continuous paths form a closed subset in the \(J_1\) topology,
the limiting restriction would be continuous as well, contradicting
its killing jump at \(\tau_\infty(\widetilde\omega)\).
Thus both killed payoffs are eventually zero, so \eqref{eq:limsupinequalityaaa} holds with equality.

The remaining case \(\tau_\infty(\widetilde\omega)=T\) is excluded by \eqref{eq:excludingcondition}.
These assumptions hold \(\mathbb P\)-almost surely.
The conclusion follows from Lemma~\ref{lem:skorokhod_weak_passage}.
\end{proof}

\subsection{Closed and Countably Determined Valuation Relations}\label{subsec:closed_valuation_relations}

\begin{proposition}
\label{prop:parameterized_closed_valuation_relations}
Let $G\in\mathfrak G_{\mathrm{Lyap}}$ and
$\mathcal T\in\mathfrak V$.
For $a\ge0$, set
\[
    \mathfrak P_a
    :=\{(x,\mathbb P)\in D\times\mathfrak M:
                 \mathbb P\in\mathcal P_{a,x}(G)\}.
\]
For $g\in C_b(D)$ and $a\le r\le s<\infty$, let
$\mathfrak C_a(g,r,s)$ consist of the pairs
$(x,\mathbb P)\in\mathfrak P_a$ for which
\begin{equation}
\label{eq:onestep-supermtgrelation_Ciota}
    \mathbb E^{\mathbb P}
        [F g(X_s)\mathbb I_{\{s<\tau_\infty\}}]
    \le
    \mathbb E^{\mathbb P}
        [F\mathcal T_{s-r}g(X_r)\mathbb I_{\{r<\tau_\infty\}}]
\end{equation}
for every bounded nonnegative
$\widetilde{\mathcal F}_r$-measurable random variable $F$.
Then the following assertions hold.
\begin{enumerate}[label=(\roman*), ref=(\roman*)]
\item\label{prop:closed_valuation_relation_fixed}
Each set $\mathfrak C_a(g,r,s)$ is closed in $\mathfrak P_a$.

\item\label{prop:closed_valuation_relation_graph}
The graph of
\[
    (t,x)\longmapsto
    \mathcal R_{t,x}^{\mathcal T}\cap\mathcal P_{t,x}(G)
\]
is closed in $[0,\infty)\times D\times\mathfrak M$.

\item\label{prop:countable_valuation_relations}
For every $a\ge0$, there is a countable family
\[
    \mathscr I_a=\{(g_i,r_i,s_i):i\ge1\},
    \qquad g_i\in C_b(D),\quad a\le r_i\le s_i<\infty,
\]
such that, for every $x\in D$ and
$\mathbb P\in\mathcal P_{a,x}(G)$,
$\mathbb P\in\mathcal R_{a,x}^{\mathcal T}$ if and only if
\eqref{eq:onestep-supermtgrelation_Ciota} holds with
$(g,r,s)=(g_i,r_i,s_i)$ for every $i\ge1$ and every bounded
nonnegative $\widetilde{\mathcal F}_{r_i}$-measurable $F$.
\end{enumerate}
\end{proposition}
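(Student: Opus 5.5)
The plan is to prove \ref{prop:closed_valuation_relation_fixed} by weak passage to the limit, deduce \ref{prop:closed_valuation_relation_graph} from it together with the closed graph of \((t,x)\mapsto\mathcal P_{t,x}(G)\), and prove \ref{prop:countable_valuation_relations} by approximating in both the payoff and the time variables. The approximations rely on \ref{V3} and \ref{V4}.

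For \ref{prop:closed_valuation_relation_fixed}, let \((x_j,\mathbb P_j)\to(x,\mathbb P)\) in \(\mathfrak P_a\). First I reduce the test variables \(F\). The class of bounded nonnegative \(F\) satisfying \eqref{eq:onestep-supermtgrelation_Ciota} is a convex cone closed under bounded monotone limits. It therefore suffices, as in Proposition~\ref{lem:countable_determining_cylinder_classes}, to consider nonnegative continuous cylinder functions \(F=\Psi(X_{u_1},\ldots,X_{u_p})\) with \(\Psi\in C(\widehat D^p)\) and observation times \(u_i\le r\). These times are taken from a countable dense set together with \(r\). Right continuity of paths then identifies the generated \(\sigma\)-field with \(\widetilde{\mathcal F}_r\).

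For such \(F\), I expand \(\Psi\) as in \eqref{eq:full_cylinder_killed_decomposition}, which writes both sides of \eqref{eq:onestep-supermtgrelation_Ciota} as finite sums of killed cylinder payoffs. On the left, the payoff \(g\) is continuous, so Proposition~\ref{lem:killed_cylinder_payoff_upper_stability}, applied to \(\pm\) each summand, gives convergence of the left-hand side. The hypotheses of that proposition hold because Proposition~\ref{prop:recovery_aggregation_generator_compatible_characteristics}\ref{prop:recovery_killing_compensator} gives \(\mathbb P(\tau_{\mathrm{kill}}=u)=0\) for every fixed \(u\), and Lemma~\ref{lem:admissible_nonexplosion} gives nonexplosion. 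On the right, the payoff \(\mathcal T_{s-r}g\) is only upper semicontinuous, so I obtain only an upper limit. This is the favourable direction: the limit of the left-hand side is at most the \(\limsup\) of the right-hand side, which is at most the corresponding \(\mathbb P\)-expectation. The one point to check is that the nonnegative multiplier coming from \(\Psi\), times an upper semicontinuous function, still satisfies the upper-limit hypothesis \eqref{eq:killed_cylinder_payoff_upper_limits}. This holds because the multiplier is continuous and nonnegative.

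For \ref{prop:closed_valuation_relation_graph}, let \((t_j,x_j,\mathbb P_j)\to(t,x,\mathbb P)\) with \(\mathbb P_j\in\mathcal R^{\mathcal T}_{t_j,x_j}\cap\mathcal P_{t_j,x_j}(G)\). Proposition~\ref{prop:propertiesP_t,x_revised} gives \(\mathbb P\in\mathcal P_{t,x}(G)\). By Lemma~\ref{lem:continuous_testing_time_shifts}, it then suffices to verify \eqref{eq:onestep-supermtgrelation_Ciota} for continuous \(g\) and all \(t\le r\le s\). If \(r>t\), then \(r\ge t_j\) for all large \(j\), and I repeat the argument of \ref{prop:closed_valuation_relation_fixed} verbatim; that argument never used a fixed initial time. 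If \(r=t\), I approximate by \(r_n\downarrow t\). The variable \(F\) stays \(\widetilde{\mathcal F}_{r_n}\)-measurable, the left-hand side is unchanged, and on the right-hand side \ref{V4}, together with right continuity of \(X_{r_n}\) and of the survival indicator, gives \(\limsup_n\mathcal T_{s-r_n}g(X_{r_n})\mathbb I_{\{r_n<\tau_\infty\}}\le\mathcal T_{s-t}g(X_t)\mathbb I_{\{t<\tau_\infty\}}\). Reverse Fatou then yields the inequality at \(r=t\).

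For \ref{prop:countable_valuation_relations}, I take the \(g_i\) from the countable family of finite minima of clipped cones \(\bigl(c+L|x-q|\bigr)\wedge M\) with rational parameters. Every \(g\in C_b(D)\) is the decreasing limit of such functions. Then \ref{V3}, bounded convergence on the left, and monotone convergence on the right extend the inequality from the family to all \(g\in C_b(D)\). For the times I take \(r_i,s_i\in\{a\}\cup(\mathbb Q\cap[a,\infty))\). I pass to a general \(s\) by letting rational \(s_n\downarrow s\). The left-hand side converges by right continuity of \(g(X_\cdot)\mathbb I_{\{\cdot<\tau_\infty\}}\), and the right-hand side is handled by \ref{V4} and reverse Fatou, exactly as in \ref{prop:closed_valuation_relation_graph}. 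I pass to a general \(r\) by letting rational \(r_n\downarrow r\) in the same way, again using that \(F\) is \(\widetilde{\mathcal F}_{r_n}\)-measurable.

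I expect this last time-approximation step to be the main obstacle. Each passage to the limit must be arranged so that the semicontinuity furnished by \ref{V4} and by Proposition~\ref{lem:killed_cylinder_payoff_upper_stability} points in the favourable direction, in particular for the upper semicontinuous payoff \(\mathcal T_{s-r}g\).
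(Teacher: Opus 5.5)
Your proof is correct. Part (i) is the paper's argument. For (iii) you take a genuinely different route, and your treatment of the endpoint $r=t$ in (ii) also differs.

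\textbf{Where your route differs.} At $r=t$ in (ii), the paper applies \eqref{eq:onestep-supermtgrelation_Ciota} under $\mathbb P_j$ with $r=t_j$ and $F=1$. It passes to the limit with \ref{V4} along $(t_j,x_j)\to(t,x)$ and then uses that $\widetilde{\mathcal F}_t$ is $\mathbb P$-trivial. You first obtain the inequality under the limit law for $r>t$ and then let $r_n\downarrow t$, using \ref{V4}, right continuity and reverse Fatou; this is equally valid.

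For (iii), the paper does not construct the family. It writes the relation set as $\bigcap_{g,r,s}\mathfrak C_a(g,r,s)$ and notes that, by (i), the sets $\mathfrak P_a\setminus\mathfrak C_a(g,r,s)$ are relatively open. It then extracts a countable subcover by the Lindel\"of property of the second-countable space $D\times\mathfrak M$.

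Your construction instead uses finite minima of rational clipped cones, which approximate every $g\in C_b(D)$ from above, together with times in $\{a\}\cup(\mathbb Q\cap[a,\infty))$. Payoffs are handled by \ref{V3}, and times by \ref{V4}, right continuity and reverse Fatou. This is explicit and pathwise: it needs neither (i) nor any weak-convergence input, and the resulting family does not depend on $G$, $x$ or $\mathbb P$. The paper's route is shorter once (i) is available, but it is non-constructive and tied to $\mathfrak P_a$.

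\textbf{Two small repairs.} First, Lemma~\ref{lem:continuous_testing_time_shifts} does not directly give the reduction you cite. It requires the orbit inequality for every horizon $R\ge s$, while \eqref{eq:onestep-supermtgrelation_Ciota} is only the case $R=s$. To bridge this, in (ii) and at the end of (iii):
\begin{itemize}
\item extend \eqref{eq:onestep-supermtgrelation_Ciota} from $g\in C_b(D)$ to $h\in\USC_b(D)$ by decreasing continuous approximations and \ref{V3};
\item apply it with $h=\mathcal T_{R-s}g$;
\item use $\mathcal T_{R-r}g=\mathcal T_{s-r}\mathcal T_{R-s}g$.
\end{itemize}
This is the paper's ``second observation'' and uses only tools you already invoke.

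Second, in (i) do not split the right-hand side termwise via \eqref{eq:full_cylinder_killed_decomposition}. The differences $F_j$ can be negative, and then $F_j\,\mathcal T_{s-r}g$ is not upper semicontinuous. The split is also unnecessary: on $\{r<\tau_\infty\}$ every observation time of $F$ is alive. Each side is therefore a single killed cylinder payoff with the nonnegative continuous multiplier $\Psi|_{D^p}$, which is exactly what your upper-semicontinuity check needs.
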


\begin{proof}
We first record two observations.  Suppose
$\mathbb P_j\Rightarrow\mathbb P$, with all these laws belonging to
fibers of $\mathcal P(G)$ with initial states in $D$.
By Proposition~
3.2\textup{(i)} and Lemma~
3.4 of the main text, their killing times have no deterministic atoms
and there is no finite continuous explosion.
For $u\ge0$, $h\in\USC_b(D)$, and a bounded nonnegative continuous
cylinder function $H$ whose observation times do not exceed $u$,
Proposition~\ref{lem:killed_cylinder_payoff_upper_stability} gives
\begin{equation}
\label{eq:usc_killed_payoff_upper_passage}
    \limsup_{j\to\infty}
    \mathbb E^{\mathbb P_j}
        [Hh(X_u)\mathbb I_{\{u<\tau_\infty\}}]
    \le
    \mathbb E^{\mathbb P}
        [Hh(X_u)\mathbb I_{\{u<\tau_\infty\}}].
\end{equation}
Indeed, the product is a bounded upper semicontinuous payoff on
the corresponding finite product of $D$.
If $h\in C_b(D)$, applying
\eqref{eq:usc_killed_payoff_upper_passage} to $-h$ shows that
the expectations converge.

Second, for any $(x,\mathbb P)\in\mathfrak P_a$, membership in
$\mathcal R_{a,x}^{\mathcal T}$ is equivalent to
\eqref{eq:onestep-supermtgrelation_Ciota} for every
$g\in C_b(D)$ and $a\le r\le s$.
Necessity follows by taking the terminal horizon to be $s$.
For sufficiency, let $h\in\USC_b(D)$ and choose uniformly bounded
$h_m\in C_b(D)$ with $h_m\searrow h$.
Condition~\textup{(V3)} and bounded convergence extend
\eqref{eq:onestep-supermtgrelation_Ciota} to $h$.
For $R\ge a$, $g\in\USC_b(D)$, and $a\le r\le s\le R$, apply
\eqref{eq:onestep-supermtgrelation_Ciota}, now valid for upper
semicontinuous tests, with $h=\mathcal T_{R-s}g$ in place of its test function.
The semigroup property gives
\[
    \mathbb E^{\mathbb P}[F Y_s^{\mathcal T;R,g}]
    \le
    \mathbb E^{\mathbb P}[F Y_r^{\mathcal T;R,g}]
\]
for every bounded nonnegative
$\widetilde{\mathcal F}_r$-measurable $F$.
Thus $Y^{\mathcal T;R,g}$ is a $\mathbb P$-supermartingale with
respect to $\widetilde{\mathbb F}$ on $[a,R]$ for every $R\ge a$ and
$g\in\USC_b(D)$, which gives $\mathbb P\in\mathcal R_{a,x}^{\mathcal T}$.

For part~\ref{prop:closed_valuation_relation_fixed}, let
$(x_j,\mathbb P_j)\to(x,\mathbb P)$ in $\mathfrak P_a$, with
$(x_j,\mathbb P_j)\in\mathfrak C_a(g,r,s)$.
For bounded nonnegative continuous cylinder multipliers $F$,
the left-hand side of
\eqref{eq:onestep-supermtgrelation_Ciota} converges, while
\eqref{eq:usc_killed_payoff_upper_passage}, applied to
$\mathcal T_{s-r}g\in\USC_b(D)$ at time $r$, bounds the upper limit
of the right-hand side by its value under $\mathbb P$.
Thus \eqref{eq:onestep-supermtgrelation_Ciota} holds under $\mathbb P$
for these multipliers. A monotone-class argument extends it to every bounded nonnegative
$\widetilde{\mathcal F}_r$-measurable $F$, proving closedness.

For part~\ref{prop:closed_valuation_relation_graph}, suppose
$(t_j,x_j,\mathbb P_j)\to(t,x,\mathbb P)$ and
$\mathbb P_j\in\mathcal R_{t_j,x_j}^{\mathcal T}
\cap\mathcal P_{t_j,x_j}(G)$.
The closed-graph assertion of Proposition~
3.5 of the main text gives
$\mathbb P\in\mathcal P_{t,x}(G)$.
Fix $g\in C_b(D)$ and $t\le r<s$.
If $r>t$, then $t_j\le r$ eventually, and the same weak-passage
argument as above proves
\eqref{eq:onestep-supermtgrelation_Ciota} under $\mathbb P$.
If $r=t$, apply \eqref{eq:onestep-supermtgrelation_Ciota} under
$\mathbb P_j$ with $r=t_j$ and $F=1$ to obtain, for all sufficiently large $j$,
\[
    \mathbb E^{\mathbb P_j}
        [g(X_s)\mathbb I_{\{s<\tau_\infty\}}]
    \le \mathcal T_{s-t_j}g(x_j).
\]
The left-hand side converges, and Condition~\textup{(V4)} bounds
the upper limit of the right-hand side by $\mathcal T_{s-t}g(x)$.
Since the prescribed history makes
$\widetilde{\mathcal F}_t$ trivial under $\mathbb P$, this proves
\eqref{eq:onestep-supermtgrelation_Ciota} with $r=t$ for every bounded
nonnegative $\widetilde{\mathcal F}_t$-measurable $F$.
The case $r=s$ is immediate.
The second observation now gives
$\mathbb P\in\mathcal R_{t,x}^{\mathcal T}\cap\mathcal P_{t,x}(G)$,
proving the asserted closedness.

Finally, fix $a\ge0$ and let
\[
    \mathfrak C_a
    :=\{(x,\mathbb P)\in\mathfrak P_a:
                  \mathbb P\in\mathcal R_{a,x}^{\mathcal T}\}.
\]
By the second observation,
$\mathfrak C_a=\bigcap_{g,r,s}\mathfrak C_a(g,r,s)$, where the
intersection is over $g\in C_b(D)$ and $a\le r\le s<\infty$.
Part~\ref{prop:closed_valuation_relation_fixed} shows that the
sets $\mathfrak P_a\setminus\mathfrak C_a(g,r,s)$ form an open
cover of $\mathfrak P_a\setminus\mathfrak C_a$.
As a subspace of the second-countable space $D\times\mathfrak M$,
this complement is second countable and hence Lindel\"of.
Choose a countable subcover, indexed by
$\mathscr I_a=\{(g_i,r_i,s_i):i\ge1\}$, adding repetitions of the
trivial triple $(0,a,a)$ if necessary.
Then
\begin{equation}
\label{eq:countable_intersection_primitive_relations}
    \mathfrak C_a
    =\bigcap_{i\ge1}\mathfrak C_a(g_i,r_i,s_i).
\end{equation}
This identity holds simultaneously for all $x\in D$ and proves
part~\ref{prop:countable_valuation_relations}.
\end{proof}

\section{Approximation for One-Step Realization}
\label{app:stochastic_realization_auxiliary}

This appendix collects the auxiliary approximation results used in the proof of
Lemma~
C.2 in Appendix~
C of the main article.
It uses the local upper bound from
Section~
2, the support correspondence from Subsection~
2.2, and the Lyapunov condition
in Assumption~
2.2 there.
For a compact set \(K\subset\mathbb R\times\mathbb R^d\), write \(C^\infty(K)\) for the restrictions to \(K\) of smooth functions defined on an open neighborhood of \(K\).
We use the jet norm \(\|\cdot\|\), its induced distance
\(d_{\mathfrak J}\), and the corresponding Hausdorff distance
\(d_{H,\mathfrak J}\) from the main text.
We also recall
\[
    \mathfrak K
    :=(-\infty,0]\times\mathbb R^d\times\mathbb S^+(d),
    \qquad
    \kappa_\phi(x):=\frac{\phi(x)}{1+\|J_x^2\phi\|}.
\]

\subsection{Continuous Outer Approximation of a USC Generator}\label{appsubsec:continuous_outer_generator_approximation}

\begin{proposition}
\label{prop:continuous_outer_generator_approximation}
Suppose that \(G\) satisfies conditions~(G1)--(G3) and Assumption~
2.2 in the main text.  
Then, for every \(n\ge1\), there exist a Hausdorff-continuous correspondence \(\mathfrak A_n:D\rightrightarrows\mathfrak K\) with nonempty compact convex values and a jointly continuous function \(G_n:D\times\mathfrak J\to\mathbb R\) given by
\begin{align}\label{def:G_n}
    G_n(x,U)
    :=
    \max_{V\in\mathfrak A_n(x)}\ell_V(U),
    \qquad
    (x,U)\in D\times\mathfrak J,
\end{align}
such that the following properties hold.

\begin{enumerate}[label=(\roman*), ref=(\roman*)]
\item
\label{item:continuous_outer_majorization}
For every \(x\in D\),
\(\mathfrak A_G(x)\subseteq\mathfrak A_n(x)\).
Consequently,
\(G(x,U)\le G_n(x,U)\) for all
\((x,U)\in D\times\mathfrak J\).
Moreover, for each \(n\ge1\), \(G_n\) satisfies
Conditions~(G2) and~(G3), is sublinear in the jet
variable \(U\), and is locally uniformly Lipschitz in \(U\).

\item
\label{item:continuous_outer_local_uniform_boundedness}
For every compact \(K\Subset D\),
\[
    \sup_{n\ge1}
    \sup_{\substack{x\in K\\V\in\mathfrak A_n(x)}}
    \|V\|
    <\infty.
\]

\item
\label{item:continuous_outer_common_lyapunov_control}
The approximations may be chosen so that
\begin{equation}
\label{eq:continuous_outer_common_lyapunov}
    G_n\bigl(x,J_x^2\phi\bigr)
    \le
    \bigl(C_\phi+n^{-1}\bigr)\phi(x),
    \qquad x\in D.
\end{equation}

\item
\label{item:continuous_outer_upper_consistency}
One has
\begin{equation}
\label{eq:continuous_outer_generator_upper_consistency}
    \limsup_{n\to\infty}^{*}G_n
    \le G
    \qquad\text{on }D\times\mathfrak J.
\end{equation}

\end{enumerate}
\end{proposition}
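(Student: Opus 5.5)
The plan is to build $\mathfrak A_n$ as a continuous correspondence sitting above the upper semicontinuous correspondence $\mathfrak A_G$, obtained by smoothing over small spatial neighbourhoods and then adding a small outer margin. Take $n\ge1$ and a continuous radius function $\delta_n:D\to(0,\infty)$ with $\overline{B_{\delta_n(x)}(x)}\subset D$ and $\delta_n\le 1/n$. Let $\mathfrak B_n(x)$ be the closed convex hull of $\bigcup\{\mathfrak A_G(y):|y-x|\le\delta_n(x)\}$. This set is compact because $\mathfrak A_G$ is locally bounded with closed graph (Lemma~\ref{lem:support_correspondence_geometry}). It lies in $\mathfrak K$, which is a closed convex cone, and it contains $\mathfrak A_G(x)$. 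The problem is that $\mathfrak B_n$ is only upper hemicontinuous. To get Hausdorff continuity I will use a partition-of-unity argument. Choose a locally finite cover of $D$ by balls $B_i=B_{r_i}(x_i)$ with $r_i$ small, so that any $B_i$ meeting $B_j$ satisfies $B_i\cup B_j\subset B_{\delta_n(x_i)}(x_i)$. Take a subordinate continuous partition of unity $(\psi_i)$ and put
\[
    \mathfrak A_n^0(x):=\sum_i\psi_i(x)\,\overline{\operatorname{co}}\Bigl(\textstyle\bigcup_{y\in \overline{B_i}}\mathfrak A_G(y)\Bigr),
\]
where the sum is a Minkowski combination. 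Finite Minkowski combinations of fixed compact convex sets with continuous weights are Hausdorff continuous. Each summand contains $\mathfrak A_G(x)$ whenever $\psi_i(x)>0$, so $\mathfrak A_G(x)\subseteq\mathfrak A_n^0(x)\subseteq\mathfrak K$. Finally set $\mathfrak A_n:=\mathfrak A_n^0$. The corresponding $G_n$ is jointly continuous because the support function of a Hausdorff continuous compact-valued correspondence is jointly continuous. Sublinearity is automatic, and (G2)--(G3) follow from the values lying in $\mathfrak K$, exactly as in Lemma~\ref{lem:support_correspondence_geometry}. This gives (i).

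Part (ii) follows because, for $x\in K$, every set entering the construction comes from $\mathfrak A_G(y)$ with $y$ in the fixed compact neighbourhood $K'=\{y:d(y,K)\le\sup_K\delta_1\}\Subset D$. Norms are therefore bounded by $L_{K'}$ uniformly in $n$, since the radii decrease in $n$. For (iv), take $(x_n,U_n)\to(x,U)$. Then $G_n(x_n,U_n)=\ell_{V_n}(U_n)$, where $V_n$ is a convex combination of points of $\mathfrak A_G(y)$ with $|y-x_n|\le 2/n$. By Carathéodory, at most $\dim\mathfrak J+1$ points and weights are needed. Extract convergent subsequences of the points, weights and locations. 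The closed graph of $\mathfrak A_G$ and its convexity put the limit $V$ in $\mathfrak A_G(x)$, so $\limsup G_n(x_n,U_n)\le\ell_V(U)\le G(x,U)$.

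The main obstacle is (iii), the common Lyapunov control, because $G$ is only upper semicontinuous and smoothing in space could violate $G(\cdot,J^2\phi)\le C_\phi\phi$. I will handle it by making the radius $\delta_n(x)$ small enough, depending on $x$, and using the jet-Lipschitz bound together with continuity of $y\mapsto J^2_y\phi$. For $V\in\mathfrak A_G(y)$ with $|y-x|\le\delta_n(x)$,
\[
    \ell_V(J_x^2\phi)\le \ell_V(J_y^2\phi)+L_{K'}\|J_x^2\phi-J_y^2\phi\|\le C_\phi\phi(y)+L_{K'}\|J_x^2\phi-J_y^2\phi\|.
\]
Continuity of $\phi$ and of $J^2\phi$ then lets me choose $\delta_n(x)$ so that the right-hand side is at most $(C_\phi+n^{-1})\phi(x)$. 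I will do this locally on an exhaustion by the $D_m$, which gives a positive continuous $\delta_n$. The bound passes to convex combinations because $V\mapsto\ell_V(J^2_x\phi)$ is linear. In the partition-of-unity construction, the ball radii $r_i$ must be taken subordinate to this $\delta_n$, so that every point used at $x$ comes from within $\delta_n(x)$. The statement then follows, with the continuity bookkeeping for the cover being routine.
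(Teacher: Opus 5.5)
Your construction is the paper's: a locally finite partition of unity weighting the compact convex sets $\overline{\operatorname{co}}\bigcup_{y\in\overline{B_i}}\mathfrak A_G(y)$ over cover elements chosen small enough for the jet-Lipschitz estimate to transfer the Lyapunov bound from $y$ to $x$. The only variation is in (iv), where you use Carath\'eodory and the closed graph of $\mathfrak A_G$ instead of the paper's bound $G_n(x_n,U_n)\le\sup_{|y-x_n|\le n^{-1}}G(y,U_n)$ together with joint upper semicontinuity of $G$; the two arguments are equivalent. One bookkeeping slip is in (ii): the set $\{y:d(y,K)\le\sup_K\delta_1\}$ need not lie in $D$ when $\delta_1$ varies on $K$. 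Use $\bigcup_{x\in K}\overline{B_{\delta_1(x)}(x)}$ instead, after arranging $\delta_n\le\delta_1$, or argue as the paper does, with a fixed $K_\delta\Subset D$ for large $n$ and local boundedness of each $\mathfrak A_n$ for the finitely many remaining $n$.
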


\begin{proof}
By Lemma~
2.2 in the main text,
\(\mathfrak A_G\) has nonempty compact convex values in \(\mathfrak K\), is locally bounded, and has a closed graph.
Fix \(n\ge1\).  
For every \(x_0\in D\), by local boundedness of \(\mathfrak A_G\) and continuity of \(\phi\) and \(J^2\phi\), there exists an open neighborhood \(W_{n,x_0}\Subset D\), of diameter at most \(n^{-1}\), such that
\begin{equation}
\label{eq:outer_approximation_local_lyapunov}
    \ell_V\bigl(J_x^2\phi\bigr)
    \le
    (C_\phi+n^{-1})\phi(x)
\end{equation}
whenever \(x,y\in\overline W_{n,x_0}\) and \(V\in\mathfrak A_G(y)\).  
Indeed, on a compact neighborhood of \(x_0\) the coefficients are bounded, and
\[
    \ell_V\bigl(J_x^2\phi\bigr)
    \le
    \ell_V\bigl(J_y^2\phi\bigr)
    +
    \|V\|
    \bigl\|J_x^2\phi-J_y^2\phi\bigr\|  \le
    C_\phi\phi(y)
    +
    \|V\|
    \bigl\|J_x^2\phi-J_y^2\phi\bigr\|,
\]
so \eqref{eq:outer_approximation_local_lyapunov} follows after shrinking the neighborhood.

For each \(n\ge1\), choose a locally finite precompact open refinement \((U_{n,i})_{i\in I_n}\) of the preceding cover such that \(\operatorname{diam}(U_{n,i})\le n^{-1}\) for every \(i\in I_n\).
Thus, each \(U_{n,i}\Subset D\) is contained in some member of the preceding cover, the family \((U_{n,i})_{i\in I_n}\) covers \(D\), and every point of \(D\) has a neighborhood intersecting only finitely many of the sets
\(U_{n,i}\).
Let \((\chi_{n,i})_{i\in I_n}\) be a continuous partition of unity subordinate to this refinement.
Set
\[
    B_{n,i}
    :=
    \overline{\operatorname{co}}
    \bigcup_{y\in\overline U_{n,i}}
        \mathfrak A_G(y).
\]
Since \(\mathfrak A_G\) is locally bounded and has a closed graph,
\(B_{n,i}\) is a compact convex subset of \(\mathfrak K\).  Define the
weighted Minkowski sum
\[
    \mathfrak A_n(x)
    :=
    \sum_{i\in I_n}\chi_{n,i}(x)B_{n,i}\subseteq\mathfrak K,
    \qquad x\in D.
\]
Since \(\mathfrak A_G\) is locally bounded and \(\overline U_{n,i}\Subset D\), each set \(B_{n,i}\) is bounded.
The local finiteness of the partition \((\chi_{n,i})_{i\in I_n}\) therefore implies that the correspondence \(\mathfrak A_n\) is locally bounded. 
Moreover, the standard Hausdorff-distance estimate for weighted Minkowski sums shows that \(x\mapsto\mathfrak A_n(x)\) is continuous with respect to the Hausdorff metric induced by \(\|\cdot\|\). 
Since \((V,U)\mapsto\ell_V(U)\) is continuous, Berge's maximum theorem implies that the function \(G_n:D\times\mathfrak J\to\mathbb R\) defined by \eqref{def:G_n} is jointly continuous.

First we verify \ref{item:continuous_outer_majorization}.
If \(\chi_{n,i}(x)>0\), then \(x\in U_{n,i}\), and hence \(\mathfrak A_G(x)\subseteq B_{n,i}\). 
Choosing the same \(V\in\mathfrak A_G(x)\) in every active summand gives $\mathfrak A_G(x)\subseteq\mathfrak A_n(x)$.
Taking support functions yields \(G(x,U)\le G_n(x,U)\) for every \((x,U)\in D\times\mathfrak J\).
Since \(\mathfrak A_n(x)\subseteq\mathfrak K\), the function \(G_n\) satisfies Conditions~(G2) and~(G3); moreover, its support-function representation and the local boundedness of \(\mathfrak A_n\) imply that \(G_n\) is sublinear and locally uniformly Lipschitz in the jet variable.

For \ref{item:continuous_outer_local_uniform_boundedness}, fix
\(K\Subset D\) and choose \(\delta>0\) such that
\(K_\delta:=\{y\in D:\operatorname{dist}(y,K)\le\delta\}\Subset D\).
The diameter bound on the refinement yields, whenever \(n^{-1}\le\delta\),
\begin{align}\label{eq:outer_approximation_nearby_convex_hull}
    \mathfrak A_n(x)
    \subseteq
    \overline{\operatorname{co}}
    \bigcup_{\substack{y\in D\\ |y-x|\le n^{-1}}}
        \mathfrak A_G(y)
    \subseteq
    \overline{\operatorname{co}}
    \bigcup_{y\in K_\delta}\mathfrak A_G(y),
    \qquad x\in K.
\end{align}
Local boundedness of \(\mathfrak A_G\) bounds the right-hand side
uniformly in \(n\); the remaining finitely many \(n\) are controlled
by local boundedness of each \(\mathfrak A_n\) and compactness of \(K\).

We now verify \ref{item:continuous_outer_common_lyapunov_control}.
Choose the refinement so that, for every \(i\in I_n\), there exists \(x_{n,i}\in D\) such that \(\overline U_{n,i}\subseteq W_{n,x_{n,i}}\).
If \(\chi_{n,i}(x)>0\), then \(x\in U_{n,i}\subseteq W_{n,x_{n,i}}\), and
\eqref{eq:outer_approximation_local_lyapunov}, together with the linearity
and continuity of \(V\mapsto\ell_V(J_x^2\phi)\), yields
\[
    \ell_V\bigl(J_x^2\phi\bigr)
    \le
    \bigl(C_\phi+n^{-1}\bigr)\phi(x),
    \qquad V\in B_{n,i}.
\]
Consequently, for every
\(V=\sum_{i\in I_n}\chi_{n,i}(x)V_i\in\mathfrak A_n(x)\),
\[
    \ell_V\bigl(J_x^2\phi\bigr)
    \le
    \sum_{i\in I_n}\chi_{n,i}(x)
        \bigl(C_\phi+n^{-1}\bigr)\phi(x)
    =
    \bigl(C_\phi+n^{-1}\bigr)\phi(x).
\]
Taking the maximum over \(V\in\mathfrak A_n(x)\) proves
\eqref{eq:continuous_outer_common_lyapunov}.

Finally, to prove \ref{item:continuous_outer_upper_consistency}, let
\((x_n,U_n)\to(x,U)\).  The first inclusion in
\eqref{eq:outer_approximation_nearby_convex_hull} gives
\[
    G_n(x_n,U_n)
    \le
    \sup_{\substack{y\in D\\ |y-x_n|\le n^{-1}}}G(y,U_n),
\]
and the joint upper semicontinuity of \(G\) proves
\eqref{eq:continuous_outer_generator_upper_consistency}.
This completes the proof.
\end{proof}

\subsection{Localized Residual Control}\label{appsubsec:localized_residual_control}

\begin{lemma}
\label{lem:usc_localized_krylov_bound}
Let \(\upsilon=(c,b,a)\) be continuous and suppose that, on
\([0,R]\times\overline D_m\),
\[
    \|\upsilon(s,x)\|\le K,
    \qquad
    a(s,x)\ge\lambda I_d
\]
for some \(K<\infty\) and \(\lambda>0\).  Then there is
\(N=N(d,R,D_m,K,\lambda)\) such that every
\(\mathbb P\in\widetilde{\mathcal P}_x(L^{\beta^\upsilon})\) satisfies
\begin{equation}
\label{eq:usc_virtual_krylov_bound}
    \mathbb E^{\mathbb P}\!\left[
        \int_0^{R\wedge\tau_m}g(s,X_s)\mathbb I_{\{s<\tau_\infty\}}\,ds
    \right]
    \le
    N\|g\|_{L^{d+1}([0,R]\times D_m)}
\end{equation}
for every nonnegative Borel function \(g\).
\end{lemma}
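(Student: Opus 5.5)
The plan is to reduce the estimate to the classical Krylov estimate for a nondegenerate continuous Itô process, using the characteristic identification in Lemma~\ref{prop:localized_characteristics_linear_gmp}. Fix $\mathbb P\in\widetilde{\mathcal P}_x(L^{\beta^\upsilon})$. If $x\notin D_m$, then $\tau_m=0$ and there is nothing to prove, so assume $x\in D_m$. On $\{s<\tau_m\wedge\tau_\infty\}$ the killed coordinate coincides with $\overline X$. Hence the left-hand side of \eqref{eq:usc_virtual_krylov_bound} is bounded by
\[
\mathbb E^{\mathbb P}\!\left[\int_0^{R\wedge\rho_{D_m}^{\,0}}g(s,\overline X_s)\,ds\right]
\]
after killing is discarded. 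Here we use $g\ge0$ and the fact that the integrand vanishes after $\tau_\infty$. Since $\rho^{\,0}_{D_m}=\widehat\tau_{D_m}^{\,0}\wedge\tau_{\mathrm{kill}}\le\widehat\tau_{D_m}^{\,0}$, it suffices to bound $\mathbb E^{\mathbb P}\bigl[\int_0^{R\wedge\widehat\tau_{D_m}^{\,0}}g(s,\overline X_s)\,ds\bigr]$. This dominates the former quantity, because the integral up to the earlier time $\rho^{\,0}_{D_m}$ is smaller.

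Next, Lemma~\ref{prop:localized_characteristics_linear_gmp} with $O=D_m$ gives only the decomposition of $\overline X$ stopped at $\rho_{D_m}^{\,0}$, which includes killing. To reach $\widehat\tau_{D_m}^{\,0}$, I would instead build an auxiliary continuous process. Set $Z_s:=\overline X_{s\wedge\rho}$ with $\rho:=\rho^{\,0}_{D_m}$, and write
\[
Z_s=x+\int_0^{s\wedge\rho}b(r,\overline X_r)\,dr+M_{s\wedge\rho},\qquad d\langle M\rangle_r=a(r,\overline X_r)\,dr .
\]
Here $|b|\le K$ and $a\ge\lambda I_d$ before $\rho$, by the hypotheses on $\upsilon$. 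Krylov's estimate for Itô processes $dZ=\tilde b\,dr+\tilde\sigma\,dW$ stopped at exit from a bounded domain, with bounded $\tilde b$ and $\tilde\sigma\tilde\sigma^\top\ge\lambda I$ (Krylov, \emph{Controlled Diffusion Processes}, Thm.~2.2.4), then gives
\[
\mathbb E\!\left[\int_0^{R\wedge\rho}g(s,Z_s)\,ds\right]\le N(d,R,\operatorname{diam}D_m,K,\lambda)\,\|g\|_{L^{d+1}([0,R]\times D_m)} .
\]
To apply it, I would extend the stopped process past $\rho$ by an independent Brownian motion with identity covariance. Equivalently, I would apply the version for semimartingales $dZ=\tilde b\,dr+dM$ with $d\langle M\rangle=A\,dr$, $\det A\ge\lambda^d$, $\operatorname{tr}A\le K$, which holds on an enlarged space without requiring $\sigma$.

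The main obstacle is that killing stops the process before the spatial exit. The plan handles this by the first reduction, which uses only $\rho\le\widehat\tau_{D_m}^{\,0}$ and monotonicity in the upper limit. It therefore suffices to estimate the integral up to $\rho$, which is exactly what the stopped decomposition provides. The stopped-at-$\rho$ integral equals the original left-hand side, because $s<\tau_m\wedge\tau_\infty$ is equivalent to $s<\rho$ for paths alive at $0$. So the first reduction is in fact an equality, and no extension to $\widehat\tau$ is needed.

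The remaining care concerns the filtration: Krylov's estimate is applied under the usual augmentation $\widetilde{\mathbb F}^{\,\mathbb P}$. The constant $N$ depends only on the listed quantities, because $|b|\le K$, $\|a\|\le K$ and $a\ge\lambda I_d$ on $[0,R]\times\overline D_m$, uniformly in $\mathbb P$. Finally, I would pass from bounded continuous $g$ to general nonnegative Borel $g$ by a monotone class argument.
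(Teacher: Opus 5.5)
Your proposal is correct and is essentially the paper's proof. Since $\tau_m\le\tau_\infty$, the left-hand side equals $\mathbb E^{\mathbb P}\bigl[\int_0^{R\wedge\tau_m}g(s,\overline X_s)\,ds\bigr]$; Lemma~\ref{prop:localized_characteristics_linear_gmp} with $O=D_m$ then makes $\overline X_{\cdot\wedge\tau_m}$ an It\^o process with drift bounded by $K$ and covariance bounded by $K$ and below by $\lambda I_d$ up to time $R$, and Krylov's Theorem~2.2.4 concludes. You should delete the intermediate detour through $\widehat\tau_{D_m}^{\,0}$, because the stopped decomposition says nothing about $\overline X$ after killing, where it is frozen; the closing monotone-class step is also unnecessary, since Krylov's estimate already applies to nonnegative Borel $g$.
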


\begin{proof}
Set \(\rho:=R\wedge\tau_m\).  By
Lemma~\ref{prop:localized_characteristics_linear_gmp}, the stopped
de-killed process is an It\^o process whose drift and covariance are bounded
in terms of \(K\), and whose covariance is bounded below by \(\lambda I_d\), before
\(\rho\).  Moreover,
\[
    \int_0^{R\wedge\tau_m}g(s,X_s)\mathbb I_{\{s<\tau_\infty\}}\,ds
    =\int_0^\rho g(s,\overline X_s)\,ds.
\]
Extend \(g\) by zero outside \([0,R]\times D_m\). 
Thus \eqref{eq:usc_virtual_krylov_bound} follows by the parabolic occupation
estimate \cite[Theorem~2.2.4]{supp:krylov1980controlled}.
\end{proof}

\subsection{Smooth Elliptic Approximation}\label{appsubsec:usc_smooth_elliptic_approximation}

\begin{proposition}
\label{prop:usc_smooth_elliptic_approximation}
Let \(G\) be jointly continuous, satisfy (G1)--(G3), and obey
Assumption~
2.2 in the main text. Let \(\phi,C_\phi\) be as in that
assumption.  Let
\(w\in\USC_b([0,R+1]\times D)\) be a viscosity subsolution of
\[
    \partial_tw-G(x,J_x^2w)-q(t,x)=0
\]
on \((0,R+1)\times D\), where
\(q\in C([0,R+1]\times D)\).  Fix \(m\ge1\).  Then there are
\[
    \lambda_j>0,
    \qquad
    \underline t_j\searrow0,
    \qquad
    v_j\in C^\infty([\underline t_j,R]\times\overline D_m),
    \qquad
    q_j,e_j\in C([\underline t_j,R]\times\overline D_m),
\]
and continuous sublinear functions
\(G_j:\overline D_m\times\mathfrak J\to\mathbb R\) with the following
properties.

\begin{enumerate}[label=(\roman*), ref=(\roman*)]
\item\label{prop:usc_smooth_elliptic_approximation_1}
For each \(j\ge1\) and \(x\in\overline D_m\), set
\(\mathfrak A_j(x):=\mathfrak A_{G_j}(x)\subseteq\mathfrak K\).
Every covariance coordinate in \(\mathfrak A_j(x)\) is bounded below
by \(\lambda_jI_d\).
Moreover, for every compact set \(K\Subset D_m\),
\begin{equation}
\label{eq:usc_approximating_support_convergence}
    \sup_{x\in K}
    d_{H,\mathfrak J}\bigl(
        \mathfrak A_j(x),
        \mathfrak A_G(x)
    \bigr)
    \longrightarrow0,
\end{equation}

\item\label{prop:usc_smooth_elliptic_approximation_2}
Set
\[
    \mathsf K_j
    :=1+
    \sup_{\substack{x\in\overline D_m\\V\in\mathfrak A_j(x)}}\|V\|<\infty,
    \qquad
    N_j:=1+\max_{1\le\ell\le m}
      N(d,R,D_\ell,\mathsf K_j,\lambda_j),
\]
where \(N(d,R,D_\ell,\mathsf K_j,\lambda_j)\) is the constant in
Lemma~\ref{lem:usc_localized_krylov_bound}.  
Then
\begin{equation}
\label{eq:usc_krylov_small_residual}
    N_j\|e_j\|_{L^{d+1}([\underline t_j,R]\times D_m)}
    \longrightarrow0.
\end{equation}
Moreover, on \([\underline t_j,R]\times D_m\),
\begin{equation}
\label{eq:usc_classical_approximate_subsolution}
    \partial_tv_j
    -G_j(x,J_x^2v_j)
    -q_j
    \le e_j,
    \qquad e_j\ge0.
\end{equation}

\item\label{prop:usc_smooth_elliptic_approximation_3}
For every compact set \(K\Subset D_m\),
\begin{equation}
\label{eq:usc_smooth_source_convergence}
    \sup_{(t,x)\in[\underline t_j,R]\times K}
    \lvert q_j(t,x)-q(t,x)\rvert
    \longrightarrow0.
\end{equation}
Moreover, the approximants satisfy
\begin{align}
\label{eq:usc_approximation_convergence}
\begin{split}
    \limsup_{j\to\infty}^{*}v_j
    \le w&
    \qquad\text{on }[0,R]\times D_m,\\
    v_j(t,x)
    \longrightarrow w(t,x),&
    \qquad (t,x)\in(0,R]\times D_m.
\end{split}
\end{align}
Finally,
\begin{equation}
\label{eq:usc_smooth_approximants_uniform_bound}
    \sup_{j\ge1}
    \lVert v_j\rVert_{L^\infty([\underline t_j,R]\times D_m)}
    \le
    \lVert w\rVert_\infty.
\end{equation}
\end{enumerate}
\end{proposition}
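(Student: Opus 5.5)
We construct the approximation in three layers: first regularize the specification into a uniformly elliptic continuous one, then regularize the subsolution by sup-convolution plus mollification, and finally collect the mollification error as the residual $e_j$, which we show vanishes in $L^{d+1}$ faster than the Krylov constant grows.

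\emph{Specification.} Let $\mathfrak A_j(x)$ be the set of elements of $\mathfrak K$ within Hausdorff distance $\delta_j$ of $\mathfrak A_G(x)$, with every covariance coordinate shifted by $\lambda_jI_d$; concretely, $\mathfrak A_j(x):=\mathfrak A_G^{\delta_j}(x)+\{(0,0,\lambda_jI_d)\}$ in the notation of \eqref{eq:def_outer_support_regularization}. Set $G_j:=\max_{V\in\mathfrak A_j}\ell_V$. Since $G$ is jointly continuous, Lemma~\ref{lem:monotone_outer_generator_regularization} together with the Hausdorff continuity of $\mathfrak A_G$ gives the following:
\begin{itemize}
\item $G_j$ is continuous and sublinear;
\item its covariances are bounded below by $\lambda_jI_d$;
\item $G_j\ge G$;
\item \eqref{eq:usc_approximating_support_convergence} holds once $\delta_j,\lambda_j\to0$;
\item $\mathsf K_j$ stays bounded uniformly in $j$.
\end{itemize}
Because $G_j\ge G$, $w$ remains a viscosity subsolution with $G_j$ in place of $G$. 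This settles (i), and it also shows that $N_j$ depends on $j$ only through $\lambda_j$, so $N_j$ grows at most like some explicit function of $\lambda_j^{-1}$.

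\emph{Subsolution.} Take the sup-convolution $w^\varepsilon(t,x):=\sup_{s,y}\{w(s,y)-(|t-s|^2+|x-y|^2)/(2\varepsilon)\}$. It is semiconvex and bounded by $\|w\|_\infty$. By the standard Jensen--Ishii argument it is a subsolution of the equation with $G_j$ and source $q$, evaluated at shifted points $(s^*,y^*)$ that lie within $O(\sqrt\varepsilon)$ of $(t,x)$. Joint continuity of $G_j$ and $q$ on compacts, together with the local Lipschitz bound of $G_j$ in the jet variable, turns this into
\[
\partial_tw^\varepsilon-G_j(x,J^2w^\varepsilon)-q\le\omega(\varepsilon)
\]
at points of twice differentiability, on $[\underline t_j,R]\times\overline D_m$, where $\underline t_j\gtrsim\sqrt\varepsilon$ keeps the construction away from the initial time. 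By Alexandrov's theorem the semiconvex function $w^\varepsilon$ is twice differentiable almost everywhere, so this inequality holds a.e. Next mollify, $v_j:=w^{\varepsilon_j}*\eta_{r_j}$, and set $q_j:=q*\eta_{r_j}$. Since $G_j$ is convex in the jet variable, Jensen's inequality gives
\[
\partial_tv_j-G_j(x,J^2v_j)-q_j\le\omega(\varepsilon_j)+\bigl(G_j(x,J^2w^{\varepsilon_j})\bigr)*\eta_{r_j}-G_j(x,J^2w^{\varepsilon_j})*\eta_{r_j}(\text{frozen }x)+\dots
\]
Define $e_j$ as the positive part of the left-hand side, made continuous. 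Every error term is controlled by the $x$-oscillation of $G_j$ over radius $r_j$, multiplied by $|J^2w^{\varepsilon_j}|$, plus the singular (Dirac) part of $D^2w^{\varepsilon_j}$, which has the favorable sign by semiconvexity.

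\emph{Main obstacle.} The core difficulty is the $L^{d+1}$ smallness \eqref{eq:usc_krylov_small_residual}, because the Hessians of $w^\varepsilon$ are only bounded below, by $-C/\varepsilon$, while the upper part is a measure. We would first use the sign: the upper, singular part of $D^2$ enters $-G_j$ through monotonicity (G2) with the helpful sign, so only the bounded-below part produces error. Its size is at most $L\cdot\varepsilon^{-1}\cdot\operatorname{osc}_{r}$ of the coefficients, and choosing $r_j\ll\varepsilon_j$ makes this uniformly small. To beat $N_j$, choose the parameters in the order $\lambda_j$, then $\delta_j$, then $\varepsilon_j$, then $r_j$, diagonally, so that $N_j\|e_j\|_{L^{d+1}}\le 1/j$. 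This is possible because $N_j$ is fixed once $\lambda_j$ and $\mathsf K_j$ are fixed, and the remaining parameters can then be sent to zero.

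\emph{Remaining properties.} Property \eqref{eq:usc_smooth_source_convergence} holds because $q$ is uniformly continuous on compacts. The limsup bound \eqref{eq:usc_approximation_convergence} follows from the standard facts $\limsup^*w^\varepsilon\le w$ and $w^\varepsilon\ge w$ combined with upper semicontinuity, which also give pointwise convergence at interior points when $r_j\ll\varepsilon_j$. Property \eqref{eq:usc_smooth_approximants_uniform_bound} holds since both sup-convolution and mollification preserve the sup norm.
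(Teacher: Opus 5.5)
\textbf{There is a genuine gap.} The construction breaks at the specification step, because the claim $G_j\ge G$ is false. With $\mathfrak A_j(x)=\mathfrak A_G^{\delta_j}(x)+\{(0,0,\lambda_jI_d)\}$ you get
\[
G_j(x,z,p,H)=G^{\delta_j}(x,z,p,H)+\tfrac{\lambda_j}{2}\operatorname{tr}(H),
\]
and the last term is negative whenever $\operatorname{tr}H<0$. Concretely, if $\mathfrak A_G(x)$ contains some $(c,b,0)$, then $G(x,0,0,-I_d)=0$ while $G_j(x,0,0,-I_d)=-d\lambda_j/2$. More generally, no specification whose covariances are all $\ge\lambda_jI_d$ can dominate a degenerate $G$. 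So you cannot obtain uniform ellipticity and keep $w$ a $G_j$-subsolution for free, and everything downstream rests on that. The missing idea is to pay for the ellipticity floor with the semiconvexity of the sup-convolution:
\begin{itemize}
\item add $\frac{\lambda}{2}\operatorname{tr}(H)$ only after regularizing $w$;
\item use $D_x^2w^\varepsilon\ge-\varepsilon^{-1}I_d$ at Alexandrov points, which gives $\frac{\lambda}{2}\operatorname{tr}(D_x^2w^\varepsilon)\ge-\frac{d\lambda}{2\varepsilon}$;
\item move $\frac{d\lambda}{2\varepsilon}$ into the source, $q_j=q^{\varepsilon_j,m}+\frac{d\lambda_j}{2\varepsilon_j}$ with $\lambda_j=\varepsilon_j^2$.
\end{itemize}
This term must go into $q_j$, not $e_j$. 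In $e_j$ it would be multiplied by $N_j$, whose dependence on $\lambda_j$ is not controlled. In $q_j$ it only has to vanish, which $\lambda_j/\varepsilon_j\to0$ ensures.

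Two further steps fail as written.

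First, the shift error in the sup-convolution is not uniformly $\omega(\varepsilon)$. Continuity plus the jet-Lipschitz bound only gives $|G_j(y^*,U)-G_j(x,U)|\le\omega(|y^*-x|)\|U\|$, and $U=J^2w^\varepsilon$ is unbounded, since the Hessian of a semiconvex function is bounded only from below. Likewise, the $z$-discrepancy $w(s^*,y^*)-w^\varepsilon(t,x)=\frac{|t-s^*|^2+|x-y^*|^2}{2\varepsilon}$ is not uniformly small for a discontinuous $w$. The paper makes this step exact:
\begin{itemize}
\item it takes $G^{\varepsilon,m}(x,U):=\max_{|y-x|\le r_\varepsilon}G(y,U)$, whose support set is the closed convex hull of the nearby $\mathfrak A_G(y)$;
\item it uses the local maximum of $q$ as the source;
\item it handles the $z$-argument with (G3), since $w(s^*,y^*)\ge w^\varepsilon(t,x)$.
\end{itemize}
Your $\mathfrak A_G^{\delta_j}$ could play the same role if $r_{\varepsilon_j}$ is chosen so small that $\mathfrak A_G(y)\subseteq\mathfrak A_G^{\delta_j}(x)$ for $|y-x|\le r_{\varepsilon_j}$. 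You would then argue by inclusion rather than by continuity.

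Second, in the mollification step Jensen runs the wrong way. For convex $G_j$ one has $G_j(x,\int U\eta)\le\int G_j(x,U)\eta$, so the Jensen gap is a nonnegative error that you must bound. Your bound $L\varepsilon^{-1}\operatorname{osc}_r$ presumes that the absolutely continuous part of $D^2w^\varepsilon$ is bounded by $\varepsilon^{-1}$, but it is only bounded below. The paper argues pointwise instead, with $\mathcal E^\delta:=\partial_tv^\delta-G_j(x,J_x^2v^\delta)-q_j$:
\begin{itemize}
\item $D_x^2v^\delta\ge H*\rho_\delta$, where $H$ is the absolutely continuous part, because the singular part is positive semidefinite;
\item (G2), Lebesgue differentiation, and continuity of $G_j$ then give $\limsup_{\delta\to0}\mathcal E^\delta\le0$ almost everywhere;
\item the one-sided bound $D_x^2v^\delta\ge-\varepsilon^{-1}I_d$ makes $(\mathcal E^\delta)^+$ bounded uniformly in $\delta$;
\item dominated convergence yields $\|(\mathcal E^\delta)^+\|_{L^{d+1}}\to0$ for fixed $(\varepsilon,\lambda)$.
\end{itemize}
Your diagonal idea (fix $N_j$, then shrink the mollification radius) is the right one. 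It needs this $L^{d+1}$ convergence, not a uniform rate.
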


\begin{proof}
\emph{Step 1: Space--time sup-convolution and the nearby outer generator.}
Set
\[
    K_m:=[0,R+1]\times\overline D_{m+2},
    \qquad
    M_w:=1+\|w\|_\infty,
    \qquad
    r_\varepsilon:=2\sqrt{M_w\varepsilon}.
\]
Since \(\overline D_m\Subset D_{m+2}\), we may choose
\(\varepsilon_0>0\) so small that, for every
\(0<\varepsilon\le\varepsilon_0\),
\[
    3r_\varepsilon<R,
    \qquad
    2r_\varepsilon<1,
    \qquad
    \bigl\{
        x\in\mathbb R^d:
        \operatorname{dist}(x,\overline D_m)\le2r_\varepsilon
    \bigr\}
    \Subset D_{m+2}.
\]
For such \(\varepsilon\), define
\[
    w^{\varepsilon,m}(t,x)
    :=
    \max_{(s,y)\in K_m}
    \left\{
        w(s,y)
        -
        \frac{|t-s|^2+|x-y|^2}{2\varepsilon}
    \right\},
    \qquad
    (t,x)\in\mathbb R\times\mathbb R^d.
\]
The maximum is attained because \(K_m\) is compact and \(w\) is upper semicontinuous. 
Moreover, \(w^{\varepsilon,m}\) is \(\varepsilon^{-1}\)-semiconvex on \(\mathbb R\times\mathbb R^d\).
Denote the set of maximizing points by
\[
    \Gamma_{\varepsilon,m}(t,x)
    :=
    \argmax_{(s,y)\in K_m}
    \left\{
        w(s,y)
        -
        \frac{|t-s|^2+|x-y|^2}{2\varepsilon}
    \right\}.
\]
Consider the open cylinder $\mathcal O_{\varepsilon,m}:=(2r_\varepsilon,R+r_\varepsilon)\times\bigl\{x\in\mathbb R^d:\operatorname{dist}(x,\overline D_m)<r_\varepsilon\bigr\}$.
Fix \((t,x)\in\mathcal O_{\varepsilon,m}\) and
\((s,y)\in\Gamma_{\varepsilon,m}(t,x)\).
Since \((t,x)\in K_m\), we have
\[
\begin{aligned}
    \frac{|t-s|^2+|x-y|^2}{2\varepsilon}
    =
    w(s,y)-w^{\varepsilon,m}(t,x)
    \le
    w(s,y)-w(t,x)
    \le 2\|w\|_\infty.
\end{aligned}
\]
Consequently,
\begin{equation}
\label{eq:usc_sup_convolution_maximizer_distance}
    |t-s|^2+|x-y|^2
    \le r_\varepsilon^2.
\end{equation}
In particular, every such maximizing point satisfies $(s,y)\in(0,R+1)\times D_{m+2}$.

For
\(\operatorname{dist}(x,\overline D_m)<r_\varepsilon\), define
\[
    \mathfrak A_G^{\varepsilon,m}(x)
    :=
    \overline{\operatorname{co}}
    \left(
        \bigcup_{\substack{y\in\overline D_{m+2}\\
                           |y-x|\le r_\varepsilon}}
        \mathfrak A_G(y)
    \right), \qquad
    G^{\varepsilon,m}(x,U)
    :=
    \max_{\substack{y\in\overline D_{m+2}\\
                    |y-x|\le r_\varepsilon}}
    G(y,U).
\]
Then we have $\mathfrak A_{G^{\varepsilon,m}}=\mathfrak A_G^{\varepsilon,m}$ and
\begin{equation}
\label{eq:usc_nearby_support_representation}
    G^{\varepsilon,m}(x,U)
    =
    \max_{V\in\mathfrak A_G^{\varepsilon,m}(x)}
    \ell_V(U).
\end{equation}
Thus \(G^{\varepsilon,m}(x,\cdot)\) is sublinear, and \(\mathfrak A_G^{\varepsilon,m}(x)\) is its compact convex support set.
By joint continuity of \(G\), \(G^{\varepsilon,m}\) is also continuous on its domain.
For \((t,x)\in\mathcal O_{\varepsilon,m}\), set
\[
    q^{\varepsilon,m}(t,x)
    :=
    \max_{\substack{(\tau,y)\in K_m\\
                    |\tau-t|\le r_\varepsilon\\
                    |y-x|\le r_\varepsilon}}
    q(\tau,y).
\]
The preceding choice of \(\varepsilon_0\) ensures that the maximizing
set stays inside \((0,R+1)\times D_{m+2}\), so
\(q^{\varepsilon,m}\) is continuous on
\(\mathcal O_{\varepsilon,m}\).

We claim that \(w^{\varepsilon,m}\) is a viscosity subsolution of
\begin{equation}
\label{eq:usc_sup_convolution_outer_equation}
    \partial_tu
    -
    G^{\varepsilon,m}\bigl(x,J_x^2u\bigr)
    -
    q^{\varepsilon,m}
    =
    0
\end{equation}
on \(\mathcal O_{\varepsilon,m}\).
We use the equivalent viscosity formulation with local \(C^{1,2}\)
test functions. Indeed, let \(\psi\in C^{1,2}\) touch \(w^{\varepsilon,m}\) from above
at \((t,x)\in\mathcal O_{\varepsilon,m}\), and choose
\((s,y)\in\Gamma_{\varepsilon,m}(t,x)\).
Define, near \((s,y)\),
\[
    \widetilde\psi(\tau,z)
    :=
    \psi(\tau+t-s,z+x-y)
    +
    w(s,y)-\psi(t,x).
\]
The standard jet-transfer property of the sup-convolution \cite[Lemma~A.5]{supp:crandall1992user} shows that \(\widetilde\psi\) touches \(w\) from above at \((s,y)\).
Writing $(p,H):=(\nabla\psi(t,x),D_x^2\psi(t,x))$, the viscosity-subsolution property of \(w\) yields
\begin{align}\label{eq:viscosity_inequality_psi}
    \partial_t\psi(t,x)
    -
    G\bigl(y,w(s,y),p,H\bigr)
    -
    q(s,y)
    \le0.
\end{align}
By the maximizing identity,
\[
    w(s,y)
    =
    w^{\varepsilon,m}(t,x)
    +
    \frac{|t-s|^2+|x-y|^2}{2\varepsilon}
    \ge
    w^{\varepsilon,m}(t,x)
    =
    \psi(t,x).
\]
Therefore, condition (G3), \eqref{eq:usc_sup_convolution_maximizer_distance} and the definitions of \(G^{\varepsilon,m}\) and \(q^{\varepsilon,m}\) yield
\begin{align}
\label{eq:G,q_inequalities}
    G\bigl(y,w(s,y),p,H\bigr)
    \le
    G^{\varepsilon,m}
    \bigl(x,J_x^2\psi(t,\cdot)\bigr),
    \qquad
    q(s,y)
    \le
    q^{\varepsilon,m}(t,x).
\end{align}
Combining \eqref{eq:viscosity_inequality_psi} with
\eqref{eq:G,q_inequalities} shows that \(\psi\) satisfies the viscosity
subsolution inequality for
\eqref{eq:usc_sup_convolution_outer_equation}.  This proves the claim.

\medskip
\noindent
\emph{Step 2: Addition of a uniform ellipticity floor.}
Fix \(\lambda>0\) and define
\[
    G^{\varepsilon,m,\lambda}(x,z,p,H)
    :=
    G^{\varepsilon,m}(x,z,p,H)
    +
    \frac{\lambda}{2}\operatorname{tr}(H),
    \qquad
     q^{\varepsilon,m,\lambda}
    :=
    q^{\varepsilon,m}
    +
    \frac{d\lambda}{2\varepsilon}.
\]
By \eqref{eq:usc_nearby_support_representation}, the support correspondence of \(G^{\varepsilon,m,\lambda}\) is
\[
    \mathfrak A_G^{\varepsilon,m,\lambda}(x)
    :=
    \left\{
        (c,b,a+\lambda I_d):
        (c,b,a)\in\mathfrak A_G^{\varepsilon,m}(x)
    \right\}.
\]
It is contained in \(\mathfrak K\), and every one of its covariance
coordinates is bounded below by \(\lambda I_d\).

By the regularity theory for semiconvex functions
\cite[Theorem~2.3.1]{supp:cannarsa2004semiconcave}, applied to
\(-w^{\varepsilon,m}\), the function \(w^{\varepsilon,m}\) is locally
Lipschitz and twice differentiable almost everywhere.
At every Alexandrov point of
\(\mathcal O_{\varepsilon,m}\), its classical derivatives satisfy
\[
    D_x^2w^{\varepsilon,m}
    \ge
    -\varepsilon^{-1}I_d.
\]
At such a point, the viscosity inequality from Step~1 gives the
corresponding almost-everywhere differential inequality. Consequently,
\begin{align}\label{eq:almosteverywhere_viscosityinequality}
\begin{split}
    &\partial_tw^{\varepsilon,m}
    -
    G^{\varepsilon,m,\lambda}
        \bigl(x,J_x^2w^{\varepsilon,m}\bigr)
    -
    q^{\varepsilon,m,\lambda}
    \\
    &\quad=
    \partial_tw^{\varepsilon,m}
    -
    G^{\varepsilon,m}
        \bigl(x,J_x^2w^{\varepsilon,m}\bigr)
    -
    q^{\varepsilon,m}
    -
    \frac{\lambda}{2}
        \operatorname{tr}\bigl(D_x^2w^{\varepsilon,m}\bigr)
    -
    \frac{d\lambda}{2\varepsilon}
    \le0
\end{split}
\end{align}
almost everywhere on \(\mathcal O_{\varepsilon,m}\).

\medskip
\noindent
\emph{Step 3: Mollification and vanishing positive residual.}
Set $Q_{\varepsilon,m}:=[3r_\varepsilon,R]\times\overline D_m$.
Let \(\rho\in C_c^\infty(\mathbb R^{d+1})\) be nonnegative, supported in the unit ball, and satisfy \(\int\rho=1\).  
Write $\rho_\delta(\tau,z):=\delta^{-(d+1)}\rho\left(\frac{\tau}{\delta},\frac{z}{\delta}\right)$ and, for \(0<\delta<r_\varepsilon/2\), define $v^{\varepsilon,m,\delta}:=w^{\varepsilon,m}*\rho_\delta$ on a neighborhood of \(Q_{\varepsilon,m}\).

On the other hand, since $w^{\varepsilon,m}$ is $\varepsilon^{-1}$-semiconvex, the distributional spatial Hessian of \(w^{\varepsilon,m}\) has the decomposition
\[
    D_x^2w^{\varepsilon,m}
    =
    H^{\varepsilon,m}\,dt\,dx
    +
    \mu^{\varepsilon,m}_{\mathrm s},
\]
where $H^{\varepsilon,m}=D_x^2w^{\varepsilon,m}$ almost everywhere and \(\mu^{\varepsilon,m}_{\mathrm s}\) is a positive-semidefinite matrix-valued Radon measure.
Therefore,
\begin{equation}
\label{eq:usc_mollified_hessian_decomposition}
    D_x^2v^{\varepsilon,m,\delta}
    =
    H^{\varepsilon,m}*\rho_\delta
    +
    \mu^{\varepsilon,m}_{\mathrm s}*\rho_\delta
    \ge
    H^{\varepsilon,m}*\rho_\delta.
\end{equation}
Define the classical residual on \(Q_{\varepsilon,m}\) by
\[
    \mathcal E^{\varepsilon,m,\lambda,\delta}(t,x)
    :=
    \partial_tv^{\varepsilon,m,\delta}(t,x)
    -
    G^{\varepsilon,m,\lambda}
    \bigl(
        x,
        J_x^2v^{\varepsilon,m,\delta}(t,\cdot)
    \bigr)
    -
    q^{\varepsilon,m,\lambda}(t,x).
\]
Then, at almost every common Lebesgue point of the a.e. derivatives of
\(w^{\varepsilon,m}\) where
\eqref{eq:almosteverywhere_viscosityinequality} holds,
\eqref{eq:usc_mollified_hessian_decomposition}, the Lebesgue
differentiation theorem, and the joint continuity and degenerate
ellipticity of \(G^{\varepsilon,m,\lambda}\) yield
\begin{align}\label{eq:converge_mathcalE}
    \limsup_{\delta\searrow0}
    \mathcal E^{\varepsilon,m,\lambda,\delta}(t,x)
    \le0.
\end{align}

We now prove that, for every \(p\in[1,\infty)\),
\begin{equation}
\label{eq:usc_fixed_parameter_residual_convergence}
    \lim_{\delta\searrow0}
    \left\|
        \bigl(\mathcal E^{\varepsilon,m,\lambda,\delta}\bigr)^+
    \right\|_{L^p(Q_{\varepsilon,m})}
    =0.
\end{equation}
Recall that the closed \(r_\varepsilon/2\)-neighborhood of \(Q_{\varepsilon,m}\) is compactly contained in \(\mathcal O_{\varepsilon,m}\).
Since \(w^{\varepsilon,m}\) is locally Lipschitz, the functions \(v^{\varepsilon,m,\delta}\), \(\partial_tv^{\varepsilon,m,\delta}\), and \(\nabla_xv^{\varepsilon,m,\delta}\) are uniformly bounded for \(0<\delta<r_\varepsilon/2\).
Moreover, mollification preserves semiconvexity, and hence
\[
    D_x^2v^{\varepsilon,m,\delta}
    \ge
    -\varepsilon^{-1}I_d.
\]
By property~\textup{(G2)} of \(G^{\varepsilon,m,\lambda}\),
\[
    G^{\varepsilon,m,\lambda}
    \bigl(
        x,
        v^{\varepsilon,m,\delta},
        \nabla_xv^{\varepsilon,m,\delta},
        D_x^2v^{\varepsilon,m,\delta}
    \bigr)
    \ge
    G^{\varepsilon,m,\lambda}
    \bigl(
        x,
        v^{\varepsilon,m,\delta},
        \nabla_xv^{\varepsilon,m,\delta},
        -\varepsilon^{-1}I_d
    \bigr).
\]
The right-hand side is bounded below uniformly in \(\delta\) on \(Q_{\varepsilon,m}\). 
Consequently, for some constant \(C_{\varepsilon,m,\lambda}\) independent of \(\delta\), we have $0\le(\mathcal E^{\varepsilon,m,\lambda,\delta})^+\le C_{\varepsilon,m,\lambda}$ on \(Q_{\varepsilon,m}\).
Thus, \eqref{eq:converge_mathcalE} and the dominated convergence theorem yield \eqref{eq:usc_fixed_parameter_residual_convergence}.

\medskip
\noindent
\emph{Step 4: Diagonal choice.}
Choose \(\varepsilon_j\searrow0\) such that, with
\(r_j:=r_{\varepsilon_j}\),
\[
    3r_j<R,
    \qquad
    2r_j<1,
    \qquad
    \bigl\{
        x\in\mathbb R^d:
        \operatorname{dist}(x,\overline D_m)\le2r_j
    \bigr\}
    \Subset D_{m+2}
\]
for every \(j\ge1\). 
Set $\underline t_j:=3r_j$ and $\lambda_j:=\varepsilon_j^2$.
Define
\[
    G_j(x,U)
    :=
    G^{\varepsilon_j,m,\lambda_j}(x,U),
    \qquad
    (x,U)\in\overline D_m\times\mathfrak J,
\]
and
\[
    q_j(t,x)
    :=
    q^{\varepsilon_j,m,\lambda_j}(t,x)
    =
    q^{\varepsilon_j,m}(t,x)
    +
    \frac{d\lambda_j}{2\varepsilon_j},
    \qquad
    (t,x)\in[\underline t_j,R]\times\overline D_m.
\]
The support correspondence of \(G_j\) is
\[
    \mathfrak A_j(x)
    =
    \left\{
        (c,b,a+\lambda_jI_d):
        (c,b,a)\in
        \mathfrak A_G^{\varepsilon_j,m}(x)
    \right\}.
\]
Thus \(\mathfrak A_j(x)\subset\mathfrak K\), and every covariance coordinate in \(\mathfrak A_j(x)\) is bounded below by \(\lambda_jI_d\).
Compactness and continuity of these support sets also imply that \(\mathsf K_j\) and \(N_j\), as defined in the statement, are finite.
Fix a compact set \(K\Subset D_m\).
Each correspondence $\mathfrak A_j$ is uniformly Hausdorff continuous on a compact neighborhood of \(K\).
Indeed, by construction, \(\mathfrak A_j(x)\) is obtained by taking the closed convex hull of the sets \(\mathfrak A_G(y)\) with \(|y-x|\le r_{\varepsilon_j}\), and then translating the covariance coordinate by \(\lambda_jI_d\). 
Since \(\mathfrak A_G\) is Hausdorff continuous, it is uniformly Hausdorff continuous on a compact neighborhood of \(K\); hence, by the convexity of \(\mathfrak A_G(x)\),
\[
    \sup_{x\in K}
    d_{H,\mathfrak J}\bigl(
        \mathfrak A_j(x),
        \mathfrak A_G(x)
    \bigr)
    \le
    \omega_K(r_{\varepsilon_j})
    +
    \lambda_j\lVert(0,0,I_d)\rVert
    \longrightarrow0,
\]
where \(\omega_K\) is a corresponding Hausdorff modulus of continuity.
This proves part~\ref{prop:usc_smooth_elliptic_approximation_1}.

For each \(j\), the operator \(G_j\), the source \(q_j\), and hence \(N_j\), are now fixed independently of the mollification parameter.
By \eqref{eq:usc_fixed_parameter_residual_convergence} and the uniform convergence of the mollifications, we may choose \(0<\delta_j<r_j/2\) such that
\begin{align}
\label{eq:uniformerror_delta_j}
    N_j
    \left\|
        \bigl(
            \mathcal E^{\varepsilon_j,m,\lambda_j,\delta_j}
        \bigr)^+
    \right\|_{L^{d+1}(Q_{\varepsilon_j,m})}
    +
    \left\|
        v^{\varepsilon_j,m,\delta_j}
        -
        w^{\varepsilon_j,m}
    \right\|_{L^\infty(Q_{\varepsilon_j,m})}
    \le j^{-1}.
\end{align}
Set $v_j:=v^{\varepsilon_j,m,\delta_j}$ and $e_j:=\bigl(\mathcal E^{\varepsilon_j,m,\lambda_j,\delta_j}\bigr)^+$.
By construction, the resulting objects satisfy part~\ref{prop:usc_smooth_elliptic_approximation_2}.

It remains to verify part~\ref{prop:usc_smooth_elliptic_approximation_3}.
Let \(K\Subset D_m\).
Since \(r_j\to0\), the continuity of \(q\) implies that
\[
    q^{\varepsilon_j,m}\longrightarrow q
    \quad\text{uniformly on }
    [\underline t_j,R]\times K.
\]
Together with \(\varepsilon_j^{-1}\lambda_j\to0\), this proves \eqref{eq:usc_smooth_source_convergence}.
To verify \eqref{eq:usc_approximation_convergence}, let
\[
    (\tau_j,x_j)
    \in
    [\underline t_j,R]\times\overline D_m,
    \qquad
    (\tau_j,x_j)\longrightarrow(t,x)
    \in[0,R]\times D_m,
\]
and choose $(s_j,y_j)\in\Gamma_{\varepsilon_j,m}(\tau_j,x_j)$.
By \eqref{eq:usc_sup_convolution_maximizer_distance}, \((s_j,y_j)\to(t,x)\).
Moreover, by the definition of \(\Gamma_{\varepsilon_j,m}(\tau_j,x_j)\), we have $w^{\varepsilon_j,m}(\tau_j,x_j)\le w(s_j,y_j)$.
Therefore, using \((s_j,y_j)\to(t,x)\), the upper semicontinuity of \(w\), and \eqref{eq:uniformerror_delta_j}, we obtain
\[
    \limsup_{j\to\infty}v_j(\tau_j,x_j)
    =
    \limsup_{j\to\infty}
    w^{\varepsilon_j,m}(\tau_j,x_j)
    \le
    \limsup_{j\to\infty}w(s_j,y_j)
    \le
    w(t,x).
\]
This proves the first assertion in
\eqref{eq:usc_approximation_convergence}.
To prove the second assertion in
\eqref{eq:usc_approximation_convergence}, fix
\((t,x)\in(0,R]\times D_m\).  
For all sufficiently large \(j\), we have \(t\ge\underline t_j\).
Since \(w^{\varepsilon_j,m}(t,x)\ge w(t,x)\), \eqref{eq:uniformerror_delta_j} yields $v_j(t,x)\ge w^{\varepsilon_j,m}(t,x)-j^{-1}\ge w(t,x)-j^{-1}$.
Therefore,
\[
    \liminf_{j\to\infty}v_j(t,x)\ge w(t,x).
\]
On the other hand, applying the first assertion in \eqref{eq:usc_approximation_convergence} to the constant sequence \((\tau_j,x_j)\equiv(t,x)\) yields the reverse upper bound, and hence \(v_j(t,x)\to w(t,x)\).

Finally, the convolution defining $v_j$ on $[\underline t_j,R]\times D_m$ only samples points of $K_m$, where
\[
    -\|w\|_\infty
    \le w^{\varepsilon_j,m}
    \le \|w\|_\infty.
\]
Since the mollifier is nonnegative and has unit mass, we obtain
\[
    \|v_j\|_{L^\infty([\underline t_j,R]\times D_m)}
    \le \|w\|_\infty,
\]
which proves \eqref{eq:usc_smooth_approximants_uniform_bound}.
This completes the proof.
\end{proof}

\subsection{Continuous Coefficient Fields and Realizing Laws}\label{appsubsec:continuous_coefficient_weak_existence}

We first establish weak existence for continuous coefficient fields directly on the killed path space. We then construct near-maximizing fields and their realizing laws.

\begin{lemma}
\label{lem:continuous_locally_bounded_linear_existence}
Let $\upsilon=(c,b,a):[0,\infty)\times D\to\mathfrak K$ be continuous and spatially locally bounded uniformly in time, in the sense that
\[
    \sup_{s\ge0,\,y\in K}\|\upsilon(s,y)\|<\infty
    \qquad\text{for every compact }K\Subset D,
\]
and let $\beta^\upsilon$ be the induced Markovian coefficient field.
Suppose that, for a function $\phi\in C^2(D)$ with $\phi\ge1$ and a constant $C\ge0$,
\begin{equation}
\label{eq:continuous_coefficients_lyapunov}
    m_n^\phi:=\inf_{y\in D\setminus D_n}\phi(y)
    \longrightarrow\infty,
    \qquad\ell_{\upsilon(s,x)}(J_x^2\phi)
    \le C\phi(x),
    \qquad (s,x)\in[0,\infty)\times D.
\end{equation}
Then, for every $x\in D$, $\widetilde{\mathcal P}_x(L^{\beta^\upsilon})\neq\varnothing$.
\end{lemma}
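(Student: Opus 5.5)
We prove existence by localization and de-killing. First solve a continuous-path martingale problem for the diffusion part with bounded, globally extended coefficients, and use Lyapunov tightness to remove the truncation. Then pass to the killed path space through the discounting-to-killing map $\Phi$ of Lemma~5.4, using the correspondence of Lemma~5.5.

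\emph{Step 1 (truncated continuous problem).} For each $n$, choose a cutoff $\chi_n\in C_c(D;[0,1])$ equal to one on $\overline D_{n+1}$. Set $b_n:=\chi_n b$ and $a_n:=\chi_n a$, extended by zero outside $D$. These are bounded, continuous, time-dependent coefficients on $\mathbb R^d$ with $a_n\in\mathbb S^+(d)$. By the classical Stroock--Varadhan/Skorokhod existence theorem for bounded continuous coefficients (no uniqueness needed), there is a law $\mathbb Q_n$ on $C([0,\infty),\mathbb R^d)$ started at $x$ solving the martingale problem for $\frac12\operatorname{tr}(a_nH)+b_n\cdot p$. Stop it at the exit time of $D_n$ and regard it as a law on $\widehat\Omega$.

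\emph{Step 2 (Lyapunov tightness and removal of localization).} Ignoring the killing rate only helps the Lyapunov bound: since $c\le0$ and $\phi\ge1$, the inequality $\ell_{\upsilon}(J^2\phi)\le C\phi$ implies
\[
\tfrac12\operatorname{tr}\bigl(a\nabla^2\phi\bigr)+b\cdot\nabla\phi\le C\phi-c\phi\le(C+\|c\|_{K})\phi
\]
locally. This constant is not uniform in space, so a better route is available. Apply Step~1 together with Lemma~5.4(i)-type Itô integration to the discounted process $e^{-\int_0^s(-c)}\phi(X_s)$, which is a supermartingale up to exit by the full Lyapunov inequality. This gives a uniform live-exit bound $e^{Cs}\phi(x)/m_n^\phi$ under the killed laws. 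Concretely, I would immediately kill: define $\mathbb P_n:=\Phi(A^{k,0},\mathbb Q_n)$ with $k=-c(s,X_s)$, a law on $\widetilde\Omega$. By Lemma~5.5(i), applied to the localized field, $\mathbb P_n$ solves the generalized $L^{\beta^\upsilon}$-martingale problem up to $\tau_n$. Proposition~E.4 then yields $\mathbb P_n(\tau_m\le T,\ \tau_m<\tau_{\mathrm{kill}})\le e^{CT}\phi(x)/m_m^\phi$ for $m\le n$. By Proposition~E.5 with the uniform local bounds on $\upsilon$, the family $(\mathbb P_n)$ is tight.

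\emph{Step 3 (limit).} Take a weak cluster point $\mathbb P$. For fixed $m$ and $f\in C_b^\infty(D)$, once $n>m$ the process $\widetilde M^{f,m,\beta^\upsilon}$ is a $\mathbb P_n$-martingale with source $q(s,y)=\ell_{\upsilon(s,y)}(J^2_yf)$. This source is continuous and independent of $n$, so the upper and lower half-relaxed limit conditions hold trivially. The killing densities $-c$ are locally bounded uniformly in $n$. Proposition~E.10(iii) therefore gives that $\widetilde M^{f,m,\beta^\upsilon}$ is a $\mathbb P$-martingale, and E.10(i) gives $\mathbb P(X_0=x)=1$. Hence $\mathbb P\in\widetilde{\mathcal P}_x(L^{\beta^\upsilon})$.

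The main obstacle is Step~2: verifying that the killed truncated laws satisfy the hypotheses of Proposition~E.5, namely a Lyapunov inequality uniform in $n$. After truncation the inequality holds only inside $D_{n+1}$, so the argument must work with processes stopped at $\tau_n$. Here the exit bound from Proposition~E.4 controls precisely the stopped live exits, which is what the localization in E.5 requires.
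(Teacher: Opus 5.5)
Your proposal is correct and follows the paper's proof in outline: bounded continuous truncations solved by Stroock--Varadhan, then killing, then tightness from Proposition~\ref{prop:compactness_criterion} under a common Lyapunov bound, then the limit via Proposition~\ref{prop:weak_passage_martingale_relations}\ref{item:weak_passage_martingale}. The only real difference is the killing step: you kill through $\Phi$ and Lemma~\ref{thm:gmp_virtualization_correspondence}(i) for a localized field, with the compensator hypothesis of Proposition~\ref{prop:weak_passage_martingale_relations} supplied by Lemma~\ref{lem:virtual_killing_compensator}, whereas the paper multiplies all of $(c,b,a)$ by a space--time cutoff, stops at the exit from $D_{N+1}$ where the coefficients vanish, and builds the killed law by hand with an independent exponential clock.

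The obstacle you flag in Step~2 is not real. For the localized field $(c,\mathbb I_{\{s<\tau_n\}}b,\mathbb I_{\{s<\tau_n\}}a)$, and in fact for any factor $\chi\in[0,1]$ on $(b,a)$, write $g:=b\cdot\nabla\phi+\tfrac12\operatorname{tr}(a\nabla^2\phi)$; then $c\phi+\chi g=(1-\chi)c\phi+\chi(c\phi+g)\le C\phi$ because $c\phi\le0$. So the Lyapunov inequality holds everywhere with the same constant, and Lemma~\ref{thm:gmp_virtualization_correspondence}(i) and Propositions~\ref{prop:lyap_tail_bounds}--\ref{prop:compactness_criterion} apply verbatim.
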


\begin{proof}
Choose $N_0$ with $x\in D_{N_0}$.  For $N\ge N_0$, let
$\chi_N\in C_c^\infty(D)$ satisfy
$0\le\chi_N\le1$, $\chi_N=1$ on $D_N$, and
$\operatorname{supp}\chi_N\Subset D_{N+1}$.  Choose also
$\eta_N\in C_c^\infty(\mathbb R)$ with
$0\le\eta_N\le1$ and $\eta_N=1$ on $[0,N]$, and put
\[
    r_N(s,y):=\eta_N(s)\chi_N(y),
    \qquad
    \upsilon^N(s,y):=r_N(s,y)\upsilon(s,y),
    \quad (s,y)\in[0,\infty)\times D.
\]
Set $\beta^N:=\beta^{\upsilon^N}$ and $k^N:=-c^N$.
The maps $b^N=r_Nb$ and
$\sigma^N:=\sqrt{r_N}\,a^{1/2}$ have compact spatial support in $D$ and
are bounded and continuous.  Their zero extensions to
$[0,\infty)\times\mathbb R^d$ are bounded and continuous, and
$a^N:=\sigma^N(\sigma^N)^\top=r_Na$.  The bounded time-inhomogeneous
existence theorem
\cite[Theorem~6.1.7]{supp:stroock1997multidimensional} therefore gives a weak
solution $Z^N$ starting from $x$ for the coefficients
$(b^N,\sigma^N)$ on a filtered probability space
$(\Omega_N,\mathcal F_N,\mathbb F^N,\mathbf P_N)$.
Define
\[
    \theta_N:=\inf\{s\ge0:Z_s^N\notin D_{N+1}\},
    \qquad
    Y_s^N:=Z_{s\wedge\theta_N}^N.
\]
On $\{\theta_N<\infty\}$, continuity gives
$Z_{\theta_N}^N\in\partial D_{N+1}$, where $b^N$ and $\sigma^N$
vanish at every time.  Thus $Y^N$, kept fixed after $\theta_N$, also
solves the same SDE and remains in the compact set
$\overline D_{N+1}\Subset D$ for all time.

Enlarge the space by a unit exponential random variable $E_N$ independent
of $\mathcal F_\infty^N$, and set
\[
    A_s^N:=\int_0^s k^N(r,Y_r^N)\,dr,
    \qquad
    \zeta_N:=\inf\{s\ge0:A_s^N\ge E_N\}.
\]
The function $k^N$ is nonnegative and bounded, so $A^N$ is continuous
and finite on finite intervals, and $\zeta_N>0$ almost surely.
Let $\mathbb G^N$ be the usual augmentation of the filtration
\[
    \mathcal F_s^N\vee
    \sigma\bigl(\mathbb I_{\{\zeta_N\le r\}}:0\le r\le s\bigr),
    \qquad s\ge0.
\]
Independence of $E_N$ ensures that
$Y^N-x-\int_0^{\cdot}b^N(r,Y_r^N)\,dr$ remains a
$\mathbb G^N$-local martingale.  Moreover, the exponential survival identity gives
\[
    \mathbb I_{\{\zeta_N\le s\}}
    -\int_0^s k^N(r,Y_r^N)\mathbb I_{\{r<\zeta_N\}}\,dr,
    \qquad s\ge0,
\]
as a $\mathbb G^N$-martingale.  Indeed, on survival up to a time $u$,
a bounded $\mathcal G_u^N$-measurable multiplier agrees with an
$\mathcal F_u^N$-measurable one.  Integrating first over $E_N$ reduces
the compensated-increment identity between $u$ and $s$ to
\[
    e^{-A_u^N}-e^{-A_s^N}
    =\int_u^s e^{-A_r^N}k^N(r,Y_r^N)\,dr.
\]

Define $X_s^N:=Y_s^N$ for $s<\zeta_N$ and
$X_s^N:=\triangle$ for $s\ge\zeta_N$, and let $\mathbb P^N$ be its
law on $\widetilde\Omega$.  Its initial state is $x$, it has no
continuous explosion, and its killing time is $\zeta_N$.
For $f\in C_b^\infty(D)$, It\^o's formula and the preceding
compensator identity show that
\[
\begin{aligned}
    f(Y_s^N)\mathbb I_{\{s<\zeta_N\}}-f(x)
    -\int_0^s
        \ell_{\upsilon^N(r,Y_r^N)}(J_{Y_r^N}^2f)
        \mathbb I_{\{r<\zeta_N\}}\,dr,
    \qquad s\ge0,
\end{aligned}
\]
is a $\mathbb G^N$-local martingale.  Since $Y^N$ remains in
$\overline D_{N+1}$, this process is bounded on each finite time
interval and hence is a true martingale there.  Stopping at the
canonical localizations $\tau_n$ and passing to the canonical
filtration therefore gives
\[
    \mathbb P^N\in\widetilde{\mathcal P}_x(L^{\beta^N}).
\]
The cutoffs preserve the Lyapunov inequality, since
\[
    L^{\beta^N}(s,\omega,J_{X_s}^2\phi)
    =r_N(s,X_s)
      L^{\beta^\upsilon}(s,\omega,J_{X_s}^2\phi)
    \le C\phi(X_s).
\]

The family $(\beta^N)_{N\ge N_0}$ is uniformly bounded on each finite horizon before every fixed
$\tau_n$ and satisfies the common Lyapunov estimate
\eqref{eq:continuous_coefficients_lyapunov}.  Hence
Proposition~\ref{prop:compactness_criterion} gives tightness of
$(\mathbb P^N)_N$.  Along a subsequence,
$\mathbb P^N\Rightarrow\mathbb P$.
Fixed-time evaluation at zero is continuous, so
$\mathbb P(X_0=x)=1$.  Fix $T>0$, $n\ge1$, and
$f\in C_b^\infty(D)$, and set
\[
    q_f(s,y):=\ell_{\upsilon(s,y)}(J_y^2f).
\]
For all sufficiently large $N$, one has
$\beta^N=\beta^\upsilon$ on
$[0,T]\times\overline D_n$.  Consequently, under $\mathbb P^N$ the
process $\mathcal Z^{0,\tau_n}[f,q_f]$ is a martingale on $[0,T]$.
Set
\[
    \kappa_s^N:=k^{\beta^N}(s,X),
    \qquad s\ge0.
\]
The compensator identity established in the construction gives
\[
    N_s-\int_0^s\kappa_r^N\mathbb I_{\{r<\tau_\infty\}}\,dr,
    \qquad s\ge0,
\]
as a $\mathbb P^N$-martingale.  The spatial local boundedness of
$\upsilon$, together with $0\le r_N\le1$, gives the uniform local
density bound
\eqref{eq:weak_passage_killing_density_bound}.  The source $q_f$ is
continuous on every compact localization. Applying
Proposition~\ref{prop:weak_passage_martingale_relations}\ref{item:weak_passage_martingale}
with $q_N=q=q_f$ and the preceding $\kappa^N$ shows that
$\mathcal Z^{0,\tau_n}[f,q_f]$ is a $\mathbb P$-martingale on $[0,T]$.
Since $T,n$, and
$f$ are arbitrary, $\mathbb P\in
\widetilde{\mathcal P}_x(L^{\beta^\upsilon})$.
\end{proof}

The following proposition constructs continuous near-maximizing coefficient fields from the smooth approximation and applies the preceding existence result.

\begin{proposition}
\label{prop:usc_outer_compatible_coefficients}
Let \(G,w,q,R\) satisfy the hypotheses of
Proposition~\ref{prop:usc_smooth_elliptic_approximation}, and let
\((\phi,C_\phi)\) be the Lyapunov pair appearing there.
Fix \(m\ge1\), and apply
Proposition~\ref{prop:usc_smooth_elliptic_approximation} with \(m\)
replaced by \(m+1\).
Let $\bigl(\lambda_j,\underline t_j,v_j,q_j,e_j,G_j\bigr)_{j\ge1}$ be the resulting approximation.
Thus,
\[
    v_j\in
    C^\infty\bigl(
        [\underline t_j,R]\times\overline D_{m+1}
    \bigr),
    \qquad
    G_j:
    \overline D_{m+1}\times\mathfrak J
    \longrightarrow\mathbb R.
\]
Then there exist \(\varepsilon_j\searrow0\) and continuous coefficient
maps
\[
    \zeta_j=(c_j,b_j,a_j):
    [0,\infty)\times D
    \longrightarrow\mathfrak K
\]
such that the following properties hold for every \(j\ge1\).

\begin{enumerate}[label=(\roman*), ref=(\roman*)]
\item\label{prop:usc_outer_support_membership} For every \((s,x)\in[0,\infty)\times D\), $\zeta_j(s,x)\in\mathfrak A_G^{\varepsilon_j}(x)$.

\item\label{prop:usc_outer_near_maximizing_jet} For every
\((s,x)\in[0,R-\underline t_j]\times\overline D_m\),
\begin{equation}
\label{eq:usc_outer_near_maximizing_jet}
    \ell_{\zeta_j(s,x)}
    \bigl(J_x^2v_j(R-s,\cdot)\bigr)
    \ge
    G_j\bigl(
        x,J_x^2v_j(R-s,\cdot)
    \bigr)
    -j^{-1}.
\end{equation}

\item\label{prop:usc_outer_covariance_floor}
For every \((s,x)\in[0,R]\times\overline D_m\),
\[
    a_j(s,x)\ge\lambda_jI_d,
    \qquad
    \|\zeta_j(s,x)\|\le\mathsf K_j,
\]
where \(\mathsf K_j\) is defined in
Proposition~\ref{prop:usc_smooth_elliptic_approximation}\ref{prop:usc_smooth_elliptic_approximation_2}
for the approximation on \(\overline D_{m+1}\).

\item\label{prop:usc_outer_linear_realizing_laws}
For every \(x\in D\),
\(\varnothing\neq\widetilde{\mathcal P}_x(L^{\beta^{\zeta_j}})
\subseteq\mathcal P_x(G^{\varepsilon_j})\).
\end{enumerate}
\end{proposition}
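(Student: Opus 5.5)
The plan is to build $\zeta_j$ by a continuous, near-maximizing selection from a slightly enlarged support correspondence, and then to obtain laws from Lemma~\ref{lem:continuous_locally_bounded_linear_existence}.

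First, fix $j$ and work on the compact set $Q_j:=[0,R-\underline t_j]\times\overline D_m$. On $Q_j$ the jet map $(s,x)\mapsto U_j(s,x):=J_x^2v_j(R-s,\cdot)$ is continuous. The correspondence $\mathfrak A_j=\mathfrak A_{G_j}$ is Hausdorff continuous with nonempty compact convex values in $\mathfrak K$. Indeed, $G_j$ is jointly continuous, so the support-function argument of Lemma~\ref{lem:monotone_outer_generator_regularization} applies.

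For each $(s_0,x_0)\in Q_j$, choose a maximizer $V_0\in\mathfrak A_j(x_0)$ of $\ell_V(U_j(s_0,x_0))$. Hausdorff continuity gives a continuous local selection near $V_0$: for example, take the metric projection of $V_0$ onto $\mathfrak A_j(x)$, which is continuous because the values are convex and compact. This selection remains $j^{-1}$-maximizing on a neighborhood of $(s_0,x_0)$. Cover $Q_j$ by finitely many such neighborhoods and glue the local selections with a continuous partition of unity. Convexity of the fibres keeps the glued map inside $\mathfrak A_j(x)$, and linearity of $V\mapsto\ell_V(U)$ preserves the inequality \eqref{eq:usc_outer_near_maximizing_jet}. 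This yields \ref{prop:usc_outer_near_maximizing_jet}. The covariance floor $a\ge\lambda_jI_d$ and the bound $\|\zeta_j\|\le\mathsf K_j$ in \ref{prop:usc_outer_covariance_floor} hold because they hold on all of $\mathfrak A_j(x)$.

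Next, extend the map off $Q_j$ while keeping \ref{prop:usc_outer_support_membership}. Set
\[
\varepsilon_j:=\sup_{x\in\overline D_{m}}\frac{d_{H,\mathfrak J}(\mathfrak A_j(x),\mathfrak A_G(x))}{\kappa_\phi(x)}+j^{-1},
\]
and pass to a subsequence or adjust these values so that $\varepsilon_j\searrow0$. This tends to zero by \eqref{eq:usc_approximating_support_convergence}, applied on the compact set $\overline D_m\Subset D_{m+1}$, since $\kappa_\phi$ is bounded below there. Then $\mathfrak A_j(x)\subseteq\mathfrak A_G^{\varepsilon_j}(x)$ on $\overline D_m$.

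Extend in time by freezing at $s=R-\underline t_j$ for larger $s$. In space, choose a continuous selector $\eta$ of the Hausdorff-continuous correspondence $\mathfrak A_G$ on $D$ (this requires $G$ jointly continuous), for instance via Michael's theorem. Let $\chi$ be a continuous cutoff equal to $1$ on $\overline D_m$ and supported in $D_{m+1}$. Define
\[
\zeta_j=\chi\,\zeta_j^{\mathrm{loc}}+(1-\chi)\,\eta.
\]
On the support of $\chi$, both $\zeta_j^{\mathrm{loc}}(s,x)\in\mathfrak A_j(x)$ and $\eta(x)$ lie in the convex set $\mathfrak A_G^{\varepsilon_j}(x)$. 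For this one needs the estimate for $\varepsilon_j$ on $\overline D_{m+1}$ rather than $\overline D_m$, so I would take the supremum over a compact subset of $D_{m+1}$ containing $\operatorname{supp}\chi$. Convexity then gives \ref{prop:usc_outer_support_membership}, and outside the support $\zeta_j=\eta\in\mathfrak A_G(x)$.

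Finally, for \ref{prop:usc_outer_linear_realizing_laws}, membership in $\mathfrak A_G^{\varepsilon_j}$ gives $\ell_{\zeta_j}\le G^{\varepsilon_j}$, hence generator compatibility with $G^{\varepsilon_j}$. By Corollary~\ref{cor:recovery_martingale_problem_equivalence}, this gives the inclusion in $\mathcal P_x(G^{\varepsilon_j})$. By \eqref{eq:outer_generator_lyapunov_bound}, $\ell_{\zeta_j(s,x)}(J_x^2\phi)\le(C_\phi+\varepsilon_j)\phi(x)$. Spatial local boundedness, uniform in time, follows from local boundedness of $\mathfrak A_G^{\varepsilon_1}$. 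Lemma~\ref{lem:continuous_locally_bounded_linear_existence} then gives nonemptiness. The main obstacle is the continuous near-maximizing selection in the first step; the gluing via convexity and partitions of unity is the route I would try first.
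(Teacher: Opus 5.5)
Your proposal is correct and follows essentially the paper's route. The shared steps are near-maximizing coefficients chosen locally on a compact space--time set, gluing by a finite partition of unity using linearity of $V\mapsto\ell_V(U)$ and convexity, freezing in time, blending by a cutoff with a Michael selector of $\mathfrak A_G$, taking $\varepsilon_j$ as a decreasing envelope of the local Hausdorff error relative to $\kappa_\phi$, and obtaining existence from Lemma~\ref{lem:continuous_locally_bounded_linear_existence}.

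The one variation is that you project each maximizer onto $\mathfrak A_j(x)$. This projection is continuous in $x$ because $\|\cdot\|$ is Euclidean and $\mathfrak A_j$ is Hausdorff continuous. Your glued field therefore lies exactly in $\mathfrak A_j(x)$. The paper instead keeps constant coefficients $V^j_{\xi_i}$ and absorbs their mismatch through the Hausdorff modulus of $\mathfrak A_G$ and an extra $j^{-1}$ in $\varepsilon_j$.

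One bookkeeping fix is needed: carry out the selection on $[0,R-\underline t_j]\times\overline O$ for an open $O$ with $\operatorname{supp}\chi\subset O\Subset D_{m+1}$ (the paper's $O_m^1$), not only on $\overline D_m$. Otherwise $\chi\,\zeta_j^{\mathrm{loc}}$ is undefined where $0<\chi<1$.
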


\begin{proof}
Choose open sets \(O_m^0,O_m^1\) and a smooth cutoff \(\chi_m\) such
that
\[
    \overline D_m
    \subset
    O_m^0
    \Subset
    O_m^1
    \Subset
    D_{m+1},
    \qquad
    0\le\chi_m\le1,
    \qquad
    \chi_m=1\ \text{on }O_m^0,
    \qquad
    \operatorname{supp}\chi_m\Subset O_m^1.
\]
Write $\mathfrak A_j:=\mathfrak A_{G_j}$.
By \eqref{eq:usc_approximating_support_convergence}, applied with the index \(m+1\),
\begin{equation}
\label{eq:usc_local_support_approximation_error}
    \alpha_j
    :=
    \sup_{x\in\overline O_m^1}
    d_{H,\mathfrak J}\bigl(
        \mathfrak A_j(x),
        \mathfrak A_G(x)
    \bigr)
    \longrightarrow0.
\end{equation}
Set $Q_j:=[0,R-\underline t_j]\times\overline O_m^1$ and $U_j(s,x):=J_x^2v_j(R-s,\cdot)$.
For every \(\xi=(s,x)\in Q_j\), choose \(V_\xi^j\in\mathfrak A_j(x)\) such that $\ell_{V_\xi^j}\bigl(U_j(s,x)\bigr)=G_j\bigl(x,U_j(s,x)\bigr)$.
Since \(U_j\) and \(G_j\) are continuous, there exists a relatively open neighborhood \(\mathcal U_\xi^j\) of \(\xi\) in \(Q_j\) such that
\begin{equation}
\label{eq:usc_local_fixed_coefficient_near_maximizer}
    \ell_{V_\xi^j}\bigl(U_j(r,y)\bigr)
    \ge
    G_j\bigl(y,U_j(r,y)\bigr)-j^{-1},
    \qquad
    (r,y)\in\mathcal U_\xi^j.
\end{equation}
By shrinking \(\mathcal U_\xi^j\), if necessary, the Hausdorff
continuity of \(\mathfrak A_G\) also allows us to assume that
\[
    d_{H,\mathfrak J}\bigl(
        \mathfrak A_G(x),
        \mathfrak A_G(y)
    \bigr)
    \le j^{-1},
    \qquad
    (r,y)\in\mathcal U_\xi^j.
\]
Choose a finite subcover
\[
    Q_j
    \subseteq
    \bigcup_{i=1}^{I_j}\mathcal U_{\xi_i}^j,
    \qquad
    \xi_i=(s_i,x_i),
\]
and let \((\theta_i^j)_{i=1}^{I_j}\) be a continuous partition of
unity on \(Q_j\) subordinate to this cover.  Define
\[
    \widehat\zeta_j(r,y)
    :=
    \sum_{i=1}^{I_j}
    \theta_i^j(r,y)V_{\xi_i}^j,
    \qquad
    (r,y)\in Q_j.
\]
Then \(\widehat\zeta_j\) is continuous and takes values in
\(\mathfrak K\).  Summing
\eqref{eq:usc_local_fixed_coefficient_near_maximizer} against the
partition of unity gives
\begin{equation}
\label{eq:usc_local_field_near_maximizer}
    \ell_{\widehat\zeta_j(r,y)}
    \bigl(U_j(r,y)\bigr)
    \ge
    G_j\bigl(y,U_j(r,y)\bigr)-j^{-1},
    \qquad
    (r,y)\in Q_j.
\end{equation}
Moreover, every coefficient \(V_{\xi_i}^j\) with $\theta_i^j(r,y)>0$ satisfies
\[
    d_{\mathfrak J}\bigl(
        V_{\xi_i}^j,
        \mathfrak A_G(y)
    \bigr)
    \le
    \alpha_j+j^{-1}.
\]
Since \(\mathfrak A_G(y)\) is convex, we have
\begin{equation}
\label{eq:usc_local_field_support_distance}
    d_{\mathfrak J}\bigl(
        \widehat\zeta_j(r,y),
        \mathfrak A_G(y)
    \bigr)
    \le\alpha_j+j^{-1},
    \qquad
    (r,y)\in Q_j.
\end{equation}
In addition, every covariance coordinate of \(V_{\xi_i}^j\) is bounded below by \(\lambda_jI_d\).
Therefore, the covariance coordinate of \(\widehat\zeta_j\) is also bounded below by \(\lambda_jI_d\) throughout \(Q_j\).

The Hausdorff continuity of \(\mathfrak A_G\) implies lower
semicontinuity, and its values are nonempty, closed, and convex.
Hence Michael's continuous selection theorem
\cite{supp:michael2003continuous} provides a continuous map $\zeta^0:D\to\mathfrak K$ such that $\zeta^0(x)\in\mathfrak A_G(x)$.
Define
\[
    \underline\kappa_m
    :=
    \min_{x\in\overline O_m^1}\kappa_\phi(x)>0, \qquad
    \varepsilon_j
    :=
    j^{-1}
    +
    \underline\kappa_m^{-1}
    \sup_{k\ge j}(\alpha_k+k^{-1}).
\]
Then \(\varepsilon_j\searrow0\), and
\begin{equation}
\label{eq:usc_support_error_absorbed_by_outer_radius}
    \alpha_j+j^{-1}
    \le
    \varepsilon_j\underline\kappa_m
    \le
    \varepsilon_j\kappa_\phi(x),
    \qquad
    x\in\overline O_m^1.
\end{equation}

Extend \(\widehat\zeta_j\) in time by setting $\overline\zeta_j(s,x):=\widehat\zeta_j\bigl(s\wedge(R-\underline t_j),x\bigr)$ for $(s,x)\in[0,\infty)\times\overline O_m^1$, and define
\[
    \zeta_j(s,x)
    :=
    \begin{cases}
    \chi_m(x)\overline\zeta_j(s,x)
    +(1-\chi_m(x))\zeta^0(x),
        &x\in O_m^1,\\[1mm]
    \zeta^0(x),
        &x\in D\setminus O_m^1.
    \end{cases}
\]
Since \(\chi_m\) vanishes on a neighborhood of
\(\partial O_m^1\), the map \(\zeta_j\) is continuous on
\([0,\infty)\times D\).

We now verify the asserted properties.
By \eqref{eq:usc_local_field_support_distance} and \eqref{eq:usc_support_error_absorbed_by_outer_radius},
\[
    \overline\zeta_j(s,x)
    \in
    \mathfrak A_G^{\varepsilon_j}(x),
    \qquad
    (s,x)\in[0,\infty)\times\overline O_m^1.
\]
Also,
\(\zeta^0(x)\in\mathfrak A_G(x)
\subseteq\mathfrak A_G^{\varepsilon_j}(x)\).
Since each \(\mathfrak A_G^{\varepsilon_j}(x)\) is convex, the construction proves \ref{prop:usc_outer_support_membership}.
On
\([0,R-\underline t_j]\times\overline D_m\), one has \(\chi_m=1\), so \(\zeta_j=\widehat\zeta_j\).
Thus \eqref{eq:usc_local_field_near_maximizer} gives \ref{prop:usc_outer_near_maximizing_jet}.
The same observation and the constant-in-time extension give
\(a_j(s,x)\ge\lambda_jI_d\) on
\([0,R]\times\overline D_m\).  On this cylinder, \(\zeta_j\) is a
convex combination of coefficients in
\(\bigcup_{y\in\overline D_{m+1}}\mathfrak A_j(y)\), and hence
\(\|\zeta_j\|\le\mathsf K_j-1\).
This proves \ref{prop:usc_outer_covariance_floor} and ensures that the
Krylov constants in
Proposition~\ref{prop:usc_smooth_elliptic_approximation}\ref{prop:usc_smooth_elliptic_approximation_2}
apply to the selected fields on each \(D_\ell\), \(\ell\le m\).

For \ref{prop:usc_outer_linear_realizing_laws}, first note that
\ref{prop:usc_outer_support_membership} and
\(\varepsilon_j\le\varepsilon_1\) place every \(\zeta_j(s,x)\) in
\(\mathfrak A_G^{\varepsilon_1}(x)\).  Local boundedness of this
correspondence gives spatial local boundedness uniformly in \(s\) and
\(j\).  Moreover, Lemma~
C.1(ii) of the main text gives
\[
    \ell_{\zeta_j(s,x)}(J_x^2\phi)
    \le G^{\varepsilon_j}(x,J_x^2\phi)
    \le(C_\phi+\varepsilon_j)\phi(x)
    \le(C_\phi+\varepsilon_1)\phi(x).
\]
Thus Lemma~\ref{lem:continuous_locally_bounded_linear_existence} applies
to each continuous field \(\zeta_j\) and gives
\(\widetilde{\mathcal P}_x(L^{\beta^{\zeta_j}})\neq\varnothing\).
Finally, \ref{prop:usc_outer_support_membership} makes
\(\beta^{\zeta_j}\) generator-compatible with \(G^{\varepsilon_j}\),
proving the asserted inclusion.
\end{proof}

\section{Discounted Pairs and Killing Representations}
\label{app:virtualization_information_recovery}
\label{app:gmp_auxiliary_results}

We prove the discounted-expectation identity, identify the killing
compensator, and recover discounted pairs from their killed laws. We then
establish the inverse-killing extension and the correspondence between the
continuous-path and killed martingale problems. The probabilistic estimates
needed below have been established in Appendix~\ref{app:virtual_path_weak_convergence}.

\subsection{The Discounted-Expectation Identity}
\label{appsubsec:Phi_discounted_expectation}

\begin{proof}[Proof of Lemma~
6.2~
(i)]
Recall the auxiliary probability space from the main text, together with the
\(A\)-clock \(\kappa_A\) and the killing map
\(K_A:\bar\Omega\to\widetilde\Omega\), so that
\(\mathbb P=\bar{\mathbb Q}\circ K_A^{-1}\).  To evaluate the left-hand
side, we identify the survival event and the value of \(Y\) after applying
\(K_A\).

The assertion is immediate for the cemetery pair. Otherwise, work on the
Borel \(\mathbb Q\)-full set on which the clock has the path properties
specified in the main text. All identities on the auxiliary space below
are understood \(\bar{\mathbb Q}\)-almost surely.
Fix \(\omega\) in this set and \(z>0\), set
\(\widetilde\omega:=K_A(\omega,z)\), and write \(s:=\sigma(\omega)\).
By construction of \(K_A\),
\[
    \{(\omega,z)\in\bar\Omega:\widetilde\sigma(\widetilde\omega)<\tau_\infty(\widetilde\omega)\}
    =
    \{(\omega,z)\in\bar\Omega:s<\tau_{\mathrm{exp}}(\omega),\ s<\kappa_A(\omega,z)\}.
\]
Moreover, on the event on the right, \(\widetilde\omega\) and \(\omega\) agree on \([0,s]\), as do \(\mathfrak r(\widetilde\omega)\) and \(\omega\), while \(\tau_{\mathrm{kill}}(\widetilde\omega)>s\).
Therefore, by Galmarino's test, we have $\widetilde\sigma(\widetilde\omega)=\sigma(\mathfrak r(\widetilde\omega))=\sigma(\omega)=s$.
Furthermore, the \(\widetilde{\mathcal F}_{\widetilde\sigma}\)-measurability of \(Y\) yields \(Y(\widetilde\omega)=\widehat Y(\omega)\). 
Consequently,
\[
    Y(K_A(\omega,z))
    \mathbb I_{\{\widetilde\sigma(K_A(\omega,z))
                      <\tau_\infty(K_A(\omega,z))\}}
    =
    \widehat Y(\omega)
    \mathbb I_{\{\sigma(\omega)<\tau_{\mathrm{exp}}(\omega)\}}
    \mathbb I_{\{\sigma(\omega)<\kappa_A(\omega,z)\}}.
\]
Finally, since \(A\) is continuous and nondecreasing, the definition of
the \(A\)-clock gives
\[
    \{\sigma(\omega)<\kappa_A(\omega,z)\}
    =
    \{z>A_\sigma(\omega)\}.
\]
Using \(\mathbb P=\bar{\mathbb Q}\circ K_A^{-1}\), the product
representation
\(\bar{\mathbb Q}(d\omega,dz)=\mathbb Q(d\omega)e^{-z}\,dz\), and
Fubini's theorem, we obtain
\[
\mathbb E^{\mathbb P}
\left[Y\mathbb I_{\{\widetilde\sigma<\tau_\infty\}}\right]
=
\int_{\widehat\Omega}
\widehat Y(\omega)
\mathbb I_{\{\sigma<\tau_{\mathrm{exp}}\}}
\left(\int_{(A_\sigma(\omega),\infty)}e^{-z}\,dz\right)
\mathbb Q(d\omega)=
\mathbb E^{\mathbb Q}
\left[
    e^{-A_\sigma}\widehat Y
    \mathbb I_{\{\sigma<\tau_{\mathrm{exp}}\}}
\right].
\]
This completes the proof.
\end{proof}

\subsection{The Killing Compensator}
\label{appsubsec:virtual_killing_compensator}

The following lemma identifies the predictable compensator of the killing process \(N_t=\mathbb I_{\{\tau_{\mathrm{kill}}\le t\}}\) under the discounting-to-killing map \(\Phi\).

\begin{lemma}
\label{lem:virtual_killing_compensator}
Let \((A,\mathbb Q)\in\mathfrak U\), set
\(\mathbb P:=\Phi(A,\mathbb Q)\), and define
\[
    A^{\mathrm{lift}}_t(\widetilde\omega)
    :=
    A_{t\wedge\tau_{\mathrm{kill}}(\widetilde\omega)}
    \bigl(\mathfrak r(\widetilde\omega)\bigr).
\]
Then \(N-A^{\mathrm{lift}}\) is a \(\mathbb P\)-martingale, and
\(A^{\mathrm{lift}}\) is the predictable compensator of \(N\) under the
\(\mathbb P\)-usual augmentation \(\widetilde{\mathbb F}^{\,\mathbb P}\).
\end{lemma}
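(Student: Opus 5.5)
We plan to verify the martingale property directly on the auxiliary space $\bar\Omega=\widehat\Omega\times(0,\infty)$ under $\bar{\mathbb Q}=\mathbb Q\otimes\nu_{\mathrm{Exp}}$, and then to invoke uniqueness of compensators. The cemetery pair is trivial, because there $N\equiv0$ and $A^{\mathrm{lift}}\equiv0$. Otherwise we first note that $A^{\mathrm{lift}}$ is continuous, nondecreasing and $\widetilde{\mathbb F}$-adapted. Indeed, $\mathfrak r^{-1}(\widehat{\mathcal F}_s)\subseteq\widetilde{\mathcal F}_s$ and $\tau_{\mathrm{kill}}$ is a raw stopping time. Continuity of $A$ on $\Omega_A$ together with $A_{\tau_{\mathrm{exp}}}=\infty$ gives continuity of $t\mapsto A_{t\wedge\tau_{\mathrm{kill}}}$. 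Here we use that, $\mathbb P$-a.s., either $\tau_{\mathrm{kill}}<\tau_{\mathrm{exp}}$ of the underlying path, or there is no killing and then $A_t<\infty$ for $t<\tau_{\mathrm{exp}}$. Since $A^{\mathrm{lift}}$ is continuous and adapted, it is predictable. Hence it suffices to show that $N-A^{\mathrm{lift}}$ is a $\mathbb P$-martingale; the compensator statement then follows from uniqueness of the Doob--Meyer decomposition under $\widetilde{\mathbb F}^{\,\mathbb P}$.

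To get the martingale property, fix $0\le u\le s$ and a bounded $\widetilde{\mathcal F}_u$-measurable $Y$. We must show
\[
\mathbb E^{\mathbb P}\bigl[Y\bigl(N_s-N_u-(A^{\mathrm{lift}}_s-A^{\mathrm{lift}}_u)\bigr)\bigr]=0 .
\]
On $\{\tau_\infty\le u\}$ both increments vanish. Killing in particular freezes $A^{\mathrm{lift}}$, and explosion before $u$ forces $A^{\mathrm{lift}}$ constant because $A_{\tau_{\mathrm{exp}}}=\infty$ would otherwise contradict finiteness. Therefore only $Y\mathbb I_{\{u<\tau_\infty\}}$ matters.

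We then pull back along $K_A$, exactly as in the proof of Lemma~6.2(i). On $\{u<\tau_\infty\}$ the value $Y(K_A(\omega,z))$ equals $\widehat Y(\omega)$, and the survival event is $\{u<\tau_{\mathrm{exp}}(\omega),\ z>A_u(\omega)\}$. Killing occurs at $\kappa_A(\omega,z)$, so $N_s-N_u=\mathbb I_{\{A_u<z\le A_s,\ \kappa_A<\tau_{\mathrm{exp}}\}}$. Up to null sets this equals $\mathbb I_{\{A_u<z\le A_{s\wedge\tau_{\mathrm{exp}}}\}}$, with the convention that an explosion before $s$ gives $A=\infty$ and hence certain killing. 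Likewise, $A^{\mathrm{lift}}_s-A^{\mathrm{lift}}_u=A_{s\wedge\kappa_A}-A_u=(z\wedge A_s)-A_u$ on survival, by continuity of $A$ and the relation $A_{\kappa_A}=z$.

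Integrating in $z$ against $e^{-z}dz$ for fixed $\omega$ with $a:=A_u(\omega)<b:=A_s(\omega)\le\infty$ gives
\[
\int_a^\infty\bigl(\mathbb I_{\{z\le b\}}-(z\wedge b-a)\bigr)e^{-z}\,dz=(e^{-a}-e^{-b})-\int_a^b e^{-z}\,dz=0,
\]
using $\int_a^\infty(z\wedge b-a)e^{-z}dz=\int_a^b e^{-r}dr$. Fubini then yields the required identity; integrability holds because $\mathbb E[A^{\mathrm{lift}}_s]=\mathbb P(\tau_{\mathrm{kill}}\le s)\le1$ by the same computation with $Y=1$.

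The main obstacle is the careful bookkeeping on the boundary events. One has to handle explosion of $\omega$ before $s$, where $b=\infty$ and $\vartheta_A$ may be infinite even though $\kappa_A\ge\tau_{\mathrm{exp}}$, and check that $\Phi$ then produces a continuous explosion rather than a kill. In that case one must verify that $N$ and $A^{\mathrm{lift}}$ are both defined through $\tau_{\mathrm{kill}}$ consistently, which is done via the definition of $\mathfrak r$ and the clock condition \eqref{eq:definingconditionU_0-2}. I would first treat $s<\tau_{\mathrm{exp}}$ and then pass to the general case by monotone convergence in $s\wedge\tau_n$.
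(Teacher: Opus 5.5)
Your proposal is correct and follows essentially the same route as the paper. Both arguments pull back along $K_A$ and identify, on the survival event, $N_t=\mathbb I_{\{z\le A_{t\wedge\tau_{\mathrm{exp}}}\}}$ and $A^{\mathrm{lift}}_t=A_{t\wedge\tau_{\mathrm{exp}}}\wedge z$; they then integrate out $z$ against $e^{-z}\,dz$ for fixed $\omega$ and conclude via continuity, predictability and uniqueness of compensators.

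The boundary worry in your last paragraph is vacuous, and it contradicts your own earlier remark. Because $A$ diverges continuously at a finite $\tau_{\mathrm{exp}}$, one has $\kappa_A<\tau_{\mathrm{exp}}$ for every finite $z$, so $\Phi$ almost surely never produces a continuous explosion, and no separate monotone-convergence step in $s\wedge\tau_n$ is needed.
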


\begin{proof}
Let \(\bar N_t:=N_t\circ K_A\) and \(\bar A_t:=A^{\mathrm{lift}}_t\circ K_A\).
The cemetery pair is immediate, so suppose that
\((A,\mathbb Q)\in\mathfrak U^\circ\).
Work on the Borel \(\mathbb Q\)-full set of regular clock paths used to
define \(K_A\); all auxiliary-space identities in this proof are understood
\(\bar{\mathbb Q}\)-almost surely.
Then by the definitions of \(A^{\mathrm{lift}}\), \(\mathfrak r\), and \(K_A\), we have
\[
\bar N_t(\omega,z)
=
\mathbb I_{\{\kappa_A(\omega,z)\le t,\,
\kappa_A(\omega,z)<\tau_{\mathrm{exp}}(\omega)\}}, \qquad  
\bar A_t(\omega,z)
=
A_{t\wedge\kappa_A(\omega,z)\wedge\tau_{\mathrm{exp}}(\omega)}(\omega).
\]
For each such \(\omega\), put
\(b_t(\omega):=A_{t\wedge\tau_{\mathrm{exp}}(\omega)}(\omega)\).
Continuity of the clock and its divergence at a finite explosion give
\(\bar A_t(\omega,z)=b_t(\omega)\wedge z\) and
\(\bar N_t(\omega,z)=\mathbb I_{\{z\le b_t(\omega)\}}\).
Thus Tonelli's theorem yields
\[
    \mathbb E^{\bar{\mathbb Q}}[\bar A_t]
    =\mathbb E^{\mathbb Q}[1-e^{-b_t}]
    =\mathbb E^{\bar{\mathbb Q}}[\bar N_t]
    \le1.
\]
In particular, the expectations below are well defined for bounded signed
multipliers.
Let \(\bar{\mathcal G}_t:=K_A^{-1}(\widetilde{\mathcal F}_t)\) for \(t\ge0\).
We show that \(\bar N-\bar A\) is a
\((\bar{\mathcal G}_t)_{t\ge0}\)-martingale under \(\bar{\mathbb Q}\).
Fix \(0\le s\le t\), and let \(F\) be a bounded
\(\bar{\mathcal G}_s\)-measurable random variable. Set \(E_s:=\{\kappa_A>s,\ \tau_{\mathrm{exp}}>s\}\).
On \(E_s^c\), both \(\bar N\) and \(\bar A\) are constant on \([s,t]\).
Hence
\[
((\bar N_t-\bar A_t)-(\bar N_s-\bar A_s))F
=
\mathbb I_{E_s}
((\bar N_t-\bar A_t)-(\bar N_s-\bar A_s))F.
\]
Moreover, on \(E_s\), the path \(K_A(\omega,z)\) agrees with the original
path \(\omega\) up to time \(s\). Since \(F\) is
\(\bar{\mathcal G}_s=K_A^{-1}(\widetilde{\mathcal F}_s)\)-measurable, there exists
a bounded \(\widehat{\mathcal F}_s\)-measurable random variable \(\widehat F\) such
that \(F(\omega,z)=\widehat F(\omega)\) for all \((\omega,z)\in E_s\).
For fixed \(\omega\), write \(T_\omega:=t\wedge\tau_{\mathrm{exp}}(\omega)\), \(a_\omega:=A_s(\omega)\) and \(b_\omega:=A_{T_\omega}(\omega)\).
On \(E_s\), equivalently on
\(\{z>a_\omega,\ \tau_{\mathrm{exp}}(\omega)>s\}\), we have, for
\(\nu_{\mathrm{Exp}}\)-a.e. \(z\), \(\bar N_t(\omega,z)-\bar N_s(\omega,z)
=
\mathbb I_{\{a_\omega<z\le b_\omega\}}\).
Therefore, using \(F=\widehat F\) on \(E_s\), we obtain
\begin{align*}
&\mathbb E^{\bar{\mathbb Q}}
\left[
((\bar N_t-\bar A_t)-(\bar N_s-\bar A_s))F
\right] \\
&\quad =
\int_{\widehat\Omega}
\widehat F(\omega)\mathbb I_{\{\tau_{\mathrm{exp}}(\omega)>s\}}
\int_{(a_\omega,\infty)}
\left[
\mathbb I_{\{z\le b_\omega\}}
-
\int_s^{T_\omega}
\mathbb I_{\{z>A_r(\omega)\}}\,dA_r(\omega)
\right]
\nu_{\mathrm{Exp}}(dz)\,\mathbb Q(d\omega).
\end{align*}
For fixed \(\omega\), the inner integral is zero. Indeed,
\[
\int_{(a_\omega,\infty)}
\mathbb I_{\{z\le b_\omega\}}\,\nu_{\mathrm{Exp}}(dz)
=
\int_{(a_\omega,b_\omega]} e^{-z}\,dz
=
e^{-a_\omega}-e^{-b_\omega},
\]
while, by Tonelli's theorem and the Stieltjes chain rule,
\begin{align*}
\int_{(a_\omega,\infty)}\int_s^{T_\omega}
\mathbb I_{\{z>A_r(\omega)\}}\,dA_r(\omega)\,
\nu_{\mathrm{Exp}}(dz)
&=
\int_s^{T_\omega}\int_{(a_\omega,\infty)}
\mathbb I_{\{z>A_r(\omega)\}}\,
\nu_{\mathrm{Exp}}(dz)\,dA_r(\omega) \\
&=
\int_s^{T_\omega}
e^{-A_r(\omega)}\,dA_r(\omega)\\
&=
e^{-a_\omega}-e^{-b_\omega}.
\end{align*}
Consequently, \(\mathbb E^{\bar{\mathbb Q}}
\left[
((\bar N_t-\bar A_t)-(\bar N_s-\bar A_s))F
\right]
=0\).
Since \(0\le s\le t\) and bounded
\(F\in \bar{\mathcal G}_s\) are arbitrary, \(\bar N-\bar A\) is a
\((\bar{\mathcal G}_t)_{t\ge0}\)-martingale under \(\bar{\mathbb Q}\).
Using \(\mathbb P=\bar{\mathbb Q}\circ K_A^{-1}\), we obtain that
\(N-A^{\mathrm{lift}}\) is a \((\widetilde{\mathcal F}_t)_{t\ge0}\)-martingale under
\(\mathbb P\).
The martingale has c\`adl\`ag paths, so it remains a martingale under
the \(\mathbb P\)-usual augmentation. Since
\(A^{\mathrm{lift}}\) is continuous and increasing
\(\mathbb P\)-almost surely, adapted, and null at time zero, it is
predictable under this augmentation and is the predictable compensator
of \(N\).
\end{proof}

\subsection{Recovery of the Pair and Injectivity}
\label{appsubsec:Phi_injectivity}

The proof of injectivity first recovers the lifted cumulative hazard from the
uniqueness of the killing compensator, then recovers the pre-explosion state
law from the discounted-expectation identity. The following elementary lemma
upgrades equality before every finite lifetime to equality of the full laws.

\begin{lemma}
\label{lem:pre_explosion_identification}
Let
\((\Omega^\dagger,\mathcal F^\dagger,
(\mathcal F_t^\dagger)_{t\ge0},\zeta)\) be either
\[
    (\widehat\Omega,\widehat{\mathcal F},
      (\widehat{\mathcal F}_t)_{t\ge0},\tau_{\mathrm{exp}})
    \quad\text{or}\quad
    (\widetilde\Omega,\widetilde{\mathcal F},
      (\widetilde{\mathcal F}_t)_{t\ge0},\tau_\infty).
\]
Let \(\mu\) and \(\nu\) be probability measures on
\((\Omega^\dagger,\mathcal F^\dagger)\).  Suppose that, for every
\(t\ge0\) and every \(B\in\mathcal F_t^\dagger\),
\[
    \mu\bigl(B\cap\{\zeta>t\}\bigr)
    =
    \nu\bigl(B\cap\{\zeta>t\}\bigr).
\]
Then \(\mu=\nu\) on \(\mathcal F^\dagger\).
\end{lemma}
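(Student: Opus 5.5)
The plan is to reduce the statement to a uniqueness-of-measures argument on the $\sigma$-algebra generated by the pre-lifetime traces. Consider the class
\[
    \mathcal D:=\{B\in\mathcal F^\dagger:\mu(B)=\nu(B)\}.
\]
It is a Dynkin system. We must show that it contains a $\pi$-system generating $\mathcal F^\dagger$. A natural choice of $\pi$-system is the finite-dimensional cylinders $C=\{X_{s_1}\in A_1,\ldots,X_{s_k}\in A_k\}$, where the $A_i\subseteq\widehat D$ are Borel and $s_1<\cdots<s_k$. These generate $\mathcal F^\dagger$ because both spaces carry the Borel $\sigma$-field of the coordinate process. The proof then has two steps: show that $\mu(C)=\nu(C)$ for every such cylinder, and conclude with Dynkin's $\pi$--$\lambda$ theorem.

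To evaluate a cylinder, decompose it according to where $\zeta$ falls relative to the observation times. Set $s_0:=-\infty$ and $s_{k+1}:=+\infty$. Then $C$ is the disjoint union over $j=0,\ldots,k$ of the pieces $C\cap\{s_j<\zeta\le s_{j+1}\}$, together with the piece $C\cap\{\zeta>s_k\}$.
\begin{itemize}
\item The top piece $C\cap\{\zeta>s_k\}$ has the same mass under $\mu$ and $\nu$ directly by the hypothesis with $t=s_k$, since $C\in\mathcal F^\dagger_{s_k}$.
\item On $\{s_j<\zeta\le s_{j+1}\}$, absorption at $\triangle$ gives $X_{s_i}=\triangle$ for every $i>j$. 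Hence the piece equals
\[
    C_j\cap\{s_j<\zeta\}\setminus\{s_{j+1}<\zeta\},
\]
where $C_j:=\{X_{s_i}\in A_i,\ i\le j\}$ if $\triangle\in A_i$ for all $i>j$, and the piece is empty otherwise.
\end{itemize}
To see that the second kind of piece also agrees, write
\[
    \mu\bigl(C_j\cap\{\zeta>s_j\}\bigr)-\mu\bigl(C_j\cap\{\zeta>s_{j+1}\}\bigr).
\]
Each term is covered by the hypothesis, because $C_j\in\mathcal F_{s_j}^\dagger\subseteq\mathcal F_{s_{j+1}}^\dagger$. For $j=0$, the set $C_0$ is the whole space and $\{\zeta>-\infty\}$ is everything, so we get total mass $1$ under both measures. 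Summing over the pieces gives $\mu(C)=\nu(C)$.

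One detail needs care: $\zeta$ is only $\mathcal F^\dagger$-measurable, and we must confirm that $\{\zeta>t\}\in\mathcal F_t^\dagger$.
\begin{itemize}
\item On $\widetilde\Omega$, $\{\tau_\infty>t\}=\{X_t\ne\triangle\}$ by absorption.
\item On $\widehat\Omega$, the same identity holds for $\tau_{\mathrm{exp}}$.
\end{itemize}
This membership is what lets the hypothesis apply to $C_j\cap\{\zeta>s\}$ for $s\ge s_j$. It also shows that the set $C_j\cap\{s_j<\zeta\}\setminus\{s_{j+1}<\zeta\}$ is itself a cylinder event, so the disjoint decomposition is legitimate.

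I expect the main (mild) obstacle to be the bookkeeping in the decomposition: making sure that the event $\{\zeta\le s_1\}$ is handled via total mass and that the cemetery constraints on the $A_i$ are correctly tracked. After that, the $\pi$--$\lambda$ step is routine.
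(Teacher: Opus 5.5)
Your proposal is correct and is essentially the paper's argument. The paper splits each coordinate set into its $D$-part and $\{\triangle\}$ and indexes the resulting cylinders by the first coordinate forced to equal $\triangle$; this is exactly your partition by the intervals $\{s_j<\zeta\le s_{j+1}\}$, and both then write each piece as a difference of two hypothesis-covered sets before applying the $\pi$--$\lambda$ theorem. The one slip is bookkeeping: with $s_{k+1}=+\infty$ your $j=k$ piece coincides with the top piece $C\cap\{\zeta>s_k\}$, so the disjoint union should run over $j=0,\ldots,k-1$ only.
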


\begin{proof}
Let \(0\le t_1<\cdots<t_m\), and consider a finite-dimensional cylinder
\[
    C=\bigcap_{j=1}^m\{X_{t_j}\in E_j\},
    \qquad E_j\in\mathcal B(\widehat D).
\]
Since \(\widehat D=D\cup\{\triangle\}\), each \(E_j\) is the disjoint
union of \(E_j\cap D\) and, possibly, \(\{\triangle\}\).  Expanding
these unions, it is enough to compare \(\mu\) and \(\nu\) on cylinders
for which each coordinate restriction is either a Borel subset of \(D\)
or \(\{\triangle\}\).

Because every path in either canonical space is absorbed at
\(\triangle\), each such cylinder is either empty or has one of two
forms.  If all coordinates are required to lie in \(D\), then
\(C=C\cap\{\zeta>t_m\}\), and the desired equality follows from the
hypothesis at time \(t_m\).  Otherwise, let \(j\) be the first index at
which the coordinate is required to equal \(\triangle\); all later
coordinates must then also equal \(\triangle\).  Writing
\[
    A:=\bigcap_{i<j}\{X_{t_i}\in E_i\cap D\},
\]
we have, for \(j\ge2\),
\[
    C
    =A\cap\{t_{j-1}<\zeta\le t_j\}
    =\bigl(A\cap\{\zeta>t_{j-1}\}\bigr)
     \setminus
     \bigl(A\cap\{\zeta>t_j\}\bigr).
\]
Both terms have the same probability under \(\mu\) and \(\nu\).  If
\(j=1\), the probability is determined by
\(1-\mu(\zeta>t_1)=1-\nu(\zeta>t_1)\).

Thus \(\mu\) and \(\nu\) agree on finite-dimensional cylinders at
rational times.  These cylinders form a \(\pi\)-system generating
\(\mathcal F^\dagger\) on either canonical path space.  The
\(\pi\)-\(\lambda\) theorem therefore yields \(\mu=\nu\).
\end{proof}

\begin{proof}[Proof of Lemma~
6.2~(iii)]
It suffices to establish injectivity on \(\mathfrak U^\circ\).
Suppose that \((A,\mathbb Q),(A',\mathbb Q')\in\mathfrak U^\circ\) and $\Phi(A,\mathbb Q)=\Phi(A',\mathbb Q'):=\mathbb P$.
By Lemma~\ref{lem:virtual_killing_compensator} and the uniqueness of predictable compensators, the
lifted processes \(A^{\mathrm{lift}}\) and \((A')^{\mathrm{lift}}\) are indistinguishable under
\(\mathbb P\).
 Fix \(t\ge0\) and
\(B\in\widehat{\mathcal F}_t\). Since $\mathfrak r^{-1}(\widehat{\mathcal F}_t)\subseteq\widetilde{\mathcal F}_t$,
\(\{\mathfrak r\in B\}\in\widetilde{\mathcal F}_t\). For \(m\ge1\), set
\(Y_m:=(e^{A^{\mathrm{lift}}_t}\wedge m)\mathbb I_{\{\mathfrak r\in B\}}\) and
\(Y'_m:=(e^{(A')^{\mathrm{lift}}_t}\wedge m)\mathbb I_{\{\mathfrak r\in B\}}\).
Since \(Y_m=Y'_m\), \(\mathbb P\)-a.s., applying
(i) to both \((A,\mathbb Q)\) and
\((A',\mathbb Q')\) gives
\[
\mathbb E^{\mathbb Q}
\left[
e^{-A_t}(e^{A_t}\wedge m)
\mathbb I_B
\mathbb I_{\{\tau_{\mathrm{exp}}>t\}}
\right]
=
\mathbb E^{\mathbb Q'}
\left[
e^{-A'_t}(e^{A'_t}\wedge m)
\mathbb I_B
\mathbb I_{\{\tau_{\mathrm{exp}}>t\}}
\right].
\]
Letting \(m\to\infty\) yields \(\mathbb Q\bigl(B\cap\{\tau_{\mathrm{exp}}>t\}\bigr)=\mathbb Q'\bigl(B\cap\{\tau_{\mathrm{exp}}>t\}\bigr)\).
By Lemma~\ref{lem:pre_explosion_identification}, \(\mathbb Q=\mathbb Q'\).

It remains to prove that $A$ and $A'$ are 
indistinguishable under \(\mathbb Q\).
For rational \(q\ge0\), set
\(B_q:=\{A_q\ne A'_q\}\in\widehat{\mathcal F}_q\). Since
\(A^{\mathrm{lift}}_q=(A')^{\mathrm{lift}}_q\), \(\mathbb P\)-a.s., we have \(\mathbb P\left(\{\mathfrak r\in B_q\}\cap\{\tau_\infty>q\}\right)=0\).
Applying (i) with
\(Y=\mathbb I_{\{\mathfrak r\in B_q\}}\), we get
$\mathbb E^{\mathbb Q}
\left[
e^{-A_q}\mathbb I_{B_q}
\mathbb I_{\{\tau_{\mathrm{exp}}>q\}}
\right]
=0$.
Since \(e^{-A_q}>0\) on \(\{\tau_{\mathrm{exp}}>q\}\), it follows that \(\mathbb Q\bigl(B_q\cap\{\tau_{\mathrm{exp}}>q\}\bigr)=0\) for every rational \(q\ge0\).
Thus, outside a single \(\mathbb Q\)-null set, \(A_q=A'_q\) for every rational \(q<\tau_{\mathrm{exp}}\). 
By continuity, this equality extends to every \(s<\tau_{\mathrm{exp}}\). 
Together with this equality before \(\tau_{\mathrm{exp}}\), the fact that both processes are identically equal to \(\infty\) from \(\tau_{\mathrm{exp}}\) onward shows that \(A\) and \(A'\) are indistinguishable under \(\mathbb Q\).
Hence \((A,\mathbb Q)\) and \((A',\mathbb Q')\) represent the same element of \(\mathfrak U^\circ\), and therefore \(\Phi\) is injective.
\end{proof}

\subsection{Lyapunov Control before Explosion}\label{subsec:continuous_path_hazard}

\begin{lemma}
\label{lem:continuous_path_hazard_divergence}
Let \(\beta=(c,b,a)\) be a coefficient field.
Suppose that a function \(\phi\in C^2(D)\), \(\phi\ge1\), and a constant \(C_\phi\ge0\) satisfy
\[
    m_n^\phi:=\inf_{y\in D\setminus D_n}\phi(y)
    \longrightarrow\infty,
    \qquad
    L^\beta\bigl(s,\eta,J^2_{\eta(s)}\phi\bigr)
    \le C_\phi\phi(\eta(s))
    \quad\text{for }s<\tau_{\mathrm{exp}}(\eta),
    \quad\eta\in\widehat\Omega.
\]
Fix \((t,\omega)\in[0,\infty)\times\widehat\Omega\) with
\(t<\tau_{\mathrm{exp}}(\omega)\).
Then every
\(\mathbb Q\in\widehat{\mathcal P}_{t,\omega}(L^{\gamma^\beta})\)
satisfies
\[
    A_{\tau_{\mathrm{exp}}}^{k^\beta,t}=\infty
    \quad\text{on }\{\tau_{\mathrm{exp}}<\infty\},
    \qquad \mathbb Q\text{-almost surely}.
\]
\end{lemma}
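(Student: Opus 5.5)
The plan is to run the Lyapunov supermartingale argument from the proof of Proposition~\ref{prop:lyap_tail_bounds} directly on the continuous path space $\widehat\Omega$, but with the killing rate turned into a discount factor. Fix $\mathbb Q\in\widehat{\mathcal P}_{t,\omega}(L^{\gamma^\beta})$, choose $n_0$ with $\omega([0,t])\subset D_{n_0}$, and write $A:=A^{k^\beta,t}$, $k:=k^\beta\ge0$. For $n\ge n_0$, approximate $\phi$ in $C^2(\overline D_n)$ by $C_c^\infty(D)$ functions. Local boundedness of $\beta$ lets me pass to the limit, exactly as in that proof, so that $\widehat M^{\phi,n,\gamma^\beta}$ is a $\mathbb Q$-martingale.

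Since $A$ is continuous, adapted and finite before $\tau_n$ (the coefficients are bounded there), integration by parts gives the following for
\[
\Xi^n_s:=e^{-C_\phi([s]_{t,\tau_n}-t)-A_{[s]_{t,\tau_n}}}\,\phi\bigl(X_{[s]_{t,\tau_n}}\bigr),\qquad s\ge t.
\]
Its drift density is $e^{-C_\phi(r-t)-A_r}\bigl(L^{\gamma^\beta}(r,X,\nabla\phi,\nabla^2\phi)-k_r\phi-C_\phi\phi\bigr)(X_r)=e^{-C_\phi(r-t)-A_r}\bigl(L^\beta(r,X,J^2_{X_r}\phi)-C_\phi\phi(X_r)\bigr)\le0$. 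Here I use $L^\beta=L^{\gamma^\beta}+c\,z$ with $c=-k$. Hence $\Xi^n$ is a nonnegative $\mathbb Q$-supermartingale with $\Xi^n_t=\phi(\omega(t))$. Evaluating at $T\ge t$ on $\{\tau_n\le T\}$, where $\phi(X_{\tau_n})\ge m_n^\phi$, gives
\[
\mathbb E^{\mathbb Q}\bigl[e^{-A_{\tau_n}}\mathbb I_{\{\tau_n\le T\}}\bigr]\le \frac{e^{C_\phi(T-t)}\phi(\omega(t))}{m_n^\phi}\longrightarrow0 .
\]

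To conclude, I note that on $\{\tau_{\mathrm{exp}}\le T\}$ one has $\tau_n\le\tau_{\mathrm{exp}}\le T$ for every $n$, and $\tau_n\uparrow\tau_{\mathrm{exp}}$. Since $A$ is nondecreasing, $e^{-A_{\tau_{\mathrm{exp}}}}\le e^{-A_{\tau_n}}$, where $A_{\tau_{\mathrm{exp}}}:=\lim_n A_{\tau_n}\in[0,\infty]$. Therefore
\[
\mathbb E^{\mathbb Q}\bigl[e^{-A_{\tau_{\mathrm{exp}}}}\mathbb I_{\{\tau_{\mathrm{exp}}\le T\}}\bigr]\le\inf_n\frac{e^{C_\phi(T-t)}\phi(\omega(t))}{m_n^\phi}=0,
\]
so $A_{\tau_{\mathrm{exp}}}=\infty$ a.s. on $\{\tau_{\mathrm{exp}}\le T\}$. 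Letting $T\uparrow\infty$ through the integers gives the claim.

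The main obstacle is justifying the integration by parts when $\phi$ is only $C^2(D)$ and unbounded, and when the test process in Definition~\ref{def:continuous_linear_martingale_problem} carries the survival indicator $\mathbb I_{\{\tau_{\mathrm{exp}}>\cdot\}}$. I would handle this by working on $[t,\tau_n]$, where $\tau_n<\tau_{\mathrm{exp}}$, so the indicator equals one. On that interval all processes are bounded, and the product rule for the continuous semimartingale $\phi(X_{\cdot\wedge\tau_n})$ and the finite-variation factor $e^{-C_\phi(\cdot-t)-A}$ is routine. The local martingale part is bounded there, hence a true martingale, which gives the supermartingale property. I also have to check the convention that $A_{\tau_{\mathrm{exp}}}$ means the left limit, and that it is consistent with \eqref{eq:def_Akt}.
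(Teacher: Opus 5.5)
Your proposal is correct and matches the paper's proof essentially step for step. The shared steps are the $C^2(\overline D_n)$ approximation of $\phi$, the discounted supermartingale $\Xi^n$ obtained from $L^\beta(\cdot,J^2\phi)=L^{\gamma^\beta}(\cdot,\nabla\phi,\nabla^2\phi)-k^\beta\phi$, and the exit bound driven by $m_n^\phi\to\infty$. The only difference is cosmetic: you bound $\mathbb E^{\mathbb Q}[e^{-A_{\tau_n}}\mathbb I_{\{\tau_n\le T\}}]$ directly, whereas the paper bounds $\mathbb Q(\tau_n\le T,\ A^{k^\beta,t}_{\tau_n}\le R)$ and then takes a countable union over integer $T$ and $R$.
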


\begin{proof}
Set \(x:=\omega(t)\) and choose \(n_0>t\) such that
\(\omega([0,t])\subset D_{n_0}\).
For each \(n\ge n_0\), choose \(\phi_j\in C_c^\infty(D)\) with
\(\phi_j\to\phi\) in \(C^2(\overline D_n)\). Each stopped process
\(\widehat M^{\phi_j,n,\gamma^\beta}\) is a \(\mathbb Q\)-martingale
with respect to \(\widehat{\mathbb F}\). The local coefficient bounds
and the \(C^2\)-convergence permit passage to the limit on every finite
time interval, so \(\widehat M^{\phi,n,\gamma^\beta}\), defined by the
same formula with \(\phi\) in place of \(\phi_j\), is also a
\(\mathbb Q\)-martingale.

Since \(k^\beta=-c\), integration by parts with the discount factor
and the Lyapunov inequality show that
\[
    \Xi_s^n
    :=e^{-C_\phi([s]_{t,\tau_n}-t)
           -A_{[s]_{t,\tau_n}}^{k^\beta,t}}
      \phi(X_{[s]_{t,\tau_n}})
      \mathbb I_{\{[s]_{t,\tau_n}<\tau_{\mathrm{exp}}\}},
    \qquad s\ge t,
\]
is a nonnegative \(\mathbb Q\)-supermartingale with respect to
\(\widehat{\mathbb F}\), with \(\Xi_t^n=\phi(x)\).
Fix \(T\ge t\) and \(R\ge0\). On
\(\{\tau_n\le T,\ A_{\tau_n}^{k^\beta,t}\le R\}\), one has
\(\tau_n<\tau_{\mathrm{exp}}\),
\(X_{\tau_n}\in D\setminus D_n\), and
\[
    \Xi_T^n\ge e^{-C_\phi(T-t)-R}m_n^\phi.
\]
Taking expectations yields
\begin{equation}
\label{eq:dekilled_lyap_tail_bound}
    \mathbb Q\bigl(\tau_n\le T,\ A_{\tau_n}^{k^\beta,t}\le R\bigr)
    \le
    \frac{e^{C_\phi(T-t)+R}\phi(x)}{m_n^\phi}.
\end{equation}
Because \(A^{k^\beta,t}\) is nondecreasing and
\(\tau_n\uparrow\tau_{\mathrm{exp}}\),
\[
    \mathbb Q\bigl(
        \tau_{\mathrm{exp}}\le T,\,
        A_{\tau_{\mathrm{exp}}}^{k^\beta,t}\le R
    \bigr)
    \le
    \mathbb Q\bigl(
        \tau_n\le T,\ A_{\tau_n}^{k^\beta,t}\le R
    \bigr)
    \le
    \frac{e^{C_\phi(T-t)+R}\phi(x)}{m_n^\phi}.
\]
Letting \(n\to\infty\), and then taking a countable union over integer
\(T\ge t\) and \(R\ge0\), proves the conclusion.
\end{proof}

\subsection{Pre-path Extension and the GMP Correspondence}\label{app:pre-path-space}

This subsection establishes the global inverse-killing lemma needed for part~(ii) of Lemma~
6.3 in the main text and then gives the complete proof of Lemma~
6.3 in the main text.
We introduce the pre-path space used in the proof.
For notational convenience, we extend the notation \(\tau_n\) and \(\tau_\infty\) to arbitrary trajectories \(\omega:[0,\infty)\to\widehat D\), without requiring \(\omega\in\widehat\Omega\cup\widetilde\Omega\), by setting
\[
\tau_n(\omega)
:=
\inf\{s\ge0:\omega(s)\notin D_n\},
\qquad
\tau_\infty(\omega)
:=
\lim_{n\to\infty}\tau_n(\omega).
\]
The pre-path space is defined as
\[
\Omega^{\mathrm{pre}}
:=
\widehat\Omega
\cup
\left\{
\omega:[0,\infty)\to\widehat D:\begin{aligned}\quad
&\tau_\infty(\omega)\in(0,\infty),\\
&\omega|_{[0,\tau_\infty(\omega))}
  \in C([0,\tau_\infty(\omega)),D),\\
&\limsup_{s\uparrow\tau_\infty(\omega)}
  d_{\widehat D}\bigl(X_s(\omega),\triangle\bigr)>0,\\
&X_s(\omega)=\triangle
  \quad\text{for every }s\geq\tau_\infty(\omega)
\end{aligned}
\right\}.
\]
Thus, \(\Omega^{\mathrm{pre}}\setminus\widehat\Omega\) consists precisely of finite-lifetime paths that are continuous in \(D\) before their lifetime but do not converge to \(\triangle\) as the lifetime is approached.
We use the canonical measurable structure on the pre-path space and set
\[
    \mathcal F^{\mathrm{pre}}
    :=
    \sigma(X_s:s\geq0),
    \qquad
    \mathcal F_s^{\mathrm{pre}}
    :=
    \sigma(X_r:0\leq r\leq s).
\]
As shown in \cite[Section~1.12]{supp:pinsky1995positive}, defining $\rho_n:=\tau_n\wedge n$ for $n\ge1$, then \(\rho_n\uparrow\tau_\infty\), and
\[
    \mathcal F^{\mathrm{pre}}
    =
    \sigma\left(
        \bigcup_{s\geq0}\mathcal F_s^{\mathrm{pre}}
    \right)
    =
    \sigma\left(
        \mathcal F_{\tau_n}^{\mathrm{pre}}:n\geq1
    \right)
    =
    \sigma\left(
        \mathcal F_{\rho_n}^{\mathrm{pre}}:n\geq1
    \right).
\]

For \(r<\tau_\infty(\omega)\), let
\(\iota_r\omega\in\widehat\Omega\) denote the constant continuation of the
observed prefix:
\[
    (\iota_r\omega)(s)
    :=
    \omega(s\wedge r),
    \qquad s\geq0.
\]
We extend the coefficient field to the pre-path space by
\[
    \beta^{\mathrm{pre}}(r,\omega)
    :=
    \begin{cases}
        \beta(r,\iota_r\omega),&r<\tau_\infty(\omega),\\
        0,&r\geq\tau_\infty(\omega).
    \end{cases}
\]
This extension is progressively measurable. We use the same convention
for \(k\) and \(\gamma\), and suppress the superscript
\({\mathrm{pre}}\) whenever no ambiguity can arise.

The pre-path space also satisfies the atom-intersection condition required by the Tulcea extension theorem.
Indeed, the atom of $\mathcal F_{\rho_n}^{\mathrm{pre}}$ containing $\omega$ is
\[
    A_n(\omega)
    =
    \left\{
        \eta\in\Omega^{\mathrm{pre}}:
        \rho_n(\eta)=\rho_n(\omega),\
        \eta(s)=\omega(s)
        \text{ for }0\le s\le\rho_n(\omega)
    \right\}.
\]
Any sequence of such atoms with the finite intersection property has a nonempty intersection: the compatible prefixes determine a path up to their limiting lifetime, and, when this lifetime is finite, setting the path equal to $\triangle$ thereafter gives an element of $\Omega^{\mathrm{pre}}$.
The additional paths in $\Omega^{\mathrm{pre}}\setminus\widehat\Omega$ ensure that this construction remains possible even when the path does not converge to $\triangle$ at its lifetime.
This verifies condition~(10.5) in \cite[Theorem~1.10.5]{supp:pinsky1995positive}.
Moreover, the corresponding stopped-path spaces are standard Borel, so regular conditional laws for a consistent family can be chosen, after modification on null sets, to preserve the preceding stopped prefix.
On an exceptional stopped prefix, choose the Dirac kernel at its constant
continuation, retaining absorption if the prefix has already reached
\(\triangle\). This continuation is a measurable function of the stopped
prefix and belongs to its atom. The resulting transition kernels therefore
preserve the preceding atom for every prefix.
Thus the Tulcea extension theorem applies to any consistent family of probability measures on $(\mathcal F_{\rho_n}^{\mathrm{pre}})_{n\ge1}$.

For \(n\geq1\),
define the stopped de-killed path map
\[
    \pi_n:\widetilde\Omega\longrightarrow\widehat\Omega
    \subseteq\Omega^{\mathrm{pre}},
    \qquad
    \pi_n(\widetilde\omega)(s)
    :=
    \overline X_{s\wedge\rho_n(\widetilde\omega)}
    (\widetilde\omega).
\]
Clearly, the map \(\pi_n\) is \(\widetilde{\mathcal F}_{\rho_n}/\mathcal F_{\rho_n}^{\mathrm{pre}}\)-measurable.
If a probability measure \(\mathbb P^\circ\) on
\(\widetilde{\mathcal F}_{\rho_n}\) satisfies
\(\mathbb P^\circ(\tau_{\mathrm{kill}}\leq\rho_n)=0\), it induces a law on
\(\mathcal F_{\rho_n}^{\mathrm{pre}}\) by
\begin{align}\label{eq:pre-path-measure-identification}
    B\longmapsto
    \mathbb P^\circ\bigl(\pi_n^{-1}(B)\bigr),
    \qquad
    B\in\mathcal F_{\rho_n}^{\mathrm{pre}}.
\end{align}
We identify \(\mathbb P^\circ\) with this stopped pre-path law whenever no
confusion can arise.
The following lemma provides the key technical step in the proof of part~(ii) of Lemma~
6.3 in the main text.
\begin{lemma}
\label{lem:gmp_pinsky_inverse_extension}
Under the hypotheses and notation of
Lemma~
6.3 in the main text, fix
\((t,\omega)\in[0,\infty)\times\widehat\Omega\) with
\(t<\tau_{\mathrm{exp}}(\omega)\), and let
\(\mathbb P\in\widetilde{\mathcal P}_{t,\omega}(L^\beta)\).
Then there exists \(\mathbb Q\in\widehat{\mathcal P}_{t,\omega}(L^{\gamma^\beta})\)
such that, for every \(s\ge t\) and every bounded
\(\widehat{\mathcal F}_s\)-measurable random variable \(Y\),
\begin{equation}
\label{eq:inverse_killing_identification}
\mathbb E^{\mathbb Q}\!\left[
 e^{-A_s^{k^\beta,t}}Y\,\mathbb I_{\{\tau_{\mathrm{exp}}>s\}}
\right]
=
\mathbb E^{\mathbb P}\!\left[
 (Y\circ\mathfrak r)\,\mathbb I_{\{\tau_\infty>s\}}
\right].
\end{equation}
\end{lemma}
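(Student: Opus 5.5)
We construct $\mathbb Q$ by inverse killing on the stopped pieces $\mathcal F^{\mathrm{pre}}_{\rho_n}$ and glue them with the Tulcea extension on $\Omega^{\mathrm{pre}}$ described above.

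For each $n$ with $\omega([0,t])\subset D_n$ and $n>t$, the field $\beta$ is bounded before $\rho_n$. Hence $k^\beta$ is bounded there, and $A^{k^\beta,t}_{\rho_n}$ is bounded on $\widetilde\Omega$ (lifted by \eqref{def:dekilling_coefficient}). We define a probability measure on $\widetilde{\mathcal F}_{\rho_n}$ by
\[
\mathbb P^{\circ}_n(B):=\mathbb E^{\mathbb P}\bigl[e^{A^{k^\beta,t}_{\rho_n}}\mathbb I_B\,\mathbb I_{\{\rho_n<\tau_{\mathrm{kill}}\}}\bigr].
\]
By Proposition~\ref{prop:recovery_aggregation_generator_compatible_characteristics}\ref{prop:recovery_killing_compensator} (with the coefficient $k^\beta$, which is the compensator density for this linear problem by Lemma~E.1 of the Supplementary Material), the process $e^{A^{k^\beta,t}_{s\wedge\rho_n}}\mathbb I_{\{s\wedge\rho_n<\tau_{\mathrm{kill}}\}}$ is a $\mathbb P$-martingale. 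Indeed, integration by parts applies with the bounded compensator, since $d(e^{A}\mathbb I_{\{\cdot<\tau_{\mathrm{kill}}\}})=e^{A}\bigl(-dN+k\,\mathbb I_{\{\cdot<\tau_\infty\}}dr\bigr)$. So $\mathbb P^\circ_n$ has mass one and charges no killing before $\rho_n$. Via \eqref{eq:pre-path-measure-identification} and $\pi_n$, we push it to a law $\mu_n$ on $\mathcal F^{\mathrm{pre}}_{\rho_n}$. Consistency $\mu_{n+1}|_{\mathcal F^{\mathrm{pre}}_{\rho_n}}=\mu_n$ follows from the martingale property and optional sampling at $\rho_n\le\rho_{n+1}$, together with $\pi_{n+1}=\pi_n$ on $\widetilde{\mathcal F}_{\rho_n}$-measurable functionals before killing. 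The Tulcea extension then yields a probability $\mu$ on $\Omega^{\mathrm{pre}}$.

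Next we show that $\mu(\Omega^{\mathrm{pre}}\setminus\widehat\Omega)=0$ and that $\mu$ has no continuous explosion in finite time. This is the main obstacle. Under $\mu$ we have $e^{-A_{\rho_n}}$-weighted identities: for bounded $\mathcal F^{\mathrm{pre}}_{\rho_n}$-measurable $Z$,
\[
\mathbb E^{\mu}\bigl[e^{-A_{\rho_n}}Z\bigr]=\mathbb E^{\mathbb P}\bigl[(Z\circ\pi_n)\mathbb I_{\{\rho_n<\tau_{\mathrm{kill}}\}}\bigr].
\]
We then mimic Lemma~\ref{lem:continuous_path_hazard_divergence} on the pre-path space. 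The stopped Lyapunov process $e^{-C_\phi(\rho_n-t)}\phi(X_{\rho_n})$ is a $\mu$-supermartingale up to $\rho_n$, because under $\mathbb P^\circ_n$ the de-killed process solves the $L^{\gamma^\beta}$ martingale problem. This follows from It\^o's formula: the killing term $c\,f$ cancels against the density $e^{A}$. The supermartingale gives $\mu(\tau_n\le T)\le e^{C_\phi T}\phi(x)/m^\phi_n\to0$, so $\tau_\infty=\infty$ $\mu$-a.s. In particular the paths without limit at $\triangle$ (finite lifetime) are null, and $\mu$ lives on $\widehat\Omega$ with $\tau_{\mathrm{exp}}=\infty$. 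We set $\mathbb Q:=\mu|_{\widehat\Omega}$. If the Lyapunov hypothesis only gives $L^\beta\le C\phi$ rather than $L^{\gamma^\beta}\le C\phi$, the bound includes the discount and one gets instead $A_{\tau_{\mathrm{exp}}}=\infty$ on explosion, as in Lemma~\ref{lem:continuous_path_hazard_divergence}; then $\mathbb Q$ is still a probability on $\widehat\Omega$ with the right clock behavior. I would first try this weaker route, since it matches the form of \eqref{eq:coefficient_lyapunov_2}.

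Finally we verify the two required properties. The prescribed history holds because $\mathbb P$ has it. For the martingale problem, the stopped processes $\widehat M^{f,n,\gamma^\beta}$ are $\mu_n$-martingales up to $\rho_n$ by the It\^o computation above; letting $n$ vary localizes to $[s]_{t,\tau_n}$, and boundedness gives true martingales. For \eqref{eq:inverse_killing_identification}, we take $Y$ bounded $\widehat{\mathcal F}_s$-measurable and evaluate on $\{\rho_n>s\}$ using the weighted identity with $Z=e^{-A_s}Y\mathbb I_{\{\rho_n>s\}}$ and the martingale property at time $s$. Letting $n\to\infty$ with $\rho_n\uparrow\tau_{\mathrm{exp}}$ under $\mathbb Q$ and $\rho_n\uparrow\tau_\infty$ under $\mathbb P$, monotone convergence of indicators gives \eqref{eq:inverse_killing_identification}.
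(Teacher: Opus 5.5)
Your construction of the stopped inverse-killed laws, their consistency, the Tulcea extension to $\Omega^{\mathrm{pre}}$, and the final limiting identity match the paper. The gap is in the step you flagged as the main obstacle: neither of your routes proves $\mu(\Omega^{\mathrm{pre}}\setminus\widehat\Omega)=0$.

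\emph{First route.} It needs $L^{\gamma^\beta}\phi\le C\phi$. The hypothesis of Lemma~\ref{thm:gmp_virtualization_correspondence} controls $L^\beta\phi$, which contains the term $c\phi\le0$. You therefore only get $L^{\gamma^\beta}\phi\le(C_\phi+k^\beta)\phi$, and $k^\beta$ may be unbounded near the boundary. Moreover, the conclusion $\tau_{\mathrm{exp}}=\infty$ $\mathbb Q$-a.s.\ is false in general. On $D=\mathbb R$ take $a=0$, $b(x)=1+x^2$, $c(x)=-2|x|$ and $\phi=1+x^2$. Then $L^\beta\phi=(2x-2|x|)(1+x^2)\le0$, yet the only de-killed law is the point mass at $\tan(s+\arctan x)$. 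This path explodes in finite time, and the hazard diverges along it.

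\emph{Second route.} It yields at most $A^{k^\beta,t}_{\tau_\infty}=\infty$ on $\{\tau_\infty<\infty\}$, and this does not give concentration on $\widehat\Omega$. The paths in $\Omega^{\mathrm{pre}}\setminus\widehat\Omega$ are finite-lifetime paths that fail to converge to $\triangle$. An example is a path that leaves every $D_n$ but returns to a fixed compact set infinitely often before $\tau_\infty$. A divergent clock is fully compatible with such oscillation. You also cannot transfer information from $\mathbb P$, because this part of $\mu$ lies beyond the killing time and is invisible to $\mathbb P$.

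What is missing is a local argument under the de-killed martingale problem.
\begin{itemize}
\item Paths that jump to $\triangle$ from inside a compact set are $\mu$-null, since $\mu(X_{\rho_n}=\triangle)=0$ by construction.
\item The paper excludes oscillating paths by counting excursions from $K$ to $O^c$, where $K\Subset O\Subset D$. The bounds on $b$ and $a$ in $O$ and the conditional Bernstein inequality give $h>0$ and $\alpha<1$ such that, at every entrance time $\sigma\le T$ into $K$, the conditional probability given $\mathcal F^{\mathrm{pre}}_\sigma$ of leaving $O$ within time $h$ is at most $\alpha$.
\item Only $\lfloor (T-t)/h\rfloor+1$ of the disjoint excursions in $[t,T]$ can last longer than $h$. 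Hence the probability of $N$ completed excursions before $T$ tends to zero as $N\to\infty$.
\end{itemize}
An equivalent route: take $f\in C_c^\infty(O)$ with $f=1$ on $K$. The compensated process of $f(X)$ has bounded drift and quadratic-variation densities on $[t,T]$, so $f(X_s)$ converges as $s\uparrow\tau_\infty$, which rules out oscillation. Only after this step is $\mathbb Q:=\mu|_{\widehat\Omega}$ a law on $\widehat\Omega$ with $\rho_n\uparrow\tau_{\mathrm{exp}}$, as your final limit requires. This $\mathbb Q$ may explode continuously. The divergence of the clock at explosion is not needed for this lemma; it enters only in part~(i) of Lemma~\ref{thm:gmp_virtualization_correspondence}.
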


\begin{proof}
Throughout the proof, write \(k:=k^\beta\) and \(\gamma:=\gamma^\beta\).

{
\noindent\emph{Step 1: localized de-killing measures.}
Set \(x:=\omega(t)\),
\[
 \mathscr D_s^{k,t}:=e^{A_s^{k,t}}\mathbb I_{\{s<\tau_\infty\}},\qquad s\ge t,
 \qquad
 \rho_n:=\tau_n\wedge n,
\]
and put \(\mathscr D_s^{k,t}:=1\) for \(s\le t\).
}
We have $\tau_\infty=\tau_{\mathrm{kill}}$, $\mathbb P$-a.s., by Proposition~\ref{prop:lyap_tail_bounds}.
For every sufficiently large $n$, 
the stopped process \((N^t_{s\wedge\rho_n})_{s\geq t}\) has predictable compensator $(
\int_t^{s\wedge\rho_n} k_r\mathbb I_{\{r<\tau_\infty\}}\,dr)_{s\geq t}$ by Lemma~\ref{prop:localized_characteristics_linear_gmp}.
Since \(\mathbb I_{\{s<\tau_\infty\}}=1-N_s^t\) for \(s\geq t\), the integration by parts formula gives
\[\mathscr D_{s\wedge\rho_n}^{k,t}=\mathscr D_t^{k,t}-\int_t^{s\wedge\rho_n}
e^{A_r^{k,t}}
\,d\left(
N_r^t-A_r^{k,t}
\right),
\qquad s\geq t.
\]
Thus \(\mathscr D_{\cdot\wedge\rho_n}^{k,t}\) is a bounded
\(\mathbb P\)-martingale with respect to \(\widetilde{\mathbb F}\).
{For every sufficiently large \(n\), define the probability measure
\(\mathbb Q^n\) on \((\widetilde\Omega,\widetilde{\mathcal F}_{\rho_n})\) by}
\begin{equation}
\label{eq:def_inverse_killed_stopped_law}
\left.
\frac{d\mathbb Q^n}{d\mathbb P}
\right|_{\widetilde{\mathcal F}_{\rho_n}}
=
\mathscr D_{\rho_n}^{k,t}.
\end{equation}
For $m\geq n$, the optional sampling theorem gives
\[
\mathbb E^{\mathbb P}\left[
\mathscr D_{\rho_m}^{k,t}
\,\middle|\,
\widetilde{\mathcal F}_{\rho_n}
\right]
=
\mathscr D_{\rho_n}^{k,t}.
\]
Consequently, $\mathbb Q^m$ coincides with $\mathbb Q^n$ on $\widetilde{\mathcal F}_{\rho_n}$.
Moreover, $\mathscr D_{\rho_n}^{k,t}=0$ on $\{\tau_{\mathrm{kill}}\leq\rho_n\}$, and therefore $\mathbb Q^n(\tau_{\mathrm{kill}}\leq\rho_n)=0$.
Finally, by Lemma~\ref{prop:localized_characteristics_linear_gmp} and the
product rule, the process
\[
\mathscr D_{s\wedge\rho_n}^{k,t}
f(\overline X_{s\wedge\rho_n})
-f(x)
-
\int_t^{s\wedge\rho_n}
\mathscr D_r^{k,t}
L^\gamma\bigl(r,\overline X,\nabla f(\overline X_r),\nabla^2f(\overline X_r)
\bigr)\,dr,\qquad s\ge t,
\]
is a $\mathbb P$-martingale.  Bayes' formula with density process
\(\mathscr D_{\cdot\wedge\rho_n}^{k,t}\) then yields that
\begin{equation}
\label{eq:localized_inverse_killed_gmp}
f(\overline X_{s\wedge\rho_n})
-f(x)
-
\int_t^{s\wedge\rho_n}
L^\gamma\bigl(r,\overline X,\nabla f(\overline X_r),\nabla^2f(\overline X_r)
\bigr)\,dr,
\qquad s\ge t,
\end{equation}
is a \(\mathbb Q^n\)-martingale.

{
\smallskip
\noindent\emph{Step 2: extension to the pre-path space.}
By Step~1 and the identification
\eqref{eq:pre-path-measure-identification}, the stopped laws \(\mathbb Q^n\)
are consistent on the increasing sigma-fields
\((\mathcal F_{\rho_n}^{\mathrm{pre}})_n\).
}
Applying the Tulcea extension theorem
\cite[Theorem~1.10.5 and Section~1.12]{supp:pinsky1995positive}
to this consistent family yields a probability measure \(\overline{\mathbb Q}\) on
\((\Omega^{\mathrm{pre}},\mathcal F^{\mathrm{pre}})\) satisfying
\[
\overline{\mathbb Q}\,
|_{\mathcal F_{\rho_n}^{\mathrm{pre}}}
=
\mathbb Q^n
\]
for every sufficiently large \(n\).
Consequently, \eqref{eq:localized_inverse_killed_gmp} implies that,
for every \(f\in C_b^\infty(D)\) and \(n\geq1\), the process
\[
    f(X_{s\wedge\rho_n})-f(x)
    -
    \int_t^{s\wedge\rho_n}
    L^\gamma
    \bigl(
        r,X,\nabla f(X_r),\nabla^2f(X_r)
    \bigr)\,dr,
    \qquad s\geq t,
\]
is an \(\overline{\mathbb Q}\)-martingale.

\smallskip
\noindent\emph{Step 3: concentration on \(\widehat\Omega\) and verification of the de-killed martingale problem.}
We claim that $\overline{\mathbb Q}(\widehat\Omega)=1$, or equivalently, that
\begin{align}\label{eq:lem:gmp_pinsky_inverse_extension_1}
    X_s\longrightarrow\triangle
    \quad\text{in }\widehat D
    \quad\text{as }s\uparrow\tau_\infty
\end{align}
on \(\{\tau_\infty<\infty\}\), \(\overline{\mathbb Q}\)-almost surely.
Fix \(K\Subset D\) and \(T>t\), and define
\[
\begin{aligned}
E_{K,T}
:=
\Bigl\{\omega\in\Omega^{\mathrm{pre}}:\;&
\tau_\infty(\omega)\leq T,\ \text{and there exists }(s_j)_{j\geq1}
\text{ such that}\\
&
t\leq s_j<\tau_\infty(\omega),\;\;
s_j\uparrow\tau_\infty(\omega),\;\;
X_{s_j}(\omega)\in K
\text{ for every }j
\Bigr\}.
\end{aligned}
\]
Choose an open set \(O\) such that $K\Subset O\Subset D$.
First, define
\[
E_{O,T}^{\mathrm{trap}}
:=
\left\{
\tau_\infty\leq T,\
\exists r\in[t,\tau_\infty)
\text{ such that }
X_s\in O
\text{ for every }s\in[r,\tau_\infty)
\right\}.
\]
Choose \(n_0\) so that the stopped laws \(\mathbb Q^n\) in Step~1
are defined for every \(n\ge n_0\). Then
\[
E_{O,T}^{\mathrm{trap}}
\subseteq
\bigcup_{n\ge\max(n_0,\lfloor T\rfloor+1)}
\{\tau_\infty\leq T,\ \tau_n=\tau_\infty\}.
\]
For each integer \(n\ge\max(n_0,\lfloor T\rfloor+1)\), the event
\(\{\tau_\infty\leq T,\ \tau_n=\tau_\infty\}\) is contained in
\(\{X_{\rho_n}=\triangle\}\), since there
\(\rho_n=\tau_n=\tau_\infty\).
Under the virtual-space law defined by
\eqref{eq:def_inverse_killed_stopped_law}, the density vanishes on
\(\{\rho_n\ge\tau_\infty\}\), so the path is alive at \(\rho_n\)
almost surely. The stopped map \(\pi_n\) therefore sends this law to a
pre-path law satisfying \(\mathbb Q^n(X_{\rho_n}=\triangle)=0\).
Since \(\{X_{\rho_n}=\triangle\}\in\mathcal F_{\rho_n}^{\mathrm{pre}}\)
and \(\overline{\mathbb Q}\) restricts to \(\mathbb Q^n\) on this
sigma-field, it follows that
\(\overline{\mathbb Q}(E_{O,T}^{\mathrm{trap}})=0\).

Set $\delta:=\operatorname{dist}(K,O^c)>0$.
For every stopping time \(\sigma\leq T\), define
\[
    \eta_\sigma^O
    :=
    \inf\{r\geq\sigma:r<\tau_\infty,\ X_r\notin O\},
    \qquad
    \inf\varnothing:=\infty.
\]
Since \(O\Subset D\), local boundedness of \(b\) and \(a\) gives constants
\(K_{O,T},M_{O,T}\geq1\) such that
\[
    |b_r|\leq K_{O,T},
    \qquad
    \lVert a_r\rVert\leq M_{O,T},
\]
whenever \(0\leq r\leq T+1\), \(r<\tau_\infty\), and \(X_r\in O\).
Using the \(\overline{\mathbb Q}\)-martingales from Step~2, obtained
from the processes in \eqref{eq:localized_inverse_killed_gmp}, and repeating
the characteristic identification from
Lemma~\ref{prop:localized_characteristics_linear_gmp}, we find
that, under \(\overline{\mathbb Q}\), the canonical process is an \(\mathbb R^d\)-valued continuous semimartingale on the stochastic interval $\llbracket \sigma,\eta_\sigma^O\rrbracket\cap\llbracket 0,\tau_\infty\llbracket$.
Moreover, its characteristic
triplet is
\[
\left(
    \int_\sigma^{\,\cdot\wedge\eta_\sigma^O\wedge\tau_\infty}
    b_r\,dr,\,
    \int_\sigma^{\,\cdot\wedge\eta_\sigma^O\wedge\tau_\infty}
    a_r\,dr,\,
    0
\right).
\]
For any \(h\in(0,1]\) such that \(K_{O,T}h\leq\delta/2\), we may condition on \(\mathcal F_\sigma^{\mathrm{pre}}\) and repeat the multidimensional Bernstein estimate used in the proof of Proposition~\ref{prop:short_time_control_bounded_coefficients}\ref{prop:short_time_control_exponential_exit}; see \eqref{eq:prop:small_time_exit_1}.
We obtain
\[
\overline{\mathbb Q}\left(
    \eta_\sigma^O-\sigma\leq h
    \,\middle|\,
    \mathcal F_\sigma^{\mathrm{pre}}
\right)
\leq
C_d\exp\left(
    -\frac{\delta^2}{8dM_{O,T}^2h}
\right)
\]
\(\overline{\mathbb Q}\)-a.s.\ on \(\{\sigma<\tau_\infty,\ X_\sigma\in K\}\), where \(C_d\) depends only on the dimension.
As the right-hand side tends to zero as \(h\downarrow0\), choose \(h\in(0,1]\) such that
\[
K_{O,T}h\leq\frac{\delta}{2}
\qquad\text{and}\qquad
\alpha
:=
C_d\exp\left(
    -\frac{\delta^2}{8dM_{O,T}^2h}
\right)
<1.
\]
This choice is independent of \(\sigma\). Hence every stopping time
\(\sigma\leq T\) satisfies
\begin{equation}
\label{eq:pinsky_short_exit}
\overline{\mathbb Q}\left(
    \eta_\sigma^O-\sigma\leq h
    \,\middle|\,
    \mathcal F_\sigma^{\mathrm{pre}}
\right)
\leq\alpha, \qquad \overline{\mathbb Q}\text{-a.s. on }\{\sigma<\tau_\infty,\ X_\sigma\in K\}.
\end{equation}

Define successive entrance and exit times by
\[
\begin{split}
\sigma_1
&:=
\inf\{s\geq t:s<\tau_\infty,\ X_s\in K\},\\
\eta_j
&:=
\inf\{s\geq\sigma_j:s<\tau_\infty,\ X_s\notin O\},\\
\sigma_{j+1}
&:=
\inf\{s\geq\eta_j:s<\tau_\infty,\ X_s\in K\}.
\end{split}
\]
For every \(\omega\in E_{K,T}\setminus E_{O,T}^{\mathrm{trap}}\), all entrance and exit
times are finite and satisfy $\sigma_j(\omega)<\eta_j(\omega)<\tau_\infty(\omega)\leq T$ for all $j\ge1$.
Since the excursion intervals \([\sigma_j(\omega),\eta_j(\omega)]\), \(j\ge1\), are disjoint and contained in \([t,T]\), at most
\[
L:=\left\lfloor\frac{T-t}{h}\right\rfloor+1
\]
indices \(j\le N\) can satisfy \(\eta_j(\omega)-\sigma_j(\omega)>h\). 
Fix \(N>L\).
On \(E_{K,T}\setminus E_{O,T}^{\mathrm{trap}}\), at least \(N-L\) of the first \(N\) excursions have duration at most \(h\). 
For each \(j\), apply \eqref{eq:pinsky_short_exit} at the bounded
stopping time \(\sigma_j\wedge T\) and restrict to
\(\{\sigma_j\le T,\ \sigma_j<\tau_\infty\}\).
On this event, the entrance time and its corresponding exit agree with
those in \eqref{eq:pinsky_short_exit}. Thus repeated conditioning gives a
factor at most \(\alpha\) for each prescribed short excursion before
\(T\). Summing over the possible choices of the exceptional excursions,
we obtain
\[
\overline{\mathbb Q}(E_{K,T})
=
\overline{\mathbb Q}(E_{K,T}\setminus E_{O,T}^{\mathrm{trap}})
\leq
\sum_{r=0}^{L}\binom{N}{r}\alpha^{N-r}.
\]
The right-hand side converges to zero as \(N\to\infty\) since $\alpha<1$. 
Hence $\overline{\mathbb Q}(E_{K,T})=0$.
Taking a countable compact exhaustion of \(D\) shows that every
finite-lifetime path eventually leaves every compact subset of \(D\).
Equivalently, \eqref{eq:lem:gmp_pinsky_inverse_extension_1} holds, and hence \(\overline{\mathbb Q}(\widehat\Omega)=1\). 
We may therefore regard \(\overline{\mathbb Q}\) as a probability measure
\(\mathbb Q\) on \((\widehat\Omega,\widehat{\mathcal F})\).
The \(\overline{\mathbb Q}\)-martingales in Step~2 show that
\(\widehat M^{f,n,\gamma}_{\cdot\wedge n}\) is a
\(\mathbb Q\)-martingale for every \(f\in C_b^\infty(D)\) and every
sufficiently large \(n\). Localization and the boundedness of each fixed
\(\widehat M^{f,n,\gamma}\) on finite time intervals then give
\(\mathbb Q\in\widehat{\mathcal P}_{t,\omega}(L^\gamma)\).

\smallskip
\noindent\emph{Step 4: the inverse-killing identity.}
Fix $s\ge t$ and a bounded $\widehat{\mathcal F}_s$-measurable random variable $Y$.
By the construction of the probability measure $\mathbb Q^n$, 
\[
\mathbb E^{\mathbb Q^n}\!
\left[
e^{-A_s^{k,t}}
Y\,
\mathbb I_{\{s<\rho_n\}}
\right]
=
\mathbb E^{\mathbb P}\!
\left[
(Y\circ\mathfrak r)\,
\mathbb I_{\{\tau_\infty>s\}}
\mathbb I_{\{s<\rho_n\}}
\right]
\]
for all sufficiently large \(n\).
For such \(n\), the preceding identity remains valid with \(\mathbb Q^n\) replaced by \(\overline{\mathbb Q}\), since the two
measures agree on \(\mathcal F_{\rho_n}^{\mathrm{pre}}\). Letting
\(n\to\infty\) and using
\[
\rho_n\uparrow\tau_{\mathrm{exp}}
\quad \mathbb Q\text{-a.s.},
\qquad
\rho_n\uparrow\tau_\infty
\quad \mathbb P\text{-a.s.},
\]
we obtain \eqref{eq:inverse_killing_identification}. This completes the
proof.
\end{proof}

\begin{proof}[Proof of Lemma~
6.3 in the main text]
Set \(k:=k^\beta\) and \(\gamma:=\gamma^\beta\).
For part~(i) of Lemma~
6.3 in the main text, let
\(\mathbb Q\in\widehat{\mathcal P}_{t,\omega}(L^\gamma)\).
The prescribed history gives
\(\mathbb Q(\tau_{\mathrm{exp}}>t)=1\).
Progressive measurability and local boundedness of \(k\) make
\(A^{k,t}\) continuous, nondecreasing, and finite before explosion.
Lemma~\ref{lem:continuous_path_hazard_divergence} gives
\[
    A_{\tau_{\mathrm{exp}}}^{k,t}=\infty
    \quad\text{on }\{\tau_{\mathrm{exp}}<\infty\}.
\]
Hence \((A^{k,t},\mathbb Q)\in\mathfrak U^\circ\).

Fix \(f\in C_b^\infty(D)\) and \(n\ge1\). For \(s\ge t\), write
\(s_n:=[s]_{t,\tau_n}\).
Integration by parts shows that the process
\(s\mapsto e^{-A_{s_n}^{k,t}}f(X_{s_n})
-\int_t^{s_n}e^{-A_r^{k,t}}L^\beta(r,X,J_{X_r}^2f)\,dr\)
is a \(\mathbb Q\)-martingale with respect to \(\widehat{\mathbb F}\)
on every finite time interval. The stopped discounted-expectation identity in
part~(i) of Lemma~
6.2 in the main text, applied also to the finite-variation terms by Tonelli's
theorem, shows that \(\widetilde M^{f,n,\beta}\) is a martingale under
\(\Phi(A^{k,t},\mathbb Q)\) with respect to \(\widetilde{\mathbb F}\).
Since \(A^{k,t}=0\) on \([0,t]\), the prescribed history is also preserved.
Thus $\Phi(A^{k,t},\mathbb Q)\in\widetilde{\mathcal P}_{t,\omega}(L^\beta)$.

Conversely, let
\(\mathbb P\in\widetilde{\mathcal P}_{t,\omega}(L^\beta)\).
Lemma~\ref{lem:gmp_pinsky_inverse_extension} supplies
\(\mathbb Q\in\widehat{\mathcal P}_{t,\omega}(L^\gamma)\) satisfying
\eqref{eq:inverse_killing_identification}.  Set
\(\mathbb P':=\Phi(A^{k,t},\mathbb Q)\).  For \(s\ge t\),
\(B\in\widetilde{\mathcal F}_s\), and
\(\widehat Y:=\mathbb I_{B\cap\widehat\Omega}\), a path and its de-killed counterpart
agree on \([0,s]\) over \(\{\tau_\infty>s\}\).
Apply \eqref{eq:inverse_killing_identification} to \(\widehat Y\), and
part~(i) of Lemma~
6.2 in the main text to \(Y=\mathbb I_B\), whose restriction to
\(\widehat\Omega\) is \(\widehat Y\).
Absorption after killing makes \(\mathbb I_B\)
\(\widetilde{\mathcal F}_{s\wedge\tau_{\mathrm{kill}}}\)-measurable,
as required for the latter application with \(\sigma=s\).
The two identities give
\[
    \mathbb P'\bigl(B\cap\{\tau_\infty>s\}\bigr)
    =\mathbb P\bigl(B\cap\{\tau_\infty>s\}\bigr).
\]
For \(s<t\), the same identity follows from the common prescribed history.
Lemma~\ref{lem:pre_explosion_identification} therefore yields \(\mathbb P'=\mathbb P\). 
This proves existence in
part~(ii) of Lemma~
6.3 in the main text; uniqueness of
\(\mathbb Q\) follows from the injectivity of \(\Phi\) on
\(\mathfrak U^\circ\).  The two parts give the asserted bijection.
\end{proof}

\noeqref{eq:inverse_killing_identification}



\begin{thebibliography}{32}
\providecommand{\natexlab}[1]{#1}
\providecommand{\url}[1]{\texttt{#1}}
\expandafter\ifx\csname urlstyle\endcsname\relax
  \providecommand{\doi}[1]{doi: #1}\else
  \providecommand{\doi}{doi: \begingroup \urlstyle{rm}\Url}\fi

\bibitem[Artzner et~al.(1999)Artzner, Delbaen, Eber, and
  Heath]{artzner1999coherent}
Philippe Artzner, Freddy Delbaen, Jean-Marc Eber, and David Heath.
\newblock Coherent measures of risk.
\newblock \emph{Mathematical finance}, 9\penalty0 (3):\penalty0 203--228, 1999.

\bibitem[Bartl(2020)]{bartl2020conditional}
Daniel Bartl.
\newblock Conditional nonlinear expectations.
\newblock \emph{Stochastic Processes and their Applications}, 130\penalty0
  (2):\penalty0 785--805, 2020.

\bibitem[Bartl et~al.(2019)Bartl, Cheridito, and
  Kupper]{bartl_cheridito_kupper_19}
Daniel Bartl, Patrick Cheridito, and Michael Kupper.
\newblock Robust expected utility maximization with medial limits.
\newblock \emph{J. Math. Anal. Appl.}, 471\penalty0 (1-2):\penalty0 752--775,
  2019.
\newblock ISSN 0022-247X.
\newblock \doi{10.1016/j.jmaa.2018.11.012}.

\bibitem[Bartl et~al.(2021)Bartl, Kupper, and Neufeld]{bartl_kupper_neufeld_21}
Daniel Bartl, Michael Kupper, and Ariel Neufeld.
\newblock Duality theory for robust utility maximisation.
\newblock \emph{Finance Stoch.}, 25\penalty0 (3):\penalty0 469--503, 2021.
\newblock ISSN 0949-2984.
\newblock \doi{10.1007/s00780-021-00455-6}.

\bibitem[Blessing et~al.(2025)Blessing, Denk, Kupper, and
  Nendel]{blessing2025gamma}
Jonas Blessing, Robert Denk, Michael Kupper, and Max Nendel.
\newblock Convex monotone semigroups and their generators with respect to
  {$\Gamma$}-convergence.
\newblock \emph{Journal of Functional Analysis}, 288:\penalty0 110841, 2025.
\newblock \doi{10.1016/j.jfa.2025.110841}.
\newblock URL \url{https://arxiv.org/abs/2202.08653}.

\bibitem[Burzoni et~al.(2021)Burzoni, Riedel, and Soner]{burzoni2021viability}
Matteo Burzoni, Frank Riedel, and H.~Mete Soner.
\newblock Viability and arbitrage under {Knightian} uncertainty.
\newblock \emph{Econometrica}, 89\penalty0 (3):\penalty0 1207--1234, 2021.
\newblock \doi{10.3982/ECTA16535}.

\bibitem[Crandall et~al.(1992)Crandall, Ishii, and Lions]{crandall1992user}
Michael~G Crandall, Hitoshi Ishii, and Pierre-Louis Lions.
\newblock User’s guide to viscosity solutions of second order partial
  differential equations.
\newblock \emph{Bulletin of the American mathematical society}, 27\penalty0
  (1):\penalty0 1--67, 1992.

\bibitem[Criens and Kupper(2025)]{criens2025representation}
David Criens and Michael Kupper.
\newblock Representation theorems for convex expectations and semigroups on
  path space.
\newblock \emph{arXiv preprint arXiv:2503.10572}, 2025.

\bibitem[Criens and Niemann(2023)]{criens_niemann_MAFE}
David Criens and Lars Niemann.
\newblock Robust utility maximization with nonlinear continuous
  semimartingales.
\newblock \emph{Math. Financ. Econ.}, 17\penalty0 (3):\penalty0 499--536, 2023.
\newblock ISSN 1862-9679.
\newblock \doi{10.1007/s11579-023-00342-y}.

\bibitem[Criens and Niemann(2024)]{criensniemann2024markov}
David Criens and Lars Niemann.
\newblock Markov selections and {Feller} properties of nonlinear diffusions.
\newblock \emph{Stochastic Processes and their Applications}, 173:\penalty0
  104354, 2024.
\newblock \doi{10.1016/j.spa.2024.104354}.

\bibitem[Criens and Niemann(2025{\natexlab{a}})]{criens2025stochastic}
David Criens and Lars Niemann.
\newblock A stochastic representation theorem for sublinear semigroups with
  non-local generators.
\newblock \emph{Electronic Journal of Probability}, 30:\penalty0 1--36,
  2025{\natexlab{a}}.

\bibitem[Criens and Niemann(2025{\natexlab{b}})]{criens_niemann_JEEQ}
David Criens and Lars Niemann.
\newblock Nonlinear semimartingales and {Markov} processes with jumps.
\newblock \emph{J. Evol. Equ.}, 25\penalty0 (1):\penalty0 39,
  2025{\natexlab{b}}.
\newblock ISSN 1424-3199.
\newblock \doi{10.1007/s00028-024-01046-6}.
\newblock Id/No 21.

\bibitem[Delbaen et~al.(2010)Delbaen, Peng, and
  Rosazza~Gianin]{Delbaen_Peng_Rosazza_10}
Freddy Delbaen, Shige Peng, and Emanuela Rosazza~Gianin.
\newblock Representation of the penalty term of dynamic concave utilities.
\newblock \emph{Finance Stoch.}, 14\penalty0 (3):\penalty0 449--472, 2010.
\newblock ISSN 0949-2984.
\newblock \doi{10.1007/s00780-009-0119-7}.

\bibitem[Dellacherie and Meyer(1982)]{dellacherie1982probabilities}
Claude Dellacherie and Paul-Andr{\'e} Meyer.
\newblock \emph{Probabilities and Potential. B: Theory of Martingales},
  volume~72 of \emph{North-Holland Mathematics Studies}.
\newblock North-Holland, Amsterdam, 1982.

\bibitem[Denis et~al.(2011)Denis, Hu, and Peng]{denishu2011function}
Laurent Denis, Mingshang Hu, and Shige Peng.
\newblock Function spaces and capacity related to a sublinear expectation:
  Application to {$G$}-{Brownian} motion paths.
\newblock \emph{Potential Analysis}, 34:\penalty0 139--161, 2011.
\newblock URL \url{https://arxiv.org/abs/0802.1240}.

\bibitem[El~Karoui et~al.(1987)El~Karoui, Nguyen, and
  Jeanblanc-Picqu{\'e}]{elkaroui1987compactification}
Nicole El~Karoui, D.~H. Nguyen, and Monique Jeanblanc-Picqu{\'e}.
\newblock Compactification methods in the control of degenerate diffusions:
  Existence of an optimal control.
\newblock \emph{Stochastics}, 20\penalty0 (3):\penalty0 169--219, 1987.
\newblock \doi{10.1080/17442508708833443}.

\bibitem[Engelbert and Schmidt(1985)]{engelbert1985solutions}
Hans-J{\"u}rgen Engelbert and Wolfgang Schmidt.
\newblock On solutions of one-dimensional stochastic differential equations
  without drift.
\newblock \emph{Zeitschrift f{\"u}r Wahrscheinlichkeitstheorie und Verwandte
  Gebiete}, 68\penalty0 (3):\penalty0 287--314, 1985.
\newblock \doi{10.1007/BF00532642}.

\bibitem[Fleming and Soner(2006)]{Fleming_Soner_06}
Wendell~H. Fleming and H.~Mete Soner.
\newblock \emph{Controlled {Markov} processes and viscosity solutions},
  volume~25 of \emph{Stoch. Model. Appl. Probab.}
\newblock New York, NY: Springer, 2nd ed. edition, 2006.
\newblock ISBN 0-387-26045-5; 978-1-4419-2078-2; 0-387-31071-1.
\newblock \doi{10.1007/0-387-31071-1}.

\bibitem[F{\"o}llmer and Schied(2011)]{follmer2011stochastic}
Hans F{\"o}llmer and Alexander Schied.
\newblock \emph{Stochastic finance: An introduction in discrete time}.
\newblock Walter de Gruyter, 2011.

\bibitem[Girsanov(1962)]{girsanov1962nonuniqueness}
I.~V. Girsanov.
\newblock An example of non-uniqueness of the solution of the stochastic
  equation of {K. Ito}.
\newblock \emph{Theory of Probability and Its Applications}, 7\penalty0
  (3):\penalty0 325--331, 1962.
\newblock \doi{10.1137/1107031}.

\bibitem[K{\"u}hn(2021)]{kuhn2021infinitesimal}
Franziska K{\"u}hn.
\newblock On infinitesimal generators of sublinear {M}arkov semigroups.
\newblock \emph{Osaka Journal of Mathematics}, 58\penalty0 (3):\penalty0
  487--508, 2021.

\bibitem[Kusuoka(2017)]{kusuoka2017supermartingale}
Shigeo Kusuoka.
\newblock On supermartingale problems.
\newblock In Shigeo Kusuoka and Toru Maruyama, editors, \emph{Advances in
  Mathematical Economics}, volume~21 of \emph{Advances in Mathematical
  Economics}, pages 75--97. Springer, Singapore, 2017.
\newblock \doi{10.1007/978-981-10-4145-7_3}.

\bibitem[Lions and Nisio(1982)]{lions_nisio_82}
Pierre-Louis Lions and Makiko Nisio.
\newblock A uniqueness result for the semigroup associated with the {Hamilton}-
  {Jacobi}-{Bellman} operator.
\newblock \emph{Proc. Japan Acad., Ser. A}, 58:\penalty0 273--276, 1982.
\newblock ISSN 0386-2194.
\newblock \doi{10.3792/pjaa.58.273}.

\bibitem[Neufeld and Nutz(2013)]{neufeld_nutz_13}
Ariel Neufeld and Marcel Nutz.
\newblock Superreplication under volatility uncertainty for measurable claims.
\newblock \emph{Electron. J. Probab.}, 18:\penalty0 14, 2013.
\newblock ISSN 1083-6489.
\newblock \doi{10.1214/EJP.v18-2358}.
\newblock Id/No 48.

\bibitem[Neufeld and Nutz(2017)]{neufeld_nutz_17}
Ariel Neufeld and Marcel Nutz.
\newblock Nonlinear {L{\'e}vy} processes and their characteristics.
\newblock \emph{Trans. Am. Math. Soc.}, 369\penalty0 (1):\penalty0 69--95,
  2017.
\newblock ISSN 0002-9947.
\newblock \doi{10.1090/tran/6656}.

\bibitem[Nisio(1976)]{nisio1976non}
Makiko Nisio.
\newblock On a non-linear semi-group attached to stochastic optimal control.
\newblock \emph{Publications of the Research Institute for Mathematical
  Sciences}, 12\penalty0 (2):\penalty0 513--537, 1976.

\bibitem[Nisio(1982)]{nisio_82}
Makiko Nisio.
\newblock Note on uniqueness of semigroup associated with {Bellman} operator.
\newblock Advances in filtering and optimal stochastic control, {Proc}.
  {IFIP}-{WG} 7/1 {Work}. {Conf}., {Cocoyoc}/{Mex}. 1982, {Lect}. {Notes}
  {Contr}. {Inf}. {Sci}. 42, 267-275 (1982)., 1982.

\bibitem[Nutz and Van~Handel(2013)]{nutz2013constructing}
Marcel Nutz and Ramon Van~Handel.
\newblock Constructing sublinear expectations on path space.
\newblock \emph{Stochastic processes and their applications}, 123\penalty0
  (8):\penalty0 3100--3121, 2013.

\bibitem[Peng(2011)]{peng_10}
Shige Peng.
\newblock Backward stochastic differential equation, nonlinear expectation and
  their applications.
\newblock In \emph{Proceedings of the international congress of mathematicians
  (ICM 2010), Hyderabad, India, August 19--27, 2010. Vol. I: Plenary lectures
  and ceremonies.}, pages 393--432. Hackensack, NJ: World Scientific; New
  Delhi: Hindustan Book Agency, 2011.
\newblock ISBN 978-981-4324-30-4; 978-81-85931-08-3; 978-981-4324-31-1;
  978-981-4324-35-9.

\bibitem[Peng(2019)]{peng2019nonlinear}
Shige Peng.
\newblock \emph{Nonlinear expectations and stochastic calculus under
  uncertainty: with robust {CLT} and {$G$}-{B}rownian motion}, volume~95.
\newblock Springer Nature, 2019.

\bibitem[Schneider(2014)]{schneider2014convex}
Rolf Schneider.
\newblock \emph{Convex Bodies: The Brunn--Minkowski Theory}.
\newblock Cambridge University Press, 2 edition, 2014.

\bibitem[Stroock and Varadhan(1997)]{stroock1997multidimensional}
Daniel~W Stroock and SR~Srinivasa Varadhan.
\newblock \emph{Multidimensional diffusion processes}, volume 233.
\newblock Springer Science \& Business Media, 1997.

\end{thebibliography}

\begin{thebibliography}{13}
\providecommand{\natexlab}[1]{#1}
\providecommand{\url}[1]{\texttt{#1}}
\expandafter\ifx\csname urlstyle\endcsname\relax
  \providecommand{\doi}[1]{doi: #1}\else
  \providecommand{\doi}{doi: \begingroup \urlstyle{rm}\Url}\fi

\bibitem[Billingsley(2013)]{supp:billingsley2013convergence}
Patrick Billingsley.
\newblock \emph{Convergence of probability measures}.
\newblock John Wiley \& Sons, 2013.

\bibitem[Cannarsa and Sinestrari(2004)]{supp:cannarsa2004semiconcave}
Piermarco Cannarsa and Carlo Sinestrari.
\newblock \emph{Semiconcave Functions, Hamilton--Jacobi Equations, and Optimal
  Control}, volume~58 of \emph{Progress in Nonlinear Differential Equations and
  Their Applications}.
\newblock Birkh\"auser, Boston, 2004.

\bibitem[Crandall et~al.(1992)Crandall, Ishii, and Lions]{supp:crandall1992user}
Michael~G Crandall, Hitoshi Ishii, and Pierre-Louis Lions.
\newblock User’s guide to viscosity solutions of second order partial
  differential equations.
\newblock \emph{Bulletin of the American mathematical society}, 27\penalty0
  (1):\penalty0 1--67, 1992.

\bibitem[Criens(2020)]{supp:criens2020existence}
David Criens.
\newblock On the existence of semimartingales with continuous characteristics.
\newblock \emph{Stochastics}, 92\penalty0 (5):\penalty0 785--813, 2020.

\bibitem[Dzhaparidze and van Zanten(2001)]{supp:dzhaparidze2001bernstein}
K.~Dzhaparidze and J.~H. van Zanten.
\newblock On bernstein-type inequalities for martingales.
\newblock \emph{Stochastic Processes and their Applications}, 93\penalty0
  (1):\penalty0 109--117, 2001.
\newblock \doi{10.1016/S0304-4149(00)00086-7}.

\bibitem[Jacod and Shiryaev(2013)]{supp:jacod2013limit}
Jean Jacod and Albert Shiryaev.
\newblock \emph{Limit theorems for stochastic processes}, volume 288.
\newblock Springer Science \& Business Media, 2013.

\bibitem[Kechris(1995)]{supp:kechris1995classical}
Alexander~S. Kechris.
\newblock \emph{Classical Descriptive Set Theory}, volume 156 of \emph{Graduate
  Texts in Mathematics}.
\newblock Springer-Verlag, New York, 1995.
\newblock \doi{10.1007/978-1-4612-4190-4}.

\bibitem[Krylov(1980)]{supp:krylov1980controlled}
N.~V. Krylov.
\newblock \emph{Controlled Diffusion Processes}, volume~14 of
  \emph{Applications of Mathematics}.
\newblock Springer-Verlag, New York, 1980.
\newblock Translated from the Russian by A. B. Aries.

\bibitem[Lee(2013)]{supp:lee2013introduction}
John~M. Lee.
\newblock \emph{Introduction to Smooth Manifolds}, volume 218 of \emph{Graduate
  Texts in Mathematics}.
\newblock Springer, 2 edition, 2013.

\bibitem[Liu and Neufeld(2019)]{supp:liu2019compactness}
Chong Liu and Ariel Neufeld.
\newblock Compactness criterion for semimartingale laws and semimartingale
  optimal transport.
\newblock \emph{Transactions of the American Mathematical Society},
  372\penalty0 (1):\penalty0 187--231, 2019.
\newblock \doi{10.1090/tran/7663}.

\bibitem[Michael(2003)]{supp:michael2003continuous}
Ernest Michael.
\newblock Continuous selections.
\newblock In \emph{Encyclopedia of General Topology}, pages 107--109. Elsevier,
  2003.

\bibitem[Pinsky(1995)]{supp:pinsky1995positive}
Ross~G Pinsky.
\newblock \emph{Positive harmonic functions and diffusion}, volume~45.
\newblock Cambridge university press, 1995.

\bibitem[Stroock and Varadhan(1997)]{supp:stroock1997multidimensional}
Daniel~W Stroock and SR~Srinivasa Varadhan.
\newblock \emph{Multidimensional diffusion processes}, volume 233.
\newblock Springer Science \& Business Media, 1997.

\end{thebibliography}
\end{document}